\numberwithin{equation}{section} 
\numberwithin{table}{section} 
\numberwithin{figure}{section} 
\colorlet{chapter}{black!75}
\renewcommand*{\chapterformat}{%
\begingroup% damit \unitlength-Änderung lokal bleibt
\setlength{\unitlength}{1mm}%
\begin{picture}(20,40)(0,5)%
\setlength{\fboxsep}{0pt}%
\put(20,15){\line(1,0){\dimexpr
\textwidth-20\unitlength\relax\@gobble}}%
\put(0,0){\makebox(20,20)[r]{%
\fontsize{28\unitlength}{28\unitlength}\selectfont\thechapter
\kern-.04em% Ziffer in der Zeichenzelle nach rechts schieben
}}%
\put(20,15){\makebox(\dimexpr
\textwidth-20\unitlength\relax\@gobble,\ht\strutbox\@gobble)[l]{%
\ \normalsize\color{black}\chapapp~\thechapter\autodot
}}%
\end{picture} % <-- Leerzeichen ist hier beabsichtigt!
\endgroup
}
\theoremstyle{plain}
\newtheorem{theorem}{Theorem}[section]
\newtheorem{definition}[theorem]{Definition}
\newtheorem{lemma}[theorem]{Lemma}
\newtheorem{corollary}[theorem]{Corollary}
\newtheorem{proposition}[theorem]{Proposition}
\newtheorem{remark}[theorem]{Remark}
\theoremstyle{nonumberplain}
\newtheorem{example}{Example}
\newtheorem{proof}{Proof}
\SetMathAlphabet{\mathcal}{normal}{OMS}{cmsy}{m}{n} % fixes ugly \mathcals
\SetMathAlphabet{\mathcal}{bold}{OMS}{cmsy}{m}{n} % fixes ugly \mathcals
\providecommand{\ie}{i.~e.~}
\providecommand{\eg}{e.~g.~}
\providecommand{\cf}{cf.~}
\providecommand{\R}{\mathbb{R}}
\providecommand{\Q}{\mathbb{Q}}
\providecommand{\C}{\mathbb{C}}
\renewcommand{\C}{\mathbb{C}}
\providecommand{\T}{\mathbb{T}}
\renewcommand{\T}{\mathbb{T}}
\providecommand{\N}{\mathbb{N}}
\providecommand{\Z}{\mathbb{Z}}
\providecommand{\ii}{\mathrm{i}}
\providecommand{\e}{\mathrm{e}}
\renewcommand{\Re}{\mathrm{Re} \,}
\renewcommand{\Im}{\mathrm{Im} \,}
\providecommand{\Hil}{\mathcal{H}}
\providecommand{\eps}{\varepsilon}
\providecommand{\Cont}{\mathcal{C}}
\providecommand{\image}{\mathrm{im} \, }
\providecommand{\ker}{\mathrm{ker} \, }
\providecommand{\ran}{\mathrm{ran} \, }
\providecommand{\im}{\mathrm{im} \, }
\providecommand{\span}{\mathrm{span} \,}
\providecommand{\supp}{\mathrm{supp} \,}
\providecommand{\ker}{\mathrm{ker} \,}
\providecommand{\det}{\mathrm{det} \,}
\providecommand{\trace}{\mathrm{Tr} \,}
\providecommand{\dd}{\mathrm{d}}
\providecommand{\id}{\mathrm{id}}
\providecommand{\order}{\mathcal{O}}
\providecommand{\Fourier}{\mathcal{F}}
\providecommand{\trace}{\mathrm{Tr}}
\providecommand{\abs}[1]{\left \lvert #1 \right \rvert}
\providecommand{\sabs}[1]{\lvert #1 \vert}
\providecommand{\babs}[1]{\bigl \lvert #1 \bigr \rvert}
\providecommand{\Babs}[1]{\Bigl \lvert #1 \Bigr \rvert}
\providecommand{\norm}[1]{\left \lVert #1 \right \rVert}
\providecommand{\snorm}[1]{\lVert #1 \rVert}
\providecommand{\bnorm}[1]{\bigl \lVert #1 \bigr \rVert}
\providecommand{\Bnorm}[1]{\Bigl \lVert #1 \Bigr \rVert}
\providecommand{\scpro}[2]{\left \langle #1 , #2 \right \rangle}
\providecommand{\sscpro}[2]{\langle #1 , #2 \rangle}
\providecommand{\bscpro}[2]{\bigl \langle #1 , #2 \bigr \rangle}
\providecommand{\Bscpro}[2]{\Bigl \langle #1 , #2 \Bigr \rangle}
\providecommand{\ket}[1]{\left \vert #1 \right \rangle}
\providecommand{\ipro}[2]{\left \langle #1 \vert #2 \right \rangle}
\providecommand{\opro}[2]{\left \vert #1 \right \rangle \left \langle #2 \right \vert}
\providecommand{\sopro}[2]{\vert #1 \rangle \langle #2 \vert}
\providecommand{\bopro}[2]{\bigl \vert #1 \bigr \rangle \bigl \langle #2 \bigr \vert}
\providecommand{\exp}{\mathrm{exp}}
\providecommand{\Schwartz}{\mathcal{S}}
\providecommand{\BCont}{\Cont_{\mathrm{b}}}
\providecommand{\wastlim}{{\mathrm{w}^{\ast}\mbox{-}\lim}}
\providecommand{\Pspace}{\R^{2n}}
\providecommand{\spec}{\mathrm{spec}}
\providecommand{\Rot}{\mathrm{Rot}}
\providecommand{\Sone}{\mathbb{S}^1}
\providecommand{\domain}{\mathcal{D}}
\title{Differential Equations in \\ Mathematical Physics}
\author{\href{mailto:max.lein@utoronto.ca}{Max Lein}}
\date{\today} 
\begin{document}

\frontmatter
\maketitle

\tableofcontents

\mainmatter

\chapter{Introduction} % (fold)
\label{intro}
Physical theories are usually formulated as differential equations. Some such examples are Hamilton's equation of motion (describing the classical motion of a point particle) 
\begin{align}
	\frac{\dd}{\dd t} \left (
	\begin{matrix}
		q \\
		p \\
	\end{matrix}
	\right ) &= \left (
	\begin{matrix}
		+ p \\
		- \nabla_q V \\
	\end{matrix}
	\right )
	, 
	\label{intro:eqn:hamiltons_eom}
\end{align}
where $\nabla_q V = (\partial_{q_1} V , \ldots , \partial_{q_n} V)$ is the gradient of the potential $V$, the Schrödinger equation 
\begin{align}
	\ii \partial_t \Psi &= \bigl ( - \tfrac{1}{2m} \Delta_x + V \bigr ) \Psi 
	, 
	\label{intro:eqn:Schroedinger_eqn}
\end{align}
where $\Delta_x = \sum_{j = 1}^n \partial_{x_j}^2$ is the Laplace operator, the wave equation 
\begin{align}
	\partial_t^2 u - \Delta_x u = 0 
	, 
	\label{intro:eqn:wave_eqn}
\end{align}
the heat equation (with source) 
\begin{align}
	\partial_t u &= + \Delta_x u + f(t)
	\label{intro:eqn:heat_eqn}
\end{align}
and the Maxwell equations ($\nabla_x \times \mathbf{E}$ is the rotation of $\mathbf{E}$)
\begin{align}
	\frac{\dd}{\dd t} \left (
	\begin{matrix}
		\mathbf{E} \\
		\mathbf{H} \\
	\end{matrix}
	\right ) &= \left (
	\begin{matrix}
		+ \eps^{-1} \nabla_x \times \mathbf{H} \\
		- \mu^{-1} \nabla_x \times \mathbf{E} \\
	\end{matrix}
	\right ) - \left (
	\begin{matrix}
		\eps^{-1} j \\
		0 \\
	\end{matrix}
	\right )
	\label{intro:eqn:Maxwell_dynamical_eqns}
	\\
	\left (
	\begin{matrix}
		\nabla_x \cdot \eps \mathbf{E} \\
		\nabla_x \cdot \mu \mathbf{H} \\
	\end{matrix}
	\right ) &= \left (
	\begin{matrix}
		\rho \\
		0 \\
	\end{matrix}
	\right )
	\label{intro:eqn:Maxwell_conservation_laws}
	. 
\end{align}
Of course, there are many, many more examples, so it helps to classify them systematically. In the zoology of differential equations, the first distinction we need to make is between \emph{ordinary differential equations} (ODEs) and \emph{partial differential equations} (PDEs), \ie differential equations which involve derivatives of only one variable (\eg \eqref{intro:eqn:hamiltons_eom}) or several variables (all other examples). The order of a differential equation is determined by the highest derivative involved in it. 

On a second axis, we need to distinguish between non-linear (\eg \eqref{intro:eqn:hamiltons_eom} for $H(q,p) = \tfrac{1}{2m} p^2 + V(q)$ where $V$ is not a second-order polynomial) or linear equations (\eg \eqref{intro:eqn:hamiltons_eom} for $V(q) = \omega^2 \, q^2$ or \eqref{intro:eqn:Maxwell_conservation_laws}). 

We can write any differential equation in the form
\begin{align*}
	L(u) = f
\end{align*}
where the contribution $L(u)$ which depends on the solution $u$ is collected on the left-hand side and the $u$-\emph{in}dependent part $f$ is on the right-hand side. 

\emph{Linear} differential equations are the distinguished case where the \emph{operator} $L$ satisfies 
\begin{align}
	L \bigl ( \alpha_1 u_1 + \alpha_2 u_2 \bigr ) &= \alpha_1 \, L(u_1) + \alpha_2 \, L(u_2) 
	\label{intro:eqn:linearity} 
\end{align}
for all scalars $\alpha_1 , \alpha_2$ and suitable functions $u_1 , u_2$; otherwise, the differential equation is \emph{non-linear}. Among \emph{linear} differential equations we further distinguish between \emph{homogeneous} ($f = 0$) and \emph{inhomogeneous} ($f \neq 0$) linear differential equations. 

Solving linear differential equations is much, much easier, because linear combinations of solutions of the homogeneous equation $L(u) = 0$ are once again solutions of the homogeneous equation. In other words, \emph{the solutions form a vector space.} This makes it easier to find solutions which satisfy the correct initial conditions by, for instance, systematically finding all solutions to $L(u) = 0$ and then forming suitable linear combinations. 

However, there are cases when solving a non-linear problem may be more desirable. In case of many-particle quantum mechanics, a non-linear problem on a lower-dimensional space is often preferable to a high-dimensional linear problem. 

Secondly, one can often relate easier-to-solve \emph{ordinary} differential equations to \emph{partial} differential equations in a systematic fashion, \eg by means of semiclassical limits which relate \eqref{intro:eqn:hamiltons_eom} and \eqref{intro:eqn:Schroedinger_eqn}, by “diagonalizing” a PDE or considering an associated “eigenvalue problem”. 

A last important weapon in the arsenal of a mathematical physicist is to systematically exploit \emph{symmetries}.

\paragraph{Well-posedness} % (fold)
The most fundamental question one may ask about a differential equation on a domain of functions (which may include boundary conditions and such) is whether it is a \emph{well-posed problem}, \ie 
\begin{enumerate}[(1)]
	\item whether \emph{a} (non-trivial) solution \emph{exists}, 
	\item whether the solution is \emph{unique} and 
	\item whether the solution depends on the initial conditions in a continuous fashion (\emph{stability}). 
\end{enumerate}
Solving this sounds rather like an exercise, but it can be tremendously hard. (Proving the \href{http://www.claymath.org/millennium/Navier-Stokes_Equations/}{well-posedness of the Navier-Stokes equation} is one of the -- unsolved -- Millenium Prize Problems, for instance.) Indeed, there are countless examples where a differential equation either has no non-trivial solution\footnote{For linear-homogeneous differential equations, if the zero function is in the domain, it is automatically a solution. Hence, the zero function is often referred to as “trivial solution”.} or the solution is not unique; we will come across some of these cases in the exercises. 
% paragraph Well-posedness (end)

\paragraph{Course outline} % (fold)
This course is located at the intersection of mathematics and physics, so one of the tasks is to establish a dictionary between the mathematics and physics community. Both communities have benefitted from each other tremendously over the course of history: physicists would often generate new problems for mathematicians while mathematicians build and refine new tools to analyze problems from physics. 

Concretely, the course is structured so as to show the interplay between different fields of mathematics (\eg functional analysis, harmonic analysis and the theory of Schrödinger operators) as well as to consider different aspects of solving differential equations. 

However, the course is \emph{not} meant to be a comprehensive introduction to any of these fields in particular, but just give an overview, elucidate some of the connections and whet the appetite for more. 
% paragraph Course outline (end)
% chapter Introduction (end)
\chapter{Ordinary differential equations} % (fold)
\label{odes}
An ordinary differential equation -- or ODE for short -- is an equation of the form 
\begin{align}
	\tfrac{\dd}{\dd t} x &= \dot{x} 
	= F(x) 
	, 
	&&
	x(0) = x_0 \in U \subseteq \R^n 
	, 
	\label{odes:eqn:primordial_ode}
\end{align}
defined in terms of a \emph{vector field} $F = (F_1 , \ldots , F_n) : U \subseteq \R^n \longrightarrow \R^n$. Its solutions are curves $x = (x_1 , \ldots , x_n) : (-T,+T) \longrightarrow U$ in $\R^n$ for times up to $0 < T \leq +\infty$. You can also consider ODEs on other spaces, \eg $\C^d$. 

The solution to an ODE is a \emph{flow} $\Phi : (-T,+T) \times U \longrightarrow U$, \ie the map which satisfies 
\begin{align*}
	\Phi_t(x_0) = x(t) 
\end{align*}
where $x(t)$ is the solution of \eqref{odes:eqn:primordial_ode}. \emph{Locally}, $\Phi$ is a “group representation”\footnote{If $T = +\infty$, it really is a group representation of $(\R,+)$.} 
\begin{enumerate}[(i)]
	\item $\Phi_0 = \id_U$
	\item $\Phi_{t_1} \circ \Phi_{t_2} = \Phi_{t_1 + t_2}$ as long as $t_1 , t_2 , t_1 + t_2 \in (-T,+T)$
	\item $\Phi_t \circ \Phi_{-t} = \Phi_0 = \id_U$ for all $t \in (-T,+T)$
\end{enumerate}
The ODE \eqref{odes:eqn:primordial_ode} uniquely determines the flow $\Phi$ and vice versa: on the one hand, we can \emph{define} the flow map $\Phi_t (x_0) = x(t)$ for each initial condition $x_0$ and $t$ as the solution of \eqref{odes:eqn:primordial_ode}. Property~(i) follows from $x(0) = x_0$. The other two properties follow from the explicit construction of the solution later on in the proof of Theorem~\ref{odes:thm:Picard_Lindeloef}

On the other hand, assume we are given a flow $\Phi$ with properties (i)--(iii) above which is differentiable in $t$. Then we can recover the vector field $F$ at point $x_0$ by taking the time derivative of the flow, 
\begin{align*}
	\left . \frac{\dd}{\dd t} \Phi_t(x_0) \right \vert_{t = 0} &= \lim_{\delta \to 0} \tfrac{1}{\delta} \bigl ( \Phi_{\delta}(x_0) - \Phi_0(x_0) \bigr ) 
	\\
	&
	= \lim_{\delta \to 0} \tfrac{1}{\delta} \bigl ( \Phi_{\delta}(x_0) - x_0 \bigr ) 
	\\
	&
	=: F \bigl ( \Phi_0(x_0) \bigr ) = F(x_0)
	. 
\end{align*}
Now by property (ii), the above also holds at any later point in time: 
\begin{align*}
	\frac{\dd}{\dd t} \Phi_t(x_0) &= \lim_{\delta \to 0} \tfrac{1}{\delta} \bigl ( \Phi_{t + \delta}(x_0) - \Phi_t(x_0) \bigr ) 
	\\
	&
	\overset{(ii)}{=} \lim_{\delta \to 0} \frac{1}{\delta} \Bigl ( \Phi_{\delta} \bigl ( \underbrace{\Phi_t(x_0)}_{= y} \bigr ) - \Phi_t(x_0) \Bigr ) 
	\\
	&
	= \lim_{\delta \to 0} \tfrac{1}{\delta} \bigl ( \Phi_{\delta}(y) - y \bigr ) 
	\\
	&= F(y) 
	= F \bigl ( \Phi_t(x_0) \bigr ) 
\end{align*}
In other words, we have just shown 
\begin{align*}
	\frac{\dd}{\dd t} \Phi_t(x_0) &= F \bigl ( \Phi_t(x_0) \bigr ) 
	, 
	&&
	\Phi_0(x_0) = x_0 
	, 
	\\
	\Leftrightarrow \;
	\frac{\dd}{\dd t} \Phi_t &= F \circ \Phi_t 
	, 
	&&
	\Phi_0 = \id 
	. 
\end{align*}
Clearly, there are three immediate questions: \marginpar{2013.09.10}
\begin{enumerate}[(1)]
	\item When does the flow exist? 
	\item Is it unique? 
	\item How large can we make $T$? 
\end{enumerate}

\section{Linear ODEs} % (fold)
\label{odes:examples}
The simplest ODEs are \emph{linear} where the vector field 
\begin{align*}
	F(x) = H x
\end{align*}
is defined in terms of a $n \times n$ matrix $H = \bigl ( H_{jk} \bigr )_{1 \leq j , k \leq n} \in \mathrm{Mat}_{\C}(n)$. The flow $\Phi_t = \e^{t H}$ is given in terms of the \emph{matrix exponential}
\begin{align}
	\e^{t H} = \exp(t H) := \sum_{k = 0}^{\infty} \frac{t^k}{k!} \, H^k 
	. 
\end{align}
This series converges in the vector space $\mathrm{Mat}_{\C}(n) \cong \C^{n^2}$: if we choose 
\begin{align*}
	\norm{H} := \max_{1 \leq j , k \leq n} \abs{H_{jk}}
\end{align*}
as norm\footnote{We could have picked \emph{any} other norm on $\C^{n^2}$ since all norms on finite-dimensional vector spaces are equivalent.}, we can see that the sequence of partial sums $S_N := \sum_{k = 0}^N \frac{t^k}{k!} \, H^k$ is a Cauchy sequence, 
\begin{align*}
	\bnorm{S_N - S_K} &\leq \norm{\sum_{k = K+1}^N \frac{t^k}{k!} \, H^k} 
	\leq \sum_{k = K+1}^N \frac{t^k}{k!} \, \norm{H^k}
	\\
	&\leq \sum_{k = K+1}^N \frac{t^k}{k!} \, \norm{H}^k
	\leq \sum_{k = K+1}^{\infty} \frac{t^k}{k!} \, \norm{H}^k
	\\
	&\xrightarrow{K \to \infty} 0 
	. 
\end{align*}
Using the completeness of $\C^{n^2}$ with respect to the the norm $\norm{\cdot}$ (which follows from the completeness of $\C$ with respect to the absolute value), we now deduce that $S_N \rightarrow \e^{t H}$ as $N \to \infty$. Moreover, we also obtain the norm bound $\bnorm{\e^{t H}} \leq \e^{t \norm{H}}$. 
\medskip

\noindent
From linear algebra we know that \emph{any} matrix $H$ is similar to a matrix of the form 
\begin{align*}
	J = U^{-1} \, H \, U 
	= \left (
	\begin{matrix}
		J_{r_1}(\lambda_1) & 0 & 0 \\
		0 & \ddots & 0 \\
		0 & 0 & J_{r_N}(\lambda_N) \\
	\end{matrix}
	\right )
\end{align*}
where the $\{ \lambda_j \}_{j = 1}^N$ are the eigenvalues of $H$ and the 
\begin{align*}
	J_{r_j}(\lambda_j) = \left (
	\begin{matrix}
		\lambda_j & 1 & 0 & 0 \\
		0 & \ddots & \ddots & 0 \\
		\vdots & \ddots & \lambda_j & 1 \\
		0 & \cdots & 0 & \lambda_j \\
	\end{matrix}
	\right ) 
	=: \lambda \, \id_{\C^{r_j}} + N_{r_j}
	\in \mathrm{Mat}_{\C}(r_j)
\end{align*}
are the $r_j$-dimensional Jordan blocks associated to $\lambda_j$. Now the exponential 
\begin{align*}
	\e^{t H} &= \sum_{k = 0}^{\infty} \frac{t^k}{k!} \, \bigl ( U \, J \, U^{-1} \bigr )^k 
	= U \, \Biggl ( \sum_{k = 0}^{\infty} \frac{t^k}{k!} \, J^k \Biggr ) \, U^{-1} 
	\\
	&= U \, \e^{t J} \, U^{-1} 
\end{align*}
is also similar to the exponential of $t J$. Moreover, one can see that 
\begin{align*}
	\e^{t J} &= \left (
	\begin{matrix}
		\e^{t J_{r_1}(\lambda_1)} & 0 & 0 \\
		0 & \ddots & 0 \\
		0 & 0 & \e^{t J_{r_N}(\lambda_n)} \\
	\end{matrix}
	\right )
\end{align*}
and the matrix exponential of a Jordan block can be computed explicitly (\cf problem~6 on Sheet~02), 
\begin{align*}
	\e^{t J_r(\lambda)} &= \e^{t \lambda} 
	\, \left (
	\begin{matrix}
		1 & t & \frac{t^2}{2} & \cdots & \cdots & \frac{t^{r-1}}{(r-1)!} \\
		0 & 1 & t & \frac{t^2}{2} & \cdots & \frac{t^{r-2}}{(r-2)!} \\
		0 & 0 & \ddots & \ddots & \ddots & \vdots \\
		0 & 0 & 0 & 1 & t & \frac{t^2}{2} \\
		0 & 0 & 0 & 0 & 1 & t \\
		0 & 0 & 0 & 0 & 0 & 1 \\
	\end{matrix}
	\right )
	. 
\end{align*}
In case $H$ is similar to a diagonal matrix, 
\begin{align*}
	H = U \, \left (
	\begin{matrix}
		\lambda_1 & 0 & 0 \\
		0 & \ddots & 0 \\
		0 & 0 & \lambda_n \\
	\end{matrix}
	\right ) \, U^{-1} 
	, 
\end{align*}
this formula simplifies to 
\begin{align*}
	\e^{t H} = U \, \left (
	\begin{matrix}
		\e^{t \lambda_1} & 0 & 0 \\
		0 & \ddots & 0 \\
		0 & 0 & \e^{t \lambda_n} \\
	\end{matrix}
	\right ) \, U^{-1} 
	. 
\end{align*}
\begin{example}[Free Schrödinger equation]
	The solution to 
	\begin{align*}
		\ii \tfrac{\dd}{\dd t} \widehat{\psi}(t) &= \tfrac{1}{2} k^2 \, \widehat{\psi}(t) 
		, 
		&&
		\widehat{\psi}(0) = \widehat{\psi}_0 \in \C 
		, 
	\end{align*}
	is $\widehat{\psi}(t) = \e^{- \ii t \, \frac{1}{2} k^2} \widehat{\psi}_0$. 
\end{example}
\paragraph{Inhomogeneous linear ODEs} % (fold)
In case the linear ODE is inhomogeneous, 
\begin{align*}
	\dot{x}(t) = A x(t) + f(t)
	, 
	&&
	x(0) = x_0 
	, 
\end{align*}
it turns out we can still find a closed solution: 
\begin{align}
	x(t) &= \e^{t A} x_0 + \int_0^t \dd s \, \e^{(t-s) A} \, f(s) 
	\label{odes:eqn:inhomogeneous_linear_solution}
\end{align}
%
% paragraph Inhomogeneous linear ODEs (end)

\paragraph{Higher-order ODEs} % (fold)
\begin{align}
	\frac{\dd^n}{\dd t^n} x &= F(x) + f(t) 
	, 
	&&
	x^{(j)} = \alpha^{(j)} 
	, \; 
	j = 0 , \ldots , n-1
	, 
	\label{odes:eqn:higher_order_ode}
\end{align}
can be expressed as a system of \emph{first}-order ODEs. If $x \in \R$ for simplicity, we can write 
\begin{align*}
	y_1 &:= x 
	\\
	y_2 &:= \tfrac{\dd}{\dd t} x 
	\\
	y_j &:= \tfrac{\dd}{\dd t} y_{j-1} = \tfrac{\dd^j}{\dd t^j} x 
	\\
	y_n &:= \tfrac{\dd^{n-1}}{\dd t^{n-1}} x 
	= F(x) + f(t) 
	= F(y_1) + f(t) 
	. 
\end{align*}
Thus, assuming the vector field $F(x) = \lambda x$, $\lambda \in \R$, is linear and the dimension of the underlying space $1$, we obtain the first-order linear equation 
\begin{align}
	\frac{\dd}{\dd t} \left (
	\begin{matrix}
		y_1 \\
		\vdots \\
		\vdots \\
		\vdots \\
		y_n \\
	\end{matrix}
	\right ) 
	= \underbrace{\left (
	\begin{matrix}
		0 & 1 & 0 & \cdots & 0 \\
		\vdots & \ddots & \ddots & \ddots & \vdots \\
		\vdots &  & \ddots & \ddots & 0 \\
		0 &  &  & \ddots & 1 \\
		\lambda & 0 & \cdots & \cdots & 0 \\
	\end{matrix}
	\right )}_{=: H} 
	\left (
	\begin{matrix}
		y_1 \\
		\vdots \\
		\vdots \\
		\vdots \\
		y_n \\
	\end{matrix}
	\right ) + \left (
	\begin{matrix}
		0 \\
		\vdots \\
		\vdots \\
		0 \\
		f(t) \\
	\end{matrix}
	\right )
	\label{odes:eqn:higher_order_ode_reduced_first_order}
\end{align}
and the solution is given by \eqref{odes:eqn:inhomogeneous_linear_solution} with initial condition $y_0 = \bigl ( \alpha^{(0)} , \alpha^{(1)} , \ldots , \alpha^{(n-1)} \bigr )$. Then the solution to \eqref{odes:eqn:higher_order_ode} is then just the first component $y_1(t)$. 
\begin{example}[Newton's equation of motion for a free particle]
	The dynamics of a free particle of mass $m$ starting at $x_0$ with initial momentum $p_0$ is 
	\begin{align*}
		\ddot{x} = 0 
		, 
		&&
		x(0) = x_0 
		, \; 
		\dot{x}(0) = \tfrac{p_0}{m} 
		. 
	\end{align*}
	This leads to the first-order equation 
	\begin{align*}
		\frac{\dd}{\dd t} \left (
		\begin{matrix}
			y_1 \\
			y_2 \\
		\end{matrix}
		\right ) &= \left (
		\begin{matrix}
			0 & 1 \\
			0 & 0 \\
		\end{matrix}
		\right ) \left (
		\begin{matrix}
			y_1 \\
			y_2 \\
		\end{matrix}
		\right ) =: H y 
		. 
	\end{align*}
	The matrix $H$ is nilpotent, $H^2 = 0$, and thus the exponential series 
	\begin{align*}
		\e^{t H} = \sum_{k = 0}^{\infty} \frac{t^k}{k!} \, H^k 
		= \id + t \, H + 0 
		= \left (
		\begin{matrix}
			1 & t \\
			0 & 1 \\
		\end{matrix}
		\right )
	\end{align*}
	terminates after finitely many terms. Hence, the solution is 
	\begin{align*}
		x(t) = y_1(t) = x_0 + t \, \tfrac{p_0}{m} 
		. 
	\end{align*}
\end{example}
%
% paragraph Higher-order ODEs (end) 
% section Examples of ODEs (end)

\section{Existence and uniqueness of solutions} % (fold)
\label{odes:existence_uniqueness}
Now we turn back to the non-linear case. One important example of non-linear ODEs are Hamilton's equations of motion~\eqref{intro:eqn:hamiltons_eom} which describe the motion of a classical particle; this will be the content of Chapter~\ref{classical_mechanics}. 

The standard existence and uniqueness result is the Picard-Lindelöf theorem. The crucial idea is to approximate the solution to the ODE and to improve this approximation iteratively. To have a notion of convergence of solutions, we need to introduce the idea of  
\begin{definition}[Metric space]
	Let $\mathcal{X}$ be a set. A mapping $d : \mathcal{X} \times \mathcal{X} \longrightarrow [0,+\infty)$ with properties 
	\begin{enumerate}[(i)]
		\item $d(x,y) = 0$ exactly if $x = y$ (definiteness), 
		\item $d(x,y) = d(y,x)$ (symmetry), and 
		\item $d(x,z) \leq d(x,y) + d(y,z)$ (triangle inequality), 
	\end{enumerate}
	for all $x,y,z \in \mathcal{X}$ is called \emph{metric}. We refer to $(\mathcal{X},d)$ as \emph{metric space} (often only denoted as $\mathcal{X}$). A metric space $(\mathcal{X},d)$ is called \emph{complete} if all Cauchy sequences $(x_{(n)})$ (with respect to the metric) converge to some $x \in \mathcal{X}$. 
\end{definition}
A metric gives a notion of distance -- and thus a notion of convergence, continuity and open sets (a topology): quite naturally, one considers the topology generated by open balls defined in terms of $d$. There are more general ways to study convergence and alternative topologies (\eg Fréchet topologies or weak topologies) can be both useful and necessary. 
\begin{example}
	\begin{enumerate}[(i)]
		\item Let $\mathcal{X}$ be a set and define 
		\begin{align*}
			d(x,y) := 
			\begin{cases}
				1 \qquad x \neq y \\
				0 \qquad x = y \\
			\end{cases}
			. 
		\end{align*}
		It is easy to see $d$ satisfies the axioms of a metric and $\mathcal{X}$ is complete with respect to $d$. This particular choice leads to the discrete topology. 
		\item Let $\mathcal{X} = \Cont([a,b],U)$, $U \subseteq \R^n$ be the space of continuous functions on an interval. Then one naturally considers the metric  
		\begin{align*}
			d_{\infty}(f,g) := \sup_{x \in [a,b]} \babs{f(x) - g(x)} = \max_{x \in [a,b]} \babs{f(x) - g(x)}
		\end{align*}
		with respect to which $\Cont([a,b],U)$ is complete. 
	\end{enumerate}
\end{example}
One important result we need concerns the existence of so-called fixed points in complete metric spaces: 
\begin{theorem}[Banach's fixed point theorem]\label{odes:thm:Banach_fixed_point}
	Let $(\mathcal{X},d)$ be a complete metric space and $P$ a \emph{contraction}, \ie a map for which there exists $C \in [0,1)$ so that for all $x , y \in \mathcal{X}$ 
	\begin{align}
		d \bigl ( P(x) , P(y) \bigr ) \leq C \, d(x,y) 
		\label{odes:eqn:contraction_map}
	\end{align}
	holds. Then there exists a \emph{unique} fixed point $x_{\ast} = P(x_{\ast})$ so that for any $x_0 \in \mathcal{X}$, the sequence 
	\begin{align*}
		\bigl \{ P^n(x_0) \bigr \}_{n \in \N} = \bigl \{ \underbrace{P \circ \cdots \circ P}_{\mbox{$n$ times}} (x_0) \bigr \}_{n \in \N}
	\end{align*}
	converges to $x_{\ast} \in \mathcal{X}$. 
\end{theorem}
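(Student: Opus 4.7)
The plan is to prove uniqueness first (which is immediate from the contraction property) and then construct the fixed point as the limit of the iteration sequence $x_n := P^n(x_0)$ by showing this sequence is Cauchy and invoking completeness.

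For \textbf{uniqueness}, I would suppose $x_\ast$ and $y_\ast$ are two fixed points. Then $d(x_\ast, y_\ast) = d(P(x_\ast), P(y_\ast)) \leq C \, d(x_\ast, y_\ast)$, which forces $(1-C) \, d(x_\ast, y_\ast) \leq 0$; since $C < 1$, this gives $d(x_\ast, y_\ast) = 0$ and hence $x_\ast = y_\ast$ by definiteness.

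For \textbf{existence}, fix any $x_0 \in \mathcal{X}$ and set $x_n := P^n(x_0)$. The key step is the telescoping estimate: iterating the contraction property gives $d(x_{n+1}, x_n) \leq C \, d(x_n, x_{n-1}) \leq \cdots \leq C^n \, d(x_1, x_0)$. For $m > n$, the triangle inequality then yields
\begin{align*}
    d(x_m, x_n) \leq \sum_{k=n}^{m-1} d(x_{k+1}, x_k) \leq \Bigl( \sum_{k=n}^{m-1} C^k \Bigr) d(x_1, x_0) \leq \frac{C^n}{1-C} \, d(x_1, x_0),
\end{align*}
where the geometric series bound is where $C < 1$ is crucial. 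Since $C^n \to 0$, the right-hand side goes to zero uniformly in $m$, so $(x_n)$ is Cauchy. By completeness, $x_n \to x_\ast$ for some $x_\ast \in \mathcal{X}$.

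Finally, to verify $P(x_\ast) = x_\ast$, I would note that $P$ is Lipschitz (with constant $C$) and hence continuous, so passing to the limit in $x_{n+1} = P(x_n)$ gives $x_\ast = P(x_\ast)$. I don't anticipate any serious obstacle: the only step requiring care is the geometric-series bound establishing the Cauchy property, and the verification that the contraction constant is strictly less than one is what allows both the summability of $\sum C^k$ and the uniqueness argument. The proof works for \emph{any} starting point $x_0$, which simultaneously establishes the claimed convergence of $P^n(x_0)$ to the unique fixed point.
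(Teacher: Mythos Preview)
Your proof is correct and follows essentially the same approach as the paper: both establish the Cauchy property of the iteration sequence via the telescoping estimate $d(x_{n+1},x_n)\leq C^n d(x_1,x_0)$ and a geometric-series bound, invoke completeness, and use the contraction inequality $d(x_\ast,y_\ast)\leq C\,d(x_\ast,y_\ast)$ for uniqueness. The only cosmetic differences are that the paper proves existence before uniqueness and verifies $P(x_\ast)=x_\ast$ by writing $P(\lim_n P^n(x_0))=\lim_n P^{n+1}(x_0)$ rather than citing Lipschitz continuity explicitly.
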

\begin{proof}
	Let us define $x_n := P^n(x_0)$ for brevity. To show existence of \emph{a} fixed point, we will prove that $\{ x_n \}_{n \in \N}$ is a Cauchy sequence. First of all, the distance between \emph{neighboring} sequence elements goes to $0$, 
	\begin{align}
		d(x_{n+1} , x_n) &= d \bigl ( P(x_n) , P(x_{n-1}) \bigr ) 
		\leq C \, d(x_n , x_{n-1}) 
		\notag \\
		&\leq C^n \, d(x_1,x_0) 
		. 
		\label{odes:eqn:contraction_estimate_neighbors}
	\end{align}
	Without loss of generality, we shall assume that $m < n$. Then, we use the triangle inequality to estimate the distance between $x_n$ and $x_m$ by distances to neighbors, 
	\begin{align*}
		d(x_n,x_m) &\leq 
		d(x_n , x_{n-1}) + d(x_{n-1},x_{n-2}) + \ldots + d(x_{m+1},x_m)
		= \sum_{j = m+1}^n d(x_j,x_{j-1})
		. 
	\end{align*}
	Hence, we can plug in the estimate on the distance between neighbors, \eqref{odes:eqn:contraction_estimate_neighbors}, and sum over $j$, 
	\begin{align*}
		d(x_n,x_m) &\leq \sum_{j = m+1}^n C^{j-1} \, d(x_1,x_0) 
		= C^m \, \sum_{j = 0}^{n-m} C^j \, d(x_1,x_0) 
		\\
		&\leq C^m \, d(x_1,x_0) \, \sum_{j = 0}^{\infty} C^j 
		= \frac{C^m}{1-C} \, d(x_1,x_0) 
		. 
	\end{align*}
	If we choose $m$ large enough, we can make the right-hand side as small as we want and thus, $\{ x_n \}_{n \in \N}$ is a Cauchy sequence. By assumption the space $(\mathcal{X},d)$ is complete, and thus, there exists $x_{\ast}$ so that 
	\begin{align*}
		\lim_{n \to \infty} x_n = \lim_{n \to \infty} P^n(x_0) = x_{\ast} 
		. 
	\end{align*}
	It remains to show that this fixed point is unique. Pick another initial point $x_0'$ with limit $x_{\ast}' = \lim_{n \to \infty} P^n(x_0')$; this limit exists by the arguments above. Both fixed points satisfy $P x_{\ast} = x_{\ast}$ and $P x_{\ast}' = x_{\ast}'$, because 
	\begin{align*}
		P(x_{\ast}) &= P \bigl ( \lim_{n \to \infty} P^n(x_0) \bigr ) 
		= \lim_{n \to \infty} P^{n+1}(x_0) 
		= x_{\ast} 
	\end{align*}
	holds (and analogously for $x_{\ast}'$). Using the contractivity property \eqref{odes:eqn:contraction_map}, we then estimate the distance between $x_{\ast}$ and $x_{\ast}'$ by 
	\begin{align*}
		d(x_{\ast},x_{\ast}') &= d \bigl ( P(x_{\ast}) , P(x_{\ast}') \bigr ) 
		\leq C \, d(x_{\ast},x_{\ast}') 
		. 
	\end{align*}
	Since $0 \leq C < 1$, the above equation can only hold if $d(x_{\ast},x_{\ast}') = 0$ which implies $x_{\ast} = x_{\ast}'$. Hence, the fixed point is also unique. \marginpar{2013.09.12}
\end{proof}
To ensure the existence of the flow, we need to impose conditions on the vector field. One common choice is to require $F$ to be \emph{Lipschitz}, meaning there exists a constant $L > 0$ so that 
\begin{align}
	\babs{F(x) - F(x')} \leq L \babs{x - x'} 
	\label{odes:eqn:Lipschitz_condition}
\end{align}
holds for all $x$ and $x'$ in some open neighborhood $U_{x_0}$ of the initial point $x_0 = x(0)$. The Lipschitz condition has two implications: first of all, it states that the vector field grows at most linearly. Secondly, if the vector field is continuously differentiable, $L$ is also a bound on the norm of the differential $\sup_{x \in U_{x_0}} \norm{DF(x)}$. However, not all vector fields which are Lipschitz need to be continuously differentiable, \eqref{odes:eqn:Lipschitz_condition} is in fact weaker than requiring that $\sup_{x \in U_{x_0}} \norm{DF(x)}$ is bounded. For instance, the vector field $F(x) = - \abs{x}$ in one dimension is Lipschitz on all of $\R$, but not differentiable at $x = 0$. 

So if the vector field is locally Lipschitz, the flow exists at least for some time interval: 
\begin{theorem}[Picard-Lindelöf]\label{odes:thm:Picard_Lindeloef}
	Let $F$ be a continuous vector field, $F \in \Cont(U,\R^n)$, $U \subseteq \R^n$ open, which defines a system of differential equations, 
	\begin{align}
		\dot{x} &= F(x) 
		\label{odes:eqn:ode_Picard_Lindeloef}
		. 
	\end{align}
	Assume for a certain initial condition $x_0 \in U$ there exists a closed ball 
	\begin{align*}
		B_{\rho}(x_0) := \bigl \{ x \in \R^n \; \vert \; \sabs{x - x_0} \leq \rho \bigr \} \subseteq U
		, 
		&&
		\rho > 0
		, 
	\end{align*}
	such that $F$ is Lipschitz on $B_{\rho}(x_0)$ with Lipschitz constant $L$ (\cf equation~\eqref{odes:eqn:Lipschitz_condition}). Then the initial value problem, equation~\eqref{odes:eqn:ode_Picard_Lindeloef} with $x(0) = x_0$, has a unique solution $t \mapsto x(t)$ for times $\abs{t} \leq T := \min \bigl ( \nicefrac{\rho}{v_{\max}} , \nicefrac{1}{2L} \bigr )$ where the maximal velocity is defined as $v_{\max} := \sup_{x \in B_{\rho}(x_0)} \sabs{F(x)}$. 
\end{theorem}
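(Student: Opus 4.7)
The strategy is to recast the initial value problem as a fixed-point equation and apply Banach's fixed point theorem (Theorem~\ref{odes:thm:Banach_fixed_point}). Integrating $\dot x = F(x)$ from $0$ to $t$ shows that any continuously differentiable solution $x$ to \eqref{odes:eqn:ode_Picard_Lindeloef} with $x(0) = x_0$ satisfies the integral equation
\begin{align*}
	x(t) = x_0 + \int_0^t \dd s \, F(x(s))
	,
\end{align*}
and conversely any continuous solution of this integral equation is automatically $\Cont^1$ (since $F \circ x$ is continuous) and solves the ODE with the correct initial data. So it suffices to find a unique continuous solution of the integral equation.

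The plan is to work on the space $\mathcal{X} := \Cont \bigl ( [-T,+T] , B_{\rho}(x_0) \bigr )$ of continuous curves staying in the closed ball, equipped with the sup-metric $d_{\infty}(x,y) := \sup_{\sabs{t} \leq T} \sabs{x(t) - y(t)}$. As a closed subset of the complete metric space $\Cont \bigl ( [-T,+T] , \R^n \bigr )$, it is complete. Define the map
\begin{align*}
	P(x)(t) := x_0 + \int_0^t \dd s \, F \bigl ( x(s) \bigr )
	.
\end{align*}
I would then verify two things. First, $P$ maps $\mathcal{X}$ into itself: for $x \in \mathcal{X}$ we have $F(x(s)) \in F(B_{\rho}(x_0))$, hence $\sabs{P(x)(t) - x_0} \leq \abs{t} \, v_{\max} \leq T \, v_{\max} \leq \rho$ by the choice $T \leq \rho / v_{\max}$. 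Second, $P$ is a contraction: using the Lipschitz property on $B_{\rho}(x_0)$,
\begin{align*}
	\babs{P(x)(t) - P(y)(t)} \leq \int_0^{\abs{t}} \dd s \, L \, \babs{x(s) - y(s)} \leq T \, L \, d_{\infty}(x,y)
	,
\end{align*}
so that $d_{\infty} \bigl ( P(x) , P(y) \bigr ) \leq T L \, d_{\infty}(x,y)$, and $T \leq \nicefrac{1}{2L}$ yields a contraction constant $C \leq \nicefrac{1}{2} < 1$.

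Banach's fixed point theorem then provides a unique $x_{\ast} \in \mathcal{X}$ with $P(x_{\ast}) = x_{\ast}$, which is precisely a unique continuous solution of the integral equation and hence of the IVP. The slightly delicate point I would flag is the self-mapping property: the constants $T$, $\rho$ and $v_{\max}$ are coupled, and it is essential to choose $\mathcal{X}$ as curves confined to $B_{\rho}(x_0)$ so that the Lipschitz estimate actually applies throughout the iteration — otherwise neither self-mapping nor contractivity can be established. Uniqueness of $x_{\ast}$ within $\mathcal{X}$ is automatic from Banach; uniqueness among \emph{all} continuous solutions on $[-T,+T]$ then follows because any such solution stays in $B_{\rho}(x_0)$ for $\abs{t} \leq T$ by the velocity bound, hence lies in $\mathcal{X}$.
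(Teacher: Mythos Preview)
Your proof is correct and follows essentially the same route as the paper: recast the IVP as an integral equation, apply Banach's fixed point theorem to the Picard map on continuous curves with values in $B_{\rho}(x_0)$, using $T \leq \rho/v_{\max}$ for self-mapping and $T \leq 1/(2L)$ for contractivity. The only cosmetic difference is that the paper restricts $\mathcal{X}$ to curves with $y(0) = x_0$, whereas you drop this constraint (harmlessly, since $P(x)(0) = x_0$ regardless); your final remark about uniqueness among \emph{all} continuous solutions via the velocity bound is a nice addendum the paper omits.
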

\begin{proof}
	The proof consists of three steps: 
	\medskip
	
	\noindent
	\textbf{Step 1:} We can rewrite the initial value problem equation~\eqref{odes:eqn:ode_Picard_Lindeloef} with $x(0) = x_0$ as 
	\begin{align*}
		x(t) = x_0 + \int_0^t \dd s \, F \bigl ( x(s) \bigr ) 
	\end{align*}
	This equation can be solved iteratively: we define $x_{(0)}(t) := x_0$ and the $n+1$th iteration 
	$x_{(n+1)}(t) := \bigl ( P (x_{(n)}) \bigr )(t)$ in terms of the so-called Picard map 
	\begin{align*}
		\bigl ( P (x_{(n)}) \bigr )(t) := x_0 + \int_0^t \dd s \, F \bigl ( x_{(n)}(t) \bigr ) 
		. 
	\end{align*}
	\textbf{Step 2:} We will determine $T > 0$ small enough so that $P : \mathcal{X} \longrightarrow \mathcal{X}$ is a contraction on the space of trajectories which start at $x_0$, 
	\begin{align*}
		\mathcal{X} := \Bigl \{ y \in \Cont \bigl ([-T,+T] , B_{\rho}(x_0) \bigr ) \; \big \vert \; y(0) = x_0 \Bigr \} 
		. 
	\end{align*}
	First, we note that $\mathcal{X}$ is a \emph{complete} metric space if we use 
	\begin{align*}
		\mathrm{d} (y,z) := \sup_{t \in [-T,+T]} \babs{y(t) - z(t)} 
		&& 
		y,z \in \mathcal{X} 
	\end{align*}
	to measure distances between trajectories. Hence, Banach's fixed point theorem~\ref{odes:thm:Banach_fixed_point} applies, and once we show that $P$ is a contraction, we know that $x_{(n)}$ converges to the (unique) fixed point $x = P(x) \in \mathcal{X}$; this fixed point, in turn, is the solution to the ODE \eqref{odes:eqn:ode_Picard_Lindeloef} with $x(0) = x_0$. 
	
	To ensure $P$ is a contraction with $C = \nicefrac{1}{2}$, we propose $T \leq \nicefrac{1}{2L}$. Then the Lipschitz property implies that for any $y,z \in \mathcal{X}$, we have 
	\begin{align*}
		\mathrm{d} \bigl ( P(y) , P(z) \bigr ) &= \sup_{t \in [-T,+T]} \abs{\int_0^t \dd s \, \bigl [ \bigl ( F(y) \bigr )(s) - \bigl ( F(z) \bigr )(s) \bigr ]} 
		\\
		&\leq T \, L \sup_{t \in [-T,+T]} \babs{y(t) - z(t)} 
		% \\
		% &
		\leq \tfrac{1}{2L} L \, \mathrm{d} (y,z) = \tfrac{1}{2} \mathrm{d} (y,z) 
		. 
	\end{align*}
	\textbf{Step 3:} We need to ensure that the trajectory does not leave the ball $B_{\rho}(x_0)$ for all $t \in [-T,+T]$: For any $y \in \mathcal{X}$, we have \marginpar{2013.09.17}
	\begin{align*}
		\babs{P(y) - x_0} = \abs{\int_0^t \dd s \, F \bigl ( y(s) \bigr )} \leq t \, v_{\max} \leq T \, v_{\max} < \rho 
		. 
	\end{align*}
	Hence, as long as $T \leq \min \bigl \{ \nicefrac{1}{2L} , \nicefrac{v_{\max}}{\rho} \bigr \}$ trajectories exist and do not leave $U$. This concludes the proof. 
\end{proof}
This existence result for a \emph{single} initial condition implies immediately that 
\begin{corollary}\label{odes:cor:Picard_Lindeloef_flow}
	Assume we are in the setting of Theorem~\ref{odes:thm:Picard_Lindeloef}. Then for any $x_V \in U$ there exists an open neighborhood $V$ of $x_V$ so that the flow exists as a map $\Phi : [-T,+T] \times V \longrightarrow U$. 
\end{corollary}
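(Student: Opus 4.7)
The plan is to upgrade Theorem~\ref{odes:thm:Picard_Lindeloef}, which gives a flow starting from a single initial condition, to a flow defined on an open neighborhood of initial conditions sharing a \emph{common} time interval $[-T,+T]$. The essential move is to choose this neighborhood small enough that the Lipschitz constant $L$, the maximal velocity $v_{\max}$, and the radius $\rho$ appearing in Picard-Lindelöf can all be taken uniformly in $x_0 \in V$.

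First, I would fix $x_V \in U$ and invoke the setting to obtain a closed ball $B_{\rho}(x_V) \subseteq U$ on which $F$ is Lipschitz with constant $L$. Since $F$ is continuous and $B_{\rho}(x_V)$ is compact, the quantity $v_{\max} := \sup_{x \in B_{\rho}(x_V)} \sabs{F(x)}$ is finite. I would then set $V := \bigl \{ x \in \R^n \; \vert \; \sabs{x - x_V} < \nicefrac{\rho}{2} \bigr \}$, i.e.\ the open ball of half the radius. The key geometric observation is that for every $x_0 \in V$, the triangle inequality yields $B_{\rho/2}(x_0) \subseteq B_{\rho}(x_V)$, so $F$ is Lipschitz on $B_{\rho/2}(x_0)$ with the \emph{same} constant $L$, and $\sup_{x \in B_{\rho/2}(x_0)} \sabs{F(x)} \leq v_{\max}$.

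Next, I would apply Theorem~\ref{odes:thm:Picard_Lindeloef} to each $x_0 \in V$ using the uniform data $(\nicefrac{\rho}{2}, L, v_{\max})$. The theorem then produces a unique solution $t \mapsto x(t; x_0)$ on the time interval $[-T,+T]$ with
\begin{align*}
	T := \min \bigl ( \tfrac{\rho}{2 v_{\max}} , \tfrac{1}{2L} \bigr )
	,
\end{align*}
and crucially $T$ depends only on $x_V$ (through $\rho, L, v_{\max}$), not on the particular $x_0 \in V$. Defining $\Phi(t,x_0) := x(t;x_0)$ gives the desired map $\Phi : [-T,+T] \times V \longrightarrow U$; it takes values in $U$ because Step~3 of the proof of Theorem~\ref{odes:thm:Picard_Lindeloef} keeps the trajectory inside $B_{\rho/2}(x_0) \subseteq B_{\rho}(x_V) \subseteq U$.

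I do not expect a serious obstacle: everything is already encoded in the Picard-Lindelöf estimates, and the only real content is the uniformity argument above. The one subtlety worth flagging is the need to shrink the radius from $\rho$ to $\nicefrac{\rho}{2}$ so that a single Lipschitz ball $B_{\rho}(x_V)$ contains the auxiliary ball $B_{\rho/2}(x_0)$ around \emph{every} $x_0 \in V$; without this shrinking, one would only get constants $L(x_0), v_{\max}(x_0)$ that vary with $x_0$ and could in principle blow up as $x_0$ approaches $\partial U$, preventing a uniform choice of $T$.
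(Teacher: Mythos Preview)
Your proof is correct and follows essentially the same approach as the paper: shrink to a half-radius neighborhood $V = B_{\rho/2}(x_V)$ so that the Picard--Lindel\"of data $(L, v_{\max})$ inherited from $B_{\rho}(x_V)$ apply uniformly to every $x_0 \in V$. The only cosmetic difference is that the paper keeps the original $T$ and then halves the time interval to $[-\nicefrac{T}{2},+\nicefrac{T}{2}]$ (arguing via Step~3 that a particle starting in $B_{\rho/2}(x_V)$ cannot escape $B_{\rho}(x_V)$ in that time), whereas you re-apply the theorem directly with radius $\nicefrac{\rho}{2}$ and obtain the halved bound $\nicefrac{\rho}{2v_{\max}}$ inside the formula for $T$; the two are equivalent.
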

\begin{proof}
	Pick any $x_V \in U$. Then according to the Picard-Lindelöf Theorem~\ref{odes:thm:Picard_Lindeloef} there exists an open ball $B_{\rho}(x_V) \subseteq U$ around $x_V$ so that the trajectory $t \mapsto x(t)$ with $x(0) = x_V$ exists for all times $t \in [-T,+T]$ where $T = \min \bigl \{ \nicefrac{\rho}{v_{\max}} \nicefrac{1}{2L} \bigr \}$ and $v_{\max} = \sup_{x \in B_{\rho}(x_V)}$. 
	
	Now consider initial conditions $x_0 \in B_{\nicefrac{\rho}{2}}(x_V)$: since the vector field satisfies the same Lipschitz condition as before, the arguments from Step~3 of the proof of the Picard-Lindelöf Theorem tell us that for times $t \in [-\nicefrac{T}{2} , +\nicefrac{T}{2}]$, the particles will not leave $B_{\rho}(x_V)$. Put more concretely: the maximum velocity of the particle (which is the maximum of the vector field) dictates how far it can go. So even if we start at the border of the ball with radius $\nicefrac{\delta}{2}$, for short enough times (\eg $\nicefrac{T}{2}$) we can make sure it never reaches the boundary of $B_{\rho}(x_V)$. 
	
	This means the flow map $\Phi : [-\nicefrac{T}{2} , + \nicefrac{T}{2}] \times B_{\nicefrac{\rho}{2}}(x_V) \longrightarrow B_{\rho}(x_V)$ exists. Note that since $B_{\nicefrac{\rho}{2}}(x_V)$ is contained in $B_{\rho}(x_V) \subseteq U$, we know that the smaller ball is also a subset of $U$. 
\end{proof}
Another important fact is that the flow $\Phi$ inherits the smoothness of the vector field which generates it.%
\begin{theorem}\label{odes:thm:smoothness_flow}
	Assume the vector field $F$ is $k$ times continuously differentiable, $F \in \Cont^k ( U , \R^n )$, $U \subseteq \R^n$. Then the flow $\Phi$ associated to \eqref{odes:eqn:ode_Picard_Lindeloef} is also $k$ times continuously differentiable, \ie $\Phi \in \Cont^k \bigl ( [-T,+T] \times V , U \bigr )$ where $V \subset U$ is suitable. 
\end{theorem}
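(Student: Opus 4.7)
The plan is to proceed by induction on $k$, combining the Picard iteration of Theorem~\ref{odes:thm:Picard_Lindeloef} (viewed as a family of contractions parameterised by the initial point $x_0$) with the so-called \emph{variational equation} obtained by formally differentiating \eqref{odes:eqn:ode_Picard_Lindeloef} in $x_0$.

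As a preliminary step I would establish joint continuity $\Phi \in \Cont^0([-T,+T] \times V , U)$, in fact Lipschitz continuity in $x_0$ uniformly in $t$. For this, the Picard map $\bigl(P_{x_0}(y)\bigr)(t) := x_0 + \int_0^t \dd s\, F\bigl(y(s)\bigr)$ is a contraction on $\mathcal{X}$ with the \emph{same} constant $C = \nicefrac{1}{2}$ for every $x_0 \in V$, and shifting the parameter changes the map only by a constant vector, $\mathrm{d}\bigl(P_{x_0}(y),P_{x_0'}(y)\bigr) = \babs{x_0 - x_0'}$. The standard continuity-of-fixed-points estimate then yields $\sup_t \babs{\Phi_t(x_0) - \Phi_t(x_0')} \leq 2 \babs{x_0 - x_0'}$.

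For the base case $k=1$, let $M(t,x_0) \in \mathrm{Mat}_\R(n)$ denote the unique solution of the variational equation
\begin{align*}
	\dot{M}(t,x_0) &= DF\bigl(\Phi_t(x_0)\bigr)\, M(t,x_0) , & M(0,x_0) &= \id ;
\end{align*}
it exists on all of $[-T,+T]$ because the equation is linear and its coefficient is continuous. This $M$ is the candidate for $D_{x_0}\Phi_t$, while the time derivative is $\partial_t \Phi_t(x_0) = F\bigl(\Phi_t(x_0)\bigr)$, continuous in $(t,x_0)$ by the preliminary step. The main task is to rigorously identify $M$ with the Jacobian. Setting $e_h(t) := \Phi_t(x_0+h) - \Phi_t(x_0) - M(t,x_0)\, h$, a Taylor expansion of $F$ at $\Phi_s(x_0)$ plugged into the integral form of the ODE produces
\begin{align*}
	e_h(t) &= \int_0^t \dd s\, DF\bigl(\Phi_s(x_0)\bigr)\, e_h(s) + R_h(t) ,
\end{align*}
where the remainder $R_h$ satisfies $\sup_s \sabs{R_h(s)} = o(\sabs{h})$ thanks to the Lipschitz bound on $\Phi$ and continuity of $DF$. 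Grönwall's inequality applied to this integral equation then gives $\sup_t \sabs{e_h(t)} = o(\sabs{h})$, establishing $D_{x_0}\Phi_t = M(t,x_0)$ and hence $\Phi \in \Cont^1\bigl([-T,+T] \times V, U\bigr)$.

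For the inductive step, assume the theorem holds for $k-1 \geq 1$ and $F \in \Cont^k(U,\R^n)$. I would enlarge the phase space to $U \times \mathrm{Mat}_\R(n)$ and form the coupled ODE
\begin{align*}
	\dot{x} &= F(x) , & \dot{M} &= DF(x)\, M ,
\end{align*}
whose vector field is $\Cont^{k-1}$ because $F \in \Cont^k$. Its flow is nothing but $\bigl(\Phi_t(x_0), M(t,x_0)\bigr)$, so the inductive hypothesis applied to this larger system yields $\Phi, M \in \Cont^{k-1}$ in $(t,x_0)$. Since $D_{x_0}\Phi = M$, this promotes $\Phi$ to $\Cont^k$ in $x_0$; combined with $\partial_t \Phi_t = F \circ \Phi_t \in \Cont^{k-1}$ and a standard Schwarz-type lemma on mixed partials, one obtains the claimed joint $\Cont^k$-regularity. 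The principal obstacle is the base case — specifically the Grönwall estimate on $e_h$ that upgrades the \emph{formal} candidate $M$ to the \emph{actual} Jacobian $D_{x_0}\Phi_t$; the inductive step is then little more than applying the same theorem to a bigger phase space.
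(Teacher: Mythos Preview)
The paper does not actually prove this theorem: its entire proof reads ``We refer to Chapter~3, Section~7.3 in \cite{Arnold:ode:2006}.'' So there is no in-paper argument to compare your proposal against.

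That said, your sketch is the standard textbook route (and indeed is essentially what one finds in Arnold): continuity of the fixed point via a uniform contraction, the variational equation plus Grönwall to identify $D_{x_0}\Phi_t$ with $M$, and then the inductive step by enlarging the phase space to carry $(x,M)$ together. The argument is sound. One small point worth making explicit in the inductive step: the enlarged vector field $(x,M)\mapsto \bigl(F(x),\,DF(x)\,M\bigr)$ is only locally Lipschitz (it is linear in $M$), so when invoking the induction hypothesis you should note that the time interval on which the enlarged flow exists can be taken to match the original $[-T,+T]$ --- this follows because the $x$-component decouples and the $M$-equation is linear with bounded coefficient $DF\bigl(\Phi_t(x_0)\bigr)$, hence has no finite-time blow-up. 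Otherwise the proposal is complete and correct.
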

\begin{proof}
	% CHANGED add page reference and read proof (possibly `proof' in Arnold)
	We refer to Chapter 3, Section 7.3 in \cite{Arnold:ode:2006}. 
\end{proof}

\subsection{Interlude: the Grönwall lemma} % (fold)
\label{odes:existence_uniqueness:Groenwall}
To show \emph{global} uniqueness of the flow, we need to make use of the exceedingly useful Grönwall lemma. It is probably the simplest “differential inequality”. 
\begin{lemma}[Grönwall]\label{odes:lem:Groenwall}
	Let $u$ be differentiable on the interior of $I = [a,b]$ or $I = [a,+\infty)$, and satisfy the differential inequality 
	\begin{align*}
		\dot{u}(t) \leq \beta(t) \, u(t) 
	\end{align*}
	where $\beta$ is a real-valued, continuous function on $I$. Then 
	\begin{align}
		u(t) \leq u(a) \, \e^{\int_a^t \dd s \, \beta(s)} 
		\label{odes:eqn:Groenwall_estimate}
	\end{align}
	holds for all $t \in I$. 
\end{lemma}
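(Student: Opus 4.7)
The plan is to use the classical integrating-factor trick: multiply the inequality through by $\e^{-\int_a^t \dd s \, \beta(s)}$, recognize the left-hand side as a total derivative of a suitable auxiliary quantity, and then integrate.

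Concretely, I would introduce the auxiliary function
\begin{align*}
    v(t) := u(t) \, \e^{- \int_a^t \dd s \, \beta(s)} ,
\end{align*}
which is well defined and differentiable on the interior of $I$ because $\beta$ is continuous (so its antiderivative is $\Cont^1$) and $u$ is differentiable by assumption. A direct application of the product rule and the fundamental theorem of calculus gives
\begin{align*}
    \dot{v}(t) = \bigl ( \dot{u}(t) - \beta(t) \, u(t) \bigr ) \, \e^{- \int_a^t \dd s \, \beta(s)} .
\end{align*}
The hypothesis $\dot{u}(t) \leq \beta(t) \, u(t)$ together with the strict positivity of the exponential then yields $\dot{v}(t) \leq 0$ on the interior of $I$.

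Next I would invoke the one-variable mean value theorem (or just the fact that a function with non-positive derivative on an interval is non-increasing) to conclude $v(t) \leq v(a)$ for all $t \in I$; here a brief justification is needed to extend the inequality up to the endpoint $a$, which follows from continuity of $v$ at $a$ (guaranteed by continuity of $u$ and of the integral factor). Since $v(a) = u(a) \, \e^0 = u(a)$, rearranging yields precisely \eqref{odes:eqn:Groenwall_estimate}.

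I do not anticipate a real obstacle here: the argument is a one-line observation once the integrating factor is in place. The only delicate point is being explicit about where differentiability is needed versus where only continuity is used (the former on the open interior, the latter at the endpoint $a$), so that monotonicity of $v$ propagates to the closed interval; I would address this with a one-sentence continuity remark rather than a separate lemma.
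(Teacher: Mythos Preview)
Your proposal is correct and follows essentially the same integrating-factor approach as the paper: the paper sets $v(t) := \e^{\int_a^t \dd s \, \beta(s)}$ and shows via the quotient rule that $\tfrac{\dd}{\dd t}\tfrac{u(t)}{v(t)} \leq 0$, which is exactly your computation written as a quotient instead of a product. The paper also invokes the mean value theorem to pass from $\tfrac{\dd}{\dd t}\tfrac{u}{v} \leq 0$ to $\tfrac{u(t)}{v(t)} \leq u(a)$, just as you do.
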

\begin{proof}
	Define the function 
	\begin{align*}
		v(t) := \e^{\int_a^t \dd s \, \beta(s)} 
		. 
	\end{align*}
	Then $\dot{v}(t) = \beta(t) \, v(t)$ and $v(a) = 1$ hold, and we can use the assumption on $u$ to estimate 
	\begin{align*}
		\frac{\dd}{\dd t} \frac{u(t)}{v(t)} &= \frac{\dot{u}(t) \, v(t) - u(t) \, \dot{v}(t)}{v(t)^2} 
		\\
		&
		= \frac{\dot{u}(t) \, v(t) - u(t) \, \beta(t) \, v(t)}{v(t)^2} 
		\\
		&\leq \frac{\beta(t) \, u(t) \, v(t) - \beta(t) \, u(t) \, v(t)}{v(t)^2} 
		= 0 
		. 
	\end{align*}
	Hence, equation~\eqref{odes:eqn:Groenwall_estimate} follows from the mean value theorem, 
	\begin{align*}
		\frac{u(t)}{v(t)} \leq \frac{u(a)}{v(a)} = u(a) 
		. 
	\end{align*}
\end{proof}
One important application is to relate bootstrap information on the vector fields to information on the flow itself. For instance, if the vector fields of two ODEs are close, then also their flows remain close -- but only for logarithmic times at best. After that, the flows will usually diverge \emph{exponentially}. 

If one applies this reasoning to the \emph{same} ODE for different initial conditions, then one observes the same effect: no matter how close initial conditions are picked, usually, the trajectories will diverge exponentially. This fact gives rise to \emph{chaos}. 
\begin{proposition}\label{odes:prop:closeness_vector_fields_closeness_flows}
	Suppose the vector field $F_{\eps} = F_0 + \eps \, F_1$ satisfies the Lipschitz condition~\eqref{odes:eqn:Lipschitz_condition} for some open set $U \subseteq \R^n$ with Lipschitz constant $L > 0$, consisting of a leading-order term $F_0$ that is also Lipschitz with constant $L$, and a small, bounded perturbation $\eps \, F_1$, \ie $0 < \eps \ll 1$ and $C := \sup_{x \in U} \abs{F_1} < \infty$. 
	
	Then the flows $\Phi^{\eps}$ and $\Phi^0$ associated to $\dot{x}^{\eps} = F_{\eps}$ and $\dot{x}^{(0)} = F_0$ exist for the same times $t \in [-T,+T]$, and the two are $\order(\eps)$ close in the following sense: there exists an open neighborhood $V \subseteq U$ so that 
	\begin{align}
		\sup_{x \in V}\babs{\Phi^{\eps}_t(x) - \Phi^0_t(x)} \leq \eps \, \frac{C}{L} \, \bigl ( \e^{L \abs{t}} - 1 \bigr )
		\label{odes:eqn:closeness_vector_fields_closeness_flows}
	\end{align}
	holds. 
\end{proposition}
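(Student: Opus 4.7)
The plan is to derive an integral inequality for the difference $\Phi^{\eps}_t(x) - \Phi^0_t(x)$ and then invoke Grönwall's lemma. First, I would apply the Picard-Lindelöf Theorem~\ref{odes:thm:Picard_Lindeloef} and Corollary~\ref{odes:cor:Picard_Lindeloef_flow} simultaneously to $F_0$ and $F_{\eps}$. Since both vector fields satisfy the same Lipschitz estimate~\eqref{odes:eqn:Lipschitz_condition} with constant $L$ and are uniformly bounded on a closed ball $B_{\rho}(x_V) \subseteq U$ (with $\sup \abs{F_{\eps}} \leq \sup \abs{F_0} + \eps \, C$ there), the proof of Corollary~\ref{odes:cor:Picard_Lindeloef_flow} yields a common time $T > 0$ and open neighborhood $V \subseteq U$ on which both flows exist as maps $\Phi^{\eps} , \Phi^0 : [-T,+T] \times V \longrightarrow U$.

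Next, I would use the integral form of the flow equation (Step~1 in the proof of Theorem~\ref{odes:thm:Picard_Lindeloef}) to write, for any $x \in V$ and $t \in [-T,+T]$,
\begin{align*}
    \Phi^{\eps}_t(x) - \Phi^0_t(x) = \int_0^t \dd s \, \bigl [ F_0 \bigl ( \Phi^{\eps}_s(x) \bigr ) - F_0 \bigl ( \Phi^0_s(x) \bigr ) \bigr ] + \eps \int_0^t \dd s \, F_1 \bigl ( \Phi^{\eps}_s(x) \bigr ) .
\end{align*}
Setting $u(t) := \babs{\Phi^{\eps}_t(x) - \Phi^0_t(x)}$, taking absolute values and exploiting the Lipschitz bound on $F_0$ together with $\abs{F_1} \leq C$ gives, for $t \geq 0$,
\begin{align*}
    u(t) \leq L \int_0^t \dd s \, u(s) + \eps \, C \, t .
\end{align*}

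The main obstacle is that Lemma~\ref{odes:lem:Groenwall} is phrased as a \emph{differential} inequality $\dot{u} \leq \beta u$, whereas what we have is an \emph{integral} inequality with an inhomogeneous term. I would bridge this by introducing the auxiliary function $\psi(t) := \eps \, C \, t + L \int_0^t \dd s \, u(s)$, which is differentiable (as $u$ is continuous), dominates $u(t)$, and satisfies
\begin{align*}
    \dot{\psi}(t) = \eps \, C + L \, u(t) \leq \eps \, C + L \, \psi(t) .
\end{align*}
Shifting by $\eps \, C / L$ and setting $\tilde{\psi} := \psi + \nicefrac{\eps \, C}{L}$ absorbs the inhomogeneity, yielding $\dot{\tilde{\psi}} \leq L \, \tilde{\psi}$ with $\tilde{\psi}(0) = \nicefrac{\eps \, C}{L}$. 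Lemma~\ref{odes:lem:Groenwall} applied with $\beta \equiv L$ then delivers $\tilde{\psi}(t) \leq \frac{\eps \, C}{L} \, \e^{L t}$, so that $u(t) \leq \psi(t) \leq \frac{\eps \, C}{L} \bigl ( \e^{L t} - 1 \bigr )$.

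Finally, for $t \leq 0$, I would apply the identical argument to the time-reversed ODEs with vector fields $-F_0$ and $-F_{\eps}$ (which share the same Lipschitz constant $L$ and the same bound $C$ on the perturbation), obtaining the analogous estimate with $\abs{t}$ in place of $t$. Taking the supremum over $x \in V$ then yields~\eqref{odes:eqn:closeness_vector_fields_closeness_flows}.
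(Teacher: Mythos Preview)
Your proof is correct and follows the same overall strategy as the paper --- set up the difference of the two trajectories, use the Lipschitz bound on $F_0$ and the uniform bound on $F_1$, and close with a Gr\"onwall-type argument --- but the technical route differs slightly. The paper stays in \emph{differential} form throughout: it sets $u(t) = \babs{X(t)}$ with $X(t) = x^{\eps}(t) - x^{(0)}(t)$, proves the pointwise inequality $\tfrac{\dd}{\dd t}\sabs{X(t)} \leq \babs{\tfrac{\dd}{\dd t} X(t)}$, obtains $\dot u \leq L\,u + \eps C$ directly, and then uses the integrating factor $\e^{-Lt}$ (rather than Lemma~\ref{odes:lem:Groenwall} verbatim) to integrate. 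Your route via the integral equation sidesteps the differentiability-of-$\sabs{X}$ issue entirely and is arguably cleaner, at the cost of the small detour through the auxiliary $\psi$ to recast the integral inequality as a differential one. Both are standard and equally rigorous; the ``obstacle'' you flagged is genuine for your setup but simply does not arise in the paper's version because it never leaves differential form.
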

An important observation is that this result is in some sense optimal: while there are a few specific ODEs (in particular linear ones) for which estimates analogous to \eqref{odes:eqn:closeness_vector_fields_closeness_flows} hold for longer times, in general the Proposition really mirrors what happens: solutions will diverge exponentially. 

In physics, the time scale at which chaos sets in, $\order(\abs{\ln \eps})$, is known as \emph{Ehrenfest time scale}; classical chaos is the reason why semiclassics (approximating the \emph{linear} quantum evolution by \emph{non-linear} classical dynamics) is limited to the Ehrenfest time scale. 
\begin{proof}
	% CHANGED write proof
	First of all, due to Corollary~\ref{odes:cor:Picard_Lindeloef_flow}, there exist $T > 0$ and an open neighborhood $V$ so that for all initial conditions $x_0 \in V$, the trajectories do not leave $U$. 
	
	Let $x^{\eps}$ and $x^{(0)}$ be the trajectories which solve $\dot{x}^{\eps} = F_{\eps}$ and $\dot{x}^{(0)} = F_0$ for the initial condition $x^{\eps}(0) = x_0 = x^{(0)}(0)$. Moreover, let us define the difference vector $X(t) := x^{\eps}(t) - x^{(0)}(t)$ so that $u(t) = \sabs{X(t)}$. Using $\abs{x} - \abs{y} \leq \abs{x-y}$ for vectors $x , y \in \R^n$ (which follows from the triangle inequality), we obtain 
	\begin{align}
		\abs{\frac{\dd}{\dd t} X(t)} &= \abs{\lim_{\delta \to 0} \frac{X(t+\delta) - X(t)}{\delta}} 
		= \lim_{\delta \to 0} \frac{\babs{X(t+\delta) - X(t)}}{\sabs{\delta}} 
		\notag \\
		&\geq \lim_{\delta \to 0} \frac{\sabs{X(t+\delta)} - \sabs{X(t)}}{\sabs{\delta}} 
		= \frac{\dd}{\dd t} u(t) 
		. 
		\label{odes:eqn:abs_ddt_X_geq_ddt_abs_X}
	\end{align}
	Hence, we can estimate the derivative of $u(t)$ from above by 
	\begin{align*}
		\frac{\dd}{\dd t} u(t) &\leq \abs{\frac{\dd}{\dd t} \bigl ( x^{\eps}(t) - x^{(0)}(t) \bigr )} 
		= \Babs{F_{\eps} \bigl ( x^{\eps}(t) \bigr ) - F_0 \bigl ( x^{(0)}(t) \bigr )} 
		\\
		&\leq \Babs{F_0 \bigl ( x^{\eps}(t) \bigr ) - F_0 \bigl ( x^{(0)}(t) \bigr )} 
		+ \eps \, \babs{F_1 \bigl ( x^{\eps}(t) \bigr )} 
		\\
		&\leq L \, \babs{x^{\eps}(t) - x^{(0)}(t)} + \eps \, C 
		= L \, u(t) + \eps \, C 
		. 
	\end{align*}
	The above inequality is in fact equivalent to 
	\begin{align*}
		\frac{\dd}{\dd t} \bigl ( \e^{- L t} \, u(t) \bigr ) &= 
		\e^{- L t} \, \bigl ( \dot{u}(t) - L \, u(t) \bigr ) \leq \eps \, C \, \e^{- L t} 
		. 
	\end{align*}
	Now we integrate left- and right-hand side: since we assume both trajectories to start at the same point $x_0$, we have $u(0) = \babs{x^{\eps}(0) - x^{(0)}(0)} = \babs{x_0 - x_0} = 0$. Moreover, the integrands on both sides are non-negative functions, so that for $t > 0$ we obtain 
	\begin{align*}
		\int_0^t \dd s \, \frac{\dd}{\dd s} \bigl ( \e^{- L s} \, u(s) \bigr ) &= 
		\bigl [ \e^{- L s} \, u(s) \bigr ]_0^t 
		= \e^{- L t} \, u(t) 
		\\
		&\leq \int_0^t \dd s \, \eps \, C \, \e^{- L s}
		= - \eps \, \frac{C}{L} \, \bigl [ \e^{- L s} \bigr ]_0^t 
		= \eps \, \frac{C}{L} \, \bigl ( 1 - \e^{-L t} \bigr ) 
		. 
	\end{align*}
	A similar result is obtained for $t < 0$. Hence, we get 
	\begin{align*}
		u(t) \leq \eps \, \frac{C}{L} \, \bigl ( \e^{L \abs{t}} - 1 \bigr )
		. 
	\end{align*}
	Since the Lipschitz constant $L$ and $C$ were independent of the initial point $x_0$, this estimate holds for all $x_0 \in U$, and we have shown equation~\eqref{odes:eqn:closeness_vector_fields_closeness_flows}. 
\end{proof}
Note that the estimate~\eqref{odes:eqn:closeness_vector_fields_closeness_flows} also holds \emph{globally} (meaning for all times $t \in \R$ and initial conditions $x_0 \in \R^n$) if the vector fields are \emph{globally} Lipschitz, although we need the \emph{global} existence and uniqueness theorem, Corollary~\ref{frameworks:classical_mechanics:cor:existence_flow}, from the next subsection. 
% subsection Interlude: the Grönwall lemma (end)

\subsection{Conclusion: global existence and uniqueness} % (fold)
\label{odes:existence_uniqueness:global}
The Grönwall lemma is necessary to prove the uniqueness of the solutions in case the vector field is globally Lipschitz. 
\begin{corollary}\label{frameworks:classical_mechanics:cor:existence_flow}
	If the vector field $F$ satisfies the Lipschitz condition \emph{globally}, \ie there exists $L > 0$ such that 
	\begin{align*}
		\babs{F(x) - F(x')} \leq L \babs{x - x'} 
	\end{align*}
	holds for all $x , x' \in \R^n$, then $t \mapsto \Phi_t(x_0)$ exists globally for all $t \in \R$ and $x_0 \in \R^n$. 
\end{corollary}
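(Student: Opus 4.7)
The plan is to bootstrap the \emph{local} Picard-Lindelöf theorem into a \emph{global} statement by exploiting the observation that, under a global Lipschitz condition, the lifespan $T = \nicefrac{1}{2L}$ produced by Theorem~\ref{odes:thm:Picard_Lindeloef} depends only on $L$ and not on the starting point. Uniqueness will then follow from Grönwall's Lemma~\ref{odes:lem:Groenwall}.

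First I would check that Picard-Lindelöf really delivers a trajectory of length $\nicefrac{1}{2L}$ regardless of the initial condition. Since $F$ is globally Lipschitz, it grows at most linearly: $\abs{F(x)} \leq \abs{F(x_0)} + L \abs{x - x_0}$. Consequently, on any closed ball $B_{\rho}(x_0)$ we have $v_{\max} \leq \abs{F(x_0)} + L \rho$, so the ratio $\nicefrac{\rho}{v_{\max}}$ approaches $\nicefrac{1}{L}$ as $\rho \to \infty$. Choosing $\rho$ large enough that this ratio exceeds $\nicefrac{1}{2L}$, Theorem~\ref{odes:thm:Picard_Lindeloef} provides a unique solution on $[-\nicefrac{1}{2L}, +\nicefrac{1}{2L}]$ for \emph{every} $x_0 \in \R^n$.

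Next I would iterate this one-step extension in both time directions. Setting $\tau := \nicefrac{1}{2L}$, I obtain $x_1 := \Phi_{\tau}(x_0)$, then reapply the previous step with $x_1$ as the new initial condition to extend the trajectory to $[0, 2\tau]$, and so on. The same argument run in reverse covers the negative time axis. By induction, $t \mapsto \Phi_t(x_0)$ is defined on all of $\R$, which yields global existence.

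For uniqueness, suppose two solutions $x_1, x_2$ on some interval $I \ni 0$ share the initial value $x_0$. Restricting to $t \geq 0$ (the case $t \leq 0$ is analogous, with time reversed) and setting $u(t) := \babs{x_1(t) - x_2(t)}$, the global Lipschitz bound gives
\begin{align*}
u(t) \leq \int_0^t \babs{F \bigl ( x_1(s) \bigr ) - F \bigl ( x_2(s) \bigr )} \, \dd s \leq L \int_0^t u(s) \, \dd s.
\end{align*}
Defining $U(t) := \int_0^t u(s) \, \dd s$, this says $\dot{U}(t) \leq L \, U(t)$ with $U(0) = 0$, so Lemma~\ref{odes:lem:Groenwall} forces $U \equiv 0$ and hence $u \equiv 0$, so $x_1 \equiv x_2$ on $I$. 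I do not anticipate a genuine obstacle here; the only small point requiring care is verifying that the concatenation of the one-step extensions at each junction $t = k \tau$ actually satisfies the ODE, which is automatic from the integral formulation $x(t) = x_0 + \int_0^t F(x(s)) \, \dd s$ and continuity of the integrand.
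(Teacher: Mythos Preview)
Your proposal is correct and follows essentially the same strategy as the paper: use the global Lipschitz bound to obtain a uniform local existence time $\tau = \nicefrac{1}{2L}$ from Picard--Lindel\"of, iterate to cover all of $\R$, and then invoke Gr\"onwall for uniqueness. The only minor difference is in the uniqueness step: the paper applies Gr\"onwall directly to $u(t) = \babs{x(t) - \tilde{x}(t)}$ via the differential inequality $\dot{u} \leq L u$ (using the pointwise fact $\bigl\lvert \tfrac{\dd}{\dd t} X \bigr\rvert \geq \tfrac{\dd}{\dd t} \abs{X}$), whereas you work with the integral form and apply Gr\"onwall to the antiderivative $U$, which sidesteps the differentiability issue for $\abs{\cdot}$---a small but clean touch.
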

%
% CHANGED revisit and improve the way the proof has been formulated! It's hard to understand! 
%
\begin{proof}
	% CHANGED add uniqueness --> I need the Grönwall lemma here! 
	% In this case only the Lipschitz condition enters the time until which we can solve the initial value problem, $T \leq \nicefrac{1}{2 L}$. 
	For every $x_0 \in \R^n$, we can solve the initial value problem at least for $\abs{t} \leq \nicefrac{1}{2 L}$. Since we do not require the particle to remain in a neighborhood of the initial point as in Theorem~\ref{odes:eqn:ode_Picard_Lindeloef}, the other condition on $T$ is void. 
	
	Using $\Phi_{t_1} \circ \Phi_{t_2} = \Phi_{t_1 + t_2}$ we obtain a \emph{global} trajectory $x : \R \longrightarrow \R^n$ for all times. However, we potentially lose uniqueness of the trajectory. So assume $\tilde{x}$ is another trajectory. Then we define $u(t) := \babs{x(t) - \tilde{x}(t)}$ and use the Lipschitz property~\eqref{odes:eqn:Lipschitz_condition} to deduce 
	\begin{align*}
		\frac{\dd}{\dd t} u(t) &= \frac{\dd}{\dd t} \babs{x(t) - \tilde{x}(t)} 
		\\
		&
		\overset{\eqref{odes:eqn:abs_ddt_X_geq_ddt_abs_X}}{\leq} \babs{\dot{x}(t) - \dot{\tilde{x}}(t)} 
		= \babs{F \bigl ( x(t) \bigr ) - F \bigl ( \tilde{x}(t) \bigr )} 
		\\
		&\leq L \, \babs{x(t) - \tilde{x}(t)} 
		. 
	\end{align*}
	Hence, the Grönwall lemma applies with $a = 0$ and 
	\begin{align*}
		u(0) = \babs{x(0) - \tilde{x}(0)} = \babs{x_0 - x_0} = 0 
		, 
	\end{align*}
	and we obtain the estimate 
	\begin{align*}
		0 \leq u(t) = \babs{x(t) - \tilde{x}(t)} \leq u(0) \, \e^{L t} = 0 
		, 
	\end{align*}
	\ie the two trajectories coincide and the solution is unique. \marginpar{2013.09.19}
\end{proof}
%
% subsection Conclusion: global existence and uniqueness (end)
% section Existence and uniqueness of solutions (end)

\section{Stability analysis} % (fold)
\label{odes:stability}
What if an ODE is not analytically solvable, beyond using numerics, what information can we extract? A simple way to learn something about the qualityative behavior is to linearize the vector field near fixed points, \ie those $x_0 \in \R^n$ for which 
\begin{align*}
	F(x_0) = 0 
	. 
\end{align*}
Writing the solution $x(t) = x_0 + \delta y(t)$ where $x(0) = x_0$ and Taylor expanding the vector field, we obtain another \emph{linear} ODE involving the differential $DF(x_0) = \bigl ( \partial_{x_j} F_k(x_0) \bigr )_{1 \leq j , k \leq d}$, 
\begin{align*}
	\frac{\dd}{\dd t} x(t) = \delta \, \frac{\dd}{\dd t} y(t) 
	&
	= F \bigl ( x_0 + \delta y(t) \bigr ) 
	\\
	&
	= F(x_0) + \delta \, DF(x_0) \, y(t) + \order(\delta^2) 
	= \delta \, DF(x_0) \, y(t) 
	\\
	\Longrightarrow \; \frac{\dd}{\dd t} y(t) &= DF(x_0) \, y(t) 
	. 
\end{align*}
The latter can be solved explicitly, namely $\e^{t DF(x_0)} y_0$. 

Now the stability of the solutions near fixed points is determined by the \emph{eigenvalues} $\{ \lambda_j \}_{j = 1}^N$ of $DF(x_0)$. 
\begin{definition}[Stability of fixed points]
	We call an ODE near a fixed point $x_0$
	\begin{enumerate}[(i)]
		\item \emph{stable} if $\Re \lambda_j < 0$ holds for all eigenvalues $\lambda_j$ of $DF(x_0)$, 
		\item \emph{marginally stable} (or Liapunov stable) $\Re \lambda_j \leq 0$ holds for all $j$ and $\Re \lambda_j = 0$ for at least one $j$, and 
		\item \emph{unstable} otherwise. 
	\end{enumerate}
\end{definition}
There is also another characterization of fixed points: 
\begin{definition}
	An ODE near a fixed point $x_0$ is called 
	\begin{enumerate}[(i)]
		\item \emph{elliptic} if $\Re \lambda_j = 0$ for all $j = 1 , \ldots , N$, and 
		\item \emph{hyperbolic} if $\Im \lambda_j = 0$ for all $j = 1 , \ldots , N$ and $\Re \lambda_j > 0$ for some $j$. 
	\end{enumerate}
\end{definition}
\begin{example}[Lorenz equation]
	The Rayleigh-Bernard equation describes the behavior of a liquid in between two warm plates of different temperatures: 
	\begin{align*}
		\left (
		\begin{matrix}
			\dot{x}_1 \\
			\dot{x}_2 \\
			\dot{x}_3 \\
		\end{matrix}
		\right ) &= \left (
		\begin{matrix}
			\sigma \, (x_2 - x_1) \\
			- x_1 \, x_3 + r x_1 - x_2 \\
			x_1 \, x_2 - b \, x_3 \\
		\end{matrix}
		\right ) =: F(x)
	\end{align*}
	Here, the variables $x_1$ and $x_2$ are temperatures, $x_3$ is a speed, the constant $\sigma$ is a volume ratio, $b$ is the so-called \emph{Prandtl number} and $r \geq 0$ is a control parameter. Typical values for the parameters are $\sigma \simeq 10$ and $b \simeq \nicefrac{8}{5}$. 
	
	This vector field may have several fixed points (depending on the choice of the other parameters), but $x = 0$ is always a fixed point independently of the values of $\sigma$, $b$ and $r$. 
	
	Let us analyze the stability of the ODE near the fixed point at $x = 0$: 
	\begin{align*}
		DF(0) = \left (
		\begin{matrix}
			- \sigma & + \sigma & 0 \\
			r & -1 & 0 \\
			0 & 0 & -b \\
		\end{matrix}
		\right )
	\end{align*}
	The block structure of the matrix yields that one of the eigenvalues is $-b$. 
	
	The other two eigenvalues can be inferred from finding the zeros of 
	\begin{align*}
		\chi(\lambda) &= \mathrm{det} \left (
		\begin{matrix}
			\lambda + \sigma & -\sigma \\
			-r & \lambda + 1 \\
		\end{matrix}
		\right ) 
		= (\lambda + \sigma)(\lambda + 1) - (-1)^2 \, r \sigma 
		\\
		&
		= \lambda^2 + (\sigma + 1) \, \lambda + (1-r) \, \sigma 
		, 
	\end{align*}
	namely 
	\begin{align*}
		\lambda_{\pm} = - \frac{\sigma + 1}{2} \pm \frac{1}{2} \sqrt{(\sigma + 1)^2 - (1 - r) \, \sigma} 
		. 
	\end{align*}
	If $r > 1$, then $(\sigma + 1)^2 - (1 - r) \, \sigma > (\sigma + 1)^2$, and thus $\lambda_- < 0 < \lambda_+$ as long as $\sigma \geq r$ is large enough. Hence, the fixed point at $x = 0$ is unstable and hyperbolic. 
	
	For the other case, $0 \leq r < 1$, both eigenvalues are negative, and thus $\lambda_{\pm} < 0$, the fixed point is stable (but neither elliptic nor hyperbolic). \marginpar{2013.09.24}
\end{example}
%
% section Stability analysis (end)
% chapter Ordinary differential equations (end)
\chapter{Classical mechanics} % (fold)
\label{classical_mechanics}
This section serves to give a short introduction to the hamiltonian point of view of classical mechanics and some of its beautiful mathematical structures. For simplicity, we only treat classical mechanics of a \emph{spinless point particle moving in $\R^n$}, for the more general theory, we refer to \cite{Marsden_Ratiu:intro_mechanics_symmetry:1999,Arnold:classical_mechanics:1997}. A more thorough introduction to this topic would easily take up a whole lecture, so we content ourselves with introducing the key precepts and apply our knowledge from Section~\ref{odes}. 

We will only treat \emph{hamiltonian} mechanics here: the dynamics is generated by the so-called hamilton function $H : \Pspace \longrightarrow \R$ which describes the energy of the system for a given configuration. Here, $\Pspace$ is also known as \emph{phase space}. Since only \emph{energy differences} are measurable, the hamiltonian $H' := H + E_0$, $E_0 \in \R$, generates the \emph{same dynamics} as $H$. This is obvious from the \emph{Hamilton's equations of motion}, 
\begin{subequations}\label{classical_mechanics:eqn:hamiltons_eom_simple}
	\begin{align}
		\dot{q}(t) &= + \nabla_p H \bigl ( q(t),p(t) \bigr ) 
		,
		\\ 
		\dot{p}(t) &= - \nabla_q H \bigl ( q(t),p(t) \bigr ) 
		, 
		% \notag
	\end{align}
\end{subequations}
which can be rewritten in matrix notation as 
\begin{align}
	J 
	\left (
	\begin{matrix}
		\dot{q}(t) \\
		\dot{p}(t) \\
	\end{matrix}
	\right ) 
	:& \negmedspace= 
	\left (
	\begin{matrix}
		0 & - \id_{\R^n} \\
		+ \id_{\R^n} & 0 \\
	\end{matrix}
	\right )
	\left (
	\begin{matrix}
		\dot{q}(t) \\
		\dot{p}(t) \\
	\end{matrix}
	\right ) 
	\label{classical_mechanics:eqn:hamiltons_eom} 
	\\
	&= \left (
	\begin{matrix}
		\nabla_q H \\
		\nabla_p H \\
	\end{matrix}
	\right ) \bigl ( q(t),p(t) \bigr ) 
	=: X_H \bigl ( q(t),p(t) \bigr ) 
	\notag 
	. 
\end{align}
The matrix $J$ appearing on the left-hand side is often called \emph{symplectic form} and leads to a geometric point of view of classical mechanics. For fixed initial condition $(q_0,p_0) \in \Pspace$ at time $t_0 = 0$, \ie initial position and momentum, the \emph{Hamilton's flow} 
\begin{align}
	\Phi : \R \times \Pspace \longrightarrow \Pspace 
\end{align}
maps $(q_0,p_0)$ onto the trajectory which solves the Hamilton's equations of motion, 
\begin{align*}
	\Phi_t(q_0,p_0) = \bigl ( q(t),p(t) \bigr )
	, 
	&&
	\bigl ( q(0),p(0) \bigr ) = (q_0,p_0) 
	. 
\end{align*}
If the flow exists for all $t \in \R$, it has the following nice properties: for all $t , t' \in \R$ and $(q_0,p_0) \in \Pspace$, we have 
\begin{enumerate}[(i)]
	\item $\Phi_t \bigl ( \Phi_{t'}(q_0,p_0) \bigr ) = \Phi_{t + t'}(q_0,p_0)$, 
	\item $\Phi_0(q_0,p_0) = (q_0,p_0)$, and 
	\item $\Phi_{t} \bigl ( \Phi_{-t}(q_0,p_0) \bigr ) = \Phi_{t - t}(q_0,p_0) = (q_0,p_0)$. 
\end{enumerate}
Mathematically, this means $\Phi$ is a \emph{group action} of $\R$ (with respect to time translations) on phase space $\Pspace$. This is a fancy way of saying: 
\begin{enumerate}[(i)]
	\item If we first evolve for time $t$ and then for time $t'$, this is the same as evolving for time $t + t'$. 
	\item If we do not evolve at all in time, nothing changes. 
	\item The system can be evolved forwards or backwards in time. 
\end{enumerate}
The above results immediately apply to the Hamilton's equations of motion: 
\begin{corollary}
	Let $H(q,p) = \tfrac{1}{2m} p^2 + V(q)$ be the hamiltonian which generates the dynamics according to equation~\eqref{classical_mechanics:eqn:hamiltons_eom} such that $\nabla_q V$ satisfies a global Lipschitz condition 
	\begin{align*}
		\babs{\nabla_q V(q) - \nabla_q V(q')} \leq L \, \babs{q - q'} 
		&& \forall q , q' \in \R^n 
		. 
	\end{align*}
	Then the hamiltonian flow $\Phi$ exists for all $t \in \R$. 
\end{corollary}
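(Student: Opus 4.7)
The plan is to reduce the statement to Corollary~\ref{frameworks:classical_mechanics:cor:existence_flow}, which already delivers global existence of the flow for any vector field on $\R^{2n}$ that is globally Lipschitz. So the only real task is to verify that the Hamiltonian vector field $X_H$ built from $H(q,p) = \tfrac{1}{2m}p^2 + V(q)$ is globally Lipschitz on $\Pspace = \R^{2n}$.

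First I would write out $X_H$ explicitly: since $\nabla_p H = p/m$ and $\nabla_q H = \nabla_q V(q)$, the Hamiltonian vector field takes the very simple form
\begin{align*}
    X_H(q,p) = \bigl ( p/m , \, - \nabla_q V(q) \bigr )
    .
\end{align*}
Notice that the $q$-component depends only on $p$ linearly (hence trivially Lipschitz with constant $1/m$) and the $p$-component depends only on $q$ through $\nabla_q V$, which is globally Lipschitz with constant $L$ by hypothesis. This is the key point: the special product structure of $X_H$ decouples the two components, so no cross terms appear.

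Next I would estimate, for any $(q,p), (q',p') \in \Pspace$,
\begin{align*}
    \babs{X_H(q,p) - X_H(q',p')}
    &\leq \tfrac{1}{m} \babs{p - p'} + \babs{\nabla_q V(q) - \nabla_q V(q')}
    \\
    &\leq \tfrac{1}{m} \babs{p - p'} + L \, \babs{q - q'}
    \\
    &\leq \bigl ( \tfrac{1}{m} + L \bigr ) \, \babs{(q,p) - (q',p')}
    ,
\end{align*}
using the triangle inequality and the hypothesis. Thus $X_H$ satisfies the global Lipschitz bound on all of $\R^{2n}$ with constant $\tilde L := \tfrac{1}{m} + L$.

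Finally, I invoke Corollary~\ref{frameworks:classical_mechanics:cor:existence_flow} applied to the first-order system $\tfrac{\dd}{\dd t}(q,p) = X_H(q,p)$ on $\R^{2n}$: the corollary yields a unique global trajectory $t \mapsto (q(t),p(t))$ for every initial condition $(q_0,p_0) \in \Pspace$ and every $t \in \R$, and thus the flow $\Phi : \R \times \Pspace \longrightarrow \Pspace$ exists globally. There is no real obstacle here — the content of the corollary is that global Lipschitz regularity gives global-in-time existence and uniqueness, and the hypothesis on $V$ is chosen precisely so that this transfers cleanly to the Hamiltonian vector field.
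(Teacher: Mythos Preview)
Your proof is correct and matches the paper's approach exactly: the paper presents this corollary as an immediate consequence of Corollary~\ref{frameworks:classical_mechanics:cor:existence_flow} without spelling out the Lipschitz estimate, and the subsequent Remark confirms that the intended argument is precisely to check that $X_H$ is globally Lipschitz. You have simply filled in those details correctly.
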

Obviously, if $\nabla_q V$ is only \emph{locally} Lipschitz, we have \emph{local} existence of the flow. 
\begin{remark}
	Note that if all second-order derivatives of $V$ are bounded, then the hamiltonian vector field $X_H$ is Lipschitz. 
	% TODO explain why 
\end{remark}

\section{The trinity of physical theories} % (fold)
\label{classical_mechanics:trinity}
It turns out to be useful to take a step back and analyze the \emph{generic structure} of most physical theories. They usually consist of \emph{three ingredients:} 
\begin{enumerate}[(i)]
	\item A notion of \emph{state} which encodes the configuration of the system, 
	\item a notion of \emph{observable} which predicts the outcome of measurements and 
	\item a \emph{dynamical equation} which governs how the physical system evolves. 
\end{enumerate}

\subsection{States} % (fold)
\label{classical_mechanics:trinity:states}
The classical particle always moves in \emph{phase space} $\R^{2n} \simeq \R^n_q \times \R^n_p$ of positions and momenta. Pure states in classical mechanics are simply points in phase space: a point particle's state at time $t$ is characterized by its position $q(t)$ and its momentum $p(t)$. More generally, one can consider \emph{distributions of initial conditions} which are relevant in statistical mechanics, for instance. 
\begin{definition}[Classical states]\label{classical_mechanics:defn:state}
	A classical state is a probability measure $\mu$ on phase space, that is a positive Borel measure\footnote{Unfortunately we do not have time to define Borel sets and Borel measures in this context. We refer the interested reader to chapter~1 of \cite{Lieb_Loss:analysis:2001}. Essentially, a Borel measure assigns a “volume” to “nice” sets, \ie Borel sets. } which is normed to $1$, 
	\begin{align*}
		\mu(U) &\geq 0 && \mbox{for all Borel sets $U \subseteq \Pspace$} \\
		\mu(\Pspace) &= \int_{\R^{2n}} \dd \mu 
		= 1 
		. 
	\end{align*}
	Pure states are \emph{point measures}, \ie if $(q_0,p_0) \in \Pspace$, then the associated pure state is given by $\mu_{(q_0,p_0)}(\cdot) := \delta_{(q_0,p_0)}(\cdot) = \delta \bigl ( \cdot - (q_0,p_0) \bigr )$.\footnote{Here, $\delta$ is the Dirac distribution which we will consider in detail in Chapter~\ref{S_and_Sprime}. } 
\end{definition}
%
% subsection States (end)

\subsection{Observables} % (fold)
\label{classical_mechanics:trinity:observables}
Observables $f$ such as position, momentum, angular momentum and energy describe the outcome of measurements. 
\begin{definition}[Classical observables]
	Classical observables $f \in \Cont^{\infty}(\R^{2n},\R)$ are smooth functions on $\R^{2n}$ with values in $\R$. 
\end{definition}
Of course, there are cases when observables are functions which are not smooth on all or $\R^{2n}$, \eg the Hamiltonian which describes Newtonian gravity in three dimensions, 
\begin{align*}
	H(q,p) = \frac{1}{2m} p^2 - \frac{g}{\abs{q}}
	, 
\end{align*}
has a singularity at $q = 0$, but $H \in \Cont^{\infty} \bigl ( \R^{2n} \setminus \{ 0 \} , \R \bigr )$. 

Intimately linked is the concept of 
\begin{definition}[Spectrum of an observable]
	The spectrum of a classical observables, \ie the set of possible outcomes of measurements, is given by 
	\begin{align*}
		\spec f := f(\Pspace) = \image f 
		. 
	\end{align*}
\end{definition}
If we are given a classical state $\mu$, then the \emph{expectation value $\mathbb{E}_{\mu}$} of an observable $f$ for the distribution of initial conditions $\mu$ is given by 
\begin{align*}
	\mathbb{E}_{\mu}(f) :=& \int_{\Pspace} \dd \mu(q,p) \, f(q,p)
	. 
\end{align*}
%
% subsection Observables (end)

\subsection{Dynamics: Schrödinger vs{.} Heisenberg picture} % (fold)
\label{classical_mechanics:trinity:dynamics}
There are two equivalent ways prescribe dynamics: either we evolve states in time and keep observables fixed or we keep states fixed in time and evolve observables. In \emph{quantum} mechanics, these two points of view are known as the \emph{Schrödinger} and \emph{Heisenberg picture} (after the famous physicists of the same name). Usually, there are situations when one point of view is more convenient than the other. 

In both cases the crucial ingredient in the dynamical equation is the \emph{energy observable} $H(q,p)$ also known as \emph{Hamilton function}, or \emph{Hamiltonian} for short. The prototypical form for a non-relativistic particle of mass $m > 0$ subjected to a potential $V$ is 
\begin{align*}
	H(q,p) = \frac{1}{2m} p^2 + V(q) 
	. 
\end{align*}
The first term, $\frac{1}{2m} p^2$, is also known as \emph{kinetic energy} while $V$ is the \emph{potential}. 

We have juxtaposed the Schrödinger and Heisenberg point of view in Table~\ref{classical_mechanics:table:Schroedinger_vs_Heisenberg}: in the Schrödinger picture, the states 
\begin{align}
	\mu(t) := \mu \circ \Phi_{-t}
\end{align}
are evolved \emph{backwards} in time while observables $f$ remain constant. (That may seem unintuitive at first, but we ask the skeptic to continue reading until the end of Section~\ref{classical_mechanics:equivalence_S_H}.) 

Conversely, in the Heisenberg picture, \emph{observables} 
\begin{align}
	f(t) := f \circ \Phi_t
\end{align}
evolve forwards in time whereas states $\mu$ remain fixed. In both cases, the dynamical equations can be written in terms of the so-called \emph{Poisson bracket}
\begin{align}
	\bigl \{ f , g \bigr \} := \sum_{j = 1}^n \bigl ( \partial_{p_j} f \, \partial_{q_j} g - \partial_{q_j} f \, \partial_{p_j} g \bigr ) 
	. 
	\label{classical_mechanics:eqn:Poisson_bracket} 
\end{align}
These equations turn out to be equivalent to proposing Hamilton's equations of motion~\eqref{classical_mechanics:eqn:hamiltons_eom} (\cf Proposition~\ref{classical_mechanics:prop:equations_of_motion_Poisson_bracket}). 

\begin{table}
	\hfil
	\begin{tabular}{l | c | c}
		 & \emph{Schrödinger picture} & \emph{Heisenberg picture} \\ \hline
		\emph{States} & $\mu(t) = \mu \circ \Phi_{-t}$ & $\mu$ \\
		\emph{Observables} & $f$ & $f(t) = f \circ \Phi_t$ \\
		\emph{Dynamical equation} & $\frac{\dd}{\dd t} \mu(t) = - \bigl \{ H , \mu(t) \bigr \}$ & $\frac{\dd}{\dd t} f(t) = \bigl \{ H , f(t) \bigr \}$
	\end{tabular}
	\hfil
	\caption{Comparison between Schrödinger and Heisenberg picture. }
	\label{classical_mechanics:table:Schroedinger_vs_Heisenberg}
\end{table}
\begin{example}
	For the special observables position $q_j$ and momentum $p_j$ equation~\eqref{classical_mechanics:eqn:eom_observables} reduces to the components of Hamilton's equations of motion~\eqref{classical_mechanics:eqn:hamiltons_eom}, 
	\begin{align*}
		\dot{q}_j &= \bigl \{ H , q_j \bigr \} 
		= + \partial_{p_j} H 
		\\
		\dot{p}_j &= \bigl \{ H , p_j \bigr \} 
		= - \partial_{q_j} H 
		. 
	\end{align*}
\end{example}
%
% subsection Dynamics: Schrödinger vs{.} Heisenberg picture (end)
% section The trinity of physical theories (end)

\section{Conservation of probability and the Liouville theorem} % (fold)
\label{classical_mechanics:Liouville}
In the Schrödinger picture, states are time-dependent while observables remain fixed in time. If $\mu$ is a state, we develop it \emph{backwards} in time, $\mu(t) = \mu \circ \Phi_{-t}$, using the hamiltonian flow $\Phi$ associated to \eqref{classical_mechanics:eqn:hamiltons_eom}. 

A priori, it is \emph{not} obvious that $\mu(t)$ is still a classical state in the sense of Definition~\ref{classical_mechanics:defn:state}. The first requirement, $\bigl ( \mu(t) \bigr )(U) \geq 0$, is still satisfied since $\Phi_{-t}(U)$ is again a subset of $\R^{2n}$\footnote{The fact that $\Phi_{-t}(U)$ is again Borel measurable follows from the continuity of $\Phi_t$ in $t$.} What is not obvious is whether $\mu(t)$ is still normed, \ie whether 
\begin{align*}
	\bigl ( \mu(t) \bigr )(\R^{2n}) &= \int_{\R^{2n}} \dd \mu(t) = 1 ? 
\end{align*}
\begin{proposition}\label{classical_mechanics:prop:states_stay_states}
	Let $\mu$ be a state on phase space $\Pspace$ and $\Phi_t$ the flow generated by a hamiltonian $H \in \Cont^{\infty}(\Pspace)$ which we assume to exist for $\abs{t} \leq T$ where $0 < T \leq \infty$ is suitable. Then $\mu(t)$ is again a state. 
\end{proposition}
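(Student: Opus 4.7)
The plan is to verify both defining properties of a classical state from Definition~\ref{classical_mechanics:defn:state} for $\mu(t) := \mu \circ \Phi_{-t}$. I read this notation as $\bigl(\mu(t)\bigr)(U) := \mu\bigl(\Phi_{-t}(U)\bigr)$ (the pushforward of $\mu$ along $\Phi_t$), so both properties reduce to statements about $\Phi_{-t}$ applied to Borel sets.

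Positivity is immediate once one knows preimages of Borel sets under $\Phi_{-t}$ are Borel. Since $H \in \Cont^{\infty}$, Theorem~\ref{odes:thm:smoothness_flow} makes $\Phi_{-t}$ continuous (in fact smooth), so Borel-measurable; hence $\Phi_{-t}(U)$ is Borel for every Borel $U$, and positivity of the original measure gives $\bigl(\mu(t)\bigr)(U) = \mu\bigl(\Phi_{-t}(U)\bigr) \geq 0$.

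The substantive part is normalization. The shortest route uses only the group properties (i)--(iii) of the flow: for $\abs{t} \leq T$ the maps $\Phi_t, \Phi_{-t} : \R^{2n} \to \R^{2n}$ satisfy $\Phi_t \circ \Phi_{-t} = \Phi_0 = \id$, so $\Phi_{-t}$ is a bijection of $\R^{2n}$ onto itself. In particular $\Phi_{-t}(\R^{2n}) = \R^{2n}$, whence
\[
	\bigl(\mu(t)\bigr)(\R^{2n}) = \mu\bigl(\Phi_{-t}(\R^{2n})\bigr) = \mu(\R^{2n}) = 1.
\]

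The deeper geometric content behind this innocent-looking step -- and presumably the reason the section is named after Liouville -- is that the Hamiltonian vector field $X_H = (\nabla_p H, -\nabla_q H)$ is divergence-free,
\[
	\mathrm{div}\, X_H = \sum_{j = 1}^n \bigl(\partial_{q_j}\partial_{p_j} H - \partial_{p_j}\partial_{q_j} H\bigr) = 0,
\]
so $\Phi_t$ even preserves $2n$-dimensional Lebesgue measure. If $\mu$ happens to have a density $\rho$, this yields a more refined normalization argument via change of variables: $\babs{\det D\Phi_{-t}} \equiv 1$ gives $\int_{\R^{2n}} \rho \circ \Phi_{-t} \, \dd x = \int_{\R^{2n}} \rho \, \dd x = 1$, and in fact shows that the density is transported rigidly along trajectories. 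I do not anticipate any real obstacle: the one subtlety is Borel-measurability of $\Phi_{-t}(U)$, and that is handled by the smoothness of the flow established earlier.
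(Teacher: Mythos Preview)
Your argument is correct and in fact more elementary than the paper's. The paper routes the whole proof through Liouville's theorem (Theorem~\ref{classical_mechanics:thm:Liouville}): it writes $\bigl(\mu(t)\bigr)(U) = \int_{\Phi_{-t}(U)} \dd\mu\,\det(D\Phi_t)$ via a change of variables and then invokes $\det(D\Phi_t)\equiv 1$ before specializing to $U=\R^{2n}$. You observe, correctly, that normalization needs only the bijectivity of $\Phi_{-t}$ on phase space, which already follows from the group properties of the flow; Liouville is not required at this stage. Your closing paragraph identifies precisely what Liouville \emph{does} buy: preservation of Lebesgue measure, which the paper genuinely needs later in Section~\ref{classical_mechanics:equivalence_S_H} for the equivalence of Schr\"odinger and Heisenberg pictures. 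So your route is tighter for this proposition, while the paper's route foregrounds the tool that will matter downstream. (As a bonus, your argument works verbatim for arbitrary Borel probability measures, whereas the paper's change-of-variables line with a Jacobian is really only literal for measures with a density.)

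One small wording issue in your positivity step: ``$\Phi_{-t}$ continuous $\Rightarrow$ Borel-measurable $\Rightarrow$ $\Phi_{-t}(U)$ Borel'' is not quite the right chain. Measurability controls \emph{preimages}, not images; continuous images of Borel sets need not be Borel in general. What you actually want (and implicitly use anyway) is that $\Phi_{-t}$ is a homeomorphism, so $\Phi_{-t}(U)=\Phi_t^{-1}(U)$ is the preimage of $U$ under the continuous map $\Phi_t$, hence Borel. The paper phrases this by noting that $\Phi_t$ is a diffeomorphism on phase space.
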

The proof of this relies on a very deep result of classical mechanics, the so-called 
\begin{theorem}[Liouville]\label{classical_mechanics:thm:Liouville}
	The hamiltonian vector field is divergence free, \ie the hamiltonian flow preserves volume in phase space of bounded subsets $V$ of $\Pspace$ with smooth boundary $\partial V$. In particular, the functional determinant of the flow is constant and equal to 
	\begin{align*}
		\mathrm{det} \, \bigl ( D \Phi_t(q,p) \bigr ) = 1 
	\end{align*}
	for all $t \in \R$ and $(q,p) \in \Pspace$. 
\end{theorem}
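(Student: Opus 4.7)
The plan is to prove the three claims in a specific order: first the divergence-freeness of the Hamiltonian vector field $X_H$, then use this to deduce that the functional determinant is identically $1$, and finally derive volume preservation via the change of variables formula. This ordering is natural because each statement feeds into the next.

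First I would compute the divergence of $X_H = (\nabla_p H, -\nabla_q H)$ directly. Treating $(q,p) \in \Pspace = \R^{2n}$ as coordinates, the divergence becomes
\begin{align*}
	\nabla_{(q,p)} \cdot X_H = \sum_{j=1}^n \partial_{q_j}(\partial_{p_j} H) + \sum_{j=1}^n \partial_{p_j}(-\partial_{q_j} H) = \sum_{j=1}^n \bigl(\partial_{q_j}\partial_{p_j}H - \partial_{p_j}\partial_{q_j}H\bigr) = 0,
\end{align*}
where the vanishing uses Schwarz's theorem on equality of mixed partials, which applies because $H \in \Cont^\infty(\Pspace)$. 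This is the cheap step and no obstacle appears here.

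Next I would tackle the functional determinant, which is the core of the theorem. Define $J(t;q,p) := \mathrm{det} \bigl(D\Phi_t(q,p)\bigr)$. Since the flow satisfies $\frac{\dd}{\dd t}\Phi_t = X_H \circ \Phi_t$, differentiating with respect to the initial condition gives the linearized (variational) equation $\frac{\dd}{\dd t} D\Phi_t = (DX_H \circ \Phi_t) \cdot D\Phi_t$, with $D\Phi_0 = \id$. The plan is then to apply Jacobi's formula for differentiating a determinant: for any smooth matrix-valued curve $A(t)$ with $A(t)$ invertible, $\frac{\dd}{\dd t}\mathrm{det}\,A(t) = \mathrm{det}\,A(t) \cdot \trace \bigl(A(t)^{-1} \dot{A}(t)\bigr)$. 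Applied to $A(t) = D\Phi_t(q,p)$ this gives
\begin{align*}
	\frac{\dd}{\dd t} J(t;q,p) = J(t;q,p) \cdot \trace\bigl(DX_H(\Phi_t(q,p))\bigr) = J(t;q,p) \cdot \bigl(\nabla \cdot X_H\bigr)\bigl(\Phi_t(q,p)\bigr) = 0,
\end{align*}
using the first step. Combined with $J(0;q,p) = \mathrm{det}\,\id = 1$, uniqueness of ODE solutions (or direct integration) yields $J(t;q,p) = 1$ for all times at which the flow is defined.

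Finally, volume preservation for a bounded $V \subseteq \Pspace$ with smooth boundary follows from the change of variables formula applied to the diffeomorphism $\Phi_t$ (which is a diffeomorphism because $\Phi_t$ and $\Phi_{-t}$ are both smooth by Theorem~\ref{odes:thm:smoothness_flow}):
\begin{align*}
	\mathrm{vol}\bigl(\Phi_t(V)\bigr) = \int_{\Phi_t(V)} \dd q \, \dd p = \int_V \babs{\mathrm{det}\,D\Phi_t(q,p)} \, \dd q \, \dd p = \int_V \dd q \, \dd p = \mathrm{vol}(V).
\end{align*}
The main obstacle I anticipate is the Jacobi formula step, which technically requires either a separate lemma (not proved in the notes) or an ad hoc derivation via expanding $\mathrm{det}(\id + \delta M) = 1 + \delta\,\trace M + \order(\delta^2)$. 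I would likely include a brief derivation of that expansion to keep the argument self-contained.
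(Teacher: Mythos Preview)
Your proposal is correct and follows essentially the same approach as the paper: both compute $\nabla \cdot X_H = 0$ directly from equality of mixed partials, then invoke the Jacobi formula $\frac{\dd}{\dd t}\det D\Phi_t = (\mathrm{div}\,X_H \circ \Phi_t)\det D\Phi_t$ (which the paper quotes as a fact in a preceding remark rather than deriving), and combine this with the change of variables formula to get volume preservation. The only difference is ordering---the paper proves volume preservation before isolating $\det D\Phi_t = 1$, whereas you do the reverse---but the logical content is identical.
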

\begin{figure}
	\hfil\includegraphics[height=4cm]{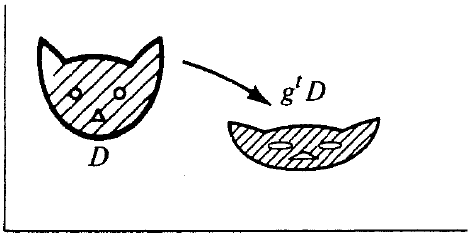}\hfil
	\caption{Phase space volume is preserved under the hamiltonian flow. }
\end{figure}
\begin{remark}\label{classical_mechanics:remark:Liouville}
	We will need a fact from the theory of dynamical systems: if $\Phi_t$ is the flow associated to a differential equation $\dot{x} = F(x)$ with $F \in \Cont^1(\R^n,R^d)$, then 
	\begin{align*}
		\frac{\dd }{\dd t} D \Phi_t(x) = DF \bigl ( \Phi_t(x) \bigr ) \, D \Phi_t(x) 
		, 
		&& D \Phi_t \big \vert_{t = 0} = \id_{\R^n} 
		, 
	\end{align*}
	holds for the differential of the flow. As a consequence, one can prove 
	\begin{align*}
		\frac{\dd }{\dd t} \bigl ( \mathrm{det} \, D \Phi_t(x) \bigr ) &= \mathrm{tr} \, \bigl ( DF \bigl (\Phi_t(x) \bigr ) \bigr ) \, \mathrm{\det} \, \bigl ( D \Phi_t(x) \bigr ) 
		\\
		&= \mathrm{div} \, F \bigl ( \Phi_t(x) \bigr ) \, \mathrm{\det} \, \bigl ( D \Phi_t(x) \bigr ) 
		, 
	\end{align*}
	and $\mathrm{det} \, \bigl ( D \Phi_t \bigr ) \big \vert_{t = 0} = 1$. There are more elegant, general and geometric proofs of this fact, but they are beyond the scope of our short introduction. 
\end{remark}
\begin{proof}[Theorem~\ref{classical_mechanics:thm:Liouville}]
	Let $H$ be the hamiltonian which generates the flow $\Phi_t$. Let us denote the hamiltonian vector field by 
	\begin{align*}
		X_H = \left (
		\begin{matrix}
			+ \nabla_p H \\
			- \nabla_q H \\
		\end{matrix}
		\right ) 
		. 
	\end{align*}
	Then a direct calculation yields 
	\begin{align*}
		\mathrm{div} \, X_H = \sum_{j = 1}^n \Bigl ( \partial_{q_j} \bigl ( + \partial_{p_j} H \bigr ) + \partial_{p_j} \bigl ( - \partial_{q_j} H \bigr ) \Bigr ) = 0 
	\end{align*}
	and the hamiltonian vector field is divergence free. This implies the hamiltonian flow $\Phi_t$ preserves volumes in phase space: let $V \subseteq \Pspace$ be a bounded region in phase space (a Borel subset) with smooth boundary. Then for all $- T \leq t \leq T$ for which the flow exists, we have 
	\begin{align*}
		\frac{\dd}{\dd t} \mathrm{Vol} \, \bigl ( \Phi_t(V) \bigr ) &= \frac{\dd}{\dd t} \int_{\Phi_t(V)} \dd q \, \dd p 
		\\
		&= \frac{\dd}{\dd t} \int_{V} \dd x' \, \dd p' \, \mathrm{det} \, \bigl ( D \Phi_t(q',p') \bigr )
		. 
	\end{align*}
	Since $V$ is bounded, we can bound $\mathrm{det} \, \bigl ( D \Phi_t \bigr )$ and its time derivative uniformly. Thus, we can interchange integration and differentiation and apply Remark~\ref{classical_mechanics:remark:Liouville}, 
	\begin{align*}
		\frac{\dd}{\dd t} \int_{V} \dd x' \, \dd p' \, \mathrm{det} \, \bigl ( D \Phi_t(q',p') \bigr ) &= \int_V \dd x' \, \dd p' \, \frac{\dd}{\dd t} \mathrm{det} \, \bigl ( D \Phi_t(q',p') \bigr ) 
		\\
		&
		= \int_V \dd x' \, \dd p' \, \underbrace{\mathrm{div} \, X_H \bigl ( \Phi_t(q',p') \bigr )}_{= 0} \, \mathrm{det} \, \bigl ( D \Phi_t(q',p') \bigr ) 
		\\
		&
		= 0 
		. 
	\end{align*}
	Hence $\frac{\dd}{\dd t} \mathrm{Vol} \, (V) = 0$ and the hamiltonian flow conserves phase space volume. The functional determinant of the flow is constant as the time derivative vanishes, 
	\begin{align*}
		\frac{\dd}{\dd t} \mathrm{det} \, \bigl ( D \Phi_t(q',p') \bigr ) = 0 
		, 
	\end{align*}
	and equal to $1$, 
	\begin{align*}
		\mathrm{det} \, \bigl ( D \Phi_t (q',p') \bigr ) \big \vert_{t = 0} = \mathrm{det} \, \id_{\Pspace} = 1 
		. 
	\end{align*}
	This concludes the proof. 
\end{proof}
With a different proof relying on alternating multilinear forms, the requirements on $V$ can be lifted, see \eg \cite[Theorem on pp.~204--207]{Arnold:classical_mechanics:1997}. 
\begin{proof}[Proposition~\ref{classical_mechanics:prop:states_stay_states}]
	Since $\Phi_t$ is continuous, it is also measurable. Thus $\mu(t) = \mu \circ \Phi_{-t}$ is also a Borel measure on $\Pspace$ ($\Phi_{-t}$ exists by assumption on $t$). In fact, $\Phi_t$ is a diffeomorphism on phase space. Liouville's theorem~\ref{classical_mechanics:thm:Liouville} not only ensures that the measure $\mu(t)$ remains positive, but also that it is normalized to 1: let $U \subseteq \Pspace$ be a Borel set. Then we conclude 
	\begin{align*}
		\bigl ( \mu(t) \bigr ) (U) &= \int_{U} \dd \bigl ( \mu(t) \bigr )(q,p) 
		= \int_{U} \dd \mu \bigl ( \Phi_{-t}(q,p) \bigr ) 
		\\ 
		&= \int_{\Phi_{-t}(U)} \dd \mu (q,p) \, \mathrm{det} \bigl ( D \Phi_t(q,p) \bigr ) 
		= \int_{\Phi_{-t}(U)} \dd \mu (q,p) \geq 0 
	\end{align*}
	where we have used the positivity of $\mu$ and the fact that $\Phi_{-t}(U)$ is again a Borel set by continuity of $\Phi_{-t}$. If we set $U = \Pspace$ and use the fact that the flow is a diffeomorphism, we see that $\Pspace$ is mapped onto itself, $\Phi_{-t}(\Pspace) = \Pspace$, and the normalization of $\mu$ leads to 
	\begin{align*}
		\bigl ( \mu(t) \bigr ) (\Pspace) &= \int_{\Phi_{-t}(\Pspace)} \dd \mu (q,p) = \int_{\Pspace} \dd \mu (q,p) = 1
	\end{align*}
	This concludes the proof. 
\end{proof}
%
% section Conservation of probability and the Liouville theorem (end)

\section{Equation of motion for observables and Poisson algebras} % (fold)
\label{classical_mechanics:Poisson}
Viewed in the Heisenberg picture, observables move in time, $f(t) = f \circ \Phi_t$, while states are remain fixed. Seeing as $\Phi_t$ is invertible, it maps $\R^{2n}$ onto itself, and thus the spectrum of the observable does not change in time, 
\begin{align*}
	\spec f(t) = \spec f 
	. 
\end{align*}
For many applications and arguments, it will be helpful to find a dynamical equation for $f(t)$ directly: 
\begin{proposition}\label{classical_mechanics:prop:equations_of_motion_Poisson_bracket}
	Let $f \in \Cont^{\infty}(\Pspace,\R)$ be an observable and $\Phi$ the hamiltonian flow which solves the equations of motion~\eqref{classical_mechanics:eqn:hamiltons_eom} associated to a hamiltonian $H \in \Cont^{\infty}(\Pspace,\R)$ which we assume to exist globally in time for all $(q_0,p_0) \in \Pspace$. Then 
	\begin{align}
		\frac{\dd }{\dd t} f(t) = \bigl \{ H , f(t) \bigr \} 
		\label{classical_mechanics:eqn:eom_observables}
	\end{align}
	holds where $\bigl \{ f , g \bigr \} := \sum_{j = 1}^n \bigl ( \partial_{p_j} f \, \partial_{q_j} g - \partial_{q_j} f \, \partial_{p_j} g \bigr )$ is the so-called \emph{Poisson bracket}. 
\end{proposition}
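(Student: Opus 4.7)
The plan is to reduce the claim to a direct application of the chain rule along trajectories of the Hamiltonian flow and then recognize the resulting expression as a Poisson bracket.

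Fix an initial point $(q_0, p_0) \in \Pspace$ and abbreviate $(q(t), p(t)) := \Phi_t(q_0, p_0)$, so that $f(t)(q_0, p_0) = f(q(t), p(t))$. Because $f \in \Cont^{\infty}(\Pspace, \R)$ and $\Phi$ inherits the smoothness of $X_H$ by Theorem~\ref{odes:thm:smoothness_flow}, I may differentiate in $t$ and apply the chain rule to obtain
\begin{align*}
\frac{\dd}{\dd t} f(q(t), p(t)) = \sum_{j=1}^n \Bigl[ (\partial_{q_j} f)(q(t), p(t))\, \dot q_j(t) + (\partial_{p_j} f)(q(t), p(t))\, \dot p_j(t) \Bigr].
\end{align*}
Substituting Hamilton's equations of motion~\eqref{classical_mechanics:eqn:hamiltons_eom_simple}, $\dot q_j = \partial_{p_j} H$ and $\dot p_j = -\partial_{q_j} H$ (both evaluated at $(q(t), p(t))$), and comparing with definition~\eqref{classical_mechanics:eqn:Poisson_bracket} of the Poisson bracket, the right-hand side becomes
\begin{align*}
\sum_{j=1}^n \Bigl[ (\partial_{p_j} H)(\partial_{q_j} f) - (\partial_{q_j} H)(\partial_{p_j} f) \Bigr]\bigl(q(t), p(t)\bigr) = \{H, f\}\bigl(q(t), p(t)\bigr) = \bigl(\{H, f\} \circ \Phi_t\bigr)(q_0, p_0).
\end{align*}

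The remaining step is to identify $\{H, f\} \circ \Phi_t$ with the right-hand side $\{H, f(t)\} = \{H, f \circ \Phi_t\}$ as stated. This is where I expect the main conceptual obstacle to lie, because $\{H, f \circ \Phi_t\}$ requires differentiating $f \circ \Phi_t$ with respect to the \emph{initial} coordinates $(q_0, p_0)$ rather than along the trajectory. The identification follows from two facts: energy conservation $H \circ \Phi_t = H$ (obtained by applying the chain-rule calculation above to $f = H$, since $\{H, H\} = 0$), and symplectic invariance of the Poisson bracket under the flow, $\{g \circ \Phi_t, h \circ \Phi_t\} = \{g, h\} \circ \Phi_t$, which expresses the fact that $\Phi_t$ is a symplectomorphism. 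Combining these gives $\{H, f \circ \Phi_t\} = \{H \circ \Phi_t, f \circ \Phi_t\} = \{H, f\} \circ \Phi_t$, completing the proof. The symplectic invariance itself can be established by differentiating $D\Phi_t^{\top} J\, D\Phi_t$ in $t$ and exploiting symmetry of the Hessian of $H$, in the same spirit as the determinant computation underlying Liouville's Theorem~\ref{classical_mechanics:thm:Liouville}. If one is content with interpreting the identity pointwise along trajectories, then only the first two paragraphs are needed and this last step is unnecessary.
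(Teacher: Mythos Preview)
Your proof follows the same basic route as the paper's: apply the chain rule along a trajectory, substitute Hamilton's equations, and recognize a Poisson bracket. The paper is actually less careful than you are at the final step. After the chain rule it writes the result as $\{H(t),f(t)\}(q_0,p_0)$ and then invokes energy conservation $H(t)=H$ to arrive at $\{H,f(t)\}$. But what the chain rule literally produces is $\{H,f\}\circ\Phi_t$, exactly as you wrote; identifying this with $\{H(t),f(t)\}$ already presupposes the symplectic invariance $\{g,h\}\circ\Phi_t=\{g\circ\Phi_t,h\circ\Phi_t\}$, which the paper only states later as Proposition~\ref{classical_mechanics:prop:properties_Poisson_bracket}(iii) (and explicitly declines to prove). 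So your third paragraph makes explicit an identification that the paper's notation obscures. Your closing remark is also apt: if one reads the proposition as an identity along trajectories, the chain-rule computation alone suffices, and this is effectively how the paper treats it.
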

\begin{proof}
	Theorem~\ref{odes:thm:smoothness_flow} implies the smoothness of the flow from the smoothness of the hamiltonian. This means $f(t) \in \Cont^{\infty}(\Pspace,\R)$ is again a classical observable. By assumption, all initial conditions lead to trajectories that exist globally in time.\footnote{A slightly more sophisticated argument shows that the Proposition holds if the hamiltonian flow exists only locally in time. } For $(q_0,p_0)$, we compute the time derivative of $f(t)$ to be 
	\begin{align*}
		\left ( \frac{\dd }{\dd t} f(t) \right )(q_0,p_0) &= \frac{\dd }{\dd t} f \bigl ( q(t) , p(t) \bigr ) 
		\\
		&= \sum_{j = 1}^n \Bigl ( \partial_{q_j} f \circ \Phi_t(q_0,p_0) \, \dot{q}_j(t) + \partial_{p_j} f \circ \Phi_t(q_0,p_0) \, \dot{p}_j(t) \Bigr ) 
		\\
		&\overset{\ast}{=} \sum_{j = 1}^n \Bigl ( \partial_{q_j} f \circ \Phi_t(q_0,p_0) \, \partial_{p_j} H \circ \Phi_t(q_0,p_0) 
		+ \\
		&\qquad \qquad 
		+ \partial_{p_j} f \circ \Phi_t(q_0,p_0) \, \bigl ( - \partial_{q_j} H \circ \Phi_t(q_0,p_0) \bigr ) \Bigr ) 
		\\
		&= \bigl \{ H(t) , f(t) \bigr \}(q_0,p_0) 
		. 
	\end{align*}
	In the step marked with $\ast$, we have inserted the Hamilton's equations of motion. Compared to equation~\eqref{classical_mechanics:eqn:eom_observables}, we have $H$ instead of $H(t)$ as argument in the Poisson bracket. However, by setting $f(t) = H(t)$ in the above equation, we see that energy is a \emph{conserved quantity}, 
	\begin{align*}
		\frac{\dd }{\dd t} H(t) = \bigl \{ H(t) , H(t) \bigr \} = 0 
		.
	\end{align*}
	Hence, we can replace $H(t)$ by $H$ in the Poisson bracket with $f$ and obtain equation~\eqref{classical_mechanics:eqn:eom_observables}.
\end{proof}
\begin{remark}\label{classical_mechanics:remark:uniqueness_solution_eom_observables}
	One can prove that $f(t) = f \circ \Phi_t$ is the \emph{only} solution to equation~\eqref{classical_mechanics:eqn:eom_observables}, but that requires a little more knowledge about symplectic geometry (\cf Proposition~5.4.2 and Proposition~5.5.2 of \cite{Marsden_Ratiu:intro_mechanics_symmetry:1999}). \marginpar{2013.09.26}
\end{remark}
\paragraph{Conserved quantities} % (fold)
The proof immediately leads to the notion of conserved quantity: 
\begin{definition}[Conserved quantity/constant of motion]\label{classical:defn:conserved_quantity}
	An observable $f \in \Cont^{\infty}(\Pspace,\R)$ which is invariant under the flow $\Phi$ generated by the hamiltonian $H \in \Cont^{\infty}(\Pspace,\R)$, \ie 
	\begin{align*}
		f(t) = f(0) 
		, 
	\end{align*}
	or equivalently satisfies 
	\begin{align*}
		\frac{\dd }{\dd t} f(t) = \bigl \{ H , f(t) \bigr \} = 0 
		, 
	\end{align*}
	is called \emph{conserved quantity} or \emph{constant of motion}. 
\end{definition}
As is very often in physics and mathematics, we have completed the circle: starting from the Hamilton's equations of motion, we have proven that the time evolution of observables is given by the Poisson bracket. Alternatively, we could have \emph{started} by postulating 
\begin{align*}
	\frac{\dd}{\dd t} f(t) = \bigl \{ H , f(t) \bigr \} 
\end{align*}
for observables and we would have \emph{arrived} at the Hamilton's equations of motion by plugging in $q$ and $p$ as observables. 

Seemingly, to check whether an observable is a constant of motion requires one to \emph{solve} the equation of motion, but this is not so. A very important property of the Poisson bracket is the following: 
\begin{proposition}[Properties of the Poisson bracket]\label{classical_mechanics:prop:properties_Poisson_bracket}
	If $\Phi_t$ is the flow associated to $H \in \Cont^{\infty}(\R^{2n})$, then for any $f , g , h \in \Cont^{\infty}(\R^{2n})$, the following statements hold true: 
	\begin{enumerate}[(i)]
		\item $\{ \cdot \, , \cdot \} : \Cont^{\infty}(\R^{2n}) \times \Cont^{\infty}(\R^{2n}) \longrightarrow \Cont^{\infty}(\R^{2n})$
		\item $\{ f , g \} = - \{ g , f \}$ (antisymmetry)
		\item $\{ f , g \} \circ \Phi_t = \bigl \{ f \circ \Phi_t , g \circ \Phi_t \bigr \}$ 
		\item $\bigl \{ f , \{ g , h \} \bigr \} + \bigl \{ h , \{ f , g \} \bigr \} + \bigl \{ g , \{ h , f \} \bigr \} = 0$ (Jacobi identity)
		\item $\{ f g , h \} = f \, \{ g , h \} + g \, \{ f , h \}$ (derivation property)
	\end{enumerate}
\end{proposition}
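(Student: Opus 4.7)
Properties (i), (ii), and (v) fall out immediately from the definition \eqref{classical_mechanics:eqn:Poisson_bracket}. For (i), partial derivatives of $\Cont^\infty$ functions are again $\Cont^\infty$, and finite sums and products of $\Cont^\infty$ functions are $\Cont^\infty$, so the bracket lands in $\Cont^\infty(\R^{2n})$. Property (ii) is just swapping the two terms in each summand. Property (v) is the Leibniz rule for $\partial_{p_j}$ and $\partial_{q_j}$ applied inside \eqref{classical_mechanics:eqn:Poisson_bracket} and regrouping. I would dispatch these in a line or two each.

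For (iv), my plan is a direct (but careful) computation. Expanding $\{f,\{g,h\}\}$ with the definition produces eighteen terms in which one of the factors is a second derivative of $g$ or $h$, and the other two are first derivatives. The key observation is that each such second-derivative term appears exactly twice across the three cyclic permutations, once with each sign, so everything cancels pairwise. The main bookkeeping trick is to note that the second derivatives of $g$, for example, only appear in $\{f,\{g,h\}\}$ and $\{h,\{f,g\}\}$, and by the symmetry $\partial_{p_j}\partial_{q_k} = \partial_{q_k}\partial_{p_j}$ these contributions are negatives of each other. This is the most tedious step, but it is mechanical once one writes things out systematically.

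Property (iii) is the genuinely non-trivial item, and my plan is to use Jacobi together with Proposition~\ref{classical_mechanics:prop:equations_of_motion_Poisson_bracket} to reduce it to an ODE uniqueness statement. Set
\begin{align*}
	G(t) := \{f,g\} \circ \Phi_t - \bigl\{ f \circ \Phi_t , g \circ \Phi_t \bigr\} .
\end{align*}
At $t = 0$ we have $\Phi_0 = \id$, so $G(0) = 0$. Applying \eqref{classical_mechanics:eqn:eom_observables} to the observable $\{f,g\}$ gives
\begin{align*}
	\tfrac{\dd}{\dd t} \bigl( \{f,g\} \circ \Phi_t \bigr) = \bigl\{ H , \{f,g\} \circ \Phi_t \bigr\} .
\end{align*}
For the second summand in $G(t)$, bilinearity of the bracket plus \eqref{classical_mechanics:eqn:eom_observables} applied separately to $f$ and $g$ yields
\begin{align*}
	\tfrac{\dd}{\dd t} \bigl\{ f \circ \Phi_t , g \circ \Phi_t \bigr\}
	= \bigl\{ \{H , f \circ \Phi_t\} , g \circ \Phi_t \bigr\} + \bigl\{ f \circ \Phi_t , \{H , g \circ \Phi_t\} \bigr\} ,
\end{align*}
and the Jacobi identity (iv), applied to $H$, $f\circ\Phi_t$, $g\circ\Phi_t$ together with antisymmetry (ii), rewrites the right-hand side as $\{H, \{f\circ\Phi_t, g\circ\Phi_t\}\}$. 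Hence $G$ satisfies the linear equation $\dot G(t) = \{H, G(t)\}$ with $G(0) = 0$, and Remark~\ref{classical_mechanics:remark:uniqueness_solution_eom_observables} (uniqueness of solutions to the observable equation of motion) forces $G \equiv 0$.

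The main obstacle is really (iv): once Jacobi is in hand, (iii) becomes a short ODE argument, and (i), (ii), (v) are bookkeeping. So I would invest most of the writing in laying out the cancellations in the Jacobi computation carefully, perhaps by indexing the terms by which function is twice-differentiated, to make the pairwise cancellation transparent.
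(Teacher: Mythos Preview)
Your proposal is correct, but you reverse the logical dependency between (iii) and (iv) compared to the paper. The paper treats (iii) as a deep fact imported from Marsden--Ratiu and then \emph{derives} the Jacobi identity (iv) from it: differentiating the identity $\{f,g\}\circ\Phi_t = \{f\circ\Phi_t,g\circ\Phi_t\}$ in $t$ and evaluating at $t=0$ yields exactly the cyclic sum. You instead prove (iv) by the direct term-by-term cancellation and then use Jacobi plus the uniqueness statement in Remark~\ref{classical_mechanics:remark:uniqueness_solution_eom_observables} to get (iii) via the ODE argument for $G(t)$. The paper in fact sketches precisely your argument for (iii) in the paragraph immediately \emph{after} the proof, flagging it as an alternative contingent on that uniqueness assumption.

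What each route buys: your way makes (iv) entirely self-contained at the cost of the bookkeeping computation, whereas the paper's way trades that computation for a clean one-line differentiation argument but must import (iii) wholesale. Both ultimately lean on the same external input (Marsden--Ratiu, Proposition~5.4.2), just at different points: the paper for (iii) directly, you for the uniqueness of solutions to $\dot G = \{H,G\}$ with $G(0)=0$. One small point worth spelling out in your write-up of (iii): the step $\tfrac{\dd}{\dd t}\{f\circ\Phi_t,g\circ\Phi_t\} = \{\tfrac{\dd}{\dd t}(f\circ\Phi_t),g\circ\Phi_t\} + \{f\circ\Phi_t,\tfrac{\dd}{\dd t}(g\circ\Phi_t)\}$ uses not just bilinearity but also the interchange of $\partial_t$ with $\partial_{q_j},\partial_{p_j}$, which is justified by the smoothness of the flow (Theorem~\ref{odes:thm:smoothness_flow}).
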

\begin{proof}
	\begin{enumerate}[(i)]
		\item $\{ f , g \}$ consists of products of derivatives of $f$ and $g$, and thus their Poisson bracket is in the class $\Cont^{\infty}(\R^{2n})$. 
		\item The antisymmetry is obvious. 
		\item The fact that the time evolution and the Poisson bracket commute is a very deep result of Hamiltonian mechancs (see \eg \cite[Proposition~5.4.2]{Marsden_Ratiu:intro_mechanics_symmetry:1999}), but that goes far beyond our current capabilities. 
		\item This follows from either a straight-forward (and boring) calculation or one can use (iii): we compute the derivative of this equation and obtain 
		\begin{align*}
			\frac{\dd}{\dd t} &\{ f , g \} \circ \Phi_t - \frac{\dd}{\dd t} \bigl \{ f \circ \Phi_t , g \circ \Phi_t \bigr \} = 
			\\
			&= \bigl \{ H , \{ f , g \} \circ \Phi_t \bigr \} - \bigl \{ \{ H , f \circ \Phi_t \} , g \circ \Phi_t \bigr \} - \bigl \{ f \circ \Phi_t , \{ H , g \circ \Phi_t \} \bigr \}
			\\
			&= \bigl \{ H , \{ f , g \} \circ \Phi_t \bigr \} + \bigl \{ g \circ \Phi_t , \{ H , f \circ \Phi_t \} \bigr \} + \bigl \{ f \circ \Phi_t , \{ g \circ \Phi_t , H \} \bigr \}
			. 
		\end{align*}
		Setting $t = 0$ yields the Jacobi identity. 
		\item The derivation property follows directly from the product rule for partial derivatives and the definition of $\{ \cdot \, , \cdot \}$. 
	\end{enumerate}
\end{proof}
Even though we cannot prove (iii) with our current means, under the \emph{assumption} that the solution to equation~\eqref{classical_mechanics:eqn:eom_observables} for $f = 0$ is unique and given by $f(t) = 0$, we can deduce (iii) using the Jacobi identity (iv): $\{ f , g \} \circ \Phi_t = \bigl \{ f \circ \Phi_t , g \circ \Phi_t \bigr \}$ holds if this equality is true for $t = 0$ (which follows directly from $\Phi_0 = \id_{\R^{2n}}$) and if the time derivative of this equality is satisfied. Hence, we compare 
\begin{align*}
	\frac{\dd}{\dd t} \{ f , g \} \circ \Phi_t &= \bigl \{ H , \{ f , g \} \circ \Phi_t \bigr \} 
\end{align*}
which holds by Proposition~\ref{classical_mechanics:prop:equations_of_motion_Poisson_bracket} to 
\begin{align*}
	\frac{\dd}{\dd t} \bigl \{ f \circ \Phi_t , g \circ \Phi_t \bigr \} &= \bigl \{ \{ H , f \circ \Phi_t \} , g \circ \Phi_t \bigr \} + \bigl \{ f \circ \Phi_t , \{ H , g \circ \Phi_t \} \bigr \} 
	\\
	&= - \bigl \{ f \circ \Phi_t , \{ g \circ \Phi_t , H \} \bigr \} - \bigl \{ g \circ \Phi_t , \{ H , f \circ \Phi_t \} \bigr \}
	\\
	&= + \bigl \{ H , \{ f \circ \Phi_t , g \circ \Phi_t \} \bigr \} 
	. 
\end{align*}
Hence, the difference $\Delta(t) := \{ f , g \} \circ \Phi_t - \{ f \circ \Phi_t , g \circ \Phi_t \}$ satisfies equation~\eqref{classical_mechanics:eqn:eom_observables}, 
\begin{align*}
	\frac{\dd}{\dd t} \Delta(t) &= \bigl \{ H , \Delta(t) \bigr \} 
	, 
\end{align*}
with initial condition $\Delta(0) = 0$. We have assumed this equation has the unique solution $\Delta(t) = 0$ which yields (iii). 
\medskip

\noindent
Proposition~\ref{classical_mechanics:prop:properties_Poisson_bracket}~(iii) simplifies \emph{computing} the solution $f(t)$ and finding constants of motion: 
\begin{corollary}
	\begin{enumerate}[(i)]
		\item Equation~\eqref{classical_mechanics:eqn:eom_observables} is equivalent to 
		\begin{align*}
			\frac{\dd}{\dd t} f \circ \Phi_t &= \bigl \{ H , f \bigr \} \circ \Phi_t 
			, 
			&&
			f \circ \Phi_0 = f 
			. 
		\end{align*}
		\item $f$ is a constant of motion if and only if 
		\begin{align*}
			\{ H , f \} = 0 
			. 
		\end{align*}
	\end{enumerate}
\end{corollary}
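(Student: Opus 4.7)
The plan is to derive both items directly from Proposition~\ref{classical_mechanics:prop:equations_of_motion_Poisson_bracket} together with properties (ii) and (iii) of the Poisson bracket listed in Proposition~\ref{classical_mechanics:prop:properties_Poisson_bracket}. The single decisive observation is that the hamiltonian is itself a constant of motion, \ie $H \circ \Phi_t = H$, which already appeared at the end of the proof of Proposition~\ref{classical_mechanics:prop:equations_of_motion_Poisson_bracket}.

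For part~(i), I would start from equation~\eqref{classical_mechanics:eqn:eom_observables}, write $f(t) = f \circ \Phi_t$ explicitly, and rewrite the right-hand side using property~(iii) of the Poisson bracket:
\begin{align*}
	\bigl \{ H , f \circ \Phi_t \bigr \} = \bigl \{ H \circ \Phi_t , f \circ \Phi_t \bigr \} = \bigl \{ H , f \bigr \} \circ \Phi_t,
\end{align*}
where the first equality uses conservation of energy and the second is property~(iii). The initial condition $f \circ \Phi_0 = f$ is immediate from $\Phi_0 = \id_{\Pspace}$. The converse direction is just reading this chain of equalities backwards, so equivalence is straightforward.

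For part~(ii), the definition of a constant of motion (Definition~\ref{classical:defn:conserved_quantity}) is $f(t) = f$ for all $t$, equivalently $\frac{\dd}{\dd t} f(t) = 0$ everywhere on the interval where the flow is defined. If $\{ H , f \} = 0$, then by part~(i) we immediately get $\frac{\dd}{\dd t} (f \circ \Phi_t) = \{ H , f \} \circ \Phi_t = 0$, so $f \circ \Phi_t \equiv f \circ \Phi_0 = f$, showing $f$ is conserved. Conversely, if $f$ is a constant of motion then $\frac{\dd}{\dd t} f(t) \big\vert_{t = 0} = 0$, and equation~\eqref{classical_mechanics:eqn:eom_observables} evaluated at $t = 0$ gives $\{ H , f \} = 0$.

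The only subtlety worth flagging is that the argument relies on property~(iii) of the Poisson bracket, whose proof in the text is only sketched (and in fact reduced to the uniqueness of solutions of \eqref{classical_mechanics:eqn:eom_observables}); aside from that one ingredient the corollary is essentially bookkeeping. So there is no real obstacle here, only the need to record carefully that the two places where the flow $\Phi_t$ is inserted are interchangeable precisely because $H$ is $\Phi_t$-invariant.
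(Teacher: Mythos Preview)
Your proof is correct and follows essentially the same route as the paper: both use conservation of energy $H \circ \Phi_t = H$ together with Proposition~\ref{classical_mechanics:prop:properties_Poisson_bracket}~(iii) to rewrite $\{H, f \circ \Phi_t\} = \{H \circ \Phi_t, f \circ \Phi_t\} = \{H, f\} \circ \Phi_t$, and then deduce (ii) from (i) and the definition of a constant of motion. Your version simply spells out both implications in (ii) more explicitly than the paper does.
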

\begin{proof}
	\begin{enumerate}[(i)]
		\item Since $H = H(t)$ is a constant of motion, the right-hand side of \eqref{classical_mechanics:eqn:eom_observables} can be written as 
		\begin{align*}
			\bigl \{ H , f \circ \Phi_t \bigr \} &= \bigl \{ H \circ \Phi_t , f \circ \Phi_t \bigr \} 
			\\
			&= \bigl \{ H , f \bigr \} \circ \Phi_t
		\end{align*}
		using Proposition~\ref{classical_mechanics:prop:properties_Poisson_bracket}~(iii). 
		\item This follows directly from (i) and the definition of constant of motion. 
	\end{enumerate}
\end{proof}
%
% paragraph Conserved quantities (end)
Moreover, it turn out to be \marginpar{2013.10.01}
\begin{definition}[Poisson algebra]
	Let $\mathcal{P} \subseteq \Cont^{\infty}(\R^{2n})$ be a subalgebra of the smooth functions (\ie $\mathcal{P}$ is closed under taking linear combinations and products). Moreover, assume that the Poisson bracket has the derivation property, $\{ f g , h \} = f \, \{ g , h \} +  g \, \{ f , h \}$, and 
	\begin{align*}
		\{ \cdot \, , \cdot \} : \mathcal{P} \times \mathcal{P} \longrightarrow \mathcal{P}
	\end{align*}
	maps $\mathcal{P} \times \mathcal{P}$ into $\mathcal{P}$. Then $\bigl ( \mathcal{P} , \{ \cdot \, , \cdot \} \bigr )$ is a Poisson algebra. 
\end{definition}
%
% section Equation of motion for observables and Poisson algebras (end)

\section{Equivalence of Schrödinger and Heisenberg picture} % (fold)
\label{classical_mechanics:equivalence_S_H}
We are still missing the proof that Heisenberg and Schrödinger picture equally describe the physics. The main observation is that taking expectation values in either lead to the same result, \ie that for any observable $f$ and state $\mu$ 
\begin{align*}
	\mathbb{E}_{\mu(t)}(f) &= \int_{\R^{2n}} \dd q \, \dd p \, \bigl ( \mu(t) \bigr )(q,p) \, f(q,p) 
	\\
	&
	= \int_{\R^{2n}} \dd q \, \dd p \, \mu \circ \Phi_{-t}(q,p) \, f(q,p) 
	\\
	&= \int_{\Phi_t(\R^{2n})} \dd q \, \dd p \, \mu(q,p) \, f \circ \Phi_t(q,p) \, \mathrm{det} \bigl ( D \Phi_t(q,p) \bigr ) 
	\\
	&
	\overset{\ast}{=} \int_{\R^{2n}} \dd q \, \dd p \, \mu(q,p) \, \bigl ( f(t) \bigr )(q,p)
	\\
	&
	= \mathbb{E}_{\mu} \bigl ( f(t) \bigr ) 
\end{align*}
holds. Note that the crucial ingredient in the step marked with $\ast$ is again the Liouville theorem~\ref{classical_mechanics:thm:Liouville}. Moreover, we see why states need to be evolved backwards in time. 
% section Equivalence of Schrödinger and Heisenberg picture (end)

\section{The inclusion of magnetic fields} % (fold)
\label{classical_mechanics:magnetic}
Magnetic fields can only be defined in dimension $2$ or higher; usually, one considers the case $n = 3$. There are two main ways to include magnetic fields into classical mechanics, \emph{minimal substitution}, and a more geometric way where the magnetic field enters into the geometric structure of phase space. Both descriptions are equivalent, though. 

The two-dimensional case can be obtained by restricting the three-dimensional case to the $q_1 \, q_2$-plane. 

The starting point from a physical point of view is the \emph{Lorentz force law}: an electric field $\mathbf{E}$ and a magnetic field $\mathbf{B}(q) = \bigl ( \mathbf{B}_1(q) , \mathbf{B}_2(q) , \mathbf{B}_3(q) \bigr )$ exert a force on a particle with charge $e$ at $q$ moving with velocity $v$ that is given by 
\begin{align}
	F_{\mathrm{L}} = e \, \mathbf{E} + v \times e \, \mathbf{B} 
	. 
	\label{classical_mechanics:eqn:Lorentz_force_law}
\end{align}
For simplicity, from now on, we set the charge $e = 1$. 
\medskip

\noindent
The goal of this section is to include magnetic fields in the framework of Hamiltonian mechanics. Electric fields $\mathbf{E} = - \nabla_q V$ appear as potentials in the Hamilton function $H(q,p) = \frac{1}{2m} p^2 + V(q)$, but magnetic fields are not gradients of a potential. Instead, one can express the magnetic field 
\begin{align*}
	\mathbf{B} = \nabla_q \times \mathbf{A} 
\end{align*}
as the curl of a vector potential $\mathbf{A} = (\mathbf{A}_1 , \mathbf{A}_2 , \mathbf{A}_3)$.

\subsection{Minimal substitution} % (fold)
\label{classical_mechanics:magnetic:minimal_substitution}
The standard way to add the interaction to a magnetic field is to consider the equations of motion for position $q$ and \emph{kinetic momentum} 
\begin{align}
	p^A(p,q) &= p - \mathbf{A}(q) 
	. 
	\label{classical_mechanics:eqn:kinetic_momentum}
\end{align}
Moreover, $p$ is replaced by $p^A = p - \mathbf{A}$ in the Hamiltonian, 
\begin{align*}
	H^A(q,p) := H \bigl ( q , p - \mathbf{A}(q) \bigr ) 
	. 
\end{align*}
One then proposes the usual equations of motion: 
\begin{align*}
	\left (
	\begin{matrix}
		0 & - \id_{\R^3} \\
		+ \id_{\R^3} & 0 \\
	\end{matrix}
	\right ) \left (
	\begin{matrix}
		\dot{q} \\
		\dot{p} \\
	\end{matrix}
	\right ) &= \left (
	\begin{matrix}
		\nabla_q H^A \\
		\nabla_p H^A \\
	\end{matrix}
	\right )
\end{align*}
Taking the time-derivative of kinetic momentum yields 
\begin{align}
	\frac{\dd}{\dd t} p^A_j &= \dot{p}_j - \sum_{k = 1}^3 \partial_{q_k} \mathbf{A}(q) \, \dot{q}_k 
	\notag \\
	&= - \tfrac{\dd}{\dd q_j} H^A - \sum_{k = 1}^3 \partial_{q_k} \mathbf{A}(q) \, \partial_{p_k} H^A 
	\notag \\
	&= - \partial_{q_j} H^A - \sum_{k = 1}^3 \partial_{p_k} H^A \, \partial_{q_j} \bigl ( p_k - \mathbf{A}_k \bigr ) - \sum_{k = 1}^3 \partial_{q_k} \mathbf{A}_j \, \partial_{p_k} H^A 
	\notag \\
	&= - \partial_{q_j} H^A + \sum_{k = 1}^3 \partial_{p_k} H^A \, \bigl ( \partial_{q_j} \mathbf{A}_k - \partial_{q_k} \mathbf{A}_j \bigr )
	. 
	\label{classical_mechanics:eqn:eom_kinetic_momentum}
\end{align}
If we set $B_{jk} := \partial_{q_j} \mathbf{A}_k - \partial_{q_k} \mathbf{A}_j$, then the magnetic field matrix 
\begin{align*}
	B := \bigl ( B_{jk} \bigr )_{1 \leq j , k \leq 3} = \left (
	\begin{matrix}
		0 & +\mathbf{B}_3 & -\mathbf{B}_2 \\
		-\mathbf{B}_3 & 0 & +\mathbf{B}_1 \\
		+\mathbf{B}_2 & -\mathbf{B}_1 & 0 \\
	\end{matrix}
	\right )
	, 
\end{align*}
and use 
\begin{align*}
	B p &= \left (
	\begin{matrix}
		0 & +\mathbf{B}_3 & -\mathbf{B}_2 \\
		-\mathbf{B}_3 & 0 & +\mathbf{B}_1 \\
		+\mathbf{B}_2 & -\mathbf{B}_1 & 0 \\
	\end{matrix}
	\right ) \left (
	\begin{matrix}
		p_1 \\
		p_2 \\
		p_3 \\
	\end{matrix}
	\right ) 
	\\
	&= \left (
	\begin{matrix}
		\mathbf{B}_3 \, p_2 - \mathbf{B}_2 \, p_3 \\
		- \mathbf{B}_3 \, p_1 + \mathbf{B}_1 \, p_3 \\
		\mathbf{B}_2 \, p_1 - \mathbf{B}_1 \, p_2 \\
	\end{matrix}
	\right )
	= p \times \mathbf{B} 
	, 
\end{align*}
we can simplify the equation for $p^A$ to 
\begin{align*}
	\frac{\dd}{\dd t} p^A &= - (\nabla_q H)^A + \nabla_p H^A \times \mathbf{B} 
	= - \nabla_q H + \dot{q} \times \mathbf{B} 
	. 
\end{align*}
For a non-relativistic particle with Hamiltonian $H(q,p) = \frac{1}{2m} p^2 + V(q)$, these equation reduce to the Lorentz force law~\eqref{classical_mechanics:eqn:Lorentz_force_law}: 
\begin{align*}
	\dot{p}^A &= - \nabla_q V + \frac{p^A}{m} \times \mathbf{B} 
	\\
	&= - (\nabla_q H)^A + B \, \nabla_p H^A 
	= - (\nabla_q H)^A + B \, \dot{q} 
\end{align*}
\paragraph{Changes of gauge} % (fold)
A choice of vector potential $\mathbf{A}$ is also called a \emph{choice of gauge}. If $\chi : \R^3 \longrightarrow \R$ is a scalar function then $\mathbf{A}$ and 
\begin{align*}
	\mathbf{A}' &= \mathbf{A} + \nabla_q \chi 
\end{align*}
are both vector potentials associated to $\mathbf{B}$, because $\nabla_q \times \nabla_q \chi = 0$, 
\begin{align*}
	\nabla_q \times \mathbf{A}' &= \nabla_q \times \bigl ( \mathbf{A} + \nabla_q \chi \bigr ) 
	\\
	&= \nabla_q \times \mathbf{A} + \nabla_q \times \nabla_q \chi 
	\\
	&= \nabla_q \times \mathbf{A} 
	= \mathbf{B} 
	. 
\end{align*}
Thus, we can either choose the gauge $\mathbf{A}$ or $\mathbf{A}'$, either one describes the physical situation: the equation of motion for $p^A$ only involves $\mathbf{B}$ rather than $\mathbf{A}$. Hence, gauges are usually chosen for convenience (\eg a particular “symmetry” or being divergence free, $\nabla_q \cdot \mathbf{A} = 0$). 
% paragraph Changes of gauge (end)
% subsection Minimal substitution (end)

\subsection{Magnetic symplectic form} % (fold)
\label{classical_mechanics:magnetic:magnetic_symplectic_form}
A second way to include the interaction to a magnetic field is to integrate it into the symplectic form: 
\begin{align}
	\left (
	\begin{matrix}
		B & - \id_{\R^3} \\
		+ \id_{\R^3} & 0 \\
	\end{matrix}
	\right ) \left (
	\begin{matrix}
		\dot{q} \\
		\dot{p} \\
	\end{matrix}
	\right ) &= \left (
	\begin{matrix}
		\nabla_q H \\
		\nabla_p H \\
	\end{matrix}
	\right )
	\label{classical_mechanics:eqn:magnetic_eom}
\end{align}
Note that neither the Hamiltonian nor the momentum are altered. Instead, in this variant, $q$ is position and $p$ is \emph{kinetic} momentum. Solving the above equation for $\dot{q}$ and $\dot{p}$ yields \eqref{classical_mechanics:eqn:eom_kinetic_momentum}, 
\begin{align*}
	\left (
	\begin{matrix}
		B & - \id_{\R^3} \\
		+ \id_{\R^3} & 0 \\
	\end{matrix}
	\right ) \left (
	\begin{matrix}
		\dot{q} \\
		\dot{p} \\
	\end{matrix}
	\right ) &= \left (
	\begin{matrix}
		B \, \dot{q} - \dot{p} \\
		\dot{q} \\
	\end{matrix}
	\right ) = \left (
	\begin{matrix}
		\nabla_q H \\
		\nabla_p H \\
	\end{matrix}
	\right ) 
	\\
	& \Leftrightarrow 
	\\
	\left (
	\begin{matrix}
		\dot{q} \\
		\dot{p} \\
	\end{matrix}
	\right ) &= \left (
	\begin{matrix}
		+ \nabla_p H \\
		- \nabla_q H + B \, \dot{q} \\
	\end{matrix}
	\right ) 
	= \left (
	\begin{matrix}
		+ \nabla_p H \\
		- \nabla_q H + \dot{q} \times \mathbf{B} \\
	\end{matrix}
	\right )
\end{align*}
In other words, we have again recovered the Lorentz force law~\eqref{classical_mechanics:eqn:Lorentz_force_law}. 

From a mathematical perspective, the advantage of this formulation is that magnetic fields are always “nicer” functions than associated vector potentials. 
% subsection Magnetic symplectic form (end)
% section The inclusion of magnetic fields (end)

\section{Stability analysis} % (fold)
\label{classical_mechanics:stability}
It turns out that hamiltonian systems are \emph{never} stable: we start by linearizing the hamiltonian vector field, 
\begin{align*}
	X_H(q,p) &= \left (
	\begin{matrix}
		+ \nabla_p H(q,p) \\
		- \nabla_q H(q,p) \\
	\end{matrix}
	\right )
	\; \Longrightarrow \; 
	D X_H(q,p) = \left (
	\begin{matrix}
		\nabla_q^T \nabla_p H(q,p) & \nabla_p^T \nabla_p H(q,p) \\
		- \nabla_q^T \nabla_q H(q,p) & - \nabla_p^T \nabla_q H(q,p) \\
	\end{matrix}
	\right ) 
	. 
\end{align*}
Thus, the linearized vector field is always trace-less, 
\begin{align*}
	\trace D X_H(q,p) &= \sum_{j = 1}^n \bigl ( \partial_{q_j} \partial_{p_j} H(q,p) - \partial_{p_j} \partial_{q_j} H(q,p) \bigr ) 
	= 0 
	, 
\end{align*}
and using that the sum of eigenvalues $\{ \lambda_j \}_{j = 1}^N$ of $D X_H(q,p)$ equals the trace of $D X_H(q,p)$, we know that the $\lambda_j$ (repeated according to their multiplicity) sum to $0$, 
\begin{align*}
	\trace D X_H(q,p) = 0 = \sum_{j = 1}^{2n} \lambda_j 
	. 
\end{align*}
Moreover, seeing as the entries of $D X_H(q,p)$ are real, the eigenvalues come in complex conjugate pairs $\{ \lambda_j , \overline{\lambda_j} \}$. Combined with the fact that $D X_H(q,p)$ is $2n \times 2n$-dimensional, we deduce that if $\lambda_j$ is an eigenvalue of $D X_H(q,p)$, then so is $- \lambda_j$. This suggests that hamiltonian systems tend to be either elliptic or hyperbolic.

\paragraph{Electric fields and other gradient fields} % (fold)
To better understand the influence interactions to electromagnetic fields (and other forces which can be expressed as the gradient of a potential) have to the dynamics of a particle, let us start with considering the purely electric case for a non-relativistic particle. Here, the interaction is given by the standard Hamiltonian 
\begin{align*}
	H(q,p) = \tfrac{1}{2m} p^2 + V(q) 
	. 
\end{align*}
which gives the interaction to the electric field $\mathbf{E} = - \nabla_q V$. Here, the vector field 
\begin{align*}
	X_H(q,p) = \left (
	\begin{matrix}
		\frac{p}{m} \\
		- \nabla_q V(q) \\
	\end{matrix}
	\right )
\end{align*}
vanishes at $(q_0,0)$ where $q_0$ is a critical point of $V$, \ie $\nabla_q V(q_0) = 0$. Its linearization 
\begin{align*}
	D X_H(q,p) &= \left (
	\begin{matrix}
		0 & \frac{1}{m} \, \id_{\R^n} \\
		- \mathrm{Hess} V(q) & 0 \\
	\end{matrix}
	\right )
\end{align*}
involves the Hessian
\begin{align*}
	\mathrm{Hess} V = \left (
	\begin{matrix}
		\partial_{q_1}^2 V & \cdots & \partial_{q_1} \partial_{q_n} V \\
		\vdots &  & \vdots \\
		\partial_{q_1} \partial_{q_n} V & \cdots & \partial_{q_n}^2 V \\
	\end{matrix}
	\right )
\end{align*}
of the potential. The block structure allows us to simplify the characteristic polynomial using 
\begin{align*}
	\mathrm{det} \left (
	\begin{matrix}
		A & B \\
		C & D \\
	\end{matrix}
	\right ) &= \mathrm{det} A \; \mathrm{det} \, \bigl ( D - B \, D^{-1} \, C \bigr ) 
	. 
\end{align*}
Then the zeros of the characteristic polynomial 
\begin{align*}
	\chi_{q_0}(\lambda) &= \mathrm{det} \, \bigl ( \lambda \, \id_{\R^{2n}} - D X_H(q_0,0) \bigr ) 
	\\
	&= \mathrm{det} \left (
	\begin{matrix}
		\lambda \, \id_{\R^n} & - \frac{1}{m} \, \id_{\R^n} \\
		+ \mathrm{Hess} \, V(q_0) & \lambda \, \id_{\R^n} \\
	\end{matrix}
	\right )
	\\
	&
	= \mathrm{det} \bigl ( \lambda \, \id_{\R^n} \bigr ) \; \mathrm{det} \, \bigl ( \lambda \, \id_{\R^n} + \lambda^{-1} \, m^{-1} \, \mathrm{Hess} \, V(q_0) \bigr )
	\\
	&= \mathrm{det} \, \bigl ( \lambda^2 \, \id_{\R^n} + m^{-1} \, \mathrm{Hess} \, V(q_0) \bigr )
\end{align*}
are the square roots of the eigenvalues of $- m^{-1} \, \mathrm{Hess} \, V(q_0)$. In case $q_0$ is a local \emph{maximum}, for instance, then 
\begin{align}
	\mathrm{Hess} \, V(q_0) > 0 
	\; \; \Leftrightarrow \; \; 
	x \cdot \mathrm{Hess} \, V(q_0) x > 0 \quad \forall x \in \R^3 
\end{align}
holds in the sense of matrices; this equation is equivalent to requiring all eigenvalues $\omega_j$ of the Hessian to be positive. Hence, the eigenvalues of the linearized vector field are 
\begin{align*}
	\lambda_{\pm j} &= \pm \sqrt{- \tfrac{\omega_j}{m}} 
	= \pm \ii \, \sqrt{\tfrac{\omega_j}{m}} 
	, 
\end{align*}
which means $(q_0,0)$ is a marginally stable, elliptic fixed point. 
% paragraph Stability of electric fields (end)

\paragraph{Magnetic fields} % (fold)
In case \emph{only} a magnetic field is present, \ie we consider the Hamilton function $H(q,p) = \frac{1}{2m} p^2$ and the \emph{magnetic} equations of motion~\eqref{classical_mechanics:eqn:magnetic_eom}. The corresponding vector field 
\begin{align*}
	X_H(q,p) &= \frac{1}{m} \, \left (
	\begin{matrix}
		p \\
		B(q) \, p \\
	\end{matrix}
	\right )
\end{align*}
linearizes to 
\begin{align*}
	D X_H(q,p) &= \frac{1}{m} \, \left (
	\begin{matrix}
		0 & \id_{\R^3} \\
		B'(q) \, p & B(q) \\
	\end{matrix}
	\right )
\end{align*}
where 
\begin{align*}
	B' \, p := \nabla_q^{\mathrm{T}} \bigl ( B \, p \bigr )
	. 
\end{align*}
Any of the fixed points are of the form $(q_0,0)$, so that we need to find the eigenvalues of 
\begin{align*}
	D X_H(q_0,0) &= \frac{1}{m} \, \left (
	\begin{matrix}
		0 & \id_{\R^3} \\
		0 & B(q_0) \\
	\end{matrix}
	\right ) 
	. 
\end{align*}
Using the block form of the matrix, we see right away that three eigenvalues are $0$ while the others are, up to a factor of $\nicefrac{1}{m}$, the eigenvalues of $B(q)$: 
\begin{align*}
	\chi_{q_0}(\lambda) :& \negmedspace= \mathrm{det} \, \bigl ( \lambda \, \id_{\R^6} - D X_H(q_0,0) \bigr ) 
	\\
	&= \mathrm{det} \, \left (
	\begin{matrix}
		\lambda \, \id_{\R^3} & - \tfrac{1}{m} \, \id_{\R^3} \\
		0 & \lambda \, \id_{\R^3} - \tfrac{1}{m} \, B(q_0) \\
	\end{matrix}
	\right )
	\\
	&= \lambda^3 \; \mathrm{det} \, \bigl ( \lambda \, \id_{\R^3} - \tfrac{1}{m} \, B(q_0) \bigr ) 
\end{align*}
The eigenvalues of $B(q_0)$ are $0$ and $\pm \ii \, \babs{\mathbf{B}(q_0)}$. This can be seen from $B = \overline{B} = - B^T$, $\mathrm{tr} B = 0$ and $\mathrm{det} \, B = 0$ which implies: (i) $\lambda_1 = 0$, (ii) $\lambda_2 = \overline{\lambda_3}$ and $\lambda_2 + \lambda_3 = 0$. \marginpar{2013.10.03}

Hence, magnetic field have \emph{metastable, elliptic} fixed points which are all of the form $(q_0,0)$. That means, we are confronted with two problems: first of all, $4$ of the eigenvalues are $0$, so there are many metastable directions. The second one is much more serious: linearization is a \emph{local} technique, and studying the stability via linearization hinges on the fact that you can separate fixed points by open neighborhoods. But here, none of the fixed points can be isolated from the others (in the sense that there does not exist an open neighborhood which contains only a single fixed point). So one needs more care: for instance, it is crucial to look at how the direction of $\mathbf{B}$ changes, looking at the linearization of the vector field is insufficient. 
% paragraph Stability of magnetic fields (end)
% section Stability analysis (end)
% chapter Classical mechanics (end)
\chapter{Banach \& Hilbert spaces} % (fold)
\label{spaces}
This section intends to introduce some fundamental notions on \emph{Banach spaces}; those are vector spaces of functions where the notion of distance is compatible with the linear structure. In addition, \emph{Hilbert spaces} also allow one to introduce a notion of angle via a \emph{scalar product}. 

Those notions are crucial to understand PDEs and ODEs, because these are defined on a \emph{domain} (similar to domains of functions). For instance, one may ask: 
\begin{enumerate}[(i)]
	\item How does the existence of solutions depend on the domain, \eg by imposing different boundary conditions? 
	\item How well can I approximate a solution with a given set of base vectors? This is important for numerics, because one needs to approximate elements of infinite-dimensional spaces by finite linear combinations. 
\end{enumerate}

\section{Banach spaces} % (fold)
\label{spaces:Banach}
Many vector spaces $\mathcal{X}$ can be equipped with a \emph{norm} $\norm{\cdot}$, and if they are complete with respect to that norm, the pair $\bigl ( \mathcal{X} , \norm{\cdot} \bigr )$ is a Banach space.

\subsection{Abstract Banach spaces} % (fold)
\label{spaces:Banach:generic}
The abstract definition of Banach spaces is quite helpful when we construct Banach spaces from other Banach spaces (\eg Banach spaces of integrable, vector-valued functions).\footnote{We will only consider vector spaces over $\C$, although much of what we do works just fine if the field of scalars is $\R$.} 
\begin{definition}[Normed space]
	Let $\mathcal{X}$ be a vector space. A mapping $\norm{\cdot} : \mathcal{X} \longrightarrow [0,+\infty)$ with properties 
	\begin{enumerate}[(i)]
		\item $\norm{x} = 0$ if and only if $x = 0$, 
		\item $\norm{\alpha x} = \abs{\alpha} \, \norm{x}$, and 
		\item $\norm{x + y} \leq \norm{x} + \norm{y}$, 
	\end{enumerate}
	for all $x,y \in \mathcal{X}$, $\alpha \in \C$, is called norm. The pair $(\mathcal{X},\norm{\cdot})$ is then referred to as normed space. 
\end{definition}
A norm on $\mathcal{X}$ quite naturally induces a metric by setting 
\begin{align*}
	d(x,y) := \norm{x-y}
\end{align*}
for all $x,y \in \mathcal{X}$. Unless specifically mentioned otherwise, one always works with the metric induced by the norm. 
\begin{definition}[Banach space]
	A complete normed space is a Banach space. 
\end{definition}
\begin{example}
	The space $\mathcal{X} = \Cont([a,b],\C)$ from the previous list of examples has a norm, the sup norm 
	\begin{align*}
		\norm{f}_{\infty} = \sup_{x \in [a,b]} \abs{f(x)} 
		. 
	\end{align*}
	Since $\Cont([a,b],\C)$ is complete, it is a Banach space.
\end{example}
%
% subsection Generic Banach spaces (end)

\subsection{Prototypical Banach spaces: $L^p(\Omega)$ spaces} % (fold)
\label{spaces:Banach:prototypical}
%
% FIXME change text 
The prototypical examples of Banach spaces are the so-called \emph{$L^p$ spaces} or $p$-integrable spaces; a rigorous definition requires a bit more care, so we refer the interested reader to \cite{Lieb_Loss:analysis:2001}. 

When we say integrable, we mean integrable with respect to the Lebesgue measure \cite[p.~6~ff.]{Lieb_Loss:analysis:2001}. For any open or closed set $\Omega \subseteq \R^n$, the space of \emph{$p$-integrable functions} $\mathcal{L}^p(\Omega)$ is a $\C$-vector space, but 
\begin{align*}
	\norm{\varphi}_{L^p(\Omega)} := \left ( \int_{\Omega} \dd x \, \abs{\varphi(x)}^p \right )^{\nicefrac{1}{p}}
\end{align*}
is not a norm: there are functions $\varphi \neq 0$ for which $\norm{\varphi} = 0$. Instead, $\norm{\varphi} = 0$ only ensures 
\begin{align*}
	\varphi(x) = 0 \mbox{ almost everywhere (with respect to the Lebesgue measure $\dd x$).} 
\end{align*}
Almost everywhere is sometimes abbreviated with a.~e. and the terms “almost surely” and “for almost all $x \in \Omega$” can be used synonymously. If we introduce the equivalence relation 
\begin{align*}
	\varphi \sim \psi :\Leftrightarrow \norm{\varphi - \psi} = 0 
	, 
\end{align*}
then we can define the vector space $L^p(\Omega)$: 
\begin{definition}[$L^p(\Omega)$]
	Let $1 \leq p < \infty$. Then we define 
	\begin{align*}
		\mathcal{L}^p(\Omega) := \Bigl \{ f : \Omega \longrightarrow \C \; \big \vert \; \mbox{$f$ measurable, } \int_{\Omega} \dd x \, \abs{f(x)}^p < \infty \Bigr \} 
	\end{align*}
	as the vector space of functions whose $p$th power is integrable. Then $L^p(\Omega)$ is the vector space 
	\begin{align*}
		L^p(\Omega) := \mathcal{L}^p(\Omega) / \sim 
	\end{align*}
	consisting of equivalence classes of functions that agree almost everywhere. With the $p$ norm 
	\begin{align*}
		\norm{f}_p := \biggl ( \int_{\Omega} \dd x \, \abs{f(x)}^p \biggr )^{\nicefrac{1}{p}} 
	\end{align*}
	it forms a normed space. 
\end{definition}
In practice, one usually does not distinguish between equivalence classes of functions $[f]$ (which make up $L^p(\Omega)$) and functions $f$. This abuse of notation is pervasive in the literature and it is perfectly acceptable to write $f \in L^p(\Omega)$ even though strictly speaking, one should write $[f] \in L^p(\Omega)$. Only when necessary, one takes into account that $f = 0$ actually means $f(x) = 0$ for almost all $x \in \Omega$. 

In case $p = \infty$, we have to modify the definition a little bit. 
\begin{definition}[$L^{\infty}(\Omega)$]
	We define 
	\begin{align*}
		\mathcal{L}^{\infty}(\Omega) := \Bigl \{ f : \Omega \longrightarrow \C \; \big \vert \; \mbox{$f$ measurable, } \exists 0 < K < \infty : \abs{f(x)} \leq K \mbox{ almost everywhere} \Bigr \} 
	\end{align*}
	to be the space of functions that are bounded almost everywhere and 
	\begin{align*}
		\norm{f}_{\infty} := \mathrm{ess} \sup_{x \in \Omega} \babs{f(x)} 
		:= \inf \bigl \{ K \geq 0 \; \big \vert \abs{f(x)} \leq K \mbox{ for almost all $x \in \Omega$} \bigr \} 
		. 
	\end{align*}
	Then the space $L^{\infty}(\Omega) := \mathcal{L}^{\infty}(\Omega) / \sim$ is defined as the vector space of equivalence classes where two functions are identified if they agree almost everywhere.\marginpar{2013.10.08} 
\end{definition}
\begin{theorem}[Riesz-Fischer]
	For any $1 \leq p \leq \infty$, $L^p(\Omega)$ is complete with respect to the $\norm{\cdot}_p$ norm and thus, a Banach space. 
\end{theorem}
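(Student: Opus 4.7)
The plan is to handle the two cases $1 \le p < \infty$ and $p = \infty$ separately, since the underlying convergence mechanisms are quite different. In both cases, I would start with a Cauchy sequence $(f_n)$ in $L^p(\Omega)$ and aim to construct a candidate limit $f$ (as an equivalence class of measurable functions), then verify that $f \in L^p(\Omega)$ and that $\norm{f_n - f}_p \to 0$.

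For $1 \le p < \infty$, the heart of the argument is the classical \emph{fast-subsequence} trick. First, using the Cauchy property, I would extract a subsequence $(f_{n_k})$ with $\snorm{f_{n_{k+1}} - f_{n_k}}_p < 2^{-k}$. Next, I would define the telescoping majorant $g_K(x) := \sum_{k=1}^K \babs{f_{n_{k+1}}(x) - f_{n_k}(x)}$ and its pointwise limit $g(x) := \sum_{k=1}^{\infty} \babs{f_{n_{k+1}}(x) - f_{n_k}(x)} \in [0,\infty]$. By Minkowski's inequality (the triangle inequality for $\snorm{\cdot}_p$, which I am taking for granted) one has $\snorm{g_K}_p \le 1$ for all $K$, and the monotone convergence theorem then gives $\snorm{g}_p \le 1$. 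Hence $g(x) < \infty$ for almost every $x$, so the telescoping series $f_{n_1}(x) + \sum_{k=1}^{\infty} \bigl(f_{n_{k+1}}(x) - f_{n_k}(x)\bigr)$ converges absolutely a.e.\ to some measurable function which I call $f(x)$ (setting $f(x) := 0$ on the exceptional null set). Since the $K$-th partial sum equals $f_{n_{K+1}}(x)$, this gives $f_{n_k} \to f$ pointwise almost everywhere. Finally, to upgrade pointwise convergence to $L^p$-convergence and to show $f \in L^p(\Omega)$, I would fix $\eps > 0$ and use the Cauchy property to pick $N$ so large that $\snorm{f_n - f_m}_p < \eps$ for all $n,m \ge N$; applying Fatou's lemma to $\babs{f_n - f_{n_k}}^p$ as $k \to \infty$ then yields $\snorm{f_n - f}_p \le \liminf_{k \to \infty} \snorm{f_n - f_{n_k}}_p \le \eps$ for all $n \ge N$. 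Consequently $f_n - f \in L^p(\Omega)$, hence $f \in L^p(\Omega)$, and $f_n \to f$ in norm.

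For $p = \infty$, the argument is more direct and does not require subsequences. Given a Cauchy sequence $(f_n)$ in $L^{\infty}(\Omega)$, each individual equivalence class has a representative, and for each pair $(m,n)$ there is a null set $E_{m,n} \subseteq \Omega$ off of which $\babs{f_m(x) - f_n(x)} \le \snorm{f_m - f_n}_{\infty}$, and another null set $F_n$ off of which $\babs{f_n(x)} \le \snorm{f_n}_{\infty}$. The countable union $E := \bigcup_{m,n} E_{m,n} \cup \bigcup_n F_n$ is still a null set, and on $\Omega \setminus E$ the sequence $(f_n)$ is uniformly Cauchy in the ordinary sense. Hence it converges uniformly on $\Omega \setminus E$ to a bounded measurable function $f$; extending $f$ by $0$ on $E$ yields an element of $L^{\infty}(\Omega)$, and uniform convergence off a null set is precisely convergence in the essential supremum norm.

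The main obstacle, and the one that deserves the most care, is the $1 \le p < \infty$ case: one must justify passing to the limit inside the $L^p$-norm, which is exactly where Fatou's lemma (and, for the majorant step, monotone convergence) enters. A secondary technical point that I would be careful to state explicitly is the (implicit) use of Minkowski's inequality, since it has not been established in the excerpt and is what allows the telescoping series bound. Everything else, such as choosing representatives of equivalence classes and verifying that the limiting function is measurable, is routine bookkeeping once the a.e.-pointwise limit has been constructed.
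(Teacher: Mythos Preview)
Your argument is correct and is the standard textbook proof of Riesz--Fischer (essentially the one found in Lieb--Loss, which the paper cites as its reference for this section). Note, however, that the paper itself does not prove this theorem at all: it is stated without proof as a known result, so there is no ``paper's own proof'' to compare against. Your write-up would serve perfectly well as the missing proof.
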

\begin{definition}[Separable Banach space]
	A Banach space $\mathcal{X}$ is called separable if there exists a countable dense subset. 
\end{definition}
This condition is equivalent to asking that the space has a \emph{countable basis}. 
\begin{theorem}
	For any $1 \leq p < \infty$, the Banach space $L^p(\Omega)$ is separable. 
\end{theorem}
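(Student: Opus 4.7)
The plan is to exhibit an explicit countable dense subset of $L^p(\Omega)$. The natural candidate is the collection $\mathcal{D}$ of all finite linear combinations with coefficients in $\Q + \ii \Q$ of characteristic functions $\chi_{R \cap \Omega}$, where $R \subseteq \R^n$ ranges over closed boxes with rational vertices. Countability is immediate: the set of rational boxes is countable, $\Q + \ii \Q$ is countable, and finite-length formal sums over a countable alphabet form a countable set.

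The density argument I would organize as a three-step approximation of an arbitrary $f \in L^p(\Omega)$ by an element of $\mathcal{D}$ to within $3\eps$. Step~(a): approximate $f$ by a simple function $s$ with $\norm{f - s}_p < \eps$. This is the standard measure-theoretic fact — truncate $f$ to $f \cdot \chi_{\{\abs{f} \leq N\} \cap B_N(0)}$ and invoke dominated convergence to handle the tails, then approximate the truncation pointwise (and monotonically in absolute value) by simple functions $s = \sum_{k=1}^{N} c_k \, \chi_{E_k}$ with $\sabs{E_k} < \infty$, applying dominated convergence again. Step~(b): replace each $E_k$ by a finite disjoint union $R_k$ of rational boxes so that $\snorm{\chi_{E_k} - \chi_{R_k}}_p^p = \sabs{E_k \triangle R_k}$ is as small as desired. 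Step~(c): round the coefficients $c_k$ to elements $q_k \in \Q + \ii \Q$.

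Step~(b) is where the real content sits, and it is the step I expect to be the main obstacle. It relies on the outer regularity of Lebesgue measure: for any measurable $E_k$ with $\sabs{E_k} < \infty$ and any $\delta > 0$, one can find an open set $U_k \supseteq E_k$ with $\sabs{U_k \setminus E_k} < \delta$. Every open set in $\R^n$ is a countable disjoint union of half-open rational dyadic cubes, so truncating to a finite subunion $R_k$ produces $\sabs{E_k \triangle R_k} < 2\delta$. Summing over $k$ and converting via $\norm{\chi_{E_k} - \chi_{R_k}}_p = \sabs{E_k \triangle R_k}^{1/p}$ yields $\snorm{s - s'}_p < \eps$ for $s' := \sum_k c_k \chi_{R_k}$ and $\delta$ chosen small depending on $N$, $p$, and $\eps$.

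Step~(c) is straightforward: choosing $q_k$ with $\abs{c_k - q_k}$ small and using $\norm{\chi_{R_k}}_p = \sabs{R_k}^{1/p} < \infty$, we get $\snorm{s' - \sum_k q_k \chi_{R_k}}_p < \eps$. Combining the three steps gives $\norm{f - g}_p < 3\eps$ for some $g \in \mathcal{D}$, establishing density. The finiteness $p < \infty$ enters crucially in Step~(b) — we need $\sabs{E_k \triangle R_k}^{1/p} \to 0$ as the measure of the symmetric difference shrinks, which is exactly why the corresponding statement fails for $L^{\infty}(\Omega)$.
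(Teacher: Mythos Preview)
Your proposal is correct and follows essentially the same approach the paper sketches: the paper's own proof is simply a reference to \cite[Lemma~2.17]{Lieb_Loss:analysis:2001} together with the one-line idea that one approximates arbitrary functions by functions constant on cubes taking rational complex values. Your three-step argument (simple functions, rational boxes via outer regularity and dyadic decomposition, rational coefficients) is precisely a fleshed-out version of that sketch, and in fact supplies considerably more detail than the paper itself does.
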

\begin{proof}
	We refer to \cite[Lemma~2.17]{Lieb_Loss:analysis:2001} for an explicit construction. The idea is to approximate arbitrary functions by functions which are constant on cubes and take only values in the rational complex numbers. 
\end{proof}
For future reference, we collect a few facts on $L^p(\Omega)$ spaces. In particular, we will make use of dominated convergence frequently. We will give them without proof, they can be found in standard text books on analysis, see \eg \cite{Lieb_Loss:analysis:2001}. 
\begin{theorem}[Monotone Convergence]
	Let $(f_k)_{k \in \N}$ be a sequence of non-decreasing functions in $L^1(\Omega)$ with pointwise limit $f$ defined almost everywhere. Define $I_k := \int_{\Omega} \dd x \, f_k(x)$; then the sequence $(I_k)$ is non-decreasing as well. If $I := \lim_{k \to \infty} I_k < \infty$, then $I = \int_{\Omega} \dd x \, f(x)$, \ie 
	\begin{align*}
		\lim_{k \to \infty} \int_{\Omega} \dd x \, f_k(x) = \int_{\Omega} \dd x \, \lim_{k \to \infty} f_k(x) 
		= \int_{\Omega} \dd x \, f(x) 
	\end{align*}
	holds. 
\end{theorem}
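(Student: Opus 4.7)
The plan is to split the statement into the easy monotonicity assertion about $(I_k)$ and the two inequalities $I \leq \int_{\Omega} \dd x \, f(x)$ and $\int_{\Omega} \dd x \, f(x) \leq I$. Monotonicity of $(I_k)$ is immediate: from $f_k \leq f_{k+1}$ almost everywhere, linearity and positivity of the Lebesgue integral give $I_k \leq I_{k+1}$, so $(I_k)$ increases and possesses a (possibly infinite) limit $I \in [I_1,+\infty]$.

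First I would dispatch the easy direction $I \leq \int_{\Omega} \dd x \, f(x)$. Since $f_k$ is non-decreasing and converges pointwise almost everywhere to $f$, we have $f_k \leq f$ almost everywhere for every $k$, and integration preserves this inequality: $I_k \leq \int_{\Omega} \dd x \, f(x)$. Passing to the limit $k \to \infty$ gives $I \leq \int_{\Omega} \dd x \, f(x)$. In particular, once we prove the reverse inequality, the assumption $I < \infty$ will automatically force $f \in L^1(\Omega)$.

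For the reverse inequality $\int_{\Omega} \dd x \, f(x) \leq I$, I would first reduce to the non-negative case by replacing $f_k$ with $g_k := f_k - f_1 \geq 0$, which is again non-decreasing, $L^1$, and converges almost everywhere to $g := f - f_1 \geq 0$; since $\int g_k = I_k - I_1$ and $\int g = \int f - \int f_1$, the inequality for $g$ and $g_k$ is equivalent to the one for $f$ and $f_k$. Then I would invoke the definition of the Lebesgue integral of a non-negative measurable function as $\int_\Omega \dd x \, g(x) = \sup \bigl \{ \int_\Omega \dd x \, s(x) \; \big \vert \; 0 \leq s \leq g, \; s \text{ simple} \bigr \}$, so it suffices to show $\int_{\Omega} \dd x \, s(x) \leq \lim_k \int_{\Omega} \dd x \, g_k(x)$ for each such simple $s$.

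The main obstacle is exactly that last step, and the standard device is the following. Fix $\alpha \in (0,1)$ and set $E_k := \bigl \{ x \in \Omega \; \vert \; g_k(x) \geq \alpha \, s(x) \bigr \}$. These sets are measurable, nested $E_k \subseteq E_{k+1}$ because $(g_k)$ is non-decreasing, and their union equals $\Omega$ up to a null set, because on the full-measure set where $g_k(x) \to g(x) \geq s(x) > \alpha \, s(x)$ (strictly, except where $s(x) = 0$, which is handled trivially), every $x$ belongs to some $E_k$. Continuity of the measure from below then yields $\int_{E_k} \dd x \, \alpha \, s(x) \xrightarrow{k \to \infty} \int_\Omega \dd x \, \alpha \, s(x)$, while the definition of $E_k$ and monotonicity give $\int_{E_k} \dd x \, \alpha \, s(x) \leq \int_{E_k} \dd x \, g_k(x) \leq \int_\Omega \dd x \, g_k(x) \leq I - I_1$. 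Taking $k \to \infty$ produces $\alpha \int_\Omega \dd x \, s(x) \leq I - I_1$; finally letting $\alpha \nearrow 1$ and taking the supremum over admissible $s$ yields $\int_\Omega \dd x \, g(x) \leq I - I_1$, equivalently $\int_\Omega \dd x \, f(x) \leq I$, which combined with the earlier inequality finishes the proof.
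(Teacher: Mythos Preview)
Your argument is correct and is exactly the standard textbook proof of the Monotone Convergence Theorem (the $\alpha$-truncation trick on the level sets $E_k$, combined with continuity of measure from below). The paper, however, does not prove this statement at all: it explicitly says ``We will give them without proof, they can be found in standard text books on analysis, see e.g.\ \cite{Lieb_Loss:analysis:2001}.'' So there is nothing to compare against --- you have supplied a full proof where the paper intentionally provides none.
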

\begin{theorem}[Dominated Convergence]
	Let $(f_k)_{k \in \N}$ be a sequence of functions in $L^1(\Omega)$ that converges almost everywhere pointwise to some $f : \Omega \longrightarrow \C$. If there exists a non-negative $g \in L^1(\Omega)$ such that $\abs{f_k(x)} \leq g(x)$ holds almost everywhere for all $k \in \N$, then $g$ also bounds $\abs{f}$, \ie $\abs{f(x)} \leq g(x)$ almost everywhere, and $f \in L^1(\Omega)$. Furthermore, the limit $k \to \infty$ and integration with respect to $x$ commute and we have 
	\begin{align*}
		\lim_{k \to \infty} \int_{\Omega} \dd x \, f_k(x) = \int_{\Omega} \dd x \, \lim_{k \to \infty} f_k(x) = \int_{\Omega} \dd x \, f(x) 
		. 
	\end{align*}
\end{theorem}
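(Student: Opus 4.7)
The plan is to reduce dominated convergence to the monotone convergence theorem, via an intermediate Fatou-type lemma, and then apply it to the clever auxiliary sequence $2g - \abs{f_k - f}$.

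\textbf{Step 1 (the pointwise bound on $f$).} First I would establish $\abs{f(x)} \leq g(x)$ almost everywhere. Each inequality $\abs{f_k(x)} \leq g(x)$ holds off a null set $N_k$; on the complement of $\bigcup_k N_k$ (still a null set) the pointwise limit $f(x) = \lim_k f_k(x)$ exists and satisfies $\abs{f(x)} \leq g(x)$ by continuity of the modulus. In particular $f$ is measurable (as an a.e.\ pointwise limit of measurable functions) and $\int_\Omega \dd x \, \abs{f(x)} \leq \int_\Omega \dd x \, g(x) < \infty$, so $f \in L^1(\Omega)$.

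\textbf{Step 2 (Fatou's lemma from monotone convergence).} Before attacking the convergence of integrals, I would derive the following auxiliary result: if $(h_k)$ is a sequence of non-negative measurable functions, then $\int_\Omega \dd x \, \liminf_k h_k(x) \leq \liminf_k \int_\Omega \dd x \, h_k(x)$. The proof is short: set $H_N(x) := \inf_{k \geq N} h_k(x)$. Then $(H_N)$ is non-decreasing with pointwise limit $\liminf_k h_k$, so the monotone convergence theorem gives $\int H_N \to \int \liminf h_k$. Since $H_N \leq h_k$ for every $k \geq N$, we have $\int H_N \leq \inf_{k \geq N} \int h_k$, and taking $N \to \infty$ yields the claim.

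\textbf{Step 3 (apply Fatou to $2g - \abs{f - f_k}$).} This is the heart of the argument. By Step~1 and the triangle inequality, $\abs{f(x) - f_k(x)} \leq 2 g(x)$ a.e., so the functions $h_k := 2g - \abs{f - f_k}$ are non-negative, measurable, and converge pointwise a.e.\ to $2g$. Applying the Fatou-type inequality,
\begin{align*}
\int_\Omega \dd x \, 2 g(x) \leq \liminf_{k\to\infty} \int_\Omega \dd x \, \bigl(2g(x) - \abs{f(x) - f_k(x)}\bigr) = \int_\Omega \dd x \, 2 g(x) - \limsup_{k\to\infty} \int_\Omega \dd x \, \abs{f(x) - f_k(x)}.
\end{align*}
Since $\int 2g < \infty$, subtracting it from both sides gives $\limsup_k \int \abs{f - f_k} \leq 0$, hence $\int \abs{f - f_k} \to 0$. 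The desired conclusion then follows from $\babs{\int f_k - \int f} \leq \int \abs{f_k - f}$.

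\textbf{Expected obstacle.} The only subtle point is Step~2: monotone convergence as stated is for non-decreasing real-valued $L^1$ functions, and one has to verify that the sequence $H_N = \inf_{k \geq N} h_k$ is genuinely integrable (which follows from $0 \leq H_N \leq h_N$ together with the obvious bound $H_N \leq 2g$ in our application) so that monotone convergence is applicable. Once this Fatou lemma is in hand, the clever choice of $2g - \abs{f - f_k}$ is what makes the telescoping in Step~3 work, and the rest is essentially bookkeeping.
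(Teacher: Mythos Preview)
Your proof is correct and follows the standard textbook route (Fatou from monotone convergence, then Fatou applied to $2g - \abs{f - f_k}$). Note, however, that the paper does \emph{not} prove this theorem: it is listed among a handful of facts on $L^p$ spaces explicitly stated ``without proof'' with a reference to Lieb--Loss, so there is no in-paper argument to compare against. Your self-flagged obstacle about the hypotheses of the paper's version of monotone convergence is handled correctly by the bound $0 \leq H_N \leq 2g$, which keeps every $H_N$ in $L^1$ and the limit of the integrals finite.
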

\begin{example}[First half of Riemann-Lebesgue lemma]\label{spaces:example:Riemann-Lebesgue_half}
	We define the Fourier transform of $f \in L^1(\R^n)$ as 
	\begin{align*}
		(\Fourier f)(\xi) := \frac{1}{(2\pi)^{\nicefrac{n}{2}}} \int_{\R^n} \dd x \, \e^{- \ii \xi \cdot x} \, f(x) 
		. 
	\end{align*}
	The integrability of $f$ implies that $\Fourier f$ is uniformly bounded: 
	\begin{align*}
		\babs{(\Fourier f)(\xi)} &\leq \frac{1}{(2\pi)^{\nicefrac{n}{2}}} \int_{\R^n} \dd x \, \babs{\e^{- \ii \xi \cdot x} \, f(x)} 
		= \norm{f}_{L^1(\R^n)} 
	\end{align*}
	In fact, $\Fourier f$ is continuous, and the crucial tool in the proof is dominated convergence: to see that $\Fourier f$ is continuous in $\xi_0 \in \R^n$, let $(\xi_n)$ be any sequence converging to $\xi_0$. Since we can bound the integrand uniformly in $\xi$, 
	\begin{align*}
		\babs{\e^{- \ii \xi \cdot x} \, f(x)} \leq \abs{f(x)} 
		, 
	\end{align*}
	dominated convergence applies, and we may interchange integration and differentiation, 
	\begin{align*}
		\lim_{\xi \to \xi_0} (\Fourier f)(\xi) &= \lim_{\xi \to \xi_0} \frac{1}{(2\pi)^{\nicefrac{n}{2}}} \int_{\R^n} \dd x \, \e^{- \ii \xi \cdot x} \, f(x) 
		\\
		&= \frac{1}{(2\pi)^{\nicefrac{n}{2}}} \int_{\R^n} \dd x \, \lim_{\xi \to \xi_0} \bigl ( \e^{- \ii \xi \cdot x} \, f(x) \bigr ) 
		\\
		&= \frac{1}{(2\pi)^{\nicefrac{n}{2}}} \int_{\R^n} \dd x \, \e^{- \ii \xi_0 \cdot x} \, f(x) 
		= (\Fourier f)(\xi_0) 
		. 
	\end{align*}
	This means $\Fourier f$ is continuous in $\xi_0$. However, $\xi_0$ was chosen arbitrarily so that we have in fact $\Fourier f \in L^{\infty}(\R^n) \cap \Cont(\R^n)$. 
\end{example}
%
% subsection Prototypical Banach spaces: $L^p(\Omega)$ spaces (end)

\subsection{Boundary value problems} % (fold)
\label{spaces:boundary_value_problems}
Now let us consider the wave equation 
\begin{align}
	\partial_t^2 u - \partial_x^2 u &= 0 
	\label{spaces:eqn:wave_equation}
\end{align}
in one dimension for the initial conditions 
\begin{align*}
	u(x,0) &= \varphi(x) 
	, 
	\\
	u'(x,0) &= \psi(x) 
	. 
\end{align*}
This formal description defines only \emph{half} of the wave equation, the other half is to state clearly what space $u$ is taken from, and if it should satisfy additional conditions. A priori it is not clear whether $u$ is a function of $\R$ or a subset of $\R$, say, an interval $[a,b]$. The derivatives appearing in \eqref{spaces:eqn:wave_equation} need only exist in the interior of the spatial domain, \eg $(a,b)$. Moreover, we could impose integrability conditions on $u$, \eg $u \in L^1(\R)$. 

If $u$ is a function $[0,L]$ to $\C$, for instance, it turns out that we need to specify the behavior of $u$ or $u'$ at the \emph{boundary}, \eg 
\begin{enumerate}[(i)]
	\item Dirichlet boundary conditions: $u(t,0) = 0 = u(t,L)$ 
	\item Neumann boundary conditions: $\partial_x u(t,0) = 0 = \partial_x u(t,L)$
	\item Mixed or Robin boundary conditions: $\alpha_0 u(t,0) + \beta_0 \partial_x u(t,0) = 0$, $\alpha_L u(t,L) + \beta_L \partial_x u(t,L) = 0$
\end{enumerate}
If one of the boundaries is located at $\pm \infty$, then the corresponding boundary condition often becomes meaningless because “$u(+\infty,T)$” usually makes no sense. 
\medskip

\noindent
Let us start by solving \eqref{spaces:eqn:wave_equation} via a \emph{product ansatz}, \ie we assume $u$ is of the form 
\begin{align*}
	u(x,t) &= \tau(t) \, \xi(x) 
\end{align*}
for suitable functions $\xi$ and $\tau$. Plugging the product ansatz into the wave equation and assuming that $\tau(t)$ and $\xi(x)$ are non-zero yields 
\begin{align*}
	\ddot{\tau}(t) \, \xi(x) - \tau(t) \, \xi''(x) &= 0 
	\; \; \Longleftrightarrow \; \; 
	\frac{\ddot{\tau}(t)}{\tau(t)} = \frac{\xi''(x)}{\xi(x)} = \lambda \in \C 
	. 
\end{align*}
This means $\tau$ and $\xi$ each need to satisfy the harmonic oscillator equation, 
\begin{align*}
	\ddot{\tau} - \lambda \tau &= 0 
	\, , 
	\\
	\xi'' - \lambda \xi &= 0 
	. 
\end{align*}
Note that these two equations are \emph{coupled} via the constant $\lambda$ which has yet to be determined. The solutions to these equations are 
\begin{align*}
	\tau(t) &= 
	\begin{cases}
		a_1(0) + a_2(0) \, t & \lambda = 0 \\
		a_1(\lambda) \, \e^{+ t \sqrt{\lambda}} + a_2(\lambda) \, \e^{- t \sqrt{\lambda}} & \lambda \neq 0 \\
	\end{cases}
\end{align*}
and 
\begin{align*}
	\xi(x) &= 
	\begin{cases}
		b_1(0) + b_2(0) \, x & \lambda = 0 \\
		b_1(\lambda) \, \e^{+ x \sqrt{\lambda}} + b_2(\lambda) \, \e^{- x \sqrt{\lambda}} & \lambda \neq 0 \\
	\end{cases}
\end{align*}
for $a_1(\lambda) , a_2(\lambda) , b_1(\lambda) , b_2(\lambda) \in \C$ where we always choose the root whose imaginary part is positive. These solutions are smooth functions on $\R \times \R$.

\paragraph{The wave equation on $\R$} % (fold)
As mentioned in problem~4 on sheet~1, conditions on $u$ now restrict the admissible values for $\lambda$. For instance, if we ask that 
\begin{align*}
	u \in L^{\infty}(\R \times \R) 
	, 
\end{align*}
then this condition only allows $\lambda \leq 0$ and excludes the linear solutions, \ie $a_2(0) = 0 = b_2(0)$. 
% paragraph The wave equation on $\R$ (end)

\paragraph{The wave equation on $[0,L]$ with Dirichlet boundary conditions} % (fold)
Now assume we are interested in the case where the wave equation is considered on the interval $[0,L]$ with Dirichlet boundary conditions, $u(t,0) = 0 = u(t,L)$. It turns out that the boundary condition only allows for a discrete set of negative $\lambda$: one can see easily that for $\lambda = 0$, only the trivial solution satisfies the Dirichlet boundary conditions, \ie $a_1(0) = a_2(0) = 0$ and $b_1(0) = b_2(0) = 0$. 

For $\lambda \neq 0$, the first boundary condition 
\begin{align*}
	\xi_{\lambda}(0) &= a_1(\lambda) \, \e^{+ 0 \sqrt{\lambda}} + a_2(\lambda) \, \e^{- 0 \sqrt{\lambda}}
	\\
	&= a_1(\lambda) + a_2(\lambda) 
	\overset{!}{=} 0
\end{align*}
implies $a_2(\lambda) = - a_1(\lambda)$. Plugging that back into the second equation yields 
\begin{align*}
	\xi_{\lambda}(L) &= a_1(\lambda) \bigl (  \e^{+ L \sqrt{\lambda}} -  \e^{- L \sqrt{\lambda}} \bigr ) 
	\overset{!}{=} 0 
\end{align*}
which is equivalent to 
\begin{align*}
	\e^{2 L \sqrt{\lambda}} = 1 
	. 
\end{align*}
The solutions to this equation are of the form 
\begin{align*}
	\sqrt{\lambda} = \ii \frac{n \pi}{L}
\end{align*}
for some $n \in \N$, \ie $\lambda < 0$. Moreover, as discussed in problem~4, the only admissible solutions of the spatial equation are 
\begin{align*}
	\xi_n(x) &= \sin n \tfrac{\pi x}{L} 
	, 
	\qquad \qquad 
	n \in \N 
	. 
\end{align*}
That means, the solutions are indexed by an integer $n \in \N$, 
\begin{align*}
	u_n(t,x) = \Bigl ( a_1(n) \, \e^{+ \ii n \frac{\pi t}{L}} + a_2(n) \, \e^{- \ii n \frac{\pi t}{L}} \Bigr ) \, \sin n \tfrac{\pi x}{L}
	,  
\end{align*}
and a generic solution is of the form 
\begin{align}
	u(t,x) &= \sum_{n \in \N} \Bigl ( a_1(n) \, \e^{+ \ii n \frac{\pi t}{L}} + a_2(n) \, \e^{- \ii n \frac{\pi t}{L}} \Bigr ) \, \sin n \tfrac{\pi x}{L} 
	. 
	\label{spaces:eqn:solution_wave_equation_Fourier}
\end{align}
To obtain the coefficients, we need to solve 
\begin{align*}
	u(0,x) &= \sum_{n \in \N} \bigl ( a_1(n) + a_2(n) \bigr ) \, \sin n \tfrac{\pi x}{L} 
	\overset{!}{=} \varphi(x) 
	\\
	\partial_t u(0,x) &= \sum_{n \in \N} \frac{\ii \, n \, \pi}{L} \, \bigl ( a_1(n) - a_2(n) \bigr ) \, \sin n \tfrac{\pi x}{L} 
	\overset{!}{=} \psi(x) 
	. 
\end{align*}
Later on, we will see that all integrable functions on intervals can be expanded in terms of $\e^{\pm \ii n \lambda x}$ for suitable choices of $\lambda$ (\cf Section~\ref{Fourier}). Hence, we can expand 
\begin{align*}
	\varphi(x) &= \sum_{n \in \N} b_{\varphi}(n) \, \sin n \tfrac{\pi x}{L} 
	\\
	\psi(x) &= \sum_{n \in \N} b_{\psi}(n) \, \sin n \tfrac{\pi x}{L} 
\end{align*}
in the same fashion as $u$, and we obtain the following relation between the coefficients: 
\begin{align*}
	b_{\varphi}(n) &= a_1(n) + a_2(n) 
	\\
	b_{\psi}(n) &= \frac{\ii \, n \, \pi}{L} \, \bigl ( a_1(n) - a_2(n) \bigr )
\end{align*}
For each $n$, we obtain two linear equations with two unknowns, and we can solve them explicitly. Now the question is whether the sum in $u(t,\cdot)$ is also integrable: if we assume for now that the coefficients of $\varphi$ and $\psi$ are absolutely summable\footnote{In Chapter~\ref{Fourier} we will give conditions explicit conditions on $\varphi$ and $\psi$ which will ensure the absolute summability of the Fourier coefficients.},\marginpar{2013.10.10} 
\begin{align*}
	\sum_{n \in \N} \babs{a_1(n) + a_2(n)} &< \infty 
	\, , 
	\\
	\sum_{n \in \N} \sabs{n} \, \babs{a_1(n) - a_2(n)} &< \infty 
	\, , 
\end{align*}
then we deduce 
\begin{align*}
	2 \babs{a_1(n)} &= \babs{a_1(n) + a_1(n)} 
	\\
	&
	= \babs{a_1(n) - a_2(n) + a_2(n) + a_1(n)} 
	\\
	&\leq \babs{a_1(n) + a_2(n)} + \babs{a_1(n) - a_2(n)} 
	\\
	&
	\leq \babs{a_1(n) + a_2(n)} + \abs{n} \, \babs{a_1(n) - a_2(n)} 
	. 
\end{align*}
Thus, the expression on the right-hand side of \eqref{spaces:eqn:solution_wave_equation_Fourier} converges in $L^1([0,L])$, 
\begin{align*}
	\babs{u(x,t)} &\leq \biggl \lvert \sum_{n \in \N} \bigl ( a_1(n) \, \e^{+ \ii \frac{n \pi}{L} t} + a_2(n) \, \e^{- \ii \frac{n \pi}{L} t} \bigr ) \, \sin n \tfrac{\pi x}{L} \biggr \rvert
	\\
	&\leq \sum_{n \in \N} \bigl ( \sabs{a_1(n)} + \sabs{a_2(n)} \bigr ) 
	< \infty 
	. 
\end{align*}
Since the bound is independent of $t$ we deduce $u(t,\cdot)$ exists for all $t \in \R$. 

Overall, we have shown the following: if we place enough conditions on the initial values $u(0) = \varphi$ and $\partial_t u(0) = \psi$ (here: $\varphi, \psi \in L^1([0,L])$ and absolutely convergent Fourier series), then in fact $u(t) \in L^1([0,L])$ is integrable for all times (bounded functions on bounded intervals are integrable). 
% paragraph The wave equation on $[0,L]$ with Dirichlet boundary conditions (end)
% subsection Boundary value problems (end)
% section Banach spaces (end)

\section{Hilbert spaces} % (fold)
\label{spaces:Hilbert}
\emph{Hilbert spaces} $\Hil$ are Banach spaces with a \emph{scalar product} $\scpro{\cdot \,}{\cdot}$ which allows to measure the “angle” between two vectors. Most importantly, it yields a characterization of vectors which are orthogonal to one another, giving rise to the notion of \emph{orthonormal bases} (ONB). This type of basis is particularly efficient to work with and has some rather nice properties (\eg a Pythagorean theorem holds).

\subsection{Abstract Hilbert spaces} % (fold)
\label{spaces:Hilbert:generic}
First, let us define a Hilbert space in the abstract, starting with 
\begin{definition}[pre-Hilbert space and Hilbert space]
	A \emph{pre-Hilbert space} is a complex vector space $\Hil$ with scalar product 
	\begin{align*}
		\scpro{\cdot \, }{\cdot} : \Hil \times \Hil \longrightarrow \C 
		, 
	\end{align*}
	\ie a mapping with properties 
	\begin{enumerate}[(i)]
		\item $\scpro{\varphi}{\varphi} \geq 0$ and $\scpro{\varphi}{\varphi} = 0$ implies $\varphi = 0$ (positive definiteness), 
		\item $\scpro{\varphi}{\psi}^* = \scpro{\psi}{\varphi}$, and 
		\item $\scpro{\varphi}{\alpha \psi + \chi} = \alpha \scpro{\varphi}{\psi} + \scpro{\varphi}{\chi}$ 
	\end{enumerate}
	for all $\varphi , \psi , \chi \in \Hil$ and $\alpha \in \C$. This induces a natural norm $\norm{\varphi} := \sqrt{\sscpro{\varphi}{\varphi}}$ and metric $d(\varphi,\psi) := \norm{\varphi - \psi}$, $\varphi , \psi \in \Hil$. If $\Hil$ is complete with respect to the induced metric, it is a \emph{Hilbert space.} 
\end{definition}
The scalar product induces a norm 
\begin{align*}
	\norm{f} := \sqrt{\scpro{f}{f}} 
\end{align*}
and a metric $d(f,g) := \norm{f - g}$. 
\begin{example}
	\begin{enumerate}[(i)]
		\item $\C^n$ with scalar product
		\begin{align*}
			\scpro{z}{w} := \sum_{l = 1}^n z_j^* \, w_j 
		\end{align*}
		is a Hilbert space. 
		\item $\Cont([a,b],\C)$ with scalar product 
		\begin{align*}
			\scpro{f}{g} := \int_a^b \dd x \, f(x)^* \, g(x) 
		\end{align*}
		is just a pre-Hilbert space, since it is not complete. 
	\end{enumerate}
\end{example}
%
% subsection Generic Hilbert spaces (end)

\subsection{Orthonormal bases and orthogonal subspaces} % (fold)
\label{spaces:Hilbert:orthonormal}
Hilbert spaces have the important notion of orthonormal vectors and sequences which do not exist in Banach spaces. 
\begin{definition}[Orthonormal set]
	Let $\mathcal{I}$ be a countable index set. A family of vectors $\{ \varphi_k \}_{k \in \mathcal{I}}$ is called orthonormal set if for all $k,j \in \mathcal{I}$
	\begin{align*}
		\scpro{\varphi_k}{\varphi_j} = \delta_{kj} 
	\end{align*}
	holds. 
\end{definition}
As we will see, all vectors in a separable Hilbert spaces can be written in terms of a countable orthonormal basis. Especially when we want to approximate elements in a Hilbert space by elements in a proper closed subspace, the vector of best approximation can be written as a linear combination of basis vectors. 
\begin{definition}[Orthonormal basis]\label{spaces:Hilbert:defn:ONB}
	Let $\mathcal{I}$ be a countable index set. An orthonormal set of vectors $\{ \varphi_k \}_{k \in \mathcal{I}}$ is called orthonormal basis if and only if for all $\psi \in \Hil$, we have 
	\begin{align*}
		\psi = \sum_{k \in \mathcal{I}} \sscpro{\varphi_k}{\psi} \, \varphi_k 
		. 
	\end{align*}
	If $\mathcal{I}$ is countably infinite, $\mathcal{I} \cong \N$, then this means the sequence $\psi_n := \sum_{j = 1}^n \sscpro{\varphi_j}{\psi} \, \varphi_j$ of partial converges in norm to $\psi$, 
	\begin{align*}
		\lim_{n \rightarrow \infty} \Bnorm{\psi - \mbox{$\sum_{j = 1}^n$} \sscpro{\varphi_j}{\psi} \, \varphi_j} = 0
	\end{align*}
\end{definition}
\begin{example}
	Hermitian matrices $H = H^* \in \mathrm{Mat}_{\C}(n)$ give rise to a set of orthonormal vectors, namely the set of eigenvectors. To see that this is so, let $v_k$ be an eigenvector to $\lambda_n$ (the eigenvalues are repeated according to their multiplicity). Then for all eigenvectors $v_j$ and $v_k$, we compute 
	\begin{align*}
		0 = \scpro{v_j}{H v_k} - \scpro{v_j}{H v_k} &= \lambda_n \, \scpro{v_j}{v_k} - \scpro{H^* v_j}{v_k} 
		\\
		&= \lambda_n \, \scpro{v_j}{v_k} - \scpro{H v_j}{v_k} 
		= (\lambda_n - \lambda_j) \, \scpro{v_j}{v_k} 
	\end{align*}
	where we have used that the eigenvalues of hermitian matrices are real (this follows from repeating the above argument for $j = n$). Hence, either $\scpro{v_j}{v_k} = 0$ if $\lambda_j \neq \lambda_n$ or $\lambda_j = \lambda_n$. In the latter case, we obtain a higher-dimensional subspace of $\C^n$ associated to the eigenvalue $\lambda_n$ for which we can construct an orthonormal basis using the Gram-Schmidt procedure. 
	
	Thus, we obtain a basis of eigenvectors $\{ v_j \}_{j = 1}^n$. This basis is particularly convenient when working with the matrix $H$ since
	\begin{align*}
		H w &= H \, \sum_{j = 1}^n \scpro{v_j}{w} \, v_j 
		\\
		&= \sum_{j = 1}^n \lambda_j \, \scpro{v_j}{w} \, v_j 
		. 
	\end{align*}
	We will extend these arguments in the next chapter to operators on infinite-dimensional Hilbert spaces where one needs to take a little more care. 
\end{example}
With this general notion of orthogonality, we have a Pythagorean theorem: 
\begin{theorem}[Pythagoras]
	Given a finite orthonormal family $\{ \varphi_1 , \ldots , \varphi_n \}$ in a pre-Hilbert space $\Hil$ and $\varphi \in \Hil$, we have 
	\begin{align*}
		\bnorm{\varphi}^2 = \mbox{$\sum_{k = 1}^n$} \babs{\sscpro{\varphi_k}{\varphi}}^2 + \bnorm{\varphi - \mbox{$\sum_{k = 1}^n$} \sscpro{\varphi_k}{\varphi} \, \varphi_k}^2 
		. 
	\end{align*}
\end{theorem}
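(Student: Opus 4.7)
My plan is to decompose $\varphi$ into the piece living in the span of $\{\varphi_1,\ldots,\varphi_n\}$ and the piece orthogonal to it, and then apply bilinearity of the scalar product to expand $\snorm{\varphi}^2$.

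Concretely, set $\psi := \sum_{k=1}^n \sscpro{\varphi_k}{\varphi}\,\varphi_k$ and $\chi := \varphi - \psi$, so that $\varphi = \psi + \chi$. The first step is to show that $\sscpro{\varphi_j}{\chi} = 0$ for every $j \in \{1,\ldots,n\}$: using antilinearity/linearity of $\sscpro{\cdot}{\cdot}$ in the appropriate slot together with $\sscpro{\varphi_j}{\varphi_k} = \delta_{jk}$, one computes $\sscpro{\varphi_j}{\psi} = \sum_k \sscpro{\varphi_k}{\varphi}\,\sscpro{\varphi_j}{\varphi_k} = \sscpro{\varphi_j}{\varphi}$, hence $\sscpro{\varphi_j}{\chi} = \sscpro{\varphi_j}{\varphi} - \sscpro{\varphi_j}{\psi} = 0$. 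Since $\psi$ is a linear combination of the $\varphi_j$, this immediately yields $\sscpro{\psi}{\chi} = 0$ and, by conjugate symmetry, $\sscpro{\chi}{\psi} = 0$.

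The second step is the straightforward expansion
\begin{align*}
\snorm{\varphi}^2 = \sscpro{\psi+\chi}{\psi+\chi} = \sscpro{\psi}{\psi} + \sscpro{\psi}{\chi} + \sscpro{\chi}{\psi} + \sscpro{\chi}{\chi} = \snorm{\psi}^2 + \snorm{\chi}^2,
\end{align*}
where the two cross terms vanish by Step~1. It remains to identify $\snorm{\psi}^2$ with $\sum_{k=1}^n \sabs{\sscpro{\varphi_k}{\varphi}}^2$: expanding the scalar product and using orthonormality once more gives
\begin{align*}
\snorm{\psi}^2 = \sum_{j,k=1}^n \overline{\sscpro{\varphi_j}{\varphi}}\,\sscpro{\varphi_k}{\varphi}\,\sscpro{\varphi_j}{\varphi_k} = \sum_{k=1}^n \sabs{\sscpro{\varphi_k}{\varphi}}^2,
\end{align*}
and substituting back yields the claim.

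There is no real obstacle here; the only mildly delicate point is being careful with which slot of $\sscpro{\cdot}{\cdot}$ is linear versus antilinear (the definition in the excerpt makes the second slot linear), so the computation of $\sscpro{\varphi_j}{\psi}$ and $\snorm{\psi}^2$ must respect that convention. Everything else is a direct unfolding of definitions, and no completeness or limit arguments are needed since the orthonormal family is finite.
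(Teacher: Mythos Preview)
Your proof is correct and follows essentially the same approach as the paper's: define $\psi = \sum_k \sscpro{\varphi_k}{\varphi}\,\varphi_k$ and $\psi^\perp = \varphi - \psi$, verify orthogonality, and expand $\snorm{\varphi}^2 = \snorm{\psi}^2 + \snorm{\psi^\perp}^2$. If anything, you are more explicit than the paper, which leaves both the orthogonality check and the identification $\snorm{\psi}^2 = \sum_k \sabs{\sscpro{\varphi_k}{\varphi}}^2$ to the reader.
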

\begin{proof}
	It is easy to check that $\psi := \sum_{k = 1}^n \sscpro{\varphi_k}{\varphi} \, \varphi_k$ and $\psi^{\perp} := \varphi - \sum_{k = 1}^n \sscpro{\varphi_k}{\varphi} \, \varphi_k$ are orthogonal and $\varphi = \psi + \psi^{\perp}$. Hence, we obtain 
	\begin{align*}
		\norm{\varphi}^2 &= \sscpro{\varphi}{\varphi} 
		= \sscpro{\psi+\psi^{\perp}}{\psi+\psi^{\perp}} 
		= \sscpro{\psi}{\psi} + \sscpro{\psi^{\perp}}{\psi^{\perp}} 
		\\
		&= \bnorm{\mbox{$\sum_{k = 1}^n$} \sscpro{\varphi_k}{\varphi} \, \varphi_k}^2 + \bnorm{\varphi - \mbox{$\sum_{k = 1}^n$} \sscpro{\varphi_k}{\varphi} \, \varphi_k}^2 
		. 
	\end{align*}
	This concludes the proof. 
\end{proof}
A simple corollary are Bessel's inequality and the Cauchy-Schwarz inequality. 
\begin{theorem}
	Let $\Hil$ be a pre-Hilbert space. 
	\begin{enumerate}[(i)]
		\item Bessel's inequality holds: let $\bigl \{ \varphi_1 , \ldots \varphi_n \bigr \}$ be a finite orthonormal sequence. Then 
		\begin{align*}
			\snorm{\psi}^2 \geq \sum_{j = 1}^n \sabs{\sscpro{\varphi_j}{\psi}}^2 
			. 
		\end{align*}
		holds for all $\psi \in \Hil$. 
		\item The Cauchy-Schwarz inequality holds, \ie 
		\begin{align*}
			\sabs{\sscpro{\varphi}{\psi}} \leq \snorm{\varphi} \snorm{\psi} 
		\end{align*}
		is valid for all $\varphi , \psi \in \Hil$
	\end{enumerate}
\end{theorem}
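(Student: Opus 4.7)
The plan is to deduce both inequalities as direct corollaries of the Pythagoras theorem proved immediately above.

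For Bessel's inequality, I would simply observe that the second term on the right-hand side of the Pythagorean identity,
\[
	\bnorm{\psi - \mbox{$\sum_{k = 1}^n$} \sscpro{\varphi_k}{\psi} \, \varphi_k}^2 ,
\]
is non-negative by the positive definiteness of the scalar product. Dropping it from the identity immediately yields $\snorm{\psi}^2 \geq \sum_{j=1}^n \sabs{\sscpro{\varphi_j}{\psi}}^2$. This part is essentially automatic and requires no further computation.

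For the Cauchy-Schwarz inequality, I would reduce it to the one-element case of Bessel's inequality. First dispatch the trivial case $\varphi = 0$ (both sides vanish). Otherwise, the single vector $\tilde{\varphi} := \varphi / \snorm{\varphi}$ forms an orthonormal family of size one, so applying Bessel's inequality (already proved in part (i)) to this family gives
\[
	\snorm{\psi}^2 \geq \babs{\bscpro{\tilde{\varphi}}{\psi}}^2 = \frac{\babs{\sscpro{\varphi}{\psi}}^2}{\snorm{\varphi}^2} ,
\]
where I used sesquilinearity of the scalar product to pull the scalar $1/\snorm{\varphi}$ out of the first slot. Multiplying by $\snorm{\varphi}^2$ and taking square roots yields $\sabs{\sscpro{\varphi}{\psi}} \leq \snorm{\varphi} \, \snorm{\psi}$.

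There is no real obstacle here: the only subtle point is remembering to handle $\varphi = 0$ separately so that the normalization $\varphi/\snorm{\varphi}$ is well defined. Everything else is direct algebraic manipulation using the sesquilinearity and positive definiteness of $\sscpro{\cdot}{\cdot}$, combined with the already-established Pythagorean theorem.
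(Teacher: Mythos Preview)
Your proposal is correct and matches the paper's proof essentially line for line: part (i) follows from dropping the non-negative term $\snorm{\psi^{\perp}}^2$ in the Pythagorean identity, and part (ii) is obtained by applying (i) with $n=1$ to the normalized vector $\varphi/\snorm{\varphi}$ after disposing of the case $\varphi = 0$.
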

\begin{proof}
	\begin{enumerate}[(i)]
		\item This follows trivially from the previous Theorem as $\snorm{\psi^{\perp}}^2 \geq 0$. 
		\item Pick $\varphi , \psi \in \Hil$. In case $\varphi = 0$, the inequality holds. So assume $\varphi \neq 0$ and define 
		\begin{align*}
			\varphi_1 := \frac{\varphi}{\norm{\varphi}} 
		\end{align*}
		which has norm $1$. We can apply (i) for $n = 1$ to conclude 
		\begin{align*}
			\norm{\psi}^2 \geq \abs{\sscpro{\varphi_1}{\psi}}^2 
			= \frac{1}{\norm{\varphi}^2} \abs{\sscpro{\varphi}{\psi}}^2 
			. 
		\end{align*}
		This is equivalent to the Cauchy-Schwarz inequality. 
	\end{enumerate}
\end{proof}
An important corollary says that the scalar product is continuous with respect to the norm topology. This is not at all surprising, after all the norm is induced by the scalar product! 
\begin{corollary}\label{hilbert_spaces:onb:cor:continuity_scalar_product}
	Let $\Hil$ be a Hilbert space. Then the scalar product is continuous with respect to the norm topology, \ie for two sequences $(\varphi_n)_{n \in \N}$ and $(\psi_m)_{m \in \N}$ that converge to $\varphi$ and $\psi$, respectively, we have 
	\begin{align*}
		\lim_{n,m \rightarrow \infty} \sscpro{\varphi_n}{\psi_m} &= \sscpro{\varphi}{\psi} 
		. 
	\end{align*}
\end{corollary}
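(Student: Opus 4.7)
The plan is to estimate the difference $\sabs{\sscpro{\varphi_n}{\psi_m} - \sscpro{\varphi}{\psi}}$ by inserting a mixed term and applying the Cauchy-Schwarz inequality, which was just established. Concretely, I would add and subtract $\sscpro{\varphi_n}{\psi}$ and use sesquilinearity to write
\begin{align*}
	\sscpro{\varphi_n}{\psi_m} - \sscpro{\varphi}{\psi}
	= \sscpro{\varphi_n}{\psi_m - \psi} + \sscpro{\varphi_n - \varphi}{\psi}
	.
\end{align*}
Applying the triangle inequality for $\sabs{\cdot}$ on $\C$ followed by Cauchy-Schwarz on each of the two summands gives
\begin{align*}
	\babs{\sscpro{\varphi_n}{\psi_m} - \sscpro{\varphi}{\psi}} \leq \snorm{\varphi_n} \, \snorm{\psi_m - \psi} + \snorm{\varphi_n - \varphi} \, \snorm{\psi}
	.
\end{align*}

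Next I need to control $\snorm{\varphi_n}$ uniformly in $n$. This is the only spot that requires a small additional observation: since $(\varphi_n)$ converges to $\varphi$, the reverse triangle inequality $\babs{\snorm{\varphi_n} - \snorm{\varphi}} \leq \snorm{\varphi_n - \varphi}$ shows that $\snorm{\varphi_n} \to \snorm{\varphi}$, so in particular the sequence $\bigl ( \snorm{\varphi_n} \bigr )_{n \in \N}$ is bounded by some constant $M < \infty$. With this bound in hand,
\begin{align*}
	\babs{\sscpro{\varphi_n}{\psi_m} - \sscpro{\varphi}{\psi}} \leq M \, \snorm{\psi_m - \psi} + \snorm{\varphi_n - \varphi} \, \snorm{\psi}
	,
\end{align*}
and the right-hand side tends to $0$ as $n , m \to \infty$ by the convergence assumptions $\snorm{\varphi_n - \varphi} \to 0$ and $\snorm{\psi_m - \psi} \to 0$.

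There is no real obstacle here; the proof is essentially a bookkeeping exercise. The only conceptual point worth emphasising is that one cannot directly conclude $\sscpro{\varphi_n}{\psi_m} \to \sscpro{\varphi}{\psi}$ from continuity of each slot separately because both arguments move simultaneously — this is precisely why the mixed term $\sscpro{\varphi_n}{\psi}$ (or equally $\sscpro{\varphi}{\psi_m}$) is inserted, and why boundedness of $\snorm{\varphi_n}$ rather than convergence of $\snorm{\varphi_n}$ to $\snorm{\varphi}$ is what one actually needs in the first summand.
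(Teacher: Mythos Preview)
Your proof is correct and follows essentially the same approach as the paper: both insert the mixed term $\sscpro{\varphi_n}{\psi}$, apply the triangle inequality and Cauchy-Schwarz to the two resulting pieces, and then use that $\snorm{\varphi_n}$ is bounded since $(\varphi_n)$ converges. Your justification of the boundedness via the reverse triangle inequality is slightly more explicit than the paper's, which simply asserts the existence of such a bound.
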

\begin{proof}
	Let $(\varphi_n)_{n \in \N}$ and $(\psi_m)_{m \in \N}$ be two sequences in $\Hil$ that converge to $\varphi$ and $\psi$, respectively. Then by Cauchy-Schwarz, we have 
	\begin{align*}
		\lim_{n,m \rightarrow \infty} \babs{\sscpro{\varphi}{\psi} - \sscpro{\varphi_n}{\psi_m}} &= \lim_{n,m \rightarrow \infty} \babs{\sscpro{\varphi - \varphi_n}{\psi} - \sscpro{\varphi_n}{\psi_m - \psi}} 
		\\
		&\leq \lim_{n,m \rightarrow \infty} \babs{\sscpro{\varphi - \varphi_n}{\psi}} + \lim_{n,m \rightarrow \infty} \babs{\sscpro{\varphi_n}{\psi_m - \psi}} 
		\\
		&\leq \lim_{n,m \rightarrow \infty} \snorm{\varphi - \varphi_n} \, \snorm{\psi} + \lim_{n,m \rightarrow \infty} \snorm{\varphi_n} \, \snorm{\psi_m - \psi} 
		= 0 
	\end{align*}
	since there exists some $C > 0$ such that $\norm{\varphi_n} \leq C$ for all $n \in \N$. 
\end{proof}
Before we prove that a Hilbert space is separable exactly if it admits a \emph{countable} basis, we need to introduce the notion of orthogonal complement: if $A$ is a subset of a pre-Hilbert space $\Hil$, then we define 
\begin{align*}
	A^{\perp} := \bigl \{ \varphi \in \Hil \; \vert \; \sscpro{\varphi}{\psi} = 0 \; \forall \psi \in A \bigr \} 
	. 
\end{align*}
The following few properties of the orthogonal complement follow immediately from its definition: 
\begin{enumerate}[(i)]
	\item $\{ 0 \}^{\perp} = \Hil$ and $\Hil^{\perp} = \{ 0 \}$. 
	\item $A^{\perp}$ is a closed linear subspace of $\Hil$ for \emph{any} subset $A \subseteq \Hil$. 
	\item If $A \subseteq B$, then $B^{\perp} \subseteq A^{\perp}$. 
	\item If we denote the sub vector space spanned by the elements in $A$ by $\mathrm{span} \, A$, we have 
	\begin{align*}
		A^{\perp} = \bigl ( \mathrm{span} \, A \bigr )^{\perp} = \bigl ( \overline{\mathrm{span} \, A} \bigr )^{\perp}
	\end{align*}
	where $\overline{\mathrm{span} \, A}$ is the completion of $\mathrm{span} \, A$ with respect to the norm topology. \marginpar{2013.10.15}
\end{enumerate}
%
% subsection Orthonormal bases and orthogonal subspaces (end)

\subsection{Prototypical Hilbert spaces} % (fold)
\label{spaces:Hilbert:prototypical}
We have already introduced the \emph{Banach} space of square-integrable functions on $\R^n$, 
\begin{align*}
	\mathcal{L}^2(\Omega) := \Bigl \{ \varphi : \Omega \longrightarrow \C \; \big \vert \; \varphi \mbox{ measurable, } \int_{\Omega} \dd x \, \abs{\varphi(x)}^2 < \infty \Bigr \} 
	, 
\end{align*}
and this space can be naturally equipped with a scalar product, 
\begin{align}
	\scpro{f}{g} &= \int_{\Omega} \dd x \, f(x)^* \, g(x) 
	. 
	\label{spaces:eqn:L2_scalar_product}
\end{align}
When talking about wave functions in quantum mechanics, the Born rule states that $\abs{\psi(x)}^2$ is to be interpreted as a probability density on $\Omega$ for position (\ie $\psi$ is normalized, $\norm{\psi} = 1$). Hence, we are interested in solutions to the Schrödinger equation which are also square integrable. If $\psi_1 \sim \psi_2$ are two normalized functions in $\mathcal{L}^2(\Omega)$, then we get the same probabilities for both: if $\Lambda \subseteq \Omega \subseteq \R^n$ is a measurable set, then 
\begin{align*}
	\mathbb{P}_1(X \in \Lambda) = \int_{\Lambda} \dd x \, \abs{\psi_1(x)}^2 = \int_{\Lambda} \dd x \, \abs{\psi_2(x)}^2 = \mathbb{P}_2(X \in \Lambda) 
	. 
\end{align*}
This is proven via the triangle inequality and the Cauchy-Schwartz inequality: 
\begin{align*}
	0 &\leq \babs{\mathbb{P}_1(X \in \Lambda) - \mathbb{P}_2(X \in \Lambda)} 
	= \abs{\int_{\Lambda} \dd x \, \abs{\psi_1(x)}^2 - \int_{\Lambda} \dd x \, \abs{\psi_2(x)}^2} 
	\\
	&
	= \abs{\int_{\Lambda} \dd x \, \bigl ( \psi_1(x) - \psi_2(x) \bigr )^* \, \psi_1(x) + \int_{\Lambda} \dd x \, \psi_2(x)^* \, \bigl ( \psi_1(x) - \psi_2(x) \bigr )} 
	\\
	&\leq \int_{\Lambda} \dd x \, \babs{\psi_1(x) - \psi_2(x)} \, \babs{\psi_1(x)} + \int_{\Lambda} \dd x \, \babs{\psi_2(x)} \, \babs{\psi_1(x) - \psi_2(x)} 
	\\
	&\leq \norm{\psi_1 - \psi_2} \, \norm{\psi_1} + \norm{\psi_2} \, \norm{\psi_1 - \psi_2} 
	= 0
\end{align*}
Very often, another space is used in applications (\eg in tight-binding models): 
\begin{definition}[$\ell^2(S)$]\label{spaces:Hilbert:defn:ell2}
	Let $S$ be a countable set. Then 
	\begin{align*}
		\ell^2(S) := \Bigl \{ c : S \longrightarrow \C \; \big \vert \; \mbox{$\sum_{j \in S} c_j^* c_j < \infty$} \Bigr \} 
	\end{align*}
	is the Hilbert space of square-summable sequences with scalar product $\scpro{c}{c'} := \sum_{j \in S} c_j^* c'_j$. 
\end{definition}
On $\ell^2(S)$ the scalar product induces the norm $\norm{c} := \sqrt{\scpro{c}{c}}$. With respect to this norm, $\ell^2(S)$ is complete. 
% subsection Prototypical Hilbert spaces (end)

\subsection{Best approximation} % (fold)
\label{spaces:Hilbert:best_approx}
If $(\Hil,d)$ is a metric space, we can define the distance between a point $\varphi \in \Hil$ and a subset $A \subseteq \Hil$ as 
\begin{align*}
	d(\varphi,A) := \inf_{\psi \in A} d(\varphi,\psi) 
	. 
\end{align*}
If there exists $\varphi_0 \in A$ which minimizes the distance, \ie $d(\varphi,A) = d(\varphi,\varphi_0)$, then $\varphi_0$ is called \emph{element of best approximation} for $\varphi$ in $A$. This notion is helpful to understand why and how elements in an infinite-dimensional Hilbert space can be approximated by finite linear combinations -- something that is used in numerics all the time.  

If $A \subset \Hil$ is a convex subset of a Hilbert space $\Hil$, then one can show that there always exists an element of best approximation. In case $A$ is a linear subspace of $\Hil$, it is given by projecting an arbitrary $\psi \in \Hil$ down to the subspace $A$. 
\begin{theorem}\label{hilbert_spaces:onb:thm:best_approx}
	Let $A$ be a closed convex subset of a Hilbert space $\Hil$. Then there exists for each $\varphi \in \Hil$ exactly one $\varphi_0 \in A$ such that 
	\begin{align*}
		d(\varphi,A) = d(\varphi,\varphi_0) 
		. 
	\end{align*}
\end{theorem}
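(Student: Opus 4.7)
The plan is to exploit the Hilbert space structure (specifically the parallelogram identity, which holds in any space whose norm comes from an inner product) together with convexity of $A$ and completeness of $\Hil$. First I would set $d := d(\varphi,A) = \inf_{\psi \in A} \norm{\varphi - \psi}$ and, using the definition of infimum, pick a minimizing sequence $(\psi_n)_{n \in \N} \subset A$ with $\norm{\varphi - \psi_n} \to d$. The goal is to show this sequence is Cauchy; completeness of $\Hil$ and closedness of $A$ will then hand us a limit $\varphi_0 \in A$, and continuity of the norm (which follows from Cauchy-Schwarz) yields $\norm{\varphi - \varphi_0} = d$.

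The main technical step, and the one where the Hilbert space structure really enters, is to establish the Cauchy property via the parallelogram identity $\snorm{x+y}^2 + \snorm{x-y}^2 = 2 \snorm{x}^2 + 2 \snorm{y}^2$ applied to $x = \varphi - \psi_n$ and $y = \varphi - \psi_m$. This gives
\begin{align*}
\bnorm{\psi_n - \psi_m}^2 &= 2 \bnorm{\varphi - \psi_n}^2 + 2 \bnorm{\varphi - \psi_m}^2 - 4 \Bnorm{\varphi - \tfrac{1}{2}(\psi_n + \psi_m)}^2.
\end{align*}
Here convexity of $A$ is essential: the midpoint $\tfrac{1}{2}(\psi_n + \psi_m)$ lies in $A$, so the last norm is bounded below by $d$. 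Consequently
\begin{align*}
\bnorm{\psi_n - \psi_m}^2 \leq 2 \bnorm{\varphi - \psi_n}^2 + 2 \bnorm{\varphi - \psi_m}^2 - 4 d^2,
\end{align*}
and as $n,m \to \infty$ the right-hand side tends to $2d^2 + 2d^2 - 4d^2 = 0$. Thus $(\psi_n)$ is Cauchy, converges to some $\varphi_0 \in \Hil$, and $\varphi_0 \in A$ since $A$ is closed; norm continuity finishes the existence part.

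For uniqueness, suppose $\varphi_0, \varphi_0' \in A$ both satisfy $\snorm{\varphi - \varphi_0} = d = \snorm{\varphi - \varphi_0'}$. Applying the same parallelogram identity trick with $\psi_n, \psi_m$ replaced by $\varphi_0, \varphi_0'$ and again using that $\tfrac{1}{2}(\varphi_0 + \varphi_0') \in A$ by convexity, I obtain
\begin{align*}
\bnorm{\varphi_0 - \varphi_0'}^2 \leq 2 d^2 + 2 d^2 - 4 d^2 = 0,
\end{align*}
forcing $\varphi_0 = \varphi_0'$. The main obstacle is really just recognizing that the parallelogram identity plus convexity is the right tool; everything else is routine. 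A minor conceptual point to be careful about is that convexity (not just linearity) of $A$ is what lets us place the midpoints back into $A$, which is why the statement works for convex, and not merely linear, subspaces.
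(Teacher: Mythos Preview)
Your proof is correct and follows essentially the same approach as the paper: minimizing sequence, parallelogram identity plus convexity to show Cauchy, closedness to place the limit in $A$. Your uniqueness argument is actually slightly more direct than the paper's (which builds an alternating sequence $\varphi_0, \varphi_0', \varphi_0, \varphi_0', \ldots$ and reuses the Cauchy estimate), but the underlying computation is identical.
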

\begin{proof}
	We choose a sequence $(\psi_n)_{n \in \N}$ in $A$ with $d(\varphi,\psi_n) = \norm{\varphi - \psi_n} \rightarrow d(x,A)$. This sequence is also a Cauchy sequence: we add and subtract $\varphi$ to get 
	\begin{align*}
		\bnorm{\psi_n - \psi_m}^2 &= \bnorm{(\psi_n - \varphi) + (\varphi - \psi_m)}^2 
		. 
	\end{align*}
	If $\Hil$ were a normed space, we could have to use the triangle inequality to estimate the right-hand side from above. However, $\Hil$ is a Hilbert space and by using the parallelogram identity,\footnote{For all $\varphi , \psi \in \Hil$, the identity $2 \norm{\varphi}^2 + 2 \norm{\psi}^2 = \norm{\varphi + \psi}^2 + \norm{\varphi - \psi}^2$ holds. } we see that the right-hand side is actually \emph{equal} to 
	\begin{align*}
		\bnorm{\psi_n - \psi_m}^2 &= 2 \bnorm{\psi_n - \varphi}^2 + 2 \bnorm{\psi_m - \varphi}^2 - \bnorm{\psi_n + \psi_m - 2 \varphi}^2 
		\\
		&
		= 2 \bnorm{\psi_n - \varphi}^2 + 2 \bnorm{\psi_m - \varphi}^2 - 4 \bnorm{\tfrac{1}{2} (\psi_n + \psi_m) - \varphi}^2 
		\\
		&\leq 2 \bnorm{\psi_n - \varphi}^2 + 2 \bnorm{\psi_m - \varphi}^2 - 4 d(\varphi,A)
		\\
		&
		\xrightarrow{n \rightarrow \infty} 2 d(\varphi,A) + 2 d(\varphi,A) - 4 d(\varphi,A) 
		= 0
		. 
	\end{align*}
	By convexity, $\frac{1}{2} (\psi_n + \psi_m)$ is again an element of $A$. This is crucial once again for the uniqueness argument. Letting $n,m \rightarrow \infty$, we see that $(\psi_n)_{n \in \N}$ is a Cauchy sequence in $A$ which converges in $A$ as it is a closed subset of $\Hil$. Let us call the limit point $\varphi_0 := \lim_{n \rightarrow \infty} \psi_n$. Then $\varphi_0$ is \emph{an} element of best approximation, 
	\begin{align*}
		\bnorm{\varphi - \varphi_0} = \lim_{n \rightarrow \infty} \bnorm{\varphi - \psi_n} = d(\varphi,A)
		. 
	\end{align*}
	To show uniqueness, we assume that there exists another element of best approximation $\varphi_0' \in A$. Define the sequence $(\tilde{\psi}_n)_{n \in \N}$ by $\tilde{\psi}_{2n} := \varphi_0$ for even indices and $\tilde{\psi}_{2n + 1} := \varphi_0'$ for odd indices. By assumption, we have $\norm{\varphi - \varphi_0} = d(\varphi,A) = \norm{\varphi - \varphi_0'}$ and thus, by repeating the steps above, we conclude $(\tilde{\psi}_n)_{n \in \N}$ is a Cauchy sequence that converges to some element. However, since the sequence is alternating, the two elements $\varphi'_0 = \varphi_0$ are in fact identical. 
\end{proof}
As we have seen, the condition that the set is convex and closed is crucial in the proof. Otherwise the minimizer may not be unique or even contained in the set. 

This is all very abstract. For the case of a closed subvector space $E \subseteq \Hil$, we can express the element of best approximation in terms of the basis: not surprisingly, it is given by the projection of $\varphi$ onto $E$. 
\begin{theorem}\label{hilbert_spaces:onb:thm:best_approx_explicit}
	Let $E \subseteq \Hil$ be a closed subspace of a Hilbert space that is spanned by countably many orthonormal basis vectors $\{ \varphi_k \}_{k \in \mathcal{I}}$. Then for any $\varphi \in \Hil$, the element of best approximation $\varphi_0 \in E$ is given by 
	\begin{align*}
		\varphi_0 = \sum_{k \in \mathcal{I}} \sscpro{\varphi_k}{\varphi} \, \varphi_k 
		. 
	\end{align*}
\end{theorem}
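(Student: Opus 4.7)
The plan is to exploit the two ingredients that have already been developed: Bessel's inequality (to make sense of the series defining $\varphi_0$ when $\mathcal{I}$ is countably infinite) and the Pythagoras theorem (to show that the candidate is indeed the minimizer). Since $E$ is a closed subspace it is, in particular, closed and convex, so Theorem~\ref{hilbert_spaces:onb:thm:best_approx} already guarantees existence and uniqueness of the element of best approximation; only its identification with $\sum_{k \in \mathcal{I}} \sscpro{\varphi_k}{\varphi} \, \varphi_k$ remains.

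First I would check that the series $\varphi_0 := \sum_{k \in \mathcal{I}} \sscpro{\varphi_k}{\varphi} \, \varphi_k$ actually converges in $\Hil$. If $\mathcal{I}$ is finite there is nothing to show, so assume $\mathcal{I} \cong \N$. Bessel's inequality gives $\sum_{k = 1}^{\infty} \babs{\sscpro{\varphi_k}{\varphi}}^2 \leq \norm{\varphi}^2 < \infty$, and orthonormality together with Pythagoras yields, for the partial sums $S_n := \sum_{k = 1}^n \sscpro{\varphi_k}{\varphi} \, \varphi_k$ and $m < n$,
\begin{align*}
    \bnorm{S_n - S_m}^2 = \sum_{k = m+1}^n \babs{\sscpro{\varphi_k}{\varphi}}^2 \xrightarrow{m,n \to \infty} 0.
\end{align*}
Hence $(S_n)$ is Cauchy and converges by completeness of $\Hil$; since each $S_n \in E$ and $E$ is closed, the limit $\varphi_0$ lies in $E$.

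Next I would verify that $\varphi - \varphi_0 \in E^{\perp}$. Using Corollary~\ref{hilbert_spaces:onb:cor:continuity_scalar_product} to pass the scalar product through the limit and orthonormality of the $\varphi_k$, I compute for each fixed $j \in \mathcal{I}$
\begin{align*}
    \sscpro{\varphi_j}{\varphi_0} = \sum_{k \in \mathcal{I}} \sscpro{\varphi_k}{\varphi} \, \sscpro{\varphi_j}{\varphi_k} = \sscpro{\varphi_j}{\varphi},
\end{align*}
so $\sscpro{\varphi_j}{\varphi - \varphi_0} = 0$ for every $j$. Since $\{\varphi_k\}_{k \in \mathcal{I}}$ is an ONB of $E$, every $\psi \in E$ is a norm-limit of finite linear combinations of the $\varphi_k$, and another application of continuity of $\sscpro{\cdot}{\cdot}$ gives $\sscpro{\psi}{\varphi - \varphi_0} = 0$, i.e.\ $\varphi - \varphi_0 \in E^{\perp}$.

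Finally, for arbitrary $\psi \in E$ the vector $\varphi_0 - \psi$ lies in $E$, hence is orthogonal to $\varphi - \varphi_0$. Pythagoras gives
\begin{align*}
    \bnorm{\varphi - \psi}^2 = \bnorm{(\varphi - \varphi_0) + (\varphi_0 - \psi)}^2 = \bnorm{\varphi - \varphi_0}^2 + \bnorm{\varphi_0 - \psi}^2 \geq \bnorm{\varphi - \varphi_0}^2,
\end{align*}
with equality exactly when $\psi = \varphi_0$. This identifies $\varphi_0$ as the element of best approximation, which by Theorem~\ref{hilbert_spaces:onb:thm:best_approx} is unique. The main obstacle is really just the convergence step, since in the infinite-dimensional case one must invoke Bessel's inequality and completeness rather than just sum finitely many terms; once $\varphi_0$ is in hand, the orthogonality-plus-Pythagoras argument is routine.
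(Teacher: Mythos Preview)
Your proof is correct and follows essentially the same route as the paper: establish $\varphi - \varphi_0 \in E^{\perp}$ via the orthonormality relations and continuity of the scalar product, then apply Pythagoras to conclude $\varphi_0$ minimizes the distance. Your version is in fact slightly more careful, since you explicitly verify convergence of the series via Bessel's inequality and completeness (the paper tacitly assumes this), and you invoke Theorem~\ref{hilbert_spaces:onb:thm:best_approx} for uniqueness rather than recomputing the coefficients of a putative second minimizer as the paper does.
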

\begin{proof}
	It is easy to show that $\varphi - \varphi_0$ is orthogonal to any $\psi = \sum_{k \in \mathcal{I}} \lambda_k \, \varphi_k \in E$: we focus on the more difficult case when $E$ is not finite-dimensional. Then, we have to approximate $\varphi_0$ and $\psi$ by finite linear combinations and take limits. We call $\varphi_0^{(n)} := \sum_{k = 1}^n \sscpro{\varphi_k}{\varphi} \, \varphi_k$ and $\psi^{(m)} := \sum_{l = 1}^m \lambda_l \, \varphi_l$. With that, we have 
	\begin{align*}
		\bscpro{\varphi - \varphi_0^{(n)}}{\psi^{(m)}} &= \Bscpro{\varphi - \mbox{$\sum_{k = 1}^n$} \sscpro{\varphi_k}{\varphi} \, \varphi_k}{\mbox{$\sum_{l = 1}^m$} \lambda_l \, \varphi_l} \
		\\
		&
		= \sum_{l = 1}^m \lambda_l \, \sscpro{\varphi}{\varphi_l} - \sum_{k = 1}^n \sum_{l = 1}^m \lambda_l \, \sscpro{\varphi_k}{\varphi}^* \, \sscpro{\varphi_k}{\varphi_l} 
		% \\
		% &= \sum_{l = 1}^m \lambda_l \, \sscpro{\varphi}{\varphi_l} - \sum_{k = 1}^n \sum_{l = 1}^m \lambda_l \, \sscpro{\varphi}{\varphi_k} \, \delta_{kl} 
		\\
		&= \sum_{l = 1}^m \lambda_l \, \sscpro{\varphi}{\varphi_l} \, \Bigl ( 1 - \mbox{$\sum_{k = 1}^n$} \, \delta_{kl} \Bigr ) 
		. 
	\end{align*}
	By continuity of the scalar product, Corollary~\ref{hilbert_spaces:onb:cor:continuity_scalar_product}, we can take the limit $n,m \rightarrow \infty$. The term in parentheses containing the sum is $0$ exactly when $l \in \{ 1 , \ldots , m \}$ and $1$ otherwise. Specifically, if $n \geq m$, the right-hand side vanishes identically. Hence, we have 
	\begin{align*}
		\bscpro{\varphi - \varphi_0}{\psi} &= \lim_{n,m \rightarrow \infty} \bscpro{\varphi - \varphi_0^{(n)}}{\psi^{(m)}} 
		= 0 
		, 
	\end{align*}
	in other words $\varphi - \varphi_0 \in E^{\perp}$. This, in turn, implies by the Pythagorean theorem that 
		\begin{align*}
			\norm{\varphi - \psi}^2 &= \norm{\varphi - \varphi_0}^2 + \norm{\varphi_0 - \psi}^2 
			\geq \norm{\varphi - \varphi_0}^2 
		\end{align*}
		and hence $\norm{\varphi - \varphi_0} = d(\varphi,E)$. Put another way, $\varphi_0$ is \emph{an} element of best approximation. Let us now show uniqueness. Assume, there exists another element of best approximation $\varphi'_0 = \sum_{k \in \mathcal{I}} \lambda'_k \, \varphi_k$. Then we know by repeating the previous calculation backwards that $\varphi - \varphi'_0 \in E^{\perp}$ and the scalar product with respect to any of the basis vectors $\varphi_k$ which span $E$ has to vanish, 
		\begin{align*}
			0 = \bscpro{\varphi_k}{\varphi - \varphi'_0} &= \sscpro{\varphi_k}{\varphi} - \sum_{l \in \mathcal{I}} {\lambda'_l} \, \sscpro{\varphi_k}{\varphi_l} 
			= \sscpro{\varphi_k}{\varphi} - \sum_{l \in \mathcal{I}} {\lambda'_l} \, \delta_{kl} 
			\\
			&= \sscpro{\varphi_k}{\varphi} - {\lambda'_k} 
			. 
		\end{align*}
		This means the coefficients with respect to the basis $\{ \varphi_k \}_{k \in \mathcal{I}}$ all agree with those of $\varphi_0$. Hence, the element of approximation is unique, $\varphi_0 = \varphi_0'$, and given by the projection of $\varphi$ onto $E$. 
\end{proof}
\begin{theorem}\label{hilbert_spaces:onb:thm:direct_sum_decomp}
	Let $E$ be a closed linear subspace of a Hilbert space $\Hil$. Then 
	\begin{enumerate}[(i)]
		\item $\Hil = E \oplus E^{\perp}$, \ie every vector $\varphi \in \Hil$ can be uniquely decomposed as $\varphi = \psi + \psi^{\perp}$ with $\psi \in E$, $\psi^{\perp} \in E^{\perp}$. 
		\item $E^{\perp \perp} = E$. 
	\end{enumerate}
\end{theorem}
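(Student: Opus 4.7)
My plan is to derive (i) from the best approximation theorem (Theorem~\ref{hilbert_spaces:onb:thm:best_approx}) and then use (i) to get (ii). For (i), given $\varphi \in \Hil$, since $E$ is a closed subspace it is in particular a closed convex subset of $\Hil$, so there exists a unique $\psi \in E$ with $\snorm{\varphi - \psi} = d(\varphi,E)$. Set $\psi^{\perp} := \varphi - \psi$, so $\varphi = \psi + \psi^{\perp}$ is an admissible decomposition; the work is in checking that $\psi^{\perp} \in E^{\perp}$ and that the decomposition is unique.

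The key step -- and what I expect to be the only nontrivial part of the argument -- is to show $\psi^{\perp} \perp E$ via a variational argument. For any $\chi \in E$ and $t \in \C$, the vector $\psi + t\chi$ lies in $E$, so by minimality
\begin{align*}
\bnorm{\psi^{\perp}}^2 = \bnorm{\varphi - \psi}^2 \leq \bnorm{\varphi - (\psi + t\chi)}^2 = \bnorm{\psi^{\perp}}^2 - 2\Re\bigl( t \, \sscpro{\psi^{\perp}}{\chi} \bigr) + \abs{t}^2 \snorm{\chi}^2.
\end{align*}
Writing $a := \sscpro{\psi^{\perp}}{\chi}$ and choosing $t = s \bar a$ for $s > 0$, the inequality reduces to $s \snorm{\chi}^2 \geq 2$, which fails for small $s$ unless $a = 0$. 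Thus $\sscpro{\psi^{\perp}}{\chi} = 0$ for all $\chi \in E$, so $\psi^{\perp} \in E^{\perp}$. For uniqueness of the decomposition, if $\varphi = \psi_1 + \psi_1^{\perp} = \psi_2 + \psi_2^{\perp}$ with $\psi_j \in E$ and $\psi_j^{\perp} \in E^{\perp}$, then $\psi_1 - \psi_2 = \psi_2^{\perp} - \psi_1^{\perp}$ lies in $E \cap E^{\perp}$; any vector in this intersection is orthogonal to itself and hence zero by positive definiteness of the scalar product.

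For (ii), one inclusion $E \subseteq E^{\perp\perp}$ is immediate from the definition of $E^{\perp}$: every vector in $E$ is orthogonal to every vector in $E^{\perp}$. For the reverse, take $\varphi \in E^{\perp\perp}$ and apply (i) to decompose $\varphi = \psi + \psi^{\perp}$ with $\psi \in E$ and $\psi^{\perp} \in E^{\perp}$. Since $\psi \in E \subseteq E^{\perp\perp}$, the difference $\psi^{\perp} = \varphi - \psi$ lies in $E^{\perp\perp}$ as well. But $\psi^{\perp} \in E^{\perp}$, so $\psi^{\perp}$ is orthogonal to itself, hence $\psi^{\perp} = 0$ and $\varphi = \psi \in E$. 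Note that the closedness of $E$ enters crucially only in invoking Theorem~\ref{hilbert_spaces:onb:thm:best_approx} in (i); without it the best approximation need not exist and the whole scheme collapses.
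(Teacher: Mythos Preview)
Your proof is correct and follows the same overall route as the paper: invoke the best-approximation theorem for closed convex sets to produce $\psi \in E$, show the residual $\psi^{\perp} = \varphi - \psi$ lies in $E^{\perp}$, obtain uniqueness from $E \cap E^{\perp} = \{0\}$, and then derive (ii) from (i) exactly as you do. The one place you differ is in justifying $\psi^{\perp} \in E^{\perp}$: the paper simply points back to the proof of Theorem~\ref{hilbert_spaces:onb:thm:best_approx_explicit}, which, however, was stated under the additional hypothesis that $E$ admits a countable orthonormal basis. Your direct variational argument (perturb $\psi$ by $t\chi \in E$ and exploit minimality of $\snorm{\varphi - \psi}$) is self-contained and sidesteps that extra assumption, so it is actually the cleaner way to close this step.
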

\begin{proof}
	\begin{enumerate}[(i)]
		\item By Theorem~\ref{hilbert_spaces:onb:thm:best_approx}, for each $\varphi \in \Hil$, there exists $\varphi_0 \in E$ such that $d(\varphi,E) = d(\varphi,\varphi_0)$. From the proof of the previous theorem, we see that $\varphi_0^{\perp} := \varphi - \varphi_0 \in E^{\perp}$. Hence, $\varphi = \varphi_0 + \varphi_0^{\perp}$ is \emph{a} decomposition of $\varphi$. To show that it is unique, assume $\varphi'_0 + {\varphi'_0}^{\perp} = \varphi = \varphi_0 + \varphi_0^{\perp}$ is another decomposition. Then by subtracting, we are led to conclude that 
		\begin{align*}
			E \ni \varphi'_0 - \varphi_0 = \varphi_0^{\perp} - {\varphi'_0}^{\perp} \in E^{\perp} 
		\end{align*}
		holds. On the other hand, $E \cap E^{\perp} = \{ 0 \}$ and thus $\varphi_0 = {\varphi'_0}$ and $\varphi_0^{\perp} = {\varphi'_0}^{\perp}$, the decomposition is unique. 
		\item It is easy to see that $E \subseteq E^{\perp \perp}$. Let $\tilde{\varphi} \in E^{\perp \perp}$. By the same arguments as above, we can decompose $\tilde{\varphi} \in E^{\perp \perp} \subseteq \Hil$ into 
		\begin{align*}
			\tilde{\varphi} = \tilde{\varphi}_0 + {\tilde{\varphi}}_0^{\perp} 
		\end{align*}
		with $\tilde{\varphi}_0 \in E \subseteq E^{\perp \perp}$ and ${\tilde{\varphi}}_0^{\perp} \in E^{\perp}$. Hence, ${\tilde{\varphi}} - \tilde{\varphi}_0 \in E^{\perp \perp} \cap E^{\perp} = (E^{\perp})^{\perp} \cap E^{\perp} = \{ 0 \}$ and thus $\tilde{\varphi} = \tilde{\varphi}_0 \in E$. 
	\end{enumerate}
\end{proof}
Now we are in a position to prove the following important Proposition: 
\begin{proposition}
	A Hilbert space $\Hil$ is separable if and only if there exists a countable orthonormal basis. 
\end{proposition}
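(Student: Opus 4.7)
The plan is to prove the two implications separately, using Gram-Schmidt orthogonalization for the harder direction and Theorem~\ref{hilbert_spaces:onb:thm:best_approx_explicit} together with Theorem~\ref{hilbert_spaces:onb:thm:direct_sum_decomp} to upgrade a total orthonormal sequence to a basis in the sense of Definition~\ref{spaces:Hilbert:defn:ONB}.

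For the implication $(\Leftarrow)$, suppose $\{ \varphi_k \}_{k \in \mathcal{I}}$ is a countable orthonormal basis. I would exhibit the countable dense set
\begin{align*}
	D := \Bigl \{ \mbox{$\sum_{k \in F}$} \alpha_k \, \varphi_k \; \big \vert \; F \subset \mathcal{I} \mbox{ finite}, \, \alpha_k \in \Q + \ii \Q \Bigr \}.
\end{align*}
Countability of $D$ follows from $\mathcal{I}$ being countable and the fact that a countable union of countable sets is countable. To show density, pick $\psi \in \Hil$. Since $\{\varphi_k\}$ is an ONB, the partial sums $\psi_n = \sum_{k = 1}^n \sscpro{\varphi_k}{\psi} \, \varphi_k$ converge to $\psi$ in norm. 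For each $n$, approximate the (finitely many) coefficients $\sscpro{\varphi_k}{\psi}$ by $\alpha_k \in \Q + \ii \Q$; by the Pythagorean theorem the resulting element of $D$ is within an arbitrarily small distance of $\psi_n$, which gets us within $\eps$ of $\psi$.

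For the implication $(\Rightarrow)$, suppose $\Hil$ is separable with a countable dense subset $\{ v_n \}_{n \in \N}$. The idea is to apply Gram-Schmidt, but first I would thin out the sequence to obtain a linearly independent subsequence $\{ v_{n_k} \}_{k \in \N}$: inductively keep $v_n$ if and only if it is not in the (finite-dimensional, hence closed) linear span of the previously retained vectors, so that $\mathrm{span} \, \{ v_{n_k} \}_{k \in \N} = \mathrm{span} \, \{ v_n \}_{n \in \N}$. Then apply the usual Gram-Schmidt procedure to $\{ v_{n_k} \}$ to produce an orthonormal sequence $\{ \varphi_k \}_{k \in \N}$ whose linear span equals $\mathrm{span} \, \{ v_{n_k} \}_{k \in \N}$.

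Let $E := \overline{\mathrm{span} \, \{ \varphi_k \}_{k \in \N}}$. By construction, $E$ contains $\{ v_n \}_{n \in \N}$, which is dense in $\Hil$, so $E = \Hil$. It remains to check that $\{ \varphi_k \}$ is an orthonormal basis in the sense of Definition~\ref{spaces:Hilbert:defn:ONB}, i.e. that $\psi = \sum_{k \in \N} \sscpro{\varphi_k}{\psi} \, \varphi_k$ for every $\psi \in \Hil$. For this I would invoke Theorem~\ref{hilbert_spaces:onb:thm:best_approx_explicit}: the element of best approximation of $\psi$ in the closed subspace $E = \Hil$ is $\varphi_0 = \sum_{k \in \N} \sscpro{\varphi_k}{\psi} \, \varphi_k$. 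But trivially $\psi$ itself minimizes $\snorm{\psi - \cdot}$ over $\Hil$, and by uniqueness in that theorem $\varphi_0 = \psi$. (Equivalently, Theorem~\ref{hilbert_spaces:onb:thm:direct_sum_decomp} gives $\Hil = E \oplus E^{\perp}$ with $E = \Hil$, hence $E^{\perp} = \{ 0 \}$, so the ``remainder'' $\psi - \varphi_0 \in E^{\perp}$ must vanish.)

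The main technical nuisance is the Gram-Schmidt step, specifically the careful extraction of a linearly independent subsequence from $\{ v_n \}$ so that the procedure is well-defined; the rest is a clean application of results already established in the excerpt.
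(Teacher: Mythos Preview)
Your proof is correct and follows essentially the same strategy as the paper: the rational-coefficient argument for $(\Leftarrow)$, and Gram--Schmidt applied to the countable dense set for $(\Rightarrow)$. Your handling of the span preservation (thinning in order to a linearly independent subsequence, so that $\mathrm{span}\{\varphi_k\} = \mathrm{span}\{v_n\}$ by construction) and your explicit invocation of Theorem~\ref{hilbert_spaces:onb:thm:best_approx_explicit} to verify the basis property are in fact more careful than the paper's own argument, which picks elements of $\mathcal{D} \setminus E_n$ arbitrarily and closes with a somewhat loose ``terminated prematurely'' contradiction.
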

\begin{proof}
	$\Leftarrow$: 
	The set generated by the orthonormal basis $\{ \varphi_j \}_{j \in \mathcal{I}}$, $\mathcal{I}$ countable, and coefficients $z = q + i p$, $q , p \in \Q$, is dense in $\Hil$, 
	\begin{align*}
		\Bigl \{ \mbox{$\sum_{j = 1}^n$} z_j \varphi_j \in \Hil \; \big \vert \; \N \ni n \leq \abs{\mathcal{I}}, \; \varphi_j \in \{ \varphi_k \}_{k \in \N} , \; z_j = q_j + i p_j , \; q_j , p_j \in \Q \Bigr \} 
		. 
	\end{align*}
	\medskip
	
	\noindent
	$\Rightarrow$: Assume there exists a countable dense subset $\mathcal{D}$, \ie $\overline{\mathcal{D}} = \Hil$. If $\Hil$ is finite dimensional, the induction terminates after finitely many steps and the proof is simpler. Hence, we will assume $\Hil$ to be infinite dimensional. Pick a vector $\tilde{\varphi}_1 \in \mathcal{D} \setminus \{ 0 \}$ and normalize it. The normalized vector is then called $\varphi_1$. Note that $\varphi_1$ need not be in $\mathcal{D}$. By Theorem~\ref{hilbert_spaces:onb:thm:direct_sum_decomp}, we can split any $\psi \in \mathcal{D}$ into $\psi_1$ and $\psi_1^{\perp}$ such that $\psi_1 \in \mathrm{span} \, \{ \varphi_1 \} := E_1$, $\psi_1^{\perp} \in \mathrm{span} \, \{ \varphi_1 \}^{\perp} := E_1^{\perp}$ and 
	\begin{align*}
		\psi = \psi_1 + \psi_1^{\perp} 
		. 
	\end{align*}
	pick a second $\tilde{\varphi}_2 \in \mathcal{D} \setminus E_1$ (which is non-empty). Now we apply Theorem~\ref{hilbert_spaces:onb:thm:best_approx_explicit} (which is in essence Gram-Schmidt orthonormalization) to $\tilde{\varphi}_2$, \ie we pick the part which is orthogonal to $\varphi_1$, 
	\begin{align*}
		\varphi'_2 := \tilde{\varphi}_2 - \sscpro{\varphi_1}{\tilde{\varphi}_2} \, \varphi_2
	\end{align*}
	and normalize to $\varphi_2$, 
	\begin{align*}
		\varphi_2 := \frac{\varphi'_2}{\snorm{\varphi'_2}} 
		. 
	\end{align*}
	This defines $E_2 := \mathrm{span} \, \{ \varphi_1 , \varphi_2 \}$ and $\Hil = E_2 \oplus E_2^{\perp}$. 
	
	Now we proceed by induction: assume we are given $E_n = \mathrm{span} \, \{ \varphi_1 , \ldots , \varphi_n \}$. Take $\tilde{\varphi}_{n+1} \in  \mathcal{D} \setminus E_n$ and apply Gram-Schmidt once again to yield $\varphi_{n+1}$ which is the obtained from normalizing the vector 
	\begin{align*}
		\varphi'_{n+1} := \tilde{\varphi}_{n+1} - \sum_{k = 1}^n \sscpro{\varphi_k}{\tilde{\varphi}_{n+1}} \, \varphi_k
		. 
	\end{align*}
	This induction yields an orthonormal sequence $\{ \varphi_n \}_{n \in \N}$ which is by definition an orthonormal basis of $E_{\infty} := \overline{\mathrm{span} \, \{ \varphi_n \}_{n \in \N}}$ a closed subspace of $\Hil$. If $E_{\infty} \subsetneq \Hil$, we can split the Hilbert space into $\Hil = E_{\infty} \oplus E_{\infty}^{\perp}$. Then either $\mathcal{D} \cap (\Hil \setminus E_{\infty}) = \emptyset$ -- in which case $\mathcal{D}$ cannot be dense in $\Hil$ -- or $\mathcal{D} \cap (\Hil \setminus E_{\infty}) \neq \emptyset$. But then we have terminated the induction prematurely. 
\end{proof}
%
% subsection Orthonormal bases and orthogonal subspaces (end)

\subsection{Best approximation on $L^2([-\pi,+\pi])$ using $\{ \e^{+ \ii n x} \}_{n \in \Z}$} % (fold)
Let us consider the free Schrödinger equation 
\begin{align}
	\ii \, \partial_t \psi(t) &= - \partial_x^2 \psi(t) 
	\, , 
	&&
	\psi(0) = \psi_0 \in L^2([-\pi,+\pi]) 
	\, , 
	\label{spaces:eqn:free_Schroedinger_interval}
\end{align}
for a suitable initial condition $\psi_0$ (we will be more specific in a moment). Since $[-\pi,+\pi]$ has a boundary, we need to impose boundary conditions. We pick periodic boundary conditions, \ie we consider functions for which $\psi(-\pi) = \psi(+\pi)$. It is often convenient to think of functions with periodic boundary conditions as $2\pi$-periodic functions on $\R$, \ie $\psi(x + 2\pi) = \psi(x)$. 

Periodic functions are best analyzed using their \emph{Fourier series} (which we will discuss in detail in Chapter~\ref{Fourier}), \ie their expansion in terms of the exponentials $\{ \e^{+ \ii n x} \}_{n \in \Z}$. To simplify computations, let us choose a convenient prefactor for the scalar product, 
\begin{align*}
	\scpro{\varphi}{\psi} := \frac{1}{2\pi} \int_{-\pi}^{+\pi} \dd x \, \varphi(x)^* \, \psi(x) 
	. 
\end{align*}
Then a quick computation yields that $\{ \e^{+ \ii n x} \}_{n \in \Z}$ is indeed an orthonormal set: 
\begin{align*}
	\scpro{\e^{+ \ii j x}}{\e^{+ \ii n x}} &= \frac{1}{2\pi} \int_{-\pi}^{+\pi} \dd x \, \e^{+ \ii (n - j) x} 
	= 
	\begin{cases}
		1 & j = n \\
		0 & j \neq n \\
	\end{cases}
\end{align*}
As we will see later, this set of orthonormal vectors is also \emph{basis}, and we can expand \emph{any} vector $\psi$ in terms of its Fourier components, 
\begin{align*}
	\psi(x) &= \sum_{n \in \Z} \scpro{\e^{+ \ii n x}}{\psi} \, \e^{+ \ii n x} 
	. 
\end{align*}
Using the \emph{product ansatz} $\psi(t,x) = \tau(t) \, \varphi(x)$ (separation of variables) from Chapter~\ref{spaces:boundary_value_problems}, we obtain two coupled equations: 
\begin{align*}
	\ii \, \dot{\tau}(t) \, \varphi(x) &= - \tau(t) \varphi''(x) 
	\; \; \Longleftrightarrow \; \; 
	\ii \, \frac{\dot{\tau}(t)}{\tau(t)} = - \frac{\varphi''(x)}{\varphi(x)} = \lambda \in \C
\end{align*}
The periodic boundary conditions $\varphi(-\pi) = \varphi(+\pi)$ as well as the condition $\varphi \in L^2([-\pi,+\pi])$ eliminates many choices of $\lambda \in \C$. The equation for $\varphi$ is just a harmonic oscillator equation, and the periodicity requirement means that only $\lambda = n^2$ with $n \in \N_0 = \{ 0 , 1 , \ldots \}$ are admissible. The equation for $\tau$ can be obtained by elementary integration, 
\begin{align*}
	\ii \, \dot{\tau}_n(t) &= n^2 \tau_n(t) 
	\; \; \Rightarrow \; \; \tau_n(t) = \tau_n(0) \, \e^{- \ii n^2 t} 
	. 
\end{align*}
Hence, the solution to $\lambda = n^2$ is a scalar multiple of $\e^{- \ii n^2 t} \, \e^{+ \ii n x}$, and the solution to \eqref{spaces:eqn:free_Schroedinger_interval} can formally be written down as the sum 
\begin{align*}
	\psi(t,x) &= \sum_{n \in \Z} \e^{- \ii n^2 t} \, \scpro{\e^{+ \ii n x}}{\psi_0} \, \e^{+ \ii n x} 
	. 
\end{align*}
\emph{A priori} it is not clear in what sense this sum converges. It turns out the correct notion of convergence is to require the finiteness of 
\begin{align*}
	\norm{\psi(t)}^2 &= \sum_{n \in \Z} \babs{\e^{- \ii n^2 t} \, \scpro{\e^{+ \ii n x}}{\psi_0}}^2 
	\\
	&= \sum_{n \in \Z} \babs{\scpro{\e^{+ \ii n x}}{\psi_0}}^2 
	= \norm{\psi_0}^2 < \infty
\end{align*}
This condition, however, is automatically satisfied since we have assumed $\psi_0$ is an element of $L^2([-\pi,+\pi])$ from the start. In other words, the dynamics even \emph{preserve} the $L^2$-norm. In the next section, we will see that this is not at all accidental, but by design. 

Now to the part about best approximation: one immediate idea is to allow only $n^2 \in \{ 0 , 1 , \ldots , N \}$, because for $\sum_{n \in \Z} \babs{\scpro{\e^{+ \ii n x}}{\psi_0}}^2$ to converge, a necessary condition is that $\babs{\scpro{\e^{+ \ii n x}}{\psi_0}}^2 \to 0$ as $n \to \infty$. So for any upper bound for the error $\eps > 0$ (which we shall also call \emph{precision}), we can find $N(\eps)$ so that the initial condition can be approximated in norm up to an error $\eps$, 
\begin{align*}
	\norm{\psi_0}^2 - \sum_{n = -N(\eps)}^{+N(\eps)} \babs{\scpro{\e^{+ \ii n x}}{\psi_0}}^2 
	&= \sum_{\abs{n} > N(\eps)} \babs{\scpro{\e^{+ \ii n x}}{\psi_0}}^2
	< \eps^2 
	. 
\end{align*}
Since all the vectors $\{ \e^{+ \ii n x} \}_{n \in \Z}$ are orthonormal, the vector of best approximation is 
\begin{align*}
	\psi_{\mathrm{best}}(x) = \sum_{n = -N(\eps)}^{+N(\eps)} \scpro{\e^{+ \ii n x}}{\psi_0} \, \e^{+ \ii n x} 
	, 
\end{align*}
and we can repeat the above arguments to see that the time-evolved $\psi_{\mathrm{best}}(t)$ stays $\eps$-close to $\psi(t)$ \emph{for all times} in norm, 
\begin{align*}
	\norm{\psi(t) - \psi_{\mathrm{best}}(t)}^2 &= \sum_{\abs{n} > N(\eps)} \babs{\e^{- \ii n^2 t} \, \scpro{\e^{+ \ii n x}}{\psi_0}}^2 
	\\
	&= \sum_{\abs{n} > N(\eps)} \babs{\scpro{\e^{+ \ii n x}}{\psi_0}}^2 
	< \eps^2 
	. 
\end{align*}
In view of the Grönwall lemma~\ref{odes:lem:Groenwall}, one may ask why the two are close for all times. The crucial ingredient here is \emph{linearity} of \eqref{spaces:eqn:free_Schroedinger_interval}. 
% subsection Best approximation on $L^2([-\pi,+\pi])$ using $\{ \e^{+ \ii n x} \}_{n \in \Z}$ (end)
% section Hilbert spaces (end)

\section{Linear functionals, dual space and weak convergence} % (fold)
\label{hilbert_spaces:dual_space}

A very important notion is that of a functional. We have already gotten to know the free energy functional 
\begin{align*}
	E_{\mathrm{free}} : &\mathcal{D}(E_{\mathrm{free}}) \subset L^2(\R^n) \longrightarrow [0,+\infty) \subset \C 
	, 
	\\
	&\varphi \mapsto E_{\mathrm{free}}(\varphi) = \frac{1}{2m} \sum_{l = 1}^d \bscpro{(- i \hbar \partial_{x_l} \varphi)}{(- i \hbar \partial_{x_l} \varphi)} 
	. 
\end{align*}
This functional, however, is not linear, and it is not defined for all $\varphi \in L^2(\R^n)$. Let us restrict ourselves to a smaller class of functionals: 
\begin{definition}[Bounded linear functional]
	Let $\mathcal{X}$ be a normed space. Then a map 
	\begin{align*}
		 L : \mathcal{X} \longrightarrow \C 
	\end{align*}
	is a bounded linear functional if and only if 
	\begin{enumerate}[(i)]
		\item there exists $C>0$ such that $\abs{L(x)} \leq C \norm{x}$ and 
		\item $L(x + \mu y) = L(x) + \mu L(y)$ 
	\end{enumerate}
	hold for all $x,y \in \mathcal{X}$ and $\mu \in \C$. 
\end{definition}
A very basic fact is that boundedness of a linear functional is equivalent to its continuity. \marginpar{2013.10.17}
\begin{theorem}\label{hilbert_spaces:dual_space:thm:bounded_functionals_continuous}
	Let $L : \mathcal{X} \longrightarrow \C$ be a linear functional on the normed space $\mathcal{X}$. Then the following statements are equivalent: 
	\begin{enumerate}[(i)]
		\item $L$ is continuous at $x_0 \in \mathcal{X}$. 
		\item $L$ is continuous. 
		\item $L$ is bounded. 
	\end{enumerate}
\end{theorem}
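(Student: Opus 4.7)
The plan is to establish the cyclic chain of implications (i) $\Rightarrow$ (iii) $\Rightarrow$ (ii) $\Rightarrow$ (i). The easy direction (ii) $\Rightarrow$ (i) is immediate by definition, so the content lies in the other two implications, and in particular in bootstrapping continuity at a \emph{single} point to a global bound using linearity.

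First I would dispatch the implication (iii) $\Rightarrow$ (ii). Here linearity does all the work: for any $x,y \in \mathcal{X}$ one computes
\begin{align*}
    \babs{L(x) - L(y)} = \babs{L(x-y)} \leq C \, \norm{x-y}
    ,
\end{align*}
so $L$ is Lipschitz and in particular continuous at every point of $\mathcal{X}$. This also shows (ii) $\Rightarrow$ (i) trivially, since continuity everywhere in particular gives continuity at $x_0$.

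The main step is (i) $\Rightarrow$ (iii), and this is where I expect the only real argument. Given continuity at $x_0$, I would fix $\eps = 1$ and choose $\delta > 0$ such that $\snorm{y - x_0} < \delta$ implies $\sabs{L(y) - L(x_0)} < 1$. For an arbitrary nonzero $x \in \mathcal{X}$ the idea is to rescale $x$ so that it is a small perturbation of $x_0$ on which we already have control; concretely, set
\begin{align*}
    y := x_0 + \tfrac{\delta}{2 \snorm{x}} \, x
    ,
\end{align*}
so that $\snorm{y - x_0} = \nicefrac{\delta}{2} < \delta$. Linearity of $L$ then yields
\begin{align*}
    L(y) - L(x_0) = \tfrac{\delta}{2 \snorm{x}} \, L(x)
    ,
\end{align*}
and combining this with the bound $\sabs{L(y) - L(x_0)} < 1$ gives $\sabs{L(x)} \leq \tfrac{2}{\delta} \, \snorm{x}$, which is the desired boundedness with constant $C = \nicefrac{2}{\delta}$. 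The case $x = 0$ is immediate from $L(0) = 0$.

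The only subtle point is the rescaling step in (i) $\Rightarrow$ (iii): one has to recognize that continuity at a single point, combined with linearity, transports control from a small neighborhood of $x_0$ to \emph{all} of $\mathcal{X}$ via homogeneity. Once that trick is identified, the rest is bookkeeping. No completeness of $\mathcal{X}$ is needed; the result holds for any normed space.
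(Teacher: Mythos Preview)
Your proof is correct and follows essentially the same approach as the paper: the implication (iii) $\Rightarrow$ (ii) via the Lipschitz estimate $\sabs{L(x)-L(y)} \leq C\snorm{x-y}$ is identical, and your rescaling argument for (i) $\Rightarrow$ (iii) is the same trick the paper uses (the paper first passes from continuity at $x_0$ to continuity at $0$ and then rescales there, whereas you rescale directly around $x_0$, but this is a cosmetic difference). The only organizational distinction is that the paper packages the chain as (i) $\Leftrightarrow$ (ii), (ii) $\Rightarrow$ (iii), (iii) $\Rightarrow$ (ii), while you run the cycle (i) $\Rightarrow$ (iii) $\Rightarrow$ (ii) $\Rightarrow$ (i).
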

\begin{proof}
	(i) $\Leftrightarrow$ (ii): This follows immediately from the linearity. 
	\medskip
	
	\noindent
	(ii) $\Rightarrow$ (iii): Assume $L$ to be continuous. Then it is continuous at $0$ and for $\eps = 1$, we can pick $\delta > 0$ such that %
	\begin{align*}
		\abs{L(x)} \leq \eps = 1 
	\end{align*}
	for all $x \in \mathcal{X}$ with $\norm{x} \leq \delta$. By linearity, this implies for any $y \in \mathcal{X} \setminus \{ 0 \}$ that 
	\begin{align*}
		\babs{L \bigl ( \tfrac{\delta}{\norm{y}} y \bigr )} = \tfrac{\delta}{\norm{y}} \, \babs{L(y)} \leq 1 
		. 
	\end{align*}
	Hence, $L$ is bounded with bound $\nicefrac{1}{\delta}$, 
	\begin{align*}
		\babs{L(y)} \leq \tfrac{1}{\delta} \norm{y} 
		. 
	\end{align*}
	(iii) $\Rightarrow$ (ii): Conversely, if $L$ is bounded by $C > 0$, 
	\begin{align*}
		\babs{L(x) - L(y)} \leq C \norm{x-y} 
		, 
	\end{align*}
	holds for all $x,y \in \mathcal{X}$. This means, $L$ is continuous: for $\eps > 0$ pick $\delta = \nicefrac{\eps}{C}$ so that 
	\begin{align*}
		\babs{L(x) - L(y)} \leq C \norm{x-y} \leq C \tfrac{\eps}{C} = \eps 
	\end{align*}
	holds for all $x,y \in \mathcal{X}$ such that $\norm{x - y} \leq \nicefrac{\eps}{C}$. 
\end{proof}
\begin{definition}[Dual space]\label{spaces:defn:dual_space}
	Let $\mathcal{X}$ be a normed space. The dual space $\mathcal{X}'$ is the vector space of bounded linear functionals endowed with the norm 
	\begin{align*}
		\norm{L}_* := \sup_{x \in \mathcal{X} \setminus \{ 0 \}} \frac{\abs{L(x)}}{\norm{x}} = \sup_{\substack{x \in \mathcal{X} \\ \norm{x} = 1}} \abs{L(x)} 
		. 
	\end{align*}
\end{definition}
Independently of whether $\mathcal{X}$ is complete, $\mathcal{X}'$ is a Banach space. 
\begin{proposition}\label{hilbert_spaces:dual_space:prop:completeness_Xstar}
	The dual space to a normed linear space $\mathcal{X}$ is a Banach space. 
\end{proposition}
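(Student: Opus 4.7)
The plan is to verify that $\snorm{\cdot}_*$ really is a norm (mostly bookkeeping, using the triangle inequality and homogeneity of $\abs{\cdot}$ on $\C$) and then establish completeness by the standard \emph{pointwise-limit-then-uniformity} argument familiar from the proof that $\BCont$ spaces are complete.

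First I would pick an arbitrary Cauchy sequence $(L_n)_{n \in \N}$ in $\mathcal{X}'$ and note that for every fixed $x \in \mathcal{X}$, the estimate
\begin{align*}
	\babs{L_n(x) - L_m(x)} \leq \snorm{L_n - L_m}_* \, \snorm{x}
\end{align*}
shows that $\bigl ( L_n(x) \bigr )_{n \in \N}$ is a Cauchy sequence in $\C$. Since $\C$ is complete, this lets me define the candidate limit pointwise via
\begin{align*}
	L(x) := \lim_{n \to \infty} L_n(x),
	\qquad x \in \mathcal{X}.
\end{align*}
Linearity of $L$ is immediate from the linearity of each $L_n$ together with the fact that taking limits in $\C$ commutes with linear combinations.

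Next I would show $L$ is bounded, \ie $L \in \mathcal{X}'$. Every Cauchy sequence in a normed space is norm-bounded, so there exists $M > 0$ with $\snorm{L_n}_* \leq M$ for all $n$. Then for any $x \in \mathcal{X}$, continuity of $\abs{\cdot}$ gives
\begin{align*}
	\babs{L(x)} = \lim_{n \to \infty} \babs{L_n(x)} \leq M \, \snorm{x},
\end{align*}
so $L$ is bounded with $\snorm{L}_* \leq M$.

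Finally, to show $L_n \to L$ in $\snorm{\cdot}_*$, fix $\eps > 0$ and choose $N$ so that $\snorm{L_n - L_m}_* < \eps$ whenever $n, m \geq N$. For any such $n,m$ and any $x \in \mathcal{X}$ with $\snorm{x} = 1$,
\begin{align*}
	\babs{L_n(x) - L_m(x)} \leq \snorm{L_n - L_m}_* < \eps.
\end{align*}
Holding $n \geq N$ and $x$ fixed and sending $m \to \infty$, the left-hand side tends to $\babs{L_n(x) - L(x)}$ by pointwise convergence, so $\babs{L_n(x) - L(x)} \leq \eps$. Taking the supremum over all unit vectors $x$ then yields $\snorm{L_n - L}_* \leq \eps$ for $n \geq N$, which is exactly $L_n \to L$ in the dual norm. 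The only subtle step — and the one most likely to trip up a student — is the order of the limits here: one first passes to the limit $m \to \infty$ \emph{inside} the modulus, exploiting pointwise convergence, and \emph{afterwards} takes the supremum over $x$; doing it in the opposite order would require uniformity one does not yet have.
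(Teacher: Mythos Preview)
Your proof is correct and follows essentially the same approach as the paper: define the candidate limit pointwise using completeness of $\C$, check linearity, and then upgrade pointwise convergence to norm convergence by passing $m \to \infty$ inside the Cauchy estimate before taking the supremum. The only cosmetic difference is that you establish boundedness of $L$ via the uniform bound $\snorm{L_n}_* \leq M$ on the Cauchy sequence, whereas the paper deduces it from the estimate $\babs{(L - L_n)(x)} \leq \eps \snorm{x}$ together with $\snorm{L}_* \leq \snorm{L_n}_* + \eps$; both routes are standard and equivalent.
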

\begin{proof}
	Let $( L_n )_{n \in \N}$ be a Cauchy sequence in $\mathcal{X}'$, \ie a sequence for which 
	\begin{align*}
		\norm{L_k - L_j}_* \xrightarrow{k,j \rightarrow \infty} 0
		. 
	\end{align*}
	We have to show that $( L_n )_{n \in \N}$ converges to some $L \in \mathcal{X}'$. For any $\eps > 0$, there exists $N(\eps) \in \N$ such that 
	\begin{align*}
		\norm{L_k - L_j}_* < \eps 
	\end{align*}
	for all $k,j \geq N(\eps)$. This also implies that for any $x \in \mathcal{X}$, $\bigl ( L_n(x) \bigr )_{n \in \N}$ converges as well, 
	\begin{align*}
		\babs{L_k(x) - L_j(x)} \leq \bnorm{L_k - L_j}_* \, \norm{x} < \eps \norm{x} 
		. 
	\end{align*}
	The field of complex numbers is complete and $\bigl ( L_n(x) \bigr )_{n \in \N}$ converges to some $L(x) \in \C$. We now \emph{define} 
	\begin{align*}
		L(x) := \lim_{n \rightarrow \infty} L_n(x) 
	\end{align*}
	for any $x \in \mathcal{X}$. Clearly, $L$ inherits the linearity of the $(L_n)_{n \in \N}$. The map $L$ is also bounded: for any $\eps > 0$, there exists $N(\eps) \in \N$ such that $\norm{L_j - L_n}_* < \eps$ for all $j , n \geq N(\eps)$. Then 
	\begin{align*}
		\babs{(L - L_n)(x)} &= \lim_{j \rightarrow \infty} \babs{(L_j - L_n)(x)} \leq \lim_{j \to \infty} \bnorm{L_j - L_n}_* \, \bnorm{x} 
		\\
		&< \eps \norm{x}
	\end{align*}
	holds for all $n \geq N(\eps)$. Since we can write $L$ as $L = L_n + (L - L_n)$, we can estimate the norm of the linear map $L$ by $\snorm{L}_* \leq \snorm{L_n}_* + \eps < \infty$. This means $L$ is a bounded linear functional on $\mathcal{X}$. 
\end{proof}
In case of Hilbert spaces, the dual $\Hil'$ can be canonically identified with $\Hil$ itself: 
\begin{theorem}[Riesz' Lemma]\label{hilbert_spaces:dual_space:thm:Riesz_Lemma}
	Let $\Hil$ be a Hilbert space. Then for all $L \in \Hil'$ there exist $\psi_L \in \Hil$ such that 
	\begin{align*}
		L(\varphi) = \sscpro{\psi_L}{\varphi} 
		. 
	\end{align*}
	In particular, we have $\norm{L}_* = \snorm{\psi_L}$. 
\end{theorem}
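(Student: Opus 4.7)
The plan is to dispose of the trivial case $L = 0$ by taking $\psi_L = 0$, and otherwise construct $\psi_L$ using the orthogonal decomposition of $\Hil$ with respect to the kernel of $L$. Since $L$ is a bounded linear functional, Theorem~\ref{hilbert_spaces:dual_space:thm:bounded_functionals_continuous} tells me that $L$ is continuous, so $\ker L = L^{-1}(\{0\})$ is a closed linear subspace of $\Hil$. If $L \neq 0$, then $\ker L$ is a proper closed subspace, and by Theorem~\ref{hilbert_spaces:onb:thm:direct_sum_decomp} the decomposition $\Hil = \ker L \oplus (\ker L)^{\perp}$ guarantees the existence of a nonzero vector $\chi \in (\ker L)^{\perp}$.

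The key algebraic trick is to note that for any $\varphi \in \Hil$ the combination $L(\varphi) \, \chi - L(\chi) \, \varphi$ lies in $\ker L$ by direct computation, and hence is orthogonal to $\chi$. Taking the scalar product with $\chi$ and solving for $L(\varphi)$ gives
\begin{align*}
	L(\varphi) = \frac{L(\chi)}{\snorm{\chi}^2} \, \sscpro{\chi}{\varphi}
	= \sscpro{\psi_L}{\varphi}
	,
	&&
	\psi_L := \frac{\overline{L(\chi)}}{\snorm{\chi}^2} \, \chi
	,
\end{align*}
where the conjugate appears because the scalar product of this text is antilinear in its first argument. I would also quickly note uniqueness: if $\sscpro{\psi_L - \psi_L'}{\varphi} = 0$ for all $\varphi$, plug in $\varphi = \psi_L - \psi_L'$ and use definiteness of the scalar product.

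For the norm identity, one direction is an immediate application of the Cauchy-Schwarz inequality, which yields $\sabs{L(\varphi)} = \sabs{\sscpro{\psi_L}{\varphi}} \leq \snorm{\psi_L} \, \snorm{\varphi}$ and hence $\norm{L}_* \leq \snorm{\psi_L}$. For the reverse inequality I evaluate $L$ on the (normalized) vector $\psi_L$ itself, using $L(\psi_L) = \sscpro{\psi_L}{\psi_L} = \snorm{\psi_L}^2$, to obtain $\snorm{L}_* \geq \sabs{L(\psi_L)}/\snorm{\psi_L} = \snorm{\psi_L}$.

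The only genuine obstacle is the construction of $\psi_L$; the rest is bookkeeping. Specifically, the nontrivial input is the existence of a nonzero vector orthogonal to $\ker L$, which is precisely where completeness of $\Hil$ enters through the orthogonal decomposition theorem. Without this geometric fact, there is no candidate direction $\chi$ from which to build $\psi_L$, which explains why the analogous statement fails in incomplete pre-Hilbert spaces.
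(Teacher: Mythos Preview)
Your proof is correct and follows the same strategy as the paper's: use the orthogonal decomposition $\Hil = \ker L \oplus (\ker L)^{\perp}$ to find a nonzero $\chi \in (\ker L)^{\perp}$, define $\psi_L$ as the appropriate scalar multiple of $\chi$, then establish uniqueness and the norm identity exactly as you do. Your verification via the observation $L(\varphi)\,\chi - L(\chi)\,\varphi \in \ker L$ is slightly slicker than the paper's explicit decomposition of $\varphi$ into a kernel part and a multiple of $\chi$, but the underlying idea is identical.
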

\begin{proof}
	Let $\ker L := \bigl \{ \varphi \in \Hil \; \vert \; L(\varphi) = 0 \bigr \}$ be the kernel of the functional $L$ and as such is a closed linear subspace of $\Hil$. If $\ker L = \Hil$, then $0 \in \Hil$ is the associated vector, 
	\begin{align*}
		L(\varphi) = 0 = \sscpro{0}{\varphi} 
		. 
	\end{align*}
	So assume $\ker L \subsetneq \Hil$ is a proper subspace. Then we can split $\Hil = \ker L \oplus (\ker L)^{\perp}$. Pick $\varphi_0 \in (\ker L)^{\perp}$, \ie $L(\varphi_0) \neq 0$. Then define 
	\begin{align*}
		\psi_L := \frac{L(\varphi_0)}{\snorm{\varphi_0}^2} \, \varphi_0 
		. 
	\end{align*}
	We will show that $L(\varphi) = \sscpro{\psi_L}{\varphi}$. If $\varphi \in \ker L$, then $L(\varphi) = 0 = \sscpro{\psi_L}{\varphi}$. One easily shows that for $\varphi = \alpha \, \varphi_0$, $\alpha \in \C$, 
	\begin{align*}
		L(\varphi) &= L(\alpha \, \varphi_0) = \alpha \, L(\varphi_0) 
		\\
		&= \sscpro{\psi_L}{\varphi} = \Bscpro{\tfrac{L(\varphi_0)^*}{\snorm{\varphi_0}^2} \varphi_0}{\alpha \, \varphi_0} 
		\\
		&= \alpha \, L(\varphi_0) \, \frac{\sscpro{\varphi_0}{\varphi_0}}{\snorm{\varphi_0}^2} 
		= \alpha \, L(\varphi_0) 
		. 
	\end{align*}
	Every $\varphi \in \Hil$ can be written as 
	\begin{align*}
		\varphi = \biggl ( \varphi - \frac{L(\varphi)}{L(\varphi_0)} \, \varphi_0 \biggr ) + \frac{L(\varphi)}{L(\varphi_0)} \, \varphi_0 
		. 
	\end{align*}
	Then the first term is in the kernel of $L$ while the second one is in the orthogonal complement of $\ker L$. Hence, $L(\varphi) = \sscpro{\psi_L}{\varphi}$ for all $\varphi \in \Hil$. If there exists a second $\psi_L' \in \Hil$, then for any $\varphi \in \Hil$
	\begin{align*}
		0 = L(\varphi) - L(\varphi) = \sscpro{\psi_L}{\varphi} - \sscpro{\psi_L'}{\varphi} 
		= \sscpro{\psi_L - \psi_L'}{\varphi} 
		. 
	\end{align*}
	This implies $\psi_L' = \psi_L$ and thus the element $\psi_L$ is unique. 
	
	To show $\snorm{L}_* = \snorm{\psi_L}$, assume $L \neq 0$. Then, we have 
	\begin{align*}
		\norm{L}_* &= \sup_{\norm{\varphi} = 1} \babs{L(\varphi)} \geq \babs{L \bigl ( \tfrac{\psi_L}{\snorm{\psi_L}} \bigr )} 
		\\
		&= \bscpro{\psi_L}{\tfrac{\psi_L}{\snorm{\psi_L}}} = \snorm{\psi_L} 
		. 
	\end{align*}
	On the other hand, the Cauchy-Schwarz inequality yields 
	\begin{align*}
		\norm{L}_* &= \sup_{\norm{\varphi} = 1} \babs{L(\varphi)} 
		= \sup_{\norm{\varphi} = 1} \babs{\sscpro{\psi_L}{\varphi}} 
		\\
		&\leq \sup_{\norm{\varphi} = 1} \snorm{\psi_L} \snorm{\varphi} 
		= \snorm{\psi_L} 
		. 
	\end{align*}
	Putting these two together, we conclude $\snorm{L}_* = \snorm{\psi_L}$. 
\end{proof}
\begin{definition}[Weak convergence]\label{spaces:defn:weak_convergence}
	Let $\mathcal{X}$ be a Banach space. Then a sequence $(x_n)_{n \in \N}$ in $\mathcal{X}$ is said to converge weakly to $x \in \mathcal{X}$ if for all $L \in \mathcal{X}'$ 
	\begin{align*}
		L(x_n) \xrightarrow{n \rightarrow \infty} L(x) 
	\end{align*}
	holds. In this case, one also writes $x_n \rightharpoonup x$. 
\end{definition}
Weak convergence, as the name suggests, is really weaker than convergence in norm. The reason why “more” sequences converge is that, a sense, uniformity is lost. If $\mathcal{X}$ is a Hilbert space, then applying a functional is the same as computing the inner product with respect to some vector $\psi_L$. If the “non-convergent part” lies in the orthogonal complement to $\{ \psi_L \}$, then this particular functional does not notice that the sequence has not converged yet. 
\begin{example}
	Let $\Hil$ be a separable infinite-dimensional Hilbert space and $\{ \varphi_n \}_{n \in \N}$ an ortho\-normal basis. Then the sequence $(\varphi_n)_{n \in \N}$ does not converge in norm, for as long as $n \neq k$ 
	\begin{align*}
		\norm{\varphi_n - \varphi_k} = \sqrt{2} 
		, 
	\end{align*}
	but it does converge weakly to $0$: for any functional $L = \sscpro{\psi_L}{\cdot}$, we see that $\bigl ( \abs{L(\varphi_n)} \bigr )_{n \in \N}$ is a sequence in $\R$ that converges to $0$. Since $\{ \varphi_n \}_{n \in \N}$ is a basis, we can write 
	\begin{align*}
		\psi_L = \sum_{n = 1}^{\infty} \sscpro{\varphi_n}{\psi_L} \, \varphi_n 
	\end{align*}
	and for the sequence of partial sums to converge to $\psi_L$, the sequence of coefficients 
	\begin{align*}
		\bigl ( \sscpro{\varphi_n}{\psi_L} \bigr )_{n \in \N} = \bigl ( L(\varphi_n)' \bigr )_{n \in \N} 
	\end{align*}
	must converge to $0$. Since this is true for any $L \in \Hil'$, we have proven that $\varphi_n \rightharpoonup 0$ (\ie $\varphi_n \rightarrow 0$ weakly). 
\end{example}
In case of $\mathcal{X} = L^p(\Omega)$, there are three basic mechanisms for when a sequence of functions $(f_k)$ does not converge in norm, but only weakly: 
\begin{enumerate}[(i)]
	\item \emph{$f_k$ oscillates to death:} take $f_k(x) = \sin (kx)$ for $0 \leq x \leq 1$ and zero otherwise. 
	\item \emph{$f_k$ goes up the spout:} pick $g \in L^p(\R)$ and define $f_k(x) := k^{\nicefrac{1}{p}} \, g(kx)$. This sequence explodes near $x = 0$ for large $k$. 
	\item \emph{$f_k$ wanders off to infinity:} this is the case when for some $g \in L^p(\R)$, we define $f_k(x) := g(x+k)$. 
\end{enumerate}
All of these sequences converge weakly to $0$, but do not converge in norm. \marginpar{2013.10.22}
% section linear_functionals_dual_space_and_weak_convergence (end)
% chapter hilbert_spaces (end)
\chapter{Linear operators} % (fold)
\label{operators}
Linear operators appear quite naturally in the analysis of \emph{linear} PDEs. Many PDEs share a common structure: for instance, the Schrödinger equation 
\begin{align}
	\ii \, \partial_t \psi(t) &= \bigl ( - \Delta + V \bigr ) \psi(t) 
	\, , 
	&&
	\psi(0) = \psi_0 
	\, , 
	\label{operators:eqn:Schroedinger_equation}
\end{align}
and the heat equation 
\begin{align}
	\partial_t \psi(t) &= - \bigl ( - \Delta + V \bigr ) \psi(t) 
	\, , 
	&&
	\psi(0) = \psi_0 
	\, , 
	\label{operators:eqn:heat_equation}
\end{align}
both involve the same operator 
\begin{align*}
	H = - \Delta + V
\end{align*}
on the right-hand side. \emph{Formally}, we can solve these equations in closed form, 
\begin{align*}
	\psi(t) &= \e^{- \ii t H} \psi_0 
\end{align*}
solves the Schrödinger equation~\eqref{operators:eqn:Schroedinger_equation} while $\psi(t) = \e^{- t H} \psi_0$ solves the heat equation, because we can \emph{formally} compute the derivative 
\begin{align*}
	\ii \frac{\dd}{\dd t} \psi(t) &= \ii \frac{\dd}{\dd t} \bigl ( \e^{- \ii t H} \psi_0 \bigr ) 
	= - \ii^2 \, H \, \e^{- \ii t H} \psi_0 
	\\
	&= H \psi(t) 
\end{align*}
and verify that also the initial condition is satisfied, $\psi(0) = \e^{0} \, \psi_0 = \psi_0$. A priori, these are just formal manipulations, though; If $H$ were a matrix, we know how to give rigorous meaning to these expressions, but in case of operators on infinite-dimensional spaces, this is much more involved. However, we see that the dynamics of both, the Schrödinger and the heat equation are generated by the \emph{same} operator $H$. 

As we will see in Chapter~\ref{operators:Maxwell}, also the Maxwell equations can be recast in the form~\eqref{operators:eqn:Schroedinger_equation}. This gives one access to all the powerful tools for the analysis of Schrödinger operators in order to gain understanding of the dynamics of electromagnetic waves (light). 

Moreover, one can see that the selfadjointness of $H = H^*$ leads to $U(t) := \e^{- \ii t H}$ being a \emph{unitary} operator, an operator which preserves the norm on the Hilbert space (“$\e^{- \ii t H}$ is just a phase”). Lastly, states will be closely related to (orthogonal) \emph{projections} $P$ which satisfy $P^2 = P$. 
\medskip

\noindent
This chapter will give a zoology of operators and expound on three particularly important classes of operator: selfadjoint operators, orthogonal projections and unitary operators as well as their relations. Given the brevity, much of what we do will not be rigorous. In fact, some of these results (\eg Stone's theorem) require extensive preparation until one can understand all these facets. For us, the important aspect is to elucidate the connections between these fundamental results and PDEs.

\section{Bounded operators} % (fold)
\label{operators:bounded}
The simplest operators are bounded operators. 
\begin{definition}[Bounded operator]
	Let $\mathcal{X}$ and $\mathcal{Y}$ be normed spaces. A linear operator $T : \mathcal{X} \longrightarrow \mathcal{Y}$ is called bounded if there exists $M \geq 0$ with $\norm{T x}_{\mathcal{Y}} \leq M \norm{x}_{\mathcal{X}}$. 
\end{definition}
Just as in the case of linear functionals, we have 
\begin{theorem}
	Let $T : \mathcal{X} \longrightarrow \mathcal{Y}$ be a linear operator between two normed spaces $\mathcal{X}$ and $\mathcal{Y}$. Then the following statements are equivalent: 
	\begin{enumerate}[(i)]
		\item $T$ is continuous at $x_0 \in \mathcal{X}$. 
		\item $T$ is continuous. 
		\item $T$ is bounded. 
	\end{enumerate}
\end{theorem}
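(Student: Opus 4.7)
The plan is to mirror the proof of Theorem~\ref{hilbert_spaces:dual_space:thm:bounded_functionals_continuous} for linear functionals, since the argument there used nothing about the codomain $\C$ beyond the fact that it is a normed space. So the strategy is to show the cyclic implications (i) $\Leftrightarrow$ (ii) $\Rightarrow$ (iii) $\Rightarrow$ (ii), each step being a one- or two-line manipulation built on the linearity of $T$ and the homogeneity of the norms.

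First I would establish (i) $\Leftrightarrow$ (ii): continuity at a single point $x_0$ propagates to every point by linearity. Indeed, for any $y \in \mathcal{X}$, the identity $T(x) - T(y) = T(x - y + x_0) - T(x_0)$ reduces continuity of $T$ at $y$ to continuity at $x_0$, since $x \to y$ iff $x - y + x_0 \to x_0$. Conversely, continuity everywhere trivially implies continuity at $x_0$.

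Next, for (ii) $\Rightarrow$ (iii), I would invoke continuity at $0$ (using $T(0) = 0$ from linearity): given $\eps = 1$, choose $\delta > 0$ so that $\norm{T x}_{\mathcal{Y}} \leq 1$ whenever $\norm{x}_{\mathcal{X}} \leq \delta$. For any $y \neq 0$, the vector $\frac{\delta}{\norm{y}_{\mathcal{X}}} \, y$ has norm $\delta$, hence
\begin{align*}
	\tfrac{\delta}{\norm{y}_{\mathcal{X}}} \, \norm{T y}_{\mathcal{Y}} = \Bnorm{T \bigl ( \tfrac{\delta}{\norm{y}_{\mathcal{X}}} \, y \bigr )}_{\mathcal{Y}} \leq 1
	,
\end{align*}
which yields $\norm{T y}_{\mathcal{Y}} \leq \tfrac{1}{\delta} \, \norm{y}_{\mathcal{X}}$, so $T$ is bounded with constant $M = \tfrac{1}{\delta}$.

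Finally, for (iii) $\Rightarrow$ (ii), linearity combined with the bound gives $\norm{T x - T y}_{\mathcal{Y}} = \norm{T(x - y)}_{\mathcal{Y}} \leq M \, \norm{x - y}_{\mathcal{X}}$, so $T$ is Lipschitz and thus continuous: for any $\eps > 0$, choosing $\delta = \nicefrac{\eps}{M}$ (with $M > 0$; the case $M = 0$ is trivial since then $T = 0$) gives $\norm{T x - T y}_{\mathcal{Y}} < \eps$ whenever $\norm{x - y}_{\mathcal{X}} < \delta$. There is no substantial obstacle here; the only minor care point is the cosmetic case $M = 0$, and the whole argument is essentially a verbatim transcription of the functional version, which is why I expect the proof text in the paper to simply say ``analogous to Theorem~\ref{hilbert_spaces:dual_space:thm:bounded_functionals_continuous}''.
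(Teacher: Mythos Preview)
Your proposal is correct and matches the paper's approach exactly: the paper simply writes ``We leave it to the reader to modify the proof of Theorem~\ref{hilbert_spaces:dual_space:thm:bounded_functionals_continuous},'' which is precisely what you have carried out (and predicted). There is nothing to add.
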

\begin{proof}
	We leave it to the reader to modify the proof of Theorem~\ref{hilbert_spaces:dual_space:thm:bounded_functionals_continuous}. 
\end{proof}
We can introduce a norm on the operators which leads to a natural notion of convergence:%
\begin{definition}[Operator norm]
	Let $T : \mathcal{X} \longrightarrow \mathcal{Y}$ be a bounded linear operator between normed spaces. Then we define the operator norm of $T$ as 
	\begin{align*}
		\norm{T} := \sup_{\substack{x \in \mathcal{X} \\ \norm{x} = 1}} \norm{T x}_{\mathcal{Y}} 
		. 
	\end{align*}
	The space of all bounded linear operators between $\mathcal{X}$ and $\mathcal{Y}$ is denoted by $\mathcal{B}(\mathcal{X},\mathcal{Y})$. 
\end{definition}
One can show that $\norm{T}$ coincides with 
\begin{align*}
	\inf \bigl \{ M \geq 0 \; \vert \; \norm{T x}_{\mathcal{Y}} \leq M \norm{x}_{\mathcal{X}} \; \forall x \in \mathcal{X} \bigr \} = \norm{T} 
	. 
\end{align*}
The product of two bounded operators $T \in \mathcal{B}(\mathcal{Y},\mathcal{Z})$ and $S \in \mathcal{B}(\mathcal{X},\mathcal{Y})$ is again a bounded operator and its norm can be estimated from above by 
\begin{align*}
	\norm{T S} \leq \norm{T} \norm{S} 
	. 
\end{align*}
If $\mathcal{Y} = \mathcal{X} = \mathcal{Z}$, this implies that the product is jointly continuous with respect to the norm topology on $\mathcal{X}$. For Hilbert spaces, the following useful theorem holds: 
\begin{theorem}[Hellinger-Toeplitz]
	Let $A$ be a linear operator on a Hilbert space $\Hil$ with dense domain $\mathcal{D}(A)$ such that $\scpro{\psi}{A \varphi} = \scpro{A \psi}{\varphi}$ holds for all $\varphi , \psi \in \mathcal{D}(A)$. Then $\mathcal{D}(A) = \Hil$ if and only if $A$ is bounded. 
\end{theorem}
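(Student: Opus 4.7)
The plan is to prove the two directions separately. The ``$\Leftarrow$'' direction (boundedness forces $\mathcal{D}(A) = \Hil$) is a straightforward extension argument using density of $\mathcal{D}(A)$ and completeness of $\Hil$. The ``$\Rightarrow$'' direction is the substantive content, and I would prove it using the uniform boundedness principle (Banach--Steinhaus); an equivalent route goes through the closed graph theorem.

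For the easy direction, suppose $A$ is bounded with operator norm $\snorm{A}$ on $\mathcal{D}(A)$. Given any $\varphi \in \Hil$, pick a sequence $(\varphi_n) \subset \mathcal{D}(A)$ with $\varphi_n \to \varphi$; then the estimate $\snorm{A \varphi_n - A \varphi_m} \leq \snorm{A} \, \snorm{\varphi_n - \varphi_m}$ shows $(A \varphi_n)$ is Cauchy, so it converges in the complete space $\Hil$. A standard argument shows the limit is independent of the approximating sequence and defines a unique bounded linear extension, so after identifying $A$ with its extension one has $\mathcal{D}(A) = \Hil$.

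For the hard direction, assume $\mathcal{D}(A) = \Hil$. For each $\psi \in \Hil$ with $\snorm{\psi} \leq 1$, define the functional $T_\psi : \Hil \to \C$ by $T_\psi(\varphi) := \sscpro{A\psi}{\varphi}$. By Cauchy--Schwarz and Riesz' lemma (Theorem~\ref{hilbert_spaces:dual_space:thm:Riesz_Lemma}), $T_\psi \in \Hil'$ with $\snorm{T_\psi}_* = \snorm{A\psi}$. The key step exploits symmetry: for fixed $\varphi \in \Hil$,
\begin{align*}
	\babs{T_\psi(\varphi)} = \babs{\sscpro{A\psi}{\varphi}} = \babs{\sscpro{\psi}{A\varphi}} \leq \snorm{\psi} \, \snorm{A\varphi} \leq \snorm{A\varphi} ,
\end{align*}
so the family $\{T_\psi\}_{\snorm{\psi}\leq 1}$ is pointwise bounded on $\Hil$. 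The uniform boundedness principle then gives $\sup_{\snorm{\psi}\leq 1} \snorm{T_\psi}_* = \sup_{\snorm{\psi}\leq 1} \snorm{A\psi} < \infty$, which is precisely the boundedness of $A$.

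The main obstacle is that the proof genuinely needs a Baire-category tool (uniform boundedness, or equivalently the closed graph theorem applied to the graph $\{(\varphi, A\varphi) : \varphi \in \Hil\}$, whose closedness follows from the same symmetry trick: if $\varphi_n \to \varphi$ and $A\varphi_n \to \eta$, testing against arbitrary $\chi \in \Hil$ gives $\sscpro{\chi}{\eta} = \lim \sscpro{A\chi}{\varphi_n} = \sscpro{A\chi}{\varphi} = \sscpro{\chi}{A\varphi}$, hence $\eta = A\varphi$). Neither result is developed in the excerpt, so the cleanest writeup would invoke Banach--Steinhaus as a black box. The deeper conceptual point worth emphasizing is that symmetry together with global definedness is incompatible with discontinuity, which is why unbounded ``observables'' in quantum mechanics must have strictly smaller domains than $\Hil$.
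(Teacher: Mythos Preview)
Your proof is correct and aligns with the paper's approach: for the $\Leftarrow$ direction the paper is even terser than you (it essentially takes boundedness on all of $\Hil$ as given), and for the $\Rightarrow$ direction the paper does not prove anything but simply names the Closed Graph Theorem and refers the reader to Reed--Simon. Your primary route via Banach--Steinhaus is an equally standard Baire-category argument, and you also sketch the closed-graph argument the paper alludes to, so there is no substantive difference in strategy---only in which black box is invoked first.
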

\begin{proof}
	$\Leftarrow$: If $A$ is bounded, then $\norm{A \varphi} \leq M \norm{\varphi}$ for some $M \geq 0$ and all $\varphi \in \Hil$ by definition of the norm. Hence, the domain of $A$ is all of $\Hil$. 
	\medskip
	
	\noindent
	$\Rightarrow$: This direction relies on a rather deep result of functional analysis, the so-called Open Mapping Theorem and its corollary, the Closed Graph Theorem. The interested reader may look it up in Chapter~III.5 of \cite{Reed_Simon:M_cap_Phi_1:1972}. 
\end{proof}
Let $T,S$ be bounded linear operators between the normed spaces $\mathcal{X}$ and $\mathcal{Y}$. If we define 
\begin{align*}
	(T + S) x := T x + S x 
\end{align*}
as addition and 
\begin{align*}
	\bigl ( \lambda \cdot T  \bigr ) x := \lambda T x 
\end{align*}
as scalar multiplication, the set of bounded linear operators forms a vector space. 
\begin{proposition}\label{operators:prop:B_X_Y_complete}
	The vector space $\mathcal{B}(\mathcal{X},\mathcal{Y})$ of bounded linear operators between normed spaces $\mathcal{X}$ and $\mathcal{Y}$ with operator norm forms a normed space. If $\mathcal{Y}$ is complete, $\mathcal{B}(\mathcal{X},\mathcal{Y})$ is a Banach space. 
\end{proposition}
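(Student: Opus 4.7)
The plan is to mirror the proof of Proposition~\ref{hilbert_spaces:dual_space:prop:completeness_Xstar} on completeness of the dual space, since the dual is just the special case $\mathcal{Y} = \C$. First I would verify the three norm axioms for $\norm{\cdot}$ on $\mathcal{B}(\mathcal{X},\mathcal{Y})$: positive definiteness follows because $\norm{Tx}_{\mathcal{Y}} = 0$ for every unit $x$ forces $Tx = 0$ on all unit vectors and hence (by linearity) on all of $\mathcal{X}$; absolute homogeneity $\norm{\lambda T} = \abs{\lambda} \norm{T}$ is immediate from $\snorm{\lambda T x}_{\mathcal{Y}} = \abs{\lambda} \snorm{T x}_{\mathcal{Y}}$; and the triangle inequality follows by taking the supremum over unit vectors of $\snorm{(T+S)x}_{\mathcal{Y}} \leq \snorm{T x}_{\mathcal{Y}} + \snorm{S x}_{\mathcal{Y}}$. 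This establishes the normed-space part.

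For completeness, assume $\mathcal{Y}$ is a Banach space and let $(T_n)_{n \in \N}$ be a Cauchy sequence in $\mathcal{B}(\mathcal{X},\mathcal{Y})$. For each fixed $x \in \mathcal{X}$, the bound $\snorm{T_k x - T_j x}_{\mathcal{Y}} \leq \norm{T_k - T_j} \, \norm{x}_{\mathcal{X}}$ shows that $(T_n x)_{n \in \N}$ is Cauchy in $\mathcal{Y}$ and hence converges. This lets me \emph{define} the candidate limit pointwise by
\begin{align*}
	T x := \lim_{n \to \infty} T_n x
	,
	&&
	x \in \mathcal{X}
	.
\end{align*}
Linearity of $T$ is inherited from linearity of each $T_n$ together with continuity of addition and scalar multiplication in $\mathcal{Y}$.

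The step that requires a little care (and that I expect to be the main conceptual hurdle) is passing from this \emph{pointwise} convergence back to convergence in \emph{operator norm}. The trick, exactly as in the proof for $\mathcal{X}'$, is to exploit the uniformity of the Cauchy condition before taking the limit. Given $\eps > 0$ choose $N(\eps)$ so that $\norm{T_k - T_n} < \eps$ for all $k, n \geq N(\eps)$. Then for any unit vector $x$ and $n \geq N(\eps)$, continuity of $\snorm{\cdot}_{\mathcal{Y}}$ gives
\begin{align*}
	\bnorm{(T - T_n) x}_{\mathcal{Y}} = \lim_{k \to \infty} \bnorm{(T_k - T_n) x}_{\mathcal{Y}} \leq \limsup_{k \to \infty} \norm{T_k - T_n} < \eps
	.
\end{align*}
Taking the supremum over unit $x$ yields $\norm{T - T_n} \leq \eps$ for all $n \geq N(\eps)$, which simultaneously shows that $T - T_n \in \mathcal{B}(\mathcal{X},\mathcal{Y})$ (hence $T = T_n + (T - T_n)$ is bounded) and that $T_n \to T$ in operator norm. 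This closes the argument.
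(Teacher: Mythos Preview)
Your proposal is correct and is precisely what the paper intends: its own proof simply states that the normed-space axioms follow from the definition and that completeness is obtained by modifying the proof of Proposition~\ref{hilbert_spaces:dual_space:prop:completeness_Xstar}, leaving the details as an exercise. You have carried out exactly that modification, replacing $\C$ by the Banach space $\mathcal{Y}$ throughout.
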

\begin{proof}
	The fact $\mathcal{B}(\mathcal{X},\mathcal{Y})$ is a normed vector space follows directly from the definition. To show that $\mathcal{B}(\mathcal{X},\mathcal{Y})$ is a Banach space whenever $\mathcal{Y}$ is, one has to modify the proof of Theorem~\ref{hilbert_spaces:dual_space:prop:completeness_Xstar} to suit the current setting. This is left as an exercise. 
\end{proof}
Very often, it is easy to \emph{define} an operator $T$ on a ``nice'' dense subset $\mathcal{D} \subseteq \mathcal{X}$. Then the next theorem tells us that \emph{if}\ the operator is bounded, there is a unique bounded extension of the operator to the whole space $\mathcal{X}$. For instance, this allows us to instantly extend the Fourier transform from Schwartz functions to $L^2(\R^n)$ functions (see~Proposition~\ref{Fourier:R:thm:Parseval_Plancherel}). 
\begin{theorem}\label{operators:bounded:thm:extensions_bounded_operators}
	Let $\mathcal{D} \subseteq \mathcal{X}$ be a dense subset of a normed space and $\mathcal{Y}$ be a Banach space. Furthermore, let $T : \mathcal{D} \longrightarrow \mathcal{Y}$ be a bounded linear operator. Then there exists a unique bounded linear extension $\tilde{T} : \mathcal{X} \longrightarrow \mathcal{Y}$ and $\snorm{\tilde{T}} = \norm{T}_{\mathcal{D}}$. 
\end{theorem}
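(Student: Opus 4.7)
The plan is to use the classical density argument: since $\mathcal{D}$ is dense in $\mathcal{X}$, every $x \in \mathcal{X}$ is a limit of a sequence $(x_n)$ in $\mathcal{D}$, and we can define $\tilde{T}(x) := \lim_{n \to \infty} T(x_n)$. The whole proof hinges on the interplay between the Lipschitz estimate $\snorm{T x - T x'} \leq \norm{T}_{\mathcal{D}} \snorm{x - x'}$ (which is just the definition of the operator norm) and the completeness of $\mathcal{Y}$.

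First, I would show that this limit exists and is independent of the approximating sequence. Given $x \in \mathcal{X}$ and $(x_n) \subseteq \mathcal{D}$ with $x_n \to x$, the sequence $(T x_n)$ is Cauchy in $\mathcal{Y}$ because
\begin{align*}
	\bnorm{T x_n - T x_m}_{\mathcal{Y}} \leq \norm{T}_{\mathcal{D}} \, \bnorm{x_n - x_m}_{\mathcal{X}}
	\xrightarrow{n,m \to \infty} 0
	,
\end{align*}
where I have used linearity of $T$ and the fact that $(x_n)$ is Cauchy (as a convergent sequence). Since $\mathcal{Y}$ is complete by assumption, $(T x_n)$ has a limit, which I call $\tilde{T}(x)$. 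To check that this does not depend on the choice of $(x_n)$, take another sequence $(x_n') \subseteq \mathcal{D}$ with $x_n' \to x$; the interleaved sequence $x_1, x_1', x_2, x_2', \ldots$ also converges to $x$, hence its image under $T$ is Cauchy and convergent, which forces $\lim T x_n = \lim T x_n'$.

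Next, I would verify that $\tilde{T}$ is linear and extends $T$. Linearity follows by picking approximating sequences for $x, y \in \mathcal{X}$ and using that taking limits commutes with linear combinations (since addition and scalar multiplication are continuous on normed spaces). The fact that $\tilde{T}\vert_{\mathcal{D}} = T$ is immediate: for $x \in \mathcal{D}$, take the constant sequence $x_n = x$. For boundedness and the norm identity, I would pass to the limit in the estimate $\snorm{T x_n}_{\mathcal{Y}} \leq \norm{T}_{\mathcal{D}} \snorm{x_n}_{\mathcal{X}}$ using continuity of the norm (which itself is a consequence of the triangle inequality), obtaining $\snorm{\tilde{T} x}_{\mathcal{Y}} \leq \norm{T}_{\mathcal{D}} \snorm{x}_{\mathcal{X}}$. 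This gives $\snorm{\tilde{T}} \leq \norm{T}_{\mathcal{D}}$, while the reverse inequality $\snorm{\tilde{T}} \geq \norm{T}_{\mathcal{D}}$ is trivial since $\tilde{T}$ extends $T$ and taking the sup over a larger set can only increase the operator norm.

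Finally, uniqueness follows from a continuity-plus-density argument: if $\tilde{T}'$ is any bounded linear extension of $T$, then $\tilde{T} - \tilde{T}'$ is a bounded (hence continuous) linear operator that vanishes on the dense set $\mathcal{D}$, so it vanishes everywhere by continuity. I do not expect any serious obstacle; the only subtle point is making sure that the definition of $\tilde{T}(x)$ is unambiguous, which is handled by the interleaving argument above.
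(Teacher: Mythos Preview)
Your proof is correct and follows essentially the same route as the paper: construct $\tilde{T}x$ as the limit of $T x_n$ along an approximating sequence, use the Lipschitz bound to get a Cauchy sequence in the complete space $\mathcal{Y}$, and verify well-definedness via the interleaving argument. Your treatment is in fact slightly more complete than the paper's, since you explicitly address linearity, the extension property, and uniqueness, and your two-inequality argument for $\snorm{\tilde{T}} = \norm{T}_{\mathcal{D}}$ is more transparent than the paper's single chain of equalities.
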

\begin{proof}
	We construct $\tilde{T}$ explicitly: let $x \in \mathcal{X}$ be arbitrary. Since $\mathcal{D}$ is dense in $\mathcal{X}$, there exists a sequence $(x_n)_{n \in \N}$ in $\mathcal{D}$ which converges to $x$. Then we set 
	\begin{align*}
		\tilde{T} x := \lim_{n \to \infty} T x_n 
		. 
	\end{align*}
	First of all, $\tilde{T}$ is linear. It is also well-defined: $(T x_n)_{n \in \N}$ is a Cauchy sequence in $\mathcal{Y}$, 
	\begin{align*}
		\bnorm{T x_n - T x_k}_{\mathcal{Y}} \leq \snorm{T}_{\mathcal{D}} \, \snorm{x_n - x_k}_{\mathcal{X}} \xrightarrow{n,k \to \infty} 0 
		, 
	\end{align*}
	where the norm of $T$ is defined as 
	\begin{align*}
		\snorm{T}_{\mathcal{D}} := \sup_{x \in \mathcal{D} \setminus \{ 0 \}} \frac{\norm{T x}_{\mathcal{Y}}}{\norm{x}_{\mathcal{X}}} 
		. 
	\end{align*}
	This Cauchy sequence in $\mathcal{Y}$ converges to some unique $y \in \mathcal{Y}$ as the target space is complete. Let $(x_n')_{n \in \N}$ be a second sequence in $\mathcal{D}$ that converges to $x$ and assume the sequence $(T x_n')_{n \in \N}$ converges to some $y' \in \mathcal{Y}$. We define a third sequence $(z_n)_{n \in \N}$ which alternates between elements of the first sequence $(x_n)_{n \in \N}$ and the second sequence $(x_n')_{n \in \N}$, \ie 
	\begin{align*}
		z_{2n - 1} &:= x_n 
		\\
		z_{2n} &:= x_n' 
	\end{align*}
	for all $n \in \N$. Then $(z_n)_{n \in \N}$ also converges to $x$ and $\bigl ( T z_n \bigr )$ forms a Cauchy sequence that converges to, say, $\zeta \in \mathcal{Y}$. Subsequences of convergent sequences are also convergent and they must converge to the same limit point. Hence, we conclude that 
	\begin{align*}
		\zeta &= \lim_{n \to \infty} T z_n = \lim_{n \to \infty} T z_{2n} = \lim_{n \to \infty} T x_n = y 
		\\
		&= \lim_{n \to \infty} T z_{2n - 1} = \lim_{n \to \infty} T x_n' = y' 
	\end{align*}
	holds and $\tilde{T} x$ does not depend on the particular choice of sequence which approximates $x$ in $\mathcal{D}$. It remains to show that $\snorm{\tilde{T}} = \snorm{T}_{\mathcal{D}}$: we can calculate the norm of $\tilde{T}$ on the dense subset $\mathcal{D}$ and use that $\tilde{T} \vert_{\mathcal{D}} = T$ to obtain 
	\begin{align*}
		\snorm{\tilde{T}} &= \sup_{\substack{x \in \mathcal{X} \\ \norm{x} = 1}} \snorm{\tilde{T} x} 
		= \sup_{x \in \mathcal{X} \setminus \{ 0 \}} \frac{\snorm{\tilde{T} x}}{\norm{x}} 
		= \sup_{x \in \mathcal{D} \setminus \{ 0 \}} \frac{\snorm{\tilde{T} x}}{\norm{x}} 
		\\
		&= \sup_{x \in \mathcal{D} \setminus \{ 0 \}} \frac{\norm{T x}}{\norm{x}} 
		= \norm{T}_{\mathcal{D}} 
		. 
	\end{align*}
	Hence, the norm of the extension $\tilde{T}$ is equal to the norm of the original operator $T$. 
\end{proof}
The spectrum of an operator is related to the set of possible outcomes of measurements in quantum mechanics. 
\begin{definition}[Spectrum]\label{operators:defn:spectrum}
	Let $T \in \mathcal{B}(\mathcal{X})$ be a bounded linear operator on a Banach space $\mathcal{X}$. We define: 
	\begin{enumerate}[(i)]
		\item The resolvent of $T$ is the set $\rho(T) := \bigl \{ z \in \C \; \vert \; T - z \, \id \mbox{ is bijective} \bigr \}$. 
		\item The spectrum $\sigma(T) := \C \setminus \rho(T)$ is the complement of $\rho(T)$ in $\C$. 
		\item The set of all eigenvalues is called \emph{point spectrum} 
		\begin{align*}
			\sigma_{\mathrm{p}}(T) := \bigl \{ z \in \C \; \vert \; T - z \, \id \mbox{ is not injective} \bigr \} 
			. 
		\end{align*}
		\item The \emph{continuous spectrum} is defined as 
		\begin{align*}
			\sigma_{\mathrm{c}}(T) := \bigl \{ z \in \C \; \vert \; T - z \, \id \mbox{ is injective, } \im (T - z \, \id) \subseteq \mathcal{X} \mbox{ dense} \bigr \} 
			. 
		\end{align*}
		\item The remainder of the spectrum is called \emph{residual spectrum}, 
		\begin{align*}
			\sigma_{\mathrm{r}}(T) := \bigl \{ z \in \C \; \vert \; T - z \, \id \mbox{ is injective, } \im (T - z \, \id) \subseteq \mathcal{X} \mbox{ not dense} \bigr \} 
		\end{align*}
	\end{enumerate}
\end{definition}
One can show that for all $\lambda \in \rho(T)$, the map $(T - z \, \id)^{-1}$ is a bounded operator and the spectrum is a closed subset of $\C$. One can show its $\sigma(T)$ is \emph{compact} and contained in $\bigl \{ \lambda \in \C \; \vert \; \abs{\lambda} \leq \norm{T} \bigr \} \subset \C$. 
\begin{example}[Spectrum of $H = -\partial_x^2$]
	\begin{enumerate}[(i)]
		\item The spectrum of $- \partial_x^2$ on $L^2(\R)$ is $\sigma(- \partial_x^2) = \sigma_{\mathrm{c}}(- \partial_x^2) = [0,\infty)$; it is purely continuous. 
		\item Restricting $- \partial_x^2$ to a bounded domain (\eg an interval $[a,b]$) turns out to be purely discrete, $\sigma(- \partial_x^2) = \sigma_{\mathrm{p}}(- \partial_x^2)$. 
	\end{enumerate}
\end{example}
Note that $- \partial_x^2$ need not have point spectrum. If an operator $H$ on a Banach space $\mathcal{X}$ has continuous spectrum, then there exist no eigenvectors \emph{in} $\mathcal{X}$. For instance, the eigenfunctions of $- \partial_x^2$ to $\lambda^2 \neq 0$ are of the form $\e^{\pm \ii \lambda x}$. However, none of these plane waves is square integrable on $\R$, $\norm{\e^{\pm \ii \lambda x}}_{L^2(\R)} = \infty$. 
% section bounded_operators (end)

\section{Adjoint operator} % (fold)
\label{operators:adjoint}
If $\mathcal{X}$ is a normed space, then we have defined $\mathcal{X}'$, the space of bounded linear functionals on $\mathcal{X}$. If $T : \mathcal{X} \longrightarrow \mathcal{Y}$ is a bounded linear operator between two normed spaces, it naturally defines the \emph{adjoint operator} $T' : \mathcal{Y}' \longrightarrow \mathcal{X}'$ via 
\begin{align}
	(T' L)(x) := L(Tx) 
	\label{operators:adjoint:eqn:adjoint_operator_functional}
\end{align}
for all $x \in \mathcal{X}$ and $L \in \mathcal{Y}'$. In case of Hilbert spaces, one can associate the \emph{Hilbert space adjoint}. We will almost exclusively work with the latter and thus drop “Hilbert space” most of the time. 
\begin{definition}[Adjoint and selfadjoint operator]
	Let $\Hil$ be a Hilbert space and $A \in \mathcal{B}(\Hil)$ be a bounded linear operator on the Hilbert space $\Hil$. Then for any $\varphi \in \Hil$, the equation 
	\begin{align*}
		\scpro{A \psi}{\varphi} = \scpro{\psi}{\phi} 
		&& 
		\forall \psi \in \mathcal{D}(A) 
	\end{align*}
	defines a vector $\phi$. For each $\varphi \in \Hil$, we set $A^* \varphi := \phi$ and $A^*$ is called the (Hilbert space) \emph{adjoint} of $A$. In case $A^* = A$, the operator is called \emph{selfadjoint}. 
\end{definition}
Hilbert and Banach space adjoint are related through the map $C \psi := \sscpro{\psi}{\cdot \,} = L_{\psi}$, because then the Hilbert space adjoint is defined as 
\begin{align*}
	A^* := C^{-1} A' C 
	. 
\end{align*}
\begin{example}[Adjoint of the time-evolution group]
	$\bigl ( \e^{- \ii t H} \bigr )^* = \e^{+ \ii t H^*} = \e^{+ \ii t H}$ 
\end{example}
\begin{proposition}
	Let $A , B \in \mathcal{B}(\Hil)$ be two bounded linear operators on a Hilbert space $\Hil$ and $\alpha \in \C$. Then, we have: 
	\begin{enumerate}[(i)]
		\item $(A + B)^* = A^* + B^*$ 
		\item $(\alpha A)^* = \alpha^* \, A^*$
		\item $(A B)^* = B^* A^*$
		\item $\norm{A^*} = \norm{A}$
		\item $A^{**} = A$
		\item $\norm{A^* A} = \norm{A A^*} = \norm{A}^2$
		\item $\ker A = (\im A^*)^{\perp}$, $\ker A^* = (\im A)^{\perp}$
	\end{enumerate}
\end{proposition}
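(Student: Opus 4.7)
The plan is to derive all seven properties from the defining relation $\sscpro{A\psi}{\varphi} = \sscpro{\psi}{A^*\varphi}$ together with the uniqueness of $A^*\varphi$, which follows from Riesz' Lemma~\ref{hilbert_spaces:dual_space:thm:Riesz_Lemma}: for each $\varphi \in \Hil$ the map $\psi \mapsto \sscpro{A\psi}{\varphi}$ is a bounded linear functional on $\Hil$ (bounded by $\norm{A}\,\snorm{\varphi}$ via Cauchy--Schwarz), hence represented by a \emph{unique} vector $A^*\varphi$. This uniqueness is the workhorse behind (i)--(iii) and (v).

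For (i) and (ii), I would chase the definition on the left: for all $\psi,\varphi \in \Hil$, write $\sscpro{(A+B)\psi}{\varphi} = \sscpro{\psi}{A^*\varphi} + \sscpro{\psi}{B^*\varphi} = \sscpro{\psi}{(A^* + B^*)\varphi}$, and likewise $\sscpro{(\alpha A)\psi}{\varphi} = \alpha \sscpro{\psi}{A^*\varphi} = \sscpro{\psi}{\alpha^* A^*\varphi}$; uniqueness of the adjoint forces the stated equalities. For (iii), $\sscpro{AB\psi}{\varphi} = \sscpro{B\psi}{A^*\varphi} = \sscpro{\psi}{B^* A^* \varphi}$ gives $(AB)^* = B^* A^*$. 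For (v), complex-conjugating the defining relation yields $\sscpro{\varphi}{A\psi} = \sscpro{A^*\varphi}{\psi}$, which identifies $A^{**} = A$.

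For (iv), I would use the duality formula $\snorm{A\psi} = \sup_{\snorm{\varphi}=1} \babs{\sscpro{\varphi}{A\psi}}$, itself a direct consequence of Riesz' Lemma applied to $L = \sscpro{\cdot}{A\psi}$. Taking the further sup over $\snorm{\psi}=1$ gives $\norm{A} = \sup_{\snorm{\psi}=\snorm{\varphi}=1}\babs{\sscpro{\varphi}{A\psi}}$, and since $\babs{\sscpro{\varphi}{A\psi}} = \babs{\sscpro{A^*\varphi}{\psi}}$ the expression is symmetric under swapping $A \leftrightarrow A^*$, so $\norm{A} = \norm{A^*}$; this also certifies $A^* \in \mathcal{B}(\Hil)$, which is needed before invoking (v) and (vi). For (vi), the C*-identity: submultiplicativity of the operator norm combined with (iv) gives $\norm{A^*A} \leq \norm{A^*}\,\norm{A} = \norm{A}^2$, while Cauchy--Schwarz yields $\snorm{A\psi}^2 = \sscpro{A\psi}{A\psi} = \sscpro{\psi}{A^*A\psi} \leq \norm{A^*A}\,\snorm{\psi}^2$, and taking $\sup_{\snorm{\psi}=1}$ produces $\norm{A}^2 \leq \norm{A^*A}$. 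Replacing $A$ by $A^*$ and using (iv) and (v) delivers $\norm{AA^*} = \norm{A^*}^2 = \norm{A}^2$ as well.

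For (vii), the argument is a direct unwrapping of orthogonality: $\psi \in \ker A$ iff $A\psi = 0$ iff $\sscpro{A\psi}{\varphi} = 0$ for every $\varphi \in \Hil$ iff $\sscpro{\psi}{A^*\varphi} = 0$ for every $\varphi \in \Hil$ iff $\psi \in (\im A^*)^{\perp}$; the second identity follows by applying the first to $A^*$ in place of $A$ and invoking (v). I do not foresee a serious obstacle anywhere --- the only subtle point is the interlocking of (iv), (v), and (vi), so the order must be: prove (iv) first, then (v), then use both in (vi).
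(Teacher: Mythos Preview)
Your proof is correct and follows essentially the same route as the paper: (i)--(iii) and (v) from the defining relation plus uniqueness, (vi) via the C*-identity argument combining $\snorm{A\psi}^2 = \sscpro{\psi}{A^*A\psi}$ with submultiplicativity and (iv). Two minor differences worth noting: for (iv) the paper establishes $\norm{A} \leq \norm{A^*}$ and $\norm{A^*} \leq \norm{A}$ by separate estimates, whereas your symmetric double-sup formulation $\norm{A} = \sup_{\snorm{\psi}=\snorm{\varphi}=1}\babs{\sscpro{\varphi}{A\psi}}$ gets both directions at once and is a bit cleaner; and for (vii) the paper leaves the argument as an exercise, while you actually supply it --- your chain of equivalences is exactly the intended proof.
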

\begin{proof}
	Properties (i)-(iii) follow directly from the defintion. 
	
	To show (iv), we note that $\norm{A} \leq \norm{A^*}$ follows from 
	\begin{align*}
		\norm{A \varphi} &= \abs{\Bscpro{\tfrac{A \varphi}{\snorm{A \varphi}}}{A \varphi}} 
		\overset{*}{=} \sup_{\norm{L}_* = 1} \abs{L(A \varphi)} 
		\\
		&= \sup_{\norm{\psi_L} = 1} \abs{\scpro{A^* \psi_L}{\varphi}} 
		\leq \norm{A^*} \norm{\varphi} 
	\end{align*}
	where in the step marked with $\ast$, we have used that we can calculate the norm from picking the functional associated to $\tfrac{A \varphi}{\snorm{A \varphi}}$: for a functional with norm 1, $\norm{L}_* =  1$, the norm of $L(A \varphi)$ cannot exceed that of $A \varphi$
	\begin{align*}
		\abs{L(A \varphi)} &= \sabs{\sscpro{\psi_L}{A \varphi}} 
		\leq \snorm{\psi_L} \snorm{A \varphi} 
		= \snorm{A \varphi} 
		. 
	\end{align*}
	Here, $\psi_L$ is the vector such that $L = \sscpro{\psi_L}{\cdot \,}$ which exists by Theorem~\ref{hilbert_spaces:dual_space:thm:Riesz_Lemma}. This theorem also ensures $\norm{L}_* = \snorm{\psi_L}$. On the other hand, from 
	\begin{align*}
		\bnorm{A^* \psi_L} &= \bnorm{L_{A^* \psi_L}}_* 
		% = \bnorm{A' L}_* 
		% = \sup_{\norm{\varphi} = 1} \bnorm{A' L(\varphi)} 
		= \sup_{\norm{\varphi} = 1} \babs{\bscpro{A^* \psi_L}{\varphi}} 
		\\
		&\leq \sup_{\norm{\varphi} = 1} \norm{\psi_L} \norm{A \varphi} 
		= \norm{A} \norm{L}_* 
		= \norm{A} \snorm{\psi_L} 
	\end{align*}
	we conclude $\norm{A^*} \leq \norm{A}$. Hence, $\norm{A^*} = \norm{A}$. 
	
	(v) is clear. For (vi), we remark 
	\begin{align*}
		\norm{A}^2 &= \sup_{\norm{\varphi} = 1} \norm{A \varphi}^2 
		= \sup_{\norm{\varphi} = 1} \bscpro{\varphi}{A^* A \varphi} 
		\\
		&\leq \sup_{\norm{\varphi} = 1} \norm{A^* A \varphi} 
		= \norm{A^* A} 
		. 
	\end{align*}
	This means 
	\begin{align*}
		\norm{A}^2 \leq \norm{A^* A} \leq \snorm{A^*} \norm{A} = \norm{A}^2 
		. 
	\end{align*}
	which combined with (iv), 
	\begin{align*}
		\norm{A}^2 = \norm{A^*}^2 \leq \norm{A A^*} \leq \snorm{A} \norm{A^*} = \norm{A}^2 
	\end{align*}
	implies $\norm{A^* A}= \norm{A}^2 = \norm{A A^*}$. (vii) is left as an exercise. 
\end{proof}
\begin{definition}
	Let $\Hil$ be a Hilbert space and $A \in \mathcal{B}(\Hil)$. Then $A$ is called 
	\begin{enumerate}[(i)]
		% \item normal if $A^* A = A A^*$. 
		\item selfadjoint (or hermitian) if $A^* = A$, 
		\item unitary if $A^* = A^{-1}$, 
		% \item a projection if $A^2 = A$. 
		\item an orthogonal projection if $A^2 = A$ and $A^* = A$, and 
		\item positive semidefinite (or non-negative) iff $\bscpro{\varphi}{A \varphi} \geq 0$ for all $\varphi \in \Hil$ and positive (definite) if the inequality is strict. 
	\end{enumerate}
\end{definition}
This leads to a particular characterization of the spectrum as a set \cite[Theorem~VII.12]{Reed_Simon:M_cap_Phi_1:1972}: 
\begin{theorem}[Weyl's criterion]\label{operators:thm:Weyl_criterion}
	Let $H$ be a bounded selfadjoint operator on a Hilbert space $\Hil$. Then $\lambda \in \sigma(H)$ holds if and only if there exists a sequence $\{ \psi_n \}_{n \in \N}$ so that $\snorm{\psi_n} = 1$ and 
	\begin{align*}
		\lim_{n \to \infty} \bnorm{H \psi_n - \lambda \, \psi_n}_{\Hil} = 0 
		. 
	\end{align*}
	%
	% commented out bits on essential spectrum 
	% We have $\lambda \in \sigma_{\mathrm{ess}}(H)$ if and only if we can choose the sequence $\{ \psi_n \}_{n \in \N}$ to be orthonormal. 
\end{theorem}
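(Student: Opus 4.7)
The plan is to prove the two implications separately, with $(\Leftarrow)$ being a direct contradiction argument and $(\Rightarrow)$ proceeding via the contrapositive.

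For $(\Leftarrow)$, I would suppose that a Weyl sequence $\{ \psi_n \}$ exists but $\lambda \in \rho(H)$. Then $R := (H - \lambda \, \id)^{-1}$ is a bounded operator on $\Hil$, and applying $R$ to $(H - \lambda \, \id) \psi_n$ recovers $\psi_n$, so
\[
1 = \snorm{\psi_n} = \bnorm{R \, (H - \lambda \, \id) \psi_n} \leq \snorm{R} \, \bnorm{(H - \lambda \, \id) \psi_n} \longrightarrow 0,
\]
a contradiction; hence $\lambda \in \sigma(H)$.

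For $(\Rightarrow)$, I would prove the contrapositive: if no such sequence exists, then $\lambda \in \rho(H)$. Negating the existence of a Weyl sequence is equivalent to the lower bound
\[
\bnorm{(H - \lambda \, \id) \psi} \geq c \, \snorm{\psi} \qquad \forall \psi \in \Hil
\]
for some $c > 0$. Writing $T := H - \lambda \, \id$, the goal is to show $T$ is bijective with bounded inverse, and the argument unfolds in three steps. First, the lower bound immediately yields injectivity of $T$ and closedness of $\im T$: for any sequence with $T \psi_n \to y$, the estimate $\snorm{T(\psi_n - \psi_m)} \geq c \, \snorm{\psi_n - \psi_m}$ makes $\{ \psi_n \}$ Cauchy in $\Hil$, and its limit $\psi$ satisfies $T \psi = y$ by continuity of $T$. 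Second, $\im T$ is dense: by property (vii) of the adjoint, $(\im T)^{\perp} = \ker T^*$, and since $T^* = H - \overline{\lambda} \, \id$ by selfadjointness, any nonzero $\varphi \in \ker T^*$ would be an eigenvector of $H$ with eigenvalue $\overline{\lambda}$; this forces $\overline{\lambda} \in \R$ (so $T^* = T$), and then the same lower bound applied to $T$ forces $\varphi = 0$. Third, combining closedness and density gives $\im T = \Hil$, so $T$ is bijective, and the lower bound translates to $\snorm{T^{-1} y} \leq c^{-1} \, \snorm{y}$, placing $\lambda$ in $\rho(H)$.

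The main obstacle I anticipate is step two, the density of $\im T$. Closedness is a routine Cauchy-sequence argument using only the lower bound, but density genuinely requires selfadjointness of $H$ through the identification $(\im T)^{\perp} = \ker T^*$ and the reality of eigenvalues of selfadjoint operators. Weyl's criterion actually fails for general bounded operators, where residual spectrum can obstruct the construction of approximate eigenvectors; isolating exactly where the selfadjointness hypothesis becomes indispensable is the delicate part of the argument.
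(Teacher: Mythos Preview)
The paper does not actually give a proof of Weyl's criterion: the theorem is stated with a citation to \cite[Theorem~VII.12]{Reed_Simon:M_cap_Phi_1:1972} and is immediately followed by an example, with no proof environment in between. So there is no ``paper's own proof'' to compare against.

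That said, your argument is correct. The $(\Leftarrow)$ direction is the standard one-line contradiction. For $(\Rightarrow)$, your contrapositive via the lower bound $\bnorm{(H - \lambda) \psi} \geq c \, \snorm{\psi}$ is exactly the right approach, and the three steps (injectivity, closed range, dense range) are all sound. Your handling of the density step is a bit compressed but logically fine: assuming a nonzero $\varphi \in \ker T^*$ forces $\bar{\lambda}$ to be a (necessarily real) eigenvalue, whence $T^* = T$ and the same lower bound kills $\varphi$. One could alternatively split into the cases $\lambda \notin \R$ (where $\ker T^* = \{0\}$ trivially since selfadjoint operators have no non-real eigenvalues) and $\lambda \in \R$ (where $T^* = T$ directly), but your unified phrasing works.

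It is worth noting that the paper does, much later in Chapter~\ref{quantum}, prove a closely related lemma: a bounded operator $T$ is invertible if and only if $T^* T \geq C \, \id$ and $T T^* \geq C \, \id$ for some $C > 0$. The proof there follows exactly your pattern (lower bound $\Rightarrow$ injective with closed range, then $\overline{\ran T} = (\ker T^*)^{\perp}$, then the symmetric bound on $T^*$ gives surjectivity). Your argument is essentially the specialization of that lemma to $T = H - \lambda$ with $H$ selfadjoint, where the single lower bound automatically yields the bound on $T^*$ via the selfadjointness.
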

\begin{example}[Weyl's criterion for $H = -\partial_x^2$ on $L^2(\R)$]
	For any $\lambda \in \R \setminus \{ 0 \}$, one can choose a sequence $\{ \psi_n \}_{n \in \N}$ of normalized and cut off plane waves $\e^{\pm \ii \lambda x}$. To make sure they are normalized, we know that pointwise $\psi_n(x) \to 0$ as $n \to \infty$. \marginpar{2013.10.24}
\end{example}
%
% section adjoint_operator (end)

\section{Unitary operators} % (fold)
\label{operators:unitary}
Unitary operators $U$ have the nice property that 
\begin{align*}
	\scpro{U \varphi}{U \psi} = \scpro{\varphi}{U^* U \psi} 
	= \scpro{\varphi}{U^{-1} U \psi} 
	= \scpro{\varphi}{\psi} 
\end{align*}
for all $\varphi , \psi \in \Hil$. In case of quantum mechanics, we are interested in solutions to the Schrödinger equation 
\begin{align*}
	\ii \frac{\dd }{\dd t} \psi(t) = H \psi(t) 
	, 
	&& 
	\psi(t) = \psi_0 
	, 
\end{align*}
for a hamilton operator which satisfies $H^* = H$. Assume that $H$ is bounded (this is really the case for many simple quantum systems). Then the unitary group generated by $H$, 
\begin{align*}
	U(t) = \e^{- \ii t H} 
	, 
\end{align*}
can be written as a power series, 
\begin{align*}
	\e^{- \ii t H} = \sum_{n = 0}^{\infty} \frac{1}{n!} (-\ii t)^n \, H^n 
	\, , 
\end{align*}
where $H^0 := \id$ by convention. The sequence of partial sums converges in the operator norm to $\e^{- \ii t H}$, 
\begin{align*}
	\sum_{n = 0}^N \frac{1}{n!} (-i t)^n \, H^n \xrightarrow{N \to \infty} \e^{- \ii t H} 
	, 
\end{align*}
since we can make the simple estimate 
\begin{align*}
	\norm{\sum_{n = 0}^{\infty} \frac{1}{n!} (- \ii t)^n \, H^n \psi} &\leq \sum_{n = 0}^{\infty} \frac{1}{n!} \abs{t}^n \norm{H^n \psi} 
	\leq \sum_{n = 0}^{\infty} \frac{1}{n!} \abs{t}^n \norm{H}^n \, \norm{\psi} 
	\\ 
	&= \e^{\abs{t} \norm{H}} \norm{\psi} < \infty 
	. 
\end{align*}
This shows that the power series of the exponential converges in the operator norm independently of the choice of $\psi$ to a bounded operator. Given a unitary evolution group, it is suggestive to obtain the hamiltonian which generates it by deriving $U(t) \psi$ with respect to time. This is indeed the correct idea. The left-hand side of the Schrödinger equation (modulo a factor of $\ii$) can be expressed as a limit 
\begin{align*}
	\frac{\dd}{\dd t} \psi(t) = \lim_{\delta \to 0} \tfrac{1}{\delta} \bigl ( \psi(t+\delta) - \psi(t) \bigr ) 
	. 
\end{align*}
This limit really exists, but before we compute it, we note that since 
\begin{align*}
	\psi(t+\delta) - \psi(t) = \e^{- \ii (t+\delta) H} \psi_0 - \e^{- \ii t H} \psi_0 
	= \e^{- \ii t H} \bigl ( \e^{- \ii \delta H} - 1 \bigr ) \psi_0 
	\, , 
\end{align*}
it suffices to consider differentiability at $t = 0$: taking limits in norm of $\Hil$, we get 
\begin{align*}
	\frac{\dd}{\dd t} \psi(0) &= \lim_{\delta \to 0} \, \tfrac{1}{\delta} \bigl ( \psi(\delta) - \psi_0 \bigr ) = \lim_{\delta \to 0} \, \frac{1}{\delta} \left ( \sum_{n = 0}^{\infty} \frac{(- \ii)^n}{n!} \delta^n H^n \psi_0 - \psi_0 \right ) 
	\\
	&
	= \lim_{\delta \to 0} \sum_{n = 1}^{\infty} \frac{(- \ii)^n}{n!} \delta^{n-1} H^n \psi_0 
	= - \ii H \psi_0 
	.  
\end{align*}
Hence, we have established that $\e^{- \ii t H} \psi_0$ solves the Schrödinger condition with $\psi(0) = \psi_0$, 
\begin{align*}
	\ii \frac{\dd}{\dd t} \psi(t) = H \psi(t) 
	. 
\end{align*}
However, this procedure \emph{does not work} if $H$ is unbounded (\ie the generic case)! Before we proceed, we need to introduce several different notions of convergence of sequences of operators which are necessary to define derivatives of $U(t)$. 
\begin{definition}[Convergence of operators]
	Let $\{ A_n \}_{n \in \N} \subset \mathcal{B}(\Hil)$ be a sequence of bounded operators. We say that the sequence converges to $A \in \mathcal{B}(\Hil)$
	\begin{enumerate}[(i)]
		\item uniformly/in norm if $\lim_{n \to \infty} \bnorm{A_n - A} = 0$. 
		\item strongly if $\lim_{n \to \infty} \bnorm{A_n \psi - A \psi} = 0$ for all $\psi \in \Hil$. 
		\item weakly if $\lim_{n \to \infty} \bscpro{\varphi}{A_n \psi - A \psi} = 0$ for all $\varphi , \psi \in \Hil$. 
	\end{enumerate}
\end{definition}
Convergence of a sequence of operators in norm implies strong and weak convergence, but not the other way around. In the tutorials, we will also show explicitly that weak convergence does not necessarily imply strong convergence. 
\begin{example}
	With the arguments above, we have shown that if $H = H^*$ is selfadjoint and bounded, then $t \mapsto \e^{- \ii t H}$ is \emph{uniformly} continuous. 
\end{example}
If $\norm{H} = \infty$ on the other hand, uniform continuity is too strong a requirement. If $H = - \frac{1}{2} \Delta_x$ is the free Schrödinger operator on $L^2(\R^n)$, then the Fourier transform $\Fourier$ links the position representation on $L^2(\R^n)$ to the momentum representation on $L^2(\R^n)$. In this representation, the free Schrödinger operator $H$ simplifies to the multiplication operator 
\begin{align*}
	\hat{H} = \tfrac{1}{2} \hat{k}^2 
\end{align*}
acting on $L^2(\R^n)$. This operator is not bounded since $\sup_{k \in \R^n} \tfrac{1}{2} k^2 = \infty$ (\cf problem~24). More elaborate mathematical arguments show that for any $t \in \R$, the norm of the difference between $\hat{U}(t) = \e^{- \ii t \frac{1}{2} \hat{k}^2}$ and $\hat{U}(0) = \id$
\begin{align*}
	\bnorm{\hat{U}(t) - \id} = \sup_{k \in \R^n} \babs{\e^{- \ii t \frac{1}{2} k^2} - 1} = 2 
\end{align*}
is exactly $2$ and $\hat{U}(t)$ \emph{cannot} be uniformly continuous in $t$. However, if $\widehat{\psi} \in L^2(\R^n)$ is a wave function, the estimate 
\begin{align*}
	\bnorm{\hat{U}(t) \widehat{\psi} - \widehat{\psi}}^2 &= \int_{\R^n} \dd k \, \babs{\e^{- \ii t \frac{1}{2} k^2} - 1}^2 \, \babs{\widehat{\psi}(k)}^2 
	\\
	&\leq 2^2 \int_{\R^n} \dd k \, \babs{\widehat{\psi}(k)}^2 
	= 4 \bnorm{\widehat{\psi}}^2 
\end{align*}
shows we can invoke the Theorem of Dominated Convergence to conclude $\hat{U}(t)$ is \emph{strongly continuous} in $t \in \R$. 
\begin{definition}[Strongly continuous one-parameter unitary evolution group]
	A family of unitary operators $\{ U(t) \}_{t \in \R}$ on a Hilbert space $\Hil$ is called a strongly continuous one-parameter unitary group -- or unitary evolution group for short -- if 
	\begin{enumerate}[(i)]
		\item $t \mapsto U(t)$ is strongly continuous and 
		\item $U(t) U(t') = U(t+t')$ as well as $U(0) = \id_{\Hil}$ 
	\end{enumerate}
	hold for all $t,t' \in \R$. 
\end{definition}
This is again a \emph{group representation of $\R$} just as in the case of the classical flow $\Phi$. The form of the Schrödinger equation, 
\begin{align*}
	\ii \frac{\dd }{\dd t} \psi(t) = H \psi(t) 
	, 
\end{align*}
also suggests that strong continuity/differentiability is the correct notion. Let us once more consider the free hamiltonian $H = - \frac{1}{2} \Delta_x$ on $L^2(\R^n)$. We will show that its domain is 
\begin{align*}
	\mathcal{D}(H) = \bigl \{ \varphi \in L^2(\R^n) \; \vert \; - \Delta_x \varphi \in L^2(\R^n) \bigr \} 
	. 
\end{align*}
In Chapter~\ref{S_and_Sprime}, we will learn that $\mathcal{D}(H)$ is mapped by the Fourier transform onto 
\begin{align*}
	\mathcal{D}(\hat{H}) = \bigl \{ \widehat{\psi} \in L^2(\R^n) \; \vert \; \hat{k}^2 \widehat{\psi} \in L^2(\R^n) \bigr \} 
	. 
\end{align*}
Dominated Convergence can once more be used to make the following claims rigorous: for any $\widehat{\psi} \in \mathcal{D}(\hat{H})$, we have 
\begin{align}
	\lim_{t \to 0} &\Bnorm{\tfrac{\ii}{t} \bigl ( \hat{U}(t) - \id \bigr ) \widehat{\psi} - \tfrac{1}{2} \hat{k}^2 \widehat{\psi}} 
	\leq \lim_{t \to 0} \Bnorm{\tfrac{\ii}{t} \bigl ( \hat{U}(t) - \id \bigr ) \widehat{\psi}} +  \bnorm{\tfrac{1}{2} \hat{k}^2 \widehat{\psi}}
	\label{operators:unitary:eqn:free_evolution_strongly_continuous}
	. 
\end{align}
The second term is finite since $\widehat{\psi} \in \mathcal{D}(\hat{H})$ and we have to focus on the first term. On the level of functions, 
\begin{align*}
	\lim_{t \to 0} \tfrac{\ii}{t} \bigl ( \e^{- \ii t \frac{1}{2} k^2} - 1 \bigr ) 
	= \ii \frac{\dd}{\dd t} \e^{- \ii t \frac{1}{2} k^2} \Big \vert_{t = 0} 
	= \tfrac{1}{2} k^2 
\end{align*}
holds pointwise. Furthermore, by the mean value theorem, for any finite $t \in \R$ with $\abs{t} \leq 1$, for instance, then there exists $0 \leq t_0 \leq t$ such that \marginpar{2013.10.29}
\begin{align*}
	\tfrac{1}{t} \bigl ( \e^{- \ii t \frac{1}{2} k^2} - 1 \bigr ) = \partial_t \e^{- \ii t \frac{1}{2} k^2} \big \vert_{t = t_0} 
	= - \ii \, \tfrac{1}{2} k^2 \, \e^{- \ii t_0 \frac{1}{2} k^2} 
	. 
\end{align*}
This can be bounded uniformly in $t$ by $\frac{1}{2} k^2$. Thus, also the first term can be bounded by $\bnorm{\frac{1}{2} \hat{k}^2 \widehat{\psi}}$ uniformly. By Dominated Convergence, we can interchange the limit $t \to 0$ and integration with respect to $k$ on the left-hand side of equation~\eqref{operators:unitary:eqn:free_evolution_strongly_continuous}. But then the integrand is zero and thus the domain where the free evolution group is differentiable coincides with the domain of the Fourier transformed hamiltonian, 
\begin{align*}
	\lim_{t \to 0} \norm{\tfrac{\ii}{t} \bigl ( \hat{U}(t) - \id) \widehat{\psi} - \tfrac{1}{2} \hat{k}^2 \widehat{\psi}} = 0 
	. 
\end{align*}
This suggests to use the following definition: 
\begin{definition}[Generator of a unitary group]
	A densely defined linear operator on a Hilbert space $\Hil$ with domain $\mathcal{D}(H) \subseteq \Hil$ is called generator of a \emph{unitary evolution group} $U(t)$, $t \in \R$, if 
	\begin{enumerate}[(i)]
		\item the domain coincides with 
		\begin{align*}
			\widetilde{\mathcal{D}(H)} = \Bigl \{ \varphi \in \Hil \; \big \vert \;  t \mapsto U(t) \varphi \mbox{ differentiable} \Bigr \} = \mathcal{D}(H)
		\end{align*}
		\item and for all $\psi \in \mathcal{D}(H)$, the Schrödinger equation holds, 
		\begin{align*}
			\ii \frac{\dd}{\dd t} U(t) \psi = H U(t) \psi 
			. 
		\end{align*}
	\end{enumerate}
\end{definition}
This is only one of the two implications: usually we are given a hamiltonian $H$ and we would like to know under which circumstances this operator generates a unitary evolution group. We will answer this question conclusively in the next section with Stone's Theorem. 
\begin{theorem}
	Let $H$ be the generator of a strongly continuous evolution group $U(t)$, $t \in \R$. Then the following holds: 
	\begin{enumerate}[(i)]
		\item $\mathcal{D}(H)$ is invariant under the action of $U(t)$, \ie $U(t) \mathcal{D}(H) = \mathcal{D}(H)$ for all $t \in \R$. 
		\item $H$ commutes with $U(t)$, \ie $[U(t) , H] \psi := U(t) \, H \psi - H \, U(t) \psi = 0$ for all $t \in \R$ and $\psi \in \mathcal{D}(H)$. 
		\item $H$ is \emph{symmetric}, \ie $\scpro{H \varphi}{\psi} = \scpro{\varphi}{H \psi}$ holds for all $\varphi , \psi \in \mathcal{D}(H)$. 
		\item $U(t)$ is uniquely determined by $H$. 
		\item $H$ is uniquely determined by $U(t)$. 
	\end{enumerate}
\end{theorem}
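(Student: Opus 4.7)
The plan is to treat the five assertions in the order (i) $\Rightarrow$ (ii) $\Rightarrow$ (iii) $\Rightarrow$ (iv) $\Rightarrow$ (v), since the later parts rely on the earlier ones. The recurring technical ingredient is the fact that each $U(t)$ is \emph{bounded} (in fact unitary, so $\norm{U(t)} = 1$), which allows one to pull $U(t)$ through strong limits, combined with the group law $U(s) U(t) = U(s+t) = U(t) U(s)$.

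For (i) and (ii), fix $\psi \in \mathcal{D}(H)$ and $t \in \R$. I would show that $s \mapsto U(s) \bigl ( U(t) \psi \bigr )$ is strongly differentiable at $s = 0$ by writing
\begin{align*}
	\tfrac{1}{s} \bigl ( U(s) U(t) \psi - U(t) \psi \bigr ) = U(t) \, \tfrac{1}{s} \bigl ( U(s) \psi - \psi \bigr ) ,
\end{align*}
where the group law was used as $U(s) U(t) = U(t) U(s)$. Since $\psi \in \mathcal{D}(H)$ and $U(t)$ is bounded, the right-hand side converges strongly to $- \ii \, U(t) H \psi$ as $s \to 0$. This proves simultaneously that $U(t) \psi \in \mathcal{D}(H)$, \ie $U(t) \mathcal{D}(H) \subseteq \mathcal{D}(H)$, and that $H U(t) \psi = U(t) H \psi$. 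Applying the same argument to $U(-t)$ and using $U(t) U(-t) = \id$ gives the reverse inclusion, hence $U(t) \mathcal{D}(H) = \mathcal{D}(H)$, which is (i); and the computed identity is precisely (ii).

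For (iii), take $\varphi , \psi \in \mathcal{D}(H)$. Because each $U(t)$ is unitary, the function $t \mapsto \sscpro{U(t) \varphi}{U(t) \psi} = \sscpro{\varphi}{\psi}$ is constant. On the other hand, by (i) we know $U(t) \varphi , U(t) \psi \in \mathcal{D}(H)$, so the product rule (justified by the strong differentiability guaranteed by the definition of generator) gives
\begin{align*}
	0 = \tfrac{\dd}{\dd t} \sscpro{U(t) \varphi}{U(t) \psi} = \sscpro{- \ii H U(t) \varphi}{U(t) \psi} + \sscpro{U(t) \varphi}{- \ii H U(t) \psi} .
\end{align*}
Evaluating at $t = 0$ and clearing the factor of $\ii$ yields $\sscpro{H \varphi}{\psi} = \sscpro{\varphi}{H \psi}$, which is (iii).

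For (iv), suppose $U(t)$ and $\widetilde{U}(t)$ are two strongly continuous unitary groups both generated by $H$. Fix $t \in \R$ and $\psi \in \mathcal{D}(H)$, and consider $\phi(s) := U(t-s) \, \widetilde{U}(s) \psi$ for $s \in [0,t]$. Using (i) for both groups, $\widetilde{U}(s) \psi \in \mathcal{D}(H)$, and by (ii) $H$ commutes with $U(t-s)$ on $\mathcal{D}(H)$. A direct differentiation then gives
\begin{align*}
	\tfrac{\dd}{\dd s} \phi(s) = \ii H \, U(t-s) \widetilde{U}(s) \psi - \ii U(t-s) H \widetilde{U}(s) \psi = 0 .
\end{align*}
Hence $\phi$ is constant, so $U(t) \psi = \phi(0) = \phi(t) = \widetilde{U}(t) \psi$ for all $\psi \in \mathcal{D}(H)$. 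Since $\mathcal{D}(H)$ is dense and $U(t), \widetilde{U}(t)$ are bounded, Theorem~\ref{operators:bounded:thm:extensions_bounded_operators} gives $U(t) = \widetilde{U}(t)$ on all of $\Hil$. Finally, (v) is immediate from the defining property of the generator: if $H_1$ and $H_2$ both generate $U(t)$, then by definition both domains equal $\{ \varphi \in \Hil \mid t \mapsto U(t) \varphi \text{ differentiable} \}$, and on that common domain $H_j \psi = \ii \frac{\dd}{\dd t} U(t) \psi \big \vert_{t = 0}$ is the same vector.

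The main obstacle is the rigorous justification of the product-rule and chain-rule manipulations in (ii), (iii) and (iv): one must check that the relevant difference quotients converge \emph{strongly} rather than merely pointwise, which is where boundedness of $U(t)$ is essential for pulling the operator through the limit.
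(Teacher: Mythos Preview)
Your proof is correct and follows essentially the same route as the paper for parts (i), (ii), (iii), and (v): the group law plus boundedness of $U(t)$ to pass it through the difference quotient, differentiation of $\sscpro{U(t)\varphi}{U(t)\psi}$ for symmetry, and the defining formula for the generator for uniqueness of $H$.

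The only genuine difference is in (iv). You use the interpolation function $\phi(s) = U(t-s)\widetilde{U}(s)\psi$ and show $\phi'(s) = 0$ via the commutation relation from (ii), so that $\phi(0) = U(t)\psi$ equals $\phi(t) = \widetilde{U}(t)\psi$. The paper instead differentiates $\bnorm{(U(t) - \widetilde{U}(t))\psi}^2 = 2\bigl(\norm{\psi}^2 - \Re\sscpro{U(t)\psi}{\widetilde{U}(t)\psi}\bigr)$ directly and shows this derivative vanishes. Both arguments are standard and of comparable length; your interpolation approach is perhaps slightly cleaner in that it avoids expanding an inner product, while the paper's version makes the use of unitarity more explicit. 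Either way, the conclusion on $\mathcal{D}(H)$ is then extended to all of $\Hil$ by density and boundedness, exactly as you do.
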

\begin{proof}
	\begin{enumerate}[(i)]
		\item Let $\psi \in \mathcal{D}(H)$. To show that $U(t) \psi$ is still in the domain, we have to show that the norm of $H U(t) \psi$ is finite. Since $H$ is the generator of $U(t)$, it is equal to 
		\begin{align*}
			H \psi = \ii \frac{\dd}{\dd s} U(s) \psi \bigg \vert_{s = 0} = \lim_{s \to 0} \tfrac{\ii}{s} \bigl ( U(s) - \id \bigr ) \psi 
			. 
		\end{align*}
		Let us start with $s > 0$ and omit the limit. Then 
		\begin{align*}
			\Bnorm{\tfrac{\ii}{s} \bigl ( U(s) - \id \bigr ) U(t) \psi} &= \Bnorm{U(t) \tfrac{\ii}{s} \bigl ( U(s) - \id \bigr ) \psi} 
			= \Bnorm{\tfrac{\ii}{s} \bigl ( U(s) - \id \bigr ) \psi} < \infty 
		\end{align*}
		holds for all $s > 0$. Taking the limit on left and right-hand side yields that we can estimate the norm of $H U(t) \psi$ by the norm of $H \psi$ -- which is finite since $\psi$ is in the domain. This means $U(t) \mathcal{D}(H) \subseteq \mathcal{D}(H)$. To show the converse, we repeat the proof for $U(-t) = U(t)^{-1} = U(t)^*$ to obtain 
		\begin{align*}
			\mathcal{D}(H) = U(-t) U(t) \mathcal{D}(H) \subseteq U(t) \mathcal{D}(H) 
			. 
		\end{align*}
		Hence, $U(t) \mathcal{D}(H) = \mathcal{D}(H)$. 
		\item This follows from an extension of the proof of (i): since the domain $\mathcal{D}(H)$ coincides with the set of vectors on which $U(t)$ is strongly differentiable and is left invariant by $U(t)$, taking limits on left- and right-hand side of 
		\begin{align*}
			\Bnorm{\tfrac{\ii}{s} \bigl ( U(s) - \id \bigr ) U(t) \psi - U(t) \tfrac{\ii}{s} \bigl ( U(s) - \id \bigr ) \psi} = 0 
		\end{align*}
		leads to $[H,U(t)] \psi = 0$. 
		\item This follows from differentiating $\scpro{U(t) \varphi}{U(t) \psi}$ for arbitrary $\varphi , \psi \in \mathcal{D}(H)$ and using $\bigl [ U(t) , H \bigr ] = 0$ as well as the unitarity of $U(t)$ for all $t \in \R$. 
		\item Assume that both unitary evolution groups, $U(t)$ and $\tilde{U}(t)$, have $H$ as their generator. For any $\psi \in \mathcal{D}(H)$, we can calculate the time derivative of $\bnorm{(U(t) - \tilde{U}(t)) \psi}^2$, 
		\begin{align*}
			\frac{\dd }{\dd t} \Bnorm{\bigl ( U(t) - \tilde{U}(t) \bigr ) \psi}^2 &= 2 \frac{\dd }{\dd t} \bigl ( \norm{\psi}^2 - \Re \bscpro{U(t) \psi}{\tilde{U}(t) \psi} \bigr ) 
			\\
			&= - 2 \Re \Bigl ( \bscpro{- \ii \, H U(t) \psi}{\tilde{U}(t) \psi} + \bscpro{U(t) \psi}{- \ii \, H \tilde{U}(t) \psi} \Bigr ) 
			\\
			&= 0 
			. 
		\end{align*}
		Since $U(0) = \id = \tilde{U}(0)$, this means $U(t)$ and $\tilde{U}(t)$ agree at least on $\mathcal{D}(H)$. Using the fact that there is only bounded extension of a bounded operator to all of $\Hil$, Theorem~\ref{operators:bounded:thm:extensions_bounded_operators}, we conclude they must be equal on all of $\Hil$. 
		\item This follows from the definition of the generator and the density of the domain. 
	\end{enumerate}
\end{proof}
Now that we have collected a few facts on unitary evolution groups, one could think that \emph{symmetric} operators generate evolution groups, but \emph{this is false!} The standard example to showcase this fact is the group of translations on $L^2([0,1])$. Since we would like $T(t)$ to conserve ``mass'' -- or more accurately, probability, we define for $\varphi \in L^2([0,1])$ and $0 \leq t < 1$
\begin{align*}
	\bigl ( T(t) \varphi \bigr )(x) := \left \{
	\begin{matrix}
		\varphi(x - t) & x - t \in [0,1] \\
		\varphi(x - t + 1) & x - t + 1 \in [0,1] \\
	\end{matrix}
	\right . 
	. 
\end{align*}
For all other $t \in \R$, we extend this operator periodically, \ie we plug in the fractional part of $t$. Clearly, $\bscpro{T(t) \varphi}{T(t) \psi} = \bscpro{\varphi}{\psi}$ holds for all $\varphi , \psi \in L^2([0,1])$. Locally, the infinitesimal generator is $- \ii \partial_x$ as a simple calculation shows: 
\begin{align*}
	\biggl ( \ii \frac{\dd}{\dd t} \bigl ( T(t) \varphi \bigr ) \biggr )(x) \bigg \vert_{t = 0} &= \ii \frac{\dd}{\dd t} \varphi(x - t) \bigg \vert_{t = 0} 
	= - \ii \partial_x \varphi(x) 
\end{align*}
However, $T(t)$ does not preserve the maximal domain of $- \ii \partial_x$, 
\begin{align*}
	\mathcal{D}_{\max}(- \ii \partial_x) = \bigl \{ \varphi \in L^2([0,1]) \; \vert \; -\ii \partial_x \varphi \in L^2([0,1]) \bigr \} 
	. 
\end{align*}
Any element of the maximal domain has a continuous representative, but if $\varphi(0) \neq \varphi(1)$, then for $t > 0$, $T(t) \varphi$ will have a discontinuity at $t$. We will denote the operator $- \ii \partial_x$ on $\mathcal{D}_{\max}(- \ii \partial_x)$ with $\mathsf{P}_{\max}$. Let us check whether $\mathsf{P}_{\max}$ is symmetric: for any $\varphi , \psi \in \mathcal{D}_{\max}(- \ii \partial_x)$, we compute 
\begin{align}
	\bscpro{\varphi}{-\ii \partial_x \psi} &= \int_0^1 \dd x \, \varphi^*(x) \, (- \ii \partial_x \psi)(x) 
	= \Bigl [ - \ii \varphi^*(x) \, \psi(x) \Bigr ]_0^1 - \int_0^1 \dd x \, (-\ii) \partial_x \varphi^*(x) \, \psi(x) 
	\notag \\
	&
	= \ii \bigl ( \varphi^*(0) \, \psi(0) - \varphi^*(1) \, \psi(1) \bigr ) + \int_0^1 \dd x \, (- \ii \partial_x \varphi)^*(x) \, \psi(x) 
	\notag \\
	&= \ii \bigl ( \varphi^*(0) \, \psi(0) - \varphi^*(1) \, \psi(1) \bigr ) + \bscpro{- \ii \partial_x \varphi}{\psi} 
	\label{operators:unitary:eqn:symmetry_translations_interval}
	. 
\end{align}
In general, the boundary terms do not disappear and the maximal domain is “too large” for $- \ii \partial_x$ to be symmetric. Thus, it is not at all surprising, $T(t)$ does not leave $\mathcal{D}_{\max}(- \ii \partial_x)$ invariant. Let us try another domain: one way to make the boundary terms disappear is to choose 
\begin{align*}
	\mathcal{D}_{\min}(- \ii \partial_x) := \Bigl \{ \varphi \in L^2([0,1]) \; \big \vert \; - \ii \partial_x \varphi \in L^2([0,1]) , \; \varphi(0) = 0 = \varphi(1) \Bigr \} 
	. 
\end{align*}
We denote $- \ii \partial_x$ on this “minimal” domain with $\mathsf{P}_{\min}$. In this case, the boundary terms in equation~\eqref{operators:unitary:eqn:symmetry_translations_interval} vanish which tells us that $\mathsf{P}_{\min}$ is symmetric. Alas, the domain is still not invariant under translations $T(t)$, even though $\mathsf{P}_{\min}$ is symmetric. This is an example of a symmetric operator which \emph{does not} generate a unitary group. 

There is another thing we have missed so far: the translations allow for an additional phase factor, \ie for $\varphi , \psi \in L^2([0,1])$ and $\vartheta \in [0,2\pi)$, we define for $0 \leq t < 1$
\begin{align*}
	\bigl ( T_{\vartheta}(t) \varphi \bigr )(x) := \left \{
	\begin{matrix}
		\varphi(x - t) & x - t \in [0,1] \\
		\e^{\ii \vartheta} \varphi(x - t + 1) & x - t + 1 \in [0,1] \\
	\end{matrix}
	\right . 
	. 
\end{align*}
while for all other $t$, we plug in the fractional part of $t$. The additional phase factor cancels in the inner product, $\bscpro{T_{\vartheta}(t) \varphi}{T_{\vartheta}(t) \psi} = \bscpro{\varphi}{\psi}$ still holds true for all $\varphi , \psi \in L^2([0,1])$. In general $T_{\vartheta}(t) \neq T_{\vartheta'}(t)$ if $\vartheta \neq \vartheta'$ and the unitary groups are genuinely different. Repeating the simple calculation from before, we see that the local generator still is $- \ii \partial_x$ and it would seem we can generate a family of unitary evolutions from a \emph{single} generator. The confusion is resolved if we focus on \emph{invariant domains}: choosing $\vartheta \in [0,2\pi)$, we define $\mathsf{P}_{\vartheta}$ to be the operator $- \ii \partial_x$ on the domain 
\begin{align*}
	\mathcal{D}_{\vartheta}(- \ii \partial_x) := \Bigl \{ \varphi \in L^2([0,1]) \; \big \vert \; - \ii \partial_x \varphi \in L^2([0,1]) , \; \varphi(0) = \e^{- \ii \vartheta} \varphi(1) \Bigr \} 
	. 
\end{align*}
A quick look at equation~\eqref{operators:unitary:eqn:symmetry_translations_interval} reassures us that $\mathsf{P}_{\vartheta}$ is symmetric and a quick calculation shows it is also \emph{invariant} under the action of $T_{\vartheta}(t)$. Hence, $\mathsf{P}_{\vartheta}$ is the generator of $T_{\vartheta}$, and the \emph{definition of an unbounded operator is incomplete without spelling out its domain}. 
\begin{example}[The wave equation with boundary conditions]
	% TODO expand on this example
	Another example where the domain is crucial in the properties is the wave equation on $[0,L]$, 
	\begin{align*}
		\partial_t^2 u(x,t) - \partial_x^2 u(x,t) = 0 
		, 
		&&
		u \in \Cont^2([0,L] \times \R)
		. 
	\end{align*}
	Here, $u$ is the amplitude of the vibration, \ie the lateral deflection. If we choose Dirichlet boundary conditions at both ends, \ie $u(0) = 0 = u(L)$, we model a closed pipe, if we choose Dirichlet boundary conditions on one end, $u(0) = 0$, and von Neumann boundary conditions on the other, $u'(L) = 0$, we model a half-closed pipe. Choosing domains is a question of physics! 
\end{example}
%
% section unitary_operators (end)

\section{Selfadjoint operators} % (fold)
\label{operators:selfadjoint_operators}
Although we do not have time to explore this very far, the crucial difference between $\mathsf{P}_{\min}$ and $\mathsf{P}_{\vartheta}$ is that the former is only symmetric while the latter is also selfadjoint. We first recall the definition of the adjoint of a possibly unbounded operator: 
\begin{definition}[Adjoint operator]
	Let $A$ be a densely defined linear operator on a Hilbert space $\Hil$ with domain $\mathcal{D}(A)$. Let $\mathcal{D}(A^*)$ be the set of $\varphi \in \Hil$ for which there exists $\phi \in \Hil$ with 
	\begin{align*}
		\scpro{A \psi}{\varphi} = \scpro{\psi}{\phi} 
		&& 
		\forall \psi \in \mathcal{D}(A) 
		. 
	\end{align*}
	For each $\varphi \in \mathcal{D}(A^*)$, we define $A^* \varphi := \phi$ and $A^*$ is called the adjoint of $A$. 
\end{definition}
\begin{remark}
	By Riesz Lemma, $\varphi$ belongs to $\mathcal{D}(A^*)$ if and only if 
	\begin{align*}
		\babs{\scpro{A \psi}{\varphi}} \leq C \norm{\psi} 
		&&
		\forall \psi \in \mathcal{D}(A) 
		. 
	\end{align*}
	This is equivalent to saying $\varphi \in \mathcal{D}(A^*)$ if and only if $\psi \mapsto \sscpro{A \psi}{\varphi}$ is continuous on $\mathcal{D}(A)$. As a matter of fact, we could have used to latter to \emph{define} the adjoint operator. 
\end{remark}
One word of caution: even if $A$ is densely defined, $A^*$ need not be. 
\begin{example}
	Let $f \in L^{\infty}(\R)$, but $f \not\in L^2(\R)$, and pick $\psi_0 \in L^2(\R)$. Define 
	\begin{align*}
		\mathcal{D}(T_f) := \Bigl \{ \psi \in L^2(\R) \; \vert \; \int_{\R} \dd x \, \abs{f(x) \, \psi(x)} < \infty \Bigr \} 
		. 
	\end{align*}
	Then the adjoint of the operator 
	\begin{align*}
		T_f \psi := \sscpro{f}{\psi} \, \psi_0 
		, 
		&&
		\psi \in \mathcal{D}(T_f) 
		, 
	\end{align*}
	has domain $\mathcal{D}(T_f^*) = \{ 0 \}$. Let $\psi \in \mathcal{D}(T_f)$. Then for any $\varphi \in \mathcal{D}(T_f^*)$ 
	\begin{align*}
		\bscpro{T_f \psi}{\varphi} &= \bscpro{\sscpro{f}{\psi} \, \psi_0}{\varphi} = \bscpro{\psi}{f} \, \bscpro{\psi_0}{\varphi} 
		\\
		&= \bscpro{\psi}{\sscpro{\psi_0}{\varphi} f} 
		. 
	\end{align*}
	Hence $T_f^* \varphi = \sscpro{\psi_0}{\varphi} f$. However $f \not\in L^2(\R)$ and thus $\varphi = 0$ is the only possible choice for which $T_f^* \varphi$ is well defined. 
\end{example}
Symmetric operators, however, are special: since $\scpro{H \varphi}{\psi} = \scpro{\varphi}{H \psi}$ holds by definition for all $\varphi , \psi \in \Hil$, the domain of $H^*$ is contained in that of $H$, $\mathcal{D}(H^*) \supseteq \mathcal{D}(H)$. In particular, $\mathcal{D}(H^*) \subseteq \Hil$ is also dense. Thus, $H^*$ is an \emph{extension of $H$}. 
\begin{definition}[Selfadjoint operator]
	Let $H$ be a symmetric operator on a Hilbert space $\Hil$ with domain $\mathcal{D}(H)$. $H$ is called selfadjoint, $H^* = H$, iff $\mathcal{D}(H^*) = \mathcal{D}(H)$. 
\end{definition}
One word regarding notation: if we write $H^* = H$, we do not just imply that the “operator prescription” of $H$ and $H^*$ is the same, but that the \emph{domains} of both coincide. 
\begin{example}
	In this sense, $\mathsf{P}_{\min}^* \neq \mathsf{P}_{\min}$. 
\end{example}
The central theorem of this section is Stone's Theorem: 
\begin{theorem}[Stone]
	 To every strongly continuous one-parameter unitary group $U$ on a Hilbert space $\Hil$, there exists a selfadjoint operator $H = H^*$ which generates $U(t) = \e^{- \ii t H}$. Conversely, every selfadjoint operator $H$ generates the unitary evolution group $U(t) = \e^{- \ii t H}$. 
\end{theorem}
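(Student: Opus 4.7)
The plan is to prove the two implications separately. For the direction starting from a strongly continuous unitary group $U(t)$, I first define the candidate generator on its natural domain
\begin{align*}
\mathcal{D}(H) := \bigl\{ \psi \in \Hil \; \big\vert \; t \mapsto U(t)\psi \; \text{is differentiable at}\; t = 0 \bigr\}
\end{align*}
via $H\psi := \ii\,\frac{\dd}{\dd t}U(t)\psi\,\vert_{t=0}$. The first task is to prove density of this domain, which I would do by mollification: for each $\psi \in \Hil$ and $f \in \Cont_c^\infty(\R)$, set $\psi_f := \int_\R \dd t\, f(t)\, U(t)\psi$, which exists as a Riemann-type integral in norm thanks to strong continuity of $U(\cdot)\psi$ and compact support of $f$. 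The group law and a change of variables rewrite $\tfrac{1}{s}(U(s)-\id)\psi_f$ as the integral of $\tfrac{1}{s}\bigl(f(t-s) - f(t)\bigr)\,U(t)\psi$ in $t$, which converges in norm to $-\psi_{f'}$ as $s \to 0$; thus $\psi_f \in \mathcal{D}(H)$ with $H\psi_f = -\ii\,\psi_{f'}$. Choosing mollifiers $f_n$ concentrating at $0$ yields $\psi_{f_n} \to \psi$, proving $\mathcal{D}(H)$ is dense in $\Hil$.

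Symmetry of $H$ on $\mathcal{D}(H)$ then follows by differentiating $\scpro{U(t)\varphi}{\psi} = \scpro{\varphi}{U(-t)\psi}$, which uses $U(t)^* = U(-t)$ from unitarity, at $t = 0$. The heart of the proof is upgrading symmetry to selfadjointness, for which I would invoke the basic criterion that a densely defined symmetric operator is selfadjoint if and only if $\ker(H^* \mp \ii\,\id) = \{0\}$. Given $\varphi \in \ker(H^* - \ii\,\id)$ and $\psi \in \mathcal{D}(H)$, consider $g(t) := \scpro{U(t)\psi}{\varphi}$; a short direct argument using the definition of $\mathcal{D}(H)$ and strong continuity shows that $U(t)$ leaves $\mathcal{D}(H)$ invariant and commutes with $H$ there, so
\begin{align*}
g'(t) = \scpro{-\ii\,H U(t)\psi}{\varphi} = \ii\,\scpro{U(t)\psi}{H^*\varphi} = \ii\cdot\ii\,g(t) = -g(t).
\end{align*}
Hence $g(t) = g(0)\,\e^{-t}$, but $\sabs{g(t)} \leq \snorm{\psi}\,\snorm{\varphi}$ is bounded on all of $\R$ by unitarity, forcing $g(0) = \scpro{\psi}{\varphi} = 0$. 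Density of $\mathcal{D}(H)$ then gives $\varphi = 0$. The analogous argument for $\ker(H^* + \ii\,\id)$ completes selfadjointness, and the uniqueness of the generator established in the preceding theorem identifies $U(t) = \e^{-\ii t H}$.

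For the converse, given a selfadjoint $H$, I would construct $U(t) := \int_\R \e^{-\ii t \lambda}\, \dd E_H(\lambda)$ through the functional calculus of $H$, where $E_H$ is the projection-valued spectral measure. Unitarity follows from $\sabs{\e^{-\ii t\lambda}} = 1$, the group property from $\e^{-\ii(t+s)\lambda} = \e^{-\ii t\lambda}\,\e^{-\ii s\lambda}$, and both strong continuity and the identification of $H$ as generator on precisely the original domain reduce, via the scalar spectral measures $\dd\mu_\psi(\lambda) := \dd\sscpro{\psi}{E_H(\lambda)\psi}$, to the dominated-convergence argument already carried out for the free Schrödinger propagator in the discussion surrounding equation~\eqref{operators:unitary:eqn:free_evolution_strongly_continuous}. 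The main obstacle here, which I would cite rather than reproduce, is the spectral theorem for unbounded selfadjoint operators itself. A more elementary alternative is the Yosida approximation: using that $\ran(H \pm \ii \, \id) = \Hil$ (a consequence of the selfadjointness already proved), one constructs a family of bounded selfadjoint operators $H_n$ built from the resolvents $(H \pm \ii n\,\id)^{-1}$, exponentiates each via the norm-convergent power series from the start of this section to get $\e^{-\ii t H_n}$, and takes strong limits $n \to \infty$ to recover $U(t)$.
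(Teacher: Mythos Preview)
Your sketch is correct and follows the standard route found in Reed--Simon~VIII.3: density of the domain via mollification with $\Cont_c^\infty$ functions, symmetry by differentiating $\scpro{U(t)\varphi}{\psi}$, and the upgrade to selfadjointness via the basic criterion $\ker(H^* \mp \ii) = \{0\}$ together with the bounded-exponential trick. The converse direction, appealing to functional calculus or Yosida approximation, is also the standard approach.

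The comparison here is asymmetric, however: the paper does not actually prove Stone's theorem. Immediately after the statement it writes ``A complete proof [Reed--Simon, Chapter~VIII.3] is beyond our capabilities'' and moves on. So you have gone well beyond what the paper attempts. Your sketch is essentially the Reed--Simon argument the paper cites; the pieces you flag as needing to be cited rather than reproduced (the spectral theorem for unbounded selfadjoint operators, or the full Yosida construction) are exactly the machinery the paper deemed out of scope for the course. In short: nothing is wrong with your proof, and there is no competing argument in the paper to contrast it with.
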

A complete proof \cite[Chapter~VIII.3]{Reed_Simon:M_cap_Phi_1:1972} is beyond our capabilities. \marginpar{2013.10.31}
% section selfadjoint_operators (end)

\section{Recasting the Maxwell equations as a Schrödinger equation} % (fold)
\label{operators:Maxwell}
The Maxwell equations in a medium with electric permittivity $\eps$ and magnetic permeability $\mu$ are given by the two \emph{dynamical} 
\begin{subequations}
	\label{operators:eqn:dynamical_Maxwell}
	\begin{align}
		\partial_t \mathbf{E}(t) &= + \eps^{-1} \, \nabla_x \times \mathbf{H}(t)
		\\
		\partial_t \mathbf{H}(t) &= - \mu^{-1} \, \nabla_x \times \mathbf{E}(t)
		% \notag 
	\end{align}
\end{subequations}
and the two \emph{kinetic Maxwell equations}
\begin{subequations}
	\label{operators:eqn:source_Maxwell_eqns}
	\begin{align}
		\nabla_x \cdot \eps \mathbf{E}(t) &= \rho 
		\\
		\nabla_x \cdot \mu \mathbf{H}(t) &= j 
		. 
		% \notag 
	\end{align}
\end{subequations}
Here, the source terms in the kinetic equations are the \emph{charge density} $\rho$ and the \emph{current density} $j$; in the absence of sources, $\rho = 0$ and $j = 0$, the Maxwell equations are homogeneous. 

We can rewrite the dynamical equations~\eqref{operators:eqn:dynamical_Maxwell} as a Schrödinger-type equation, 
\begin{align}
	\ii \frac{\dd}{\dd t} \left (
	\begin{matrix}
		\mathbf{E}(t) \\
		\mathbf{H}(t) \\
	\end{matrix}
	\right ) &= M(\eps,\mu) \left (
	\begin{matrix}
		\mathbf{E}(t) \\
		\mathbf{H}(t) \\
	\end{matrix}
	\right )
	\, , 
	&&
	\left (
	\begin{matrix}
		\mathbf{E}(0) \\
		\mathbf{H}(0) \\
	\end{matrix}
	\right ) = \left (
	\begin{matrix}
		\mathbf{E}^{(0)} \\
		\mathbf{H}^{(0)} \\
	\end{matrix}
	\right ) \in L^2(\R^3,\C^6)
	\, , 
\end{align}
where the \emph{Maxwell operator}
\begin{align}
	M(\eps,\mu) := W \, \Rot 
	:= \left (
	\begin{matrix}
		\eps^{-1} & 0 \\
		0 & \mu^{-1} \\
	\end{matrix}
	\right ) \left (
	\begin{matrix}
		0 & + \ii \nabla_x^{\times} \\
		- \ii \nabla_x^{\times} & 0 \\
	\end{matrix}
	\right )
\end{align}
takes the role of the Schrödinger operator $H = - \Delta_x + V$. It can be conveniently written as the product of the multiplication operator $W$ which contains the \emph{material weights} $\eps$ and $\mu$, and the \emph{free Maxwell operator} $\Rot$. Here, $\nabla_x^{\times}$ is just a short-hand for the curl, $\nabla_x^{\times} \mathbf{E} := \nabla_x \times \mathbf{E}$. The solution can now be expressed just like in the Schrödinger case, 
\begin{align*}
	\left (
	\begin{matrix}
		\mathbf{E}(t) \\
		\mathbf{H}(t) \\
	\end{matrix}
	\right ) = \e^{- \ii t M(\eps,\mu)} \left (
	\begin{matrix}
		\mathbf{E}^{(0)} \\
		\mathbf{H}^{(0)} \\
	\end{matrix}
	\right ) 
	\, , 
\end{align*}
where the initial conditions must satisfy the no sources condition (equations~\eqref{operators:eqn:source_Maxwell_eqns} for $\rho = 0$ and $j = 0$); one can show that this is enough to ensure that also the time-evolved fields $\mathbf{E}(t)$ and $\mathbf{H}(t)$ satisfy the no sources conditions for all times. 

Physically, the condition that $\mathbf{E}$ and $\mathbf{H}$ be square-integrable stems from the requirement that the \emph{field energy} 
\begin{align}
	\mathcal{E}(\mathbf{E},\mathbf{H}) := \frac{1}{2} \int_{\R^3} \dd x \Bigl ( \eps(x) \, \babs{\mathbf{E}(x)}^2 + \mu(x) \, \babs{\mathbf{H}(x)}^2 \Bigr ) 
\end{align}
be finite; Moreover, the field energy is a \emph{conserved quantity}, 
\begin{align*}
	\mathcal{E} \bigl ( \mathbf{E}(t) , \mathbf{H}(t) \bigr ) = \mathcal{E} \bigl ( \mathbf{E}^{(0)} , \mathbf{H}^{(0)} \bigr )
	. 
\end{align*}
It is not coincidental that the expression for $\mathcal{E}$ looks like the square of a weighted $L^2$-norm: if we assume that $\eps , \mu \in L^{\infty}(\R^3)$ are bounded away from $0$ and $+\infty$, \ie there exist $c , C > 0$ for which 
\begin{align*}
	0 < c \leq \eps(x) , \mu(x) \leq C < + \infty 
\end{align*}
holds almost everywhere in $x \in \R^3$, then $\eps^{-1}$ and $\mu^{-1}$ are also bounded away from $0$ and $+\infty$ in the above sense. Hence, we deduce 
\begin{align*}
	\Psi = (\psi^E , \psi^H) \in L^2(\R^3,\C^6) 
	\; \; \Longleftrightarrow \; \; 
	\Psi \in \Hil(\eps,\mu) := L^2_{\eps}(\R^3,\C^3) \oplus L^2_{\mu}(\R^3,\C^3)
\end{align*}
where $L^2_{\eps}(\R^3,\C^3)$ and $L^2_{\mu}(\R^3,\C^3)$ are defined analogously to problem~22. By definition, $\Psi$ is an element of $\Hil(\eps,\mu)$ if and only if the norm $\norm{\Psi}_{\Hil(\eps,\mu)}$ induced by the weighted scalar product 
\begin{align}
	\scpro{\Psi}{\Phi}_{\Hil(\eps,\mu)} :& \negmedspace= \bscpro{(\psi^E,\psi^H)}{(\phi^E,\phi^H)}_{\Hil(\eps,\mu)}
	= \scpro{\psi^E}{\phi^E}_{\eps} + \scpro{\psi^H}{\phi^H}_{\mu}
	\\
	:& \negmedspace= \int_{\R^3} \dd x \, \eps(x) \, \psi^E(x) \cdot \phi^E(x) + \int_{\R^3} \dd x \, \mu(x) \, \psi^H(x) \cdot \phi^H(x) 
	\notag 
\end{align}
is finite. Adapting the arguments from problem~22, we conclude that $L^2(\R^3,\C^6)$ and $\Hil(\eps,\mu)$ can be canonically identified as \emph{Banach} spaces. Now the field energy can be expressed as 
\begin{align*}
	\mathcal{E}(\mathbf{E},\mathbf{H}) = \tfrac{1}{2} \, \bnorm{(\mathbf{E},\mathbf{H})}_{\Hil(\eps,\mu)}^2 
	:= \tfrac{1}{2} \bscpro{(\mathbf{E},\mathbf{H})}{(\mathbf{E},\mathbf{H})}_{\Hil(\eps,\mu)} 
	\, , 
\end{align*}
and the conservation of field energy suggests that $\e^{- \ii t M(\eps,\mu)}$ is unitary with respect to the \emph{weighted} scalar product $\scpro{\cdot \,}{\cdot}_{\Hil(\eps,\mu)}$. 

Indeed, this is the case: the scalar product can alternatively be expressed in terms of $W^{-1}$ and the unweighted scalar product on $L^2(\R^3,\C^6)$, 
\begin{align}
	\scpro{\Psi}{\Phi}_{\Hil(\eps,\mu)} &= \bscpro{\Psi}{W^{-1} \Phi}_{L^2(\R^3,\C^6)} 
	= \bscpro{W^{-1} \Psi}{\Phi}_{L^2(\R^3,\C^6)} 
	. 
	\label{operators:eqn:weighted_scalar_product_W_unweighted}
\end{align}
The last equality holds true, because $W^{\pm 1}$ are multiplication operators with scalar real-valued functions in the electric and magnetic component, and thus 
\begin{align*}
	\scpro{\psi^E}{\eps \phi^E}_{L^2(\R^3,\C^3)} = \scpro{\eps \psi^E}{\phi^E}_{L^2(\R^3,\C^3)}
\end{align*}
holds for all $\psi^E , \phi^E \in L^2(\R^3,\C^3)$, for instance. Under the assumption that the free Maxwell $\Rot$ is selfadjoint on $L^2(\R^3,\C^6)$, then one can also show the selfadjointness of the Maxwell operator $M(\eps,\mu) = W \, \Rot$ by using its product structure and equation~\eqref{operators:eqn:weighted_scalar_product_W_unweighted}: 
\begin{align*}
	\bscpro{\Psi}{M(\eps,\mu) \, \Phi}_{\Hil(\eps,\mu)} &= \bscpro{\Psi}{W^{-1} \, W \, \Rot \, \Phi}_{L^2(\R^3,\C^6)} 
	= \bscpro{\Psi}{\Rot \, \Phi}_{L^2(\R^3,\C^6)} 
	\\
	&= \bscpro{\Rot \, \Psi}{\Phi}_{L^2(\R^3,\C^6)} 
	= \bscpro{W^{-1} \, W \, \Rot \, \Psi}{\Phi}_{L^2(\R^3,\C^6)} 
	\\
	&
	= \bscpro{M(\eps,\mu) \, \Psi}{W^{-1} \, \Phi}_{L^2(\R^3,\C^6)} 
	= \bscpro{M(\eps,\mu) \, \Psi}{\Phi}_{\Hil(\eps,\mu)}
\end{align*}
These arguments \emph{imply} that $\e^{- \ii t M(\eps,\mu)}$ is unitary with respect to $\scpro{\cdot \,}{\cdot}_{\Hil(\eps,\mu)}$, and thus, we deduce that the dynamics conserve energy, 
\begin{align*}
	\mathcal{E} \bigl ( \mathbf{E}(t) , \mathbf{H}(t) \bigr ) &= \frac{1}{2} \, \Bnorm{\e^{- \ii t M(\eps,\mu)} \bigl ( \mathbf{E}^{(0)} , \mathbf{H}^{(0)} \bigr )}_{\Hil(\eps,\mu)}^2 
	\\
	&
	= \frac{1}{2} \, \Bnorm{\bigl ( \mathbf{E}^{(0)} , \mathbf{H}^{(0)} \bigr )}_{\Hil(\eps,\mu)}^2 
	= \mathcal{E} \bigl ( \mathbf{E}^{(0)} , \mathbf{H}^{(0)} \bigr ) 
	. 
\end{align*}
Moreover, the formulation of the Maxwell equations as a Schrödinger equation also allows us to prove that the dynamics $\e^{- \ii t M(\eps,\mu)}$ map real-valued fields onto real-valued fields: define complex conjugation $(C \Psi)(x) := \overline{\Psi(x)}$ on $\Hil(\eps,\mu)$ component-wise. Then the fact that $\eps$ and $\mu$ are real-valued implies $C$ commutes with $W$, 
\begin{align*}
	\bigl ( C \, W \, \Psi \bigr )(x) &= \overline{\bigl ( W \, (\psi^E,\psi^H) \bigr )(x)} 
	= \overline{\bigl ( \eps^{-1}(x) \, \psi^E(x) \, , \, \mu^{-1}(x) \, \phi^H(x) \bigr )}
	\\
	&= \Bigl ( \eps^{-1}(x) \, \overline{\psi^E(x)} \, , \, \mu^{-1}(x) \, \overline{\phi^H(x)} \Bigr )
	= \bigl ( W \, C \, \Psi \bigr )(x) 
\end{align*}
In problem~23, we have shown that $C \, \Rot \, C = - \Rot$, and thus 
\begin{align*}
	C \, M(\eps,\mu) \, C &= C \, W \, \Rot \, C 
	= W \, C \, \Rot \, C 
	\\
	&= - W \, \Rot 
	= - M(\eps,\mu) 
\end{align*}
holds just as in the case of the free Maxwell equations. This means the unitary evolution operator and complex conjugation \emph{commute,} 
\begin{align*}
	C \, \e^{- \ii t M(\eps,\mu)} \, C &= \e^{+ \ii t \, C \, M(\eps,\mu) \, C} 
	= \e^{- \ii t M(\eps,\mu)} 
	\, , 
\end{align*}
as does the real part operator $\Re := \tfrac{1}{2} \bigl ( \id_{\Hil(\eps,\mu)} + C \bigr )$, 
\begin{align*}
	\Bigl [ \e^{- \ii t M(\eps,\mu)} , \Re \Bigr ] &= 0 
	\, , 
\end{align*}
Now if the initial state $\bigl ( \mathbf{E}^{(0)} , \mathbf{H}^{(0)} \bigr ) = \Re \, \bigl ( \mathbf{E}^{(0)} , \mathbf{H}^{(0)} \bigr )$ is real-valued, then so is the time-evolved state, 
\begin{align*}
	\bigl ( \mathbf{E}(t) , \mathbf{H}(t) \bigr ) &= \e^{- \ii t M(\eps,\mu)} \, \bigl ( \mathbf{E}^{(0)} , \mathbf{H}^{(0)} \bigr ) 
	\\
	&
	= \e^{- \ii t M(\eps,\mu)} \, \Re \, \bigl ( \mathbf{E}^{(0)} , \mathbf{H}^{(0)} \bigr ) 
	\\
	&
	= \Re \, \e^{- \ii t M(\eps,\mu)} \, \bigl ( \mathbf{E}^{(0)} , \mathbf{H}^{(0)} \bigr ) 
	\\
	&
	= \Re \, \bigl ( \mathbf{E}(t) , \mathbf{H}(t) \bigr )
	. 
\end{align*}
The reformulation of the Maxwell equations as a Schrödinger-type equation was first made rigorous by \cite{Birman_Solomyak:L2_theory_Maxwell_operator:1987}; it allows to adapt and apply many of the techniques first developed for the analysis Schrödinger operators to Maxwell operators, \eg under suitable conditions one can derive ray optics as a “semiclassical limit” of the above equations. \marginpar{2013.11.05}
% section Recasting the Maxwell equations as a Schrödinger equation (end)
% chapter application_magnetic_space_adiabatic_perturbation_theory (end)
\chapter{The Fourier transform} % (fold)
\label{Fourier}
The wave equation, the free Schrödinger equation and the heat equation all admit the same class of “fundamental solutions”, namely exponential functions $\e^{- \ii \xi \cdot x}$. In some cases, the boundary conditions impose restrictions on the admissible values of $\xi$. 

This is because these equations share a common \emph{symmetry}, namely \emph{invariance under translations} (we will be more precise in Chapters~\ref{Fourier:T:periodic_operators} and \ref{Fourier:R:heat}--\ref{Fourier:R:Schroedinger}), and the Fourier transform converts a \textbf{P}DE into an \textbf{O}DE. Moreover, certain properties of the function are tied to certain properties of the Fourier transform, the most famous being that the regularity of $f$ is linked to the decay rate of $\Fourier f$.

\section{The Fourier transform on $\T^n$} % (fold)
\label{Fourier:T}
Let us consider the Fourier transform on the torus $\T^n := (\Sone)^n$ which will be identified with $[-\pi,+\pi]^n$. Moreover, we will view functions on $\T^n$ with $2\pi \Z^n$-periodic functions on $\R^n$ whenever convenient. Now we proceed to define the central notion of this section: 
\begin{definition}[Fourier transform on $\T^n$]\label{Fourier:T:defn:Fourier_transform}
	For all $f \in L^1(\T^n)$, we set 
	\begin{align*}
		(\Fourier f)(k) := \hat{f}(k) := \frac{1}{(2\pi)^n} \int_{\T^n} \dd x \, \e^{- \ii k \cdot x} \, f(x) 
	\end{align*}
	for $k \in \Z^n$. The \emph{Fourier series} is the formal sum 
	\begin{align}
		\sum_{k \in \Z^n} \hat{f}(k) \, \e^{+ \ii k \cdot x} 
		. 
		\label{Fourier:eqn:T_Fourier_series}
	\end{align}
\end{definition}
If all we know is that $f$ is integrable, then the question on whether the Fourier series exists turns out to be surprisingly hard. In fact, Kolmogorov has shown that there exist integrable functions for which the Fourier series diverges for almost all $x \in \T^n$. However, if we have additional information on $f$, \eg if $f \in \Cont^r(\T^n)$ for $r \geq n+1$, then we can show that \eqref{Fourier:eqn:T_Fourier_series} exists as an absolutely convergent sum. 
\begin{example}
	To compute the Fourier coefficients for $f(x) = x \in L^1([-\pi,+\pi])$, we need to distinguish the cases $k = 0$, 
	\begin{align*}
		(\Fourier x)(0) &= \frac{1}{2\pi} \int_{-\pi}^{+\pi} \dd x \, x 
		= \left [ \frac{1}{4\pi} x^2 \right ]_{-\pi}^{+\pi} 
		= 0
		, 
	\end{align*}
	and $k \neq 0$, 
	\begin{align*}
		(\Fourier x)(k) &= \frac{1}{2\pi} \int_{-\pi}^{+\pi} \dd x \, \e^{- \ii k x} \, x 
		\\
		&= \left [ \frac{\ii}{2\pi k} \, x \, \e^{- \ii k x} \right ]_{-\pi}^{+\pi} - \frac{\ii}{k} \int_{-\pi}^{+\pi} \dd x \, \e^{- \ii k x} \cdot 1 
		\\
		&= (-1)^k \, \frac{\ii}{k} 
		. 
	\end{align*}
	Thus, the Fourier coefficients decay like $\nicefrac{1}{\abs{k}}$ for large $\abs{k}$, 
	\begin{align*}
		(\Fourier x)(k) &= 
		\begin{cases}
			0 & k = 0 \\
			(-1)^k \, \frac{\ii}{k} & k \in \Z \setminus \{ 0 \} \\
		\end{cases}
		. 
	\end{align*}
	We will see later on that this is because $f(x) = x$ has a discontinuity at $x = +\pi$ (which is identified with the point $x = -\pi$). 
\end{example}
Before we continue, it is useful to introduce \emph{multiindex notation}: for any $\alpha \in \N_0^n$, we set 
\begin{align*}
	\partial_x^{\alpha} f := \partial_{x_1}^{\alpha_1} \cdots \partial_{x_n}^{\alpha_n} f 
\end{align*}
and similarly $x^{\alpha} := x_1^{\alpha_1} \cdots x_n^{\alpha_n}$. The integer $\abs{\alpha} := \sum_{j = 1}^n \alpha_j$ is the degree of $\alpha$.

\subsection{Fundamental properties} % (fold)
\label{Fourier:T:fundamentals}
First, we will enumerate various fundamental properties of the Fourier transform on $\T^n$: 
\begin{proposition}[Fundamental properties of $\Fourier$]\label{Fourier:T:prop:fundamentals}
	Let $f \in L^1(\T^n)$. 
	\begin{enumerate}[(i)]
		\item $\Fourier : L^1(\R^n) \longrightarrow \ell^{\infty}(\Z^n)$ 
		\item $\Fourier$ is linear. 
		\item $\Fourier \bar{f}(k) = \overline{(\Fourier f)(-k)}$ 
		\item $\bigl ( \Fourier f(- \, \cdot \,) \bigr )(k) = (\Fourier f)(-k)$
		\item $\bigl ( \Fourier (T_y f) \bigr ) = \e^{- \ii k \cdot y} \, (\Fourier f)(k)$ where $(T_y f)(x) := f(x-y)$ for $y \in \T^n$ 
		\item $(\Fourier f)(k - j) = \bigl ( \Fourier (\e^{+ \ii j \cdot x} f) \bigr )(k)$, $j \in \Z^n$
		\item For all $f \in \Cont^r(\T^n)$, we have $\bigl ( \Fourier (\partial_x^{\alpha} f) \bigr )(k) = \ii^{\abs{\alpha}} \, k^{\alpha} \, (\Fourier f)(k)$ for all $\abs{\alpha} \leq r$. 
	\end{enumerate}
\end{proposition}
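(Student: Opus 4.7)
The plan is to verify the seven items by direct computation; nothing here is deep, and my main task is to organize the calculations so that each property is reduced to a one-line manipulation of the defining integral.

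For (i) I would estimate $\babs{\hat f(k)} \leq (2\pi)^{-n} \int_{\T^n} \dd x \, \babs{\e^{- \ii k \cdot x} f(x)} = (2\pi)^{-n} \snorm{f}_{L^1(\T^n)}$, uniformly in $k \in \Z^n$, which both gives $\hat f \in \ell^\infty(\Z^n)$ and exhibits $\Fourier$ as a bounded linear map of norm at most $(2\pi)^{-n}$. Linearity in (ii) is immediate from linearity of the Lebesgue integral.

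Properties (iii)--(vi) are each proved by rewriting the integrand. Complex conjugation in (iii) commutes with the integral and turns $\e^{- \ii k \cdot x}$ into $\e^{+ \ii k \cdot x}$, producing $\overline{\hat f(-k)}$. For (iv) I would substitute $y = -x$; the torus is invariant under reflection, the Jacobian has absolute value $1$, and only the sign in the exponent flips. For (v) the substitution $x' = x - y$ together with the $2\pi\Z^n$-periodicity of the integrand moves the domain back to $\T^n$ and pulls the phase $\e^{- \ii k \cdot y}$ outside the integral. Property (vi) is just the observation that $\e^{+ \ii j \cdot x} \, \e^{- \ii k \cdot x} = \e^{- \ii (k-j) \cdot x}$.

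The only item requiring real work is (vii), and I would prove it by induction on $\abs{\alpha}$. For the base case $\alpha = e_j$, integration by parts in the $x_j$-variable yields
\begin{align*}
	\bigl ( \Fourier \partial_{x_j} f \bigr )(k) = \frac{1}{(2\pi)^n} \Bigl [ \e^{-\ii k \cdot x} \, f(x) \Bigr ]_{x_j = -\pi}^{x_j = +\pi} + \frac{\ii k_j}{(2\pi)^n} \int_{\T^n} \dd x \, \e^{-\ii k \cdot x} \, f(x) ,
\end{align*}
and the boundary term vanishes because $f \in \Cont^r(\T^n)$ is $2\pi\Z^n$-periodic \emph{and} $\e^{- \ii k \cdot x}$ is $2\pi$-periodic in $x_j$ thanks to $k_j \in \Z$ (both ingredients are needed). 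The remaining term is $\ii k_j \, \hat f(k)$. Iterating this $\abs{\alpha}$ times, which is legal as long as $\abs{\alpha} \leq r$ so that every intermediate partial $\partial_x^{\beta} f$ with $\beta \leq \alpha$ is still continuous and periodic, produces the factor $\prod_{j=1}^n (\ii k_j)^{\alpha_j} = \ii^{\abs{\alpha}} \, k^{\alpha}$. The main care point throughout is precisely this simultaneous use of periodicity of $f$ and integrality of $k$ to kill the boundary terms at every stage.
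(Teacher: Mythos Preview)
Your proof is correct and follows essentially the same approach as the paper: the same $\ell^\infty$ estimate for (i), direct computation for (ii)--(vi), and integration by parts with periodicity killing the boundary terms for (vii). Your treatment is in fact more detailed than the paper's (which dispatches (ii)--(vi) with the phrase ``direct computation'' and does the partial integration in one stroke rather than by induction), and your explicit remark that both the periodicity of $f$ \emph{and} the integrality of $k$ are needed to cancel the boundary terms is a nice point that the paper leaves implicit.
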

\begin{proof}
	(i) can be deduced from the estimate 
	\begin{align*}
		\bnorm{\Fourier f}_{\ell^{\infty}(\Z^n)} &= \sup_{k \in \Z^n} \babs{\Fourier f(k)} 
		\\
		&
		\leq \sup_{k \in \Z^n} \frac{1}{(2\pi)^n} \int_{\T^n} \dd x \, \babs{\e^{- \ii k \cdot x} \, f(x)} 
		= (2\pi)^{-n} \, \snorm{f}_{L^1(\T^n)} 
		. 
	\end{align*}
	(ii)--(vi) follow from direct computation. 
	
	For (vii), we note that continuous functions on $\T^n$ are also integrable, and thus we see that if $f \in \Cont^r(\T^n)$, then also $\partial_x^{\alpha} f \in L^1(\T^n)$ holds for any $\abs{\alpha} \leq r$. This means $\Fourier \bigl ( \partial_x^{\alpha} f \bigr )$ exists, and we obtain by partial integration 
	\begin{align*}
		\bigl ( \Fourier ( \partial_x^{\alpha} f ) \bigr )(k) &= \frac{1}{(2\pi)^n} \int_{\T^n} \dd x \, \e^{- \ii k \cdot x} \, \partial_x^{\alpha} f(x) 
		\\
		&
		= \frac{(-1)^{\abs{\alpha}}}{(2\pi)^n} \int_{\T^n} \dd x \, \bigl ( \partial_x^{\alpha} \e^{- \ii k \cdot x} \bigr ) \, f(x) 
		\\
		&= \ii^{\abs{\alpha}} \, k^{\alpha} \, (\Fourier f)(k) 
		.
	\end{align*}
	Note that the periodicity of $f$ and its derivatives implies the boundary terms vanish. This finishes the proof.
\end{proof}
\begin{example}[$\Fourier$ representation of heat equation]
	Let us consider the heat equation 
	\begin{align*}
		\partial_t u = \Delta_x u
	\end{align*}
	on $[-\pi,+\pi]^n$. We will see that we can write 
	\begin{align}
		u(t,x) &= \sum_{k \in \Z^n} \hat{u}(t,k) \, \e^{+ \ii k \cdot x}
		\label{Fourier:T:eqn:Fourier_series_heat_equation}
	\end{align}
	in terms of Fourier coefficients $\hat{u}(t,k)$. If we \emph{assume} we can interchange taking derivatives and the sum, then this induces an equation involving the coefficients, 
	\begin{align*}
		\Fourier \bigl ( \partial_t u \bigr ) = \Fourier \bigl ( \Delta_x u \bigr )
		\; \; \Longrightarrow \; \; 
		\partial_t \hat{u}(t,k) &= - k^2 \hat{u}(t,k) 
		. 
	\end{align*}
	And \emph{if} the sum \eqref{Fourier:T:eqn:Fourier_series_heat_equation} converges to an integrable function $u(t)$, then clearly 
	\begin{align*}
		(\Fourier u)(t,k) &= \hat{u}(t,k) 
	\end{align*}
	holds. 
\end{example}
We have indicated before that it is not at all clear in what sense the Fourier series~\eqref{Fourier:eqn:T_Fourier_series} exists. The simplest type of convergence is \emph{absolute} convergence, and here Dominated Convergence gives a sufficient condition under which we can interchange taking limits and summation. This helps to prove that the Fourier series yields a continuous or $\Cont^r$ function if the Fourier coefficients decay fast enough.  
\begin{lemma}[Dominated convergence for sums]\label{Fourier:T:lem:dominated_convergence}
	Let $a^{(j)} \in \ell^1(\Z^n)$ be a sequence of absolutely summable sequences so that the pointwise limits $\lim_{j \to \infty} a^{(j)}(k) = a(k)$ exist. Moreover, assume there exists a non-negative sequence $b = \bigl ( b(k) \bigr )_{k \in \Z^n} \in \ell^1(\Z^n)$ so that 
	\begin{align*}
		\babs{a^{(j)}(k)} \leq b(k) 
	\end{align*}
	holds for all $k \in \Z^n$ and $j \in \N$. Then summing over $k \in \Z^n$ and taking the limit $\lim_{j \to \infty}$ commute, \ie 
	\begin{align*}
		\lim_{j \to \infty} \sum_{k \in \Z^n} a^{(j)}(k) = \sum_{k \in \Z^n} \lim_{j \to \infty} a^{(j)}(k) 
		\, , 
	\end{align*}
	and $a = \bigl ( a(k) \bigr )_{k \in \Z^n} \in \ell^1(\Z^n)$. 
\end{lemma}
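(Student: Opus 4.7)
The plan is to treat this as the counting-measure analogue of the classical dominated convergence theorem, and give a direct $\varepsilon$-argument without invoking the general measure-theoretic machinery. Since the hypotheses involve only absolute summability and pointwise bounds, nothing deeper than the completeness of $\R$ and the absolute convergence of $\sum_{k} b(k)$ should be needed.

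First I would dispose of the claim $a \in \ell^1(\Z^n)$. The pointwise bound $|a^{(j)}(k)| \le b(k)$ is stable under $j \to \infty$, so $|a(k)| \le b(k)$ for every $k \in \Z^n$. Summing over any finite subset $F \subset \Z^n$ gives $\sum_{k \in F} |a(k)| \le \sum_{k \in F} b(k) \le \snorm{b}_{\ell^1(\Z^n)}$, and taking the supremum over finite $F$ yields $a \in \ell^1(\Z^n)$ with $\snorm{a}_{\ell^1} \le \snorm{b}_{\ell^1}$.

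Next I would prove the interchange of limit and sum. Fix $\varepsilon > 0$. Since $b \in \ell^1(\Z^n)$, I choose a finite set $F \subset \Z^n$ large enough that $\sum_{k \notin F} b(k) < \varepsilon/3$. On the finite complement $F$, pointwise convergence $a^{(j)}(k) \to a(k)$ holds for each of the finitely many indices $k \in F$, so there exists $J \in \N$ such that $|a^{(j)}(k) - a(k)| < \varepsilon/(3|F|)$ for all $j \ge J$ and all $k \in F$. Then I split
\begin{align*}
\biggl\lvert \sum_{k \in \Z^n} a^{(j)}(k) - \sum_{k \in \Z^n} a(k) \biggr\rvert
&\le \sum_{k \in F} \babs{a^{(j)}(k) - a(k)} + \sum_{k \notin F} \babs{a^{(j)}(k)} + \sum_{k \notin F} \babs{a(k)} \\
&\le |F| \cdot \tfrac{\varepsilon}{3|F|} + \sum_{k \notin F} b(k) + \sum_{k \notin F} b(k)
< \varepsilon,
\end{align*}
for all $j \ge J$, which gives the desired commutation of $\lim_{j \to \infty}$ with $\sum_{k \in \Z^n}$.

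The only real subtlety is the uniform control of the tail, and this is precisely what the dominating sequence $b$ is there for: the pointwise bound $|a^{(j)}(k)| \le b(k)$ uniform in $j$ is what converts the trivial convergence on the finite set $F$ into convergence of the full sums. I do not expect any serious obstacle; the decomposition above is the entire content of the argument.
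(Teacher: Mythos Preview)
Your proof is correct and essentially identical to the paper's: both split the sum into a finite piece (where pointwise convergence gives control) and a tail (controlled uniformly in $j$ by $b$). The only cosmetic differences are that the paper uses balls $\{|k| \le N\}$ instead of an arbitrary finite set $F$, an $\eps/2 + \eps/2$ rather than $\eps/3 + \eps/3 + \eps/3$ split, and states the $a \in \ell^1(\Z^n)$ conclusion at the end rather than the beginning.
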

\begin{proof}
	Let $\eps > 0$. Then we deduce from the triangle inequality and $\babs{a^{(j)}(k)} \leq b_k$, $\babs{a(k)} \leq b(k)$ that 
	\begin{align*}
		\abs{\sum_{k \in \Z^n} a^{(j)}(k) - \sum_{k \in \Z^n} a(k)} &\leq \sum_{\abs{k} \leq N} \babs{a^{(j)}(k) - a(k)} + \sum_{\abs{k} > N} \babs{a^{(j)}(k) - a(k)}
		\\
		&\leq \sum_{\abs{k} \leq N} \babs{a^{(j)}(k) - a(k)} + 2 \, \sum_{\abs{k} > N} b(k)  
		. 
	\end{align*}
	If we choose $N \in \N_0$ large enough, we can estimate the second term \emph{independently of $j$} and make it less than $\nicefrac{\eps}{2}$. The first term is a \emph{finite} sum converging to $0$, and hence, we can make it $< \nicefrac{\eps}{2}$ if we choose $j \geq K$ large enough. Hence, 
	\begin{align*}
		\abs{\sum_{k \in \Z^n} a^{(j)}(k) - \sum_{k \in \Z^n} a(k)} < \nicefrac{\eps}{2} + \nicefrac{\eps}{2} 
		= \eps
	\end{align*}
	holds for $j \geq K$. Moreover, $a = \bigl ( a(k) \bigr )_{k \in \Z^n}$ is absolutely summable since $\babs{a(k)} \leq b(k)$ and $b = \bigl ( b(k) \bigr )_{k \in \Z^n}$ is in $\ell^1(\Z^n)$. 
\end{proof}
\begin{corollary}[Continuity, smoothness and decay properties]\label{Fourier:T:prop:continuity_smoothness_decay}
	\begin{enumerate}[(i)]
		\item Assume the Fourier coefficients $\Fourier f \in \ell^1(\Z^n)$ of $f \in L^1(\T^n)$ are absolutely summable. Then 
		\begin{align*}
			f(x) &= \sum_{k \in \Z^n} (\Fourier f)(k) \, \e^{+ \ii k \cdot x} 
		\end{align*}
		holds almost everywhere and $f$ has a continuous representative. 
		\item Assume the Fourier coefficients of $f \in L^1(\T^n)$ are such that $\bigl ( \abs{k}^s \hat{f}(k) \bigr )_{k \in \Z^n}$ is absolutely summable for some $s \in \N$. Then 
		\begin{align*}
			\partial_x^{\alpha} f(x) &= \sum_{k \in \Z^n} \ii^{\abs{\alpha}} \, k^{\alpha} \, (\Fourier f)(k) \, \e^{+ \ii k \cdot x} 
		\end{align*}
		holds almost everywhere for all $\abs{\alpha} \leq s$ and $f$ has a $\Cont^s(\T^n)$ representative. Moreover, the Fourier series of $\partial_x^{\alpha} f$, $\abs{\alpha} \leq s$, exist as absolutely convergent sums. 
	\end{enumerate}
\end{corollary}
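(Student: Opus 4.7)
The plan is to build up the continuous/smooth representative directly from the Fourier series and then show it coincides with $f$ almost everywhere by matching Fourier coefficients.

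For part (i), I would first \emph{define} the candidate representative
\[
g(x) := \sum_{k \in \Z^n} (\Fourier f)(k) \, \e^{+ \ii k \cdot x},
\]
and show it is continuous. Since each partial sum $g_N(x) := \sum_{\abs{k} \leq N} (\Fourier f)(k) \, \e^{+ \ii k \cdot x}$ is a finite sum of continuous functions and since $\babs{(\Fourier f)(k) \, \e^{+ \ii k \cdot x}} \leq \babs{(\Fourier f)(k)}$ is a summable majorant independent of $x$, the Weierstrass $M$-test yields uniform convergence $g_N \to g$ on $\T^n$. Hence $g \in \Cont(\T^n)$. To identify $g$ with $f$, I would compute the Fourier coefficients of $g$: the uniform convergence lets me interchange sum and integral to obtain
\[
(\Fourier g)(j) = \sum_{k \in \Z^n} (\Fourier f)(k) \, \frac{1}{(2\pi)^n} \int_{\T^n} \dd x \, \e^{- \ii (j-k) \cdot x} = (\Fourier f)(j),
\]
using the orthogonality $(2\pi)^{-n} \int_{\T^n} \e^{- \ii (j-k) \cdot x} \dd x = \delta_{jk}$. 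Thus $f$ and $g$ have identical Fourier coefficients, and by uniqueness of Fourier coefficients on $L^1(\T^n)$, $f = g$ almost everywhere.

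For part (ii), the idea is the same but applied at higher order. Define the formal derivative series, for each multi-index $\abs{\alpha} \leq s$,
\[
g_{\alpha}(x) := \sum_{k \in \Z^n} \ii^{\abs{\alpha}} \, k^{\alpha} \, (\Fourier f)(k) \, \e^{+ \ii k \cdot x}.
\]
Since $\babs{k^{\alpha}} \leq \abs{k}^{\abs{\alpha}} \leq 1 + \abs{k}^s$ (splitting $\abs{k} \leq 1$ and $\abs{k} \geq 1$), the hypothesis $(\abs{k}^s \, \hat{f}(k))_k \in \ell^1(\Z^n)$ together with $\hat{f} \in \ell^1(\Z^n)$ (which follows from $(\abs{k}^s \hat{f}(k))_k \in \ell^1$ combined with $\abs{k}^0 \leq \abs{k}^s$ for $\abs{k} \geq 1$, and finiteness of $\hat{f}(0)$) gives an absolutely and uniformly convergent series, so each $g_{\alpha} \in \Cont(\T^n)$. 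I would then invoke the standard theorem on termwise differentiation of series (justified again by uniform convergence of each derivative series) to conclude that $g_0 \in \Cont^s(\T^n)$ with $\partial_x^{\alpha} g_0 = g_{\alpha}$ for $\abs{\alpha} \leq s$. Since $g_0$ and $f$ share Fourier coefficients, part~(i) yields $f = g_0$ a.e., so $f$ admits a $\Cont^s$ representative. Finally, Proposition~\ref{Fourier:T:prop:fundamentals}~(vii) applied to this representative gives $\Fourier(\partial_x^{\alpha} f)(k) = \ii^{\abs{\alpha}} k^{\alpha} \, (\Fourier f)(k)$, whose absolute summability is exactly the hypothesis.

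The main obstacle is the uniqueness of Fourier coefficients on $L^1(\T^n)$ — the statement that $\hat{f} \equiv 0$ implies $f = 0$ almost everywhere. This is not proved in the preceding text, and a rigorous proof typically goes through Fejér's theorem (convergence of Cesàro means in $L^1$) or an approximation-of-the-identity argument with the Fejér kernel. In the plan I would either cite this as a standard fact or, if needed, insert a short Fejér-kernel computation: convolving $f$ with the Fejér kernel yields trigonometric polynomials built from $\hat{f}$, which converge to $f$ in $L^1$; if all $\hat{f}(k)$ vanish, these approximations are identically zero, forcing $f = 0$ a.e. Everything else in the argument is a routine application of uniform convergence, Lemma~\ref{Fourier:T:lem:dominated_convergence}, and Proposition~\ref{Fourier:T:prop:fundamentals}.
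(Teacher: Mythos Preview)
Your argument is correct and follows essentially the same line as the paper: a uniform-convergence (dominated convergence for sums / Weierstrass $M$-test) argument gives continuity of the series, and the same majorization with $\babs{k^{\alpha}} \leq C \sabs{k}^s$ handles termwise differentiation for part~(ii). In fact you are more scrupulous than the paper's own proof of this Corollary, which only verifies continuity of the sum and silently defers the ``equals $f$ a.e.'' part to the later Fourier inversion result (Proposition~\ref{Fourier:thm:T_Fourier_inversion}), itself built on the injectivity of $\Fourier$ via the F\'ejer kernel (Proposition~\ref{Fourier:T:prop:F_injective}); your identification of this gap and the F\'ejer-kernel remedy is exactly what the paper does a few pages later.
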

\begin{proof}
	\begin{enumerate}[(i)]
		\item This follows from Dominated Convergence, observing that 
		\begin{align*}
			b(k) := \babs{\hat{f}(k) \, \e^{+ \ii k \cdot x}} = \babs{\hat{f}(k)} 
		\end{align*}
		is a summable sequence which dominates each term of the sum on the right-hand side of \eqref{Fourier:eqn:T_Fourier_series} independently of $x \in \T^n$. 
		\item For any multiindex $\alpha \in \N_0^n$ with $\abs{\alpha} \leq s$, we estimate each term in the sum from above by $\hat{f}(k)$ times 
		\begin{align*}
			\Babs{\partial_x^{\alpha} \bigl ( \e^{+ \ii k \cdot x} \bigr )} &= \babs{k^{\alpha} \, \e^{+ \ii k \cdot x}} 
			= \babs{k^{\alpha}} 
			\\
			&\leq C \, \sabs{k}^{\abs{\alpha}} 
			\leq C \, \sabs{k}^s 
			. 
		\end{align*}
		By assumption, $\bigl ( \sabs{k}^s \hat{f}(k) \bigr )_{k \in \Z^n}$ and thus also $\bigl ( \sabs{k}^{\abs{\alpha}} \hat{f}(k) \bigr )_{k \in \Z^n}$, $\abs{\alpha} \leq s$, are elements of $\ell^1(\Z^n)$, and hence, we have found the sum which dominates the right-hand side of 
		\begin{align*}
			\partial_x^{\alpha} f(x) &= \sum_{k \in \Z^n} \ii^{\abs{\alpha}} \, k^{\alpha} \, \hat{f}(k) \, \e^{+ \ii k \cdot x} 
		\end{align*}
		for all $x \in \T^n$. Thus, a Dominated Convergence argument implies we can interchange differentiation with summation and that the sum depends on $x$ in a continuous fashion. 
	\end{enumerate}
\end{proof}
In what follows, we will need a multiplication, the convolution, defined on $L^1(\T^n)$ and $\ell^1(\Z^n)$. The convolution $\ast$ is intrinsically linked to the Fourier transform: similar to the case of $\R^n$, we define 
\begin{align*}
	(f \ast g)(x) := \int_{\T^n} \dd y \, f(x-y) \, g(y) 
	. 
\end{align*}
where we have used the identification between periodic functions on $\R^n$ and functions on $\T^n$. Moreover, a straightforward modification to the arguments in problem~16 show that 
\begin{align}
	\norm{f \ast g}_{L^1(\T^n)} \leq \norm{f}_{L^1(\T^n)} \, \norm{g}_{L^1(\T^n)} 
	. 
	\label{Fourier:eqn:convolution_Tn}
\end{align}
There is also a convolution on $\ell^1(\Z^n)$: for any two sequences $a , b \in \ell^1(\Z^n)$, we set 
\begin{align}
	(a \ast b)(k) := \sum_{j \in \Z^n} a(k-j) \, b(j) 
	. 
	\label{Fourier:eqn:convolution_Zn}
\end{align}
More careful arguments allow one to generalize the convolution as a map between different spaces, \eg $\ast : L^1(\T^n) \times L^p(\T^n) \longrightarrow L^p(\T^n)$. 

The Fourier transform intertwines pointwise multiplication of functions or Fourier coefficients with the convolution, a fact that will be eminently useful in applications. 
\begin{proposition}
	\begin{enumerate}[(i)]
		\item $f , g \in L^1(\T^n)$ $\; \Longrightarrow \;$ $\Fourier (f \ast g) = (2\pi)^n \, \Fourier f \, \Fourier g$
		\item $f , g \in L^1(\T^n)$, $\hat{f} , \hat{g} \in \ell^1(\Z^n)$ $\; \Longrightarrow \;$ $\displaystyle \sum_{k \in \Z^n} \bigl ( \Fourier f \ast \Fourier g \bigr ) \, \e^{+ \ii k \cdot x} = f(x) \, g(x)$
	\end{enumerate}
\end{proposition}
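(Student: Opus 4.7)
My plan is to prove the two statements by direct computation, using Fubini's theorem and absolute convergence of double sums.

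For part (i), I would start from the definition and unpack the convolution:
\begin{align*}
\bigl(\Fourier(f\ast g)\bigr)(k) = \frac{1}{(2\pi)^n}\int_{\T^n}\dd x\,\e^{-\ii k\cdot x}\int_{\T^n}\dd y\,f(x-y)\,g(y).
\end{align*}
The key justification is that the estimate on $\norm{f\ast g}_{L^1(\T^n)}$ just above, namely $\norm{f\ast g}_{L^1}\leq\norm{f}_{L^1}\norm{g}_{L^1}$, tells us the iterated integrand $\e^{-\ii k\cdot x} f(x-y) g(y)$ is absolutely integrable over $\T^n\times\T^n$, so Fubini applies. Swapping the order, substituting $z=x-y$ (which is legitimate because both $f$ and $\e^{-\ii k\cdot x}$ are $2\pi\Z^n$-periodic, so the domain of integration is unchanged), and factoring $\e^{-\ii k\cdot y}$ out yields $(2\pi)^n\hat f(k)\hat g(k)$.

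For part (ii), I would work with the double sum
\begin{align*}
\sum_{k\in\Z^n}(\hat f\ast\hat g)(k)\,\e^{+\ii k\cdot x}=\sum_{k\in\Z^n}\sum_{j\in\Z^n}\hat f(k-j)\,\hat g(j)\,\e^{+\ii k\cdot x}.
\end{align*}
The hypothesis $\hat f,\hat g\in\ell^1(\Z^n)$ makes this doubly absolutely summable, since
\begin{align*}
\sum_{k,j\in\Z^n}\babs{\hat f(k-j)\,\hat g(j)}=\snorm{\hat f}_{\ell^1}\,\snorm{\hat g}_{\ell^1}<\infty
\end{align*}
after reindexing the inner sum. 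This legitimizes interchanging the order of summation (a discrete Fubini, which also follows from Lemma~\ref{Fourier:T:lem:dominated_convergence} applied to partial sums). Substituting $m=k-j$ factors the sum into a product of two independent absolutely convergent series in $m$ and $j$, each of which equals $f(x)$ respectively $g(x)$ by Corollary~\ref{Fourier:T:prop:continuity_smoothness_decay}~(i).

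The main obstacles are bookkeeping rather than conceptual: for (i) one must be careful to check that the translation invariance of the measure on $\T^n$ really does absorb the $y$-shift, and for (ii) one must rigorously justify the reordering of the double sum before splitting it as a product of two series. Both reduce to verifying absolute convergence and then invoking Fubini in the appropriate form (continuous in (i), discrete in (ii)).
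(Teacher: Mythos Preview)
Your proposal is correct and follows essentially the same route as the paper: a direct Fubini computation for (i), and for (ii) an absolute-summability argument to split the double sum as a product of two Fourier series. The only cosmetic difference is that you invoke Corollary~\ref{Fourier:T:prop:continuity_smoothness_decay}~(i) to identify the two series with $f(x)$ and $g(x)$, whereas the paper forward-references the Fourier inversion result (Proposition~\ref{Fourier:thm:T_Fourier_inversion}); both cite the same underlying fact.
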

\begin{proof}
	\begin{enumerate}[(i)]
		\item The convolution of two $L^1(\T^n)$ functions is integrable, and thus, the Fourier transform of $f \ast g$ exists. A quick computation yields the claim: 
		\begin{align*}
			\bigl ( \Fourier (f \ast g) \bigr )(k) &= \frac{1}{(2\pi)^n} \int_{\T^n} \dd x \, \e^{- \ii k \cdot x} \, (f \ast g)(x)
			\\
			&= \frac{1}{(2\pi)^n} \int_{\T^n} \dd x \, \int_{\T^n} \dd y \, \e^{- \ii k \cdot x} \, f(x-y) \, g(y) 
			\\
			&= \frac{1}{(2\pi)^n} \int_{\T^n} \dd x \, \int_{\T^n} \dd y \, \e^{- \ii k \cdot (x-y)} \, f(x-y) \, \e^{- \ii k \cdot y} \, g(y) 
			\\
			&= (2\pi)^n \, (\Fourier f)(k) \, (\Fourier g)(k) 
		\end{align*}
		\item By assumption on the Fourier series of $f$ and $g$, the sequence $\hat{f} \ast \hat{g}$ is absolutely summable, and hence 
		\begin{align*}
			\sum_{k \in \Z^n} \bigl ( \hat{f} \ast \hat{g} \bigr )(k) \, \e^{+ \ii k \cdot x} &= \sum_{k , j \in \Z^n} \hat{f}(k-j) \, \hat{g}(j) \, \e^{+ \ii (k-j) \cdot x} \, \e^{+ \ii j \cdot x} 
			\\
			&= \Biggl ( \sum_{k \in \Z^n} \hat{f}(k) \e^{+ \ii k \cdot x} \Biggr ) \, \Biggl ( \sum_{j \in \Z^n} \hat{g}(j) \e^{+ \ii j \cdot x} \Biggr ) 
		\end{align*}
		exists for all $x \in \T^n$. We will show later in Theorem~\ref{Fourier:thm:T_Fourier_inversion} that for almost all $x \in \T^n$, the sum $\sum_{k \in \Z^n} \hat{f}(k) \, \e^{+ \ii k \cdot x}$ equals $f(x)$ and similarly for $g$. 
	\end{enumerate}
\end{proof}
One helpful fact in applications is that the $L^p(\T^n)$ spaces are nested (\cf Section~\ref{Fourier:R:Lp}): 
\begin{lemma}\label{Fourier:T:lem:nesting_Lp_spaces}
	$L^q(\T^n) \subseteq L^p(\T^n)$ for $1 \leq p \leq q \leq +\infty$
\end{lemma}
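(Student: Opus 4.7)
The plan is to exploit the fact that $\T^n$ has \emph{finite} Lebesgue measure, equal to $(2\pi)^n$, which is precisely the feature that distinguishes this nesting from the situation on $\R^n$ (where no such inclusion holds). The basic idea is that on a bounded domain, a function that is highly integrable is automatically integrable to lower powers, because there is no ``mass at infinity'' to worry about. I would split the argument into the two cases $q = \infty$ and $q < \infty$, since the borderline case requires only a trivial pointwise bound while the general case needs Hölder's inequality.

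For the first case, suppose $f \in L^{\infty}(\T^n)$. Then $\abs{f(x)} \leq \snorm{f}_{\infty}$ for almost every $x \in \T^n$, and therefore
\begin{align*}
\int_{\T^n} \dd x \, \babs{f(x)}^p \leq (2\pi)^n \, \snorm{f}_{\infty}^p < \infty,
\end{align*}
which yields $f \in L^p(\T^n)$ together with the bound $\snorm{f}_p \leq (2\pi)^{n/p} \snorm{f}_{\infty}$.

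For the second case, let $1 \leq p < q < \infty$ and $f \in L^q(\T^n)$. The trick is to write $\abs{f}^p = \abs{f}^p \cdot 1$ and apply Hölder's inequality with conjugate exponents $r := q/p > 1$ and $r' := q/(q-p)$. This gives
\begin{align*}
\int_{\T^n} \dd x \, \babs{f(x)}^p
&\leq \left( \int_{\T^n} \dd x \, \babs{f(x)}^{p r} \right)^{1/r} \left( \int_{\T^n} \dd x \, 1^{r'} \right)^{1/r'} \\
&= \snorm{f}_q^p \, (2\pi)^{n(q-p)/q},
\end{align*}
so that $\snorm{f}_p \leq (2\pi)^{n(1/p - 1/q)} \snorm{f}_q < \infty$, which establishes $f \in L^p(\T^n)$ and the continuous inclusion $L^q(\T^n) \hookrightarrow L^p(\T^n)$.

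There is no genuine obstacle here; the only thing to be careful about is the degenerate cases (namely $p = q$, which is trivial, and $q = \infty$, which does not admit Hölder in the above form and must be treated separately as done above). Note that the finiteness of the measure enters in a crucial way through the factor $(2\pi)^n = \int_{\T^n} \dd x$; on $\R^n$ this constant would be infinite and the argument would collapse, reflecting the well-known fact that no such nesting holds on infinite-measure spaces.
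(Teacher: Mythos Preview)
Your proof is correct and complete, but it takes a different route from the paper's sketch. The paper suggests splitting the domain into the region where $\abs{f} \leq 1$ and the region where $\abs{f} > 1$: on the first piece one bounds $\abs{f}^p \leq 1$ and uses the finite measure of $\T^n$, while on the second piece one uses $\abs{f}^p \leq \abs{f}^q$ (since $p \leq q$ and $\abs{f} > 1$) and invokes $f \in L^q(\T^n)$. Your approach via H\"older's inequality is slightly less elementary in that it imports an external inequality, but it has the advantage of yielding the sharp embedding constant $\snorm{f}_p \leq (2\pi)^{n(1/p - 1/q)} \snorm{f}_q$ directly, which in particular makes the continuity of the inclusion $L^q(\T^n) \hookrightarrow L^p(\T^n)$ explicit; the paper's splitting argument gives only a cruder bound of the form $\snorm{f}_p^p \leq (2\pi)^n + \snorm{f}_q^q$.
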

That means it suffices to define $\Fourier$ on $L^1(\T^n)$; Life on $\R^n$ is not so simple, though. 
\begin{proof}
	We content ourselves with a sketch: the main idea is to split the integral of $f$ into a region where $\abs{f} \leq 1$ and $\abs{f} > 1$, and then use the compactness of $\T^n$. \marginpar{2013.11.07}
\end{proof}
%
% subsection fundamental_properties (end)

\subsection{Approximating Fourier series by trigonometric polynomials} % (fold)
\label{Fourier:T:trig_pol}
The idea of the Fourier series is to approximate $L^1(\T^n)$ functions by 
\begin{definition}[Trigonometric polynomials]
	A trigonometric polynomial on $\T^n$ is a function of the form 
	\begin{align*}
		P(x) = \sum_{k \in \Z^n} a(k) \, \e^{+ \ii k \cdot x} 
	\end{align*}
	where $\{ a(k) \}_{k \in \Z}$ is a finitely supported sequence in $\Z^n$. The \emph{degree of $P$} is the largest number $\sum_{j = 1}^n \abs{k_j}$ so that $a(k) \neq 0$ where $k = (k_1, \ldots , k_n)$. We denote the set of trigonometric polynomials by $\mathrm{Pol}(\T^n)$. 
\end{definition}
Writing a function in terms of its Fourier series is initially just an ansatz, \ie we do not know whether the formal sum $\sum_{k \in \Z^n} \hat{f}(k) \, \e^{+ \ii k \cdot x}$ converges in any meaningful way. In some sense, this is akin to approximating functions using the Taylor series: only \emph{analytic} functions can be locally expressed in terms of a power series in $x - x_0$, but a smooth function need not have a convergent or useful Taylor series at a point. 

The situation is similar for Fourier series: we need \emph{additional conditions on $f$} to ensure that its Fourier series converges in a strong sense (\eg absolute convergence). However, suitable resummations \emph{do} converge, and one particularly convenient way to approximate $f \in L^1(\T^n)$ by trigonometric polynomials is to convolve it with an 
\begin{definition}[Approximate identity]\label{Fourier:T:defn:approximate_id}
	An approximate identity or \emph{Dirac sequence} is a family of non-negative functions $(\delta_{\eps})_{\eps \in (0,\eps_0)} \subset L^1(\T^n)$, $\eps_0 > 0$, with the following two properties: 
	\begin{enumerate}[(i)]
		\item $\norm{\delta_{\eps}}_{L^1(\T^n)} = 1$ holds for all $\eps \in (0,\eps_0)$. 
		\item For any $R > 0$ we have $\displaystyle \lim_{\eps \to 0} \int_{\abs{x} \leq R} \dd x \, \delta_{\eps}(x) = 1$. 
	\end{enumerate}
\end{definition}
Sometimes the assumption that the $\delta_{\eps}$ are non-negative is dropped. One can show that Dirac sequences are also named approximate identities, because 
\begin{theorem}\label{Fourier:T:thm:approximate_id_convolution}
	Let $(\delta_{\eps})_{n \in \N}$ be an approximate identity. Then for all $f \in L^1(\T^n)$ we have 
	\begin{align*}
		\lim_{\eps \to 0} \bnorm{\delta_{\eps} \ast f - f}_{L^1(\T^n)} = 0 
		. 
	\end{align*}
\end{theorem}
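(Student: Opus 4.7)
The plan is to reduce the claim to the continuity of translation in $L^1(\T^n)$ and then exploit the two defining properties of an approximate identity. First I would use property~(i), i.e.\ $\int_{\T^n} \dd y \, \delta_{\eps}(y) = 1$, to rewrite
\begin{align*}
(\delta_{\eps} \ast f)(x) - f(x) = \int_{\T^n} \dd y \, \delta_{\eps}(y) \, \bigl( (T_y f)(x) - f(x) \bigr),
\end{align*}
where $(T_y f)(x) := f(x-y)$ as in Proposition~\ref{Fourier:T:prop:fundamentals}. Taking $L^1$-norms in $x$ and applying Fubini (everything is non-negative since $\delta_{\eps} \geq 0$) yields Minkowski's integral inequality
\begin{align*}
\bnorm{\delta_{\eps} \ast f - f}_{L^1(\T^n)} \leq \int_{\T^n} \dd y \, \delta_{\eps}(y) \, \bnorm{T_y f - f}_{L^1(\T^n)}.
\end{align*}

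Second, I would invoke continuity of translation on $L^1(\T^n)$, namely $\lim_{y \to 0} \bnorm{T_y f - f}_{L^1(\T^n)} = 0$. For $f \in \Cont(\T^n)$ this is immediate from uniform continuity on the compact torus together with the fact that uniform convergence implies $L^1$-convergence on a bounded domain. The general case $f \in L^1(\T^n)$ then follows by a standard $3\eta$-argument using the density of $\Cont(\T^n)$ in $L^1(\T^n)$: pick $g \in \Cont(\T^n)$ with $\norm{f - g}_{L^1} < \eta$, use $\norm{T_y (f-g)}_{L^1} = \norm{f-g}_{L^1}$ (translation invariance of the Lebesgue measure on $\T^n$), and control $\norm{T_y g - g}_{L^1}$ by the continuous case.

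Third, given $\eta > 0$, I would use this to choose $R > 0$ with $\bnorm{T_y f - f}_{L^1(\T^n)} < \eta$ for all $\sabs{y} < R$, and split the integral:
\begin{align*}
\bnorm{\delta_{\eps} \ast f - f}_{L^1} &\leq \int_{\sabs{y} < R} \dd y \, \delta_{\eps}(y) \, \bnorm{T_y f - f}_{L^1} + \int_{\sabs{y} \geq R} \dd y \, \delta_{\eps}(y) \, \bnorm{T_y f - f}_{L^1} \\
&\leq \eta \cdot 1 + 2 \snorm{f}_{L^1} \Bigl( 1 - \int_{\sabs{y} < R} \dd y \, \delta_{\eps}(y) \Bigr).
\end{align*}
Property~(ii) ensures the second term tends to $0$ as $\eps \to 0$, so $\limsup_{\eps \to 0} \bnorm{\delta_{\eps} \ast f - f}_{L^1} \leq \eta$, and since $\eta$ was arbitrary the limit is $0$.

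The main obstacle is establishing the $L^1$-continuity of translation, since for merely integrable $f$ one has no pointwise control and must route through the density of continuous functions. Everything else is bookkeeping with Fubini, the normalization~(i), and the mass-concentration property~(ii); note that inequality \eqref{Fourier:eqn:convolution_Tn} also implicitly plays a role in guaranteeing $\delta_{\eps} \ast f \in L^1(\T^n)$ in the first place.
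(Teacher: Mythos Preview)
Your proof is correct and follows precisely the outline the paper sketches (the paper does not give a full proof but only the plan ``localize $\delta_{\eps}$ close to $0$ and away from $0$, approximate $f$, and use property~(ii) away from $0$'', referring to Grafakos for details). The only cosmetic difference is that you establish $L^1$-continuity of translation via the density of $\Cont(\T^n)$ rather than of step functions, which is if anything slightly cleaner.
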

The proof is somewhat tedious: one needs to localize $\delta_{\eps}$ close to $0$ and away from $0$, approximate $f$ by a step function near $x = 0$ and use property~(ii) of approximate identities away from $x = 0$. The interested reader may look it up in \cite[Theorem~1.2.19~(i)]{Grafakos:Fourier_analysis:2008}. 
\medskip

\noindent
One standard example of an approximate identity in this context is the the \emph{Féjer kernel} which is constructed from the \emph{Dirichlet kernel}. In one dimension, the Dirichlet kernel is 
\begin{align}
	d_N(x_1) := \frac{1}{2\pi} \, \sum_{\abs{k} \leq N} \e^{+ \ii k_1 x_1} 
	= \frac{\sin (2N+1) x_1}{\sin x_1} 
	. 
\end{align}
Now the one-dimensional Féjer kernel is the Césaro mean of $d_N$, 
\begin{align}
	f_N(x_1) := \frac{1}{2\pi} \, \sum_{k = -N}^{+N} \left ( 1 - \frac{\abs{k}}{N+1} \right ) \, \e^{+ \ii k x_1}
	= \frac{1}{N+1} \left ( \frac{\sin (N+1) x_1}{\sin x_1} \right )^2 
	. 
\end{align}
The higher-dimensional Dirichlet and Féjer kernels are then defined as 
\begin{align}
	D_N(x) := \prod_{j = 1}^n d_N(x_j)
\end{align}
and 
\begin{align}
	F_N(x) :& \negmedspace= \frac{1}{(N+1)^2} \sum_{k_1 = 0}^N \cdots \sum_{k_n = 0}^N d_{k_1}(x_1) \cdots d_{k_n}(x_n) 
	\\
	&
	= \frac{1}{(2\pi)^n} \, \sum_{\substack{k \in \Z^n \\ \abs{k_j} \leq N}} \left ( \prod_{j = 1}^N \left ( 1 - \frac{\abs{k_j}}{N+1} \right ) \right ) \, \e^{+ \ii k \cdot x}
	\notag 
	\\
	&= \frac{1}{(2\pi)^n} \frac{1}{(N+1)^n} \prod_{j = 1}^n \left ( \frac{\sin (N+1) x_j}{\sin x_j} \right )^2 
	. 
\end{align}
One can see that 
\begin{lemma}
	$(F_N)_{N \in \N}$ is an approximate identity. 
\end{lemma}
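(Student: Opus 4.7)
The plan is to verify the two defining properties of Definition~\ref{Fourier:T:defn:approximate_id} with $\eps = 1/(N+1)$ (so that $\eps \to 0$ corresponds to $N \to \infty$). Non-negativity of $F_N$ is manifest from the closed-form expression
\begin{align*}
	F_N(x) = \frac{1}{(2\pi)^n (N+1)^n} \prod_{j=1}^n \left ( \frac{\sin (N+1) x_j}{\sin x_j} \right )^2,
\end{align*}
which is a product of squares and a positive prefactor. For the normalization $\norm{F_N}_{L^1(\T^n)} = 1$, I would integrate $F_N$ term-by-term using its finite Fourier series expansion: since $\int_{\T^n} \e^{+\ii k \cdot x} \dd x = (2\pi)^n \delta_{k,0}$, only the $k=0$ term survives, contributing $(2\pi)^n \cdot \frac{1}{(2\pi)^n} = 1$.

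The heart of the argument is property~(ii), the concentration at the origin. First I would establish this in one dimension: on the complement $\{ x_1 \in [-\pi,+\pi] : \abs{x_1} \geq r \}$ of a small neighborhood of $0$, there is a lower bound $\abs{\sin x_1} \geq c_r > 0$ (e.g.\ $c_r = \sin r$ by monotonicity of $\abs{\sin}$ on $[-\pi,+\pi]$ away from $0$ and $\pm\pi$; the endpoints $\pm \pi$ are identified, so one treats the punctured neighborhood of $0$ on the circle). This yields the uniform estimate
\begin{align*}
	0 \leq f_N(x_1) \leq \frac{1}{2\pi (N+1) \, c_r^2} \xrightarrow{N \to \infty} 0
\end{align*}
for $\abs{x_1} \geq r$, so combined with the one-dimensional normalization $\int_{-\pi}^{+\pi} f_N(x_1) \dd x_1 = 1$ (proved as above), we get $\int_{\abs{x_1} < r} f_N(x_1) \dd x_1 \to 1$ as $N \to \infty$ for every $r > 0$.

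To transfer this to dimension $n$, I would use the product structure $F_N(x) = \prod_{j=1}^n f_N(x_j)$. Given $R > 0$, note that $\{ \abs{x} \leq R \}$ contains the cube $Q_r := \{ \abs{x_j} \leq r, \; j = 1, \ldots, n \}$ for $r := R/\sqrt{n}$. Hence by Fubini and the one-dimensional result,
\begin{align*}
	\int_{\abs{x} \leq R} F_N(x) \dd x \geq \int_{Q_r} \prod_{j=1}^n f_N(x_j) \dd x = \left ( \int_{\abs{x_1} \leq r} f_N(x_1) \dd x_1 \right )^n \xrightarrow{N \to \infty} 1.
\end{align*}
Since the whole integral is bounded above by $1$ from normalization, property~(ii) follows.

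The main technical point is the one-dimensional decay estimate $f_N(x_1) \to 0$ away from the origin; once that is in hand, the product structure and Fubini dispose of the multidimensional concentration essentially for free. The normalization step is really the only place the trigonometric-polynomial representation of $F_N$ is used.
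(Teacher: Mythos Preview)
Your argument is correct and matches the paper's approach (reduce to one dimension via the product structure, get normalization from the Fourier series, and concentration from the closed sine formula), only with considerably more detail than the paper's two-line sketch. One remark: the closed form in the text has a typo---it should read $\sin\bigl((N+1)x_1/2\bigr)/\sin(x_1/2)$---which is why you ran into the endpoint issue at $\pm\pi$; with the correct denominator $\sin(x_1/2)$ vanishes only at $x_1 = 0$ on $[-\pi,\pi]$, and your lower bound $c_r = \sin(r/2)$ holds on all of $\{|x_1| \geq r\}$ without any appeal to the identification.
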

\begin{proof}
	Since $F_N$ is the product of one-dimensional Féjer kernels and all the integrals factor, it suffices to consider the one-dimensional case: we note that $f_N$ is non-negative. Then the fact that $\int_{\T^n} \dd x \, \e^{+ \ii k \cdot x} = 0$ for $k \in \Z^n \setminus \{ 0 \}$ and $\int_{\T^n} \dd x \, \e^{0} = 2 \pi$ implies (i). 
	
	(ii) is equivalent to proving $\lim_{N \to \infty} \int_{\abs{x} > R} \dd x \, f_N(x)$. But this follows from writing $f_N$ in terms of sines. 
\end{proof}
Moreover, if we convolve $f \in L^1(\T^n)$ with $F_N$, then 
\begin{align*}
	(F_N \ast f)(x) &= \sum_{\substack{k \in \Z^n \\ \abs{k_j} \leq N}} \left ( 1 - \frac{\abs{k_1}}{N+1} \right ) \cdots  \left ( 1 - \frac{\abs{k_n}}{N+1} \right ) \, \hat{f}(k) \, \e^{+ \ii k \cdot x} 
\end{align*}
is a trigonometric polynomial of degree $N$, and thus 
\begin{proposition}\label{Fourier:T:prop:Pol_dense_Lp}
	$\mathrm{Pol}(\T^n)$ is dense in $L^p(\T^n)$ for any $1 \leq p < \infty$. 
\end{proposition}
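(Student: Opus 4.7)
The plan is a standard three-epsilon/density argument that leverages two ingredients already visible in the excerpt: the fact that $F_N \ast f$ is a trigonometric polynomial of degree $\leq nN$ for any $f \in L^1(\T^n)$ (from the explicit formula just before the proposition) and the approximate identity property of $(F_N)_{N \in \N}$. The one external fact we will invoke is the density of $\Cont(\T^n)$ in $L^p(\T^n)$ for $1 \leq p < \infty$, a standard measure-theoretic result (approximation by step functions on cubes, then by continuous bump mollifications of those step functions).

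First I would reduce to approximating continuous functions. Given $f \in L^p(\T^n)$ and $\eps > 0$, pick $g \in \Cont(\T^n)$ with $\snorm{f - g}_{L^p(\T^n)} < \eps/2$. Since $F_N \ast g \in \mathrm{Pol}(\T^n)$, it suffices to show $\snorm{F_N \ast g - g}_{L^p(\T^n)} \to 0$ as $N \to \infty$; then the triangle inequality closes the argument.

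The core step is showing uniform (hence $L^p$, since $\T^n$ has finite measure) convergence of $F_N \ast g$ to $g$ for continuous $g$. Using $\int_{\T^n} F_N(y)\,\dd y = 1$ from property~(i) of Definition~\ref{Fourier:T:defn:approximate_id}, we write
\begin{align*}
  \babs{(F_N \ast g)(x) - g(x)} &= \abs{\int_{\T^n} \dd y \, F_N(y) \, \bigl ( g(x-y) - g(x) \bigr )}
  \\
  &\leq \int_{\abs{y} < R} \dd y \, F_N(y) \, \babs{g(x-y) - g(x)} + 2 \snorm{g}_{\infty} \int_{\abs{y} \geq R} \dd y \, F_N(y) .
\end{align*}
By compactness $g$ is uniformly continuous, so for any $\delta > 0$ we may choose $R > 0$ so that the first integrand is $< \delta$ uniformly in $x$, bounding the first term by $\delta$. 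The second term is independent of $x$ and vanishes as $N \to \infty$ by property~(ii) of the approximate identity. Choosing first $\delta = \eps/2$, then $N$ large, yields $\snorm{F_N \ast g - g}_{\infty} < \eps$ uniformly on $\T^n$, and therefore $\snorm{F_N \ast g - g}_{L^p(\T^n)} \leq (2\pi)^{n/p} \, \snorm{F_N \ast g - g}_{\infty} \to 0$.

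The main obstacle is really just the honest bookkeeping in the uniform-convergence step: one must be careful to pick $R$ \emph{before} $N$, so that property~(ii) can then be applied at fixed $R$. Everything else is routine, and no new tools beyond Theorem~\ref{Fourier:T:thm:approximate_id_convolution}'s underlying mechanism (which we essentially re-prove in the $L^{\infty}$ setting for continuous $g$) and the density of $\Cont(\T^n)$ in $L^p(\T^n)$ are needed. Note the argument \emph{fails} for $p = \infty$, which is consistent with the statement of the proposition, since $\Cont(\T^n)$ is not dense in $L^{\infty}(\T^n)$.
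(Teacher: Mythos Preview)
Your proof is correct. Both you and the paper use the F\'ejer means $F_N \ast f$ as the approximating trigonometric polynomials, so the underlying idea is the same. The paper's proof, however, is a one-liner: it simply invokes the $L^p$ version of the approximate identity theorem (i.e.\ $\snorm{F_N \ast f - f}_{L^p(\T^n)} \to 0$ directly for $f \in L^p$), whereas you take a detour through $\Cont(\T^n)$ and establish uniform convergence of $F_N \ast g$ to $g$ for continuous $g$ by hand. Your route is more self-contained --- you only use the two defining properties of an approximate identity rather than the full $L^p$ convergence theorem (which, incidentally, the paper only stated for $L^1$ in Theorem~\ref{Fourier:T:thm:approximate_id_convolution}) --- at the cost of importing the density of $\Cont(\T^n)$ in $L^p(\T^n)$ as an external ingredient. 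In effect, your argument reproves the key step of the $L^p$ approximate identity theorem along the way.
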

\begin{proof}
	Since $F_N$ is an approximate identity, $\lim_{N \to \infty} \bnorm{F_N \ast f - f}_{L^p(\T^n)} = 0$, and thus the trigonometric polynomial $F_N \ast f$ approximates $f$ arbitrarily well in norm. 
\end{proof}
%
% subsection approximating_fourier_series_by_trigonometric_polynomials (end)

\subsection{Decay properties of Fourier coefficients} % (fold)
\label{Fourier:T:decay_properties}
A fundamental question is to ask about the behavior of the Fourier coefficients for large $\abs{k}$. This is important in many applications, because it may mean that certain bases are more efficient than others. The simplest of these criteria is the 
\begin{lemma}[Riemann-Lebesgue lemma]\label{Fourier:T:cor:Riemann_Lebesgue}
	$f \in L^1(\T^n)$ $\Rightarrow$ $\displaystyle \lim_{\abs{k} \to \infty} \hat{f}(k) = 0$ 
\end{lemma}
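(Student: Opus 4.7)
The plan is to exploit the density of trigonometric polynomials in $L^1(\T^n)$ established in Proposition~\ref{Fourier:T:prop:Pol_dense_Lp}, combined with the fact that the Fourier transform is a bounded map $L^1(\T^n) \to \ell^\infty(\Z^n)$ with operator norm at most $(2\pi)^{-n}$, as shown in part~(i) of Proposition~\ref{Fourier:T:prop:fundamentals}. The strategy is a classical ``$\tfrac{\eps}{2}$ argument''\!: approximate $f$ in $L^1$-norm by a trigonometric polynomial $P$, for which the Fourier coefficients vanish outside a finite set, and then control the remainder by the $L^1$-error of the approximation.

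First I would observe that if $P \in \mathrm{Pol}(\T^n)$ is a trigonometric polynomial of degree $N$, then by orthogonality of the characters $\{\e^{+\ii k \cdot x}\}_{k\in\Z^n}$ with respect to the normalized Haar measure, the Fourier coefficients $\hat{P}(k)$ vanish for all $k$ with $\sum_{j=1}^n |k_j| > N$; in particular $\hat{P}(k) = 0$ for $|k|$ sufficiently large. So the statement is trivially true on the dense subspace $\mathrm{Pol}(\T^n)$. Next, given $f \in L^1(\T^n)$ and $\eps > 0$, Proposition~\ref{Fourier:T:prop:Pol_dense_Lp} supplies $P \in \mathrm{Pol}(\T^n)$ with $\norm{f - P}_{L^1(\T^n)} < (2\pi)^n \eps$. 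Writing $\hat{f}(k) = \bigl(\Fourier(f-P)\bigr)(k) + \hat{P}(k)$ and invoking the bound $\babs{\bigl(\Fourier(f-P)\bigr)(k)} \leq (2\pi)^{-n} \norm{f - P}_{L^1(\T^n)} < \eps$ for every $k \in \Z^n$, one obtains $|\hat{f}(k)| < \eps$ for all $|k|$ large enough that $\hat{P}(k) = 0$. Since $\eps > 0$ was arbitrary, this proves $\lim_{|k|\to\infty} \hat{f}(k) = 0$.

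The proof has essentially no obstacle, since all the heavy lifting was done in Proposition~\ref{Fourier:T:prop:Pol_dense_Lp} (whose proof in turn relied on the non-trivial fact that the Féjer kernel is an approximate identity and on Theorem~\ref{Fourier:T:thm:approximate_id_convolution}). The only thing to keep clean is the constant $(2\pi)^{-n}$ coming from our normalization of $\Fourier$; once one chooses $P$ so that $\norm{f-P}_{L^1} < (2\pi)^n \eps$, the rest of the estimate is mechanical. An alternative route, should one wish to avoid citing density of $\mathrm{Pol}(\T^n)$, would be to approximate $f$ in $L^1$ by a $\Cont^\infty$ function and use part~(vii) of Proposition~\ref{Fourier:T:prop:fundamentals} together with the uniform boundedness of $\Fourier g$ for $g$ smooth to obtain the decay $|\hat f(k)| \lesssim |k|^{-1}$ for the smooth approximant, but the density-of-polynomials route is the shortest.
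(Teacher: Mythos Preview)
Your proof is correct and follows essentially the same approach as the paper: approximate $f$ in $L^1(\T^n)$ by a trigonometric polynomial using Proposition~\ref{Fourier:T:prop:Pol_dense_Lp}, then use the $L^1 \to \ell^\infty$ bound from Proposition~\ref{Fourier:T:prop:fundamentals}~(i) to control the tail. The only cosmetic difference is that you track the normalization constant $(2\pi)^{-n}$ explicitly and adjust the $L^1$-approximation accordingly, whereas the paper simply absorbs it into the choice of $\eps$.
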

\begin{proof}
	% p.~176 Proposition~3.2.1 
	% 
	By Proposition~\ref{Fourier:T:prop:Pol_dense_Lp}, we can approximate any $f \in L^1(\T^n)$ arbitrarily well by a trigonometric polynomial $P$, \ie for any $\eps > 0$, there exists a $P \in \mathrm{Pol}(\T^n)$ so that $\norm{f - P}_{L^1(\T^n)} < \eps$. Since the Fourier coefficients of $P$ satisfy $\lim_{\abs{k} \to \infty} \hat{P}(k) = 0$ (only finitely many are non-zero), this also implies that the Fourier coefficients of $f$ satisfy 
	\begin{align*}
		0 \leq \lim_{\abs{k} \to \infty} \babs{\hat{f}(k)} \leq \lim_{\abs{k} \to \infty} \Bigl ( \babs{\hat{f}(k) - \hat{P}(k)} + \babs{\hat{P}(k)} \Bigr ) < \eps + 0 = \eps 
		. 
	\end{align*}
	Given as $\eps$ can be chosen arbitrarily small, the above in fact implies $\displaystyle \lim_{\abs{k} \to \infty} \hat{f}(k) = 0$. 
\end{proof}
\begin{proposition}\label{Fourier:T:prop:F_injective}
	$\Fourier : L^1(\T^n) \longrightarrow \ell^{\infty}(\Z^n)$ is injective, \ie
	\begin{align*}
		\Fourier f = \Fourier g
		\; \; \Longleftrightarrow \; \; 
		f = g \in L^1(\T^n)
		. 
	\end{align*}
\end{proposition}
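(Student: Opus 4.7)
The plan is to reduce to the homogeneous case and then exploit that the Féjer kernel is an approximate identity whose convolution with $f$ yields a trigonometric polynomial built out of the Fourier coefficients of $f$.

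First I would use linearity of $\Fourier$ (Proposition~\ref{Fourier:T:prop:fundamentals}(ii)) to reduce the statement to the following claim: if $f \in L^1(\T^n)$ satisfies $\hat{f}(k) = 0$ for all $k \in \Z^n$, then $f = 0$ in $L^1(\T^n)$. So set $g := f - g$ (abuse of notation) and assume $\hat{f} \equiv 0$.

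Next I would compute $F_N \ast f$ explicitly, where $F_N$ is the Féjer kernel introduced in Section~\ref{Fourier:T:trig_pol}. Using the explicit Fourier expansion
\begin{align*}
	F_N(x) = \frac{1}{(2\pi)^n} \sum_{\substack{k \in \Z^n \\ \abs{k_j} \leq N}} \prod_{j = 1}^n \left ( 1 - \frac{\abs{k_j}}{N+1} \right ) \e^{+ \ii k \cdot x}
\end{align*}
together with a direct calculation of the convolution $\int_{\T^n} F_N(x-y) f(y) \, \dd y$ (interchanging the finite sum and the integral and recognizing each resulting integral as $(2\pi)^n$ times a Fourier coefficient of $f$), one obtains
\begin{align*}
	(F_N \ast f)(x) = \sum_{\substack{k \in \Z^n \\ \abs{k_j} \leq N}} \prod_{j = 1}^n \left ( 1 - \frac{\abs{k_j}}{N+1} \right ) \hat{f}(k) \, \e^{+ \ii k \cdot x} .
\end{align*}
Under the assumption $\hat{f} \equiv 0$, every summand vanishes, so $F_N \ast f = 0$ identically for every $N \in \N$.

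Finally, I would invoke the approximate-identity Theorem~\ref{Fourier:T:thm:approximate_id_convolution}: since $(F_N)_{N \in \N}$ is an approximate identity on $\T^n$, we have $\norm{F_N \ast f - f}_{L^1(\T^n)} \to 0$ as $N \to \infty$. Combined with $F_N \ast f = 0$ for all $N$, this forces $\norm{f}_{L^1(\T^n)} = 0$, i.e.\ $f = 0$ in $L^1(\T^n)$, which proves injectivity. The only mildly subtle point is to verify the explicit Fourier expansion of $F_N \ast f$ rigorously — in particular that the finite sum and the integral may be interchanged (trivially, since the sum is finite) and that one correctly identifies $\hat{f}(k)$ in each term — but no analytic obstacle arises because $F_N$ is a genuine trigonometric polynomial and everything reduces to finite linear algebra plus the definition of $\hat{f}(k)$.
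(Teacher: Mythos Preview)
Your proof is correct and follows essentially the same approach as the paper: reduce by linearity to the homogeneous case, observe that $F_N \ast f$ is the trigonometric polynomial built from the Fourier coefficients of $f$ (hence zero when $\hat{f}\equiv 0$), and use that the Féjer kernels form an approximate identity to conclude $f = 0$ in $L^1(\T^n)$. You simply spell out in more detail the step $\hat{f}\equiv 0 \Rightarrow F_N \ast f = 0$, which the paper leaves implicit by referring back to the displayed formula for $F_N \ast f$ preceding Proposition~\ref{Fourier:T:prop:Pol_dense_Lp}.
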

\begin{proof}
	% p.~169 Proposition 3.1.13
	% 
	Given that $f$ is linear, it suffices to consider the case $f = 0$: 
	
	“$\Rightarrow$:” Assume $\Fourier f = 0$. Then $F_N \ast f = 0$. Since $\{ F_N \}_{N \in \N}$ is an approximate identity, $0 = F_N \ast f \to f$ as $N \to \infty$, \ie $f = 0$. 
	
	“$\Leftarrow$:” In case $f = 0$, also all of the Fourier coefficients vanish, $\Fourier f = 0$. 
\end{proof}
\begin{proposition}[Fourier inversion]\label{Fourier:thm:T_Fourier_inversion}
	Suppose $f \in L^1(\T^n)$ has an absolutely convergent Fourier series, \ie $\Fourier f \in \ell^1(\Z^n)$, 
	\begin{align*}
		\bnorm{\Fourier f}_{\ell^1(\Z^n)} = \sum_{k \in \Z^n} \babs{\hat{f}(k)} < \infty 
		. 
	\end{align*}
	Then 
	\begin{align}
		f(x) = \sum_{k \in \Z^n} \hat{f}(k) \, \e^{+ \ii k \cdot x} 
		\label{Fourier:T:eqn:f_eq_Fourier_series}
	\end{align}
	holds almost everywhere, and $f$ is almost-everywhere equal to a continuous function. 
\end{proposition}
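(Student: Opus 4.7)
The plan is to define the candidate function $g(x) := \sum_{k \in \Z^n} \hat{f}(k) \, \e^{+ \ii k \cdot x}$ and show it equals $f$ almost everywhere by invoking the injectivity of $\Fourier$ on $L^1(\T^n)$ (Proposition~\ref{Fourier:T:prop:F_injective}). The strategy consists of three steps: (1) establish that $g$ is a well-defined continuous function; (2) compute the Fourier coefficients of $g$ and show they coincide with those of $f$; (3) conclude $f = g$ almost everywhere by injectivity.

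First, since $\hat{f} \in \ell^1(\Z^n)$ by hypothesis, Corollary~\ref{Fourier:T:prop:continuity_smoothness_decay}(i) applies directly: the series defining $g$ converges uniformly by Lemma~\ref{Fourier:T:lem:dominated_convergence} (with the constant dominating sequence $b(k) := \sabs{\hat{f}(k)}$, independent of $x$), and $g \in \Cont(\T^n)$. Because $\T^n$ is compact, $g \in \Cont(\T^n) \subseteq L^1(\T^n)$ by the nesting Lemma~\ref{Fourier:T:lem:nesting_Lp_spaces}, so its Fourier coefficients are well-defined.

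Next, I would compute $\hat{g}(k)$ by interchanging sum and integral:
\begin{align*}
	\hat{g}(k) &= \frac{1}{(2\pi)^n} \int_{\T^n} \dd x \, \e^{- \ii k \cdot x} \sum_{j \in \Z^n} \hat{f}(j) \, \e^{+ \ii j \cdot x}
	= \sum_{j \in \Z^n} \hat{f}(j) \, \frac{1}{(2\pi)^n} \int_{\T^n} \dd x \, \e^{+ \ii (j - k) \cdot x}
	= \hat{f}(k),
\end{align*}
using the orthogonality relation $\frac{1}{(2\pi)^n} \int_{\T^n} \e^{+ \ii (j - k) \cdot x} \, \dd x = \delta_{jk}$. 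The interchange is justified by dominated convergence applied to the partial sums: the absolute values of the integrand are bounded by the integrable (on compact $\T^n$) constant $\snorm{\hat{f}}_{\ell^1(\Z^n)}$. This step is essentially the whole content of the theorem, and it is not hard once absolute summability is in hand.

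Finally, since $f, g \in L^1(\T^n)$ and $\Fourier f = \Fourier g$ as elements of $\ell^{\infty}(\Z^n)$, Proposition~\ref{Fourier:T:prop:F_injective} yields $f = g$ in $L^1(\T^n)$, i.e.\ $f(x) = g(x)$ for almost every $x \in \T^n$. This simultaneously establishes the pointwise formula~\eqref{Fourier:T:eqn:f_eq_Fourier_series} almost everywhere and exhibits the continuous representative $g$. The main obstacle, if any, is purely notational: verifying the Fubini/dominated-convergence interchange cleanly; there is no analytic difficulty since the absolute summability hypothesis makes the dominating function constant.
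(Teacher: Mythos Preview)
Your proposal is correct and follows essentially the same route as the paper: show the series defines a continuous (hence $L^1$) function with the same Fourier coefficients as $f$, then invoke the injectivity of $\Fourier$ (Proposition~\ref{Fourier:T:prop:F_injective}) to conclude $f=g$ in $L^1(\T^n)$. The paper is simply more terse, asserting that the two sides ``clearly'' have the same Fourier coefficients, whereas you spell out the dominated-convergence interchange explicitly.
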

\begin{proof}
	Clearly, left- and right-hand side of equation~\eqref{Fourier:T:eqn:f_eq_Fourier_series} have the same Fourier coefficients, and thus, by Proposition~\ref{Fourier:T:prop:F_injective} they are equal as elements of $L^1(\T^n)$. The continuity of the right-hand side follows from Corollary~\ref{Fourier:T:prop:continuity_smoothness_decay}~(i). 
\end{proof}
\begin{theorem}[Regularity $f$ $\leftrightarrow$ decay $\Fourier f$]\label{Fourier:T:thm:decay_properties}
	\begin{enumerate}[(i)]
		\item Let $s \in \N_0$, $\delta \in (0,1)$ and assume that the Fourier coefficients of $f \in L^1(\T^n)$ decay as 
		\begin{align}
			\babs{\hat{f}(k)} \leq C (1 + \abs{k})^{-n-s-\delta} 
			. 
			\label{Fourier:T:eqn:decay_assumption_Fourier_coefficients}
		\end{align}
		Then $f \in \Cont^s(\T^n)$. 
		\item The Fourier coefficients of $f \in \Cont^s(\T^n)$ satisfy $\displaystyle \lim_{\abs{k} \to \infty} \bigl ( \sabs{k}^r \, \hat{f}(k) \bigr ) = 0$ for $r \leq s$. 
		\item $f \in \Cont^{\infty}(\T^n)$ holds if and only if for all $r \geq 0$ there exists $C_r > 0$ such that 
		\begin{align*}
			\babs{\hat{f}(k)} \leq C_r \, \bigl ( 1 + \abs{k} \bigr )^{-r} 
			. 
		\end{align*}
	\end{enumerate}
\end{theorem}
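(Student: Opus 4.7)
The plan is to deduce all three parts from tools already established: Corollary~\ref{Fourier:T:prop:continuity_smoothness_decay}, the commutation identity $\Fourier(\partial_x^{\alpha} f)(k) = \ii^{\abs{\alpha}} \, k^{\alpha} \, \hat{f}(k)$ from Proposition~\ref{Fourier:T:prop:fundamentals}(vii), and the Riemann-Lebesgue lemma~\ref{Fourier:T:cor:Riemann_Lebesgue}. Parts (i) and (ii) are essentially converse statements, and (iii) is a cleaner quantitative repackaging obtained by combining them.

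For (i), first I would verify that the hypothesis \eqref{Fourier:T:eqn:decay_assumption_Fourier_coefficients} implies $\bigl ( \abs{k}^s \, \hat{f}(k) \bigr )_{k \in \Z^n} \in \ell^1(\Z^n)$. The elementary bound $\abs{k}^s \, (1+\abs{k})^{-n-s-\delta} \leq (1+\abs{k})^{-n-\delta}$ reduces matters to the convergence of $\sum_{k \in \Z^n} (1+\abs{k})^{-n-\delta}$ for $\delta > 0$, which follows by integral comparison in $\R^n$ (grouping lattice points into spherical shells of radius $R$, whose cardinality is $\order(R^{n-1})$). Absolute summability of $\bigl ( k^{\alpha} \hat{f}(k) \bigr )$ for all $\abs{\alpha} \leq s$ then feeds directly into Corollary~\ref{Fourier:T:prop:continuity_smoothness_decay}(ii), yielding $f \in \Cont^s(\T^n)$.

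For (ii), I would first reduce to $r = s$ since $\abs{k}^r \leq \abs{k}^s$ for $\abs{k} \geq 1$ and $r \leq s$. Then use Proposition~\ref{Fourier:T:prop:fundamentals}(vii): for $\abs{\alpha} \leq s$ one has $k^{\alpha} \hat{f}(k) = (-\ii)^{\abs{\alpha}} \, \Fourier(\partial_x^{\alpha} f)(k)$, and since $\partial_x^{\alpha} f \in \Cont(\T^n) \subseteq L^1(\T^n)$, the Riemann-Lebesgue lemma gives $k^{\alpha} \hat{f}(k) \to 0$ as $\abs{k} \to \infty$. Combine this with the elementary inequality $\abs{k}^s \leq n^{s-1} \sum_{j = 1}^n \abs{k_j}^s$ (which follows from $\abs{k} \leq \sum_j \abs{k_j}$ plus convexity of $t \mapsto t^s$) to conclude $\abs{k}^s \, \abs{\hat{f}(k)} \to 0$.

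For (iii), the forward direction combines the uniform bound $\abs{\hat{f}(k)} \leq (2\pi)^{-n} \, \norm{f}_{L^1(\T^n)}$ from Proposition~\ref{Fourier:T:prop:fundamentals}(i) with the boundedness of $\abs{k}^r \, \abs{\hat{f}(k)}$ implied by (ii): splitting the cases $\abs{k} \leq 1$ and $\abs{k} > 1$ (where $1 + \abs{k} \leq 2\abs{k}$) yields $\abs{\hat{f}(k)} \leq C_r \, (1+\abs{k})^{-r}$ for a suitable $C_r$. The reverse direction applies (i) for every $s \in \N_0$ with any fixed $\delta \in (0,1)$, using the hypothesis with exponent $r = n + s + \delta$ to verify \eqref{Fourier:T:eqn:decay_assumption_Fourier_coefficients}; since $s$ is arbitrary, $f \in \Cont^{\infty}(\T^n)$. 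The main (mild) obstacle is the summability verification in (i); the remainder is bookkeeping, with some minor care needed in (ii) to compare the Euclidean norm $\abs{k}^s$ with pure monomials $\abs{k_j}^s$, particularly for odd $s$.
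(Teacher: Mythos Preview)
Your proposal is correct and follows essentially the same route as the paper's proof: (i) via the summability criterion in Corollary~\ref{Fourier:T:prop:continuity_smoothness_decay}, (ii) via Proposition~\ref{Fourier:T:prop:fundamentals}(vii) combined with Riemann--Lebesgue, and (iii) by combining the two directions. You are in fact more careful than the paper in a few places---explicitly verifying the $\ell^1$-summability in (i), supplying the inequality $\abs{k}^s \leq n^{s-1}\sum_j \abs{k_j}^s$ to pass from monomials $k^{\alpha}\hat f(k)\to 0$ to $\abs{k}^s\hat f(k)\to 0$ in (ii), and spelling out how the limit in (ii) upgrades to the uniform bound $(1+\abs{k})^{-r}$ in (iii)---all of which the paper leaves implicit.
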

\begin{proof}
	\begin{enumerate}[(i)]
		\item The decay assumption \eqref{Fourier:T:eqn:decay_assumption_Fourier_coefficients} ensures that $\bigl ( \ii^{\abs{\alpha}} \, k^{\alpha} \hat{f}(k) \bigr )_{k \in \Z^n}$ is absolutely summable if $\abs{\alpha} \leq s$, and thus, by Proposition~\ref{Fourier:T:prop:fundamentals}~(vii) and \ref{Fourier:thm:T_Fourier_inversion} left- and right-hand side of 
		\begin{align*}
			\partial_x^{\alpha} f(x) &= \sum_{k \in \Z^n} \ii^{\abs{\alpha}} \, k^{\alpha} \hat{f}(k) \, \e^{+ \ii k \cdot x} 
		\end{align*}
		are equal and continuous in $x$. 
		\item Clearly, $\abs{k}^r \leq \abs{k}^s$ holds for $r \leq s$ and $k \in \Z^n$. Moreover, $\partial^{\alpha}_x f \in L^1(\T^n)$, $\abs{\alpha} \leq s$, and thus, the Riemann-Lebesgue lemma~\ref{Fourier:T:cor:Riemann_Lebesgue} implies $\lim_{\abs{k} \to \infty} \bigl ( \sabs{k}^r \, \hat{f}(k) \bigr ) = 0$. 
		\item “$\Rightarrow$:” $f \in \Cont^{\infty}(\T^n)$, then (ii) implies that $\lim_{\abs{k} \to \infty} \bigl ( \sabs{k}^r \, \hat{f}(k) \bigr ) = 0$ holds for all $r \geq 0$. 
		
		“$\Leftarrow$:” Conversely, if $\hat{f}(k)$ decays faster than any polynomial, then (i) implies for each $s \geq 0$ the function $f$ is an element of $\Cont^{s-1-n}(\T^n)$, and thus $f \in \Cont^{\infty}(\T^n)$. \marginpar{2013.11.14}
	\end{enumerate}
\end{proof}
%
% subsection decay_properties_of_fourier_coefficients (end)

\subsection{The Fourier transform on $L^2(\T^n)$} % (fold)
\label{Fourier:T:L2}
We will dedicate a little more time to the example of the free Schrödinger equation on $\T^n \cong [-\pi,+\pi]^n$, 
\begin{align*}
	\ii \partial_t \psi = - \Delta_x \psi 
	\, , 
\end{align*}
where we equip $\Delta_x := \partial_{x_1}^2 + \ldots + \partial_{x_n}^2$ with periodic boundary conditions. If we can show that $\{ \e^{+ \ii k \cdot x} \}_{k \in \Z^n}$ is an orthonormal \emph{basis}, then any $\psi \in L^2(\T^n)$ has a \emph{unique} Fourier expansion 
\begin{align}
	\psi(x) &= \sum_{k \in \Z^n} \widehat{\psi}(k) \, \e^{+ \ii k \cdot x}
	\label{Fourier:T:eqn:Fourier_expansion}
\end{align}
where the sum converges in the $L^2$-sense and 
\begin{align*}
	\widehat{\psi}(k) = \bscpro{\e^{+ \ii k \cdot x}}{\psi}_{L^2(\T^n)} 
	= \frac{1}{(2\pi)^n} \int_{\T^n} \dd x \, \e^{- \ii k \cdot x} \, \psi(x) 
\end{align*}
is the $k$th Fourier coefficient. Note that we have normalized the scalar product
\begin{align*}
	\scpro{f}{g}_{L^2(\T^n)} := \frac{1}{(2\pi)^n} \int_{\T^n} \dd x \, \overline{f(x)} \, g(x) 
\end{align*}
so that the $\e^{+ \ii k \cdot x}$ have $L^2$-norm $1$. Lemma~\ref{Fourier:T:lem:nesting_Lp_spaces} tells us that $\widehat{\psi} = \Fourier \psi$ is well-defined, because $\psi \in L^2(\T^n) \subset L^1(\T^n)$. Hence, if $\{ \e^{+ \ii k \cdot x} \}_{k \in \Z^n}$ is a basis, any $L^2(\T^n)$ function can be expressed as a Fourier series. 
\begin{lemma}
	$\{ \e^{+ \ii k \cdot x} \}_{k \in \Z^n}$ is an orthonormal basis of $L^2(\T^n)$. 
\end{lemma}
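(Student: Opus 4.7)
The plan is to combine three ingredients already available: the orthonormality computation already performed in the preceding subsection, the density of trigonometric polynomials in $L^p(\T^n)$ from Proposition~\ref{Fourier:T:prop:Pol_dense_Lp} (applied with $p = 2$), and the characterization of the element of best approximation from Theorem~\ref{hilbert_spaces:onb:thm:best_approx_explicit}. Orthonormality itself is immediate: $\sscpro{\e^{+\ii k \cdot x}}{\e^{+\ii j \cdot x}}_{L^2(\T^n)} = \frac{1}{(2\pi)^n}\int_{\T^n}\dd x \, \e^{+\ii(j-k)\cdot x} = \delta_{kj}$ by the same elementary computation used in Section~\ref{spaces:Hilbert:best_approx}.

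The content of the statement is that every $\psi \in L^2(\T^n)$ equals its Fourier series \eqref{Fourier:T:eqn:Fourier_expansion} in the $L^2$-sense. First I would fix $N \in \N$ and set $E_N := \mathrm{span}\bigl\{ \e^{+\ii k \cdot x} : \abs{k_j} \le N, \; j = 1 , \ldots , n \bigr\} \subseteq L^2(\T^n)$, which is finite-dimensional and hence closed. By Theorem~\ref{hilbert_spaces:onb:thm:best_approx_explicit}, the element of best approximation of $\psi$ in $E_N$ is exactly the partial sum
\begin{align*}
  \psi_N := \sum_{\substack{k \in \Z^n \\ \abs{k_j} \le N}} \bscpro{\e^{+\ii k \cdot x}}{\psi}_{L^2(\T^n)} \, \e^{+\ii k \cdot x} ,
\end{align*}
so that $\bnorm{\psi - \psi_N}_{L^2(\T^n)} = \dist(\psi, E_N)$.

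Second, I would invoke Proposition~\ref{Fourier:T:prop:Pol_dense_Lp} with $p = 2$: given $\eps > 0$ there is a trigonometric polynomial $P \in \mathrm{Pol}(\T^n)$ with $\bnorm{\psi - P}_{L^2(\T^n)} < \eps$. Since $P$ has only finitely many non-zero Fourier coefficients, there is an $N_0 \in \N$ with $P \in E_N$ for all $N \ge N_0$. Combining this with the best-approximation property gives $\bnorm{\psi - \psi_N}_{L^2(\T^n)} \le \bnorm{\psi - P}_{L^2(\T^n)} < \eps$ for every $N \ge N_0$, hence $\psi_N \to \psi$ in norm. Because the index set $\Z^n$ is countable (and the cubes $\{\abs{k_j} \le N\}$ exhaust it), this is exactly the convergence of the partial sums in Definition~\ref{spaces:Hilbert:defn:ONB}, so $\{\e^{+\ii k \cdot x}\}_{k \in \Z^n}$ is an orthonormal basis.

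No step is genuinely hard once Proposition~\ref{Fourier:T:prop:Pol_dense_Lp} is in hand; the only subtlety worth flagging is that Definition~\ref{spaces:Hilbert:defn:ONB} is phrased for a sequence indexed by $\N$, so one should observe that the cubical partial sums $\psi_N$ constitute such a rearrangement, and that norm convergence of any exhaustive sequence of finite partial sums is enough because orthonormality and the Pythagorean theorem ensure the series is unconditionally convergent.
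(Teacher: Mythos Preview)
Your proof is correct and well structured, but it takes a different route from the paper's. The paper argues via the injectivity of $\Fourier : L^2(\T^n) \subset L^1(\T^n) \to \ell^{\infty}(\Z^n)$ (Proposition~\ref{Fourier:T:prop:F_injective}): if $\psi$ is orthogonal to every $\e^{+\ii k\cdot x}$ then all Fourier coefficients vanish, hence $\psi = 0$, so the orthonormal system is total. Your argument instead verifies Definition~\ref{spaces:Hilbert:defn:ONB} directly, combining the density of $\mathrm{Pol}(\T^n)$ in $L^2(\T^n)$ with the best-approximation property of orthogonal projections. Both ultimately rest on the same Féjer-kernel machinery (Propositions~\ref{Fourier:T:prop:Pol_dense_Lp} and~\ref{Fourier:T:prop:F_injective} are proved the same way), so neither is more elementary. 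Your version has the advantage of being fully self-contained relative to the paper's own definition of an orthonormal basis, since the paper never explicitly proves that a total orthonormal set satisfies Definition~\ref{spaces:Hilbert:defn:ONB}; the paper's version is shorter because it takes that equivalence for granted.
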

\begin{proof}
	The orthonormality of $\{ \e^{+ \ii k \cdot x} \}_{k \in \Z^n}$ follows from a straight-forward calculation analogous to the one-dimensional case. The injectivity of $\Fourier : L^2(\T^n) \subset L^1(\T^n) \longrightarrow \ell^{\infty}(\Z^n)$ (Proposition~\ref{Fourier:T:prop:F_injective}) means $\psi = 0 \in L^2(\T^n)$ if and only if $\Fourier \psi = 0$. Hence, $\{ \e^{+ \ii k \cdot x} \}_{k \in \Z^n}$ is a basis. 
\end{proof}
\begin{proposition}\label{Fourier:R:prop:Parseval_Plancherel}
	Let $f , g \in L^2(\T^n)$. Then the following holds true: 
	\begin{enumerate}[(i)]
		\item \emph{Parseval's identity:} $\displaystyle \scpro{f}{g}_{L^2(\T^n)} = \scpro{\Fourier f}{\Fourier g}_{\ell^2(\Z^n)}$ 
		\item \emph{Plancherel's identity:} $\displaystyle \snorm{f}_{L^2(\T^n)} = \snorm{\Fourier f}_{\ell^2(\Z^n)}$ 
		\item $\Fourier : L^2(\T^n) \longrightarrow \ell^2(\Z^n)$ is a unitary. 
	\end{enumerate}
\end{proposition}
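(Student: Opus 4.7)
The plan is to exploit the fact, just established, that $\{ \e^{+ \ii k \cdot x} \}_{k \in \Z^n}$ is an orthonormal basis of $L^2(\T^n)$ (with respect to the normalized scalar product), so every $f \in L^2(\T^n)$ admits the ONB expansion $f = \sum_{k \in \Z^n} \sscpro{\e^{+ \ii k \cdot x}}{f}_{L^2(\T^n)} \, \e^{+ \ii k \cdot x}$ converging in $L^2$-norm, and the coefficient $\sscpro{\e^{+ \ii k \cdot x}}{f}_{L^2(\T^n)}$ coincides with $\hat{f}(k)$ by the definition of the Fourier transform (which is well-defined on $L^2(\T^n) \subseteq L^1(\T^n)$ by Lemma~\ref{Fourier:T:lem:nesting_Lp_spaces}).

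For (i), I would expand both $f$ and $g$ in this ONB and use continuity of the scalar product (Corollary~\ref{hilbert_spaces:onb:cor:continuity_scalar_product}) together with the orthonormality relation $\sscpro{\e^{+ \ii j \cdot x}}{\e^{+ \ii k \cdot x}}_{L^2(\T^n)} = \delta_{jk}$ to collapse the resulting double sum to $\sum_{k \in \Z^n} \overline{\hat{f}(k)} \, \hat{g}(k) = \sscpro{\Fourier f}{\Fourier g}_{\ell^2(\Z^n)}$. Plancherel's identity~(ii) is then the special case $g = f$.

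For (iii), the map $\Fourier$ is linear by Proposition~\ref{Fourier:T:prop:fundamentals}~(ii) and norm-preserving (hence isometric, hence injective) by~(ii). The remaining step — which I expect to be the main point — is surjectivity onto $\ell^2(\Z^n)$. Given $a = (a(k))_{k \in \Z^n} \in \ell^2(\Z^n)$, I would consider the partial sums $S_N(x) := \sum_{\abs{k} \leq N} a(k) \, \e^{+ \ii k \cdot x} \in \mathrm{Pol}(\T^n) \subseteq L^2(\T^n)$; by orthonormality one computes $\snorm{S_N - S_M}_{L^2(\T^n)}^2 = \sum_{M < \abs{k} \leq N} \abs{a(k)}^2$, which tends to $0$ as $M, N \to \infty$ because $a \in \ell^2(\Z^n)$. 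Hence $(S_N)$ is Cauchy in the complete space $L^2(\T^n)$ and has a limit $f \in L^2(\T^n)$. Continuity of the scalar product then yields $\hat{f}(k) = \sscpro{\e^{+ \ii k \cdot x}}{f}_{L^2(\T^n)} = \lim_{N \to \infty} \sscpro{\e^{+ \ii k \cdot x}}{S_N}_{L^2(\T^n)} = a(k)$, so $\Fourier f = a$.

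The hard part conceptually is the surjectivity argument, because it is exactly where one needs both the $\ell^2$-assumption on the coefficients (to get a Cauchy sequence) and the completeness of $L^2(\T^n)$ (to get an actual limit). Once $\Fourier$ is shown to be a linear, norm-preserving bijection between Hilbert spaces, combining (i) and the boundedness of the inverse (also an isometry by the same argument applied in reverse) gives that $\Fourier$ is unitary.
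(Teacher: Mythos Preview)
Your proposal is correct and follows the same route as the paper: both derive (i) from the fact that $\{\e^{+\ii k\cdot x}\}_{k\in\Z^n}$ is an orthonormal basis with expansion coefficients equal to the Fourier coefficients, and obtain (ii) as the diagonal case. The only difference is one of detail: the paper dismisses (iii) as an ``immediate consequence of (i)'', whereas you spell out the surjectivity argument explicitly via Cauchy sequences of partial sums --- which is the honest content hidden behind that phrase, so your version is more complete rather than genuinely different.
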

\begin{proof}
	(i) follows from the fact that $\{ \e^{+ \ii k \cdot x} \}_{k \in \Z^n}$ is an orthonormal basis and that the coefficients of the basis expansion 
	\begin{align*}
		\bscpro{\e^{+ \ii k \cdot x}}{\psi}_{L^2(\T^n)} = (\Fourier \psi)(k)
	\end{align*}
	coincide with the Fourier coefficients.
	
	(ii) and (iii) are immediate consequences of (i). 
\end{proof}
%
% subsection the_free_schrödinger_equation (end)

\subsection{Periodic operators} % (fold)
\label{Fourier:T:periodic_operators}
An important class of operators are those which commute with translations, 
\begin{align*}
	\bigl [ H , T_y \bigr ] = H \, T_y - T_y \, H = 0 
	, 
	\qquad \qquad 
	\forall y \in \T^n 
	, 
\end{align*}
because this symmetry implies the exponential functions $\e^{+ \ii k \cdot x}$ are eigenfunctions of $H$. 
\begin{theorem}
	Suppose that $H : L^p(\T^n) \longrightarrow L^q(\T^n)$, $1 \leq p,q \leq \infty$, is a bounded linear operator which commutes with translations. Then there exists $\{ h(k) \}_{k \in \Z} \in \ell^{\infty}(\Z^n)$ so that 
	\begin{align}
		(H f)(x) = \sum_{k \in \Z^n} h(k) \, \hat{f}(k) \, \e^{+ \ii k \cdot x} 
		\label{Fourier:T:thm:periodic_op}
	\end{align}
	holds for all $f \in \Cont^{\infty}(\T^n)$. Moreover, we have $\bnorm{\bigl ( h(k) \bigr )_{k \in \Z^n}}_{\ell^{\infty}(\Z^n)} \leq \norm{H}_{\mathcal{B}(L^p(\T^n),L^q(\T^n))}$. 
\end{theorem}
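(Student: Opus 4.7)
The plan is to exploit translation invariance to diagonalize $H$ in the Fourier basis: for each $k \in \Z^n$, I would show that $e_k(x) := \e^{+ \ii k \cdot x}$ is an eigenvector of $H$ with some eigenvalue $h(k)$, bound $\abs{h(k)}$ uniformly by $\norm{H}$, and extend the resulting spectral formula from exponentials to arbitrary $f \in \Cont^{\infty}(\T^n)$ by linearity and continuity.

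First I would establish that each $e_k$ is an eigenvector of $H$. Since $(T_y e_k)(x) = \e^{-\ii k \cdot y} \, e_k(x)$ and $H$ commutes with every $T_y$, we obtain $T_y (H e_k) = H (T_y e_k) = \e^{-\ii k \cdot y} \, (H e_k)$ as an identity in $L^q(\T^n)$. Because $\T^n$ has finite measure, $H e_k \in L^q(\T^n) \subseteq L^1(\T^n)$ by Lemma~\ref{Fourier:T:lem:nesting_Lp_spaces}, so its Fourier coefficients are well-defined. Taking the $j$th Fourier coefficient of both sides and applying Proposition~\ref{Fourier:T:prop:fundamentals}(v), I would get $\bigl(\e^{-\ii j \cdot y} - \e^{-\ii k \cdot y}\bigr) \, \widehat{H e_k}(j) = 0$ for all $j \in \Z^n$ and $y \in \T^n$. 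For every $j \neq k$ one can choose $y \in \T^n$ with $(j-k) \cdot y \notin 2\pi \Z$, forcing $\widehat{H e_k}(j) = 0$; setting $h(k) := \widehat{H e_k}(k)$, Fourier inversion (Proposition~\ref{Fourier:thm:T_Fourier_inversion}) yields $H e_k = h(k) \, e_k$ almost everywhere.

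The uniform bound follows from $\abs{e_k} \equiv 1$, which makes the $L^p$- and $L^q$-norms of $e_k$ coincide (in the normalization fixed for this section), so that $\abs{h(k)} \, \norm{e_k}_{L^q} = \bnorm{H e_k}_{L^q} \leq \norm{H} \, \norm{e_k}_{L^p}$ gives $\abs{h(k)} \leq \norm{H}$ uniformly in $k$. For the extension to smooth $f$, Theorem~\ref{Fourier:T:thm:decay_properties}(iii) ensures the Fourier coefficients $\hat{f}(k)$ decay faster than any polynomial, so the partial sums $S_N f := \sum_{\abs{k} \leq N} \hat{f}(k) \, e_k$ converge to $f$ absolutely and uniformly, and in particular in $L^p(\T^n)$. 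Linearity together with continuity of $H$ yields $H S_N f \to H f$ in $L^q(\T^n)$, while $H S_N f = \sum_{\abs{k} \leq N} h(k) \, \hat{f}(k) \, e_k$ also converges absolutely and uniformly (since $\abs{h(k)} \leq \norm{H}$ and $\hat{f}$ decays rapidly), proving the claimed identity~\eqref{Fourier:T:thm:periodic_op}.

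The main obstacle is the diagonalization step: one must carefully transfer the $L^q$-level identity $T_y (H e_k) = \e^{-\ii k \cdot y} (H e_k)$ to the level of Fourier coefficients in order to rule out any ``hidden'' components of $H e_k$ at frequencies $j \neq k$. Once this diagonalization is in place, the norm bound on $h(k)$ and the extension to smooth $f$ are straightforward consequences of Fourier inversion, the rapid decay of $\hat{f}$ for $f \in \Cont^{\infty}(\T^n)$, and Dominated Convergence for sums (Lemma~\ref{Fourier:T:lem:dominated_convergence}).
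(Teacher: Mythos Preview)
Your proposal is correct and follows essentially the same approach as the paper: show that each exponential $e_k$ is an eigenvector of $H$ via the translation-covariance identity $T_y(He_k)=\e^{-\ii k\cdot y}He_k$, bound the resulting eigenvalues $h(k)$ by $\norm{H}$, and then extend to all of $\Cont^{\infty}(\T^n)$ using the rapid decay of Fourier coefficients and continuity of $H$. The only minor variation is in the diagonalization step itself: the paper extracts $He_k=h(k)e_k$ by a direct pointwise variable-swap in the identity $(He_k)(x-y)=\e^{-\ii k\cdot y}(He_k)(x)$, whereas you pass to Fourier coefficients of $He_k\in L^q\subseteq L^1$ and kill all modes $j\neq k$ --- arguably the cleaner route, since $He_k$ is a priori only an $L^q$-class.
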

An important example class of examples are \emph{differential operators}, 
\begin{align*}
	H := \sum_{\substack{a \in \N_0^d \\ \abs{a} \leq N}} \beta(a) \, \partial_x^a 
	, 
	&&
	\beta(a) \in \C 
	, 
\end{align*}
whose eigenvalues are 
\begin{align*}
	H \e^{+ \ii k \cdot x} &= \biggl ( \sum_{\substack{a \in \N_0^d \\ \abs{a} \leq N}} f_a(k) \, \ii^{\abs{a}} \, k^a \biggr ) \, \e^{+ \ii k \cdot x} 
	= h(k) \, \e^{+ \ii k \cdot x} 
	. 
\end{align*}
The most famous example so far was $- \Delta_x$ whose eigenvalues are $k^2$. 
\begin{proof}
	We already know that the Fourier series of $f \in \Cont^{\infty}(\T^n)$ converges absolutely, and its Fourier coefficients decay faster than any power of $\abs{k}$ (Theorem~\ref{Fourier:T:thm:decay_properties}). So consider the functions $\varphi_k(x) := \e^{+ \ii k \cdot x}$, $k \in \Z$. The exponential functions are eigenfunctions of the translation operator,
	\begin{align*}
		(T_y \varphi_k)(x) &= \varphi_k(x-y) 
		= \e^{+ \ii k \cdot (x-y)} 
		\\
		&
		= \e^{- \ii k \cdot y} \, \varphi_k(x) 
		, 
	\end{align*}
	and thus, the fact that $T$ commutes with translations implies 
	\begin{align*}
		\bigl ( T_y H \varphi_k \bigr )(x) &= (H \varphi_k)(x - y) 
		= \bigl ( H (T_y \varphi_k) \bigr )(x) 
		\\
		&
		= \e^{- \ii k \cdot y} \, (H \varphi_k)(x) 
		= \varphi_k(-y) \, (H \varphi_k)(x)
		. 
	\end{align*}
	Now writing $x = y - (y - x)$ and interchanging the roles of $x$ and $y$ yields that $\varphi_k(x) = \e^{+ \ii k \cdot x}$ is an eigenfunction of $H$, 
	\begin{align*}
		(H \varphi_k)(x) &= (H \varphi_k)(x - y + y) 
		= \e^{- \ii k \cdot (y-x)} \, (H \varphi_k)(y) 
		\\
		&
		= \e^{+ \ii k \cdot x} \, \bigl ( \e^{- \ii k \cdot y} \, (H \varphi_k)(y) \bigr ) 
		% \\
		% &
		= \bigl ( \e^{- \ii k \cdot y} \, (H \varphi_k)(y) \bigr ) \, \varphi_k(x) 
		\\
		&
		=: h(k) \, \varphi_k(x) 
		. 
	\end{align*}
	It is easy to see that $h(k)$ is in fact independent of the choice of $y \in \T^n$. The above also means $\abs{h(k)} \leq \norm{H}_{\mathcal{B}(L^p(\T^n),L^q(\T^n))}$ holds for all $k \in \Z$, and taking the supremum over $k$ yields $\norm{h}_{\ell^{\infty}(\Z)} \leq \norm{H}_{\mathcal{B}(L^p(\T^n),L^q(\T^n))}$. 
	
	Hence, equation~\eqref{Fourier:T:thm:periodic_op} holds for all $f \in \Cont^{\infty}(\T^n)$, \eg for all trigonometric polynomials. Since those are dense in $L^p(\T^n)$ and $T$ restricted to $\Cont^{\infty}(\T^n)$ is bounded, there exists a unique bounded extension on all of $L^p(\T^n)$ (Theorem~\ref{operators:bounded:thm:extensions_bounded_operators}). \marginpar{2013.11.19}
\end{proof}
%
% subsection periodic_operators (end)

\subsection{Solving quantum problems using $\Fourier$} % (fold)
\label{sub:solving_quantum}
The Fourier transform helps to simplify solving quantum problems. The idea is to convert the Hamiltonian to a multiplication operator. We start with a very simple example:

\subsubsection{The free Schrödinger equation} % (fold)
\label{Fourier:T:periodic_operators:free_schroedinger}
Let us return to the example of the Schrödinger equation: if we denote the free Schödinger operator in the position representation with $H := - \Delta_x$ acting on $L^2(\T^n)$ and impose periodic boundary conditions (\cf the discussion about the shift operator on the interval in Chapter~\ref{operators:unitary}), then the Fourier transform $\Fourier : L^2(\T^n) \longrightarrow \ell^2(\Z^n)$ connects position and momentum representation, \ie 
\begin{align}
	H^{\Fourier} := \Fourier \, H \, \Fourier^{-1} = \hat{k}^2 
	\label{Fourier:T:eqn:free_Schroedinger_Fourier}
\end{align}
acts on suitable vectors from $\ell^2(\Z^n)$ by multiplication with $k^2$. The names position and momentum representation originate from physics, because here, the variable $x \in \T^n$ is interpreted as a position while $k \in \Z^n$ is a momentum. 

To arrive at \eqref{Fourier:T:eqn:free_Schroedinger_Fourier}, we use that any $\psi \in L^2(\T^n)$ has a Fourier expansion since the orthonormal set $\{ \e^{+ \ii k \cdot x} \}_{k \in \Z^n}$ is also a basis and that the $k$ Fourier coefficient of 
\begin{align*}
	- \Delta_x \psi(x) &= - \Delta_x \sum_{k \in \Z^n} \widehat{\psi}(k) \, \e^{+ \ii k \cdot x} 
	= \sum_{k \in \Z^n} k^2 \, \widehat{\psi}(k) \, \e^{+ \ii k \cdot x} 
\end{align*}
is just $k^2 \, \widehat{\psi}(k)$, provided that the sum on the right-hand side is square summable. The latter condition just means that $- \Delta_x \psi$ must exist in $L^2(\T^n)$. 

Clearly, the solution to the free Schrödinger equation in momentum representation is the multiplication operator 
\begin{align*}
	U^{\Fourier}(t) = \e^{- \ii t \hat{k}^2} 
	, 
\end{align*}
and thus we obtain the solution in position representation as well, 
\begin{align*}
	U(t) = \Fourier^{-1} \, U^{\Fourier}(t) \, \Fourier 
\end{align*}
Applied to a vector, this yields 
\begin{align*}
	\psi(t) &= \sum_{k \in \Z^n} \e^{- \ii t k^2} \, \widehat{\psi}_0(k) \, \e^{+ \ii k \cdot x} 
	. 
\end{align*}
Note that this sum exists in $L^2(\T^n)$ if and only if the initial state $\psi$ is square-integrable. 
% subsubsection the_free_schrödinger_equation (end)

\subsubsection{Tight-binding models in solid state physics} % (fold)
\label{Fourier:T:periodic_operators:tight_binding}
Quantum mechanical models are encoded via choosing a hamiltonian and a Hilbert space on which it acts. In the previous section, we have started in a situation where the position variable took values in $\T^n$, \ie wave functions were elements of $L^2(\T^n)$. Tight-binding models, for instance, are but one example where the position variable is a lattice vector and the wave function $\psi \in \ell^2(\Z^n)$ a square summable sequence. 

Tight-binding models describe conduction in semiconductors and insulators: here, the electron may jump from its current position to neighboring atoms with a certain amplitude. One usually restricts oneself to the case where only the hopping amplitudes to \emph{nearest} and sometimes \emph{next-nearest neighbors} are included: in many cases, one can prove that these hopping amplitudes decay exponentially with the distance, and hence, one only needs to include the leading-order terms.

\paragraph{Single-band model} % (fold)
Let us consider a two-dimensional lattice. First of all, we note that the number of nearest neighbors actually depends on the crystal structure. For a simple square lattice, the number of nearest neighbors is $4$ while for a hexagonal lattice, there are only $3$. Let us start with the square lattice: then the hamiltonian 
\begin{align}
	H = \id_{\ell^2(\Z^2)} + q_1 \, \mathfrak{s}_1 + q_2 \, \mathfrak{s}_2 + \overline{q_1} \, \mathfrak{s}_1^* + \overline{q_2} \, \mathfrak{s}_2^* 
	= H^* 
	\label{Fourier:T:tight_binding_square} 
\end{align}
which includes only nearest-neighbor hopping with amplitudes $q_1 , q_2 \in \C$ is defined in terms of the \emph{shift operators}
\begin{align*}
	(\mathfrak{s}_j \psi)(\gamma) := \psi(\gamma - e_j)
	, 
	\qquad \qquad 
	\psi \in \ell^2(\Z^2)
	. 
\end{align*}
Here, $e_j$ stands for either $e_1 = (1,0)$ or $e_2 = (0,1)$ and $\gamma \in \Z^2$. If one sees $\psi(\gamma)$ as the Fourier coefficients for 
\begin{align*}
	(\Fourier^{-1} \psi)(k) = \sum_{\gamma \in \Z^2} \psi(\gamma) \, \e^{+ \ii \gamma \cdot k} 
	, 
\end{align*}
then one can see that the shift operator in momentum representation is just the multiplication operator $\mathfrak{s}_j^{\Fourier} := \Fourier^{-1} \mathfrak{s}_j \Fourier = \e^{+ \ii \hat{k}_j}$: 
\begin{align*}
	\bigl ( \Fourier^{-1} \mathfrak{s}_j \psi \bigr )(k) &= \sum_{\gamma \in \Z^2} \psi(\gamma - e_j) \, \e^{+ \ii \gamma \cdot k}
	= \sum_{\gamma \in \Z^2} \psi(\gamma) \, \e^{+ \ii (\gamma + e_j) \cdot k}
	\\
	&= \e^{+ \ii k_j} \, (\Fourier^{-1} \psi)(k) 
\end{align*}
Note that $\mathfrak{s}_j^{\Fourier} = \Fourier^{-1} \mathfrak{s}_j \Fourier$ makes sense as a composition of bounded linear operators and that $\mathfrak{s}_j^{\Fourier} : L^2(\T^2) \longrightarrow L^2(\T^2)$ is again unitary. 

Hence, the Hamilton operator \eqref{Fourier:T:tight_binding_square} in momentum representation transforms to 
\begin{align*}
	H^{\Fourier} &= 1 + q_1 \, \e^{+ \ii \hat{k}_1} + q_2 \, \e^{+ \ii \hat{k}_2} + \overline{q_1} \, \e^{- \ii \hat{k}_1} + \overline{q_2} \, \e^{- \ii \hat{k}_2} 
	\\
	&= 1 + 2 \Re \bigl ( q_1 \, \e^{+ \ii \hat{k}_1} \bigr ) + 2 \Re \bigl ( q_2 \, \e^{+ \ii \hat{k}_2} \bigr ) 
	. 
\end{align*}
It turns out that in the absence of magnetic fields, the hopping amplitudes can be chosen to be real, and then $H^{\Fourier}$ becomes the multiplication operator associated to the \emph{band function} 
\begin{align*}
	E(k) = 1 + 2 q_1 \, \cos k_1 + 2 q_2 \, \cos k_2 
	. 
\end{align*}
In other words, the Fourier transform converts an operator of shifts into a multiplication operator. That means we can solve the Schrödinger equation in momentum representation, 
\begin{align*}
	\ii \partial_t \widehat{\psi}(t) &= H^{\Fourier} \widehat{\psi}(t) 
	, 
	\qquad \qquad 
	\widehat{\psi}(0) = \psi_0 \in L^2(\T^2) 
	, 
\end{align*}
because also the unitary evolution group is just a multiplication operator, 
\begin{align*}
	U^{\Fourier}(t) &= \e^{- \ii t E(\hat{k})} 
	. 
\end{align*}
Moreover, the unitary evolution group associated to the Schrödinger equation in position representation 
\begin{align*}
	\ii \partial_t \psi(t) = H \psi(t) 
	, 
	\qquad \qquad 
	\psi(0) = \psi_0 \in \ell^2(\Z^2) 
	, 
\end{align*}
is obtained by changing back to the position representation with $\Fourier$, 
\begin{align*}
	U(t) &= \Fourier \, U^{\Fourier}(t) \, \Fourier^{-1} 
	\\
	&= \Fourier \, \e^{- \ii t E(\hat{k})} \, \Fourier^{-1} 
	. 
\end{align*}
%
% paragraph single_band_model (end)

\paragraph{Two-band model} % (fold)
The situation is more interesting and more complicated for hexagonal lattices: here, there are three nearest neighbors and two atoms per unit cell. The following operators have been studied as simplified operators for graphene and boron-nitride (see \eg \cite{Hasegawa_Konno_Nakano_Kohomoto:zero_modes_honeycomb:2006,DeNittis_Lein:piezo_graphene:2013}). Here, the relevant Hilbert space is $\ell^2(\Z^2,\C^2)$ where the “internal” $\C^2$ degree of freedom corresponds to the two atoms in the unit cell (black and white atoms in Figure~\ref{Fourier:R:figure:honeycomb}). Here, nearest neighbors are atoms of a “different color”, and the relevant operator takes the form 
\begin{align*}
	H = \left (
	\begin{matrix}
		0 & 1_{\ell^2(\Z^2)} + q_1 \, \mathfrak{s}_1 + q_2 \, \mathfrak{s}_2 \\
		1_{\ell^2(\Z^2)} + \overline{q_1} \, \mathfrak{s}_1^* + \overline{q_2} \, \mathfrak{s}_2^* & 0 \\
	\end{matrix}
	\right )
	= H^* 
	, 
\end{align*}
\begin{figure}
	\hfil\resizebox{80mm}{!}{\includegraphics{./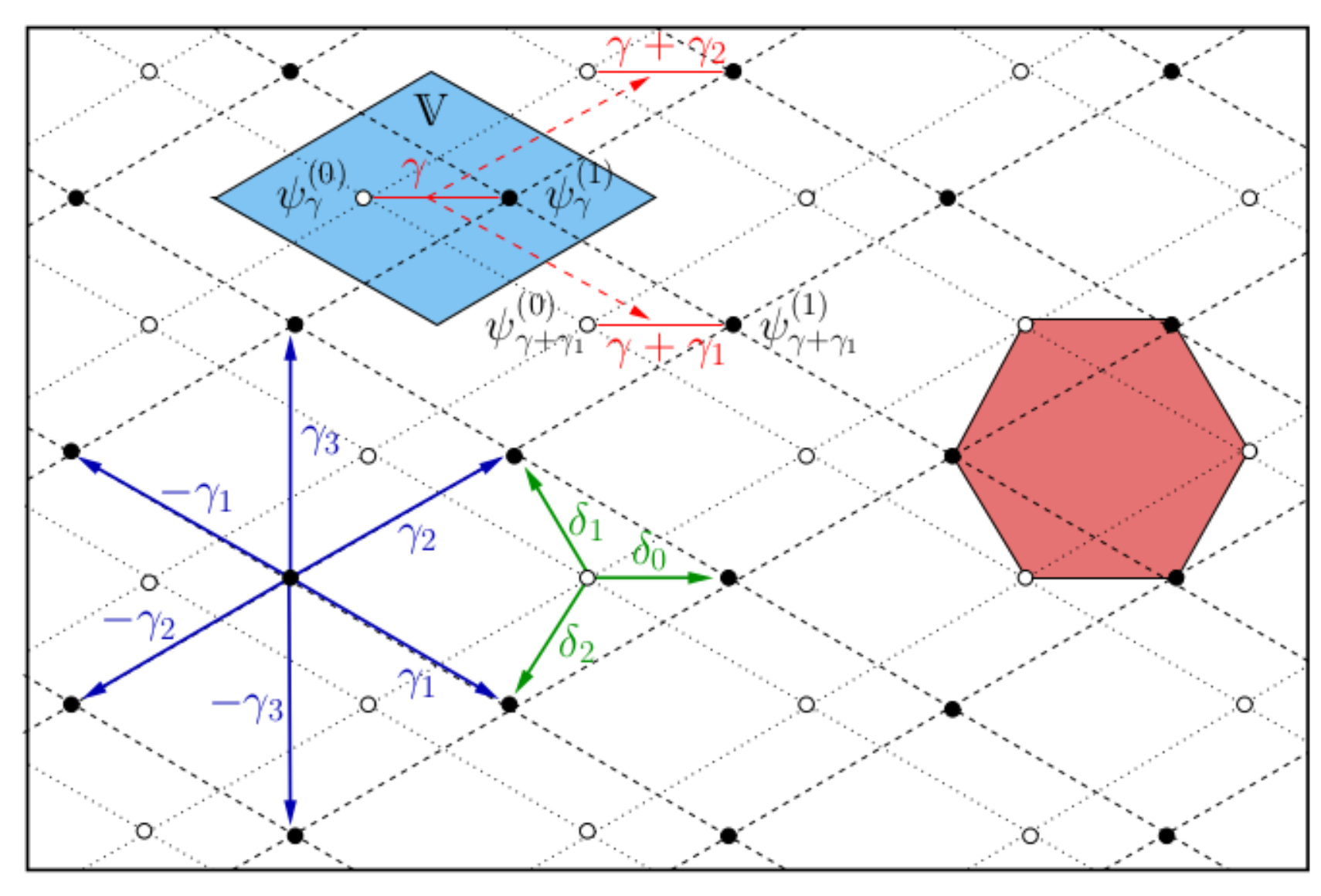}}\hfil
	\caption{The honeycomb lattice is the superposition of two triangular lattices where the fundamental cell contains two atoms, one black and one white.}
	\label{Fourier:R:figure:honeycomb}
\end{figure}
and acts on $\psi = \bigl ( \psi^{(0)} , \psi^{(1)} \bigr ) \in \ell^2(\Z^2,\C^2)$ as 
\begin{align*}
	(H \psi)(\gamma) = \left (
	\begin{matrix}
		\psi^{(1)}(\gamma) + q_1 \, \psi^{(1)}(\gamma - e_1) + q_2 \, \psi^{(1)}(\gamma - e_2) \\
		\psi^{(0)}(\gamma) + \overline{q_1} \, \psi^{(0)}(\gamma + e_1) + \overline{q_2} \, \psi^{(0)}(\gamma + e_2) \\
	\end{matrix}
	\right )
	. 
\end{align*}
One can again use the Fourier transform to relate $H$ to a matrix-valued multiplication operator on $L^2(\T^2,\C^2)$, namely 
\begin{align*}
	H^{\Fourier} = \left (
	\begin{matrix}
		0 & \overline{\varpi(\hat{k})} \\
		\varpi(\hat{k}) & 0 \\
	\end{matrix}
	\right ) 
	= T(\hat{k}) 
	. 
\end{align*}
where we have defined 
\begin{align*}
	\varpi(k) = 1 + q_1 \, \e^{- \ii k_1} + q_2 \, \e^{- \ii k_2}
	, 
\end{align*}
and one can conveniently write $T$ in terms of the Pauli matrices as 
\begin{align*}
	T(k) &= \Re \bigl ( \varpi(k) \bigr ) \, \sigma_1 + \Im \bigl ( \varpi(k) \bigr ) \, \sigma_2 
	. 
\end{align*}
The advantage is that there exist closed formulas for the eigenvalues 
\begin{align*}
	E_{\pm}(k) &= \pm \babs{\varpi(k)}
\end{align*}
of $T(k)$ which are interpreted as the \emph{upper} and \emph{lower band functions} and the two eigenprojections 
\begin{align*}
	P_{\pm}(k) &= \frac{1}{2} \left ( \id_{\C^2} \pm \frac{\Re \bigl ( \varpi(k) \bigr ) \, \sigma_1 + \Im \bigl ( \varpi(k) \bigr ) \, \sigma_2}{\abs{\varpi(k)}} \right ) 
\end{align*}
associated to $E_{\pm}(k)$. Now if one wants to solve the associated Schrödinger equation in momentum representation, 
\begin{align*}
	\ii \partial_t \widehat{\psi}(t) &= H^{\Fourier} \widehat{\psi}(t) 
	, 
	\qquad \qquad 
	\widehat{\psi}(0) = \widehat{\psi}_0 \in L^2(\T^2,\C^2)
	, 
\end{align*}
we can express the unitary evolution group in terms of the eigenvalues and projections as 
\begin{align*}
	U^{\Fourier}(t) = \e^{- \ii t \abs{\varpi(k)}} \, P_+(\hat{k}) + \e^{+ \ii t \abs{\varpi(k)}} \, P_-(\hat{k}) 
	. 
\end{align*}
Also here, the Fourier transform connects the evolution group in momentum and position representation, $U(t) = \Fourier \, U^{\Fourier}(t) \, \Fourier^{-1}$. \marginpar{2013.11.21}
% paragraph two_band_model (end)
% subsubsection tight_binding_models_in_solid_state_physics (end)
% subsection solving_quantum_problems_using_fourier_ (end)
% section the_fourier_transform_on_t_n_ (end)

\section{The Fourier transform on $\R^n$} % (fold)
\label{Fourier:R}
There also exists a Fourier transform on $\R^n$ which is defined analogously to Definition~\ref{Fourier:T:defn:Fourier_transform}. In spirit, the $L^1(\R^n)$ theory is very similar to that for the discrete Fourier transform.

\subsection{The Fourier transform on $L^1(\R^n)$} % (fold)
\label{Fourier:R:L1}
The Fourier transform on $\R^n$ is first defined on $L^1(\R^n)$ as follows: 
\begin{definition}[Fourier transform]\label{Fourier:R:defn:Fourier_transform}
	For any $f \in L^1(\R^n)$, we define its Fourier transform 
	\begin{align*}
		(\Fourier f)(\xi) := \hat{f}(\xi) := \frac{1}{(2\pi)^{\nicefrac{n}{2}}} \int_{\R^n} \dd x \, \e^{- \ii \xi \cdot x} \, f(x) 
		. 
	\end{align*}
\end{definition}
The prefactor $(2\pi)^{-\nicefrac{n}{2}}$ is a matter of convention. Our choice is motivated by the fact that $\Fourier$ will define a \emph{unitary}  map $L^2(\R^n) \longrightarrow L^2(\R^n)$.

\subsubsection{Fundamental properties} % (fold)
\label{Fourier:R:L1:fundamental_properties}
Let us begin by investigating some of the fundamental properties. First, just like in the case of the Fourier transform on $\T^n$, the Fourier transform decays at $\infty$. 
\begin{lemma}[Riemann-Lebesgue lemma]\label{Fourier:R:lem:Riemann_Lebesgue}
	The Fourier transform of any $f \in L^1(\R^n)$ is an element of $\Cont_{\infty}(\R^n)$, \ie $\Fourier f$ is continuous, bounded and decays at infinity,  
	\begin{align*}
		\lim_{\abs{\xi} \to \infty} \Fourier f(\xi) = 0
		. 
	\end{align*}
\end{lemma}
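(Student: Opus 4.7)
The plan is to split the statement into two parts: continuity plus boundedness of $\Fourier f$, which has essentially been handled in Example~\ref{spaces:example:Riemann-Lebesgue_half}, and the decay at infinity, which is the substantive new content. The pointwise bound $\babs{(\Fourier f)(\xi)} \leq (2\pi)^{-n/2} \, \snorm{f}_{L^1(\R^n)}$ is immediate from the triangle inequality for integrals, and continuity of $\Fourier f$ follows from dominated convergence with dominating function $\abs{f}$, exactly as in Example~\ref{spaces:example:Riemann-Lebesgue_half}. So all that remains is to show $(\Fourier f)(\xi) \to 0$ as $\abs{\xi} \to \infty$.

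For this I would mimic the strategy used in the torus case in the proof of Lemma~\ref{Fourier:T:cor:Riemann_Lebesgue}: first establish the decay on a convenient dense subclass of $L^1(\R^n)$, and then extend to all of $L^1(\R^n)$ by an $\eps$-approximation argument. A natural choice of dense subclass is $\Cont_c^{\infty}(\R^n)$. For any $g \in \Cont_c^{\infty}(\R^n)$ and any multiindex $\alpha \in \N_0^n$, repeated integration by parts (with no boundary contributions, since $g$ has compact support) yields
\[
	\xi^{\alpha} \, (\Fourier g)(\xi) = \frac{(-\ii)^{\abs{\alpha}}}{(2\pi)^{n/2}} \int_{\R^n} \dd x \, \e^{-\ii \xi \cdot x} \, \partial_x^{\alpha} g(x),
\]
and consequently $\babs{\xi^{\alpha} (\Fourier g)(\xi)} \leq (2\pi)^{-n/2} \, \snorm{\partial_x^{\alpha} g}_{L^1(\R^n)}$. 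Choosing any $\alpha$ of positive degree therefore gives $(\Fourier g)(\xi) = \order \bigl ( \abs{\xi}^{-1} \bigr )$, in particular $\lim_{\abs{\xi} \to \infty} (\Fourier g)(\xi) = 0$.

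To pass to an arbitrary $f \in L^1(\R^n)$, I would invoke the density of $\Cont_c^{\infty}(\R^n)$ in $L^1(\R^n)$. Given $\eps > 0$, pick $g \in \Cont_c^{\infty}(\R^n)$ with $\snorm{f - g}_{L^1(\R^n)} < \eps$; then linearity of $\Fourier$ together with the uniform bound above yields
\[
	\babs{(\Fourier f)(\xi)} \leq \babs{\bigl ( \Fourier (f-g) \bigr )(\xi)} + \babs{(\Fourier g)(\xi)} \leq (2\pi)^{-n/2} \, \eps + \babs{(\Fourier g)(\xi)}.
\]
Since the second term tends to $0$ as $\abs{\xi} \to \infty$, we obtain $\limsup_{\abs{\xi} \to \infty} \babs{(\Fourier f)(\xi)} \leq (2\pi)^{-n/2} \, \eps$, and since $\eps > 0$ was arbitrary the limit itself is $0$.

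The only genuine obstacle is that the density of $\Cont_c^{\infty}(\R^n)$ in $L^1(\R^n)$ is not established in this excerpt; if one wishes to avoid quoting it, one can instead work with the density of finite linear combinations of indicator functions of rectangles, for which $\Fourier$ can be computed explicitly as a product of one-dimensional $\sin$-over-frequency factors that manifestly vanish as $\abs{\xi} \to \infty$. The $\eps$-argument in the last paragraph then goes through unchanged.
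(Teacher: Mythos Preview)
Your proof is correct and follows the same overall strategy as the paper: cite Example~\ref{spaces:example:Riemann-Lebesgue_half} for boundedness and continuity, then prove decay on a dense subclass of $L^1(\R^n)$ and extend by an $\eps$-approximation. The only difference is the choice of dense subclass: the paper uses finite linear combinations of step functions $1_A$ and asserts $\lim_{\abs{\xi}\to\infty}(\Fourier 1_A)(\xi)=0$ directly, whereas you use $\Cont_c^\infty(\R^n)$ and obtain decay via integration by parts. Your alternative remark at the end (indicators of rectangles, with the explicit product-of-sinc computation) is in fact precisely the paper's route.

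One minor point of ordering: the density of $\Cont_c^\infty(\R^n)$ in $L^1(\R^n)$ is stated in the paper as Lemma~\ref{Fourier:R:lem:Cinfty_c_dense_L1}, which appears \emph{after} the Riemann--Lebesgue lemma. Its proof does not rely on Riemann--Lebesgue, so there is no genuine circularity, but your caution about quoting it is well placed given the presentation order; the step-function variant avoids this entirely, which is presumably why the paper chose it.
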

\begin{proof}
	The first part, $\Fourier f \in L^{\infty}(\R^n) \cap \Cont(\R^n)$, has already been shown on page~\pageref{spaces:example:Riemann-Lebesgue_half}. It remains to show that $\Fourier f$ decays at infinity. But that follows from the fact that any integrable function can be approximated arbitrarily well by a finite linear combination of step functions
	\begin{align*}
		1_A(x) := 
		\begin{cases}
			1 & x \in A \\
			0 & x \not\in A \\
		\end{cases}
		, 
	\end{align*}
	and $\lim_{\abs{\xi} \to \infty} (\Fourier 1_A)(\xi) = 0$. Thus, $\lim_{\abs{\xi} \to \infty} (\Fourier f)(\xi) = 0$ follows. 
\end{proof}
We begin to enumerate a few important properties of the Fourier transform. These are tremendously helpful in computations. 
\begin{proposition}[Fundamental properties of $\Fourier$]\label{Fourier:R:prop:fundamentals}
	Let $f \in L^1(\R^n)$. 
	\begin{enumerate}[(i)]
		\item $\Fourier \bar{f}(\xi) = \overline{(\Fourier f)(-\xi)}$ 
		\item $\bigl ( \Fourier f(- \, \cdot \,) \bigr )(\xi) = (\Fourier f)(-\xi)$
		\item $\bigl ( \Fourier (T_y f) \bigr )(\xi) = \e^{- \ii \xi \cdot y} \, (\Fourier f)(\xi)$ where $(T_y f)(x) := f(x-y)$ for $y \in \R^n$ 
		\item $(\Fourier f)(\xi - \eta) = \bigl ( \Fourier (\e^{+ \ii \eta \cdot x} f) \bigr )(\xi)$
		\item $\bigl ( \Fourier (S_{\lambda} f) \bigr )(\xi) = \lambda^n \, (\Fourier f)(\lambda \xi)$ where $(S_{\lambda} f)(x) := f(\nicefrac{x}{\lambda})$, $\lambda > 0$ 
		\item For all $f \in \Cont^r(\R^n)$ with $\partial_x^{\alpha} f \in L^1(\R^n)$, $\abs{\alpha} \leq r$, we have $\bigl ( \Fourier (\partial_x^{\alpha} f) \bigr )(\xi) = \ii^{\abs{\alpha}} \, \xi^{\alpha} \, (\Fourier f)(\xi)$ for all $\abs{\alpha} \leq r$. 
	\end{enumerate}
\end{proposition}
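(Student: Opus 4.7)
The plan is to verify each property by direct computation from the definition of $\Fourier$, exploiting Fubini's theorem and the absolute integrability of $f$ to interchange integrals and perform changes of variable. Since each assertion either reduces to an elementary substitution in the Fourier integral or (in the case of~(vi)) to one integration by parts, the work is essentially mechanical; the only analytic subtlety is justifying vanishing boundary terms in~(vi).

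For (i) one writes out $\overline{(\Fourier f)(-\xi)}$ and uses $\overline{\e^{+\ii \xi \cdot x}} = \e^{-\ii \xi \cdot x}$ to match $\Fourier \bar{f}(\xi)$. Property (ii) follows from the substitution $x \mapsto -x$ in the integral (the Lebesgue measure is invariant under reflection). For (iii) one substitutes $x \mapsto x + y$ and factors out $\e^{-\ii \xi \cdot y}$; for (iv) one regroups $\e^{-\ii (\xi - \eta) \cdot x} = \e^{-\ii \xi \cdot x} \, \e^{+\ii \eta \cdot x}$ and recognizes the Fourier transform of $\e^{+ \ii \eta \cdot x} f$. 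Property (v) is the scaling $x \mapsto \lambda x$ with Jacobian $\lambda^n$, yielding the prefactor $\lambda^n$ and the argument $\lambda \xi$. All these manipulations are legitimate by Fubini and the integrability of $f$, which guarantees the integrals exist absolutely.

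The substantive point is (vi). By induction on $\abs{\alpha}$, with the product rule for $\partial_x^\alpha$, it suffices to treat a single partial derivative $\partial_{x_j}$, applied iteratively; since $\partial_x^\beta f \in L^1(\R^n)$ for every $\beta \le \alpha$ by assumption, each intermediate Fourier transform is well-defined. Write the integral for $\Fourier (\partial_{x_j} f)$ as an iterated integral (Fubini), hold the other variables fixed, and integrate by parts in $x_j$, transferring the derivative onto $\e^{-\ii \xi \cdot x}$ to produce the factor $\ii \, \xi_j$. The hard part is to show that the boundary terms at $x_j = \pm\infty$ vanish. To do this I would argue that, since both $f$ and $\partial_{x_j} f$ lie in $L^1(\R^n)$, for almost every choice of the remaining variables $x'=(x_1,\ldots,\hat{x}_j,\ldots,x_n)$ the one-variable function $t \mapsto f(x_1,\ldots,t,\ldots,x_n)$ is absolutely continuous with integrable derivative, hence possesses limits as $t \to \pm\infty$; the $L^1$ condition on $f$ forces these limits to be zero for almost every $x'$, so the boundary contributions drop out. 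Iterating this gives the claim for general $\abs{\alpha} \le r$.
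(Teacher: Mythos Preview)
The paper does not actually supply a proof of this proposition; it is stated and then the text moves directly on to the Fourier transform of a Gau\ss{}ian. Implicitly the reader is meant to adapt the proof of the analogous Proposition~\ref{Fourier:T:prop:fundamentals} for $\T^n$, where the paper says items (ii)--(vi) ``follow from direct computation'' and handles the derivative identity (vii) by one partial integration, with the boundary terms dispatched by periodicity.

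Your proposal is correct and matches this intended approach: direct substitution for (i)--(v), induction plus integration by parts for (vi). Your treatment of (vi) is in fact more careful than the paper's $\T^n$ argument, since on $\R^n$ the boundary terms do not vanish for free; your observation that $f, \partial_{x_j} f \in L^1(\R^n)$ forces $f(\cdot, x') \to 0$ along almost every line is the right way to justify this and fills a gap the paper leaves to the reader.
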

\begin{lemma}[Fourier transform of a Gaußian]\label{Fourier:R:lem:Fourier_Gaussian}
	The Fourier transform of the Gaußian function $g_{\lambda}(x) := \e^{- \frac{\lambda}{2} x^2}$ is 
	\begin{align*}
		(\Fourier g_{\lambda})(\xi) = \lambda^{- \nicefrac{n}{2}} \, \e^{- \frac{1}{2 \lambda} \xi^2} 
		= \lambda^{- \nicefrac{n}{2}} \, g_{\nicefrac{1}{\lambda}}(\xi) 
		. 
	\end{align*}
\end{lemma}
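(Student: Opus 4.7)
The plan is to exploit three standard tools — Fubini, the scaling identity from Proposition~\ref{Fourier:R:prop:fundamentals}(v), and a differential equation trick — to reduce the claim to computing a single Gaussian integral.

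First I would separate variables. Since $g_\lambda(x) = \prod_{j=1}^n \e^{-\frac{\lambda}{2} x_j^2}$ and $\abs{\e^{-\ii \xi \cdot x}} = 1$, Fubini's theorem turns the defining integral into an $n$-fold product. Hence it suffices to establish the identity in dimension $n = 1$. Next, using property (v) of Proposition~\ref{Fourier:R:prop:fundamentals} with $\mu = 1/\sqrt{\lambda}$, one has $g_\lambda = S_{1/\sqrt{\lambda}} g_1$, so $(\Fourier g_\lambda)(\xi) = \lambda^{-1/2} (\Fourier g_1)(\xi/\sqrt{\lambda})$. Therefore the whole computation reduces to showing $(\Fourier g_1)(\xi) = \e^{-\xi^2/2}$ on $\R$.

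For this I would use the ODE method. The function $g_1$ satisfies $g_1'(x) + x \, g_1(x) = 0$. Applying $\Fourier$ to the first term and using property (vi) yields $\ii \xi \, \hat{g}_1(\xi)$; the second term requires the companion identity $\Fourier(x \, f)(\xi) = \ii \, \partial_\xi \hat{f}(\xi)$, which I would prove by differentiating under the integral sign, justifying the interchange via dominated convergence with the $L^1$ majorant $\abs{x} \, \e^{-x^2/2}$ (the same bound controls the difference quotients uniformly on any compact $\xi$-interval, so $\hat{g}_1$ is in fact $\Cont^1$). Combining, $\hat{g}_1$ satisfies the same first-order ODE,
\begin{align*}
	\partial_\xi \hat{g}_1(\xi) = - \xi \, \hat{g}_1(\xi),
\end{align*}
whose unique solution is $\hat{g}_1(\xi) = \hat{g}_1(0) \, \e^{-\xi^2/2}$. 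Finally, $\hat{g}_1(0) = \frac{1}{\sqrt{2\pi}} \int_\R \dd x \, \e^{-x^2/2} = 1$ by the classical Gaussian integral (evaluated by squaring and switching to polar coordinates).

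The main obstacle is the rigorous justification of differentiating $\hat{g}_1$ under the integral sign to obtain the ODE; everything else is direct bookkeeping with the properties collected in Proposition~\ref{Fourier:R:prop:fundamentals}. An alternative route — completing the square $-\tfrac{\lambda}{2} x^2 - \ii \xi x = -\tfrac{\lambda}{2}\bigl(x + \tfrac{\ii \xi}{\lambda}\bigr)^2 - \tfrac{\xi^2}{2\lambda}$ and shifting the contour back to $\R$ by analyticity of $z \mapsto \e^{-\lambda z^2/2}$ plus Gaussian decay at infinity — would work too, but it hides the measure-theoretic justification inside a complex-analysis argument that has not been developed in these notes, so the ODE approach is preferable.
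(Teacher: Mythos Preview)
Your argument is correct, but the paper takes a different (though related) route. Like you, it first reduces to $n = 1$ and $\lambda = 1$. For the remaining computation it does \emph{not} derive an ODE for $\hat{g}_1$; instead it completes the square to write $(\Fourier g_1)(\xi) = g_1(\xi)\, f(\xi)$ with $f(\xi) = \tfrac{1}{\sqrt{2\pi}} \int_{\R} \dd x \, \e^{-\frac{1}{2}(x+\ii\xi)^2}$, and then shows $f$ is constant by computing $f'(\xi)$: differentiating under the integral sign turns the integrand into a total $x$-derivative whose boundary terms vanish, so $f'(\xi) = 0$ and $f(\xi) = f(0) = 1$. This is precisely the ``complete-the-square'' approach you dismissed, but carried out with a real-variable trick rather than a contour shift, so it needs no more complex analysis than your ODE method and requires the same dominated-convergence justification for differentiating under the integral. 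Your ODE method has the advantage of being a transparent instance of the general principle $\Fourier(\partial_x f) = \ii\xi \hat{f}$, $\Fourier(xf) = \ii\partial_\xi \hat{f}$; the paper's method has the advantage of making the Gaussian factor $\e^{-\xi^2/2}$ appear immediately, with only a constant left to pin down.
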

\begin{proof}
	By the scaling relation in Proposition~\ref{Fourier:R:prop:fundamentals}, it suffices to prove the Lemma for $\lambda = 1$. Moreover, we may set $n = 1$, because 
	\begin{align*}
		g_1(x) = \prod_{j = 1}^n \frac{1}{\sqrt{2\pi}} \e^{- \frac{1}{2} x_j^2} 
	\end{align*}
	is just the product of one-dimensional Gaußians. Completing the square, we can express the Fourier transform 
	\begin{align*}
		(\Fourier g_1)(\xi) &= \frac{1}{\sqrt{2\pi}} \int_{\R} \dd x \, \e^{- \ii \xi \cdot x} \, \e^{- \frac{1}{2} x^2} 
		\\
		&= \frac{1}{\sqrt{2\pi}} \int_{\R} \dd x \, \e^{- \frac{1}{2} \xi^2} \, \e^{- \frac{1}{2} (x + \ii \xi)^2} 
		\\
		&= g_1(\xi) \, f(\xi)
	\end{align*}
	as the product of a Gaußian $g_1(\xi)$ with 
	\begin{align*}
		f(\xi) &= \frac{1}{\sqrt{2\pi}} \int_{\R} \dd x \, \e^{- \frac{1}{2} (x + \ii \xi)^2} 
		. 
	\end{align*}
	A simple limiting argument shows that we can differentiate $f$ under the integral sign as often as we would like, \ie $f \in \Cont^{\infty}(\R)$, and that its first derivative vanishes, 
	\begin{align*}
		\frac{\dd}{\dd \xi} f(\xi) &= \frac{1}{\sqrt{2\pi}} \int_{\R} \dd x \, \frac{\dd}{\dd \xi} \Bigl ( \e^{- \frac{1}{2} (x + \ii \xi)^2} \Bigr )
		\\
		&
		= \frac{1}{\sqrt{2\pi}} \int_{\R} \dd x \, \bigl ( - \ii \, (x + \ii \xi) \bigr ) \, \e^{- \frac{1}{2} (x + \ii \xi)^2}
		\\
		&= \frac{1}{\sqrt{2\pi}} \int_{\R} \dd x \, \ii \frac{\dd}{\dd x} \Bigl ( \e^{- \frac{1}{2} (x + \ii \xi)^2} \Bigr )
		\\
		&= \frac{\ii}{\sqrt{2\pi}} \Bigl [ \e^{- \frac{1}{2} (x + \ii \xi)^2} \Bigr ]_{-\infty}^{+\infty}
		= 0 
		. 
	\end{align*}
	But a smooth function whose first derivative vanishes everywhere is constant, and its value is 
	\begin{align*}
		f(0) &= \frac{1}{\sqrt{2\pi}} \int_{\R} \dd x \, \e^{- \frac{1}{2} x^2} 
		= 1 
		. 
	\end{align*}
\end{proof}
The Fourier transform also has an inverse: \marginpar{2013.11.26}
\begin{proposition}
	Assume $f \in L^1(\R^n)$ is such that also its Fourier transform $\hat{f} = \Fourier f$ is integrable. Then for this function, the inverse Fourier transform 
	\begin{align*}
		(\Fourier^{-1} \hat{f})(x) &= \frac{1}{(2\pi)^{\nicefrac{n}{2}}} \int_{\R^n} \dd \xi \, \e^{+ \ii \xi \cdot x} \, \hat{f}(\xi) 
		= (\Fourier \hat{f})(-x)
		= f
	\end{align*}
	agrees with $f \in L^1(\R^n)$. 
\end{proposition}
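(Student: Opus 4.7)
The plan is to prove the inversion formula by Gaussian regularization, since a direct swap of the $\xi$ and $y$ integrals in $(2\pi)^{-n/2}\int \dd\xi\,\e^{+\ii\xi\cdot x}\,\hat{f}(\xi)$ is illegal: the double integrand $\e^{+\ii\xi\cdot(x-y)}\,f(y)$ is not in $L^1(\R^n_\xi\times\R^n_y)$, so Fubini does not apply. I would insert the convergence factor $\e^{-\eps\xi^2/2}$ and define
\begin{align*}
I_\eps(x) := \frac{1}{(2\pi)^{n/2}}\int_{\R^n}\dd\xi\,\e^{+\ii\xi\cdot x}\,\e^{-\tfrac{\eps}{2}\xi^2}\,\hat{f}(\xi),\qquad \eps>0,
\end{align*}
and then compute $\lim_{\eps\to 0} I_\eps$ in two different ways which must agree.

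First, I would recognize $I_\eps$ as a convolution. After substituting the definition of $\hat{f}(\xi)$, the combined integrand is dominated by $\abs{f(y)}\,\e^{-\eps\xi^2/2}$, which now \emph{is} integrable on $\R^n_\xi\times\R^n_y$, so Fubini applies. The inner Gaussian integral is evaluated using Lemma~\ref{Fourier:R:lem:Fourier_Gaussian} (after a trivial shift and reflection in $\xi$), giving
\begin{align*}
I_\eps(x) = \int_{\R^n}\dd y\,f(y)\,h_\eps(x-y),\qquad h_\eps(z):=\frac{1}{(2\pi\eps)^{n/2}}\,\e^{-z^2/(2\eps)}.
\end{align*}
The family $\{h_\eps\}_{\eps>0}$ is the canonical Gaussian approximate identity on $\R^n$: $h_\eps\geq 0$, $\int_{\R^n}h_\eps\,\dd z = 1$, and a rescaling $z=\sqrt{\eps}\,w$ shows that for any $R>0$ the mass of $h_\eps$ outside $\{\abs{z}\leq R\}$ tends to $0$ as $\eps\to 0$. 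The analogue on $\R^n$ of Theorem~\ref{Fourier:T:thm:approximate_id_convolution} then yields
\begin{align*}
\lim_{\eps\to 0}\bnorm{I_\eps - f}_{L^1(\R^n)} = 0.
\end{align*}

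In parallel, the hypothesis $\hat{f}\in L^1(\R^n)$ gives the $\eps$-independent integrable majorant $\babs{\e^{+\ii\xi\cdot x}\,\e^{-\eps\xi^2/2}\,\hat{f}(\xi)}\leq\sabs{\hat{f}(\xi)}$, so by Dominated Convergence
\begin{align*}
\lim_{\eps\to 0} I_\eps(x) = \frac{1}{(2\pi)^{n/2}}\int_{\R^n}\dd\xi\,\e^{+\ii\xi\cdot x}\,\hat{f}(\xi) = (\Fourier^{-1}\hat{f})(x)
\end{align*}
for \emph{every} $x\in\R^n$. Finally, $L^1$-convergence $I_\eps\to f$ forces a.e.\ convergence along some subsequence $\eps_k\to 0$; since that subsequence also converges everywhere to $\Fourier^{-1}\hat{f}$, the two limits must coincide almost everywhere, establishing $f = \Fourier^{-1}\hat{f}$ in $L^1(\R^n)$.

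The main obstacle will be the approximate identity step on $\R^n$: unlike the compact torus, the domain is non-compact, so the proof of Theorem~\ref{Fourier:T:thm:approximate_id_convolution} must be adapted by first approximating $f\in L^1(\R^n)$ in norm by a continuous compactly supported $g$, then splitting $(h_\eps * g - g)(x) = \int h_\eps(z)\bigl(g(x-z)-g(x)\bigr)\,\dd z$ into a region $\abs{z}\leq\delta$ handled by uniform continuity of $g$ and a tail $\abs{z}>\delta$ killed by property~(ii) of the approximate identity, and finally transferring the estimate back to $f$ via a density argument together with Young-type control $\snorm{h_\eps*f}_{L^1}\leq\snorm{h_\eps}_{L^1}\snorm{f}_{L^1}$.
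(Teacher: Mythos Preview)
Your proposal is correct and matches the approach the paper intends: the paper explicitly postpones the proof to the subsection on convolutions and approximate identities, and the very same Gaussian regularization plus approximate-identity argument is used verbatim in the paper's proof of Plancherel (Theorem~\ref{Fourier:R:thm:Parseval_Plancherel}). Your only unnecessary detour is the final paragraph: the paper already states the $\R^n$ approximate-identity result as Theorem~\ref{Fourier:R:thm:approximate_id_convolution} (with proof deferred to \cite{Lieb_Loss:analysis:2001}), so you can simply invoke it rather than re-deriving it from the torus version.
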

We will postpone the proof until the next subsection, but the idea is that in the sense of distributions (\cf Section~\ref{S_and_Sprime}) one has 
\begin{align*}
	\int_{\R^n} \dd x \, \e^{+ \ii \xi \cdot x} = (2\pi)^n \, \delta(\xi)
\end{align*}
where $\delta$ is the Dirac distribution. A rigorous argument is more involved, though. 
% subsubsection fundamental_properties (end)

\subsubsection{The convolution} % (fold)
\label{Fourier:R:L1:convolution}
The convolution arises naturally from the group structure (a discussion for another time) and it appears naturally in the discussion, because the Fourier transform intertwines the pointwise product of functions and the convolution (\cf Proposition~\ref{Fourier:R:prop:Fourier_convolution}). 
\begin{definition}[Convolution]
	We define the convolution of $f , g \in L^1(\R^n)$ to be 
	\begin{align*}
		(f \ast g)(x) := \int_{\R^n} \dd y \, f(x-y) \, g(y) 
		. 
	\end{align*}
\end{definition}
We have seen in the exercises that $\ast : L^1(\R^n) \times L^1(\R^n) \longrightarrow L^1(\R^n)$, \ie the convolution of two $L^1$ functions is again integrable. The Fourier transform intertwines the convolution and the pointwise product of functions: 
\begin{proposition}\label{Fourier:R:prop:Fourier_convolution}
	$\Fourier (f \ast g) = (2\pi)^{\nicefrac{n}{2}} \, \Fourier f \; \Fourier g$ holds for all $f , g \in L^1(\R^n)$. 
\end{proposition}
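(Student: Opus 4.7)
The plan is to unfold the definitions, rearrange the double integral via Fubini, and perform a translation change of variables so that the integrand factorizes into the product of the two individual Fourier integrals.

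First I would write out the left-hand side explicitly:
\begin{align*}
	\bigl( \Fourier (f \ast g) \bigr)(\xi) &= \frac{1}{(2\pi)^{\nicefrac{n}{2}}} \int_{\R^n} \dd x \, \e^{- \ii \xi \cdot x} \int_{\R^n} \dd y \, f(x-y) \, g(y).
\end{align*}
Next I would justify swapping the order of integration. The integrand $(x,y) \mapsto \e^{- \ii \xi \cdot x} \, f(x-y) \, g(y)$ is measurable on $\R^n \times \R^n$, and by the translation invariance of Lebesgue measure (applied to the $x$-integral for each fixed $y$) together with Tonelli, the absolute value has double integral $\norm{f}_{L^1(\R^n)} \, \norm{g}_{L^1(\R^n)} < \infty$. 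Hence Fubini applies and I may interchange the order.

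Then I would perform the substitution $z := x - y$ (with $y$ held fixed) inside the inner integral, giving $\dd x = \dd z$, $x = z + y$, and the exponential factorizes as $\e^{- \ii \xi \cdot (z+y)} = \e^{- \ii \xi \cdot z} \, \e^{- \ii \xi \cdot y}$. This leads to
\begin{align*}
	\bigl( \Fourier (f \ast g) \bigr)(\xi) &= \frac{1}{(2\pi)^{\nicefrac{n}{2}}} \int_{\R^n} \dd y \, \e^{- \ii \xi \cdot y} \, g(y) \int_{\R^n} \dd z \, \e^{- \ii \xi \cdot z} \, f(z) \\
	&= (2\pi)^{\nicefrac{n}{2}} \, (\Fourier f)(\xi) \, (\Fourier g)(\xi),
\end{align*}
where the prefactor $(2\pi)^{\nicefrac{n}{2}}$ arises from combining the single $(2\pi)^{-\nicefrac{n}{2}}$ in front with the two $(2\pi)^{\nicefrac{n}{2}}$ factors that I would have to reinsert to recognize the individual Fourier transforms.

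The only real subtlety is the Fubini justification: I should point out that $f \ast g \in L^1(\R^n)$ (shown in the exercises referenced just before the Proposition), which already ensures that the outer integral defining $\Fourier(f \ast g)$ is absolutely convergent. Combined with $\abs{\e^{- \ii \xi \cdot x}} = 1$ and Tonelli applied to $\abs{f(x-y) \, g(y)}$, this is the step that legitimizes the computation. Everything else is routine manipulation. The result then holds pointwise in $\xi \in \R^n$.
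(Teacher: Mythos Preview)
Your proof is correct and follows essentially the same approach as the paper: unfold the definitions, split the exponential, apply Fubini, and recognize the two Fourier integrals. If anything, you are more careful than the paper in explicitly justifying the application of Fubini via Tonelli and translation invariance.
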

\begin{proof}
	For any $f , g \in L^1(\R^n)$ also their convolution is integrable. Then we obtain the claim from direct computation: 
	\begin{align*}
		\bigl ( \Fourier ( f \ast g ) \bigr )(\xi) &= \frac{1}{(2\pi)^{\nicefrac{n}{2}}} \int_{\R^n} \dd x \, \e^{- \ii \xi \cdot x} \, (f \ast g)(x) 
		\\
		&= \frac{1}{(2\pi)^{\nicefrac{n}{2}}} \int_{\R^n} \dd x \int_{\R^n} \dd y \, \e^{- \ii \xi \cdot (x-y)} \, \e^{- \ii \xi \cdot y} \, f(x-y) \, g(y)
		\\
		&= (2\pi)^{\nicefrac{n}{2}} \, \Fourier f(\xi) \; \Fourier g(\xi)
	\end{align*}
\end{proof}
Also convolutions on $L^1(\R^n)$ have approximate identities. 
\begin{definition}[Approximate identity]\label{Fourier:R:defn:approximate_id}
	An approximate identity or \emph{Dirac sequence} is a family of non-negative functions $(\delta_{\eps})_{\eps \in (0,\eps_0)} \subset L^1(\R^n)$, $\eps_0 > 0$, with the following two properties: 
	\begin{enumerate}[(i)]
		\item $\norm{\delta_{\eps}}_{L^1(\R^n)} = 1$ holds for all $\eps \in (0,\eps_0)$. 
		\item For any $R > 0$ we have $\displaystyle \lim_{\eps \to 0} \int_{\abs{x} \leq R} \dd x \, \delta_{\eps}(x) = 1$. 
	\end{enumerate}
\end{definition}
The name “approximate identity” again derives from the following 
\begin{theorem}\label{Fourier:R:thm:approximate_id_convolution}
	Let $(\delta_{\eps})_{n \in \N}$ be an approximate identity. Then for all $f \in L^1(\R^n)$ we have 
	\begin{align*}
		\lim_{\eps \to 0} \bnorm{\delta_{\eps} \ast f - f}_{L^1(\R^n)} = 0 
		. 
	\end{align*}
\end{theorem}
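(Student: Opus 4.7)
The plan is to reduce the claim to the continuity of translation in $L^1$-norm, combined with the two defining properties of an approximate identity. Using property~(i) of Definition~\ref{Fourier:R:defn:approximate_id}, I can write
\begin{align*}
	(\delta_\eps \ast f)(x) - f(x) = \int_{\R^n} \dd y \, \delta_\eps(y) \bigl ( f(x-y) - f(x) \bigr ),
\end{align*}
then take the $L^1$-norm in $x$ and apply Tonelli's theorem (justified by the non-negativity of $\delta_\eps$) to obtain
\begin{align*}
	\bnorm{\delta_\eps \ast f - f}_{L^1(\R^n)} \leq \int_{\R^n} \dd y \, \delta_\eps(y) \, \bnorm{T_y f - f}_{L^1(\R^n)}.
\end{align*}

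The next step is to invoke the \emph{continuity of translation in $L^1$}: for every $g \in L^1(\R^n)$ one has $\lim_{y \to 0} \snorm{T_y g - g}_{L^1(\R^n)} = 0$. I would prove this separately by first checking it for $g \in \Cont_{\mathrm{c}}(\R^n)$ (where uniform continuity and dominated convergence give the result immediately) and then using the density of $\Cont_{\mathrm{c}}(\R^n)$ in $L^1(\R^n)$ together with the isometry property $\snorm{T_y h}_{L^1} = \snorm{h}_{L^1}$ to extend by a standard $\eps/3$-argument.

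Given this, I would fix $\eta > 0$, choose $R > 0$ small enough so that $\snorm{T_y f - f}_{L^1(\R^n)} < \eta/2$ whenever $\abs{y} \leq R$, and split
\begin{align*}
	\bnorm{\delta_\eps \ast f - f}_{L^1(\R^n)} \leq \int_{\abs{y} \leq R} \dd y \, \delta_\eps(y) \, \bnorm{T_y f - f}_{L^1} + \int_{\abs{y} > R} \dd y \, \delta_\eps(y) \, \bnorm{T_y f - f}_{L^1}.
\end{align*}
The first term is bounded by $\tfrac{\eta}{2} \snorm{\delta_\eps}_{L^1} = \tfrac{\eta}{2}$ using property~(i). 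For the second term, the triangle inequality gives $\snorm{T_y f - f}_{L^1} \leq 2 \snorm{f}_{L^1}$, so this piece is at most $2 \snorm{f}_{L^1} \int_{\abs{y} > R} \dd y \, \delta_\eps(y)$, which tends to $0$ as $\eps \to 0$ by property~(ii) of the approximate identity. Choosing $\eps$ small enough makes it $< \eta/2$, so the total is $< \eta$.

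The main obstacle is really the continuity of translation in $L^1$, which is not trivial for general $L^1$ functions and requires the density of continuous compactly supported functions in $L^1(\R^n)$ (a nontrivial fact from measure theory). Once that lemma is available, the rest of the argument is a clean two-scale splitting that mechanically uses the two axioms of Definition~\ref{Fourier:R:defn:approximate_id}, one for each region.
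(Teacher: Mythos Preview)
Your argument is correct and is in fact the standard proof. The paper itself does not give a proof of this theorem at all; it simply refers the reader to \cite[Chapter~2.16]{Lieb_Loss:analysis:2001}. The brief sketch the paper gives for the torus analogue (Theorem~\ref{Fourier:T:thm:approximate_id_convolution}) follows the same two-scale splitting you describe, the only cosmetic difference being that it mentions approximating $f$ by step functions rather than by continuous compactly supported functions to get continuity of translation.

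One point worth flagging: be careful about which density statement you invoke. In this paper, the density of $\Cont^{\infty}_{\mathrm{c}}(\R^n)$ in $L^1(\R^n)$ (Lemma~\ref{Fourier:R:lem:Cinfty_c_dense_L1}) is proved \emph{after} this theorem and its proof actually \emph{uses} this theorem via convolution with a smooth approximate identity. So citing that lemma here would be circular. You avoid this by using density of $\Cont_{\mathrm{c}}(\R^n)$ rather than $\Cont^{\infty}_{\mathrm{c}}(\R^n)$, which is a more primitive measure-theoretic fact (via simple functions and Urysohn, or Lusin's theorem) independent of approximate identities. It would be worth making that distinction explicit in your write-up.
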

The interested reader may look up the proof in \cite[Chapter~2.16]{Lieb_Loss:analysis:2001}. 
\begin{example}
	Let $\chi \in L^1(\R^n)$ be a non-negative function normalized to $1$, $\norm{\chi}_{L^1(\R^n)} = 1$. Then one can show that 
	\begin{align*}
		\delta_k(x) := k^n \, \chi(k x)
	\end{align*}
	is an approximate identity. 
\end{example}
With this in mind, we can show that 
\begin{lemma}\label{Fourier:R:lem:Cinfty_c_dense_L1}
	$\Cont^{\infty}_{\mathrm{c}}(\R^n)$ is dense in $L^1(\R^n)$. 
\end{lemma}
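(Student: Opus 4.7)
The plan is to approximate an arbitrary $f \in L^1(\R^n)$ in two stages: first truncate to get a compactly supported $L^1$ function, then mollify via convolution with a smooth compactly supported approximate identity.

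More concretely, I would first fix $\eps > 0$ and observe that, by dominated convergence applied to $f \cdot 1_{\{|x| \leq R\}}$, one can choose $R > 0$ so large that the truncated function $f_R := f \cdot 1_{\{|x| \leq R\}}$ satisfies $\norm{f - f_R}_{L^1(\R^n)} < \nicefrac{\eps}{2}$. Thus it suffices to approximate the compactly supported $L^1$ function $f_R$ by smooth compactly supported functions.

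For the second stage, I would pick a non-negative bump function $\chi \in \Cont^{\infty}_{\mathrm{c}}(\R^n)$ supported in the unit ball with $\norm{\chi}_{L^1(\R^n)} = 1$ (such a function can be built from the standard $\e^{-1/(1-|x|^2)}$ construction), and set $\delta_k(x) := k^n \, \chi(kx)$. As noted in the example preceding the lemma, $(\delta_k)_{k \in \N}$ is an approximate identity, and each $\delta_k$ is smooth and supported in the ball of radius $\nicefrac{1}{k}$. Then $\delta_k \ast f_R$ is supported in the ball of radius $R + \nicefrac{1}{k}$ (hence compactly supported), and by differentiating under the integral sign one verifies $\delta_k \ast f_R \in \Cont^{\infty}(\R^n)$; combining these, $\delta_k \ast f_R \in \Cont^{\infty}_{\mathrm{c}}(\R^n)$. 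Theorem~\ref{Fourier:R:thm:approximate_id_convolution} now gives $\lim_{k \to \infty} \norm{\delta_k \ast f_R - f_R}_{L^1(\R^n)} = 0$, so one can choose $k$ large enough that this norm is $< \nicefrac{\eps}{2}$. The triangle inequality then yields $\norm{f - \delta_k \ast f_R}_{L^1(\R^n)} < \eps$, proving density.

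The only mildly technical step is verifying that $\delta_k \ast f_R$ is indeed smooth, which requires justifying differentiation under the integral sign; this is routine since $\delta_k$ and all its derivatives are bounded and compactly supported, giving an integrable majorant (namely $\norm{\partial^{\alpha}\delta_k}_{\infty} \cdot |f_R|$) independent of $x$ in any compact set, so dominated convergence applies to difference quotients. Otherwise the argument is a direct assembly of the truncation trick and Theorem~\ref{Fourier:R:thm:approximate_id_convolution}.
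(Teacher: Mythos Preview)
Your argument is correct and uses the same two building blocks as the paper's sketch --- a smooth compactly supported approximate identity together with Theorem~\ref{Fourier:R:thm:approximate_id_convolution}, and a truncation step --- but you assemble them in the opposite order. The paper first mollifies the full function $f$ by $\delta_k \ast f$ and \emph{afterwards} multiplies by a smooth cutoff $\mu(\cdot/j)$; to justify that $\mu_j \,(\delta_k \ast f) \to \delta_k \ast f$ in $L^1$ it invokes Monotone Convergence, which in turn forces the preliminary reduction to non-negative $f$. Your route instead truncates first with the hard cutoff $1_{\{|x|\leq R\}}$ (via dominated convergence) and then mollifies; because $f_R$ already has compact support, the convolution $\delta_k \ast f_R$ is automatically compactly supported, so no smooth cutoff and no sign decomposition are needed. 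Both approaches are standard; yours is slightly leaner here, while the paper's ordering has the minor advantage that one never convolves against a rough object like $1_{\{|x|\leq R\}} f$.
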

\begin{proof}
	We will only sketch the proof: Using the linearity of the convolution and the decomposition 
	\begin{align*}
		f = \bigl ( f_{\Re +} - f_{\Re -} \bigr ) + \ii \, \bigl ( f_{\Im +} - f_{\Im -} \bigr ) \in L^1(\R^n)
	\end{align*}
	implies we may assume $f \geq 0$ without loss of generality. 
	
	First, we smoothen $f$ by convolving it with a smooth approximate identity, because 
	\begin{align*}
		\partial_x^a \bigl ( \delta_k \ast f \bigr ) = (\partial_x^a \delta_k) \ast f
	\end{align*}
	holds as shown in the homework assignments. Clearly, the convolution of two non-negative functions is non-negative. One may start with $\chi \in \Cont^{\infty}_{\mathrm{c}}(\R^n)$ and then scale it like in the example. 
	
	To make the support compact, we multiply with a cutoff function, \eg pick a second function $\mu \in \Cont^{\infty}_{\mathrm{c}}(\R^n)$ taking values between $0$ and $1$ which is normalized to $1 = \norm{\mu}_{L^1(\R^n)}$ and satisfies 
	\begin{align*}
		\mu(x) = 
		\begin{cases}
			1 & \abs{x} \leq 1 \\
			0 & \abs{x} \geq 2 \\
		\end{cases}
		. 
	\end{align*}
	Clearly, $\mu_j(x) := \mu(\nicefrac{x}{j}) \xrightarrow{j \to \infty} 1$ almost everywhere in $x$, and thus $\mu_j \, (\delta_k \ast f)$ converges to $\delta_k \ast f$ as $j \to \infty$ by the Monotone Convergence theorem. Thus, $f_k := \mu_k \, (\delta_k \ast f) \in \Cont^{\infty}_c(\R^n)$ converges to $f$ in $L^1(\R^n)$. 
\end{proof}
%
% subsubsection the_convolution (end)
% subsection the_fourier_transform_on_l_1_r_d_ (end)

\subsection{Decay of Fourier transforms} % (fold)
\label{Fourier:R:decay}
We can prove an analog of Theorem~\ref{Fourier:T:thm:decay_properties}. 
\begin{theorem}[Regularity $f$ $\leftrightarrow$ decay $\Fourier f$]\label{Fourier:R:thm:decay_properties}
	\begin{enumerate}[(i)]
		\item Let $s \in \N_0$, $\delta \in (0,1)$ and assume that the Fourier transform of $f \in L^1(\R^n)$ decays as 
		\begin{align}
			\babs{\hat{f}(\xi)} \leq C (1 + \abs{\xi})^{-n-s-\delta} 
			. 
			\label{Fourier:R:eqn:decay_assumption_Fourier_coefficients}
		\end{align}
		Then $f \in \Cont^s(\R^n)$ and $\partial_x^a f \in L^{\infty}(\R^n)$ holds for all $\abs{a} \leq s$. 
		\item Assume $f \in \Cont^s(\R^n)$ is such that all derivatives up to order $s$ are integrable. Then the Fourier transform $\Fourier f$ satisfies $\displaystyle \lim_{\abs{\xi} \to \infty} \bigl ( \sabs{\xi}^r \, \hat{f}(\xi) \bigr ) = 0$ for $r \leq s$. 
		\item $f \in \Cont^{\infty}(\R^n)$, $\partial_x^{\alpha} f \in L^1(\R^n)$ for all $\alpha \in \N_0^n$ holds if and only if for all $r \geq 0$ there exists $C_r > 0$ such that 
		\begin{align*}
			\babs{\hat{f}(\xi)} \leq C_r \, \bigl ( 1 + \abs{\xi} \bigr )^{-r} 
			. 
		\end{align*}
	\end{enumerate}
\end{theorem}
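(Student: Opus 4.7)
The plan is to adapt the arguments from Theorem~\ref{Fourier:T:thm:decay_properties} to the continuous setting, with sums over $\Z^n$ replaced by integrals and Dominated Convergence for sums replaced by its continuous counterpart. The backbone of all three parts is Proposition~\ref{Fourier:R:prop:fundamentals}~(vi), which ties derivatives of $f$ to polynomial factors in $\hat{f}$, combined with Fourier inversion and the Riemann-Lebesgue lemma~\ref{Fourier:R:lem:Riemann_Lebesgue}.

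For part (i), the decay hypothesis~\eqref{Fourier:R:eqn:decay_assumption_Fourier_coefficients} together with $\abs{a} \leq s$ gives
\[
\babs{\xi^a \, \hat{f}(\xi)} \leq C \, (1+\abs{\xi})^{\abs{a}-n-s-\delta} \leq C \, (1+\abs{\xi})^{-n-\delta},
\]
so $\xi^a \, \hat{f} \in L^1(\R^n)$ for every $\abs{a} \leq s$. In particular $\hat{f}$ itself is integrable, so Fourier inversion applies and gives $f(x) = (2\pi)^{-\nicefrac{n}{2}} \int_{\R^n} \e^{+\ii \xi \cdot x} \, \hat{f}(\xi) \, \dd \xi$ almost everywhere. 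The integrable majorant $\sabs{\xi^a \hat{f}(\xi)}$ on the $x$-derivatives of the integrand justifies differentiating $\abs{a}$ times under the integral sign via Dominated Convergence, yielding $\partial_x^a f(x) = (2\pi)^{-\nicefrac{n}{2}} \int_{\R^n} \e^{+\ii \xi \cdot x} \, (\ii \xi)^a \, \hat{f}(\xi) \, \dd \xi$ for every $\abs{a} \leq s$. Applying the Riemann-Lebesgue lemma to this inverse-Fourier representation shows that $\partial_x^a f$ is continuous and bounded by $\snorm{\xi^a \hat{f}}_{L^1(\R^n)}$, so $f \in \Cont^s(\R^n)$ with $\partial_x^a f \in L^{\infty}(\R^n)$.

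For part (ii), the hypothesis $\partial_x^a f \in L^1(\R^n)$ for all $\abs{a} \leq s$ combined with Proposition~\ref{Fourier:R:prop:fundamentals}~(vi) and the Riemann-Lebesgue lemma applied to $\partial_x^a f$ gives $\ii^{\abs{a}} \, \xi^a \, \hat{f}(\xi) = \Fourier(\partial_x^a f)(\xi) \to 0$ as $\abs{\xi} \to \infty$. The main obstacle is then passing from this componentwise decay to the isotropic bound $\abs{\xi}^r \abs{\hat{f}(\xi)} \to 0$ for $r \leq s$. For $\abs{\xi} \geq 1$ one has $\abs{\xi}^r \leq \abs{\xi}^{\lceil r \rceil}$, reducing matters to integer exponents, and then the elementary estimate $\abs{\xi}^r \leq n^{\nicefrac{r}{2}} \sum_{j=1}^n \abs{\xi_j}^r = n^{\nicefrac{r}{2}} \sum_j \babs{\xi^{r e_j}}$ (with $\lceil r \rceil \leq s$) reduces the radial statement to the componentwise one for the multiindices $a = r \, e_j$, each of length $r \leq s$.

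Part (iii) is obtained by iterating (i) and (ii) over arbitrarily large $s$. The $\Rightarrow$-direction follows from (ii): the hypothesis allows $r$ to be taken arbitrary, producing $\abs{\xi}^r \abs{\hat{f}(\xi)} \to 0$ and hence the bound $\abs{\hat{f}(\xi)} \leq C_r (1+\abs{\xi})^{-r}$ outside a compact set, where $\hat{f}$ is bounded by Riemann-Lebesgue. For the $\Leftarrow$-direction, (i) with arbitrary $s$ delivers $f \in \Cont^s(\R^n)$ for every $s$, hence $f \in \Cont^{\infty}(\R^n)$. The hard part that I expect to require care is promoting boundedness of $\partial_x^a f$ to $L^1$-integrability: since $\ii^{\abs{a}}\xi^a \hat{f}$ inherits the rapid decay of $\hat{f}$, one would apply (i) to $\partial_x^a f$ to get smoothness, and then use that rapid decay of $\hat{f}$ together with some regularity on the side of $\hat{f}$ forces polynomial decay of $\partial_x^a f$ of arbitrary order (essentially the Schwartz-class duality $\Fourier \mathcal{S}(\R^n) = \mathcal{S}(\R^n)$), yielding $\partial_x^a f \in L^1(\R^n)$.
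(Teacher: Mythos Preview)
Your arguments for (i) and (ii) are correct and follow the same route as the paper's proof: integrability of $\xi^{a}\hat{f}$ from the decay hypothesis, Fourier inversion, and the Riemann--Lebesgue lemma. The paper is in fact terser than you are; in particular it does not spell out the passage from the componentwise decay $\xi^{a}\hat{f}(\xi)\to 0$ to the isotropic statement $\abs{\xi}^{r}\hat{f}(\xi)\to 0$, which you handle correctly via $\abs{\xi}^{r}\leq n^{r/2}\sum_{j}\abs{\xi_{j}}^{r}$.

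For (iii), you have put your finger on a genuine issue. The paper's proof is simply ``Just like in the discrete case, this follows immediately from (i) and (ii)'', but on the torus the analogue of (iii) only asserts $f\in\Cont^{\infty}(\T^{n})$; there is no $L^{1}$ condition because continuity on a compact set gives integrability for free. On $\R^{n}$ the stated $\Leftarrow$ direction also demands $\partial_{x}^{\alpha}f\in L^{1}(\R^{n})$, and neither (i) nor (ii) delivers this: (i) only gives $\partial_{x}^{\alpha}f\in L^{\infty}(\R^{n})$. Your proposed fix via ``some regularity on the side of $\hat{f}$'' and Schwartz duality does not close the gap either, because the hypothesis is purely a decay condition on $\hat{f}$ and says nothing about its smoothness; rapid decay of $\hat{f}$ alone does not force decay of $\partial_{x}^{\alpha}f$ in $x$. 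So on this point your proof is incomplete, but it is no less complete than the paper's --- you have simply been honest about where the difficulty lies, whereas the notes gloss over it.
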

\begin{proof}
	\begin{enumerate}[(i)]
		\item The decay~\eqref{Fourier:R:eqn:decay_assumption_Fourier_coefficients} implies $\hat{f}$ and $\xi^{\alpha} \, \hat{f}$, $\abs{\alpha} \leq s$, are in fact integrable. Thus, the inverse 
		\begin{align*}
			\bigl ( \Fourier^{-1} (\ii^{\abs{\alpha}} \, \xi^{\alpha} \, \hat{f}) \bigr )(x) = \partial_x^{\alpha} (\Fourier \hat{f})(-x) 
			= \partial_x^{\alpha} f
		\end{align*}
		exists as long as $\abs{\alpha} \leq s$ and is an element of $L^{\infty}(\R^n)$ by the Riemann-Lebesgue lemma~\ref{Fourier:R:lem:Riemann_Lebesgue}. 
		\item This is a consequence of Proposition~\ref{Fourier:R:prop:fundamentals} and the Riemann-Lebesgue lemma~\ref{Fourier:R:lem:Riemann_Lebesgue}. 
		\item Just like in the discrete case, this follows immediately from (i) and (ii). 
	\end{enumerate}
\end{proof}
%
% subsection decay_of_fourier_transforms (end)

\subsection{The Fourier transform on $L^2(\R^n)$} % (fold)
\label{Fourier:R:Lp}
The difficulty of \emph{defining} the Fourier transform on $L^p(\R^n)$ spaces is that they do not nest akin to Lemma~\ref{Fourier:T:lem:nesting_Lp_spaces}. Hence, we will need some preparation. First of all, it is easy to see that the convolution can also be seen as a continuous map 
\begin{align*}
	\ast : L^1(\R^n) \times L^p(\R^n) \longrightarrow L^p(\R^n)
	, 
\end{align*}
Moreover, convolving with a suitable approximate identity is a standard method to regularize $L^p(\R^n)$ functions: 
\begin{lemma}\label{Fourier:R:lem:approximate_identity_Lp}
	If $(\delta_{\eps})$ is an approximate identity and $f \in L^p(\R^n)$, $1 \leq p < \infty$, then $\delta_{\eps} \ast f$ converges to $f$ in $L^p(\R^n)$. 
\end{lemma}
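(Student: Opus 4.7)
The plan is to mimic the standard three-step density argument, transferring the $L^1$-convergence result of Theorem~\ref{Fourier:R:thm:approximate_id_convolution} to $L^p$ by approximating $f$ with a more regular function, on which the approximate identity acts well, and then controlling the errors uniformly in $\eps$. The two ingredients we need are an $L^p$ analogue of Young's inequality $\snorm{\delta_\eps \ast h}_p \le \snorm{\delta_\eps}_1 \, \snorm{h}_p = \snorm{h}_p$, and the density of $\Cont_{\mathrm{c}}(\R^n)$ in $L^p(\R^n)$. The former follows from Minkowski's integral inequality applied to $(\delta_\eps \ast h)(x) = \int \delta_\eps(y)\, h(x-y)\,\dd y$ and the translation invariance of the $L^p$-norm; the latter is proved by essentially the same cutoff and mollification argument as in Lemma~\ref{Fourier:R:lem:Cinfty_c_dense_L1}, with the Monotone Convergence theorem replaced by the Dominated Convergence argument available for $1 \le p < \infty$.

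Fix $\eta > 0$. By the density result above, I would first choose $g \in \Cont_{\mathrm{c}}(\R^n)$ with $\snorm{f - g}_p < \eta/3$. Then I would use the key intermediate step, namely $\snorm{\delta_\eps \ast g - g}_p \to 0$ as $\eps \to 0$. Using property~(i) of Definition~\ref{Fourier:R:defn:approximate_id}, I can write
\begin{align*}
(\delta_\eps \ast g)(x) - g(x) = \int_{\R^n} \dd y \, \delta_\eps(y) \, \bigl ( g(x-y) - g(x) \bigr )
,
\end{align*}
and Minkowski's integral inequality then gives
\begin{align*}
\snorm{\delta_\eps \ast g - g}_p \le \int_{\R^n} \dd y \, \delta_\eps(y) \, \bnorm{T_y g - g}_p
.
\end{align*}
Since $g \in \Cont_{\mathrm{c}}(\R^n)$ is uniformly continuous and has compact support, the function $y \mapsto \snorm{T_y g - g}_p$ is continuous at $0$ with value $0$ and bounded by $2\snorm{g}_p$ globally. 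Splitting the integral over $\{|y| \le R\}$ and $\{|y| > R\}$, picking $R > 0$ so small that $\snorm{T_y g - g}_p < \eta/3$ on the inner region, and then using property~(ii) of Definition~\ref{Fourier:R:defn:approximate_id} to make $\int_{|y|>R} \delta_\eps(y)\,\dd y$ arbitrarily small for $\eps$ sufficiently small, I obtain $\snorm{\delta_\eps \ast g - g}_p < \eta/3$ for all small enough $\eps$.

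To conclude, I combine the three estimates using the triangle inequality and the Young-type bound:
\begin{align*}
\snorm{\delta_\eps \ast f - f}_p
&\le \snorm{\delta_\eps \ast (f - g)}_p + \snorm{\delta_\eps \ast g - g}_p + \snorm{g - f}_p
\\
&\le 2 \snorm{f - g}_p + \snorm{\delta_\eps \ast g - g}_p
< \eta
\end{align*}
for all sufficiently small $\eps$. The main obstacle is the middle step: one has to know that $\Cont_{\mathrm{c}}(\R^n)$ (or $\Cont_{\mathrm{c}}^\infty(\R^n)$) is dense in $L^p(\R^n)$ for $1 \le p < \infty$, and one has to justify Minkowski's integral inequality in the regime at hand. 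Both are standard but require a nontrivial invocation of the Fubini--Tonelli machinery and of dominated convergence, which is why the restriction $p < \infty$ is essential: for $p = \infty$, continuous compactly supported functions are \emph{not} dense, and the conclusion can fail.
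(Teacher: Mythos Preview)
The paper does not actually prove this lemma; it simply refers the reader to \cite[Chapter~2.16]{Lieb_Loss:analysis:2001}. Your argument is essentially the standard one found there: Young's inequality to control the tails uniformly in $\eps$, a direct estimate for $g\in\Cont_{\mathrm c}(\R^n)$ via Minkowski's integral inequality and the splitting into $\{|y|\le R\}$ and $\{|y|>R\}$, and a three-term triangle inequality to close. So the proof is correct and is the expected one.

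One small logical wrinkle worth cleaning up: you justify the density of $\Cont_{\mathrm c}(\R^n)$ in $L^p(\R^n)$ by saying it follows from ``essentially the same cutoff and mollification argument as in Lemma~\ref{Fourier:R:lem:Cinfty_c_dense_L1}.'' That argument uses convolution with an approximate identity to smooth, which is precisely what you are trying to establish in $L^p$; and in the paper the $L^p$ density lemma is stated \emph{after} the present one, presumably relying on it. To avoid circularity, invoke instead the more primitive fact that simple functions (finite linear combinations of characteristic functions of bounded measurable sets) are dense in $L^p$ for $p<\infty$, and then approximate each such characteristic function in $L^p$ by a continuous compactly supported function using the regularity of Lebesgue measure and Urysohn's lemma. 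This gives $\Cont_{\mathrm c}$ dense in $L^p$ without any mollification, and your argument then goes through cleanly.
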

The interested reader may look up the proof in \cite[Chapter~2.16]{Lieb_Loss:analysis:2001}. Hence, we can generalize Lemma~\ref{Fourier:R:lem:Cinfty_c_dense_L1} to 
\begin{lemma}
	$\Cont^{\infty}_{\mathrm{c}}(\R^n)$ is dense in $L^p(\R^n)$ for $1 \leq p < \infty$. 
\end{lemma}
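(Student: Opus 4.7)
The plan is to mimic the proof sketched for $L^1(\R^n)$ in Lemma~\ref{Fourier:R:lem:Cinfty_c_dense_L1}, now using the $L^p$-version of the approximate identity convergence (Lemma~\ref{Fourier:R:lem:approximate_identity_Lp}). Concretely, given $f \in L^p(\R^n)$ and $\eps > 0$, I will construct a $g \in \Cont^{\infty}_{\mathrm{c}}(\R^n)$ with $\snorm{f - g}_{L^p(\R^n)} < \eps$ in two steps: first truncate $f$ to obtain a compactly supported $L^p$ function, then mollify by convolving with a smooth, compactly supported approximate identity.

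For the truncation step, I would define $f_R := f \cdot 1_{B_R}$, where $B_R \subset \R^n$ is the closed ball of radius $R$. Then $\abs{f - f_R}^p \leq \abs{f}^p \in L^1(\R^n)$ and $\abs{f - f_R}^p \to 0$ pointwise as $R \to \infty$, so dominated convergence gives $\snorm{f - f_R}_{L^p(\R^n)} \to 0$. A crucial observation is that $f_R$, being a compactly supported $L^p$ function, is also in $L^1(\R^n)$: by Hölder's inequality with conjugate exponents $p$ and $q$, one has $\snorm{f_R}_{L^1(\R^n)} \leq \snorm{f_R}_{L^p(\R^n)} \, \snorm{1_{B_R}}_{L^q(\R^n)} < \infty$ (the case $p=1$ being trivial). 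So we can convolve $f_R$ with an $L^1$ approximate identity without issue.

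For the mollification step, pick a fixed $\chi \in \Cont^{\infty}_{\mathrm{c}}(\R^n)$ with $\chi \geq 0$, $\supp \chi \subseteq B_1$, and $\snorm{\chi}_{L^1(\R^n)} = 1$, and set $\delta_k(x) := k^n \, \chi(k x)$ so that $(\delta_k)$ is a smooth, compactly supported approximate identity (supp $\delta_k \subseteq B_{1/k}$). Lemma~\ref{Fourier:R:lem:approximate_identity_Lp} yields $\bnorm{\delta_k \ast f_R - f_R}_{L^p(\R^n)} \to 0$ as $k \to \infty$. Moreover, $\delta_k \ast f_R$ has compact support inside $B_{R + 1/k}$, and differentiation under the integral sign (as shown in a homework problem) gives $\partial_x^{\alpha}(\delta_k \ast f_R) = (\partial_x^{\alpha} \delta_k) \ast f_R$ for every multiindex $\alpha$, so $\delta_k \ast f_R \in \Cont^{\infty}_{\mathrm{c}}(\R^n)$.

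Combining the two approximations via the triangle inequality, for $R$ large and then $k$ large we obtain $\snorm{f - \delta_k \ast f_R}_{L^p(\R^n)} \leq \snorm{f - f_R}_{L^p(\R^n)} + \snorm{f_R - \delta_k \ast f_R}_{L^p(\R^n)} < \eps$, finishing the proof. The only slightly subtle point, and thus the main obstacle, is to justify that the pointwise/a.e.\ manipulations underlying the differentiation under the integral and the compact support of $\delta_k \ast f_R$ genuinely produce an element of $\Cont^{\infty}_{\mathrm{c}}(\R^n)$ when $f_R$ is merely an $L^p$ (equivalence class of a) function; this is handled exactly because $\delta_k$ is smooth with compact support and $f_R \in L^1(\R^n)$, so the convolution integral is absolutely convergent, depends smoothly on the parameter $x$, and vanishes whenever $x \notin B_R + \supp \delta_k$.
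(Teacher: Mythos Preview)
Your argument is correct. Both you and the paper follow the same two-step scheme—truncate and mollify—but you reverse the order. The paper's (implied) proof, carried over from the $L^1$ case, first reduces to $f \geq 0$, then convolves with a smooth approximate identity to get a $\Cont^{\infty}$ function, and finally multiplies by a smooth cutoff $\mu_j$ to force compact support; the cutoff step is handled via Monotone Convergence, which is why the reduction to non-negative $f$ is made. You instead cut off first with the sharp indicator $1_{B_R}$ (Dominated Convergence), observe that $f_R \in L^1 \cap L^p$, and then mollify with a compactly supported kernel so that $\delta_k \ast f_R$ is automatically in $\Cont^{\infty}_{\mathrm{c}}$. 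Your ordering is slightly more economical: it avoids the decomposition into positive and negative parts and the separate smooth cutoff, at the small cost of having to note explicitly that $f_R \in L^1$ so the convolution integral makes sense pointwise.
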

An immediate consequence of the Lemma is that $L^1(\R^n) \cap L^p(\R^n)$ is dense in $L^p(\R^n)$, because $\Cont^{\infty}_{\mathrm{c}}(\R^n) \subset L^1(\R^n) \cap L^p(\R^n) \subset L^p(\R^n)$ lies densely in $L^p(\R^n)$. 

The Gaußian can be conveniently used to define the Fourier transform on $L^2(\R^n)$. 
\begin{theorem}[Plancherel's theorem]\label{Fourier:R:thm:Parseval_Plancherel}
	If $f \in L^1(\R^n) \cap L^2(\R^n)$, then $\hat{f}$ is in $L^2(\R^n)$ and has the same $L^2(\R^n)$-norm as $f$,
	\begin{align*}
		\norm{f} = \bnorm{\hat{f}} 
		. 
	\end{align*}
	Hence, $f \mapsto \Fourier f$ has a unique continuous extension to a \emph{unitary} map $\Fourier : L^2(\R^n) \longrightarrow L^2(\R^n)$. Moreover, Parseval's formula holds, \ie 
	\begin{align*}
		\scpro{f}{g} = \bscpro{\Fourier f}{\Fourier g}
	\end{align*}
	holds for all $f , g \in L^2(\R^n)$. 
\end{theorem}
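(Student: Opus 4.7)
The plan is to first establish Plancherel's identity on a dense subspace of $L^2(\R^n)$ and then use Theorem~\ref{operators:bounded:thm:extensions_bounded_operators} to extend $\Fourier$ to a unique bounded, norm-preserving map on all of $L^2(\R^n)$. Surjectivity of the extension (and hence unitarity) will come from running the same argument for the formal inverse.

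First, I would prove the \emph{multiplication formula}: for every $f, g \in L^1(\R^n)$ Fubini's theorem yields
\begin{align*}
	\int_{\R^n} \dd \xi \, (\Fourier f)(\xi) \, g(\xi) = \int_{\R^n} \dd x \, f(x) \, (\Fourier g)(x),
\end{align*}
since $(x,\xi) \mapsto \e^{-\ii \xi \cdot x} f(x) g(\xi)$ is absolutely integrable on $\R^{2n}$. Next I would establish Fourier inversion on the dense subspace $\Cont^\infty_{\mathrm{c}}(\R^n) \subset L^1(\R^n) \cap L^2(\R^n)$ (density by Lemma~\ref{Fourier:R:lem:Cinfty_c_dense_L1}). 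For $f \in \Cont^\infty_{\mathrm{c}}(\R^n)$ Theorem~\ref{Fourier:R:thm:decay_properties}~(iii) shows $\hat{f}$ decays faster than any polynomial, hence $\hat{f} \in L^1(\R^n) \cap L^2(\R^n)$. Applying the multiplication formula to $\hat{f}$ and the shifted, rescaled Gaussian $g_\eps(\xi) := \e^{- \frac{\eps}{2} \xi^2} \e^{+ \ii \xi \cdot x}$ and using Lemma~\ref{Fourier:R:lem:Fourier_Gaussian} together with Proposition~\ref{Fourier:R:prop:fundamentals}~(iii)--(iv) gives
\begin{align*}
	\int_{\R^n} \dd \xi \, \e^{+ \ii \xi \cdot x} \, \e^{- \frac{\eps}{2} \xi^2} \, \hat{f}(\xi) = (2\pi)^{n/2} \, (\delta_\eps \ast f)(x),
\end{align*}
where $\delta_\eps(y) := (2\pi\eps)^{-n/2} \e^{- \sabs{y-x}^2/(2\eps)}$ is a Gaussian approximate identity. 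Letting $\eps \to 0$, dominated convergence on the left (majorant $\sabs{\hat{f}}$) and Lemma~\ref{Fourier:R:lem:approximate_identity_Lp} applied to the continuous function $f$ on the right yield $f = \Fourier^{-1} \hat{f}$ pointwise, where $\Fourier^{-1} h(x) := (\Fourier h)(-x)$.

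With inversion in hand, Parseval's identity on $\Cont^\infty_{\mathrm{c}}(\R^n)$ follows by a short calculation: for $f, g \in \Cont^\infty_{\mathrm{c}}(\R^n)$ set $\tilde{g}(x) := \overline{g(x)}$, so $\Fourier \tilde{g}(\xi) = \overline{\hat{g}(-\xi)}$ by Proposition~\ref{Fourier:R:prop:fundamentals}~(i); then the multiplication formula combined with the inversion formula above gives
\begin{align*}
	\scpro{f}{g} = \int_{\R^n} \dd x \, \overline{f(x)} \, g(x) = \int_{\R^n} \dd \xi \, \overline{\hat{f}(\xi)} \, \hat{g}(\xi) = \bscpro{\Fourier f}{\Fourier g},
\end{align*}
and in particular $\snorm{\Fourier f}_{L^2} = \snorm{f}_{L^2}$. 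The same identity for arbitrary $f \in L^1(\R^n) \cap L^2(\R^n)$ then follows by approximating $f$ in the $L^1$ and $L^2$ norms simultaneously by a sequence in $\Cont^\infty_{\mathrm{c}}(\R^n)$ (via convolution with a smooth, compactly supported approximate identity followed by a smooth cutoff, as in Lemma~\ref{Fourier:R:lem:Cinfty_c_dense_L1}): the $L^1$-approximation forces uniform convergence of the Fourier transforms by Lemma~\ref{Fourier:R:lem:Riemann_Lebesgue}, while the $L^2$-approximation together with the isometry on the dense subspace forces $\hat{f}_n \to \hat{f}$ in $L^2$, so passing to the limit yields $\hat{f} \in L^2(\R^n)$ with $\snorm{\hat{f}}_{L^2} = \snorm{f}_{L^2}$.

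Finally, I would invoke Theorem~\ref{operators:bounded:thm:extensions_bounded_operators}: since $\Cont^\infty_{\mathrm{c}}(\R^n)$ is dense in $L^2(\R^n)$, the isometry $\Fourier$ extends uniquely to a bounded linear operator $\Fourier : L^2(\R^n) \longrightarrow L^2(\R^n)$ of operator norm one, which by continuity of the inner product still satisfies Parseval's formula. To upgrade the isometry to a unitary I would repeat the whole construction for the conjugate kernel $\e^{+ \ii \xi \cdot x}$ to obtain a second isometric extension $\Fourier^{-1}$ on $L^2(\R^n)$, and verify that $\Fourier^{-1} \Fourier = \id = \Fourier \Fourier^{-1}$ on the dense subspace $\Cont^\infty_{\mathrm{c}}(\R^n)$ by the pointwise inversion established above; density and boundedness then propagate this to all of $L^2(\R^n)$. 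The main obstacle I anticipate is the delicate interchange of limits in the inversion step: one has to control $\int \hat{f}(\xi) \e^{+\ii\xi\cdot x} \e^{-\eps\xi^2/2} \dd\xi$ as $\eps \to 0$ uniformly enough to identify the limit with $f$ pointwise, and simultaneously ensure that the Gaussian regularization produces a genuine approximate identity in the sense of Definition~\ref{Fourier:R:defn:approximate_id}; everything else is linear algebra plus the general extension theorem.
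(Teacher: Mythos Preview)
Your argument is correct, but it follows a genuinely different route from the paper's proof. The paper works directly with $f \in L^1(\R^n) \cap L^2(\R^n)$ and regularizes the integral $\int_{\R^n} \dd \xi \, \sabs{\hat{f}(\xi)}^2 \, \e^{-\frac{\eps}{2}\xi^2}$: writing out both copies of $\hat{f}$, applying Fubini, and integrating over $\xi$ produces $\int_{\R^n} \dd y \, \overline{f(y)} \, (G_{\eps} \ast f)(y)$ with $G_{\eps}$ a Gaussian approximate identity. One then sends $\eps \to 0$ using Lemma~\ref{Fourier:R:lem:approximate_identity_Lp} on the convolution side and Monotone Convergence on the $\sabs{\hat{f}}^2$ side to obtain $\snorm{\hat{f}} = \snorm{f}$ in a single stroke; Parseval is recovered afterwards via polarization. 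Your path instead first establishes the multiplication formula and pointwise Fourier inversion on $\Cont^{\infty}_{\mathrm{c}}(\R^n)$, derives Parseval from these, and only then passes to $L^1 \cap L^2$ by simultaneous $L^1$/$L^2$ approximation.

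The paper's approach is shorter and avoids proving inversion first (recall that the inversion Proposition preceding this theorem had its proof \emph{postponed}, so the paper deliberately wanted Plancherel to stand on its own). Your approach costs more steps but yields the inversion formula on $\Cont^{\infty}_{\mathrm{c}}(\R^n)$ as a byproduct, and your treatment of unitarity is actually more careful: the paper simply asserts that ``Parseval's formula also implies the unitarity of $\Fourier$,'' which glosses over surjectivity, whereas you explicitly build the candidate inverse as a second $L^2$-isometry and verify $\Fourier^{-1}\Fourier = \id = \Fourier\Fourier^{-1}$ on a dense set. Both arguments ultimately hinge on the same Gaussian regularization and the same approximate-identity lemma; they just deploy them at different stages.
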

\begin{proof}
	Initially, pick any $f \in L^1(\R^n) \cap L^2(\R^n)$. Then according to the Riemann-Lebesgue lemma~\ref{Fourier:R:lem:Riemann_Lebesgue}, $\Fourier f$ is essentially bounded, and hence 
	\begin{align}
		\int_{\R} \dd \xi \, \babs{\hat{f}(\xi)}^2 \, \e^{- \frac{\eps}{2} \xi^2}
		\label{Fourier:R:eqn:Plancherel_proof_lhs}
	\end{align}
	is finite. Since $f \in L^1(\R^n)$ by assumption, the function $f(x) \, \overline{f(y)} \, \e^{- \frac{\eps}{2} \xi^2}$ depending on the three variables $(x,y,\xi)$ is an element of $L^1(\R^{3n})$. Then writing out the Fourier transforms in the above integral and integrating over $\xi$, we obtain 
	\begin{align*}
		\eqref{Fourier:R:eqn:Plancherel_proof_lhs} &= \frac{1}{(2\pi)^n} \int_{\R^n} \dd \xi \int_{\R^n} \dd x \, \int_{\R^n} \dd y \, \e^{- \ii \xi \cdot (x-y)} \, f(x) \, \overline{f(y)} \, \e^{- \frac{\eps}{2} \xi^2} 
		\\
		&= \frac{1}{(2\pi)^{\nicefrac{n}{2}}} \int_{\R^n} \dd x \, \int_{\R^n} \dd y \, f(x) \, \overline{f(y)} \, (2\pi)^{\nicefrac{n}{2}} \, \eps^{-\nicefrac{n}{2}} \, \e^{- \frac{1}{2\eps} (x-y)^2} 
		\\
		&= \int_{\R^n} \dd y \, \overline{f(y)} \, \bigl ( \eps^{-\nicefrac{n}{2}} \e^{- \frac{1}{2\eps} x^2} \ast f \bigr )(y) 
		. 
	\end{align*}
	Then by Lemma~\ref{Fourier:R:lem:approximate_identity_Lp} the function $\eps^{-\nicefrac{n}{2}} \e^{- \frac{1}{2\eps} x^2} \ast f$ converges to $f$ in $L^2$ as $\eps \to 0$, and by Dominated convergence the above expression approaches $\norm{f}^2$. On the other hand, the above is equal to \eqref{Fourier:R:eqn:Plancherel_proof_lhs}; Moreover, we may interchange integration and limit $\eps \to 0$ in \eqref{Fourier:R:eqn:Plancherel_proof_lhs} by the Monotone Convergence theorem, and thus we have proven $\norm{f} = \bnorm{\hat{f}}$ if $f \in L^1(\R^n) \cap L^2(\R^n)$. 
	
	By density of $L^1(\R^n) \cap L^2(\R^n)$, this equality extends to all $f \in L^2(\R^n)$ and with the help of Theorem~\ref{operators:bounded:thm:extensions_bounded_operators}, we deduce that the Fourier transform extends to a continuous map $\Fourier : L^2(\R^n) \longrightarrow L^2(\R^n)$. 
	
	Parseval's formula $\scpro{f}{g} = \bscpro{\Fourier f}{\Fourier g}$ follows from the polarization identity 
	\begin{align*}
		\scpro{f}{g} = \frac{1}{2} \Bigl ( \bnorm{f + g}^2 - \ii \, \bnorm{f + \ii \, g}^2 - (1 - \ii) \norm{f}^2 - (1 - \ii) \norm{g}^2 \Bigr ) 
		. 
	\end{align*}
	Parseval's formula also implies the unitarity of $\Fourier$. 
\end{proof}
The above Theorem also gives a \emph{definition} of the Fourier transform on $L^2(\R^n)$. \marginpar{2013.11.28}
% subsection the_fourier_transform_on_l_p_r_n_ (end)

\subsection{The solution of the heat equation on $\R^n$} % (fold)
\label{Fourier:R:heat}
Just like in the discrete case, the Fourier transform converts the heat equation, a linear \textbf{P}DE, into a linear \textbf{O}DE. This connection explains how to arrive at the solution to the heat equation given in problem~18: we first Fourier transform left- and right-hand side of 
\begin{align}
	\partial_t u(t) &= + D \, \Delta_x u(t) 
	, 
	\qquad \qquad 
	u(0) = u_0 \in L^1(\R^n)
	, 
	\label{Fourier:R:eqn:heat_equation_position_rep}
\end{align}
and obtain the heat equation in Fourier representation, 
\begin{align}
	\partial_t \hat{u}(t) &= - D \, \hat{\xi}^2 \, \hat{u}(t) 
	, 
	\qquad \qquad 
	\hat{u}(0) = \hat{u}_0 \in \Cont_{\infty}(\R^n)
	. 
	\label{Fourier:R:eqn:heat_equation_Fourier_rep}
\end{align}
Here, $\hat{u}$ and $\hat{u}_0$ are the Fourier transforms of $u$ and $u_0$, respectively, and $D > 0$ is the diffusion constant. $\hat{\xi}^2$ stands for the multiplication operator associated to the function $\xi \mapsto \xi^2$, \ie we set 
\begin{align*}
	\bigl ( \hat{\xi}^2 \hat{u} \bigr )(\xi) = \xi^2 \, \hat{u}(\xi) 
	. 
\end{align*}
Since the Laplacian in Fourier representation is just a multiplication operator, the solution to \eqref{Fourier:R:eqn:heat_equation_Fourier_rep} is 
\begin{align*}
	\hat{u}(t) &= \e^{- t D \hat{\xi}^2} \hat{u}_0 
	\in \Cont_{\infty}(\R^n) 
	. 
\end{align*}
The first factor $\e^{- t D \xi^2}$ is just a Gaußian, hence $\hat{u}(t)$ is integrable for $t > 0$ and its inverse Fourier exists. The solution in position representation is just the inverse Fourier transform applied to $\hat{u}(t)$: Proposition~\ref{Fourier:R:prop:Fourier_convolution} tells us that $u(t)$ can be seen as the convolution of the inverse Fourier transform of $\e^{- t D \xi^2}$ convolved with $u_0$, and using Lemma~\ref{Fourier:R:lem:Fourier_Gaussian} we compute 
\begin{align}
	u(t) &= \Fourier^{-1} \bigl ( \e^{- t D \hat{\xi}^2} \hat{u}_0 \bigr ) 
	\notag \\
	&= (2\pi)^{- \nicefrac{n}{2}} \, \Bigl ( \Fourier^{-1} \bigl ( \e^{- t D \hat{\xi}^2} \bigr ) \Bigr ) \ast u_0 
	\notag \\
	% &= \frac{1}{(2\pi)^{\nicefrac{n}{2}}} \, \frac{1}{(2 D t)^{\nicefrac{n}{2}}} \, \e^{- \frac{x^2}{4 D t}} \ast u_0 
	% \\
	&= \frac{1}{(4 \pi D t)^{\nicefrac{n}{2}}} \, \e^{- \frac{x^2}{4 D t}} \ast u_0 
	=: G(t) \ast u_0 
	. 
	\label{Fourier:R:eqn:fundamental_solution_heat_equation}
\end{align}
Note that the right-hand side exists in $L^1(\R^n)$ as the convolution of two $L^1(\R^n)$ functions. Moreover, one can show that $\frac{1}{(4 \pi D t)^{\nicefrac{n}{2}}} \, \e^{- \frac{x^2}{4 D t}}$ is a Dirac sequence as $t \searrow 0$ so that $\lim_{t \searrow 0} u(t) = u_0$ holds where the limit is understood in the $L^1(\R^n)$-sense.

\paragraph{Uniqueness of solutions to the heat equation} % (fold)
Let us revisit a topic that has not seen much attention up to now, namely whether solutions exist for all times and whether they are unique. 
\begin{theorem}\label{Fourier:R:thm:uniqueness_solution_heat_equation_L1}
	The initial value problem \eqref{Fourier:R:eqn:heat_equation_position_rep} has \eqref{Fourier:R:eqn:fundamental_solution_heat_equation} as its \emph{unique} solution if we require $u(t) , \partial_t u(t) \in L^1(\R^n)$ to hold for $t > 0$. 
\end{theorem}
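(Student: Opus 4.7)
The plan is to use the Fourier transform to reduce the PDE to a family of linear ODEs parametrized by $\xi$, solve these explicitly, and then invoke injectivity of $\Fourier$ on $L^1(\R^n)$ to conclude uniqueness. Existence of the candidate solution $u(t) = G(t) \ast u_0$ has already been discussed in the excerpt, so the real content is uniqueness.

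First I would let $u$ be any solution satisfying the hypothesis $u(t), \partial_t u(t) \in L^1(\R^n)$ for $t > 0$, and take the Fourier transform of \eqref{Fourier:R:eqn:heat_equation_position_rep} pointwise in $\xi$. Two interchanges must be justified: $\Fourier(\partial_t u)(\xi) = \partial_t \hat{u}(t,\xi)$ and $\Fourier(\Delta_x u)(\xi) = -\xi^2 \, \hat{u}(t,\xi)$. For the first, integrability of $\partial_t u$ together with a standard dominated-convergence argument applied to the difference quotients $\tfrac{1}{h}(u(t+h) - u(t))$ (which converge in $L^1$ to $\partial_t u(t)$) lets one pull the time derivative through the Fourier integral. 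For the second, the PDE itself forces $\Delta_x u(t) = D^{-1} \partial_t u(t) \in L^1(\R^n)$, so $\Fourier(\Delta_x u)(t,\xi)$ exists as a bounded continuous function of $\xi$; the identity $\Fourier(\Delta_x u) = -\xi^2 \hat{u}$ then follows from Proposition~\ref{Fourier:R:prop:fundamentals}(vi), applied after a regularization of $u(t)$ by convolution with a smooth Dirac sequence $\delta_\eps$ (which preserves the required integrability and commutes with $\Delta_x$) and passage to the limit $\eps \to 0$.

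Having established that $\hat{u}(t,\xi)$ satisfies \eqref{Fourier:R:eqn:heat_equation_Fourier_rep} pointwise in $\xi$ with initial datum $\hat{u}(0,\xi) = \hat{u}_0(\xi)$, I would fix $\xi \in \R^n$ and view this as a scalar linear ODE in $t$ with continuous right-hand side. Picard--Lindelöf (Theorem~\ref{odes:thm:Picard_Lindeloef}) applies globally in $t$ because the vector field $z \mapsto -D\xi^2 z$ is globally Lipschitz, and the unique solution is $\hat{u}(t,\xi) = \e^{-tD\xi^2}\, \hat{u}_0(\xi)$. Alternatively one can run the Grönwall argument from Corollary~\ref{frameworks:classical_mechanics:cor:existence_flow} directly on the difference of two solutions with the same initial datum.

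Finally I would invoke injectivity of the Fourier transform on $L^1(\R^n)$ (the analog of Proposition~\ref{Fourier:T:prop:F_injective}, which on $\R^n$ follows from the inversion formula applied after regularization by Gaussian multipliers, exactly as in the proof of Theorem~\ref{Fourier:R:thm:Parseval_Plancherel}): since any two $L^1$ solutions have identical Fourier transforms $\e^{-tD\hat{\xi}^2}\hat{u}_0$, they coincide as elements of $L^1(\R^n)$. Recognizing $\e^{-tD\hat{\xi}^2}\hat{u}_0$ as $(2\pi)^{-n/2}\, \Fourier(G(t) \ast u_0)$ via Proposition~\ref{Fourier:R:prop:Fourier_convolution} and Lemma~\ref{Fourier:R:lem:Fourier_Gaussian} identifies this common solution with $G(t) \ast u_0$, completing the proof.

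The main obstacle I anticipate is the rigorous justification of the two interchanges in the first step, in particular commuting $\partial_t$ with the Fourier integral; this is where the hypothesis $\partial_t u(t) \in L^1(\R^n)$ (rather than mere pointwise differentiability) is essential, and the cleanest route is via difference quotients and dominated convergence rather than differentiation under the integral.
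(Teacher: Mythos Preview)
Your proposal is correct and follows essentially the same route as the paper: Fourier transform the equation to obtain the ODE~\eqref{Fourier:R:eqn:heat_equation_Fourier_rep} in $\xi$, use ODE uniqueness (the paper applies the Gr\"onwall lemma directly to $\babs{\hat{g}(t,\xi)}$ for the difference $g = u - \tilde{u}$, which is exactly the alternative you mention), and conclude. The paper is terser about the two interchanges you carefully flag, and at the end it appeals to continuity of $\hat{g}(t,\cdot)$ rather than stating Fourier injectivity on $L^1$ explicitly, but these are presentational differences only.
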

\begin{proof}
	The arguments preceding this theorem show that $u(t)$ as given by \eqref{Fourier:R:eqn:fundamental_solution_heat_equation} defines \emph{a} solution which is integrable for $t > 0$; moreover, computing $\Delta_x G(t)$ shows that it is also integrable, and hence, $\partial_t u(t) \in L^1(\R^n)$ is also established. 
	
	So assume $\tilde{u}(t)$ is another integrable solution to \eqref{Fourier:R:eqn:heat_equation_position_rep} with $\partial_t \tilde{u}(t) \in L^1(\R^n)$ and define the difference 
	\begin{align*}
		g(t) := u(t) - \tilde{u}(t) 
		. 
	\end{align*}
	Clearly, this difference vanishes as $t = 0$, \ie $g(0) = 0$. Since $u(t)$ and $\tilde{u}(t)$ as well as their time-derivatives are integrable, also $g(t) , \partial_t g(t) \in L^1(\R^n)$. 
	
	The Riemann-Lebesgue Lemma~\ref{Fourier:R:lem:Riemann_Lebesgue} implies that the Fourier transform of the difference $\hat{g}(t) := \Fourier \Delta(t)$ is an element of $\Cont_{\infty}(\R^n)$. Hence, equations~\eqref{odes:eqn:abs_ddt_X_geq_ddt_abs_X} and \eqref{Fourier:R:eqn:heat_equation_Fourier_rep} yield 
	\begin{align*}
		\frac{\dd}{\dd t} \babs{\hat{g}(t,\xi)} \leq \abs{\frac{\dd}{\dd t} \hat{g}(t,\xi)} 
		= D \, \xi^2 \, \babs{\hat{g}(t,\xi)} 
		, 
		&&
		\forall \xi \in \R^n 
		, 
	\end{align*}
	which is then the initial estimate for the Grönwall lemma~\ref{odes:lem:Groenwall}, 
	\begin{align*}
		0 \leq \babs{\hat{g}(t,\xi)} \leq \babs{\hat{g}(0,\xi)} \, \e^{D \int_0^t \dd s \, \xi^2}
		= \babs{\hat{g}(0,\xi)} \, \e^{D t \xi^2} 
		= 0 
		. 
	\end{align*}
	Since $\hat{g}(t,\xi)$ is continuous in $\xi$, this shows that \eqref{Fourier:R:eqn:fundamental_solution_heat_equation} is the \emph{unique} solution in $L^1(\R^n)$. \marginpar{2014.01.07}
\end{proof}
To show that the condition $u(t) \in L^1(\R^n)$ is crucial for the uniqueness, we give a concrete counterexample first found by Tychonoff \cite{Tychonoff:nonuniqueness_solution_heat_equation:1935}. For simplicity, we reduce to the one-dimensional case and set the diffusion constant $D$ to $1$. Define the function 
\begin{align*}
	u(t,x) := \frac{1}{\sqrt{1-t}} \, \e^{+ \frac{x^2}{4(1-t)}} 
\end{align*}
for $t \in [0,1)$. A simple computation yields that $u(t)$ satisfies the heat equation 
\begin{align*}
	\partial_t u(t) = + \partial_x^2 u(t) 
\end{align*}
to the initial condition $u(0,x) = \e^{\frac{x^2}{4}}$: 
\begin{align*}
	\partial_t u(t,x) &= + \tfrac{1}{2} \, (1-t)^{- \nicefrac{3}{2}} \, \e^{\frac{x^2}{4(1-t)}} + (1 - t)^{- \nicefrac{1}{2}} \cdot (+1) \cdot \frac{x^2}{4(1-t)^2} \, \e^{\frac{x^2}{4(1-t)}}
	\\
	&= \frac{2 (1-t) + x^2}{4 (1-t)^{\nicefrac{5}{2}}} \, \e^{\frac{x^2}{4(1-t)}} 
	\\
	\partial_x u(t,x) &= \frac{x}{2 (1-t)^{\nicefrac{3}{2}}} \, \e^{\frac{x^2}{4(1-t)}}
	\\
	\Rightarrow \; \partial_x^2 u(t,x) &= \frac{1}{2 (1-t)^{\nicefrac{3}{2}}} \, \e^{\frac{x^2}{4(1-t)}} + \frac{x^2}{4 (1-t)^{\nicefrac{5}{2}}} \, \e^{\frac{x^2}{4(1-t)}}
	\\
	&= \frac{2 (1-t) + x^2}{4 (1-t)^{\nicefrac{5}{2}}} \, \e^{\frac{x^2}{4(1-t)}}
\end{align*}
Clearly, the solution explodes as $t \nearrow 1$ and is not integrable for $t \in [0,1)$. 

More careful study shows that asking for $u(t) \in L^1(\R^n)$ is a lot stronger than necessary. In fact, we will see in the next chapter that requiring the solution $u(t)$ to remain a tempered distribution suffices to make the solution unique. 
% paragraph uniqueness_of_solutions_to_the_heat_equation (end)
% subsection the_solution_of_the_heat_equation_on_r_ (end)

\subsection{The solution of the free Schrödinger equation on $\R^n$} % (fold)
\label{Fourier:R:Schroedinger}
Now we will apply this to the free Schrödinger operator $H = - \tfrac{1}{2} \Delta_x$ where the natural space of solutions is $L^2(\R^n)$. Rewriting the free Schrödinger equation in the \emph{momentum representation} yields 
\begin{align*}
	\Fourier \bigl ( \ii \partial_t \psi(t) \bigr ) &= \ii \partial_t \Fourier \psi(t) = \Fourier \bigl ( - \tfrac{1}{2} \Delta_x \psi(t) \bigr ) 
	\\
	&
	= \Fourier \bigl ( - \tfrac{1}{2} \Delta_x \bigr ) \Fourier^{-1} \Fourier \psi(t) 
	. 
\end{align*}
Parseval's theorem~\ref{Fourier:R:thm:Parseval_Plancherel} tells us that $\Fourier : L^2(\R^n) \longrightarrow L^2(\R^n)$ is a unitary, and thus we can equivalently look for $\widehat{\psi}(t) = \Fourier \bigl ( \psi(t) \bigr ) \in L^2(\R^n)$ which solves the free Schrödinger equation in the momentum representation, 
\begin{align*}
	\ii \partial_t \widehat{\psi}(t) &= \Fourier \, \bigl ( - \tfrac{1}{2} \Delta_x \bigr ) \, \Fourier^{-1} \widehat{\psi}(t)
	, 
	&&
	\widehat{\psi}(0) = \widehat{\psi}_0 
	. 
\end{align*}
If we compute the right-hand side at $t = 0$ and assume $\psi_0 \in L^1(\R^n) \cap L^2(\R^n)$, by Proposition~\ref{Fourier:R:prop:fundamentals}~(vi) this leads to 
\begin{align*}
	\Fourier \bigl ( - \tfrac{1}{2} \Delta_x \psi_0 \bigr ) (\xi) = \tfrac{1}{2} \xi^2 \, (\Fourier \psi_0)(\xi) =: \bigl ( H^{\Fourier} \Fourier \psi_0 \bigr )(\xi)
	. 
\end{align*}
We will revisit this point in Chapter~\ref{S_and_Sprime:schwartz_functions}. Note that $- \Delta_x \psi_0 \in L^2(\R^n)$ since this is precisely the domain of definition of $H$, and thus $\mathcal{D}(H^{\Fourier})$ consists of those $L^2(\R^n)$-functions $\widehat{\psi}$ for which $\hat{\xi}^2 \, \widehat{\psi}$ is also in $L^2(\R^n)$. 

Again, the Fourier transform converts the PDE into the linear ODE 
\begin{align}
	\ii \partial_t \widehat{\psi}(t) &= \tfrac{1}{2} \hat{\xi}^2 \, \widehat{\psi}(t) 
	, 
	&&
	\widehat{\psi}(0) = \Fourier \psi_0 = \widehat{\psi}_0 
	, 
\end{align}
which can be solved explicitly by 
\begin{align}
	\widehat{\psi}(t) = U^{\Fourier}(t) \widehat{\psi}_0 
	:= \e^{- \ii t \, \frac{1}{2} \hat{\xi}^2} \, \widehat{\psi}_0 
	. 
\end{align}
The unitary evolution group associated to $H^{\Fourier} = \frac{1}{2} \hat{\xi}^2$ is the unitary multiplication operator $U^{\Fourier}(t) = \e^{- \ii t \, \frac{1}{2} \hat{\xi}^2}$, and hence, the evolution group generated by $H = - \frac{1}{2} \Delta_x$ is 
\begin{align*}
	U(t) = \Fourier^{-1} \, \e^{- \ii t \, \frac{1}{2} \hat{\xi}^2} \Fourier 
	. 
\end{align*}
$U(t)$ is also unitary: $U^{\Fourier}(t)$ and $\Fourier$ are unitary, and thus 
\begin{align*}
	U(t) \, U(t)^* &= \Bigl ( \Fourier^{-1} \, \e^{- \ii t \, \frac{1}{2} \hat{\xi}^2} \Fourier \Bigr ) \, \Bigl ( \Fourier^{-1} \, \e^{- \ii t \, \frac{1}{2} \hat{\xi}^2} \Fourier \Bigr )^* 
	\\
	&= \Fourier^* \, \e^{- \ii t \, \frac{1}{2} \hat{\xi}^2} \Fourier \, \Fourier^* \, \e^{+ \ii t \, \frac{1}{2} \hat{\xi}^2} \, \Fourier 
	= \id_{L^2(\R^n)} 
	. 
\end{align*}
Similarly, one deduces $U(t)^* \, U(t) = \id_{L^2(\R^n)}$. One may be tempted to follow the computation leading up to \eqref{Fourier:R:eqn:fundamental_solution_heat_equation}, replace $t$ by $- \ii t$ and write the solution 
\begin{align}
	\psi(t) = p(t) \ast \psi_0 
	\label{Fourier:R:eqn:solution_free_Schroedinger_convolution}
\end{align}
as a convolution of the initial condition $\psi_0$ with the function 
\begin{align*}
	p(t,x) := \frac{\e^{+ \ii \frac{x^2}{2 t}}}{(2 \pi \ii t)^{\nicefrac{n}{2}}}  
	. 
\end{align*}
From a technical point of view, the derivation of \eqref{Fourier:R:eqn:solution_free_Schroedinger_convolution} is more delicate and will be postponed to Chapter~\ref{S_and_Sprime}.

\paragraph{The uncertainty principle} % (fold)
One of the fundamentals of quantum mechanics is \emph{Heisenberg's uncertainty principle}, namely that one cannot arbitrarily localize wave functions in position \emph{and} momentum space simultaneously. This is a particular case of a much more general fact about non-commuting (quantum) observables: 
\begin{theorem}[Heisenberg's uncertainty principle]
	Let $A , B : \Hil \longrightarrow \Hil$ be two bounded selfadjoint operators on the Hilbert space $\Hil$. We define the expectation value 
	\begin{align*}
		\mathbb{E}_{\psi}(A) := \bscpro{\psi}{A \psi} 
	\end{align*}
	with respect to $\psi \in \Hil$ with $\norm{\psi} = 1$ and the variance 
	\begin{align*}
		\sigma_{\psi}(A)^2 := \mathbb{E}_{\psi} \Bigl ( \bigl (A - \mathbb{E}_{\psi}(A) \bigr )^2 \Bigr ) 
	\end{align*}
	Then Heisenberg's uncertainty relation holds: 
	\begin{align}
		\tfrac{1}{2} \babs{\mathbb{E}_{\psi} \bigl ( \ii [A,B] \bigr )} \leq \sigma_{\psi}(A) \, \sigma_{\psi}(B) 
		\label{Fourier:R:eqn:uncertainty_principle}
	\end{align}
\end{theorem}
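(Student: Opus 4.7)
The plan is to reduce to the case of centered observables and then read off the inequality from Cauchy--Schwarz applied to $A_0\psi$ and $B_0\psi$, where $A_0 := A - \mathbb{E}_\psi(A)\,\id$ and $B_0 := B - \mathbb{E}_\psi(B)\,\id$. First I would observe that because $A$ is selfadjoint and $\mathbb{E}_\psi(A)\in\R$, the shifted operator $A_0$ is again bounded and selfadjoint, and similarly for $B_0$. A direct computation then gives two useful identities: $[A_0, B_0] = [A,B]$ (the scalar shifts cancel out of the commutator), and
\begin{align*}
	\sigma_\psi(A)^2 &= \bscpro{\psi}{A_0^2 \psi} = \bscpro{A_0 \psi}{A_0 \psi} = \bnorm{A_0 \psi}^2,
\end{align*}
with the analogous formula for $B$. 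So the selfadjointness turns variances into squared norms.

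Next I would rewrite the left-hand side of \eqref{Fourier:R:eqn:uncertainty_principle} in terms of an inner product of $A_0\psi$ and $B_0\psi$. Using selfadjointness once more,
\begin{align*}
	\mathbb{E}_\psi \bigl ( \ii [A,B] \bigr ) &= \mathbb{E}_\psi \bigl ( \ii [A_0,B_0] \bigr )
	= \ii \bscpro{A_0 \psi}{B_0 \psi} - \ii \bscpro{B_0 \psi}{A_0 \psi} \\
	&= -2 \, \Im \bscpro{A_0 \psi}{B_0 \psi}.
\end{align*}
Taking absolute values and using $\abs{\Im z} \leq \abs{z}$,
\begin{align*}
	\tfrac{1}{2} \babs{\mathbb{E}_\psi \bigl ( \ii [A,B] \bigr )} \leq \babs{\bscpro{A_0 \psi}{B_0 \psi}}.
\end{align*}

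Finally I would apply the Cauchy--Schwarz inequality to the right-hand side to obtain
\begin{align*}
	\babs{\bscpro{A_0 \psi}{B_0 \psi}} \leq \bnorm{A_0 \psi} \, \bnorm{B_0 \psi} = \sigma_\psi(A) \, \sigma_\psi(B),
\end{align*}
which combined with the previous display yields \eqref{Fourier:R:eqn:uncertainty_principle}. There is no real obstacle here: boundedness of $A,B$ ensures $A_0\psi, B_0\psi \in \Hil$ so that every inner product above is well-defined, and the only place the hypotheses are used in a nontrivial way is in turning $\sscpro{\psi}{A_0^2\psi}$ into $\snorm{A_0\psi}^2$ via $A_0^* = A_0$. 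The subtle step to be careful about is the sign computation that produces the factor $-2\Im$; getting that right is what makes the bound tight and gives exactly the factor $\tfrac{1}{2}$ on the left.
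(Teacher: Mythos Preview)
Your proof is correct and follows essentially the same route as the paper's: center the observables, use $[A_0,B_0]=[A,B]$, rewrite the variances as $\snorm{A_0\psi}^2$ and $\snorm{B_0\psi}^2$ via selfadjointness, and finish with Cauchy--Schwarz. The only cosmetic difference is that you extract the factor of $2$ by identifying $\mathbb{E}_\psi(\ii[A,B]) = -2\,\Im\sscpro{A_0\psi}{B_0\psi}$ and using $\abs{\Im z}\leq\abs{z}$, whereas the paper splits $\babs{\sscpro{A_0\psi}{B_0\psi}-\sscpro{B_0\psi}{A_0\psi}}$ by the triangle inequality and applies Cauchy--Schwarz to each term.
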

\begin{proof}
	Let $\psi \in \Hil$ be an arbitrary normalized vector. Due to the selfadjointness of $A$ and $B$, the expectation values are real, 
	\begin{align*}
		\mathbb{E}_{\psi}(A) &= \bscpro{\psi}{A \psi} 
		= \bscpro{A^* \psi}{\psi} 
		= \bscpro{A \psi}{\psi} 
		\\
		&= \overline{\bscpro{\psi}{A \psi}} 
		= \overline{\mathbb{E}_{\psi}(A)} 
		. 
	\end{align*}
	In general $A$ and $B$ will not have mean $0$, but 
	\begin{align*}
		\tilde{A} := A - \mathbb{E}_{\psi}(A) 
	\end{align*}
	and $\tilde{B} := B - \mathbb{E}_{\psi}(B)$ do. Hence, we can express the variance of $A$ as an expectation value: 
	\begin{align*}
		\sigma_{\psi}(A)^2 &= \mathbb{E}_{\psi} \Bigl ( \bigl ( A - \mathbb{E}_{\psi}(A) \bigr )^2 \Bigr ) 
		= \mathbb{E}_{\psi} \bigl ( \tilde{A}^2 \bigr )
	\end{align*}
	Moreover, the commutator of $\tilde{A}$ and $\tilde{B}$ coincides with that of $A$ and $B$, 
	\begin{align*}
		\bigl [ \tilde{A} , \tilde{B} \bigr ] &= 
		\bigl [ A , B \bigr ] - \bigl [ \mathbb{E}_{\psi}(A) , B \bigr ] - \bigl [ A , \mathbb{E}_{\psi}(B) \bigr ] + \bigl [ \mathbb{E}_{\psi}(A) , \mathbb{E}_{\psi}(B) \bigr ]
		% \bigl ( A - \mathbb{E}_{\psi}(A) \bigr ) \, \bigl ( B - \mathbb{E}_{\psi}(B) \bigr ) - \bigl ( B - \mathbb{E}_{\psi}(B) \bigr ) \, \bigl ( A - \mathbb{E}_{\psi}(A) \bigr ) 
		% \\
		% &= \Bigl ( A B - \mathbb{E}_{\psi}(A) \, B - \mathbb{E}_{\psi}(B) \, A + \mathbb{E}_{\psi}(A) \, \mathbb{E}_{\psi}(B) \Bigr ) 
		% + \\
		% &\qquad 
		% - \Bigl ( B A - \mathbb{E}_{\psi}(A) \, B - \mathbb{E}_{\psi}(B) \, A + \mathbb{E}_{\psi}(A) \, \mathbb{E}_{\psi}(B) \Bigr ) 
		\\
		&= [A , B]
		. 
	\end{align*}
	Then expressing the left-hand side of \eqref{Fourier:R:eqn:uncertainty_principle} in terms of the shifted observables $\tilde{A}$ and $\tilde{B}$, and using the Cauchy-Schwarz inequality as well as the selfadjointness yields Heisenberg's inequality, 
	\begin{align*}
		\babs{\mathbb{E}_{\psi} \bigl ( \ii [ A , B ] \bigr )} 
		&= \babs{\mathbb{E}_{\psi} \bigl ( [ \tilde{A} , \tilde{B} ] \bigr )} 
		= \Babs{\bscpro{\psi}{\tilde{A} \tilde{B} \psi} - \bscpro{\psi}{\tilde{B} \tilde{A} \psi}} 
		\\
		&
		\leq \babs{\bscpro{\tilde{A} \psi}{\tilde{B} \psi}} + \babs{\bscpro{\tilde{B} \psi}{\tilde{A} \psi}} 
		\leq 2 \, \bnorm{\tilde{A} \psi} \, \bnorm{\tilde{B} \psi} 
		\\
		&
		= 2 \, \sqrt{\bscpro{\tilde{A} \psi}{\tilde{A} \psi}} \, \sqrt{\bscpro{\tilde{B} \psi}{\tilde{B} \psi}} 
		= 2 \, \sqrt{\bscpro{\psi}{\tilde{A}^2 \psi}} \, \sqrt{\bscpro{\psi}{\tilde{B}^2 \psi}} 
		\\
		&
		= 2 \, \sigma_{\psi}(A) \, \sigma_{\psi}(B) 
		. 
	\end{align*}
\end{proof}
Often Heisenberg's inequality is just stated for the \emph{position observable} $x_j$ (multiplication by $x_j$) and the \emph{momentum observable} $- \ii \hbar \partial_{x_k}$: even though these are unbounded selfadjoint operators (\cf the discussion in Chapters~\ref{operators:unitary}--\ref{operators:selfadjoint_operators}), this introduces only technical complications. For instance, the above arguments hold verbatim if we require in addition $\psi \in \Cont^{\infty}_{\mathrm{c}}(\R^n) \subset L^2(\R^n)$, and vectors of this type lie dense in $L^2(\R^n)$. Then the left-hand side of Heisenberg's inequality reduces to $\nicefrac{\hbar}{2}$ because 
\begin{align*}
	\bigl [ x_j , (- \ii \hbar \partial_{x_k}) \bigr ] \psi &= x_j \, (- \ii \hbar \partial_{x_k} \psi) - (- \ii \hbar) \partial_{x_k} \bigl ( x_j \, \psi \bigr )
	\\
	&= \ii \hbar \, \delta_{kj} \, \psi 
\end{align*}
and $\psi$ is assumed to have norm $1$, 
\begin{align}
	\sigma_{\psi}(x_j) \, \sigma_{\psi} \bigl ( - \ii \hbar \partial_{x_k} \bigr ) \geq \tfrac{\hbar}{2} 
	. 
	\label{Fourier:R:eqn:uncertainty_principle_x_p}
\end{align}
Skipping over some of the details (there are technical difficulties defining the commutator of two unbounded operators), we see that one cannot do better than $\nicefrac{\hbar}{2}$ but there are cases when the right-hand side of \eqref{Fourier:R:eqn:uncertainty_principle_x_p} is not even finite. 

The physical interpretation of \eqref{Fourier:R:eqn:uncertainty_principle} is that one cannot measure non-commuting observables simultaneously with arbitrary precision. In his original book on quantum mechanics \cite{Heisenberg:prinzipien_quantentheorie:1930}, Heisenberg spends a lot of care to explain why in specific experiments position and momentum along the same direction cannot be measured simultaneously with arbitrary precision, \ie why increasing the resolution of the position measurement increases the error of the momentum measurement and vice versa. 
% paragraph the_uncertainty_principle (end)
% subsection the_solution_of_the_free_schrödinger_equation_on_r_ (end)
% section the_fourier_transform_on_r_n_ (end)
% chapter The Fourier transform (end)
\chapter{Schwartz functions and tempered distributions} % (fold)
\label{S_and_Sprime}
Often one wants to find more general solutions to PDEs, \eg one may ask whether the heat equation makes sense in case the initial condition is an element of $L^{\infty}(\R^n)$? A very fruitful ansatz which we will explore in Chapter~\ref{Greens_functions} is to ask whether “weak solutions” to a PDE exist. Weak means that the solution may be a so-called \emph{distribution} which is a linear functional from a space of \emph{test functions}. 

\emph{Schwartz functions} $\Schwartz(\R^d)$ are such a space of test functions, \ie a space of “very nicely behaved functions”. The dual of this space of test functions, the tempered distributions, allow us to extend common operations such as Fourier transforms and derivatives to objects which may not even be functions.

\section{Schwartz functions} % (fold)
\label{S_and_Sprime:schwartz_functions}
The motivation to define Schwartz functions on $\R^d$ comes from dealing with Fourier transforms: our class of test functions $\Schwartz(\R^d)$ has three defining properties: 
\begin{enumerate}[(i)]
	\item $\Schwartz(\R^d)$ forms a vector space. 
	\item \emph{Stability under derivation}, $\partial_x^{\alpha} \Schwartz(\R^d) \subset \Schwartz(\R^d)$: for all multiindices $\alpha \in \N_0^d$ and $f \in \Schwartz(\R^d)$, we have $\partial_x^{\alpha} f \in \Schwartz(\R^d)$. 
	\item \emph{Stability under Fourier transform}, $\Fourier \Schwartz(\R^d) \subseteq \Schwartz(\R^d)$: for all $f \in \Schwartz(\R^d)$, the Fourier transform 
	\begin{align}
		\Fourier^{\pm 1} f : \xi \mapsto \frac{1}{(2\pi)^{\nicefrac{d}{2}}} \int_{\R^d} \dd x \, \e^{\mp \ii x \cdot \xi} \, f(x) \in \Schwartz(\R^d)
		\label{S_and_Sprime:schwartz_functions:eqn:Fourier_transform}
	\end{align}
	is also a test function. 
\end{enumerate}
These relatively simple requirements have surprisingly rich implications: 
\begin{enumerate}[(i)]
	\item $\Schwartz(\R^d) \subset L^1(\R^d)$, \ie any $f \in \Schwartz(\R^d)$ and all of its derivatives are integrable. 
	\item $\Fourier : \Schwartz(\R^d) \longrightarrow \Schwartz(\R^d)$ acts bijectively: if $f \in \Schwartz(\R^d) \subset L^1(\R^d)$, then $\Fourier f \in \Schwartz(\R^d) \subset L^1(\R^d)$. 
	\item For all $\alpha \in \N_0^d$, we have $\Fourier \bigl ( (i \partial_x)^{\alpha} f \bigr ) = x^{\alpha} \, \Fourier f \in \Schwartz(\R^d)$. This holds as all derivatives are integrable. 
	\item Hence, for all $a , \alpha \in \N_0^d$, we have $x^a \partial_x^{\alpha} f \in \Schwartz(\R^d)$. 
	\item Translations of Schwartz functions are again Schwartz functions, $f( \cdot - x_0) \in \Schwartz(\R^d)$; this follows from $\Fourier f(\cdot - x_0) = \e^{- \ii \xi \cdot x_0} \, \Fourier f \in \Schwartz(\R^d)$ for all $x_0 \in \R^d$. 
\end{enumerate}
This leads to the following definition: 
\begin{definition}[Schwartz functions]
	The space of Schwartz functions 
	\begin{align*}
		\Schwartz(\R^d) := \Bigl \{ f \in \Cont^{\infty}(\R^d) \; \big \vert \; \forall a , \alpha \in \N_0^d : \norm{f}_{a \alpha} < \infty \Bigr \} 
	\end{align*}
	is defined in terms of the family of seminorms\footnote{A seminorm has all properties of a norm except that $\norm{f} = 0$ does not necessarily imply $f = 0$. } indexed by $a , \alpha \in \N_0^d$ 
	\begin{align*}
		\norm{f}_{a \alpha} := \sup_{x \in \R^d} \babs{x^a \partial_x^{\alpha} f(x)} 
		, 
		&& f \in \Cont^{\infty}(\R^d) 
		. 
	\end{align*}
\end{definition}
The family of seminorms defines a so-called \emph{Fréchet topology:} put in simple terms, to make sure that sequences in $\Schwartz(\R^d)$ converge to rapidly decreasing smooth functions, we need to control all derivatives as well as the decay. This is also the reason why there is \emph{no norm on $\Schwartz(\R^d)$} which generates the same topology as the family of seminorms. However, $\norm{f}_{a \alpha} = 0$ for all $a , \alpha \in \N_0^d$ ensures $f = 0$, all seminorms put together can distinguish points. 
\begin{example}
	Two simple examples of Schwartz functions are 
	\begin{align*}
		f(x) = \e^{- a x^2} 
		, 
		&& 
		a > 0 
		, 
	\end{align*}
	and 
	\begin{align*}
		g(x) = \left \{
		\begin{matrix}
			\e^{- \frac{1}{1 - x^2} + 1} & \abs{x} < 1 \\
			0 & \abs{x} \geq 1 \\
		\end{matrix}
		\right . 
		. 
	\end{align*}
	The second one even has compact support. 
\end{example}
The first major fact we will establish is completeness. 
\begin{theorem}
	The space of Schwartz functions endowed with 
	\begin{align*}
		\mathrm{d}(f,g) := \sum_{n = 0}^{\infty} 2^{-n} \sup_{\abs{a} + \abs{\alpha} = n} \frac{\norm{f - g}_{a \alpha}}{1 + \norm{f - g}_{a \alpha}} 
	\end{align*}
	is a complete metric space. 
\end{theorem}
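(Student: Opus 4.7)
The plan is to proceed in three stages: verify $\mathrm{d}$ is a metric, translate $\mathrm{d}$-Cauchy into seminorm-Cauchy, and then build the limit function together with its derivatives and verify Schwartz decay.

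First I would check that $\mathrm{d}$ is well-defined and satisfies the metric axioms. Finiteness is immediate since each summand is bounded by $2^{-n}$, so $\mathrm{d}(f,g) \leq \sum_{n \geq 0} 2^{-n} = 2$. Symmetry is obvious. For definiteness, $\mathrm{d}(f,g) = 0$ forces $\norm{f-g}_{a\alpha} = 0$ for all $a,\alpha$, and in particular $\norm{f-g}_{00} = \sup_x \sabs{f(x)-g(x)} = 0$, giving $f=g$. The triangle inequality is the one subtle point: I would use that $\phi(t) := t/(1+t)$ is monotone increasing and subadditive on $[0,\infty)$ (the latter because $\phi(s+t) = \frac{s+t}{1+s+t} \leq \frac{s}{1+s} + \frac{t}{1+t}$ by a direct cross-multiplication). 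Combined with the triangle inequality for each seminorm $\norm{\cdot}_{a\alpha}$, this yields termwise the inequality $\phi(\norm{f-h}_{a\alpha}) \leq \phi(\norm{f-g}_{a\alpha}) + \phi(\norm{g-h}_{a\alpha})$, which then survives taking suprema over $\abs{a}+\abs{\alpha}=n$ and summing against $2^{-n}$.

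Next I would show that a sequence $(f_k)$ is $\mathrm{d}$-Cauchy if and only if it is Cauchy with respect to every seminorm $\norm{\cdot}_{a\alpha}$ individually. One direction uses that each term in the sum defining $\mathrm{d}$ is dominated by the full sum, so $\phi(\norm{f_k - f_j}_{a\alpha}) \leq 2^n \, \mathrm{d}(f_k,f_j)$ for $n = \abs{a}+\abs{\alpha}$; since $\phi^{-1}$ is continuous near $0$, the $\norm{\cdot}_{a\alpha}$-Cauchy property follows. The other direction uses the dominated-convergence-style argument on the series: given $\varepsilon > 0$, truncate the tail $\sum_{n > N} 2^{-n} < \varepsilon/2$, then pick $j,k$ large so that each of the finitely many seminorms with $\abs{a}+\abs{\alpha} \leq N$ contributes less than $\varepsilon/(2(N+1))$.

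For completeness proper, let $(f_k)$ be $\mathrm{d}$-Cauchy, hence Cauchy in every seminorm. In particular, for each $\alpha$, the sequence $(\partial_x^{\alpha} f_k)$ is Cauchy in $\norm{\cdot}_{0\alpha} = \snorm{\cdot}_{\infty}$, so it converges uniformly on $\R^d$ to some bounded continuous function $g_\alpha$. A standard argument (uniform convergence of the derivative together with pointwise convergence of the function lets one commute $\partial$ and limit) shows inductively that $g_\alpha = \partial_x^{\alpha} f$ where $f := g_0$, so $f \in \Cont^\infty(\R^d)$ and $\partial_x^{\alpha} f_k \to \partial_x^{\alpha} f$ uniformly for every $\alpha$. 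The main obstacle is the last step, namely showing $f \in \Schwartz(\R^d)$ and $\mathrm{d}(f_k,f) \to 0$, since a priori the weight $x^a$ is unbounded. I would handle this by fixing $a,\alpha$ and using that $(f_k)$ is $\norm{\cdot}_{a\alpha}$-Cauchy, hence bounded: for any $\varepsilon>0$ choose $K$ so that $\norm{f_k - f_j}_{a\alpha} < \varepsilon$ for $j,k \geq K$; at each fixed $x \in \R^d$ the pointwise limit gives $\abs{x^a \partial_x^{\alpha}(f_k - f)(x)} \leq \varepsilon$, and taking the supremum over $x$ yields $\norm{f_k - f}_{a\alpha} \leq \varepsilon$. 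This simultaneously bounds $\norm{f}_{a\alpha} \leq \norm{f_K}_{a\alpha} + \varepsilon < \infty$, so $f \in \Schwartz(\R^d)$, and gives seminorm convergence. Finally, seminorm convergence implies $\mathrm{d}$-convergence by the truncate-the-tail argument used above, completing the proof.
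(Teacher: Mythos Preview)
Your proposal is correct and follows essentially the same approach as the paper: verify the metric axioms via subadditivity of $t\mapsto t/(1+t)$, reduce $\mathrm{d}$-Cauchy to seminorm-Cauchy, use completeness of $\BCont(\R^d)$ in sup norm to produce the candidate limit, and identify derivatives via the standard uniform-limit-commutes-with-derivative argument. The only minor organizational difference is that the paper works directly with the weighted functions $x^a\partial_x^\alpha f_n$ (which are themselves sup-norm Cauchy by hypothesis) and identifies their limits as $x^a\partial_x^\alpha g_{00}$, whereas you first handle the unweighted derivatives and then recover the weighted seminorm convergence by a separate pointwise-limit-then-sup argument; both routes are equally valid.
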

\begin{proof}
	$\mathrm{d}$ is positive and symmetric. It also satisfies the triangle inequality as $x \mapsto \frac{x}{1 + x}$ is concave on $\R^+_0$ and all of the seminorms satisfy the triangle inequality. Hence, $\bigl ( \Schwartz(\R^d) , \mathrm{d} \bigr )$ is a metric space. \marginpar{2014.01.09}
	
	To show completeness, take a Cauchy sequence $(f_n)$ with respect to $\mathrm{d}$. By definition and positivity, this means $(f_n)$ is also a Cauchy sequence with respect to all of the seminorms $\norm{\cdot}_{a \alpha}$. Each of the $\bigl ( x^a \partial_x^{\alpha} f_n \bigr )$ converge to some $g_{a \alpha}$ as the space of bounded continuous functions $\BCont(\R^d)$ with $\sup$ norm is complete. It remains to show that $g_{a \alpha} = x^a \partial_x^{\alpha} g_{00}$. Clearly, only taking derivatives is problematic: we will prove this for $\abs{\alpha} = 1$, the general result follows from a simple induction. Assume we are interested in the sequence $(\partial_{x_k} f_n)$, $k \in \{ 1 , \ldots , d \}$. With $\alpha_k := (0 , \ldots , 0 , 1 , 0, \ldots)$ as the multiindex that has a $1$ in the $k$th entry and $e_k := (0 , \ldots , 0 , 1 , 0, \ldots) \in \R^d$ as the $k$th canonical base vector, we know that 
	\begin{align*}
		f_n(x) = f_n(x - x_k e_k) + \int_0^{x_k} \dd s \, \partial_{x_k} f_n \bigl ( x + (s - x_k) e_k \bigr ) 
	\end{align*}
	as well as 
	\begin{align*}
		g_{00}(x) = g_{00}(x - x_k e_k) + \int_0^{x_k} \dd s \, \partial_{x_k} g_{00} \bigl ( x + (s - x_k) e_k \bigr ) 
	\end{align*}
	hold since $f_n \to g_{00}$ and $\partial_{x_k} f_n \to g_{0\alpha_k}$ uniformly. Hence, $g_{00} \in \Cont^1(\R^d)$ and the derivative of $g_{00}$ coincides with $g_{0\alpha_k}$, $\partial_{x_k} g_{00} = g_{0 \alpha_k}$. We then proceed by induction to show $g_{00} \in \Cont^{\infty}(\R^d)$. This means $\mathrm{d}(f_n,g_{00}) \to 0$ as $n \to \infty$ and $\Schwartz(\R^d)$ is complete. 
\end{proof}
The $L^p$ norm of each element in $\Schwartz(\R^d)$ can be dominated by two seminorms: 
\begin{lemma}\label{S_and_Sprime:schwartz_functions:lem:Lp_estimate}
	Let $f \in \Schwartz(\R^d)$. Then for each $1 \leq p < \infty$, the $L^p$ norm of $f$ can be dominated by a finite number of seminorms, 
	\begin{align*}
		\norm{f}_{L^p(\R^d)} \leq C_1(d) \norm{f}_{00} + C_2(d) \max_{\abs{a} = 2n(d)} \norm{f}_{a 0} 
		, 
	\end{align*}
	where $C_1(d) , C_2(d) \in \R^+$ and $n(d) \in \N_0$ only depend on the dimension of $\R^d$. Hence, $f \in L^p(\R^d)$. 
\end{lemma}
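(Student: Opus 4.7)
The plan is to split the integral over $\R^d$ into a bounded and unbounded piece, estimate each by a single seminorm, and then recombine using a subadditivity property of $t\mapsto t^{1/p}$.

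First I would fix $n = n(d) \in \N_0$ with $2n > d$, e.g.\ $n := \lfloor d/2 \rfloor + 1$. This choice is dictated by the hardest case $p=1$: we need $2np > d$ to ensure $\int_{\abs{x}>1} \abs{x}^{-2np}\,\dd x$ converges, and $2n > d$ secures this for every $p \geq 1$ simultaneously (and in fact gives a bound on that integral which is uniform in $p \geq 1$). Then I would split
\begin{align*}
	\norm{f}_{L^p(\R^d)}^p = \int_{\abs{x}\leq 1} \babs{f(x)}^p \, \dd x + \int_{\abs{x}>1} \babs{f(x)}^p \, \dd x =: I_1 + I_2.
\end{align*}

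For $I_1$, the estimate is immediate: $\abs{f(x)} \leq \norm{f}_{00}$ pointwise, and the unit ball has finite volume $\omega_d$ depending only on $d$, so $I_1 \leq \omega_d \, \norm{f}_{00}^p$. For $I_2$, the key algebraic identity is the multinomial expansion
\begin{align*}
	\abs{x}^{2n} = \bigl(x_1^2 + \cdots + x_d^2\bigr)^n = \sum_{\abs{a}=n} \tbinom{n}{a}\, x^{2a},
\end{align*}
which exhibits $\abs{x}^{2n}$ as a finite sum (with coefficients depending only on $d$ and $n$) of monomials $x^b$ with $\abs{b} = 2n$. Hence there exists $K_d > 0$ with $\abs{x}^{2n}\abs{f(x)} \leq K_d \max_{\abs{a}=2n} \norm{f}_{a0}$, and on $\{\abs{x}>1\}$ we get $\abs{f(x)} \leq K_d \bigl(\max_{\abs{a}=2n}\norm{f}_{a0}\bigr)\,\abs{x}^{-2n}$, so that
\begin{align*}
	I_2 \leq K_d^{\,p}\, \Bigl( \max_{\abs{a}=2n}\norm{f}_{a0} \Bigr)^{p} \int_{\abs{x}>1} \abs{x}^{-2np}\,\dd x \leq \tilde C_d^{\,p}\, \Bigl(\max_{\abs{a}=2n}\norm{f}_{a0}\Bigr)^p,
\end{align*}
with $\tilde C_d$ independent of $p\geq 1$ (bounding the radial integral by its value at $p=1$).

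Finally I would take $p$th roots and use that $t \mapsto t^{1/p}$ is concave on $[0,\infty)$ and vanishes at $0$, hence subadditive, so $(I_1+I_2)^{1/p} \leq I_1^{1/p} + I_2^{1/p}$. This gives exactly the inequality with $C_1(d) := \omega_d^{\,1/p}$ and $C_2(d) := \tilde C_d$, both bounded uniformly in $p\geq 1$ and depending only on $d$. Integrability $f \in L^p(\R^d)$ then follows since both seminorms on the right are finite for $f \in \Schwartz(\R^d)$. The only mildly delicate point is the uniformity in $p$: the choice $2n > d$ rather than the weaker $2np > d$ is what buys this, and is why $n(d)$ can be chosen independently of $p$.
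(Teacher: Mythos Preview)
Your proof is correct and follows essentially the same route as the paper: split the integral at the unit sphere, bound the inner piece by $\norm{f}_{00}$ times the volume of the unit ball, bound the outer piece by inserting $\abs{x}^{2np}/\abs{x}^{2np}$ and using the seminorms with $\abs{a}=2n$, then take $p$th roots via subadditivity. You actually supply a couple of details the paper leaves implicit, namely the multinomial expansion justifying $\abs{x}^{2n}\abs{f(x)} \lesssim \max_{\abs{a}=2n}\norm{f}_{a0}$ and the observation that choosing $2n>d$ (rather than merely $2np>d$) makes $n(d)$ and the constants uniform in $p\geq 1$.
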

\begin{proof}
	We split the integral on $\R^d$ into an integral over the unit ball centered at the origin and its complement: let $B_n := \max_{\abs{a} = 2n} \norm{f}_{a0}$, then 
	\begin{align*}
		\norm{f}_{L^p(\R^d)} &= \biggl ( \int_{\R^d} \dd x \, \abs{f(x)}^p \biggr )^{\nicefrac{1}{p}} 
		\leq \biggl ( \int_{\abs{x} \leq 1} \dd x \, \abs{f(x)}^p \biggr )^{\nicefrac{1}{p}} + \biggl ( \int_{\abs{x} > 1} \dd x \, \abs{f(x)}^p \biggr )^{\nicefrac{1}{p}} 
		\\
		&\leq \norm{f}_{00} \, \biggl ( \int_{\abs{x} \leq 1} \dd x \, 1 \biggr )^{\nicefrac{1}{p}} +  \biggl ( \int_{\abs{x} > 1} \dd x \, \abs{f(x)}^p \frac{\abs{x}^{2np}}{\abs{x}^{2np}} \biggr )^{\nicefrac{1}{p}} 
		\\
		&\leq \mathrm{Vol}(B_1(0))^{\nicefrac{1}{p}} \, \norm{f}_{00} + B_n \, \biggl ( \int_{\abs{x} > 1} \dd x \, \frac{1}{\abs{x}^{2np}} \biggr )^{\nicefrac{1}{p}} 
		. 
	\end{align*}
	If we choose $n$ large enough, $\abs{x}^{-2np}$	is integrable and can be computed explicitly, and we get 
	\begin{align*}
		\norm{f}_{L^p(\R^d)} \leq C_1(d) \, \norm{f}_{00} + C_2(d) \, \max_{\abs{a} = 2n} \norm{f}_{a0} 
		. 
	\end{align*}
	This concludes the proof. 
\end{proof}
\begin{lemma}\label{S_and_Sprime:schwartz_functions:lem:density_of_Cinfty_compact}
	The smooth functions with compact support $\Cont^{\infty}_c(\R^d)$ are dense in $\Schwartz(\R^d)$. 
\end{lemma}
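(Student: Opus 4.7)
My plan is to construct an explicit approximating sequence by cutting $f$ off away from the origin. Fix once and for all a function $\chi \in \Cont^{\infty}_c(\R^d)$ with $0 \leq \chi \leq 1$, $\chi(x) = 1$ for $\abs{x} \leq 1$, and $\chi(x) = 0$ for $\abs{x} \geq 2$ (such a $\chi$ exists, for instance a suitably rescaled version of the bump example displayed earlier). For each $n \in \N$ set $\chi_n(x) := \chi(x/n)$, and for a given $f \in \Schwartz(\R^d)$ define $f_n := \chi_n \, f$. Since $\chi_n$ has compact support and $f$ is smooth, each $f_n$ lies in $\Cont^{\infty}_c(\R^d)$. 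It then remains to show that $f_n \to f$ in the Fréchet topology of $\Schwartz(\R^d)$, which, since that topology is generated by the seminorms $\norm{\cdot}_{a \alpha}$, amounts to proving $\norm{f - f_n}_{a \alpha} \to 0$ for every pair of multi-indices $a , \alpha \in \N_0^d$.

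The Leibniz rule gives
\begin{align*}
	\partial_x^\alpha \bigl ( (1 - \chi_n) \, f \bigr ) = (1 - \chi_n) \, \partial_x^\alpha f - \sum_{0 < \beta \leq \alpha} \binom{\alpha}{\beta} \, \partial_x^\beta \chi_n \cdot \partial_x^{\alpha - \beta} f
	,
\end{align*}
and I would estimate the two kinds of terms separately. The contribution $(1 - \chi_n) \, \partial_x^\alpha f$ is supported in $\{ \abs{x} \geq n \}$. Since $f$ is Schwartz, for each fixed $a$ the expression $(1 + \abs{x}^2) \, x^a \, \partial_x^\alpha f(x)$ is bounded by a finite linear combination of seminorms $\norm{f}_{b \alpha}$ with $\abs{b} \leq \abs{a} + 2$, and hence $\babs{x^a \, \partial_x^\alpha f(x)} \leq C \, (1 + \abs{x}^2)^{-1}$ with $C$ depending only on $f$, $a$, $\alpha$. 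Restricting to $\abs{x} \geq n$ shows that this piece contributes $O(n^{-2})$ to $\norm{f - f_n}_{a \alpha}$, and in particular vanishes as $n \to \infty$.

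For the remaining terms with $\beta \neq 0$, the chain rule yields $\partial_x^\beta \chi_n(x) = n^{-\abs{\beta}} \, (\partial_x^\beta \chi)(x/n)$, which is supported in the shell $\{ n \leq \abs{x} \leq 2n \}$ and uniformly bounded by $n^{-\abs{\beta}} \, \snorm{\partial_x^\beta \chi}_{\infty}$. Hence
\begin{align*}
	\sup_{x \in \R^d} \babs{x^a \, \partial_x^\beta \chi_n(x) \, \partial_x^{\alpha - \beta} f(x)} \leq n^{-\abs{\beta}} \, \snorm{\partial_x^\beta \chi}_{\infty} \, \norm{f}_{a, \alpha - \beta}
	,
\end{align*}
which vanishes as $n \to \infty$ because $\abs{\beta} \geq 1$. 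Summing over $\beta$ together with the $\beta = 0$ estimate yields $\norm{f - f_n}_{a \alpha} \to 0$ for every $a, \alpha$, so that $f_n \to f$ in $\Schwartz(\R^d)$ and density is established.

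The main (and only) technical point is extracting genuine decay, rather than mere boundedness, from the first piece: one has to exploit that the Schwartz class is stable under multiplication by polynomials in $x$, so that a Schwartz seminorm with a higher power of $x$ controls the tail behavior on $\{ \abs{x} \geq n \}$. Everything else is routine multi-index Leibniz bookkeeping combined with the scale-invariance of the cutoff $\chi_n$.
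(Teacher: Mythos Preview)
Your proof is correct and follows essentially the same approach as the paper: multiply $f$ by a rescaled smooth cutoff and verify convergence in every seminorm via Leibniz. The paper's version is a bare sketch (it writes down $f_n := g(\cdot/n)\,f$ with the specific bump $g(x)=\e^{-1/(1-x^2)+1}$ and asserts convergence), whereas you have supplied the detailed seminorm estimates; your choice of a plateau cutoff with $\chi\equiv 1$ on $\abs{x}\leq 1$ also makes the $\beta=0$ term cleaner than with the paper's peaked bump.
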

\begin{proof}
	Take any $f \in \Schwartz(\R^d)$ and choose  
	\begin{align*}
		g(x) = \left \{
		\begin{matrix}
			\e^{- \frac{1}{1 - x^2} + 1} & \abs{x} \leq 1 \\
			0 & \abs{x} > 1 \\
		\end{matrix}
		\right . 
		. 
	\end{align*}
	Then $f_n := g(\nicefrac{\cdot}{n}) \, f$ converges to $f$ in $\Schwartz(\R^d)$, \ie 
	\begin{align*}
		\lim_{n \to \infty} \bnorm{f_n - f}_{a \alpha} = 0 
	\end{align*}
	holds for all $a , \alpha \in \N_0^d$. 
\end{proof}
Next, we will show that $\Fourier : \mathcal{S}(\R^d) \longrightarrow \mathcal{S}(\R^d)$ is a continuous and bijective map from $\mathcal{S}(\R^d)$ onto itself. 
\begin{theorem}\label{S_and_Sprime:schwartz_functions:thm:Fourier_is_bijection}
	The Fourier transform $\Fourier$ as defined by equation~\eqref{S_and_Sprime:schwartz_functions:eqn:Fourier_transform} maps $\Schwartz(\R^d)$ continuously and bijectively onto itself. The inverse $\Fourier^{-1}$ is continuous as well.  Furthermore, for all $f \in \Schwartz(\R^d)$ and $a , \alpha \in \N_0^d$, we have 
	\begin{align}
		\Fourier \bigl ( x^a (- \ii \partial_x)^{\alpha} f \bigr ) = (+ \ii \partial_{\xi})^a \xi^{\alpha} \Fourier f 
		. 
		\label{S_and_Sprime:schwartz_functions:eqn:Fourier_is_bijection}
	\end{align}
\end{theorem}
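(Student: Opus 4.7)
The plan is to first establish the intertwining identity \eqref{S_and_Sprime:schwartz_functions:eqn:Fourier_is_bijection}, then use it to show $\Fourier$ sends $\Schwartz(\R^d)$ into itself continuously, and finally invoke the Fourier inversion result together with the symmetry between $\Fourier$ and $\Fourier^{-1}$ to obtain bijectivity. The main obstacle will be the inversion step: everything before it is essentially bookkeeping with integration by parts and the $L^p$ estimate from Lemma~\ref{S_and_Sprime:schwartz_functions:lem:Lp_estimate}, but turning the formal identity $\Fourier^{-1}\Fourier = \id$ into a rigorous statement on $\Schwartz(\R^d)$ requires more care.

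First I would prove \eqref{S_and_Sprime:schwartz_functions:eqn:Fourier_is_bijection}. For any $f \in \Schwartz(\R^d)$ and any $b,\beta \in \N_0^d$, the function $x^b \partial_x^\beta f$ is again rapidly decreasing and smooth, so in particular lies in $L^1(\R^d)$ by Lemma~\ref{S_and_Sprime:schwartz_functions:lem:Lp_estimate}. Hence I may differentiate under the integral sign in the definition of $\Fourier f$, which yields $(i\partial_\xi)^a (\Fourier f)(\xi) = \Fourier(x^a f)(\xi)$, and I may integrate by parts (boundary terms vanish by Schwartz decay) to obtain $\xi^\alpha (\Fourier f)(\xi) = \Fourier((-i\partial_x)^\alpha f)(\xi)$. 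Combining these two facts, which both hold for arbitrary multiindices, gives \eqref{S_and_Sprime:schwartz_functions:eqn:Fourier_is_bijection} after iterating: write $(i\partial_\xi)^a \xi^\alpha \Fourier f = \Fourier\bigl(x^a(-i\partial_x)^\alpha f\bigr)$.

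Next I would use \eqref{S_and_Sprime:schwartz_functions:eqn:Fourier_is_bijection} to bound the seminorms of $\Fourier f$ by those of $f$. For fixed $a,\alpha \in \N_0^d$, the identity together with the trivial estimate $\snorm{\Fourier g}_{L^\infty} \leq (2\pi)^{-d/2} \snorm{g}_{L^1}$ gives
\begin{align*}
\norm{\Fourier f}_{\alpha a} = \sup_{\xi} \babs{\xi^\alpha \partial_\xi^a \Fourier f(\xi)} \leq (2\pi)^{-d/2} \bnorm{x^a (-i\partial_x)^\alpha f}_{L^1(\R^d)}.
\end{align*}
Applying Lemma~\ref{S_and_Sprime:schwartz_functions:lem:Lp_estimate} to the Schwartz function $x^a(-i\partial_x)^\alpha f$ bounds this $L^1$ norm by a finite combination of seminorms $\norm{f}_{b\beta}$ with $\abs{\beta}\leq\abs{\alpha}$ and $\abs{b}\leq\abs{a}+2n(d)$. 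This simultaneously shows $\Fourier f\in\Schwartz(\R^d)$ and, via the metric $\mathrm{d}$, that $\Fourier$ is continuous; by induction on the order of the seminorms, each individual seminorm of $\Fourier f$ is controlled by finitely many seminorms of $f$, which is exactly what continuity in the Fréchet topology demands. The same estimate applied with $\Fourier^{-1}$ in place of $\Fourier$ is immediate since the two only differ by a sign in the exponent.

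Finally, for bijectivity, I would define $\Fourier^{-1}$ by the formula \eqref{S_and_Sprime:schwartz_functions:eqn:Fourier_transform} with the opposite sign. By the same argument as above, $\Fourier^{-1}$ also maps $\Schwartz(\R^d)$ continuously into itself. To see $\Fourier^{-1}\Fourier = \id = \Fourier\Fourier^{-1}$ on $\Schwartz(\R^d)$, note that any $f \in \Schwartz(\R^d)$ lies in $L^1(\R^d)$ and its Fourier transform $\Fourier f$ also lies in $\Schwartz(\R^d) \subset L^1(\R^d)$, so the Fourier inversion proposition from Section~\ref{Fourier:R:L1:fundamental_properties} applies directly and gives $\Fourier^{-1}\Fourier f = f$ pointwise (hence everywhere, by continuity). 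The same reasoning with the roles of $\Fourier$ and $\Fourier^{-1}$ interchanged yields $\Fourier\Fourier^{-1} = \id$. If one wanted a self-contained argument avoiding that postponed inversion proposition, the cleanest route is the Gaussian regularization used in the proof of Plancherel's theorem~\ref{Fourier:R:thm:Parseval_Plancherel}: insert a factor $e^{-\eps\xi^2/2}$ in the candidate expression for $\Fourier^{-1}\Fourier f(x)$, apply Fubini and Lemma~\ref{Fourier:R:lem:Fourier_Gaussian} to rewrite it as the convolution of $f$ with a Gaussian approximate identity, and let $\eps\to 0$ using dominated convergence, which is legitimate precisely because $f \in \Schwartz(\R^d)$ provides all needed pointwise and uniform bounds.
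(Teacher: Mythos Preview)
Your argument for the intertwining identity and for continuity is essentially identical to the paper's. The divergence is in the inversion step.

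Your route (a), quoting the $L^1$ Fourier inversion proposition from Section~\ref{Fourier:R:L1}, is logically fine but a bit delicate in this text: that proposition's proof was explicitly postponed and is not supplied before the present theorem, so invoking it here risks circularity. You seem to anticipate this, and your route (b) via Gaussian regularization is a clean, self-contained fix --- it is exactly the mechanism behind the paper's proof of Plancherel's theorem, and it works here because both $f$ and $\Fourier f$ are Schwartz, so the dominated convergence needed to pass to the limit $\eps \to 0$ is immediate.

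The paper instead proves $\Fourier^{-1}\Fourier f = f$ by a rather different device: it reduces to $f \in \Cont^\infty_{\mathrm{c}}(\R^d)$ via Lemma~\ref{S_and_Sprime:schwartz_functions:lem:density_of_Cinfty_compact}, encloses the support of $f$ in a cube $W_n = [-n,+n]^d$, expands $f$ as a uniformly convergent Fourier series on $W_n$, and recognizes the resulting series as a Riemann sum for the inverse Fourier integral $(2\pi)^{-d/2}\int \e^{\ii x\cdot\xi}(\Fourier f)(\xi)\,\dd\xi$. Letting $n\to\infty$ then yields the claim. This Fourier-series-to-Riemann-sum argument is entirely self-contained (it uses only the discrete Fourier theory already developed in Section~\ref{Fourier:T}) and avoids both the postponed inversion proposition and any approximate-identity machinery. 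Your Gaussian argument is the more standard textbook route and arguably cleaner; the paper's approach has the virtue of tying the continuous Fourier inversion directly back to the discrete case.
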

\begin{proof}
	We need to prove $\Fourier \bigl ( x^a (- \ii \partial_x)^{\alpha} f \bigr ) = (+ \ii \partial_{\xi})^a \xi^{\alpha} \Fourier f$ first: since $x^{\alpha} \partial_x^a f$ is integrable, its Fourier transform exists and is continuous by Dominated Convergence. For any $a , \alpha \in \N_0^d$, we have 
	\begin{align*}
		\Bigl ( \Fourier \bigl ( x^a (- \ii \partial_x)^{\alpha} f \bigr ) \Bigr )(\xi) &= 
		\frac{1}{(2\pi)^{\nicefrac{d}{2}}} \int_{\R^d} \dd x \, \e^{- \ii x \cdot \xi} \, x^a \, (- \ii \partial_x)^{\alpha} f(x) 
		\\
		&= \frac{1}{(2\pi)^{\nicefrac{d}{2}}} \int_{\R^d} \dd x \, (+ \ii \partial_{\xi})^a \e^{- \ii x \cdot \xi} \, (- \ii \partial_x)^{\alpha} f(x) 
		\\
		&
		\overset{\ast}{=} \frac{1}{(2\pi)^{\nicefrac{d}{2}}} (+ \ii \partial_{\xi})^a \int_{\R^d} \dd x \, \e^{- \ii x \cdot \xi} \, (- \ii \partial_x)^{\alpha} f(x) 
		. 
	\end{align*}
	In the step marked with $\ast$, we have used Dominated Convergence to interchange integration and differentiation. Now we integrate partially $\abs{\alpha}$ times and use that the boundary terms vanish, 
	\begin{align*}
		\Bigl ( \Fourier \bigl ( x^a (- \ii \partial_x)^{\alpha} f \bigr ) \Bigr )(\xi) &= \frac{1}{(2\pi)^{\nicefrac{d}{2}}} (+ \ii \partial_{\xi})^a \int_{\R^d} \dd x \, (+ \ii \partial_x)^{\alpha} \e^{- \ii x \cdot \xi} \, f(x) 
		\\
		&
		= \frac{1}{(2\pi)^{\nicefrac{d}{2}}} (+ \ii \partial_{\xi})^a \int_{\R^d} \dd x \, \xi^{\alpha} \e^{- \ii x \cdot \xi} \, f(x) 
		\\
		&= \bigl ( (+ \ii \partial_{\xi})^a \xi^{\alpha} \Fourier f \bigr )(\xi) 
		. 
	\end{align*}
	To show $\Fourier$ is continuous, we need to estimate the seminorms of $\Fourier f$ by those of $f$: for any $a , \alpha \in \N_0^d$, it holds 
	\begin{align*}
		\bnorm{\Fourier f}_{a \alpha} &= \sup_{\xi \in \R^d} \babs{\bigl ( \xi^a \partial_{\xi}^{\alpha} \Fourier f \bigr )(\xi)} = \sup_{\xi \in \R^d} \Babs{\Bigl ( \Fourier \bigl ( \partial_x^a x^{\alpha} f \bigr ) \Bigr )(x)} 
		\\
		&\leq \frac{1}{(2\pi)^{\nicefrac{d}{2}}} \bnorm{\partial_x^a x^{\alpha} f}_{L^1(\R^d)} 
		. 
	\end{align*}
	In particular, this implies $\Fourier f \in \Schwartz(\R^d)$. Since $\partial_x^a x^{\alpha} f \in \Schwartz(\R^d)$, we can apply Lemma~\ref{S_and_Sprime:schwartz_functions:lem:Lp_estimate} and estaimte the right-hand side by a finite number of seminorms of $f$. Hence, $\Fourier$ is continuous: if $f_n$ is a Cauchy sequence in $\Schwartz(\R^d)$ that converges to $f$, then $\Fourier f_n$ has to converge to $\Fourier f \in \Schwartz(\R^d)$. \marginpar{2014.01.14}
	
	To show that $\Fourier$ is a bijection with continuous inverse, we note that it suffices to prove $\Fourier^{-1} \Fourier f = f$ for functions $f$ in a dense subset, namely $\Cont^{\infty}_c(\R^d)$ (see Lemma~\ref{S_and_Sprime:schwartz_functions:lem:density_of_Cinfty_compact}). Pick $f$ so that the support of is contained in a cube $W_n = [-n,+n]^d$ with sides of length $2n$. We can write $f$ on $W_n$ as a uniformly convergent Fourier series, 
	\begin{align*}
		f_n(x) = \sum_{\xi \in \frac{\pi}{n} \Z^d} \hat{f}_n(\xi) \, \e^{\ii x \cdot \xi} 
		, 
	\end{align*}
	with 
	\begin{align*}
		\hat{f}_n(\xi) &= \frac{1}{\mathrm{Vol}(W_n)} \int_{W_n} \dd x \, \e^{- \ii x \cdot \xi} \, f(x) 
		% \\
		% &
		= \frac{(2 \pi)^{\nicefrac{d}{2}}}{(2n)^d} \frac{1}{(2\pi)^{\nicefrac{d}{2}}} \int_{\R^d} \dd x \, \e^{- \ii x \cdot \xi} \, f(x) 
		. 
	\end{align*}
	The second equality holds if $n$ is large enough so that $\supp \, f$ fits into the cube $[-n,+n]^d$. Hence, $f_n$ can be expressed as 
	\begin{align*}
		f_n(x) = \sum_{\xi \in \frac{\pi}{n} \Z^d} \frac{1}{(2\pi)^{\nicefrac{d}{2}}} \frac{\pi^d}{n^d} \, (\Fourier f)(\xi) \, \e^{\ii x \cdot \xi} 
	\end{align*}
	which is a Riemann sum that converges to 
	\begin{align*}
		f(x) &= \frac{1}{(2\pi)^{\nicefrac{d}{2}}} \int_{\R^d} \dd x \, \e^{\ii x \cdot \xi} \, (\Fourier f)(\xi) 
		= \bigl ( \Fourier^{-1} \Fourier f \bigr )(x)
	\end{align*}
	as $\Fourier f \in \Schwartz$. This concludes the proof. 
\end{proof}
Hence, we have shown that $\Schwartz(\R^d)$ has the defining properties that suggested its motivation in the first place. The Schwartz functions also have other nice properties whose proofs are left as an exercise. 
\begin{proposition}
	The Schwartz functions have the following properties: 
	\begin{enumerate}[(i)]
		\item With pointwise multiplication $\cdot : \Schwartz(\R^d) \times \Schwartz(\R^d) \longrightarrow \Schwartz(\R^d)$, the space of $\Schwartz(\R^d)$ forms a Fréchet algebra (\ie the multiplication is continuous in both arguments). 
		\item For all $a , \alpha$, the map $f \mapsto x^a \partial_{\xi}^{\alpha} f$ is continuous on $\Schwartz(\R^d)$. 
		\item For any $x_0 \in \R^d$, the map $\tau_{x_0} : f \mapsto f(\cdot - x_0)$ continuous on $\Schwartz(\R^d)$. 
		\item For any $f \in \Schwartz(\R^d)$, $\frac{1}{h} \bigl ( \tau_{h e_k} f - f)$ converges to $\partial_{x_k} f$ as $h \to 0$ where $e_k$ is the $k$th canonical base vector of $\R^d$. 
		% \item $\Schwartz(\R^d)$ is separable. 
	\end{enumerate}
\end{proposition}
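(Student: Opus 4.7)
The plan is to reduce each of the four claims to a seminorm estimate, exploiting the fact that continuity on the Fréchet space $\Schwartz(\R^d)$ is equivalent to bounding each target seminorm by a finite sum of input seminorms. Properties (i)--(iii) will then be essentially book-keeping with the Leibniz and multinomial rules, while (iv) will require a quantitative Taylor expansion — which I expect to be the only real technical step.

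For (i), I would apply the Leibniz rule to obtain
\begin{align*}
x^a \partial_x^\alpha (f g) = \sum_{\beta \leq \alpha} \binom{\alpha}{\beta} \bigl( x^a \partial_x^\beta f \bigr) \bigl( \partial_x^{\alpha-\beta} g \bigr),
\end{align*}
and read off $\norm{f g}_{a\alpha} \leq \sum_{\beta \leq \alpha} \binom{\alpha}{\beta} \norm{f}_{a\beta} \norm{g}_{0,\alpha - \beta}$. Joint continuity of the bilinear map $(f,g) \mapsto f g$ then follows from the standard splitting $f_n g_n - f g = (f_n - f) g_n + f(g_n - g)$, since each convergent factor is uniformly bounded in every seminorm along a convergent sequence.

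For (ii), differentiating the product $x^a \cdot \partial_x^\alpha f$ via Leibniz produces only terms of the form
\begin{align*}
x^b \partial_x^\beta \bigl( x^a \partial_x^\alpha f \bigr) = \sum_{\gamma \leq \min(a,\beta)} \binom{\beta}{\gamma} \frac{a!}{(a-\gamma)!} \, x^{a + b - \gamma} \, \partial_x^{\alpha + \beta - \gamma} f,
\end{align*}
so $\norm{x^a \partial_x^\alpha f}_{b\beta}$ is dominated by a finite sum of seminorms of $f$. For (iii), the substitution $y = x - x_0$ together with the multinomial expansion of $(y + x_0)^a$ gives
\begin{align*}
\norm{\tau_{x_0} f}_{a\alpha} = \sup_{y \in \R^d} \babs{(y + x_0)^a \, \partial_x^\alpha f(y)} \leq \sum_{b \leq a} \binom{a}{b} \babs{x_0^{a-b}} \, \norm{f}_{b\alpha},
\end{align*}
so $\tau_{x_0}$ is continuous at the origin and, by linearity, everywhere.

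The hard part will be (iv), where I must show that the remainder $\Delta_h := \tfrac{1}{h}\bigl(\tau_{h e_k} f - f\bigr) - \partial_{x_k} f$ (up to the sign fixed by the convention $(\tau_y f)(x) = f(x - y)$ inherited from earlier in the text) tends to zero in \emph{every} seminorm. My plan is to invoke the fundamental theorem of calculus twice: first to write $f(x - h e_k) - f(x) = -\int_0^h \partial_{x_k} f(x - s e_k) \, \dd s$, and then to expand the integrand as $\partial_{x_k} f(x) - \int_0^s \partial_{x_k}^2 f(x - r e_k) \, \dd r$. This produces the representation
\begin{align*}
\Delta_h(x) = \pm \tfrac{1}{h} \int_0^h \int_0^s \partial_{x_k}^2 f(x - r e_k) \, \dd r \, \dd s.
\end{align*}
The one step requiring care is controlling $x^a$ after the shift: I would write $x^a = \bigl( (x - r e_k) + r e_k \bigr)^a$ and expand multinomially, using $\abs{r} \leq h \leq 1$ (say) to absorb the shift into a finite set of seminorms of $\partial_{x_k}^2 f$, each finite since $\partial_{x_k}^2 f \in \Schwartz(\R^d)$. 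After taking $\partial_x^\alpha$, the resulting bound has the shape $\norm{\Delta_h}_{a\alpha} \leq C_a \, h \, \max_{b \leq a} \norm{f}_{b,\, \alpha + 2 e_k}$, and the factor $h$ forces $\Delta_h \to 0$ in $\Schwartz(\R^d)$ as $h \to 0$, completing the argument.
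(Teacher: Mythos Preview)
The paper does not actually prove this proposition: the sentence immediately preceding it reads ``The Schwartz functions also have other nice properties whose proofs are left as an exercise.'' So there is nothing to compare against. Your argument is correct and is exactly the intended exercise --- Leibniz and multinomial bookkeeping for (i)--(iii), and the second-order Taylor remainder via the iterated fundamental theorem of calculus for (iv), with the shift absorbed by expanding $x^a = \bigl((x - r e_k) + r e_k\bigr)^a$. Your hedging about the sign in (iv) is well-placed: with the paper's convention $(\tau_{x_0} f)(x) = f(x - x_0)$ the difference quotient in fact converges to $-\partial_{x_k} f$, so the statement as printed carries a sign slip.
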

The next important fact will be mentioned without proof: 
\begin{theorem}\label{S_and_Sprime:schwartz_functions:thm:S_dense_in_Lp}
	$\Schwartz(\R^d)$ is dense in $L^p(\R^d)$, $1 \leq p < \infty$. 
\end{theorem}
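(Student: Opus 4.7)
The plan is to deduce density from two facts already at our disposal: the chain of inclusions $\Cont^{\infty}_c(\R^d) \subseteq \Schwartz(\R^d) \subseteq L^p(\R^d)$, together with the density of $\Cont^{\infty}_c(\R^d)$ in $L^p(\R^d)$ for $1 \leq p < \infty$ stated in Chapter~\ref{Fourier:R}. Once both inclusions are justified, density of $\Schwartz(\R^d)$ in $L^p(\R^d)$ is automatic.

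First I would verify $\Cont^{\infty}_c(\R^d) \subseteq \Schwartz(\R^d)$: if $\varphi \in \Cont^{\infty}_c(\R^d)$, then $x^a \partial_x^{\alpha} \varphi$ is continuous with compact support for every pair $a,\alpha \in \N_0^d$, hence bounded, so $\norm{\varphi}_{a \alpha} < \infty$ and $\varphi \in \Schwartz(\R^d)$. Second, I would invoke Lemma~\ref{S_and_Sprime:schwartz_functions:lem:Lp_estimate} to obtain the inclusion $\Schwartz(\R^d) \subseteq L^p(\R^d)$ for every $1 \leq p < \infty$, which shows that the embedding $\Schwartz(\R^d) \hookrightarrow L^p(\R^d)$ makes sense and is continuous with respect to the Schwartz topology (finitely many seminorms control the $L^p$-norm).

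Third, given $f \in L^p(\R^d)$ and $\eps > 0$, density of $\Cont^{\infty}_c(\R^d)$ in $L^p(\R^d)$ (the generalization of Lemma~\ref{Fourier:R:lem:Cinfty_c_dense_L1} to $L^p$, obtained by combining Lemma~\ref{Fourier:R:lem:approximate_identity_Lp} with a compact-support cutoff) supplies $\varphi \in \Cont^{\infty}_c(\R^d)$ with $\norm{f - \varphi}_{L^p(\R^d)} < \eps$. Since $\varphi \in \Schwartz(\R^d)$ by the first step, we have produced a Schwartz function within $\eps$ of $f$ in $L^p$-norm, which proves density.

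There is essentially no obstacle here, since the heavy lifting (existence of smooth compactly supported approximations via convolution with a Dirac sequence, and the $L^p$-bound for Schwartz functions) was already done in Lemma~\ref{Fourier:R:lem:Cinfty_c_dense_L1}, Lemma~\ref{Fourier:R:lem:approximate_identity_Lp}, and Lemma~\ref{S_and_Sprime:schwartz_functions:lem:Lp_estimate}. The only point requiring even minimal care is that the density result cited must be the $L^p$ version and not just the $L^1$ version; this is standard and follows from the same approximate-identity argument after noting that convolution with $\delta_\eps \in \Cont^{\infty}_c(\R^d)$ preserves $L^p$ and converges in $L^p$-norm for $p < \infty$. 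Note the restriction $p < \infty$ is essential: for $p = \infty$ the closure of $\Cont^{\infty}_c(\R^d)$ (and of $\Schwartz(\R^d)$) in $L^{\infty}$ is only $\Cont_{\infty}(\R^d)$, the continuous functions vanishing at infinity, which is a proper subspace of $L^{\infty}(\R^d)$.
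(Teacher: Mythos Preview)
Your argument is correct. The paper actually states this theorem explicitly \emph{without proof} (``The next important fact will be mentioned without proof''), so there is nothing to compare against. Your route via the chain of inclusions $\Cont^{\infty}_c(\R^d) \subseteq \Schwartz(\R^d) \subseteq L^p(\R^d)$ together with the density of $\Cont^{\infty}_c(\R^d)$ in $L^p(\R^d)$ (stated as an unlabelled lemma in Chapter~\ref{Fourier:R} right after Lemma~\ref{Fourier:R:lem:approximate_identity_Lp}) is the standard argument and uses only results already available in the text.
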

This means, we can approximate any $L^p(\R^d)$ function by a test function. We will use this and the next theorem to extend the Fourier transform to $L^2(\R^d)$. 
\begin{theorem}\label{S_and_Sprime:schwartz_functions:thm:unitarity_Fourier_on_S}
	For all $f , g \in \Schwartz(\R^d)$, we have 
	\begin{align*}
		\int_{\R^d} \dd x \, (\Fourier f)(x) \, g(x) = \int_{\R^d} \dd x \, f(x) \, (\Fourier g)(x) 
		. 
	\end{align*}
	This implies $\sscpro{\Fourier f}{g} = \sscpro{f}{\Fourier^{-1} g}$ and $\sscpro{\Fourier f}{\Fourier g} = \sscpro{f}{g}$ where $\scpro{\cdot}{\cdot}$ is the usual scalar product on $L^2(\R^d)$. 
\end{theorem}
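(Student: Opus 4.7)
The plan is to prove the three identities in sequence, with the first (the so-called multiplication formula) being the only substantive step and the remaining two following by bookkeeping.

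First I would establish the multiplication formula $\int \Fourier f \, g \, \dd x = \int f \, \Fourier g \, \dd x$ by writing out the Fourier transform explicitly and invoking Fubini's theorem:
\begin{align*}
\int_{\R^d} \dd x \, (\Fourier f)(x) \, g(x) = \frac{1}{(2\pi)^{d/2}} \int_{\R^d} \dd x \int_{\R^d} \dd \xi \, \e^{- \ii x \cdot \xi} \, f(\xi) \, g(x).
\end{align*}
To justify swapping the order of integration, I would observe that the integrand is bounded in absolute value by $\abs{f(\xi)} \, \abs{g(x)}$, which is integrable on $\R^d \times \R^d$ because $f, g \in \Schwartz(\R^d) \subset L^1(\R^d)$ by Lemma~\ref{S_and_Sprime:schwartz_functions:lem:Lp_estimate}. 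Fubini then yields the right-hand side $\int f(\xi) \, (\Fourier g)(\xi) \, \dd \xi$ directly. The main (mild) obstacle is making sure this integrability bound is cleanly stated; everything else is mechanical.

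Next, for $\sscpro{\Fourier f}{g} = \sscpro{f}{\Fourier^{-1} g}$, the key observation is that conjugating the Fourier transform flips the sign in the exponent:
\begin{align*}
\overline{(\Fourier f)(x)} = \frac{1}{(2\pi)^{d/2}} \int_{\R^d} \dd \xi \, \e^{+ \ii x \cdot \xi} \, \overline{f(\xi)} = (\Fourier^{-1} \bar{f})(x).
\end{align*}
The identical Fubini argument, applied with $\Fourier^{-1}$ in place of $\Fourier$ (the sign of the exponent plays no role in the integrability estimate), gives $\int (\Fourier^{-1} \bar{f}) \, g \, \dd x = \int \bar{f} \, (\Fourier^{-1} g) \, \dd x$. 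Combining these two facts:
\begin{align*}
\sscpro{\Fourier f}{g} = \int_{\R^d} \dd x \, \overline{(\Fourier f)(x)} \, g(x) = \int_{\R^d} \dd x \, \bar{f}(x) \, (\Fourier^{-1} g)(x) = \sscpro{f}{\Fourier^{-1} g}.
\end{align*}

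Finally, the Parseval identity $\sscpro{\Fourier f}{\Fourier g} = \sscpro{f}{g}$ is obtained by substituting $\Fourier g$ for $g$ in the previous formula and using Theorem~\ref{S_and_Sprime:schwartz_functions:thm:Fourier_is_bijection}, which guarantees $\Fourier^{-1} \Fourier g = g$ on $\Schwartz(\R^d)$:
\begin{align*}
\sscpro{\Fourier f}{\Fourier g} = \sscpro{f}{\Fourier^{-1} \Fourier g} = \sscpro{f}{g}.
\end{align*}
So the whole argument reduces to one Fubini application plus the already-established invertibility of $\Fourier$ on $\Schwartz(\R^d)$; no genuinely hard step arises.
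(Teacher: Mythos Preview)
Your proof is correct and follows essentially the same approach as the paper: Fubini for the multiplication formula, the identity $\overline{\Fourier f} = \Fourier^{-1}\bar{f}$ to pass to the scalar-product statement, and then the substitution $g \mapsto \Fourier g$ together with the bijectivity of $\Fourier$ on $\Schwartz(\R^d)$ for Parseval. If anything, you are slightly more explicit than the paper in justifying the Fubini step via $\Schwartz(\R^d) \subset L^1(\R^d)$.
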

\begin{proof}
	Using Fubini's theorem, we conclude we can first integrate with respect to $\xi$ instead of $x$, 
	\begin{align*}
		\int_{\R^d} \dd x \, (\Fourier f)(x) \, g(x) &= \int_{\R^d} \dd x \, \frac{1}{(2\pi)^{\nicefrac{d}{2}}} \int_{\R^d} \dd \xi \, \e^{- \ii x \cdot \xi} \, f(\xi) \, g(x) 
		\\
		&
		= \int_{\R^d} \dd \xi \, f(\xi) \, \frac{1}{(2\pi)^{\nicefrac{d}{2}}} \int_{\R^d} \dd x \, \e^{- \ii x \cdot \xi} g(x) 
		% \\
		% &
		= \int_{\R^d} \dd \xi \, f(\xi) \, (\Fourier g)(\xi) 
		. 
	\end{align*}
	To prove the second part, we remark that compared to the scalar product on $L^2(\R^d)$, we are missing a complex conjugation of the first function. Furthermore, $(\Fourier f)^* = \Fourier^{-1} f^*$ holds. From this, it follows that $\sscpro{\Fourier f}{g} = \sscpro{f}{\Fourier^{-1} g}$ and upon replacing $g$ with $\Fourier g$, that $\sscpro{\Fourier f}{\Fourier g} = \sscpro{f}{\Fourier^{-1} \Fourier g} = \sscpro{f}{g}$. 
\end{proof}
Consequently, the convolution defines a multiplication on $\Schwartz(\R^d)$: 
\begin{corollary}
	$\Schwartz(\R^d) \ast \Schwartz(\R^d) \subseteq \Schwartz(\R^d)$ 
\end{corollary}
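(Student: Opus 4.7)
The plan is to leverage the fact that the Fourier transform converts convolution into pointwise multiplication, combined with two facts already established: $\Fourier$ is a bijection from $\Schwartz(\R^d)$ onto itself (Theorem~\ref{S_and_Sprime:schwartz_functions:thm:Fourier_is_bijection}) and $\Schwartz(\R^d)$ is a Fréchet algebra under pointwise multiplication (part~(i) of the preceding Proposition). The idea is essentially: convolution on the position side becomes multiplication on the Fourier side, and both are stable in $\Schwartz(\R^d)$.

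Concretely, given $f, g \in \Schwartz(\R^d)$, I would proceed as follows. First, note $f, g \in L^1(\R^d)$ by Lemma~\ref{S_and_Sprime:schwartz_functions:lem:Lp_estimate}, so $f \ast g \in L^1(\R^d)$ is defined and Proposition~\ref{Fourier:R:prop:Fourier_convolution} applies, giving $\Fourier(f \ast g) = (2\pi)^{d/2} \, \Fourier f \cdot \Fourier g$ as elements of $L^\infty(\R^d) \cap \Cont(\R^d)$. Second, since $\Fourier f, \Fourier g \in \Schwartz(\R^d)$, the Fréchet-algebra property gives $(2\pi)^{d/2} \, \Fourier f \cdot \Fourier g \in \Schwartz(\R^d)$. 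Third, applying $\Fourier^{-1}$, which also maps $\Schwartz(\R^d)$ into itself, yields a Schwartz function
\begin{align*}
h := (2\pi)^{d/2} \, \Fourier^{-1} \bigl ( \Fourier f \cdot \Fourier g \bigr ) \in \Schwartz(\R^d) \subset L^1(\R^d).
\end{align*}
Finally, I identify $h$ with $f \ast g$: both are integrable functions whose Fourier transforms agree (by the previous steps and the inversion statement for $L^1$ functions whose Fourier transform is again integrable), so by injectivity of $\Fourier$ on $L^1(\R^d)$ they coincide as $L^1$-classes. Since both $h$ and $f \ast g$ have continuous representatives (the former because it is Schwartz, the latter by dominated convergence applied to $(f \ast g)(x) = \int f(x-y) g(y) \, \dd y$), the almost-everywhere equality upgrades to pointwise equality, giving $f \ast g = h \in \Schwartz(\R^d)$.

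The main delicate point is this last identification: the Fourier convolution theorem and the inversion theorem are stated in an $L^1$ setting, so one has to argue carefully that the Schwartz representative produced on the Fourier side really is the convolution pointwise, not merely almost everywhere. A purely direct alternative, which sidesteps this issue entirely, is to verify the seminorms $\norm{f \ast g}_{a \alpha}$ by hand: smoothness with $\partial_x^{\alpha}(f \ast g) = (\partial_x^{\alpha} f) \ast g$ follows from differentiation under the integral, and the estimate
\begin{align*}
	x^a \, \partial_x^{\alpha}(f \ast g)(x) = \sum_{b \leq a} \tbinom{a}{b} \int_{\R^d} \dd y \, (x-y)^b \, (\partial_x^{\alpha} f)(x-y) \, y^{a-b} \, g(y)
\end{align*}
(obtained from expanding $x^a = \bigl ( (x-y)+y \bigr )^a$ multinomially) is bounded uniformly in $x$ by $\sum_{b \leq a} \binom{a}{b} \bnorm{z \mapsto z^b \partial_z^{\alpha} f}_\infty \, \bnorm{y \mapsto y^{a-b} g(y)}_{L^1}$, which is finite because each factor is a seminorm (or $L^1$-norm dominated by seminorms via Lemma~\ref{S_and_Sprime:schwartz_functions:lem:Lp_estimate}) of a Schwartz function. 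Either approach works; the Fourier-based one is shorter and more in the spirit of the chapter, with the identification of representatives as the only genuine subtlety.
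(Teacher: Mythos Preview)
Your Fourier-based argument is exactly the paper's proof: it writes $f \ast g = (2\pi)^{d/2}\,\Fourier^{-1}(\Fourier f \cdot \Fourier g)$ using Proposition~\ref{Fourier:R:prop:Fourier_convolution} and then invokes stability of $\Schwartz(\R^d)$ under pointwise multiplication and under $\Fourier^{-1}$. You are in fact more careful than the paper about the identification step (the paper simply asserts the equality), and your direct seminorm alternative is a nice backup that the paper does not mention.
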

\begin{proof}
	Let $f , g \in \Schwartz(\R^d)$. Because Schwartz functions are also integrable, $f \ast g$ exists in $L^1(\R^d)$ and satisfies $\Fourier (f \ast g) = (2\pi)^{\nicefrac{d}{2}} \Fourier f \, \Fourier g$ (Proposition~\ref{Fourier:R:prop:Fourier_convolution}). This means we can rewrite $f \ast g = (2 \pi)^{\nicefrac{d}{2}} \Fourier^{-1} \bigl ( \Fourier f \, \Fourier g \bigr )$ as the Fourier transform of a product of Schwartz functions, and thus $f \ast g \in \Schwartz(\R^d)$. 
\end{proof}
Now we will apply this to the free Schrödinger operator $H = - \tfrac{1}{2} \Delta_x$. First of all, we conclude from Theorem~\ref{S_and_Sprime:schwartz_functions:thm:S_dense_in_Lp} that the domain of $H$, 
\begin{align*}
	\Schwartz(\R^d) \subset \mathcal{D}(H) = \bigl \{ \varphi \in L^2(\R^d) \; \vert \; - \Delta_x \varphi \in L^2(\R^d) \bigr \} \subset L^2(\R^d)
	, 
\end{align*}
is dense. Since derivatives of Schwartz functions are Schwartz functions, $H$ maps $\Schwartz(\R^d)$ to itself, and we deduce that the solution 
\begin{align*}
	\psi(t) = U(t) \psi_0 = \Fourier \, \e^{- \ii t \frac{1}{2} \hat{\xi}^2} \, \Fourier^{-1} \psi_0
\end{align*}
to initial conditions $\psi_0 \in \Schwartz(\R^d) \subset L^2(\R^d)$ remains a Schwartz function: $\Fourier^{\pm 1}$ leaves $\Schwartz(\R^d)$ invariant (Theorem~\ref{S_and_Sprime:schwartz_functions:thm:Fourier_is_bijection}) as does multiplication by $\e^{- \ii t \, \frac{1}{2} \xi^2}$, because derivatives of that function are of the form polynomial times $\e^{- \ii t \, \frac{1}{2} \xi^2}$. 

For these initial conditions, we can also rigorously prove equation~\eqref{Fourier:R:eqn:solution_free_Schroedinger_convolution}: 
\begin{proposition}\label{S_and_Sprime:schwartz_functions:prop:free_Schroedinger}
	Let $\psi_0 \in \Schwartz(\R^d) \subset L^2(\R^d)$. Then for $t \neq 0$ the global solution of the free Schrödinger equation with initial condition $\psi_0$ is given by 
	\begin{align}
		\psi(t,x) &= \frac{1}{(2 \pi \ii t)^{\nicefrac{d}{2}}} \int_{\R^d} \dd y \, \e^{\ii \frac{(x-y)^2}{2 t}} \, \psi_0(y) 
		=: \int_{\R^d} \dd y \, p(t,x-y) \, \psi_0(y) 
		. 
		\label{S_and_Sprime:schwartz_functions:eqn:free_propagator}
	\end{align}
	This expression converges in the $L^2$ norm to $\psi_0$ as $t \to 0$. 
\end{proposition}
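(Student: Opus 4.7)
My strategy is to pass to momentum representation and regularize the oscillatory Gaussian multiplier so that the Fourier–convolution dictionary applies, then remove the regularization by dominated convergence in two ways simultaneously (in $L^2$ on the Fourier side, pointwise on the position side). Concretely, by Theorem~\ref{S_and_Sprime:schwartz_functions:thm:Fourier_is_bijection} the solution is
\begin{align*}
	\widehat{\psi}(t,\xi) = \e^{- \ii t \, \frac{1}{2} \xi^2} \, \widehat{\psi}_0(\xi) \in \Schwartz(\R^d)
	.
\end{align*}
Because $\e^{- \ii t \frac{1}{2} \xi^2}$ is not integrable, the convolution identity $\Fourier(f \ast g) = (2\pi)^{d/2} \, \Fourier f \, \Fourier g$ from Proposition~\ref{Fourier:R:prop:Fourier_convolution} cannot be applied directly, so I introduce the regularized multiplier $\e^{- (\eps + \ii t) \frac{1}{2} \xi^2}$ with $\eps > 0$ and set $\widehat{\psi}_\eps(t,\xi) := \e^{-(\eps + \ii t) \frac{1}{2} \xi^2} \, \widehat{\psi}_0(\xi) \in \Schwartz(\R^d)$.

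The first key step is to extend Lemma~\ref{Fourier:R:lem:Fourier_Gaussian} from $\lambda > 0$ to complex $\lambda$ with $\Re \lambda > 0$: the integral defining $\Fourier \bigl ( \e^{- \frac{\lambda}{2} x^2} \bigr )$ converges absolutely, both sides are holomorphic in $\lambda$ on the right half-plane, and they agree on the positive real axis, so by the identity theorem
\begin{align*}
	\Fourier^{-1} \bigl ( \e^{-(\eps + \ii t) \frac{1}{2} \xi^2} \bigr )(x)
	= (\eps + \ii t)^{-d/2} \, \e^{- \frac{x^2}{2(\eps + \ii t)}}
	,
\end{align*}
with the principal branch of the square root. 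Proposition~\ref{Fourier:R:prop:Fourier_convolution} (inverted) then gives
\begin{align*}
	\psi_\eps(t,x) = \frac{1}{\bigl ( 2\pi (\eps + \ii t) \bigr )^{d/2}} \int_{\R^d} \dd y \, \e^{- \frac{(x-y)^2}{2(\eps + \ii t)}} \, \psi_0(y)
	.
\end{align*}

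The second key step is to take $\eps \to 0$ in two ways. On the Fourier side, $\widehat{\psi}_\eps(t,\xi) \to \widehat{\psi}(t,\xi)$ pointwise and is dominated by $\snorm{\widehat{\psi}_0(\xi)} \in L^2(\R^d)$, so dominated convergence yields $\widehat{\psi}_\eps(t) \to \widehat{\psi}(t)$ in $L^2(\R^d)$; by Plancherel (Theorem~\ref{Fourier:R:thm:Parseval_Plancherel} extended to $\Schwartz$ via Theorem~\ref{S_and_Sprime:schwartz_functions:thm:unitarity_Fourier_on_S}), $\psi_\eps(t) \to \psi(t)$ in $L^2$. On the position side, for fixed $t \neq 0$ and $x \in \R^d$ the integrand satisfies
\begin{align*}
	\abs{\e^{- \frac{(x-y)^2}{2(\eps + \ii t)}} \, \psi_0(y)} = \e^{- \frac{\eps (x-y)^2}{2(\eps^2 + t^2)}} \, \abs{\psi_0(y)} \leq \abs{\psi_0(y)}
	,
\end{align*}
which is integrable since $\psi_0 \in \Schwartz(\R^d) \subset L^1(\R^d)$, so dominated convergence yields the pointwise limit
\begin{align*}
	\lim_{\eps \to 0} \psi_\eps(t,x) = \frac{1}{(2\pi \ii t)^{d/2}} \int_{\R^d} \dd y \, \e^{\ii \frac{(x-y)^2}{2t}} \, \psi_0(y)
	.
\end{align*}
Extracting an a.e.-convergent subsequence of the $L^2$-limit identifies the two limits, proving \eqref{S_and_Sprime:schwartz_functions:eqn:free_propagator}.

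Finally, for the $L^2$ convergence as $t \to 0$, I again work on the Fourier side: $\e^{- \ii t \frac{1}{2} \xi^2} \, \widehat{\psi}_0(\xi) \to \widehat{\psi}_0(\xi)$ pointwise and is dominated by $\snorm{\widehat{\psi}_0} \in L^2(\R^d)$, so dominated convergence gives $\widehat{\psi}(t) \to \widehat{\psi}_0$ in $L^2$, and unitarity of $\Fourier$ gives $\psi(t) \to \psi_0$ in $L^2$. The main obstacle is the first key step: justifying the complex-Gaussian Fourier transform formula with the correct branch of $(\eps + \ii t)^{d/2}$, since the exponent has no real-part decay in the limit $\eps \to 0$ and the formula must be read in the distributional/limiting sense spelled out above rather than as a direct convolution of $L^1$ functions.
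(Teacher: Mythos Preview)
Your proof is correct and follows essentially the same route as the paper: regularize the oscillatory Gaussian by $\eps > 0$, compute the resulting honest Gaussian integral, and remove the regularization by dominated convergence. The only cosmetic differences are that the paper writes out the double integral and completes the square before regularizing (then evaluates the complex Gaussian by a contour shift), whereas you regularize the multiplier directly, invoke the convolution theorem, and justify the complex-$\lambda$ Gaussian formula by analytic continuation; you also supply the $t \to 0$ argument explicitly, which the paper leaves to the reader.
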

\begin{proof}
	We denote the Fourier transform of $\e^{- \ii t \frac{1}{2} \hat{\xi}^2}$ by 
	\begin{align*}
		U(t) := \Fourier^{-1} \e^{- \ii t \frac{1}{2} \hat{\xi}^2} \Fourier 
		. 
	\end{align*}
	If $t = 0$, then the bijectivity of the Fourier transform on $\Schwartz(\R^d)$, Theorem~\ref{S_and_Sprime:schwartz_functions:thm:Fourier_is_bijection}, yields $U(0) = \id_{\Schwartz}$. 
	
	So let $t \neq 0$. If $\widehat{\psi}_0$ is a Schwartz function, so is $\e^{- \ii t \frac{1}{2} \xi^2} \widehat{\psi}_0$. As the Fourier transform is a unitary map on $L^2(\R^d)$ (Proposition~\ref{Fourier:R:prop:Parseval_Plancherel}) and maps Schwartz functions onto Schwartz functions (Theorem~\ref{S_and_Sprime:schwartz_functions:thm:Fourier_is_bijection}), $U(t)$ also maps Schwartz functions onto Schwartz functions. Plugging in the definition of the Fourier transform, for any $\psi_0 \in \Schwartz(\R^d)$ and $t \neq 0$ we can write out $U(t) \psi_0$ as 
	\begin{align}
		\bigl ( \Fourier^{-1} &\e^{- \ii t \frac{1}{2} \hat{\xi}^2} \Fourier \psi_0 \bigr )(x) = \frac{1}{(2\pi)^d} \int_{\R^d} \dd \xi \, \e^{+ \ii x \cdot \xi} \, \e^{- \ii t \frac{1}{2} \xi^2} \, \int_{\R^d} \dd y \, \e^{- \ii y \cdot \xi} \, \psi_0(y) 
		\notag \\
		% &= \frac{1}{(2 \pi)^d} \int_{\R^d} \dd y \, \int_{\R^d} \dd \xi \, \e^{- \ii (y - x) \cdot \xi} \e^{- \ii t \frac{1}{2} \xi^2} \, \psi_0(y) 
		% \\
		&= \frac{1}{(2 \pi)^{\nicefrac{d}{2}}} \int_{\R^d} \dd \xi \int_{\R^d} \dd y \, \e^{\ii \frac{(x - y)^2}{2 t}} \, \left ( \frac{1}{(2 \pi)^{\nicefrac{d}{2}}} \, \e^{- \ii \frac{t}{2} ( \xi - \nicefrac{(x - y)}{t})^2} \right ) \, \psi_0(y) 
		. 
		\label{S_and_Sprime:schwartz_functions:eqn:proof_free_propagator}
	\end{align}
	We need to regularize the integral: if we write the right-hand side of the above as 
	\begin{align*}
		\mbox{r. h. s.} &= \lim_{\eps \searrow 0} \frac{1}{(2 \pi)^{\nicefrac{d}{2}}} \int_{\R^d} \dd \xi \int_{\R^d} \dd y \, \e^{\ii \frac{(x - y)^2}{2 t}} \, \left ( \frac{1}{(2 \pi)^{\nicefrac{d}{2}}} \, \e^{- (\eps + \ii) \frac{t}{2} ( \xi - \nicefrac{(x - y)}{t})^2} \right ) \, \psi_0(y) 
		\\
		&= \lim_{\eps \searrow 0} \frac{1}{(2 \pi)^{\nicefrac{d}{2}}} \int_{\R^d} \dd y \, \e^{\ii \frac{(x - y)^2}{2 t}} \, \left ( \frac{1}{(2 \pi)^{\nicefrac{d}{2}}} \int_{\R^d} \dd \xi \, \e^{- (\eps + \ii) \frac{t}{2} ( \xi - \nicefrac{(x - y)}{t})^2} \right ) \, \psi_0(y) 
		, 
	\end{align*}
	we can use Fubini to change the order of integration. The inner integral can be computed by interpreting it as an integral in the complex plane, 
	\begin{align*}
		\frac{1}{(2 \pi)^{\nicefrac{d}{2}}} \int_{\R^d} \dd \xi \, \e^{- (\eps + \ii) \frac{t}{2} ( \xi - \nicefrac{(x - y)}{t})^2} = \frac{1}{\bigl ( (\eps + \ii) t \bigr )^{\nicefrac{d}{2}}} 
		. 
	\end{align*}
	Plugged back into equation~\eqref{S_and_Sprime:schwartz_functions:eqn:proof_free_propagator} and combined with the Dominated Convergence Theorem, this yields equation~\eqref{S_and_Sprime:schwartz_functions:eqn:free_propagator}. 
\end{proof}
%
% section schwartz_functions (end)

\section{Tempered distributions} % (fold)
\label{S_and_Sprime:Sprime}

Tempered distributions are linear functionals on $\Schwartz(\R^d)$. 
\begin{definition}[Tempered distributions]
	The tempered distributions $\Schwartz'(\R^d)$ are the continuous linear functions on the Schwartz functions $\Schwartz(\R^d)$. If $L \in \Schwartz'(\R^d)$ is a linear functional, we will often write 
	\begin{align*}
		\bigl ( L , f \bigr ) := L(f) 
		&&
		\forall f \in \Schwartz(\R^d) 
		. 
	\end{align*}
\end{definition}
\begin{example}
	\begin{enumerate}[(i)]
		\item The $\delta$ distribution defined via 
		\begin{align*}
			\delta(f) := f(0) 
		\end{align*}
		is a linear continuous functional on $\Schwartz(\R^d)$. (See exercise sheet~12.) 
		\item Let $g \in L^p(\R^d)$, $1 \leq p < \infty$, then for $f \in \Schwartz(\R^d)$, we define 
		\begin{align}
			L_g(f) = \int_{\R^d} \dd x \, g(x) \, f(x) =: (g , f) 
			. 
			\label{S_and_Sprime:Sprime:eqn:distributions_of_functions}
		\end{align}
		As $f \in \Schwartz(\R^d) \subset L^q(\R^d)$, $\frac{1}{p} + \frac{1}{q} = 1$, by Hölder's inequality, we have 
		\begin{align*}
			\babs{\bigl ( g , f \bigr )} \leq \norm{g}_p \, \norm{f}_q 
			. 
		\end{align*}
		Since $\norm{f}_q$ can be bounded by a finite linear combination of Fréchet seminorms of $f$, $L_g$ is continuous, and the inclusion map $\imath : L^p(\R^d) \longrightarrow \Schwartz'(\R^d)$ is continuous. 
		\item Equation~\eqref{S_and_Sprime:Sprime:eqn:distributions_of_functions} is the \emph{canonical way to interpret less nice functions as distributions}: we identify a suitable function $g : \R^d \longrightarrow \C$ with the distribution $L_g$. For instance, polynomially bounded smooth functions (think of $g(x) = x^2$) define continuous linear functionals in this manner since for any $g \in \Cont^{\infty}_{\mathrm{pol}}(\R^d)$, there exists $n \in \N_0$ such that $\sqrt{1 + x^2}^{\, -n} g(x)$ is bounded. Hence, for any $f \in \Schwartz(\R^d)$, Hölder's inequality yields 
		\begin{align*}
			\babs{\bigl ( g , f \bigr )} &= \abs{\int_{\R^d} \dd x \, g(x) \, f(x)} 
			= \abs{\int_{\R^d} \dd x \, \sqrt{1 + x^2}^{\, -n} \, g(x) \, \sqrt{1 + x^2}^{\, n} f(x)} 
			\\
			&
			\leq \bnorm{\sqrt{1 + x^2}^{\, -n} \, g(x)}_{L^{\infty}} \, \bnorm{\sqrt{1 + x^2}^{\, n} f(x)}_{L^1} 
			. 
		\end{align*}
		Later on, we will see that this point of view, interpreting “not so nice” functions as distributions, helps us extend operations from test functions to much broader classes of functions. 
	\end{enumerate}
\end{example}
Similar to the case of normed spaces, we see that continuity implies “boundedness”. 
\begin{proposition}
	A linear functional $L : \Schwartz(\R^d) \longrightarrow \C$ is a tempered distribution (\ie continuous) if and only if there exist constants $C > 0$ and $k,n \in \N_0$ such that 
	\begin{align*}
		\babs{L(f)} \leq C \sum_{\substack{\abs{a} \leq k \\
		\abs{\alpha} \leq n}} \norm{f}_{a \alpha} 
	\end{align*}
	for all $f \in \Schwartz(\R^d)$. 
\end{proposition}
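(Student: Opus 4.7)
The plan is to prove the two directions separately, with the ``if'' direction being a quick consequence of the definition of the Fréchet topology on $\Schwartz(\R^d)$ and the ``only if'' direction requiring a standard unpacking of continuity at the origin into a seminorm estimate, followed by a homogeneity argument.

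For the ``$\Leftarrow$'' direction: suppose the estimate holds with some $C, k, n$. By linearity of $L$, it suffices to check continuity at $0$. Given $f_j \to 0$ in $\Schwartz(\R^d)$, convergence in the metric $\mathrm{d}$ forces $\norm{f_j}_{a\alpha} \to 0$ for every multiindex pair, in particular for all $\abs{a} \leq k$, $\abs{\alpha} \leq n$; the assumed bound then gives $\abs{L(f_j)} \to 0$.

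For the ``$\Rightarrow$'' direction, the core task is to translate continuity at $0$ into control by finitely many seminorms. First I would verify that the sets
\begin{align*}
U_{N,\eta} := \bigl\{ f \in \Schwartz(\R^d) \,:\, \norm{f}_{a\alpha} < \eta \text{ for all } \abs{a}+\abs{\alpha} \leq N \bigr\}
\end{align*}
form a neighborhood base of $0$ in $\bigl(\Schwartz(\R^d),\mathrm{d}\bigr)$. This follows from the estimate
\begin{align*}
\mathrm{d}(f,0) \leq \sum_{m \leq N} 2^{-m}\, \tfrac{\eta}{1+\eta} + \sum_{m > N} 2^{-m} \leq 2 \tfrac{\eta}{1+\eta} + 2^{-N},
\end{align*}
valid for $f \in U_{N,\eta}$, so that any metric ball $B_\delta(0)$ contains some $U_{N,\eta}$ (choose $N$ so $2^{-N} < \nicefrac{\delta}{2}$, then $\eta$ small). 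Continuity of $L$ at $0$ with the choice $\eps = 1$ therefore yields $N \in \N_0$ and $\eta > 0$ such that $f \in U_{N,\eta}$ implies $\abs{L(f)} < 1$. Set $k = n = N$.

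The final step is a homogeneity/rescaling argument. Fix any $f \in \Schwartz(\R^d)$ and set
\begin{align*}
S(f) := \sum_{\abs{a} \leq k,\, \abs{\alpha} \leq n} \norm{f}_{a\alpha}.
\end{align*}
If $S(f) > 0$, consider $\tilde f := \tfrac{\eta}{2 S(f)}\, f$, which satisfies $\norm{\tilde f}_{a\alpha} \leq \nicefrac{\eta}{2} < \eta$ for every $\abs{a} \leq k$, $\abs{\alpha} \leq n$; hence $\tilde f \in U_{N,\eta}$ and $\abs{L(\tilde f)} < 1$, which by linearity rearranges to $\abs{L(f)} \leq \tfrac{2}{\eta}\, S(f)$. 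If $S(f) = 0$, the same rescaling with arbitrarily large factor shows $L(f) = 0$, so the bound holds trivially. Taking $C := \nicefrac{2}{\eta}$ completes the proof.

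The main obstacle, and really the only non-routine part, is the neighborhood-base verification tying the explicit series-metric $\mathrm{d}$ to the uniform-on-finite-collections-of-seminorms sets $U_{N,\eta}$; once that is in hand, everything reduces to linearity and scaling. Everything else is standard manipulation already familiar from the normed-space analogue in Theorem~\ref{hilbert_spaces:dual_space:thm:bounded_functionals_continuous}.
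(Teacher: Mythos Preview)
Your proof is correct and complete. The paper itself does \emph{not} prove this proposition --- it explicitly says ``we will not give a proof'' and offers only a one-sentence heuristic (that control by infinitely many seminorms cannot be uniform, so only finitely many can matter). Your argument supplies exactly what the paper omits: the neighborhood-base computation linking the series-metric $\mathrm{d}$ to the sets $U_{N,\eta}$, followed by the standard rescaling step to upgrade ``$\abs{L(f)} < 1$ on $U_{N,\eta}$'' to a global linear bound. One cosmetic remark: in the case $S(f) = 0$ you already have $\norm{f}_{00} = 0$ and hence $f = 0$, so $L(f) = 0$ is immediate without rescaling; and your choice $k = n = N$ harmlessly overshoots the seminorms actually needed (those with $\abs{a} + \abs{\alpha} \leq N$), but the statement only asks for \emph{some} $k,n$, so this is fine.
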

Even though we will not give a proof, let us at least sketch its idea: because one has no control over the growth or decay of the seminorms $\norm{f}_{a \alpha}$, maxima or sums of seminorms are finite if and only if only finitely many of them enter. 
\medskip

\noindent
As mentioned before, we can interpret suitable functions $g$ as tempered distributions. In particular, every Schwartz function $g \in \Schwartz(\R^d)$ defines a tempered distribution so that 
\begin{align*}
	\bigl ( \partial_{x_k} g , f \bigr ) &= \int_{\R^d} \dd x \, \partial_{x_k} g(x) \, f(x) = - \int_{\R^d} \dd x \, g(x) \, \partial_{x_k} f(x) = \bigr ( g , - \partial_{x_k} f \bigr ) 
\end{align*}
holds for any $f \in \Schwartz(\R^d)$. We can use the right-hand side to \emph{define} derivatives of distributions: 
\begin{definition}[Weak derivative]
	For $\alpha \in \N_0^d$ and $L \in \Schwartz'(\R^d)$, we define the weak or distributional derivative of $L$ as 
	\begin{align*}
		\bigl ( \partial_x^{\alpha} L , f \bigr ) := \bigl ( L , (-1)^{\abs{\alpha}} \partial_x^{\alpha} f \bigr ) 
		, 
		&&
		\forall f \in \Schwartz(\R^d) 
		. 
	\end{align*}
\end{definition}
\begin{example}
	\begin{enumerate}[(i)]
		\item The weak derivative of $\delta$ is 
		\begin{align*}
			\bigl ( \partial_{x_k} \delta , f \bigr ) &= \bigl ( \delta , - \partial_{x_k} f \bigr ) 
			= - \partial_{x_k} f(0) 
			. 
		\end{align*}
		\item Let $g \in \Cont^{\infty}_{\mathrm{pol}}(\R^d)$. Then the weak derivative coincides with the usual derivative, by partial integration, we get 
		\begin{align*}
			\bigl ( \partial_{x_k} g , f \bigr ) &= - \bigl ( g , \partial_{x_k} f \bigr ) 
			= - \int_{\R^d} \dd x \, g(x) \, \partial_{x_k} f(x) 
			\\
			&
			= + \int_{\R^d} \dd x \, \partial_{x_k} g(x) \, f(x) 
			. 
		\end{align*}
	\end{enumerate}
\end{example}
Similarly, the Fourier transform can be extended to a bijection $\Schwartz'(\R^d) \longrightarrow \Schwartz'(\R^d)$. Theorem~\ref{S_and_Sprime:schwartz_functions:thm:unitarity_Fourier_on_S} tells us that if $g , f \in \Schwartz(\R^d)$, then 
\begin{align*}
	\bigl ( \Fourier g , f \bigr ) = \bigl ( g , \Fourier f \bigr ) 
\end{align*}
holds. If we replace $g$ with an arbitrary tempered distribution, the right-hand side again serves as \emph{definition} of the left-hand side: 
\begin{definition}[Fourier transform on $\mathcal{S}'(\R^d)$]
	For any tempered distribution $L \in \Schwartz'(\R^d)$, we define its Fourier transform to be 
	\begin{align*}
		\bigl ( \Fourier L , f \bigr ) := \bigl ( L , \Fourier f \bigr ) 
		&&
		\forall f \in \Schwartz(\R^d) 
		. 
	\end{align*}
\end{definition}
\begin{example}
	\begin{enumerate}[(i)]
		\item The Fourier transform of $\delta$ is the constant function $(2\pi)^{-\nicefrac{d}{2}}$, 
		\begin{align*}
			\bigl ( \Fourier \delta , f \bigr ) &= \bigl ( \delta , \Fourier f \bigr ) 
			= \Fourier f (0) 
			= \frac{1}{(2\pi)^{\nicefrac{d}{2}}}\int_{\R^d} \dd x \, f(x) 
			\\
			&= \bigl ( (2\pi)^{-\nicefrac{d}{2}} , f \bigr ) 
			. 
		\end{align*}
		\item The Fourier transform of $x^2$ makes sense as a tempered distribution on $\R$: $x^2$ is a polynomially bounded function and thus defines a tempered distribution via equation~\eqref{S_and_Sprime:Sprime:eqn:distributions_of_functions}: 
		\begin{align*}
			\bigl ( \Fourier x^2 , f \bigr ) &= \bigl ( x^2 , \Fourier f \bigr ) 
			= \int_{\R} \dd x \, x^2 \, \frac{1}{(2\pi)^{\nicefrac{1}{2}}} \int_{\R} \dd \xi \, \e^{- \ii x \cdot \xi} \, f(\xi) 
			% \\
			% &
			% = \frac{1}{(2\pi)^{\nicefrac{1}{2}}} \int_{\R} \dd \xi \int_{\R} \dd x \, x^2 \, \e^{- \ii x \cdot \xi} \, f(\xi) 
			\\
			&= \frac{1}{(2\pi)^{\nicefrac{1}{2}}} \int_{\R} \dd \xi \int_{\R} \dd x \, (+\ii)^2 \partial_{\xi}^2 \e^{- \ii x \cdot \xi} \, f(\xi) 
			\\
			&= (-1)^2 \cdot (-1) \, \int_{\R} \dd \xi \left ( \frac{1}{(2\pi)^{\nicefrac{1}{2}}} \int_{\R} \dd x \, \e^{- \ii x \cdot \xi} \right ) \, \partial_{\xi}^2 f(\xi) 
			\\
			&= - \int_{\R} \dd \xi \, (2\pi)^{\nicefrac{1}{2}} \, \delta(\xi) \, \partial_{\xi}^2 f(\xi) 
			\\
			&= - (2\pi)^{\nicefrac{1}{2}} \, \partial_{\xi}^2 f(0) 
			= \bigl ( (2\pi)^{\nicefrac{1}{2}} \delta , - \partial_{\xi}^2 f \bigr ) 
			= \bigl ( - (2\pi)^{\nicefrac{1}{2}} \delta'' , f \bigr ) 
		\end{align*}
		This is consistent with what we have shown earlier in Theorem~\ref{S_and_Sprime:schwartz_functions:thm:Fourier_is_bijection}, namely 
		\begin{align*}
			\Fourier \bigl ( x^2 f \bigr ) = (+ \ii \partial_{\xi})^2 \Fourier f = - \partial_{\xi}^2 \Fourier f 
			. 
		\end{align*}
	\end{enumerate}
\end{example}
We have just computed Fourier transforms of functions that do not have Fourier transforms in the usual sense. We can apply the idea we have used to define the derivative and Fourier transform on $\Schwartz'(\R^d)$ to other operators initially defined on $\Schwartz(\R^d)$. Before we do that though, we need to introduce the appropriate notion of continuity on $\Schwartz'(\R^d)$. 
\begin{definition}[Weak-$\ast$ convergence]
	Let $\Schwartz$ be a metric space with dual $\Schwartz'$. A sequence $(L_n)$ in $\Schwartz'$ is said to converge to $L \in \Schwartz'$ in the weak-$\ast$ sense if 
	\begin{align*}
		L_n(f) \xrightarrow{n \to \infty} L(f) 
	\end{align*}
	holds for all $f \in \Schwartz$. We will write $\wastlim_{n \to \infty} L_n = L$. 
\end{definition}
This notion of convergence implies a notion of continuity and is crucial for the next theorem. 
\begin{theorem}\label{S_and_Sprime:Sprime:thm:weak-star_continuity}
	Let $A : \Schwartz(\R^d) \longrightarrow \Schwartz(\R^d)$ be a linear continuous map. Then for all $L \in \Schwartz'(\R^d)$, the map $A' : \Schwartz'(\R^d) \longrightarrow \Schwartz'(\R^d)$
	\begin{align}
		\bigl ( A' L , f \bigr ) := \bigl ( L , A f \bigr ) 
		, 
		&& 
		f \in \Schwartz(\R^d) 
		, 
	\end{align}
	defines a weak-$\ast$ continuous linear map. \marginpar{2014.01.16}
\end{theorem}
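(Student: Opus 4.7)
The plan is to verify the three things the theorem implicitly claims: that $A'L$ is actually a tempered distribution for each $L\in\Schwartz'(\R^d)$, that $A'$ is linear on $\Schwartz'(\R^d)$, and that $A'$ is continuous with respect to the weak-$\ast$ topology. Each of these reduces to unwinding the definition $(A'L,f):=(L,Af)$ and chaining together the continuity hypotheses on $L$ and $A$.

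First I would fix $L\in\Schwartz'(\R^d)$ and check that $A'L$ defines an element of $\Schwartz'(\R^d)$. Linearity in $f$ is immediate from the linearity of $A$ and $L$: for $\alpha_1,\alpha_2\in\C$ and $f_1,f_2\in\Schwartz(\R^d)$, one has $(A'L,\alpha_1 f_1+\alpha_2 f_2)=(L,A(\alpha_1 f_1+\alpha_2 f_2))=\alpha_1(L,Af_1)+\alpha_2(L,Af_2)$. For continuity of $A'L$, I would take a convergent sequence $f_n\to f$ in $\Schwartz(\R^d)$ (i.e. convergence in the Fréchet metric $\mathrm d$); the continuity of $A$ gives $Af_n\to Af$ in $\Schwartz(\R^d)$, and then the continuity of $L$ gives $(A'L,f_n)=(L,Af_n)\to(L,Af)=(A'L,f)$. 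Hence $A'L\in\Schwartz'(\R^d)$.

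Next I would check that the assignment $L\mapsto A'L$ is linear on $\Schwartz'(\R^d)$. For $\beta_1,\beta_2\in\C$ and $L_1,L_2\in\Schwartz'(\R^d)$, evaluating on an arbitrary $f\in\Schwartz(\R^d)$ yields $\bigl(A'(\beta_1 L_1+\beta_2 L_2),f\bigr)=\bigl(\beta_1 L_1+\beta_2 L_2,Af\bigr)=\beta_1(L_1,Af)+\beta_2(L_2,Af)=\beta_1(A'L_1,f)+\beta_2(A'L_2,f)$, so that $A'(\beta_1 L_1+\beta_2 L_2)=\beta_1 A'L_1+\beta_2 A'L_2$ as elements of $\Schwartz'(\R^d)$.

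Finally, I would verify weak-$\ast$ continuity. Suppose $(L_n)\subset\Schwartz'(\R^d)$ with $\wastlim_{n\to\infty}L_n=L$, i.e. $(L_n,g)\to(L,g)$ for every $g\in\Schwartz(\R^d)$. Given any test function $f\in\Schwartz(\R^d)$, setting $g:=Af\in\Schwartz(\R^d)$ (this is where the hypothesis that $A$ maps $\Schwartz$ into itself enters) gives $(A'L_n,f)=(L_n,Af)=(L_n,g)\to(L,g)=(L,Af)=(A'L,f)$. Since $f$ was arbitrary, $\wastlim_{n\to\infty}A'L_n=A'L$, which is precisely the weak-$\ast$ continuity of $A'$. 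I don't anticipate any real obstacle here: the proof is a purely formal unwinding of the definitions, and the only substantive inputs are that $A$ preserves $\Schwartz(\R^d)$ and is continuous there, so that the hypotheses on $L$ (respectively on the sequence $L_n$) can be applied to $Af$ (respectively $Af$ for a fixed $f$).
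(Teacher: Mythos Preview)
Your proof is correct and follows essentially the same approach as the paper's: both check that $A'L$ is well-defined as a tempered distribution, that $A'$ is linear, and then verify weak-$\ast$ continuity by the one-line computation $(A'L_n,f)=(L_n,Af)\to(L,Af)=(A'L,f)$. You simply spell out in more detail the step the paper compresses into ``$A'$ is linear and well-defined,'' namely that $A'L$ is itself continuous on $\Schwartz(\R^d)$ because it is the composition $L\circ A$ of continuous maps.
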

Put in the terms of Chapter~\ref{operators:adjoint}, $A'$ is the \emph{adjoint} of $A$. 
\begin{proof}
	First of all, $A'$ is linear and well-defined, $A' L$ maps $f \in \Schwartz(\R^d)$ onto $\C$. To show continuity, let $(L_n)$ be a sequence of tempered distributions which converges in the weak-$\ast$ sense to $L \in \Schwartz'(\R^d)$. Then 
	\begin{align*}
		\bigl ( A' L_n , f \bigr ) &= \bigl ( L_n , A f \bigr ) 
		\xrightarrow{n \to \infty} \bigl ( L , A f \bigr ) = \bigl ( A' L , f \bigr ) 
	\end{align*}
	holds for all $f \in \Schwartz(\R^d)$ and $A'$ is weak-$\ast$ continuous. 
\end{proof}
As a last consequence, we can extend the convolution from $\ast : \Schwartz(\R^d) \times \Schwartz(\R^d) \longrightarrow \Schwartz(\R^d)$ to 
\begin{align*}
	\ast &: \Schwartz'(\R^d) \times \Schwartz(\R^d) \longrightarrow \Schwartz'(\R^d) 
	\\ 
	\ast &: \Schwartz(\R^d) \times \Schwartz'(\R^d) \longrightarrow \Schwartz'(\R^d) 
	. 
\end{align*}
For any $f , g , h \in \Schwartz(\R^d)$, we can push the convolution from one argument of the duality bracket to the other, 
\begin{align*}
	\bigl ( f \ast g , h \bigr ) &= \bigl ( g \ast f , h \bigr ) 
	= \int_{\R^d} \dd y \, (f \ast g)(y) \, h(y) 
	= \int_{\R^d} \dd y \int_{\R^d} \dd x \, f(x) \, g(y-x) \, h(y) 
	\\
	&
	= \int_{\R^d} \dd x \, f(x) \, \bigl ( g(- \; \cdot) \ast h \bigr )(x) 
	% \\
	% &
	= \bigl ( f , g(- \; \cdot) \ast h \bigr ) 
	. 
\end{align*}
Thus, we define 
\begin{definition}[Convolution on $\Schwartz'(\R^d)$]
	Let $L \in \Schwartz'(\R^d)$ and $f \in \Schwartz(\R^d)$. Then the convolution of $L$ and $f$ is defined as 
	\begin{align}
		\bigl ( L \ast f , g \bigr ) := \bigl ( L , f(- \; \cdot) \ast g \bigr ) 
		&&
		\forall g \in \Schwartz(\R^d) 
		. 
		\label{S_and_Sprime:eqn:convolution_Sprime}
	\end{align}
\end{definition}
By Theorem~\ref{S_and_Sprime:Sprime:thm:weak-star_continuity}, this extension of the convolution is weak-$\ast$ continuous. Moreover, the convolution has a neutral element in $\Schwartz'(\R^d)$, the delta distribution $\delta = \delta_0$: for all $f , g \in \Schwartz(\R^d)$ 
\begin{align*}
	\bigl ( \delta \ast f , g \bigr ) &= \bigl ( \delta , f(- \; \cdot) \ast g \bigr ) 
	= \bigl ( f(- \; \cdot) \ast g \bigr )(0) 
	\\
	&= \int_{\R^d} \dd y \, f \bigl ( - (0 -y) \bigr ) \, g(y) 
	= (f , g)
\end{align*}
holds, and thus we can succinctly write 
\begin{align}
	\delta \ast f = f 
	. 
	\label{Fourier:R:eqn:delta_unit_convolution}
\end{align}
In view of this, we can better understand what Dirac sequences are (\cf Definitions~\ref{Fourier:T:defn:approximate_id} and \ref{Fourier:R:defn:approximate_id}): since integrable functions define tempered distributions (\cf equation~\eqref{S_and_Sprime:Sprime:eqn:distributions_of_functions}), any Dirac sequence $\bigl ( \delta_{\eps} \bigr )_{\eps \in (0,\eps_0)}$ can be seen as a sequence of tempered distributions. 
Moreover, the inclusion $\imath : L^1(\R^d) \longrightarrow \Schwartz'(\R^d)$ is continuous, the fact that $\delta_{\eps} \ast f$ converges to $f$ in $L^1(\R^d)$, this sequence also converges in the distributional sense. Hence, $\delta_{\eps} \rightarrow \delta$ holds in the distributional sense as $\eps \to 0$. 
% section tempered_distributions (end)

\section{Partial differential equations on $\Schwartz'(\R^d)$} % (fold)
\label{S_and_Sprime:PDEs_on_Sprime}
We have extended the most common operations, taking Fourier transform, derivatives and the convolution, from Schwartz functions to tempered distributions. Hence, we have managed to ascribe meaning to the partial differential equation 
\begin{align*}
	L U := \sum_{\abs{\alpha} \leq N} c(\alpha) \, \partial_x^{\alpha} U = F 
\end{align*}
even if $U , F \in \Schwartz'(\R^d)$ are tempered distributions, and we can ask whether $L U = F$ has a solution. More precisely, the above equation means that 
\begin{align*}
	\bigl ( L U , \varphi \bigr ) &= \sum_{\abs{\alpha} \leq N} (-1)^{\abs{\alpha}} \, c(\alpha) \, \bigl ( U , \partial_x^{\alpha} \varphi \bigr ) 
	= \bigl ( F , \varphi \bigr ) 
\end{align*}
holds for all test functions $\varphi \in \Schwartz(\R^d)$. This point of view is commonly used when considering differential equations and apply them to functions for which derivatives in the ordinary sense do not exist. 
\medskip

\noindent
To give an explicit example, let us reconsider the heat equation 
\begin{align*}
	\partial_t u(t) = D \Delta_x u(t) 
	, 
	&&
	u(0) = u_0 
	. 
\end{align*}
In the context of $L^1(\R^d)$, the unique solution $u(t) = G(t) \ast u_0$ to the initial value problem  (Theorem~\ref{Fourier:R:thm:uniqueness_solution_heat_equation_L1}) involves the fundamental solution 
\begin{align*}
	G(t,x) &= \frac{1}{(4 \pi D t)^{\nicefrac{d}{2}}} \, \e^{- \frac{x^2}{4 D t}}
	. 
\end{align*}
Seeing as $G(t)$ is a Gaußian for $t > 0$, it is also an element of $\Schwartz(\R^d)$, and thus convolving it with a \emph{bona fide} tempered distribution makes sense. It stands to reason that 
\begin{align*}
	U(t) := G(t) \ast U_0 
\end{align*}
solves the heat equation for the initial condition $U(0) = U_0 \in \Schwartz'(\R^d)$: first of all, $U(t)$ satisfies the initial condition $U(0) = U_0$, because for all $\varphi \in \Schwartz$ 
\begin{align*}
	\lim_{t \searrow 0} \bigl ( U(t) , \varphi \bigr ) &= \lim_{t \searrow 0} \bigl ( G(t) \ast U_0 , \varphi ) 
	\\
	&
	= \lim_{t \searrow 0} \bigl ( U_0 , G(t) \ast \varphi \bigr ) 
\end{align*}
holds. Going from the first to the second line involves the definition of the convolution on $\Schwartz'(\R^d)$, equation~\eqref{S_and_Sprime:eqn:convolution_Sprime}, as well as $G(t,-x) = G(t,x)$. Given that $G(t)$ is a Dirac sequence, the limit $\lim_{t \searrow 0} G(t) \ast \varphi = \varphi$ exists in $L^1(\R^d)$; more specifically, the limit converges to the \emph{Schwartz function} $\varphi$, and we deduce
\begin{align*}
	\lim_{t \searrow 0} \bigl ( U(t) , \varphi \bigr ) &= \bigl ( U_0 , \varphi \bigr ) 
	. 
\end{align*}
Moreover, $U(t)$ solves the heat equation on $\Schwartz'(\R^d)$: 
\begin{align*}
	\Bigl ( \tfrac{\dd}{\dd t} U(t) , \varphi \Bigr ) &= \frac{\dd}{\dd t} \bigl ( U_0 , G(t) \ast \varphi \bigr )
	= \bigl ( U_0 , D \, \Delta G(t) \ast \varphi \bigr )
	\\
	&= \bigl ( D \, \Delta_x G(t) \ast U_0 , \varphi \bigr ) 
\end{align*}
Hence, $U(t)$ is a solution to the heat equation with initial condition $U_0$. Note that just like in the case of integrable functions, showing uniqueness involves additional conditions on $\partial_t U(t)$ and the initial condition $U_0$. \marginpar{2014.01.21}
% section partial_differential_equations_on_schartz_r_d_ (end)

\section{Other common spaces of distributions} % (fold)
\label{S_and_Sprime:others}
The ideas outlined in the last two sections can be applied to other spaces of test functions: one starts with a space of “nice” functions and the distributions are then comprised of the linear continuous functionals on that space. Operations such as derivatives are extended to distributions via the adjoint operator. 

Instead of Schwartz functions, often $\Cont^{\infty}_{\mathrm{c}}(\R^d)$ is used. However, working with this space is a little bit more unwieldy as it is not stable under the Fourier transform -- which also implies that $\Fourier$ does \emph{not} extend to a map $\Fourier : \Cont^{\infty}_{\mathrm{c}}(\R^d)' \longrightarrow \Cont^{\infty}_{\mathrm{c}}(\R^d)'$. Moreover, one often works on bounded domains, \ie sufficiently regular bounded subsets $\Omega \subset \R^d$. Here, the distributions are the dual of $\Cont^{\infty}(\Omega)$. Smoothness is also optional, for instance, the Dirac distribution is defined also on bounded, continuous functions, $\Cont_{\mathrm{b}}(\R^d)$, as $\delta_{x_0}(f) = f(x_0)$. 
% section other_common_spaces_of_distributions (end)
% chapter schwartz_functions_and_tempered_distributionqs (end)
\chapter{Green's functions} % (fold)
\label{Greens_functions}
The basis of this chapter is equation~\eqref{Fourier:R:eqn:delta_unit_convolution}: assume we are interested in solutions of the \emph{inhomogeneous} equation 
\begin{align}
	L u = f
	\label{Greens_functions:eqn:fundamental_PDE}
\end{align}
where we may take the differential operator $L$ to be of the form 
\begin{align*}
	L = \sum_{\abs{\alpha} \leq N} c(\alpha) \, \partial_x^{\alpha}
	, 
\end{align*}
for instance. In addition, we may impose boundary conditions if this equation is defined on a subset of $\R^d$ with boundary. Formally, the solution to \eqref{Greens_functions:eqn:fundamental_PDE} can be written as $u = L^{-1} f$ in case $L$ is invertible, but clearly, this is not very helpful as is. A more fruitful approach starts with the observation that also in the distributional sense
\begin{align*}
	\partial_x^{\alpha} \bigl ( G \ast f \bigr ) &= (\partial_x^{\alpha} G) \ast f 
	= G \ast (\partial_x^{\alpha} f) 
\end{align*}
holds true for any $G \in \Schwartz'(\R^d)$ and $f \in \Schwartz(\R^d)$. Hence, \emph{if} we can write the solution $u = G \ast f$ as the convolution of the inhomogeneity $f$ with some tempered distribution $G$, then $G$ necessarily satisfies 
\begin{align*}
	(L u)(x) &= L \bigl ( G \ast f \bigr )(x)
	% \\
	% &
	= \bigl ( L G \ast f \bigr )(x)
	\\
	&\overset{!}{=} f(x) = \bigl ( \delta_x , f \bigr ) 
	, 
\end{align*}
and we immediately obtain an equation for $G$, the \emph{Green's function} or \emph{fundamental solution}, that is \emph{independent} of $f$: 
\begin{align}
	L G(x) = \delta_x
	\label{Greens_functions:eqn:Greens_ansatz}
\end{align}
Once we solve this equation for $G$, we obtain a solution of \eqref{Greens_functions:eqn:fundamental_PDE} by setting 
\begin{align*}
	u(x) &= (G \ast f)(x) 
	= \int_{\R^d} \dd y \, G(x-y) \, f(y) 
	. 
\end{align*}
The partial differential operator $L$ as given above has \emph{translational symmetry}. In a more general setting, where translational symmetry is absent, the Green's function depends on \emph{two} variables $G(x,y)$ and the solution here is related to the inhomogeneity via 
\begin{align*}
	u(x) = \int_{\R^d} \dd y \, G(x,y) \, f(y) 
	. 
\end{align*}
The purpose of this chapter is to compute Green's functions for specific cases and explore some of the caveats. Even though \emph{a priori} it is not clear that Green's functions are actually defined in terms of a \emph{function} (as opposed to a \emph{bona fide} distribution), in many cases it turns out they are.

\section{Matrix equations as a prototypical example} % (fold)
\label{Greens_functions:matrix_eqns}
Another way to understand Green's functions is to appeal to the theory of matrices: assume $A \in \mathrm{Mat}_{\C}(n)$ is invertible and we are looking for solutions $x \in \C^n$ of the equation 
\begin{align*}
	A x = y
\end{align*}
for some fixed $y \in \C^n$. We can expand $y = \sum_{j = 1}^n y_j \, e_j$ in terms of the canonical basis vectors $e_j = ( 0 , \ldots , 0 , 1 , 0 , \ldots)$, and if we solve 
\begin{align*}
	A g_j = e_j 
\end{align*}
for all $j = 1 , \ldots , n$, then we obtain $x = \sum_{j = 1} y_j \, g_j$ as the solution of $A x = y$. 
Moreover, the matrix $G := \bigl ( g_1 \vert \cdots \vert g_n \bigr )$ whose columns are comprised of the vectors $g_j$ satisfies 
\begin{align*}
	x = G y
	. 
\end{align*}
Put another way, here $G$ is just the \emph{matrix inverse} of $A$. This already points to one fundamental obstacle for the existence of Green's functions, namely it hinges on the invertibility of $A$. 

The story is more complicated if $A$ is not invertible. For instance, the equation $A x = y$ also makes sense in case $A \in \mathrm{Mat}_{\C}(n,m)$ is a rectangular matrix, and the vectors $x \in \C^m$ and $y \in \C^n$ are from vector spaces of different dimension. Here, it is not clear whether a unique \emph{left}-inverse $G \in \mathrm{Mat}_{\C}(m,n)$ exists: there may be cases when one can find no such $G$ ($A x = y$ has no solution) or when there is a family of left-inverses ($A x = y$ does not have a unique solution). 
\medskip

\noindent
In the same vein, the Green's \emph{function} $G(x,y)$ gives the response at $x$ to a unit impulse at $y$. The solution $u(x) = G(x) \ast f$ at $x$ can be seen as the “infinite superposition” of $G(x,y) \, f(y)$ where the unit impulse at $y$ is scaled by the inhomogeneity $f(y)$. 
% section matrix_equations_as_a_prototypical_example (end)

\section{Simple examples} % (fold)
\label{Greens_functions:simple_examples}
To exemplify the general method, we solve \eqref{Greens_functions:eqn:Greens_ansatz} for a particularly simple case, the \emph{one-dimensional Poisson equation}
\begin{align}
	- \partial_x^2 u = f 
	. 
	\label{Greens_functions:eqn:1d_Poisson}
\end{align}
According to our discussion, we first need to solve 
\begin{align}
	- \partial_x^2 G(x,y) = \delta(x-y)
	\label{Greens_functions:eqn:1d_Poisson_fundamental}
\end{align}
for $G$. Put another way, $G$ is the second anti-derivative of the Dirac distribution $\delta$ which can be found “by hand”, namely 
\begin{align*}
	G(x,y) &= - \rho(x-y) + a x + b
\end{align*}
where 
\begin{align*}
	\rho(x) &= 
	\begin{cases}
		0 & x \in (-\infty,0) \\
		x & x \in [0,+\infty) \\
	\end{cases}
	. 
\end{align*}
It is instructive to verify \eqref{Greens_functions:eqn:1d_Poisson_fundamental}: clearly, the term $a x + b$ is harmless, because for smooth functions weak derivatives (meaning derivatives on $\Schwartz'(\R^d)$) and ordinary derivatives (of $\Cont^k(\R^d)$ functions) coincide. Thus, $- \partial_x^2 G(x,y) = - \partial_x^2 \rho(x-y)$ holds and \eqref{Greens_functions:eqn:1d_Poisson_fundamental} follows from the fact that the derivative of the \emph{Heavyside function} 
\begin{align*}
	\theta(x) &= 
	\begin{cases}
		0 & x \in (-\infty,0) \\
		1 & x \in [0,+\infty) \\
	\end{cases}
\end{align*}
is the Dirac distribution. Note that it does not matter how we define $\rho(0)$ and $\theta(0)$, this is just a modification on a set of measure $0$. For instance, it is the size of the jump 
\begin{align*}
	\lim_{x \nearrow 0} \theta(x) - \lim_{x \searrow 0} \theta(x) = 1 
\end{align*}
which matters, and its size is independent of how we define $\theta(0)$ and $\rho(0)$. 

So far we have derived the Green's function $G$ for the one-dimensional Poisson equation on all of $\R$. $G$ depends on the two parameters $a , b \in \R$, \ie $G$ is \emph{not} unique. Additional conditions, \eg \emph{boundary conditions}, are needed to nail down a \emph{unique} solution. For instance, we could restrict \eqref{Greens_functions:eqn:1d_Poisson} to the interval $[0,1]$ and use Dirichlet boundary conditions $u(0) = 0 = u(1)$. Note that we need \emph{two} condition to fix the values of the \emph{two} parameters $a$ and $b$. To determine the values of $a$ and $b$ (which depend on the second variable $y$), we solve $G(0,y) = b = 0$ and 
\begin{align*}
	 G(1,y) &= - (1 - y) + a = 0 
\end{align*}
for $a$ and $b$, and obtain 
\begin{align*}
	G(x,y) &= - \rho(x-y) + (1-y) \, x 
	= 
	\begin{cases}
		(1-y) \, x & x \in [0,y] \\
		y \, (1 - x) & x \in (y,1] \\
	\end{cases}
	. 
\end{align*}
The solution $u(x) = G(x) \ast f$ satisfies the boundary conditions: from $G(0,y) = 0$ we immediately deduce
\begin{align*}
	u(0) = \int_0^1 \dd y \, G(0,y) \, f(y) = 0 
\end{align*}
and similarly $u(1) = 0$ follows from $G(1,y) = 0$. 
\medskip

\noindent
Another example is the PDE from homework problem 42, 
\begin{align*}
	L u := \bigl ( \partial_{x_1}^2 + 2 \partial_{x_2}^2 + 3 \partial_{x_1} - 4 \bigr ) u = f 
	, 
\end{align*}
where the Fourier representation of $L$ is the multiplication operator associated to the polynomial $P(\xi) = - \xi_1^2 - 2 \xi_2^2 + \ii \, 3 \xi_1 - 4$. The inverse of this polynomial enters the solution 
\begin{align*}
	u(x) = \frac{1}{2 \pi} \, \bigl ( \Fourier^{-1} (\nicefrac{1}{P}) \bigr ) \ast f (x) 
	, 
\end{align*}
one can then read off the Green's function as 
\begin{align*}
	G(x,y) = \frac{1}{2 \pi} \, \bigl ( \Fourier^{-1} (\nicefrac{1}{P}) \bigr )(x-y) 
	. 
\end{align*}
%
% section simple_examples (end)

\section{Green's functions on $\R^d$} % (fold)
\label{Greens_functions:Rd}
The second example gives a strategy to \emph{compute} Green's functions on $\R^d$: the crucial ingredient here is the translational symmetry of the differential operator $L$, \ie $L$ commutes with the translation operator $(T_y u)(x) := u(x-y)$, $y \in \R^d$. Explicitly, we will discuss the Poisson equation 
\begin{align}
	- \Delta_x u = f
	\label{Greens_functions:eqn:Poisson_eqn}
\end{align}
in dimension $2$ and $3$ as well as the three-dimensional wave equation
\begin{align}
	\bigl ( \tfrac{1}{c^2} \partial_t^2 - \Delta_x \bigr ) u = f 
	. 
	\label{Greens_functions:eqn:wave_eqn}
\end{align}
Since the strategy to solve these equations is identical, we will outline how to solve $L u = f$ for a differential operator of the form 
\begin{align*}
	L = \sum_{\abs{\alpha} \leq N} c(\alpha) \, \partial_x^{\alpha} 
	. 
\end{align*}
Then in Fourier representation, 
\begin{align*}
	\Fourier \bigl ( L u \bigr ) &= \Fourier \, L \, \Fourier^{-1} \, \Fourier u 
	= P \hat{u}
	\overset{!}{=} \hat{f}
	, 
\end{align*}
the differential operator $L$ transforms to multiplication by the polynomial 
\begin{align*}
	P(\xi) = \sum_{\abs{\alpha} \leq N} \ii^{\abs{\alpha}} \, c(\alpha) \, \xi^{\alpha} 
	,  
\end{align*}
and $u(x) = G(x) \ast f$ can be expressed as the convolution of the Green's function 
\begin{align*}
	G(x,y) := (2\pi)^{\nicefrac{d}{2}} \, \bigl ( \Fourier^{-1} (\nicefrac{1}{P}) \bigr )(x-y) 
\end{align*}
with the inhomogeneity $f$. 

In other words, the problem of \emph{finding the Green's function reduces to computing the inverse Fourier transform of the rational function $\tfrac{1}{P}$}. For instance, one can interpret $\Fourier^{-1} (\nicefrac{1}{P})$ as an integral in the complex plane, use the method of partial fractions and employ Cauchy's integral formula. 
\medskip

\noindent
For the special case of the Poisson equation $- \Delta_x u = f$, another way to obtain the Green's function relies on Green's formula 
\begin{align}
	\int_{V} \dd x \, \bigl ( \Delta_x u(x) \; v(x) - u(x) \; \Delta_x v(x) \bigr ) = \int_{\partial V} \dd S \cdot \bigl ( \partial_n u(x) \; v(x) - u(x) \; \partial_n v(x) \bigr )
	. 
	\label{Greens_functions:eqn:Greens_formula}
\end{align}
Here, $V$ is a subset of $\R^n$ with boundary $\partial V$ and $u$ or $v$ has compact support. 
\begin{theorem}[Green's function for the Poisson equation]\label{Greens_functions:thm:Poisson_Green}
	The Green's function for the Poisson equation is 
	\begin{align*}
		G(x,y) = 
		\begin{cases}
			\tfrac{1}{C_d} \, \abs{x-y}^{2-d} & d \neq 2 \\
			- \frac{1}{2\pi} \, \ln \abs{x-y} & d = 2 \\
		\end{cases}
	\end{align*}
	where $C_d = (d-2) \, \mathrm{Area}(\mathbb{S}^{d-1})$ and $\mathrm{Area}(\mathbb{S}^{d-1})$ is the surface area of the $d-1$-dimensional sphere. 
\end{theorem}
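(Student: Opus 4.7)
The plan is to exploit the translational invariance of $-\Delta_x$ to reduce the problem to finding a single fundamental solution. Writing $G(x,y) = g(x-y)$, the defining equation $-\Delta_x G(\,\cdot\,,y) = \delta_y$ becomes
\begin{align*}
	-\Delta g = \delta_0 \quad \text{in } \Schwartz'(\R^d),
\end{align*}
so it suffices to verify this single distributional identity for the candidate
\begin{align*}
	g(z) := \begin{cases} \tfrac{1}{C_d} \, \abs{z}^{2-d} & d \neq 2, \\ -\tfrac{1}{2\pi} \, \ln \abs{z} & d = 2. \end{cases}
\end{align*}
The motivation for this ansatz is rotational symmetry together with scaling: any radially symmetric harmonic function on $\R^d\setminus\{0\}$ must solve the ODE $g''(r) + \tfrac{d-1}{r}\, g'(r) = 0$, whose nontrivial bounded-at-infinity solutions are precisely $r^{2-d}$ (for $d\neq 2$) and $\ln r$ (for $d=2$).

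First I would check two preparatory facts: (i) $g$ is harmonic on $\R^d \setminus \{0\}$ in the classical sense (a direct computation in spherical coordinates), and (ii) $g \in L^1_{\mathrm{loc}}(\R^d)$, so that $g$ defines a tempered distribution via the canonical embedding from Chapter~\ref{S_and_Sprime:Sprime}. Local integrability reduces to checking integrability near $0$, which in spherical coordinates produces the integrand $r^{2-d}\,r^{d-1} = r$ (or $r\,\abs{\ln r}$ when $d=2$), both integrable near $0$.

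Next, for an arbitrary test function $\varphi \in \Cont^{\infty}_{\mathrm{c}}(\R^d) \subset \Schwartz(\R^d)$, I would compute
\begin{align*}
	\bigl ( -\Delta g , \varphi \bigr ) = \bigl ( g , -\Delta \varphi \bigr ) = - \lim_{\eps \to 0} \int_{\abs{z} > \eps} g(z) \, \Delta \varphi(z) \, \dd z,
\end{align*}
where the passage to the limit is justified by Dominated Convergence together with local integrability of $g$. On the domain $V_\eps := \{ \eps < \abs{z}, \; z \in \supp\varphi\}$ one has $\Delta g = 0$, so Green's formula~\eqref{Greens_functions:eqn:Greens_formula} turns the bulk integral into a pure boundary term on $\partial B_\eps(0)$ (the outer boundary contribution vanishes because $\varphi$ has compact support). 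The outward normal to $V_\eps$ on $\partial B_\eps$ points radially inward, which gives
\begin{align*}
	- \int_{V_\eps} g \, \Delta\varphi \, \dd z = \int_{\partial B_\eps} \bigl ( \varphi \, \partial_r g - g \, \partial_r \varphi \bigr ) \, \dd S,
\end{align*}
with $\partial_r$ the radial derivative. For $d \neq 2$, $\partial_r g(z) = \tfrac{2-d}{C_d}\abs{z}^{1-d}$, so on $\partial B_\eps$ one gets $\partial_r g = \tfrac{2-d}{C_d}\,\eps^{1-d}$; combined with $\abs{\partial B_\eps} = \mathrm{Area}(\mathbb{S}^{d-1})\,\eps^{d-1}$ and continuity of $\varphi$, the first boundary integral converges to $\tfrac{2-d}{C_d}\,\mathrm{Area}(\mathbb{S}^{d-1})\,\varphi(0) = -\varphi(0)$ by the definition of $C_d$. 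The second boundary integral is $O(\eps^{2-d}\cdot\eps^{d-1}) = O(\eps) \to 0$. The case $d=2$ is analogous with $\partial_r g = -\tfrac{1}{2\pi}\,\tfrac{1}{r}$, yielding the same limit $-\varphi(0)$, while the other term is $O(\eps\,\abs{\ln\eps}) \to 0$. Hence $(-\Delta g,\varphi) = \varphi(0) = (\delta_0,\varphi)$.

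The main obstacle is the careful bookkeeping of signs and the $\eps \to 0$ limit of the boundary integrals — in particular verifying that the prefactor $C_d$ arising from the radial derivative of $\abs{z}^{2-d}$ matches exactly the normalization $(d-2)\,\mathrm{Area}(\mathbb{S}^{d-1})$ so that the limit produces $\varphi(0)$ rather than some unwanted multiple. The harmonicity of $g$ off the origin does the crucial work of annihilating the bulk contribution; all of the $\delta$ is captured by the small-sphere boundary term, which is a clean instance of the general philosophy that distributional singularities of fundamental solutions live entirely at the singular set of the classical formula.
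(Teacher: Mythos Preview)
Your plan is essentially the paper's proof: excise $B_\eps(0)$, use harmonicity of $g$ away from the origin together with Green's formula~\eqref{Greens_functions:eqn:Greens_formula} to reduce to boundary terms on $\partial B_\eps$, and pass to the limit $\eps \to 0$. One sign to recheck in your displayed boundary identity: since the outward normal to $V_\eps$ on $\partial B_\eps$ is $-\partial_r$, Green's formula actually gives $-\int_{V_\eps} g\,\Delta\varphi\,\dd z = \int_{\partial B_\eps}\bigl(-\varphi\,\partial_r g + g\,\partial_r\varphi\bigr)\,\dd S$, and it is this sign that makes the first term converge to $+\varphi(0)$ as claimed.
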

\begin{proof}
	Let us prove the case $d \neq 2$, the arguments for two-dimensional case are virtually identical, one only needs to replace $\abs{x}^{d-2}$ with $\ln \abs{x}$. First of all, $\abs{x}^{2-d}$ and $\ln \abs{x}$ define tempered distributions (use polar coordinates to show and arguments analogous to Lemma~\ref{S_and_Sprime:schwartz_functions:lem:Lp_estimate} to prove continuity). Moreover, away from $x = 0$, $\Delta_x \abs{x}^{2-d} = 0$ and $\Delta_x \ln \abs{x} = 0$ hold in the ordinary sense. 
	
	Now let $\varphi \in \Schwartz(\R^d)$ be arbitrary. Since smooth function with compact support are dense in $\Schwartz(\R^d)$, we may assume $\varphi \in \Cont^{\infty}_{\mathrm{c}}(\R^d)$. Define $V_{\eps} := \R^d \setminus B_{\eps}$ where $B_{\eps} := \bigl \{ x \in \R^d \; \vert \; \abs{x} < \eps \bigr \}$. On $V_{\eps}$ where we have cut out a small hole around the origin, $\Delta_x \abs{x}^{2-d} = 0$ holds and we can write \marginpar{2014.01.23}
	\begin{align*}
		\Bigl ( - \Delta_x \abs{x}^{2-d} \, , \, \varphi \Bigr ) &= - \Bigl ( \abs{x}^{2-d} \, , \, \Delta_x \varphi \Bigr ) 
		= - \lim_{\eps \searrow 0} \int_{V_{\eps}} \dd x \, \abs{x}^{2-d} \, \Delta_x \varphi(x) 
		\\
		&= \lim_{\eps \searrow 0} \int_{V_{\eps}} \dd x \, \Bigl ( \Delta_x \abs{x}^{2-d} \, \varphi(x) - \abs{x}^{2-d} \, \Delta_x \varphi(x) \Bigr )
		. 
	\end{align*}
	Now Green's formula applies, 
	\begin{align*}	
		\ldots
		&= \lim_{\eps \searrow 0} \int_{\partial V_{\eps}} \dd S \cdot \Bigl ( \partial_r \abs{x}^{2-d} \, \varphi(x) - \abs{x}^{2-d} \, \partial_r \varphi(x) \Bigr )
		\\
		&= - \lim_{\eps \searrow 0} \int_{\partial B_{\eps}} \dd S \cdot \Bigl ( (2-d) \, \eps^{1-d} \, \varphi(x) - \eps^{2-d} \, \partial_r \varphi(x) \Bigr )
		, 
	\end{align*}
	and we obtain an integral with respect to $\dd S$, the surface measure of the sphere of radius $\eps$. Note that the minus sign is due to the difference in orientation (the outward normal on $\partial V_{\eps}$ points towards the origin while the surface normal of $\partial B_{\eps}$ points away from it). Since the surface area of $B_{\eps}$ scales like $\eps^{d-1}$, the second term vanishes while the first term converges to 
	\begin{align*}
		\Bigl ( - \Delta_x \abs{x}^{2-d} \, , \, \varphi \Bigr ) &= (d-2) \, \mathrm{Area}(\mathbb{S}^{d-1}) \, \varphi(0) 
		\\
		&= (d-2) \, \mathrm{Area}(\mathbb{S}^{d-1}) \, (\delta , \varphi) 
		. 
	\end{align*}
	%
	% This concludes the proof. 
\end{proof}
%
% section green_s_functions_on_r_d_ (end)

\section{Green's functions on domains: implementing boundary conditions} % (fold)
\label{Greens_functions:domains}
Predictably, the presence of boundaries complicates things. To exemplify some of the hurdles, let us discuss the Poisson equation 
\begin{align}
	- \Delta_x u = f 
	, 
	\qquad \qquad 
	u \vert_{\partial \Omega} = h \in \Cont^1(\partial \Omega)
	, 
	\label{Greens_function:eqn:Poisson_boundary}
\end{align}
on a bounded subset $\Omega$ of $\R^2$ with Dirichlet boundary conditions, \ie we prescribe the value of $u$ on the boundary (rather than that of its normal derivative). 

Imposing boundary conditions is necessary, because $- \Delta_x$ is not injective; The kernel of $- \Delta_x$ is made up of \emph{harmonic functions}, \ie functions which satisfy 
\begin{align*}
	- \Delta_x h = 0 
	. 
\end{align*}
For instance, the function $h(x) = \e^{x_1} \sin x_2$ is harmonic. Consequently, if $u$ is \emph{a} solution to \eqref{Greens_function:eqn:Poisson_boundary}, then $u + h$ is another. However, it turns out that fixing $u$ on the boundary $\partial \Omega$ singles out a \emph{unique} solution. 

The first step in this direction is the following representation of a sufficiently regular function on $\Omega$: 
\begin{proposition}\label{Greens_functions:prop:rep_solution_Poisson_domain}
	Suppose $\Omega \subset \R^2$ be a bounded domain with piecewise $\Cont^1$ boundary $\partial \Omega$ and pick $u \in \Cont^2(\Omega) \cap \Cont^1(\bar{\Omega})$. Then for any $x \in \Omega$, we can write 
	\begin{align*}
		u(x) &= - \int_{\Omega} \dd y \, G(x,y) \, \Delta_y u(y) \, + 
		\\
		&\qquad + 
		\int_{\partial \Omega} \dd S_y \cdot \Bigl ( G(x,y) \, \partial_{n_y} u(y) - \partial_{n_y} G(x,y) \, u(y) \Bigr ) 
	\end{align*}
	where the index $y$ in the surface measure $\dd S_y$ and the surface normal $n_y$ indicates that they are associated to the variable $y$. 
\end{proposition}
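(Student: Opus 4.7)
The plan is to apply Green's formula~\eqref{Greens_functions:eqn:Greens_formula} to $u$ and the function $v(y) := G(x,y) = - \tfrac{1}{2\pi} \ln \abs{x-y}$ on the punctured domain $V_{\eps} := \Omega \setminus \overline{B_{\eps}(x)}$, where $B_{\eps}(x)$ is the open ball of radius $\eps > 0$ centered at the fixed point $x \in \Omega$. We choose $\eps$ small enough that $\overline{B_{\eps}(x)} \subset \Omega$. This detour is forced on us because $G(x, \cdot \,)$ has a logarithmic singularity at $y = x$, so Green's formula cannot be applied directly on $\Omega$.

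On $V_{\eps}$ the function $v = G(x, \cdot \,)$ is smooth, and from the proof of Theorem~\ref{Greens_functions:thm:Poisson_Green} we know that $\Delta_y G(x,y) = 0$ pointwise for $y \neq x$. Thus Green's formula~\eqref{Greens_functions:eqn:Greens_formula} applied on $V_\eps$ collapses to
\begin{align*}
	\int_{V_{\eps}} \dd y \, G(x,y) \, \Delta_y u(y)
	&= \int_{\partial \Omega} \dd S_y \cdot \Bigl ( G(x,y) \, \partial_{n_y} u(y) - u(y) \, \partial_{n_y} G(x,y) \Bigr ) \\
	&\quad + \int_{\partial B_{\eps}(x)} \dd S_y \cdot \Bigl ( G(x,y) \, \partial_{n_y} u(y) - u(y) \, \partial_{n_y} G(x,y) \Bigr ),
\end{align*}
where on $\partial B_{\eps}(x)$ the outward normal with respect to $V_{\eps}$ points \emph{into} the ball, i.e.\ $n_y = -(y-x)/\eps$.

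Now I would let $\eps \searrow 0$ term by term. The left-hand side converges to $\int_\Omega \dd y \, G(x,y) \, \Delta_y u(y)$ by dominated convergence, using that $\ln \abs{x-y}$ is locally integrable in two dimensions and that $\Delta_y u$ is bounded on $\overline{\Omega}$. The integral over $\partial \Omega$ is $\eps$-independent and stays put. For the sphere $\partial B_\eps(x)$, a direct computation of $\nabla_y G(x,y) = -\tfrac{1}{2\pi} (y-x)/\abs{y-x}^2$ combined with the inward orientation of $n_y$ gives $\partial_{n_y} G(x,y) = \tfrac{1}{2\pi \eps}$ on the sphere, so
\begin{align*}
	\int_{\partial B_{\eps}(x)} \dd S_y \, u(y) \, \partial_{n_y} G(x,y)
	= \frac{1}{2\pi \eps} \int_{\partial B_{\eps}(x)} \dd S_y \, u(y)
	\xrightarrow{\eps \searrow 0} u(x)
\end{align*}
by the mean value theorem for integrals and the continuity of $u$ at $x$; meanwhile $\snorm{G(x, \cdot \,)}_{L^\infty(\partial B_\eps(x))} = \tfrac{1}{2\pi} \sabs{\ln \eps}$ together with $\mathrm{Area}(\partial B_\eps(x)) = 2\pi \eps$ and $u \in \Cont^1(\overline{\Omega})$ bound the remaining sphere term by $C \, \eps \, \sabs{\ln \eps} \to 0$.

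Putting these three limits together yields
\begin{align*}
	\int_{\Omega} \dd y \, G(x,y) \, \Delta_y u(y) = \int_{\partial \Omega} \dd S_y \cdot \Bigl ( G \, \partial_{n_y} u - u \, \partial_{n_y} G \Bigr ) - u(x),
\end{align*}
and rearranging gives the representation formula. The main obstacle, and the step requiring the most care, is the delicate cancellation on $\partial B_{\eps}(x)$: the singularity of $\partial_{n_y} G$ conspires with the shrinking surface area of the sphere to produce exactly the point value $u(x)$, and bookkeeping the orientation of the outward normal correctly is essential for the signs in the final formula to come out right.
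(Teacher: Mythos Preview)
Your proposal is correct and carries out exactly the strategy the paper sketches but explicitly declines to write out: the paper only gives the formal manipulation using $-\Delta_y G(x,y) = \delta(x-y)$ and Green's formula, then says ``To make this rigorous, one has to cut a small hole around $x$ and adapt the strategy from the proof of Theorem~\ref{Greens_functions:thm:Poisson_Green}. However, we will skip the proof.'' You have filled in precisely those details, with the correct orientation bookkeeping on $\partial B_\eps(x)$ and the correct asymptotics $\eps\,\sabs{\ln\eps}\to 0$ and $\tfrac{1}{2\pi\eps}\int_{\partial B_\eps}u\,\dd S \to u(x)$.

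One minor remark: you write that $\Delta_y u$ is bounded on $\overline{\Omega}$, but the hypothesis is only $u\in\Cont^2(\Omega)\cap\Cont^1(\bar\Omega)$, so $\Delta u$ need not extend continuously to the boundary. This does not affect your argument where it matters (near $x$, where $\Delta u$ is certainly bounded and the log singularity of $G$ is integrable); the integrability of $G\,\Delta u$ over all of $\Omega$ is implicitly assumed by the statement itself, since otherwise the representation formula would not make sense.
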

The idea here is to exploit $- \Delta_x G(x,y) = \delta(x-y)$ as well as Green's formula~\eqref{Greens_functions:eqn:Greens_formula}, and \emph{formally}, we immediately obtain 
\begin{align*}
	\int_{\Omega} \dd x \, &\Bigl ( \Delta_y G(x,y) \; u(y) - G(x,y) \; \Delta_y u(y) \Bigr ) = 
	\\
	&\qquad \qquad \qquad 
	= - u(x) - \int_{\Omega} \dd x \, G(x,y) \; \Delta_y u(y) 
	\\
	&\qquad \qquad \qquad 
	= \int_{\partial \Omega} \dd S \cdot \Bigl (\partial_n G(x,y) \; u(y) - G(x,y) \; \partial_n u(y) \Bigr )
\end{align*}
To make this rigorous, one has to cut a small hole around $x$ and adapt the strategy from the proof of Theorem~\ref{Greens_functions:thm:Poisson_Green}. However, we will skip the proof. 
\begin{theorem}\label{Greens_functions:thm:solution_Poisson_domain}
	The Green's function for the Dirichlet boundary value problem for the Poisson equation on a bounded domain $\Omega \subset \R^2$ with $\Cont^1$ boundary $\partial \Omega$ has the form 
	\begin{align*}
		G_{\Omega}(x,y) = G(x,y) + b(x,y)
	\end{align*}
	where $G(x,y) = - \frac{1}{2\pi} \, \ln \abs{x-y}$ is the Green's function of the free problem and $b(x,y)$ is the solution to the boundary value problem 
	\begin{align}
		\Delta_x b = 0 
		, 
		\qquad \qquad 
		b(x,y) = - G(x,y) 
		\; \forall x \in \partial \Omega \mbox{ or } \forall y \in \partial \Omega 
		. 
		\label{Greens_function:eqn:Greens_function_harmonic_term}
	\end{align}
	Then the solution $u$ to the inhomogeneous Dirichlet problem~\eqref{Greens_function:eqn:Poisson_boundary} is given by 
	\begin{align}
		u(x) &= \int_{\Omega} \dd y \, G_{\Omega}(x,y) \, f(y) - \int_{\partial \Omega} \dd S_y \cdot \partial_{n_y} G_{\Omega}(x,y) \, h(y) 
		. 
		\label{Greens_functions:eqn:solution_Poisson_domain}
	\end{align}
\end{theorem}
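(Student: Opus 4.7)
The strategy is to start from the representation formula in Proposition~\ref{Greens_functions:prop:rep_solution_Poisson_domain} and then eliminate the boundary term involving $\partial_{n_y} u$, which is \emph{not} prescribed by the Dirichlet data. The tool for this cancellation is the harmonic correction $b(x,y)$ satisfying \eqref{Greens_function:eqn:Greens_function_harmonic_term}; its existence is a classical Dirichlet problem for $-\Delta$ on $\Omega$ with continuous boundary datum $y \mapsto -G(x,y)$ (which is smooth on $\partial \Omega$ for each $x \in \Omega$, since $x$ is an interior point), and will be assumed here as a standing fact from potential theory.

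The main computation is an application of Green's formula~\eqref{Greens_functions:eqn:Greens_formula} to the pair $u$ and $b(x,\cdot)$ on $\Omega$. Since $\Delta_y b(x,y) = 0$ and $b(x,y) = -G(x,y)$ for $y \in \partial\Omega$, this gives
\begin{align*}
    \int_\Omega \dd y \, b(x,y) \, \Delta_y u(y)
    &= \int_{\partial\Omega} \dd S_y \cdot \Bigl( b(x,y) \, \partial_{n_y} u(y) - u(y) \, \partial_{n_y} b(x,y) \Bigr) \\
    &= - \int_{\partial\Omega} \dd S_y \cdot G(x,y) \, \partial_{n_y} u(y) - \int_{\partial\Omega} \dd S_y \cdot u(y) \, \partial_{n_y} b(x,y).
\end{align*}
Adding this identity to the representation formula of Proposition~\ref{Greens_functions:prop:rep_solution_Poisson_domain}, the two $G(x,y) \, \partial_{n_y} u(y)$ boundary terms cancel exactly, and one is left with
\begin{align*}
    u(x) = -\int_\Omega \dd y \, G_\Omega(x,y) \, \Delta_y u(y) - \int_{\partial\Omega} \dd S_y \cdot \partial_{n_y} G_\Omega(x,y) \, u(y),
\end{align*}
where $G_\Omega = G + b$. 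Substituting $-\Delta_y u = f$ in $\Omega$ and $u|_{\partial\Omega} = h$ yields formula~\eqref{Greens_functions:eqn:solution_Poisson_domain}.

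To finish, one should verify that the right-hand side of \eqref{Greens_functions:eqn:solution_Poisson_domain} does indeed solve \eqref{Greens_function:eqn:Poisson_boundary}: $-\Delta_x$ applied to the volume term gives $f(x)$ because $-\Delta_x G_\Omega(x,y) = \delta(x-y)$ (the harmonic $b$ contributes nothing), while $-\Delta_x$ applied to the boundary term vanishes for $x \in \Omega$; the Dirichlet boundary values are recovered using $G_\Omega(x,y) \to 0$ as $x \to \partial\Omega$ together with the standard jump relation for the double-layer potential $\int_{\partial\Omega}\partial_{n_y}G_\Omega(x,y)\,h(y)\,\dd S_y$. The hard part of the argument is really the existence and regularity of the harmonic corrector $b$, which in a fully rigorous treatment requires either variational methods or the Perron method for subharmonic functions; everything else is algebraic manipulation of Green's formula. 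The symmetry $G_\Omega(x,y) = G_\Omega(y,x)$, which justifies the two equivalent formulations of the boundary condition in \eqref{Greens_function:eqn:Greens_function_harmonic_term}, can be established separately by applying Green's formula to $G_\Omega(x,\cdot)$ and $G_\Omega(y,\cdot)$ on $\Omega$ minus small balls around $x$ and $y$.
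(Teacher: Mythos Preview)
Your argument is correct and follows essentially the same route as the paper: apply Green's identity to the pair $u$ and $b(x,\cdot)$, use $\Delta_y b = 0$ and $b = -G$ on $\partial\Omega$ to cancel the unwanted $\partial_{n_y} u$ term in the representation formula of Proposition~\ref{Greens_functions:prop:rep_solution_Poisson_domain}, and then substitute the data $f$ and $h$. Your additional remarks on verifying the formula, on the jump relation for the double-layer potential, and on the symmetry of $G_\Omega$ go a bit beyond what the paper records, but the core computation is identical.
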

A priori it is not clear whether \eqref{Greens_function:eqn:Greens_function_harmonic_term} has a solution; we postpone a more in-depth discussion of harmonic functions and proceed under the assumption it does. 
\begin{proof}
	Clearly, seeing as $G_{\Omega}$ is the sum of the free fundamental solution $G$ and a harmonic function, $\Delta G_{\Omega}(x,y) = \delta(x-y)$ still holds. Moreover, the Green's function implements the boundary conditions, namely 
	\begin{align*}
		G_{\Omega}(x,y) = G(x,y) + b(x,y) = 0 
	\end{align*}
	is satisfied by construction on the boundary. 
	
	Then Green's identity~\eqref{Greens_functions:eqn:Greens_formula} and $\Delta_x b = 0$ imply 
	\begin{align*}
		0 &= - \int_{\Omega} \dd y \, b(x,y) \, \Delta_y u(y) 
		+ \int_{\partial \Omega} \dd S_y \cdot \Bigl ( b(x,y) \; \partial_{n_y} u(y) - \partial_{n_y} b(x,y) \; u(y) \Bigr ) 
		% \\
		% &= \int_{\Omega} \dd y \, b(x,y) \, f(y) 
		% + \int_{\partial \Omega} \dd S_y \cdot \Bigl ( -G(x,y) \; \partial_{n_y} u(y) - \partial_{n_y} b(x,y) \; h(y) \Bigr )
		. 
	\end{align*}
	The function $u$ solves the inhomogeneous Poisson equation $- \Delta_x u = f$; On the boundary, $b(x,y)$ coincides with $- G(x,y)$ and $u(y) = h(y)$ holds, and we deduce 
	\begin{align*}
		\int_{\partial \Omega} \dd S_y \cdot G(x,y) \; \partial_{n_y} u(y) &= - \int_{\Omega} \dd y \, b(x,y) \, f(y) 
		- \int_{\partial \Omega} \dd S_y \cdot \partial_{n_y} b(x,y) \; h(y) 
		. 
	\end{align*}
	This term cancels one of the boundary terms in the integral representation of  Proposition~\ref{Greens_functions:prop:rep_solution_Poisson_domain}, and we recover equation~\eqref{Greens_functions:eqn:solution_Poisson_domain}, 
	\begin{align*}
		u(x) &= - \int_{\Omega} \dd y \, G(x,y) \, \Delta_y u(y) \, + 
		\\
		&\qquad + 
		\int_{\partial \Omega} \dd S_y \cdot \Bigl ( G(x,y) \, \partial_{n_y} u(y) - \partial_{n_y} G(x,y) \, u(y) \Bigr ) 
		\\
		&= \int_{\Omega} \dd y \, G_{\Omega}(x,y) \, f(y) - \int_{\partial \Omega} \dd S_y \cdot \partial_{n_y} G_{\Omega}(x,y) \, h(y) 
		. 
	\end{align*}
\end{proof}
%
% section green_s_functions_on_domains_implementing_boundary_conditions (end)
% section Green's functions (end)
% chapter schwartz_functions_and_tempered_distributionqs (end)
\chapter{Quantum mechanics} % (fold)
\label{quantum}
The explanation of the photoelectric effect through light \emph{quanta} is the name sake for quantum mechanics. Quantization here refers to the idea that energy stored in light comes in “chunks” known as \emph{photons}, and that the energy per photon depends only on the frequency. This is quite a departure from the classical theory of light through Maxwell's equations (\cf Chapter~\ref{operators:Maxwell}). 
% CHANGED add book recommendations 

The reader can only get a glimpse of quantum theory in this chapter. 
A good standard physics textbook on the subject is \cite{Sakurai:quantum_mechanics:1994} while the mathematics of quantum mechanics is covered in more depth in \cite{Teschl:quantum_mechanics:2009,Gustafson_Sigal:quantum_mechanics:2011}.

\section{Paradigms} % (fold)
\label{quantum:paradigms}
% 
% TODO add picture Stern-Gerlach
The simplest \emph{bona fide} quantum system is that of a quantum spin, and it can be used to give an effective description of the \emph{Stern-Gerlach experiment} where a beam of neutral atoms with magnetic moment $g$ is sent through a magnet with inhomogeneous magnetic field $B = (B_1 , B_2 , B_3)$. It was observed experimentally that the beam splits in two rather than fan out with continuous distribution. Hence, the system behaves as if only two spin configurations, spin-up $\uparrow$ and spin-down $\downarrow$, are realized. A simplified (effective) model neglects the translational degree of freedom and focusses only on the internal spin degree of freedom. 
% FIXME mention 2013 paper
Then the energy observable, the \emph{hamiltonian}, is the matrix 
\begin{align*}
	H = g B \cdot S
\end{align*}
which involves the spin operator $S_j := \frac{\hbar}{2} \sigma_j$ defined in terms of Planck's constant $\hbar$ and the three Pauli matrices
\begin{align*}
	\sigma_1 = \left (
	\begin{matrix}
		0 & 1 \\
		1 & 0 \\
	\end{matrix}
	\right )
	, 
	\qquad \qquad 
	\sigma_2 = \left (
	\begin{matrix}
		0 & - \ii \\
		+ \ii & 0 \\
	\end{matrix}
	\right )
	, 
	\qquad \qquad 
	\sigma_3 = \left (
	\begin{matrix}
		+1 & 0 \\
		0 & -1 \\
	\end{matrix}
	\right ) 
	, 
\end{align*}
and the magnetic moment $g$ and the magnetic field $B$. The prefactor of the Pauli matrices are real, and thus $H = H^*$ is a hermitian matrix. 

For instance, assume $B = (0,0,b)$ points in the $x_3$-direction. Then spin-up and spin-down (seen from the $x_3$-direction) are the \emph{eigenvectors} of 
\begin{align*}
	H = 
	\left (
	\begin{matrix}
		+ \frac{\hbar g b}{2} & 0 \\
		0 & - \frac{\hbar g b}{2} \\
	\end{matrix}
	\right )
	, 
\end{align*}
\ie $\psi_{\uparrow} = (1,0)$ and $\psi_{\downarrow} = (0,1)$. The dynamical equation is the \emph{Schrödinger equation} 
\begin{align}
	\ii \, \hbar \, \frac{\partial}{\partial t} \psi(t) &= H \psi(t) 
	, 
	&&
	\psi(0) = \psi_0 \in \Hil 
	. 
	\label{quantum:eqn:Schroedinger_eqn}
\end{align}
The vector space $\Hil = \C^2$ becomes a Hilbert space if we equip it with the scalar product 
\begin{align*}
	\scpro{\psi}{\varphi}_{\C^2} := \sum_{j = 1 , 2} \overline{\psi_j} \, \varphi_j 
	. 
\end{align*}
Moreover, the hermitian matrix $H$ can always be diagonalized (\cf exercise~35--36), and the eigenvectors to distinct eigenvalues are orthogonal. The complex-valued \emph{wave function} $\psi$ encapsulates probabilities: for any $\psi \in \C^2$ normalized to $1 = \norm{\psi}_{\C^2}$, the probability to find the particle in the spin-up configuration is 
\begin{align*}
	\mathbb{P}(\mathsf{S}_3 = \uparrow) = \sabs{\psi_1}^2 
	= \babs{\sscpro{\psi_{\uparrow}}{\psi}}^2 
\end{align*}
since $\psi_{\uparrow} = (1,0)$. The above notation comes from probability theory and means “the probability of finding the random observable spin $\mathsf{S}_3$ in the spin-$\uparrow$ configuration $+ \frac{\hbar}{2}$”. \marginpar{2014.02.04}
\medskip

\noindent
The second exemplary quantum system describes a non-relativistic particle of mass $m$ subjected to an electric field generated by the potential $V$. The classical Hamilton \emph{function} $h(q,p) = \tfrac{1}{2m} p^2 + V(q)$ is then “quantized” to 
\begin{align*}
	h \bigl ( \hat{x},- \ii \hbar \nabla_x \bigr ) = H 
	= \frac{1}{2 m} \, \bigl ( - \ii \hbar \nabla_x \bigr )^2 + V(\hat{x}) 
\end{align*}
by replacing momentum $p$ by the momentum \emph{operator} $\mathsf{P} = - \ii \hbar \nabla_x$ and position $q$ by the multiplication operator $\mathsf{Q} = \hat{x}$.\footnote{To find a consistent quantization procedure is highly non-trivial. One possibility is to use Weyl quantization \cite{Weyl:qm_gruppentheorie:1927,Wigner:Wigner_transform:1932,Moyal:Weyl_calculus:1949,Folland:harmonic_analysis_hase_space:1989,Lein:quantization_semiclassics:2010}. Such a quantization procedure also yields a formulation of a semiclassical limit, and the names for various operators (\eg position, momentum and angular momentum) are then justified via a semiclassical limit. For instance, the momentum operator is $- \ii \hbar \nabla_x$, because in the semiclassical limit it plays the role of the classical momentum observable $p$ (\cf \eg \cite[Theorem~1.0.1]{Lein:quantization_semiclassics:2010} and \cite[Theorem~7.0.1]{Lein:quantization_semiclassics:2010}).} The hamiltonian is now an operator on the Hilbert space $L^2(\R^d)$ whose action on suitable vectors $\psi$ is 
\begin{align*}
	(H \psi)(x) &= - \frac{\hbar^2}{2 m} (\Delta_x \psi)(x) + V(x) \, \psi(x) 
	. 
\end{align*}
\begin{figure}
	\hfil\includegraphics[height=4cm]{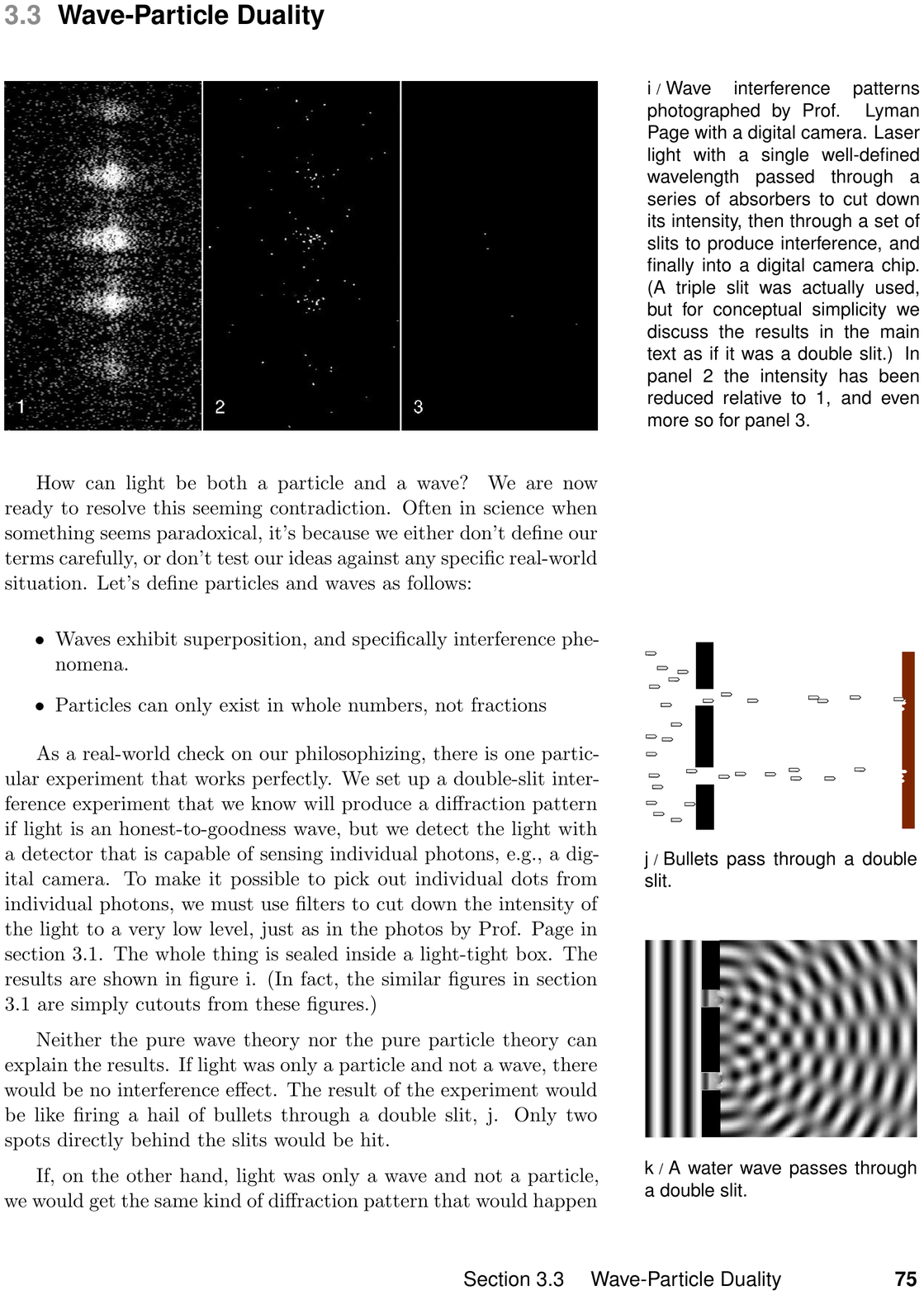}\hfil
	\caption{Images of a low-intensity triple slit experiment with photons (taken from \cite{Crowell:modern_revolution_physics:2008}). }
	\label{quantum:fig:low_intensity}
\end{figure}
Quantum particles simultaneously have wave and particle character: the Schrödinger equation~\eqref{quantum:eqn:Schroedinger_eqn} is structurally very similar to a wave equation. The physical constant $\hbar$ relates the energy of a particle with the associated wave length and has units $[\mathrm{energy} \cdot \mathrm{time}]$. The particle aspects comes when one measures outcomes of experiments: consider a version of the Stern-Gerlach experiment where the intensity of the atomic beam is so low that single atoms pass through the magnet. If the modulus square of the wave function $\sabs{\psi(t,x)}^2$ were to describe the intensity of a \emph{matter} wave, then one expects that the two peaks build up slowly, but \emph{simultaneously}. In actuality, one registers single impacts of atoms and only if one waits long enough, two peaks emerge (similar to what one sees in a low-intensity triple slit experiment in Figure~\ref{quantum:fig:low_intensity}). This is akin to tossing a coin: one cannot see the probabilistic nature in a few coin tosses, let alone a single one. Probabilities emerge only after repeating the experiment often enough. These experiments show that $\sabs{\psi(t,x)}^2$ is to be interpreted as a probability distribution, but more on that below. 

Pure states are described by wave functions, \ie complex-valued, square integrable functions. Put more precisely, we are considering $L^2(\R^d)$ made up of equivalence classes of functions with scalar product 
\begin{align*}
	\scpro{\varphi}{\psi} = \int_{\R^d} \dd x \, \overline{\varphi(x)} \, \psi(x) 
\end{align*}
and norm $\norm{\psi} := \sqrt{\scpro{\psi}{\psi}}$. In physics text books, one usually encounters the the \emph{bra-ket} notation: here $\ket{\psi}$ is a state and $\ipro{x}{\psi}$ is $\psi(x)$. The scalar product of $\phi,\psi \in L^2(\R^d)$ is denoted by $\ipro{\phi}{\psi}$ and corresponds to $\scpro{\phi}{\psi}$. Although bra-ket notation can be ambiguous, it is sometimes useful and in fact used in mathematics every once in a while. 

The fact that $L^2(\R^d)$ consists of \emph{equivalence classes} of functions is only natural from a physical perspective: if $\psi_1 \sim \psi_2$ are in the same equivalence class (\ie they differ on a set of measure $0$), then the arguments in Chapter~\ref{spaces:Banach:prototypical} state that the associated probabilities coincide: Physically, $\abs{\psi(x,t)}^2$ is interpreted as the \emph{probability to measure a particle at time $t$ in (an infinitesimally small box located in) location $x$}. If we are interested in the probability that we can measure a particle in a region $\Lambda \subseteq \R^d$, we have to integrate $\abs{\psi(x,t)}^2$ over $\Lambda$, 
\begin{align}
	\mathbb{P}(X(t) \in \Lambda) = \int_{\Lambda} \dd x \, \abs{\psi(x,t)}^2 
	. 
\end{align}
If we want to interpret $\abs{\psi}^2$ as \emph{probability} density, the wave function has to be \emph{normalized}, \ie 
\begin{align*}
	\norm{\psi}^2 = \int_{\R^d} \dd x \, \abs{\psi(x)}^2 = 1 
	. 
\end{align*}
This point of view is called \emph{Born rule}: $\abs{\psi}^2$ could either be a \emph{mass} or \emph{charge density} -- or a \emph{probability density}. To settle this, physicists have performed the double slit experiment with an electron source of low flux. If $\abs{\psi}^2$ were a density, one would see the whole interference pattern building up slowly. Instead, one \emph{measures} “single impacts” of electrons and the result is similar to the data obtained from experiments in statistics (\eg the Dalton board). Hence, we speak of \emph{particles}. 
% section paradigms (end)

\section{The mathematical framework} % (fold)
\label{quantum:framework}
To identify the common structures, let us study quantum mechanics in the abstract. Just like in the case of classical mechanics, we have to identify \emph{states}, \emph{observables} and \emph{dynamical equations} in \emph{Schrödinger and Heisenberg picture}.

\subsection{Quantum observables} % (fold)
\label{quantum:frameworks:observables}
Quantities that can be measured are represented by selfadjoint (hermitian in physics parlance) operators $F$ on the Hilbert space $\Hil$ (typically $L^2(\R^d)$), \ie special linear maps 
\begin{align*}
	F : \mathcal{D}(F) \subseteq \Hil \longrightarrow \Hil 
	. 
\end{align*}
Here, $\mathcal{D}(F)$ is the \emph{domain} of the operator since \emph{typical} observables are not defined for all $\psi \in \Hil$. \emph{This is not a mathematical subtlety with no physical content}, quite the contrary: consider the observable energy, typically given by 
\begin{align*}
	H = \frac{1}{2m} (- \ii \hbar \nabla_x)^2 + V(\hat{x}) 
	, 
\end{align*}
then states in the domain 
\begin{align*}
	\mathcal{D}(H) := \Bigl \{ \psi \in L^2(\R^d) \; \big \vert \; H \psi \in L^2(\R^d) \Bigr \} \subseteq L^2(\R^d)
\end{align*}
are those of \emph{finite energy}. For all $\psi$ in the domain of the hamiltonian $\mathcal{D}(H) \subseteq L^2(\R^d)$, the expectation value 
\begin{align*}
	\bscpro{\psi}{H \psi} < \infty
\end{align*}
is bounded. Well-defined observables have domains that are \emph{dense} in $\Hil$. Similarly, states in the domain $\mathcal{D}(\hat{x}_l)$ of the $l$th component of the position operator are those that are “localized in a finite region” in the sense of expectation values. Boundary conditions may also enter the definition of the domain: as seen in the example of the momentum operator on $[0,1]$, different boundary conditions yield different momentum operators (see Chapter~\ref{operators:unitary} for details). 

The energy observable is just a specific example, but it contains all the ingredients which enter the definition of a quantum observable: 
\begin{definition}[Observable]
	A quantum observable $F$ is a densely defined, selfadjoint operator on a Hilbert space. The spectrum $\sigma(F)$ (\cf Definition~\ref{operators:defn:spectrum}) is the set of outcomes of measurements. 
\end{definition}
Physically, results of measurements are real which is reflected in the selfadjointness of operators (\cf Chapter~\ref{operators:selfadjoint_operators}), $H^* = H$. (A symmetric operator is selfadjoint if $\mathcal{D}(H^{\ast}) = \mathcal{D}(H)$.) The set of possible outcomes of measurements is the spectrum $\sigma(H) \subseteq \R$ (the spectrum is defined as the set of complex numbers so that $H - z$ is not invertible, \cf Chapter~\ref{operators:defn:spectrum}). Spectra of selfadjoint operators are necessarily subsets of the reals (\cf Theorem~\ref{quantum:thm:spectrum_selfadjoint_real}). Typically one “guesses” quantum observables from classical observables: in $d = 3$, the angular momentum operator is given by 
\begin{align*}
	L = \hat{x} \wedge  (- \ii \hbar \nabla_x) 
	. 
\end{align*}
In the simplest case, one uses Dirac's recipe (replace $x$ by $\hat{x}$ and $p$ by $\hat{p} = - \ii \hbar \nabla_x$) on the classical observable angular momentum $L(x,p) = x \wedge p$. In other words, \emph{many quantum observables are obtained as quantizations of classical observables}: examples are position, momentum and energy. Moreover, the \emph{interpretation} of, say, the angular momentum operator as angular momentum is taken from classical mechanics. 

In the definition of the domain, we have already used the definition of expectation value: the expectation value of an observable $F$ with respect to a state $\psi$ (which we assume to be normalized, $\norm{\psi} = 1$) is given by 
\begin{align}
	\mathbb{E}_{\psi}(F) := \bscpro{\psi}{F \psi} 
	. 
\end{align}
The expectation value is finite if the state $\psi$ is in the domain $\mathcal{D}(F)$. The Born rule of quantum mechanics tells us that if we repeat an experiment measuring the observable $F$ many times for a particle that is prepared in the state $\psi$ each time, the statistical average calculated according to the relative frequencies converges to the expectation value $\mathbb{E}_{\psi}(F)$. 

Hence, quantum observables, selfadjoint operators on Hilbert spaces, are \emph{bookkeeping devices} that have two components: 
\begin{enumerate}[(i)]
	\item a \emph{set of possible outcomes} of measurements, the spectrum $\sigma(F)$, and 
	\item \emph{statistics}, \ie how often a possible outcome occurs. 
\end{enumerate}
%
% subsection quantum_observables (end)

\subsection{Quantum states} % (fold)
\label{quantum:frameworks:states}
Pure states are wave functions $\psi \in \Hil$, or rather, wave functions up to a total phase: just like one can measure only energy \emph{differences}, only phase \emph{shifts} are accessible to measurements. Hence, one can think of pure states as orthogonal \emph{projections} 
\begin{align*}
	P_{\psi} := \sopro{\psi}{\psi} = \sscpro{\psi}{\cdot} \, \psi 
	. 
\end{align*}
if $\psi$ is normalized to $1$, $\norm{\psi} = 1$. 
Here, one can see the elegance of bra-ket notation vs. the notation that is “mathematically proper”. A generalization of this concept are \emph{density operators} $\rho$ (often called density matrices): density matrices are defined via the trace. If $\rho$ is a suitable linear operator and $\{ \varphi_n \}_{n \in \N}$ and orthonormal basis of $\Hil$, then we define 
\begin{align*}
	\trace \, \rho := \sum_{n \in \N} \sscpro{\varphi_n}{\rho \varphi_n} 
	. 
\end{align*}
One can easily check that this definition is independent of the choice of basis (see homework problem~28). Clearly, $P_{\psi}$ has trace $1$ and it is also positive in the sense that 
\begin{align*}
	\bscpro{\varphi}{P_{\psi} \varphi} \geq 0 
\end{align*}
for all $\varphi \in \Hil$. This is also the good definition for quantum states: 
\begin{definition}[Quantum state]
	A quantum state (or density operator/matrix) $\rho = \rho^*$ is a non-negative operator of trace $1$, \ie 
	\begin{align*}
		\bscpro{\psi}{\rho \psi} &\geq 0 
		, 
		&& \forall \psi \in \Hil 
		, \\
		\trace \, \rho &= 1 
		. 
	\end{align*}
	If $\rho$ is also an orthogonal projection, \ie $\rho^2 = \rho$, it is a pure state.\footnote{Note that the condition $\trace \, \rho = 1$ implies that $\rho$ is a \emph{bounded} operator while the positivity implies the selfadjointness. Hence, if $\rho$ is a projection, \ie ${\rho}^2 = \rho$, it is automatically also an orthogonal projection. } Otherwise $\rho$ is a mixed state. 
\end{definition}
%
% CHANGED $\rho^2 = \rho$ if and only if $\rho = \sopro{\psi}{\psi}$ 
Density operators are projections if and only if they are rank-$1$ projections, \ie $\rho = \sopro{\psi}{\psi}$ for some $\psi \in \Hil$ of norm $1$ (see problem~28). 
\begin{example}
	Let $\psi_j \in \Hil$ be two wave functions normalized to $1$. Then for any $0 < \alpha < 1$
	\begin{align*}
		\rho = \alpha P_{\psi_1} + (1 - \alpha) P_{\psi_2} = \alpha \sopro{\psi_1}{\psi_1} + (1 - \alpha) \sopro{\psi_2}{\psi_2}
	\end{align*}
	is a mixed state as 
	\begin{align*}
		\rho^2 
		% &= 
		% \alpha^2 \sopro{\psi_1}{\psi_1} \sopro{\psi_1}{\psi_1} + \alpha (1 - \alpha) \bigl ( \sopro{\psi_1}{\psi_1} \sopro{\psi_2}{\psi_2} + \sopro{\psi_2}{\psi_2} \sopro{\psi_1}{\psi_1} \bigr ) + (1 - \alpha)^2 \sopro{\psi_2}{\psi_2} \sopro{\psi_2}{\psi_2} 
		% \\
		&= \alpha^2 \sopro{\psi_1}{\psi_1} + (1 - \alpha)^2 \sopro{\psi_2}{\psi_2} 
		+ \\
		&\qquad \qquad 
		+ \alpha (1 - \alpha) \bigl ( \sopro{\psi_1}{\psi_1} \sopro{\psi_2}{\psi_2} + \sopro{\psi_2}{\psi_2} \sopro{\psi_1}{\psi_1} \bigr )
		\\
		&\neq \rho
		. 
	\end{align*}
	Even if $\psi_1$ and $\psi_2$ are orthogonal to each other, since $\alpha^2 \neq \alpha$ and similarly $(1 - \alpha)^2 \neq (1 - \alpha)$, $\rho$ cannot be a projection. Nevertheless, it is a state since $\trace \, \rho = \alpha + (1 - \alpha) = 1$. Keep in mind that $\rho$ does not project on $\alpha \psi_1 + (1 - \alpha) \psi_2$! 
\end{example}
Also the expectation value of an observable $F$ with respect to a state $\rho$ is defined in terms of the trace, 
\begin{align*}
	\mathbb{E}_{\rho}(F) := \trace (\rho \, F)
	, 
\end{align*}
which for pure states $\rho = \sopro{\psi}{\psi}$ reduces to $\bscpro{\psi}{F \psi}$. 
% subsection quantum_states (end)

\subsection{Time evolution} % (fold)
\label{quantum:frameworks:time_evolution}
The time evolution is determined through the \emph{Schrödinger equation}, 
\begin{align}
	\ii \hbar \frac{\partial}{\partial t} \psi(t) = H \psi(t)
	, 
	&& \psi(t) \in \Hil, \; \psi(0) = \psi_0, \; \norm{\psi_0} = 1 
	. 
\end{align}
Alternatively, one can write $\psi(t) = U(t) \psi_0$ with $U(0) = \id_{\Hil}$. Then, we have 
\begin{align*}
	\ii \hbar \frac{\partial}{\partial t} U(t) = H U(t) 
	, 
	&& U(0) = \id_{\Hil} 
	. 
\end{align*}
If $H$ \emph{were} a number, one would immediately use the ansatz 
\begin{align}
	U(t) = \e^{- \ii \frac{t}{\hbar} H} 
\end{align}
as solution to the Schrödinger equation. If $H$ is a selfadjoint operator, this is \emph{still true}, but takes a lot of work to justify rigorously if the domain of $H$ is not all of $\Hil$ (the case of unbounded operators, the \emph{generic} case). 

As has already been mentioned, we can evolve either states or observables in time and one speaks of the Schrödinger or Heisenberg picture, respectively. In the Schrödinger picture, states evolve according to 
\begin{align*}
	\psi(t) = U(t) \psi_0 
\end{align*}
while observables remain fixed. Conversely, in the Heisenberg picture, states are kept fixed in time and observables evolve according to 
\begin{align}
	F(t) := U(t)^* \, F \, U(t) = \e^{+ \ii \frac{t}{\hbar} H} F \, \e^{- \ii \frac{t}{\hbar} H} 
	. 
\end{align}
Heisenberg observables satisfy \emph{Heisenberg's equation of motion}, 
\begin{align}
	\frac{\dd}{\dd t} F(t) = \frac{\ii}{\hbar} \bigl [ H , F(t) \bigr ] 
	, 
	&&
	F(0) = F 
	, 
\end{align}
which can be checked by plugging in the definition of $F(t)$ and elementary \emph{formal} manipulations. It is no coincidence that this equation looks structurally similar to equation~\eqref{classical_mechanics:eqn:eom_observables}! 

Just like in the classical case, density operators have to be evolved \emph{backwards} in time, meaning that $\rho(t) = U(t) \, \rho \, U(t)^*$ satisfies 
\begin{align*}
	\frac{\dd}{\dd t} \rho(t) = - \frac{\ii}{\hbar} \bigl [ H , \rho(t) \bigr ] 
	, 
	&&
	\rho(0) = \rho 
	. 
\end{align*}
The equivalence of Schrödinger and Heisenberg picture is seen by comparing expectation values just as in Chapter~\ref{classical_mechanics:equivalence_S_H}: the cyclicity of the trace, $\trace (AB) = \trace (BA)$, yields  \marginpar{2014.02.06}
\begin{align*}
	\mathbb{E}_{\rho(t)}(F) &= \trace \bigl ( \rho(t) \bigr ) 
	= \trace \Bigl ( U(t) \, \rho \, U(t)^* \, F \Bigr ) 
	\\
	&= \trace \Bigl ( \rho \, U(t)^* \, F \, U(t) \Bigr ) 
	= \trace \bigl ( \rho \, F(t) \bigr ) 
	= \mathbb{E}_{\rho} \bigl ( F(t) \bigr ) 
	. 
\end{align*}
As a last point, we mention the conservation of probability: if $\psi(t)$ solves the Schrödinger equation for some selfadjoint $H$, then we can check at least formally that the time evolution is unitary and thus preserves probability, 
\begin{align*}
	\frac{\dd}{\dd t} \bnorm{\psi(t)}^2 &= \frac{\dd}{\dd t} \bscpro{\psi(t)}{\psi(t)} 
	= \bscpro{\tfrac{1}{\ii \hbar} H \psi(t)}{\psi(t)} + \bscpro{\psi(t)}{\tfrac{1}{\ii \hbar} H \psi(t)} 
	\\
	&
	= \frac{\ii}{\hbar} \Bigl ( \bscpro{\psi(t)}{H^* \psi(t)} - \bscpro{\psi(t)}{H \psi(t)} \Bigr ) 
	\\
	&
	= \frac{\ii}{\hbar} \bscpro{\psi(t)}{(H^* - H) \psi(t)} 
	% \\
	% &
	= 0 
	. 
\end{align*}
Conservation of probability is reminiscent of Proposition~\ref{classical_mechanics:prop:states_stay_states}. We see that the condition $H^* = H$ is the key here: selfadjoint operators generate unitary evolution groups. As a matter of fact, there are cases when one \emph{wants} to violate conservation of proability: one has to introduce so-called \emph{optical potentials} which simulate particle creation and annihilation. 

The time evolution $\e^{- \ii \frac{t}{\hbar} H}$ is not the only unitary group of interest, other commonly used examples are \emph{translations} in position or momentum which are generated by the momentum and position operator, respectively (the order is reversed!), as well as rotations which are generated by the angular momentum operators. 
% subsection time_evolution (end)

\subsection{Comparison of the two frameworks} % (fold)
\label{quantum:frameworks:comparison}
Now that we have an understanding of the structures of classical and quantum mechanics, juxtaposed in Table~\ref{frameworks:comparison:table:overview_frameworks}, we can elaborate on the differences and similarities of both theories. 
% CHANGED table ugly, fix it! 
%
\begin{table}
	\begin{tabularx}{\textwidth}{>{\small\raggedright\hsize=3cm}X | >{\small\raggedright}X >{\small\raggedright}X >{\hsize=0cm}X} 
		\makebox[1cm]{}  & \textit{Classical} & \textit{Quantum} & \\ \hline 
		\textit{Observables} & $f \in \Cont^{\infty}(\Pspace,\R)$ & selfadjoint operators acting on Hilbert space $\Hil$ & \\ [0.5ex] 
		\textit{Building block observables} & position $x$ and momentum $p$ & position $\hat{x}$ and momentum $\hat{p}$ operators & \\ [0.5ex] 
		\textit{Possible results of measurements} & $\mathrm{im}(f)$ & $\sigma(F)$ & \\ [0.5ex] 
		\textit{States} & probability measures $\mu$ on phase space $\Pspace$ & density operators $\rho$ on $\Hil$ & \\ [0.5ex] 
		\textit{Pure states} & points in phase space $\Pspace$ & wave functions $\psi \in \Hil$ & \\ [0.5ex] 
		\textit{Generator of evolution} & hamiltonian function $H : \Pspace \longrightarrow \R$ & hamiltonian operator $H$ & \\ [0.5ex] 
		\textit{Infinitesimal time evolution equation} & $\frac{\dd}{\dd t} f(t) = \{ H , f(t) \}$ & $\frac{\dd }{\dd t} F(t) = \frac{\ii}{\hbar} [ H , F(t) ]$ & \\ [0.5ex]
		% CHANGED Die neue Inkarnation von LaTeX mag aus unerfindlichen Gründen \bigl und \bigr im tabularx environment nicht
		% \textit{Infinitesimal time evolution equation} & $\frac{\dd }{\dd t} f(t) = \bigl \{ f(t) , h \bigr \}$ & $\frac{\dd }{\dd t} A(t) = \frac{1}{\ii \hbar} \bigl [ A(t) , H \bigr ]$ & \\ [0.5ex] 
		\textit{Integrated time evolution} & hamiltonian flow $\phi_t$ & $\e^{+ \ii \frac{t}{\hbar} H} \, \Box \, \e^{- \ii \frac{t}{\hbar} H}$ & \\ 
	\end{tabularx}
	\caption{Comparison of classical and quantum framework}
	\label{frameworks:comparison:table:overview_frameworks}
\end{table}
For instance, observables form an \emph{algebra} (a vector space with multiplication): in classical mechanics, we use the \emph{pointwise product} of functions, 
\begin{align*}
	\cdot :& \, \Cont^{\infty}(\Pspace) \times \Cont^{\infty}(\Pspace) \longrightarrow \Cont^{\infty}(\Pspace) , \; (f , g) \mapsto f \cdot g 
	\\
	& (f \cdot g)(x,p) := f(x,p) \, g(x,p) 
	, 
\end{align*}
which is obviously commutative. We also admit \emph{complex}-valued functions and add \emph{complex conjugation} as involution (\ie $f^{\ast \ast} = f$). Lastly, we add the Poisson bracket to make $\Cont^{\infty}(\Pspace)$ into a so-called Poisson algebra. As we have seen, the notion of Poisson bracket gives rise to dynamics as soon as we choose an energy function (hamiltonian). 

On the quantum side, bounded operators (see~Chapter~\ref{operators:bounded}) form an algebra. This algebra is non-commutative, \ie 
\begin{align*}
	F \cdot G \neq G \cdot F 
	. 
\end{align*}
\emph{Exactly this is what makes quantum mechanics different.} Taking adjoints is the involution here and the commutator plays the role of the Poisson bracket. Again, once a hamiltonian (operator) is chosen, the dynamics of Heisenberg observables $F(t)$ is determined by the commutator of the $F(t)$ with the hamiltonian $H$. If an operator commutes with the hamiltonian, \emph{it is a constant of motion}. This is in analogy with Definition~\ref{classical:defn:conserved_quantity} where a classical observable is a constant of motion if and only if its Poisson bracket with the hamiltonian (function) vanishes.

\subsection{Representations} % (fold)
\label{quantum:framework:representations}
Linear algebra distinguishes abstract linear maps $H : \mathcal{X} \longrightarrow \mathcal{Y}$ and their representations as matrices using a basis in initial and target space: any pair of bases $\{ x_n \}_{n = 1}^N$ and $\{ y_k \}_{k = 1}^K$ of $\mathcal{X} \cong \C^N$ and $\mathcal{Y} \cong \C^K$ induces a matrix representation $h = (h_{nk}) \in \mathrm{Mat}_{\C}(N,K)$ of $H$ (called \emph{basis representation}) via 
\begin{align*}
	H x_n = \sum_{k = 1}^K h_{nk} \, y_k 
	. 
\end{align*}
The basis now identifies coordinates on the vector spaces: $x = \sum_{n = 1}^N \xi_n \, x_n \in \mathcal{X}$ has the coordinate $\xi = (\xi_1 , \ldots , \xi_n) \in \C^N$, and similarly $y = \sum_{k = 1}^K \eta_k \, y_k \in \mathcal{Y}$ is expressed in terms of the coordinate $\eta \in \C^K$. Using these coordinates, the equation $H x = y$ becomes the matrix equation $h \xi = \eta$. 

A change in basis can now be described in the same way: if $\{ x_n' \}_{j = 1}^N$ and $\{ y_k' \}_{k = 1}^K$ are two other orthonormal bases, then the coordinate representations of the maps 
\begin{align*}
	U_{xx'} &: x_n \mapsto x_n'
	\\
	U_{yy'} &: y_k \mapsto y_k'
\end{align*}
are unitary matrices $u_{xx'} \in \mathcal{U}(\C^N)$ and $u_{yy'} \in \mathcal{U}(\C^K)$, and these matrices connect the coordinate representations of $H$ with respect to $\{ x_n \}_{n = 1}^N$, $\{ y_k \}_{k = 1}^K$ and $\{ x_n' \}_{n = 1}^N$, $\{ y_k' \}_{k = 1}^K$, 
\begin{align*}
	h' = u_{yy'} \, h \, u_{xx'}^{-1} 
	. 
\end{align*}
$u_{xx'}^{-1}$ maps $\xi'$ onto $\xi$, $h$ maps $\xi$ onto $\eta$ and $u_{yy'}$ maps $\eta$ onto $\eta'$.

Similarly, we can represent operators on \emph{infinite}-dimensional Hilbert spaces such as $L^2(\R^d)$ in much the same way: for instance, consider the free Schrödinger operator $H = - \tfrac{1}{2} \Delta_x : \mathcal{D} \subset L^2(\R^d) \longrightarrow L^2(\R^d)$. Then the Fourier transform $\Fourier : L^2(\R^d) \longrightarrow L^2(\R^d)$ is such a unitary which changes from one “coordinate system” to another, and the free Schrödinger operator in this new representation becomes a simple multiplication operator 
\begin{align*}
	H^{\Fourier} := \Fourier \, H \, \Fourier^{-1} = \tfrac{1}{2} \hat{\xi}^2 
	. 
\end{align*}
Because initial and target space are one and the same, $\Fourier$ appears twice. 

Another unitary is a rescaling which can be seen as a change of units: for $\lambda > 0$ one defines 
\begin{align*}
	(U_{\lambda} \varphi)(x) := \lambda^{\nicefrac{d}{2}} \, \varphi(\lambda x) 
\end{align*}
where the scaling factor $\lambda$ relates the two scales. Similarly, other linear changes of the underlying configuration space $\R^d$ (\eg rotations) induce a unitary operator on $L^2(\R^d)$. \marginpar{2014.02.11}
\medskip

\noindent
One can exploit this freedom of representation to simplify a problem: Just like choosing spherical coordinates for a problem with spherical symmetry, we can work in a representation which simplifies the problem. For instance, the Fourier transform exploits the \emph{translational symmetry} of the free Schrödinger operator ($H$ commutes with translations). 

Another example would be to use an \emph{eigenbasis}: assume $H = H^* \geq 0$ as a set of eigenvectors $\{ \psi_n \}_{n \in \N}$ which span all of $\Hil$, \ie the $\psi_n$ are linearly independent and $H \psi_n = E_n \, \psi_n$ where $E_n \in \R$ is the eigenvalue. The eigenvalues are enumerated by magnitude and repeated according to their multiplicity, \ie $E_1 \leq E_2 \leq \ldots$. Just like in the case of hermitian matrices, the eigenvectors to distinct eigenvalues of selfadjoint operators are trivial, and hence, we can choose the $\{ \psi_n \}_{n \in \N}$ to be orthonormal. Then the suitable unitary is 
\begin{align*}
	U : \Hil \longrightarrow \ell^2(\N)
	, 
	\; \; 
	\psi = \sum_{n = 1}^{\infty} \widehat{\psi}(n) \, \psi_n \mapsto \widehat{\psi} \in \ell^2(\N)
\end{align*}
where $\widehat{\psi} = \bigl ( \widehat{\psi}(1) , \widehat{\psi}(2) , \ldots \bigr )$ is the sequence of coefficients and $\ell^2(\N)$ is the prototypical Hilbert space defined in Definition~\ref{spaces:Hilbert:defn:ell2}; moreover, the definition of orthonormal basis (Definition~\ref{spaces:Hilbert:defn:ONB}) implies that $\widehat{\psi}$ is necessarily square summable. 

In this representation, $H$ can be seen as an “infinite diagonal matrix”
\begin{align*}
	H = \sum_{n = 1}^{\infty} E_n \, P_{\psi_n} \mapsto H^U = U \, H \, U^{-1} 
	= \left (
	\begin{matrix}
		E_1 & 0 & \cdots & \cdots \\
		0 & E_2 & 0 & \cdots \\
		\vdots & & \ddots & \ddots \\
	\end{matrix}
	\right )
\end{align*}
where $P_{\psi} \varphi := \scpro{\psi}{\varphi} \, \psi$ are the rank-$1$ projections onto $\psi$. Put another way, $H^U$ acts on $\widehat{\psi} \in \ell^2(\N)$ as 
\begin{align*}
	H^U \widehat{\psi} = \bigl ( E_1 \, \widehat{\psi}(1) , E_2 \, \widehat{\psi}(2) , \ldots \bigr )
	. 
\end{align*}
The simple structure of this operator allows one to compute the unitary evolution group explicitly in terms of the projections $P_{\psi_n}$, 
\begin{align*}
	\e^{- \ii \frac{t}{\hbar} H} = \sum_{n = 1}^{\infty} \e^{- \ii \frac{t}{\hbar} E_n} \, P_{\psi_n}
	. 
\end{align*}
Sadly, most Schrödinger operators $H$ do not have a basis of eigenvectors. 
% subsection gauge_freedom_and_representations (end)
% section the_mathematical_framework (end)

\section{Spectral properties of hamiltonians} % (fold)
\label{quantum:spectrum}
The spectrum of an operator is the generalization of the set of eigenvalues for matrices. According to Definition~\ref{operators:defn:spectrum} the spectrum can be dived up into three parts, the \emph{point spectrum} 
\begin{align*}
	\sigma_{\mathrm{p}}(H) := \bigl \{ z \in \C \; \vert \; H - z \mbox{ is not injective} \bigr \} 
	, 
\end{align*}
the \emph{continuous spectrum}
\begin{align*}
	\sigma_{\mathrm{c}}(H) := \bigl \{ z \in \C \; \vert \; H - z \mbox{ is injective, } \im (H - z) \subseteq \Hil \mbox{ dense} \bigr \} 
	, 
\end{align*}
and the \emph{residual spectrum} 
\begin{align*}
	\sigma_{\mathrm{r}}(H) := \bigl \{ z \in \C \; \vert \; H - z \mbox{ is injective, } \im (H - z) \subseteq \Hil \mbox{ not dense} \bigr \} 
	. 
\end{align*}
Point spectrum is due to eigenvalues with eigenvector. Compared to matrices, the occurrence of continuous and residual spectra is new. The residual spectrum is not important for our discussion as it is empty for selfadjoint operators. 

The continuous spectrum can be attributed to cases where the eigenvectors are not elements of the Hilbert space. For instance, in case of the free Schrödinger operator $H = - \tfrac{1}{2} \Delta_x$ on $L^2(\R^d)$, the spectrum is $\sigma(H) = \sigma_{\mathrm{c}}(H) = [0,+\infty)$. Here, the eigenvectors are plane waves, $\e^{+ \ii \xi \cdot x}$ which are smooth, bounded functions; however, plane waves are not square integrable. Similarly, multiplication operators have Dirac distributions as eigen“functions”. 

Note that this distinction between the spectral components goes further than looking at the spectrum as a set: for instance, it is known that certain random Schrödinger operators have dense point spectrum which “looks” the same as continuous spectrum. The spectrum can be probed by means of approximate eigenfunctions (“Weyl's Criterion”, see Theorem~\ref{operators:thm:Weyl_criterion}). 
\medskip

\noindent
% TODO add physical interpretation of essential vs. discrete via RAGE theorem
There is also a second helpful classification of spectrum which cannot be made rigorous with the tools we have at hand, and that is the distinction between \emph{essential} spectrum $\sigma_{\mathrm{ess}}(H)$ and \emph{discrete} spectrum $\sigma_{\mathrm{disc}}(H)$. The essential spectrum is stable under local, short-range perturbations while the discrete spectrum may change. One has the following characterization for the essential spectrum: 
\begin{theorem}[Theorem VII.10 in \cite{Reed_Simon:M_cap_Phi_1:1972}]
	$\lambda \in \sigma_{\mathrm{ess}}(H)$ iff \emph{one} or more of the following holds: 
	\begin{enumerate}[(i)]
		\item $\lambda \in \sigma_{\mathrm{cont}}(H)$
		\item $\lambda$ is a limit point of $\sigma_{\mathrm{p}}(H)$. 
		\item $\lambda$ is an eigenvalue of infinite multiplicity. 
	\end{enumerate}
\end{theorem}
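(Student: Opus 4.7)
The plan is to first make the notion of essential spectrum precise — in the standard way hinted at in the excerpt — by setting $\sigma_{\mathrm{disc}}(H)$ to be the set of \emph{isolated} eigenvalues of \emph{finite} multiplicity, and $\sigma_{\mathrm{ess}}(H) := \sigma(H) \setminus \sigma_{\mathrm{disc}}(H)$. Once that is fixed, the proof naturally splits into the two implications, and the workhorse throughout will be a Weyl-sequence characterization: for selfadjoint $H$, $\lambda \in \sigma_{\mathrm{ess}}(H)$ if and only if there exists an \emph{orthonormal} sequence $\{\psi_n\} \subset \mathcal{D}(H)$ with $\bnorm{(H-\lambda)\psi_n} \to 0$ (a strengthening of Weyl's criterion, Theorem~\ref{operators:thm:Weyl_criterion}, to require orthonormality rather than just unit norm). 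This strengthening is what distinguishes essential from general spectrum.

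For the easy direction ($\Leftarrow$) I would handle the three cases separately. If (iii) holds, the eigenspace for $\lambda$ is infinite-dimensional, so picking any orthonormal basis of it yields an orthonormal Weyl sequence with $(H-\lambda)\psi_n = 0$. If (ii) holds, pick distinct eigenvalues $\lambda_n \to \lambda$ with eigenvectors $\psi_n$; these are automatically orthogonal (distinct eigenvalues of a selfadjoint operator) and after normalization satisfy $\norm{(H-\lambda)\psi_n} = \abs{\lambda_n - \lambda} \to 0$. If (i) holds, $\lambda \in \sigma_{\mathrm{c}}(H) \subset \sigma(H)$ so in particular $\lambda \notin \sigma_{\mathrm{disc}}(H)$ because a continuous-spectrum point is by definition not an eigenvalue. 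In each case the conclusion $\lambda \in \sigma_{\mathrm{ess}}(H)$ follows.

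For the hard direction ($\Rightarrow$), suppose $\lambda \in \sigma_{\mathrm{ess}}(H)$. Since the residual spectrum of a selfadjoint operator is empty, either $\lambda \in \sigma_{\mathrm{c}}(H)$ — giving (i) — or $\lambda \in \sigma_{\mathrm{p}}(H)$. In the latter case, if the geometric multiplicity is infinite we get (iii). The remaining situation is that $\lambda$ is an eigenvalue of finite multiplicity; since $\lambda \notin \sigma_{\mathrm{disc}}(H)$, the eigenvalue cannot be isolated, so every neighborhood of $\lambda$ meets $\sigma(H) \setminus \{\lambda\}$. Pick $\lambda_n \in \sigma(H)$ with $\lambda_n \neq \lambda$ and $\lambda_n \to \lambda$. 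If infinitely many $\lambda_n$ lie in $\sigma_{\mathrm{p}}(H)$, then (ii) holds directly. Otherwise infinitely many $\lambda_n$ lie in $\sigma_{\mathrm{c}}(H)$, and one must argue that this forces $\lambda$ itself into $\sigma_{\mathrm{c}}(H)$, returning to case (i); for this I would invoke Weyl's criterion at each $\lambda_n$ to produce approximate eigenfunctions $\varphi_n$ with $\bnorm{(H-\lambda_n)\varphi_n} < \tfrac{1}{n}$, then select them out of the spectral subspaces away from the finite-dimensional eigenspace at $\lambda$ (so they are orthogonal to it), yielding $\bnorm{(H-\lambda)\varphi_n} \to 0$ with $\varphi_n \perp \ker(H-\lambda)$.

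The main obstacle is precisely this last step: separating ``approximate eigenfunctions living near $\lambda$'' from the genuine eigenspace at $\lambda$, and checking that the resulting sequence witnesses $\lambda \in \sigma_{\mathrm{c}}(H)$ rather than merely $\lambda \in \sigma(H)$. A fully rigorous argument really requires the spectral theorem — writing $H = \int t \, \dd E_t$ and working with the spectral projections $E\bigl((\lambda - \eps, \lambda + \eps)\bigr)$, whose ranges must be infinite-dimensional for every $\eps > 0$ exactly because $\lambda \in \sigma_{\mathrm{ess}}(H)$. From there, splitting off the finite-dimensional contribution of the $\{\lambda\}$-atom and examining whether the remaining measure is supported on countably many atoms (case (ii)) or has genuinely continuous pieces (case (i)) cleanly separates the remaining alternatives. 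This machinery goes beyond what the excerpt develops, which is why the result is stated without proof.
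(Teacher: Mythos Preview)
The paper does not prove this theorem at all: it is quoted as Theorem~VII.10 from Reed--Simon and immediately followed by the companion statement on $\sigma_{\mathrm{disc}}(H)$, both without argument. You yourself note this in your final sentence, and that observation is correct --- there is nothing in the paper to compare your proof against.

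That said, your sketch is sound and follows the standard route. The easy direction is fine as written; in particular your handling of~(i) is correct since discrete spectrum consists only of eigenvalues, so $\sigma_{\mathrm{c}}(H) \cap \sigma_{\mathrm{disc}}(H) = \emptyset$ automatically. For the hard direction you have correctly isolated the one genuinely delicate case (finite-multiplicity eigenvalue that fails to be isolated) and correctly identified that cleanly resolving it --- in particular, showing that accumulation by continuous spectrum forces $\lambda \in \sigma_{\mathrm{c}}(H)$ or alternatively producing an orthonormal Weyl sequence orthogonal to $\ker(H-\lambda)$ --- requires the spectral projections $E\bigl((\lambda-\eps,\lambda+\eps)\bigr)$. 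That is exactly why Reed--Simon place this result after the spectral theorem, and why the present notes simply cite it.
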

Similarly, the discrete spectrum has a similar characterization: 
\begin{theorem}[Theorem VII.11 in \cite{Reed_Simon:M_cap_Phi_1:1972}]
	$\lambda \in \sigma_{\mathrm{disc}}(H)$ if and only if \emph{both} of the following hold: 
	\begin{enumerate}[(i)]
		\item $\lambda$ is an isolated point of $\sigma(H)$, \ie for some $\eps > 0$ we have $\bigl ( \lambda - \eps , \lambda + \eps \bigr ) \cap \sigma(H) = \{ \lambda \}$. 
		\item $\lambda$ is an eigenvalue of \emph{finite} multiplicity. 
	\end{enumerate}
\end{theorem}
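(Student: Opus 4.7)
The plan is to treat the theorem as essentially a corollary of the preceding characterization of $\sigma_{\mathrm{ess}}(H)$ (Theorem~VII.10), since $\sigma_{\mathrm{disc}}(H)$ is defined as the complement $\sigma(H) \setminus \sigma_{\mathrm{ess}}(H)$. The proof is then a matter of systematically unpacking what it means to lie in $\sigma(H)$ but outside each of the three sets described in VII.10. Two background facts get used throughout: first, that $H$ is selfadjoint, so the residual spectrum is empty and $\sigma(H) = \sigma_{\mathrm{p}}(H) \cup \sigma_{\mathrm{cont}}(H)$ (stated earlier in the text); second, that $\sigma(H)$ is a closed subset of $\C$ and that $\sigma_{\mathrm{ess}}(H)$ is closed as well (a standard fact which I would invoke without proof here).

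For the forward direction, I would start from $\lambda \in \sigma_{\mathrm{disc}}(H)$, so $\lambda \in \sigma(H)$ and $\lambda \notin \sigma_{\mathrm{ess}}(H)$. By VII.10, $\lambda \notin \sigma_{\mathrm{cont}}(H)$, hence by selfadjointness $\lambda \in \sigma_{\mathrm{p}}(H)$, \ie $\lambda$ is an eigenvalue; and again by VII.10 its multiplicity must be finite, giving (ii). For (i), I would argue by contradiction: if $\lambda$ were not isolated in $\sigma(H)$, pick $\lambda_n \in \sigma(H) \setminus \{\lambda\}$ with $\lambda_n \to \lambda$. Each $\lambda_n$ lies in $\sigma_{\mathrm{p}}(H)$ or in $\sigma_{\mathrm{cont}}(H) \subseteq \sigma_{\mathrm{ess}}(H)$. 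Passing to a subsequence, either infinitely many $\lambda_n$ are eigenvalues, so $\lambda$ is a limit point of $\sigma_{\mathrm{p}}(H)$ and hence in $\sigma_{\mathrm{ess}}(H)$ by VII.10~(ii), or infinitely many $\lambda_n$ lie in $\sigma_{\mathrm{ess}}(H)$, and by closedness the limit $\lambda$ does too. Either way we contradict $\lambda \notin \sigma_{\mathrm{ess}}(H)$.

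The reverse direction is a short bookkeeping argument. Assume (i) and (ii): $\lambda$ is an eigenvalue of finite multiplicity, isolated in $\sigma(H)$. Clearly $\lambda \in \sigma(H)$. To conclude $\lambda \notin \sigma_{\mathrm{ess}}(H)$, I would check the three alternatives of VII.10 in turn. First, $\lambda \notin \sigma_{\mathrm{cont}}(H)$ because the point and continuous spectra are disjoint by Definition~\ref{operators:defn:spectrum}. Second, $\lambda$ cannot be a limit point of $\sigma_{\mathrm{p}}(H) \subseteq \sigma(H)$ since it is an isolated point of the full spectrum. Third, the multiplicity is finite by assumption. Hence none of the three conditions of VII.10 holds, so $\lambda \notin \sigma_{\mathrm{ess}}(H)$ and therefore $\lambda \in \sigma_{\mathrm{disc}}(H)$.

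The only nontrivial step is the isolation argument in the forward direction, and the real obstacle there is the closedness of $\sigma_{\mathrm{ess}}(H)$, which we did not prove. An alternative that avoids invoking it would be to note that any sequence $\lambda_n \to \lambda$ in $\sigma(H)$ either has a subsequence in $\sigma_{\mathrm{p}}(H)$ (giving a limit point of eigenvalues directly) or, if it sits eventually in $\sigma_{\mathrm{cont}}(H)$, one can use Weyl's criterion (Theorem~\ref{operators:thm:Weyl_criterion}) to construct Weyl sequences for each $\lambda_n$ and then extract a diagonal Weyl sequence for $\lambda$ built from vectors that are asymptotically orthogonal to the finite-dimensional eigenspace at $\lambda$, exhibiting $\lambda$ itself as essential spectrum. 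This is the technical heart of the proof, and it is where the \emph{finite-multiplicity} hypothesis interacts nontrivially with the isolation statement.
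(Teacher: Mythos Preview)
The paper does not provide a proof of this statement. It is stated with a citation to Reed--Simon and prefaced by the remark that the distinction between essential and discrete spectrum ``cannot be made rigorous with the tools we have at hand''; no argument is given for either Theorem~VII.10 or Theorem~VII.11. So there is nothing to compare your proposal against.

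As for the proposal itself: your overall strategy of reading VII.11 off of VII.10 via $\sigma_{\mathrm{disc}}(H) = \sigma(H) \setminus \sigma_{\mathrm{ess}}(H)$ is sound, and you have correctly identified where the real content lies, namely the closedness of $\sigma_{\mathrm{ess}}(H)$ (or, equivalently, the diagonal Weyl-sequence construction in your alternative). Note, however, that the paper never actually defines $\sigma_{\mathrm{ess}}(H)$ or $\sigma_{\mathrm{disc}}(H)$, so your starting assumption that the discrete spectrum is \emph{defined} as the complement of the essential spectrum is something you are importing; in Reed--Simon the essential spectrum is defined via Weyl sequences (singular sequences), and VII.10 is itself a nontrivial theorem, not a definition. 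Since the paper proves neither, your argument inherits that gap.
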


\subsection{Spectra of common selfadjoint operators} % (fold)
\label{quantum:framework:common}
Quite generally, the spectrum of selfadjoint operators is purely real. But before we prove that, let us discuss some examples from physics:

\paragraph{Multiplication operators} % (fold)
The spectrum of the multiplication operator 
\begin{align*}
	\bigl ( f(\hat{x}) \psi \bigr )(x) := f(x) \, \psi(x)
\end{align*}
is given by the range, $\sigma \bigl ( f(\hat{x}) \bigr ) = \overline{\ran f}$, where $f : \R^d \longrightarrow \R$ is a piecewise-continuous function.\footnote{This condition can be relaxed and is chosen just for ease of use.} 

To see this claim, we rely on the Weyl criterion: in order to show $\sigma \bigl ( f(\hat{x}) \bigr ) \supseteq \overline{\ran f}$, pick any $\lambda \in \overline{\ran f}$. Then there exists a sequence $x_n$ such that $\abs{\lambda - f(x_n)} < \nicefrac{1}{n}$. Then by shifting an $L^2$-Dirac sequence by $x_n$ (\eg scaled Gaußians), we obtain a sequence of vectors $\psi_n$ with $\bnorm{\bigl ( f(\hat{x}) - \lambda \bigr ) \psi_n} \xrightarrow{n \to \infty} 0$. Hence, this reasoning shows $\overline{\ran f} \subseteq \sigma \bigl ( f(\hat{x}) \bigr )$. 

To show the converse inclusion, let $\lambda \in \sigma \bigl ( f(\hat{x}) \bigr )$. Then there exists a Weyl sequence $\{ \psi_n \}_{n \in \N}$ with $\bnorm{\bigl ( f(\hat{x}) - \lambda \bigr ) \psi_n} \rightarrow 0$ as $n \to \infty$. Assume $\inf_{x \in \R^d} \babs{f(x) - \lambda} = c > 0$, \ie $\lambda \not \in \overline{\ran f}$, then $\{ \psi_n \}$ cannot be a Weyl sequence to $\lambda$, 
\begin{align*}
	\bnorm{\bigl ( f(\hat{x}) - \lambda \bigr ) \psi_n} \geq \inf_{x \in \R^d} \babs{f(x) - \lambda} \, \snorm{\psi_n} \geq c > 0 
	, 
\end{align*}
which is absurd. 
\medskip

\noindent
Should $f$ be constant and equal to $\lambda_0$ on a set of positive measure, there are infinitely many eigenfunctions associated to the eigenvalue $\lambda_0$. Otherwise, $f$ has continuous spectrum. In any case, the spectrum of $f(\hat{x})$ is purely essential. 

Clearly, this takes care of any operator which is unitarily equivalent to a multiplication operator, \eg the free Laplacian on $\R^d$, $\T^d$ or the tight-binding hamiltonians from Chapter~\ref{Fourier:T:periodic_operators:tight_binding}. 
% paragraph multiplication_operators (end)

\paragraph{The hydrogen atom} % (fold)
One of the most early celebrated successes of quantum mechanics is the explanation of the spectral lines by Schrödinger \cite{Schroedinger:quantisierung_ew_problem_1:1926,Schroedinger:quantisierung_ew_problem_2:1926,Schroedinger:quantisierung_ew_problem_3:1926,Schroedinger:quantisierung_ew_problem_4:1926}. Here, the operator 
\begin{align*}
	H := - \frac{\hbar^2}{2 m} \Delta_x - \frac{e}{\sabs{\hat{x}}}
\end{align*}
acts on a dense subspace of $L^2(\R^3)$. A non-obvious fact is that this operator is bounded from below, \ie there exits a constant $c > 0$ such that $H \geq - c$. This is false for the corresponding classical system, because the function $h(q,p) = \tfrac{1}{2m} p^2 - \tfrac{e}{\abs{q}}$ is \emph{not} bounded from below. 

The reason for that is that states of low potential energy (\ie wave functions which are sharply peaked around $0$) must pay an ever larger price in kinetic energy (sharply peaked means large gradient). One heuristic way to see that is to compute the energy expectation value of $\psi_{\lambda} := \lambda^{\nicefrac{3}{2}} \, \psi(\lambda x)$ for $\lambda \gg 1$ where $\psi \in \Schwartz(\R^d)$: 
\begin{align*}
	\mathbb{E}_{\psi_{\lambda}}(H) &= \frac{\hbar^2}{2m} \, \bscpro{\psi_{\lambda}}{- \Delta_x \psi_{\lambda}} - e \, \bscpro{\psi_{\lambda}}{\sabs{\hat{x}}^{-1} \psi_{\lambda}} 
	\\
	&= \frac{\hbar^2}{2m} \, \lambda^2 \, \int_{\R^3} \dd x \, \lambda^3 \, \babs{\nabla_x \psi(\lambda x)}^2 - e \, \lambda \, \int_{\R^3} \dd x \, \lambda^3 \, \frac{\sabs{\psi(\lambda x)}}{\lambda \, \sabs{x}} 
	\\
	&= \lambda^2 \, \scpro{\psi}{- \tfrac{\hbar^2}{2m} \Delta_x \psi} - \lambda \, \bscpro{\psi}{e \, \sabs{\hat{x}}^{-1} \psi} 
\end{align*}
Clearly, if one replaces the Coulomb potential by $- \abs{x}^{-3}$, the kinetic energy wins and the quantum particle can “fall down the well”. 

The negative potential gives rise to a family of eigenvalues (the spectral lines) while $-\Delta_x$ contributes continuous spectrum $[0,+\infty)$, 
\begin{align*}
	\sigma(H) &= \{ E_n \}_{n \in \N} \cup [0,+\infty)
	, 
	\\
	\sigma_{\mathrm{cont}}(H) &= [0,+\infty) = \sigma_{\mathrm{ess}}(H)
	, 
	\\
	\sigma_{\mathrm{p}}(H) &= \{ E_n \}_{n \in \N} = \sigma_{\mathrm{disc}}(H) 
	. 
\end{align*}
%
% paragraph the_hydrogen_atom (end)
% subsection common_systems (end)

\subsection{The spectrum of selfadjoint operators is real} % (fold)
\label{quantum:framework:selfadjoint_operator_real}
As a side note, let us show that the spectrum of selfadjoint operators is purely real. 
\begin{theorem}\label{quantum:thm:spectrum_selfadjoint_real}
	Let $H = H^*$ be a selfadjoint operator on the Hilbert space $\Hil$. Then the following holds true: 
	\begin{enumerate}[(i)]
		\item $\sigma(H) \subseteq \R$
		\item $H \geq 0$ $\Rightarrow$ $\sigma(H) \subseteq [0,+\infty)$
	\end{enumerate}
\end{theorem}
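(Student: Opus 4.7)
The plan is to handle both parts by establishing an invertibility estimate of the form $\|(H-z)\psi\| \geq c \, \|\psi\|$ for $z$ outside the claimed spectrum, then bootstrapping this into bijectivity (and hence boundedness of the inverse) by combining closed-range plus dense-range arguments. For (i), I would fix $z = a + \ii b$ with $b \neq 0$ and, for $\psi \in \mathcal{D}(H)$, expand
\begin{align*}
\bnorm{(H-z)\psi}^2 &= \bscpro{(H-a)\psi - \ii b \psi}{(H-a)\psi - \ii b \psi} \\
&= \bnorm{(H-a)\psi}^2 + b^2 \norm{\psi}^2 + \ii b \Bigl ( \bscpro{\psi}{(H-a)\psi} - \bscpro{(H-a)\psi}{\psi} \Bigr ).
\end{align*}
The cross term vanishes because $H = H^*$ forces $\bscpro{\psi}{(H-a)\psi} \in \R$, leaving the key estimate $\bnorm{(H-z)\psi}^2 \geq b^2 \norm{\psi}^2$.

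This estimate immediately gives injectivity of $H - z$ and shows that $\ran(H-z)$ is closed: any Cauchy sequence $(H-z)\psi_n$ forces $(\psi_n)$ to be Cauchy by the estimate, hence convergent to some $\psi$, and using that $H$ is closed (which is automatic for selfadjoint $H$) one gets $\psi \in \mathcal{D}(H)$ with $(H-z)\psi$ being the limit. To upgrade closed range to surjectivity, I would show $\ran(H-z)^{\perp} = \{0\}$: if $\varphi$ is orthogonal to the range, then $\bscpro{\varphi}{(H-z)\psi} = 0$ for every $\psi \in \mathcal{D}(H)$, and by definition of the adjoint (together with $H^* = H$) this places $\varphi \in \mathcal{D}(H)$ with $(H - \bar{z})\varphi = 0$. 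Applying the same estimate to $\bar{z}$ (whose imaginary part is still $-b \neq 0$) yields $\varphi = 0$. Thus $H-z$ is a bijection onto $\Hil$ and the estimate shows $(H-z)^{-1}$ is bounded, so $z \in \rho(H)$.

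For (ii), I would repeat the scheme for $\lambda < 0$ applied to $H - \lambda = H + \abs{\lambda}$. Nonnegativity plus Cauchy--Schwarz give
\begin{align*}
\bnorm{(H-\lambda)\psi} \, \norm{\psi} \geq \bscpro{\psi}{(H-\lambda)\psi} = \bscpro{\psi}{H \psi} + \abs{\lambda} \, \norm{\psi}^2 \geq \abs{\lambda} \, \norm{\psi}^2,
\end{align*}
hence $\bnorm{(H-\lambda)\psi} \geq \abs{\lambda} \, \norm{\psi}$. Injectivity, closed range, and the orthogonal-complement argument (now using that $(H-\lambda)^* = H - \lambda$ satisfies the same estimate) transplant verbatim from the previous step, giving $\lambda \in \rho(H)$.

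The main obstacle is really a bookkeeping one about unbounded operators: the step ``orthogonal to the range implies kernel of the adjoint'' requires the \emph{full} selfadjointness $\mathcal{D}(H^*) = \mathcal{D}(H)$ (mere symmetry is insufficient, exactly as warned by the $\mathsf{P}_{\min}$ example in Chapter~\ref{operators:selfadjoint_operators}), and the closed-range step silently uses that a selfadjoint operator is closed. I would flag these two facts explicitly rather than sweeping them under the rug, since they are the precise ingredients that fail for the merely symmetric operators appearing earlier in the notes.
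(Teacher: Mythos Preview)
Your argument is correct and follows essentially the same route as the paper: derive the lower bound $\bnorm{(H-z)\psi} \geq c \snorm{\psi}$, then convert it into bijectivity via injectivity, closed range, and $\ran(H-z)^{\perp} = \ker(H-\bar{z}) = \{0\}$. The paper packages this last step into a separate Lemma (``$T^* T \geq C$ and $T T^* \geq C$ iff $T$ invertible''), and for (ii) obtains the estimate from the operator inequality $(H-\lambda)^*(H-\lambda) = H^2 - 2\lambda H + \lambda^2 \geq \lambda^2$ rather than your Cauchy--Schwarz argument, but the content is the same. If anything, your inline treatment is more careful about the unbounded case: the paper's Lemma is stated only for bounded $T$, so your explicit invocation of closedness of $H$ and of $\mathcal{D}(H^*) = \mathcal{D}(H)$ fills exactly the gap that a reader of the paper would otherwise have to patch.
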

To prove this, we use the following 
\begin{lemma}
	Let $\Hil_j$, $j = 1 , 2$, be Hilbert spaces. Then an operator $T \in \mathcal{B}(\Hil_1,\Hil_2)$ is invertible if and only if there exists a constant $C > 0$ such that $T^* \, T \geq C \, \id_{\Hil_1}$ and $T \, T^* \geq C \, \id_{\Hil_2}$ hold. 
\end{lemma}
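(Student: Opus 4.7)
The plan is to prove the two implications separately, translating the operator inequalities $T^*T \geq C\,\id_{\Hil_1}$ and $TT^* \geq C\,\id_{\Hil_2}$ into the equivalent lower bounds $\norm{T\psi}^2 \geq C\,\norm{\psi}^2$ and $\norm{T^*\phi}^2 \geq C\,\norm{\phi}^2$, and then exploiting the standard interplay between kernel, range, and adjoint.

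For the forward direction, assume $T$ has a bounded inverse $T^{-1} \in \mathcal{B}(\Hil_2,\Hil_1)$. For any $\psi \in \Hil_1$ I would write $\norm{\psi} = \bnorm{T^{-1} T \psi} \leq \norm{T^{-1}} \, \norm{T\psi}$, which rearranges to $\norm{T\psi}^2 \geq \norm{T^{-1}}^{-2} \, \norm{\psi}^2$, i.e.\ $\bscpro{\psi}{T^*T\psi} \geq C\,\norm{\psi}^2$ with $C := \norm{T^{-1}}^{-2} > 0$. The same reasoning applied to $T^*$, using that $(T^*)^{-1} = (T^{-1})^*$ is again bounded with the same norm, produces $TT^* \geq C\,\id_{\Hil_2}$ with the identical constant.

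For the reverse direction, assume both inequalities hold for some $C > 0$. The estimate $\norm{T\psi}^2 = \bscpro{\psi}{T^*T\psi} \geq C\,\norm{\psi}^2$ shows that $T$ is bounded below. This immediately yields injectivity, and I would use the lower bound to show that the range of $T$ is closed: if $T\psi_n \to \phi$ in $\Hil_2$, then $\norm{\psi_n - \psi_m} \leq C^{-\nicefrac{1}{2}}\,\norm{T\psi_n - T\psi_m}$ shows $(\psi_n)$ is Cauchy, converging to some $\psi \in \Hil_1$ with $T\psi = \phi$ by continuity. Analogously, the inequality $\norm{T^*\phi}^2 \geq C\,\norm{\phi}^2$ implies $\ker T^* = \{0\}$; since $(\im T)^{\perp} = \ker T^*$ by part (vii) of the proposition on adjoints in Chapter~\ref{operators:adjoint}, this gives $\overline{\im T} = \Hil_2$. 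Combined with the closed range property, $\im T = \Hil_2$, so $T$ is a bijection. The bounded inverse $T^{-1}$ exists as a set-theoretic inverse, and the lower bound $\norm{T\psi} \geq \sqrt{C}\,\norm{\psi}$ applied to $\psi = T^{-1}\phi$ produces the explicit bound $\bnorm{T^{-1}\phi} \leq C^{-\nicefrac{1}{2}}\,\norm{\phi}$, so $T^{-1} \in \mathcal{B}(\Hil_2,\Hil_1)$.

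The only mildly subtle step is establishing that $\im T$ is closed, which is what prevents one from appealing to either inequality alone; neither $T^*T \geq C\,\id$ (injectivity plus closed range, giving a left inverse) nor $TT^* \geq C\,\id$ (surjectivity, giving a right inverse) suffices by itself, and both are genuinely needed to conclude two-sided invertibility.
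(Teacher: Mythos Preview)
Your proof is correct and follows essentially the same route as the paper's: both directions use the translation of the operator inequalities into the norm lower bounds $\norm{T\psi}^2 \geq C\norm{\psi}^2$ and $\norm{T^*\phi}^2 \geq C\norm{\phi}^2$, and the reverse implication proceeds via injectivity, closed range, and $\overline{\ran T} = (\ker T^*)^{\perp}$. Your version is in fact slightly more explicit than the paper's, spelling out the Cauchy-sequence argument for closedness of $\ran T$ and the final norm bound on $T^{-1}$.
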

\begin{proof}
	“$\Rightarrow$:” Assume $T$ is invertible. Then $T^* : \Hil_2 \longrightarrow \Hil_1$ is also invertible with inverse ${T^*}^{-1} = {T^{-1}}^*$. Set $C := \norm{T^{-1}}^{-2} = \norm{{T^*}^{-1}}^{-2}$. Then the inequality 
	\begin{align*}
		\norm{\psi} &= \norm{T^{-1} T \psi} \leq \norm{T^{-1}} \, \norm{T \psi} 
	\end{align*}
	proves $\norm{T \psi} \geq \norm{T^{-1}}^{-1}$, and thus also 
	\begin{align}
		\scpro{\psi}{T^* T \psi} &= \norm{T \psi}^2 \geq \norm{T^{-1}}^{-2} \, \norm{\psi}^2 = C \norm{\psi}^2
		\label{quantum:eqn:Tast_T_geq_C_id}
		, 
	\end{align}
	\ie we have shown $T^* \, T \geq C \, \id_{\Hil_1}$. The non-negativity of $T \, T^*$ is shown analogously. 
	
	“$\Leftarrow$:” Suppose there exists $C > 0$ such that $T^* \, T \geq C \, \id_{\Hil_2}$ and $T \, T^* \geq C \, \id_{\Hil_2}$. Then from \eqref{quantum:eqn:Tast_T_geq_C_id} we deduce $\norm{T \psi} \geq \sqrt{C} \, \norm{\psi}$ holds for all $\psi \in \Hil_1$. First of all, this proves that $T$ is injective, and secondly $T$ has closed range in $\Hil_2$ (one can see the latter by considering convergence of $T \psi_n$ for any Cauchy sequence $\{ \psi_n \}_{n \in \N}$). Moreover, one can easily see 
	\begin{align*}
		\ran T = \overline{\ran T} = \bigl ( \ker T^* \bigr )^{\perp}
		. 
	\end{align*}
	Since we can make the same arguments for $T^*$, we also know that $T^*$ is injective, and thus $\ker T^* = \{ 0 \}$. This shows that $T$ is surjective, \ie it is bijective, and hence, invertible. 
\end{proof}
With the proof of the Lemma complete, we can now prove the statement: 
\begin{proof}[Theorem~\ref{quantum:thm:spectrum_selfadjoint_real}]
	\begin{enumerate}[(i)]
		\item Let $H = H^*$ be selfadjoint and $z = \lambda + \ii \mu \in \C \setminus \R$ be a complex number with non-vanishing imaginary part $\mu$. We will show that $z \not\in \sigma(H)$, \ie that $H - \lambda$ is invertible: a quick computation shows 
		\begin{align*}
			\bigl ( H - z \bigr )^* \, \bigl ( H - z \bigr ) &= H^2 - 2 \, (\Re z) \, H + \sabs{z}^2 
			= H^2 - 2 \lambda \, H + (\lambda^2 + \mu^2)
			\\
			&= \mu^2 + \bigl ( H - \lambda \bigr )^2 
			. 
		\end{align*}
		The last term is non-negative, and thus, we have shown 
		\begin{align*}
			\bigl ( H - z \bigr )^* \, \bigl ( H - z \bigr ) \geq \mu^2 
			. 
		\end{align*}
		By the Lemma, this means $H - \lambda$ is necessarily invertible, and $z \not \in \sigma(H)$. 
		\item We have to show that for $\lambda \in (-\infty,0)$, the operator $H - \lambda$ is invertible. This follows as before from 
		\begin{align*}
			\bigl ( H - \lambda \bigr )^* \, \bigl ( H - \lambda \bigr ) &= H^2 - 2 \lambda \, H + \lambda^2 
			\geq \lambda^2 
			, 
		\end{align*}
		the non-negativity of $- 2 \lambda \, H = 2 \sabs{\lambda} \, H$ and the Lemma. 
	\end{enumerate}
\end{proof}
%
% subsection subsection_name (end)

\subsection{Eigenvalues and bound states} % (fold)
\label{quantum:spectrum:bound_states}
The hydrogen atom is a prototypical example of the type of problem we are interested in, namely Schrödinger operators on $L^2(\R^d)$ of the form 
\begin{align*}
	H = - \Delta_x + V
\end{align*}
where $V \leq 0$ is a non-positive potential decaying at infinity ($\lim_{\abs{x} \to \infty} V(x) = 0$). Under suitable technical conditions on the potential, $H$ defines a selfadjoint operator which is bounded from below, that is $H \geq c$ holds for some $c \in \R$, and we have 
\begin{align*}
	\sigma_{\mathrm{ess}}(H) &= \sigma(-\Delta_x) = [0,+\infty) 
	. 
\end{align*}
Now the question is whether $\sigma_{\mathrm{p}}(H) = \emptyset$ or 
\begin{align*}
	\sigma_{\mathrm{p}}(H) &= \{ E_n \}_{n = 0}^{N} \subset (-\infty,0)
\end{align*}
for some $N \in \N_0 \cup \{ \infty \}$. We shall always assume that the eigenvalues are ordered by magnitude, 
\begin{align*}
	E_0 \leq E_1 \leq \ldots 
\end{align*}
The \emph{ground state} $\psi_0$ is the eigenfunction to the lowest eigenvalue $E_0$. Eigenfunctions $\psi$ are localized: the weakest form of localization is $\psi \in L^2(\R^d)$, but usually one can expect \emph{exponential} localization. 

So there are two natural questions which we will answer in turn: 
\begin{enumerate}[(1)]
	\item Do eigenvalues below the essential spectrum \emph{exist}? 
	\item Can we give \emph{estimates} on their numerical values? 
\end{enumerate}

\subsubsection{The Birman-Schwinger principle} % (fold)
\label{quantum:spectrum:bound_states:Birman_Schwinger}
We begin with the Birman-Schwinger principle which gives a criterion for the existence and absence of eigenvalues at a specific energy level. It is \emph{the} standard tool for showing the existence or absence of eigenvalues. Assume $\varphi$ is an eigenvector of $H$ to the eigenvalue $-E < 0$. Then the eigenvalue equation is equivalent to 
\begin{align*}
	\bigl ( - \Delta_x + E \bigr ) \varphi &= - V \varphi 
	= \abs{V} \varphi
	. 
\end{align*}
If we define the vector $\psi := \abs{V}^{\nicefrac{1}{2}} \varphi$ and use that $-E \not \in \sigma(-\Delta_x) = [0,+\infty)$, we obtain 
\begin{align*}
	\abs{V}^{\nicefrac{1}{2}} \bigl ( - \Delta_x + E \bigr )^{-1} \abs{V}^{\nicefrac{1}{2}} \psi &= \psi
	. 
\end{align*}
In other words, we have just shown the 
\begin{theorem}[Birman-Schwinger principle]
	The function $\varphi \in L^2(\R^d)$ is an eigenvector of $H = - \Delta_x + V$ to the eigenvalue $-E < 0$ if and only if $\psi = \abs{V}^{\nicefrac{1}{2}} \varphi$ is an eigenvector of the Birman-Schwinger operator 
	\begin{align}
		K_E := \abs{V}^{\nicefrac{1}{2}} \bigl ( - \Delta_x + E \bigr )^{-1} \, \abs{V}^{\nicefrac{1}{2}} 
		\label{quantum:eqn:Birman_Schwinger_operator}
	\end{align}
	to the eigenvalue $1$. \marginpar{2014.0213}
\end{theorem}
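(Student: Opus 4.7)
The plan is to exploit two facts: since $V \leq 0$ we can write $V = -\abs{V}$, and since $-E < 0$ lies in $\rho(-\Delta_x) = \C \setminus [0,+\infty)$, the operator $(-\Delta_x + E)^{-1}$ is bounded on $L^2(\R^d)$. Given this, the equivalence is essentially a chain of algebraic manipulations on the eigenvalue equation; the proof will proceed by proving the two implications separately. I would remark at the outset that the manipulations make sense because the multiplication operator $\abs{V}^{\nicefrac{1}{2}}$ is self-adjoint (as a nonneg.\ real function of $\hat{x}$) and, under the standard hypotheses on $V$ that make $H$ self-adjoint and bounded below, the operator $K_E$ is a well-defined bounded self-adjoint operator on $L^2(\R^d)$.

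For the forward direction, I would start with an eigenfunction $\varphi \in \mathcal{D}(H)$ satisfying $H\varphi = -E\varphi$, rewrite this as $(-\Delta_x + E)\varphi = -V\varphi = \abs{V}\varphi$, then apply $(-\Delta_x + E)^{-1}$ to both sides to get $\varphi = (-\Delta_x + E)^{-1}\,\abs{V}\,\varphi$. Factoring $\abs{V} = \abs{V}^{\nicefrac{1}{2}}\,\abs{V}^{\nicefrac{1}{2}}$ and applying $\abs{V}^{\nicefrac{1}{2}}$ from the left yields
\begin{align*}
\psi := \abs{V}^{\nicefrac{1}{2}}\varphi = \abs{V}^{\nicefrac{1}{2}}\,(-\Delta_x + E)^{-1}\,\abs{V}^{\nicefrac{1}{2}}\,\psi = K_E\,\psi.
\end{align*}
To finish this direction I must verify that $\psi \neq 0$ whenever $\varphi \neq 0$: if $\abs{V}^{\nicefrac{1}{2}}\varphi = 0$, then $V\varphi = 0$, so the eigenvalue equation reduces to $-\Delta_x \varphi = -E\varphi$, contradicting $-E < 0 \notin \sigma(-\Delta_x)$.

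For the reverse direction, given $\psi \in L^2(\R^d)$ nonzero with $K_E\psi = \psi$, I would \emph{define} $\varphi := (-\Delta_x + E)^{-1}\,\abs{V}^{\nicefrac{1}{2}}\,\psi$, which lies in $L^2(\R^d)$ since $(-\Delta_x + E)^{-1}$ is bounded and $\abs{V}^{\nicefrac{1}{2}}\psi \in L^2$ (by the assumptions on $V$). Applying $\abs{V}^{\nicefrac{1}{2}}$ from the left then recovers $\abs{V}^{\nicefrac{1}{2}}\varphi = K_E\psi = \psi$, and consequently $\abs{V}\,\varphi = \abs{V}^{\nicefrac{1}{2}}\,\psi$. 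Applying $(-\Delta_x + E)$ to the definition of $\varphi$ gives $(-\Delta_x + E)\varphi = \abs{V}^{\nicefrac{1}{2}}\psi = \abs{V}\varphi = -V\varphi$, i.e.\ $H\varphi = -E\varphi$, and $\varphi \neq 0$ since otherwise $\psi = \abs{V}^{\nicefrac{1}{2}}\varphi = 0$, contradicting our assumption.

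The main obstacle is not the algebra but the technical bookkeeping around operator domains and composability: I need to know that $\abs{V}^{\nicefrac{1}{2}}(-\Delta_x + E)^{-1}\abs{V}^{\nicefrac{1}{2}}$ is genuinely a bounded self-adjoint operator on $L^2(\R^d)$ (rather than a formal product that only makes sense on a dense subspace), and that the $\varphi$ constructed in the reverse direction actually lies in $\mathcal{D}(H)$ so that the equation $H\varphi = -E\varphi$ is meaningful. These points require the standard relative form-boundedness hypothesis on $V$ (e.g.\ $V \in L^{\nicefrac{d}{2}}(\R^d) + L^{\infty}_{\eps}(\R^d)$ for $d \geq 3$), which the statement implicitly assumes through the selfadjointness of $H$; under such hypotheses $K_E$ is even compact, which gives the additional Birman--Schwinger \emph{counting} estimate for the number of eigenvalues of $H$ below $-E$, although this lies beyond what is claimed in the theorem itself.
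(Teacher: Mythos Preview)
Your proof is correct and follows exactly the same algebraic route as the paper: rewrite $H\varphi = -E\varphi$ as $(-\Delta_x + E)\varphi = \abs{V}\varphi$, invert $-\Delta_x + E$ (using $-E \notin \sigma(-\Delta_x)$), and multiply by $\abs{V}^{\nicefrac{1}{2}}$. If anything you are more careful than the paper, which presents only the forward direction in the text preceding the theorem and explicitly admits to glossing over the boundedness of $K_E$; your verification that $\psi \neq 0$ and your construction of $\varphi$ for the reverse implication fill in details the paper leaves implicit.
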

The only assumption we have glossed over is the boundedness of $K_E$. One may think that solving $K_E \psi = \psi$ is just as difficult as $H \varphi = - E \varphi$, but it is not. For instance, we immediately obtain the following 
\begin{corollary}
	Assume the Birman-Schwinger operator $K_E \in \mathcal{B} \bigl ( L^2(\R^d) \bigr )$ is bounded. Then for $\lambda_0$ small enough, $H_{\lambda} = - \Delta_x + \lambda V$ has no eigenvalue at $-E$ for all $0 \leq \lambda < \lambda_0$. 
\end{corollary}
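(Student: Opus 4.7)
The plan is to reduce the question of whether $-E$ is an eigenvalue of $H_\lambda = -\Delta_x + \lambda V$ to a statement about the spectral radius of the associated Birman-Schwinger operator. First I would observe that for the scaled potential $\lambda V$ (with $\lambda \geq 0$), the Birman-Schwinger operator as defined in \eqref{quantum:eqn:Birman_Schwinger_operator} picks up a clean scalar prefactor:
\begin{align*}
K_E^{(\lambda)} = \abs{\lambda V}^{\nicefrac{1}{2}} \bigl ( - \Delta_x + E \bigr )^{-1} \abs{\lambda V}^{\nicefrac{1}{2}} = \lambda \, K_E.
\end{align*}
This is immediate from $\abs{\lambda V}^{\nicefrac{1}{2}} = \sqrt{\lambda} \, \abs{V}^{\nicefrac{1}{2}}$ for $\lambda \geq 0$. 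By the Birman-Schwinger principle just proved, $-E$ is an eigenvalue of $H_\lambda$ if and only if $1$ is an eigenvalue of $K_E^{(\lambda)} = \lambda K_E$, equivalently, if and only if $\nicefrac{1}{\lambda}$ is an eigenvalue of $K_E$ (for $\lambda > 0$; the case $\lambda = 0$ is trivial since $-E < 0 \notin \sigma(-\Delta_x) = [0,+\infty)$).

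Next I would invoke boundedness: since $K_E \in \mathcal{B}(L^2(\R^d))$, every eigenvalue $\mu$ of $K_E$ satisfies $\sabs{\mu} \leq \norm{K_E}$. Indeed, if $K_E \psi = \mu \psi$ with $\psi \neq 0$, then $\sabs{\mu} \snorm{\psi} = \snorm{K_E \psi} \leq \norm{K_E} \snorm{\psi}$. Hence $\nicefrac{1}{\lambda}$ cannot be an eigenvalue of $K_E$ as soon as $\nicefrac{1}{\lambda} > \norm{K_E}$, i.e.\ as soon as $\lambda < \nicefrac{1}{\norm{K_E}}$. Setting $\lambda_0 := \nicefrac{1}{\norm{K_E}}$ (or any smaller positive number, with the convention $\lambda_0 = +\infty$ if $K_E = 0$), we conclude that $H_\lambda$ has no eigenvalue at $-E$ for any $0 \leq \lambda < \lambda_0$.

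There is no real obstacle here beyond bookkeeping: the corollary is a direct combination of the scaling identity $K_E^{(\lambda)} = \lambda K_E$ and the elementary fact that eigenvalues of a bounded operator lie in the disk of radius $\norm{K_E}$. The one point to take care of is excluding the trivial case $\lambda = 0$ separately (where $H_0 = -\Delta_x$ manifestly has no negative eigenvalues since its spectrum is $[0,+\infty)$), so that the equivalence via Birman-Schwinger only needs to be applied for $\lambda > 0$.
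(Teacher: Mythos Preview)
Your proof is correct and follows essentially the same approach as the paper: identify the Birman-Schwinger operator for $H_\lambda$ as $\lambda K_E$, then use that eigenvalues of a bounded operator lie in the disk of radius $\norm{K_E}$ to conclude that $1$ cannot be an eigenvalue of $\lambda K_E$ once $\lambda \norm{K_E} < 1$. Your version is slightly more explicit (you write out $\lambda_0 = \nicefrac{1}{\norm{K_E}}$ and treat $\lambda = 0$ separately), but the core argument is identical to the paper's.
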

\begin{proof}
	% CHANGED add citation
	Replacing $V$ with $\lambda V$ in equation~\eqref{quantum:eqn:Birman_Schwinger_operator} yields that the Birman-Schwinger operator for $H_{\lambda}$ is $\lambda K_E$. Thus, for $\lambda$ small enough, we can make $\lambda \norm{K_E} < 1$ arbitrarily small and since $\sup \abs{\sigma \bigl ( K_E \bigr )} \leq \norm{K_E}$,\footnote{This is a general fact: if $T \in \mathcal{B}(\mathcal{X})$ is an operator on a Banach space, then $\sup \abs{\sigma(T)} \leq \norm{T}$ holds \cite[Chapter~VIII.2, Theorems~3 and 4]{Yoshida:functional_analysis:1980}.} this means $1$ cannot be an eigenvalue. Hence, by the Birman-Schwinger principle there cannot exist an eigenvalue at $-E$. 
\end{proof}
Another advantage is that we have an explicit expression for the \emph{operator kernel} of $K_E$, the \emph{Birman-Schwinger kernel}, which allows us to make explicit estimates. In general, an operator kernel $K_T$ for an operator $T$ is a distribution on $\R^d \times \R^d$ so that 
\begin{align*}
	(T \psi)(x) = \int_{\R^d} \dd y \, K_T(x,y) \, \psi(y) 
	. 
\end{align*}
For the sake of brevity, we will also write $T(x,y)$ for $K_T(x,y)$. We have dedicated Chapter~\ref{Greens_functions} to one specific example: assume the operator $L$ is invertible and $L u = f$, then 
\begin{align*}
	u(x) = \int_{\R^d} \dd y \, G(x,y) \, f(y) 
	= \bigl ( L^{-1} f \bigr )(x)
\end{align*}
holds. In other words, the Green's function $G$ is the operator kernel of $L^{-1}$. 

Seeing as $K_E$ is the product of the multiplication operator $\abs{V}^{\nicefrac{1}{2}}$ and $\bigl ( - \Delta_x + E \bigr )^{-1}$, the dimension-dependent, explicit expression of Birman-Schwinger kernel involves only the Green's function of $- \Delta_x + E$ in that particular dimension, 
\begin{align*}
	K_E(x,y) = \abs{V(x)}^{\nicefrac{1}{2}} \bigl ( - \Delta_x + E \bigr )^{-1}(x,y) \abs{V(y)}^{\nicefrac{1}{2}} 
	. 
\end{align*}
In odd dimension, there exist closed expressions for $\bigl ( - \Delta_x + E \bigr )^{-1}(x,y)$ while for even $d$, no neat formulas for it exist. Nevertheless, its behavior can be characterized. 
\medskip

\noindent
Let us return to the original question: Can we show the \emph{existence} of eigenvalues as well via the Birman-Schwinger principle? The answer is yes, and we will treat a particular case: 
\begin{theorem}[\cite{Simon:bound_states_low_d_Schroedinger:1976}]\label{quantum:thm:existence_bound_state_1d_Schroedinger}
	Consider the Schrödinger operator $H_{\lambda} = - \partial_x^2 + \lambda V$ on $L^2(\R)$ where $\lambda > 0$ and the potential satisfies $V \in L^1(\R)$, $V \neq 0$, $V \leq 0$, and 
	\begin{align*}
		\int_{\R} \dd x \, \bigl ( 1 + x^2 \bigr ) \, \abs{V(x)} < \infty 
		. 
	\end{align*}
	Then there exists $\lambda_0 > 0$ small enough so that $H_{\lambda}$ has exactly one eigenvalue 
	\begin{align}
		E_{\lambda} &= - \frac{\lambda^2}{4} \, \left ( \int_{\R} \dd x \, \abs{V(x)} \right )^2 + \order(\lambda^4)
		\label{quantum:eqn:BS_neg_eigenvalue}
	\end{align}
	for all $\lambda \in (0,\lambda_0)$. 
\end{theorem}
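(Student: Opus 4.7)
The plan is to apply the Birman--Schwinger principle. In one dimension, the Green's function of $-\partial_x^2+E$ for $E>0$ is $G_E(x,y)=\frac{1}{2\sqrt{E}}\e^{-\sqrt{E}\abs{x-y}}$, obtained either via Fourier transform (the symbol is $(\xi^2+E)^{-1}$) or directly by solving the ODE $(-\partial_x^2+E)G(\cdot,y)=\delta_y$ with decay at infinity. Hence the Birman--Schwinger operator for $H_\lambda$ at energy $-E<0$ has the explicit kernel
\begin{align*}
K_{E,\lambda}(x,y)=\frac{\lambda}{2\sqrt{E}}\,\abs{V(x)}^{1/2}\,\e^{-\sqrt{E}\abs{x-y}}\,\abs{V(y)}^{1/2}.
\end{align*}
Since $\e^{-\sqrt{E}\abs{x-y}}\leq 1$, this is a Hilbert--Schmidt, self-adjoint, non-negative operator on $L^2(\R)$, and the Birman--Schwinger principle asserts that $-E$ is an eigenvalue of $H_\lambda$ if and only if $1$ is an eigenvalue of $K_{E,\lambda}$.

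Next I would decompose $K_{E,\lambda}=\frac{\lambda}{2\sqrt{E}}P+\lambda L_E$, where $P$ has kernel $\abs{V(x)}^{1/2}\abs{V(y)}^{1/2}$ and $L_E$ has kernel $\abs{V(x)}^{1/2}\frac{\e^{-\sqrt{E}\abs{x-y}}-1}{2\sqrt{E}}\abs{V(y)}^{1/2}$. The operator $P$ is rank one with unique nonzero eigenvalue $\norm{V}_{L^1}$ and eigenvector $\phi_0:=\abs{V}^{1/2}/\norm{V}_{L^1}^{1/2}$. Using the elementary estimate $\abs{\e^{-a}-1}\leq a$ for $a\geq 0$, the kernel of $L_E$ is bounded pointwise by $\tfrac{1}{2}\abs{V(x)}^{1/2}\abs{x-y}\abs{V(y)}^{1/2}$ \emph{uniformly in $E>0$}, and together with $\abs{x-y}^2\leq 2(x^2+y^2)$ and the moment condition $\int(1+x^2)\abs{V}\dd x<\infty$, this gives a uniform Hilbert--Schmidt bound $\norm{L_E}_{\mathrm{HS}}\leq C$.

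For existence and uniqueness, let $\mu_1(E,\lambda)\geq \mu_2(E,\lambda)\geq\ldots$ be the eigenvalues of $K_{E,\lambda}$. By Weyl's inequality, since $P$ has rank one,
\begin{align*}
\mu_2(E,\lambda)\leq \lambda\,\norm{L_E}\leq C\lambda,
\end{align*}
so for $\lambda_0$ small enough and all $\lambda\in(0,\lambda_0)$ one has $\mu_2(E,\lambda)<1$ for every $E>0$; this already guarantees that $H_\lambda$ cannot have more than one negative eigenvalue, and that such an eigenvalue is simple. On the other hand, $E\mapsto(-\partial_x^2+E)^{-1}$ is strictly operator-decreasing on $(0,\infty)$, so $\mu_1(\cdot,\lambda)$ is continuous and strictly decreasing, with $\mu_1(E,\lambda)\to+\infty$ as $E\searrow 0$ (from the $\tfrac{\lambda\norm{V}_{L^1}}{2\sqrt{E}}$ divergence) and $\mu_1(E,\lambda)\to 0$ as $E\to\infty$. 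The intermediate value theorem then produces a unique $E_\lambda>0$ with $\mu_1(E_\lambda,\lambda)=1$, proving the existence of exactly one eigenvalue $-E_\lambda$ of $H_\lambda$.

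Finally, the asymptotic~\eqref{quantum:eqn:BS_neg_eigenvalue} follows from Rayleigh--Schr\"odinger perturbation theory applied to the nondegenerate eigenvalue $\norm{V}_{L^1}$ of $P$: the uniform bound $\norm{\lambda L_E}\leq C\lambda$ yields
\begin{align*}
\mu_1(E,\lambda)=\frac{\lambda\,\norm{V}_{L^1}}{2\sqrt{E}}+\order(\lambda),
\end{align*}
so the equation $\mu_1(E_\lambda,\lambda)=1$ gives $\sqrt{E_\lambda}=\tfrac{\lambda}{2}\norm{V}_{L^1}(1+\order(\lambda))$, whence squaring produces the stated expansion. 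The main obstacle is the one-dimensional zero-energy resonance: the Green's function $G_E$ blows up as $E\searrow 0$, which is precisely what guarantees that an attractive potential creates a bound state for \emph{arbitrarily small} $\lambda$ (a phenomenon absent in dimension $d\geq 3$). The delicate ingredient is therefore the uniform-in-$E$ control of $L_E$ near the threshold, which is exactly what the second moment hypothesis $\int(1+x^2)\abs{V}\dd x<\infty$ is designed to provide.
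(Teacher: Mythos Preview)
Your argument is correct and uses the same Birman--Schwinger framework and the same splitting of the kernel into the rank-one piece $\tfrac{\lambda}{2\sqrt{E}}\,\sopro{\abs{V}^{1/2}}{\abs{V}^{1/2}}$ and the uniformly bounded remainder with kernel $\abs{V}^{1/2}\tfrac{\e^{-\sqrt{E}\abs{x-y}}-1}{2\sqrt{E}}\abs{V}^{1/2}$. Where you diverge from the paper is in how these two pieces are recombined. The paper factors
\[
1-\lambda K_{\mu^2}=(1-\lambda M_\mu)\Bigl(1-\tfrac{\lambda}{2\mu}(1-\lambda M_\mu)^{-1}\sopro{\abs{V}^{1/2}}{\abs{V}^{1/2}}\Bigr),
\]
so the Birman--Schwinger condition collapses to a scalar self-consistent equation $\mu=G(\mu):=\tfrac{\lambda}{2}\bscpro{\abs{V}^{1/2}}{(1-\lambda M_\mu)^{-1}\abs{V}^{1/2}}$, and uniqueness is obtained by bounding $\abs{\partial_\mu G}\leq\tfrac12$ via a Cauchy-integral estimate on $\partial_\mu M_\mu$. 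You instead argue spectrally: Weyl's inequality for rank-one perturbations forces $\mu_2(E,\lambda)\leq\lambda\norm{L_E}<1$, and strict monotonicity of $E\mapsto\mu_1(E,\lambda)$ together with the endpoint behaviour gives a unique crossing by the intermediate value theorem. Your route is more elementary and gives uniqueness of the eigenvalue essentially for free; the paper's fixed-point formulation is better adapted to reading off higher-order terms in the expansion of $\sqrt{E_\lambda}$ via the Neumann series of $(1-\lambda M_\mu)^{-1}$. Note that both arguments, as written, actually deliver $E_\lambda=-\tfrac{\lambda^2}{4}\norm{V}_{L^1}^2+\order(\lambda^3)$; the $\order(\lambda^4)$ in the statement requires one more term of the expansion (or a parity observation), which neither your sketch nor the paper's proof carries out explicitly.
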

The eigenvalue gives an intuition on the shape of the eigenfunction: it has few oscillations to minimize kinetic energy and is approximately constant in the region where $V$ is appreciably different from $0$ (this region is not too large because of the decay assumption $\int_{\R} \dd x \, x^2 \, \abs{V(x)} < \infty$). Hence, the eigenfunction sees only the average value of the potential. 

This intuition neither explains why other eigenvalues may appear nor that for $d \geq 3$, the theorem is false. 
\begin{proof}
	The arguments in \cite[Section~2]{Simon:bound_states_low_d_Schroedinger:1976} ensure the boundedness of the Birman-Schwinger operator. Moreover, in one dimension the Green's function for $-\partial_x^2 + E$ exists ($-E \not\in \sigma(-\partial_x^2)$) and can be computed explicitly, namely 
	\begin{align*}
		\bigl ( -\partial_x^2 + E \bigr )^{-1}(x,y) = \sqrt{2\pi} \, \bigl ( \Fourier (\xi^2 + E)^{-1} \bigr )(x-y)
		= \frac{\e^{- \sqrt{E} \abs{x-y}}}{2 \sqrt{E}}
		. 
	\end{align*}
	To simplify notation, let us define $\mu := \sqrt{E}$. Thus, the Birman-Schwinger kernel is the function 
	\begin{align*}
		K_{\mu^2}(x,y) &= \frac{1}{2 \mu} \, \abs{V(x)}^{\nicefrac{1}{2}} \, \e^{- \mu \abs{x-y}} \, \abs{V(y)}^{\nicefrac{1}{2}} 
		. 
	\end{align*}
	In addition, define the operators 
	\begin{align*}
		L_{\mu} := \frac{1}{2 \mu} \, \bopro{\abs{V}^{\nicefrac{1}{2}}}{\abs{V}^{\nicefrac{1}{2}}}
	\end{align*}
	and $M_{\mu} := K_{\mu^2} - L_{\mu}$. Clearly, given that $V \in L^1(\R)$, its square root is $L^2$ and $L_E$ is a bounded rank-$1$ operator. Moreover, the operator kernel 
	\begin{align*}
		M_{\mu}(x,y) = \abs{V(x)}^{\nicefrac{1}{2}} \, \frac{\e^{- \mu \abs{x-y}} - 1}{2 \mu} \, \abs{V(y)}^{\nicefrac{1}{2}} 
	\end{align*}
	is well-defined in the limit $\mu \rightarrow 0$ and analytic for $\mu \in \C$ with $\Re \mu > 0$. 
	
	The Birman-Schwinger principle tells us that $H_{\lambda}$ has an eigenvalue at $-\mu^2$ if and only if $1 \in \sigma_{\mathrm{p}}(K_{\mu^2})$: for $\lambda \ll 1$ small enough we have $\bnorm{\lambda \, M_{\mu}} < 1$ which means the Neumann series\footnote{In this context, the geometric series is usually referred to as Neumann series.} 
	\begin{align}
		\bigl ( 1 - \lambda \, M_{\mu} \bigr )^{-1} &= \sum_{n = 0}^{\infty} \lambda^n \, M_{\mu}^n 
		= 1 + \lambda \, M_{\mu} + \order(\lambda^2)
		\label{quantum:eqn:Neuman_series_M_mu}
	\end{align}
	exists in $\mathcal{B} \bigl ( L^2(\R) \bigr )$. Hence, the invertibility of 
	\begin{align*}
		1 - \lambda \, K_{\mu^2} &= 1 - \lambda \, M_{\mu} - \lambda \, L_{\mu} 
		\\
		&= \bigl ( 1 - \lambda \, M_{\mu} \bigr ) \, \Bigl ( 1 - \lambda \, \bigl ( 1 - \lambda \, M_{\mu} \bigr )^{-1} \, L_{\mu} \Bigr ) 
	\end{align*}
	hinges on whether $1$ is an eigenvalue of \marginpar{2014.02.25}
	\begin{align*}
		\lambda \, \bigl ( 1 - \lambda \, M_{\mu} \bigr )^{-1} \, L_{\mu} &= \opro{\tfrac{\lambda}{2 \mu} \bigl ( 1 - \lambda \, M_{\mu} \bigr )^{-1} \, \abs{V}^{\nicefrac{1}{2}}}{\abs{V}^{\nicefrac{1}{2}}}
		. 
	\end{align*}
	This is again a rank-$1$ operator, and thus, we can read off the eigenvector 
	\begin{align*}
		\psi_{\lambda,\mu} &= \frac{\lambda}{2 \mu} \bigl ( 1 - \lambda \, M_{\mu} \bigr )^{-1} \, \abs{V}^{\nicefrac{1}{2}} \in L^2(\R) 
	\end{align*}
	to its only non-zero eigenvalue. Moreover, we can compute this eigenvalue, 
	\begin{align*}
		\scpro{\abs{V}^{\nicefrac{1}{2}}}{ \; \tfrac{\lambda}{2 \mu} \bigl ( 1 - \lambda \, M_{\mu} \bigr )^{-1} \, \abs{V}^{\nicefrac{1}{2}}}
		, 
	\end{align*}
	and this is equal to $1$ if and only if $\mu$ satisfies the self-consistent equation 
	\begin{align*}
		\mu &= G(\mu) := \frac{\lambda}{2} \, \scpro{\abs{V}^{\nicefrac{1}{2}}}{ \; \bigl ( 1 - \lambda \, M_{\mu} \bigr )^{-1} \, \abs{V}^{\nicefrac{1}{2}}}
		. 
	\end{align*}
	Given that $\norm{\lambda \, M_{\mu}} < 1$ for $\lambda \ll 1$ small enough, we can express $\bigl ( 1 - \lambda \, M_{\mu} \bigr )^{-1}$ in terms of \eqref{quantum:eqn:Neuman_series_M_mu}. Keeping only the first term of the expansion~\eqref{quantum:eqn:Neuman_series_M_mu}, we approximate $G$ by the average of the potential 
	\begin{align}
		G(\mu) &= \tfrac{\lambda}{2} \, \bscpro{\abs{V}^{\nicefrac{1}{2}}}{\abs{V}^{\nicefrac{1}{2}}} + \order(\lambda^2) 
		= \frac{\lambda}{2} \, \int_{\R} \dd x \, \abs{V(x)} + \order(\lambda^2)
		. 
		\label{quantum:eqn:approximate_eigenvalue_sqrt}
	\end{align}
	Hence, $G(\mu) = \mu$ has \emph{a} solution $\mu_{\ast}$ provided $\lambda$ is small enough; additionally \emph{any} solution to this equation satisfies $\mu^{-1} \leq C_1 \, \lambda^{-1}$ for some constant $C_1 > 0$ and $\lambda$ small. 
	\medskip
	
	\noindent
	Now that we know that \emph{a} solution exists, we need to show uniqueness: Suppose we have found two solutions $\mu_1 \leq \mu_2$. Then they both solve the self-consistent equation $G(\mu_j) = \mu_j$, and assuming for a moment that $G$ is continuously differentiable in $\mu$, we use the fundamental theorem of calculus to obtain 
	\begin{align*}
		\babs{\mu_2 - \mu_1} &= \babs{G(\mu_2) - G(\mu_1)} 
		= \abs{\int_{\mu_1}^{\mu_2} \dd \mu \, \partial_{\mu} G(\mu)} 
		\\
		&
		\leq \sup_{\mu \in [\mu_1,\mu_2]} \babs{\partial_{\mu} G(\mu)} \, \babs{\mu_2 - \mu_1}
		. 
	\end{align*}
	\emph{If} we can show $G$ is continuously differentiable \emph{and} its derivative is bounded by $\nicefrac{1}{2}$ for $\lambda$ small enough, then the above inequality reads $\babs{\mu_2 - \mu_1} \leq \tfrac{1}{2} \babs{\mu_2 - \mu_1}$. This is only possible if $\mu_1 = \mu_2$, and the solution is unique. 
	
	To show the last bit, we note that $M_{\mu}$ and $(1-z)^{-1}$ are real-analytic in $\mu$ so that their composition $\bigl ( 1 - \lambda \, M_{\mu} \bigr )^{-1}$ is also real-analytic. The analyticity of $M_{\mu}$ for $\mu \in \C$, $\Re \mu > 0$, also yields the bound 
	\begin{align}
		\bnorm{\partial_{\mu} M_{\mu}} \leq C_2 \, \mu^{-1}
		\label{quantum:eqn:BS_proof_a_priori_estimate_G_mu}
	\end{align}
	via the Cauchy integral formula, because the maximal radius of the circular contour is less than $\mu$. 
	
	The derivative of the resolvent can be related to $\partial_{\mu} M_{\mu}$ via the useful trick
	\begin{align*}
		0 = \partial_{\mu} (\id) 
		&= \partial_{\mu} \Bigl ( \bigl ( 1 - \lambda \, M_{\mu} \bigr )^{-1} \; \bigl ( 1 - \lambda \, M_{\mu} \bigr ) \Bigr ) 
		\\
		&= \partial_{\mu} \bigl ( 1 - \lambda \, M_{\mu} \bigr )^{-1} \; \bigl ( 1 - \lambda \, M_{\mu} \bigr ) + \lambda \, \bigl ( 1 - \lambda \, M_{\mu} \bigr )^{-1} \; \partial_{\mu} M_{\mu}
	\end{align*}
	which yields 
	\begin{align*}
		\babs{\partial_{\mu} G(\mu)} &= \abs{\frac{\lambda^2}{2} \scpro{\abs{V}^{\nicefrac{1}{2}}}{ \; \bigl ( 1 - \lambda \, M_{\mu} \bigr )^{-1} \, \partial_{\mu} M_{\mu} \, \bigl ( 1 - \lambda \, M_{\mu} \bigr )^{-1} \, \abs{V}^{\nicefrac{1}{2}}}}
		. 
	\end{align*}
	The right-hand side can be estimated with the help of the Cauchy-Schwarz inequality 
	\begin{align*}
		\ldots &\leq \lambda^2 \, \bnorm{\abs{V}^{\nicefrac{1}{2}}}_{L^2(\R)}^2 \, \norm{\bigl ( 1 - \lambda \, M_{\mu} \bigr )^{-1}}^2 \, \bnorm{\partial_{\mu} M_{\mu}}
		=: C_3 \, \lambda^2 \, \bnorm{\partial_{\mu} M_{\mu}} 
		. 
	\end{align*}
	Combining \eqref{quantum:eqn:BS_proof_a_priori_estimate_G_mu} with $\mu^{-1} \leq C_1 \, \lambda^{-1}$ (which we obtained from $\mu = G(\mu)$), we find 
	\begin{align*}
		C_3 \, \lambda^2 \, \bnorm{\partial_{\mu} M_{\mu}} \leq C_3 \, \lambda^2 \, C_2 \, \mu^{-1} 
		\leq C_1 C_2 C_3 \, \lambda 
		. 
	\end{align*}
	Put another way, we have deduced the bound $\babs{\partial_{\mu} G(\mu)} \leq C \, \lambda$ which means that for $\lambda$ small enough, we can ensure that the derivative is less than $\nicefrac{1}{2}$. Thus, the eigenvalue is unique and we have shown the theorem. 
\end{proof}
%
% subsubsection the_birman_schwinger_principle (end)

\subsubsection{The min-max principle} % (fold)
\label{quantum:spectrum:bound_states:the_min_max_principle}
Now that we have established criteria for the \emph{existence} of bound states below the continuous spectrum for operators of the form $H = - \Delta_x + V$, we proceed to find other ways to give estimates of their numerical values. Crucially, we shall always assume $H \geq c$ for some $c \in \R$. Most of the methods of this chapter do not depend on the particular form of the hamiltonian. 
\medskip

\noindent
So let us assume we have established the existence of a \emph{ground state} $\psi_0$, \ie there exists an eigenvalue $E_0 = \inf \sigma(H) < 0 = \inf \sigma_{\mathrm{ess}}(H)$ at the bottom of the spectrum, the \emph{ground state energy}, whose eigenfunction is $\psi_0$. Then simplest estimate is obtained by minimizing the \emph{Rayleigh quotient}
\begin{align*}
	\frac{\mathbb{E}_{\psi}(H)}{\norm{\psi}^2} &= \frac{\scpro{\psi}{H \psi}}{\norm{\psi}^2}
\end{align*}
for a family of trial wave functions (see also homework problem~54). Clearly, the Rayleigh quotient is bounded from below by $E_0$ for otherwise, $E_0$ is not the infimum of the spectrum. 
\begin{proposition}[The Rayleigh-Ritz principle]
	Let $H$ with a densely defined, selfadjoint operator which is bounded from below, \ie there exists $c \in \R$ such that $H \geq c$. Then 
	\begin{align}
		\inf \sigma(H) \leq \frac{\scpro{\psi}{H \psi}}{\norm{\psi}^2}
		\label{quantum:eqn:Rayleigh_Ritz}
	\end{align}
	holds for all $\psi \in \Hil \setminus \{ 0 \}$. 
\end{proposition}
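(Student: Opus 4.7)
The goal is to show $\scpro{\psi}{H\psi} \geq E_0 \norm{\psi}^2$ for every $\psi \in \mathcal{D}(H) \setminus \{0\}$, where $E_0 := \inf \sigma(H)$; the hypothesis $H \geq c$ combined with the realness of the spectrum (Theorem~\ref{quantum:thm:spectrum_selfadjoint_real}) ensures $E_0$ is a finite real number, and for $\psi \notin \mathcal{D}(H)$ one may interpret the right-hand side of the stated inequality as $+\infty$, so the claim is vacuous there. Subtracting $E_0$ from $H$ reduces the problem to the following cleaner statement: if $A = A^*$ and $\sigma(A) \subseteq [0,\infty)$, then $\scpro{\psi}{A\psi} \geq 0$ on $\mathcal{D}(A)$. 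This is precisely the converse of Theorem~\ref{quantum:thm:spectrum_selfadjoint_real}(ii).

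To prove this converse I would pass to the resolvent. Fix $\varepsilon > 0$; since $-\varepsilon \notin \sigma(A)$, the resolvent $R_\varepsilon := (A+\varepsilon)^{-1}$ is a bounded selfadjoint operator, and an application of Weyl's criterion (Theorem~\ref{operators:thm:Weyl_criterion}) to $R_\varepsilon$ together with the spectral mapping $\sigma(R_\varepsilon) = \{(\mu+\varepsilon)^{-1} : \mu \in \sigma(A)\}$ shows $\sigma(R_\varepsilon) \subseteq (0, \varepsilon^{-1}]$. Granting for the moment that a bounded selfadjoint operator with non-negative spectrum admits a bounded selfadjoint square root, choose $S_\varepsilon$ with $S_\varepsilon^2 = R_\varepsilon$. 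Then for $\psi \in \mathcal{D}(A)$, setting $\varphi := (A+\varepsilon)\psi$ yields
\begin{align*}
\scpro{\psi}{(A+\varepsilon)\psi} = \scpro{R_\varepsilon \varphi}{\varphi} = \bscpro{S_\varepsilon \varphi}{S_\varepsilon \varphi} = \bnorm{S_\varepsilon \varphi}^2 \geq 0,
\end{align*}
and hence $\scpro{\psi}{A\psi} \geq -\varepsilon \norm{\psi}^2$. Letting $\varepsilon \searrow 0$ completes the argument.

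The main obstacle is the construction of $S_\varepsilon$, which is the content of the continuous functional calculus for bounded selfadjoint operators: one would approximate $x \mapsto \sqrt{x}$ uniformly on the compact set $\sigma(R_\varepsilon) \subseteq [0, \varepsilon^{-1}]$ by real polynomials $P_n$, set $S_n := P_n(R_\varepsilon)$, and use the spectral-radius identity $\norm{P(R_\varepsilon)} = \sup_{\mu \in \sigma(R_\varepsilon)} \abs{P(\mu)}$ for selfadjoint operators to see $\{S_n\}$ is Cauchy in operator norm with limit $S_\varepsilon$ satisfying $S_\varepsilon^2 = R_\varepsilon$. This functional-calculus input is not developed in the excerpt and would have to be established or invoked separately; it is where the real analytical work of the proof sits. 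If one has the full spectral theorem available, the proposition collapses immediately to $\scpro{\psi}{H\psi} = \int_{\sigma(H)} \mu \, \dd \scpro{\psi}{E(\mu)\psi} \geq E_0 \norm{\psi}^2$, but absent that machinery the resolvent-plus-square-root route above is the most economical path using only the tools developed in the text.
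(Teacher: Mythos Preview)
The paper does not actually prove this proposition; it only remarks that a rigorous argument ``requires machinery that is not yet available to us'' and refers the reader to Reed--Simon, Theorem~XIII.1. Your sketch therefore goes beyond what the paper offers, and the route you outline is sound: reducing to the claim that a selfadjoint operator with spectrum in $[0,\infty)$ is form-positive, passing to the bounded resolvent $R_\varepsilon = (A+\varepsilon)^{-1}$, and extracting a square root via polynomial approximation is one of the standard approaches. You have also correctly flagged the crux yourself --- the construction of $S_\varepsilon$ via the continuous functional calculus for bounded selfadjoint operators is precisely the ``machinery not yet available'' the paper is alluding to, and your closing remark about the spectral theorem collapsing the argument is exactly right.

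One small correction: when $A$ is unbounded, $R_\varepsilon$ is injective but not surjective (its range is $\mathcal{D}(A) \subsetneq \Hil$), so $0 \in \sigma(R_\varepsilon)$ and the spectral mapping gives $\sigma(R_\varepsilon) \subseteq [0,\varepsilon^{-1}]$ rather than $(0,\varepsilon^{-1}]$. This does not affect your argument, since non-negativity of the spectrum is all you actually use when building the square root.
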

A rigorous proof of this innocent-looking fact (see \eg \cite[Theorem~XIII.1]{Reed_Simon:M_cap_Phi_4:1978}) requires machinery that is not yet available to us. 

A non-obvious fact is that we can also give a \emph{lower} bound on the ground state energy: 
\begin{theorem}[Temple's inequality, Theorem~XIII.5 in \cite{Reed_Simon:M_cap_Phi_4:1978}]
	Let $H$ be a selfadjoint operator that is bounded from below with ground state $E_0 \in \sigma_{\mathrm{p}}(H)$, $E_0 < 0$. Suppose in addition $E_0 < E_1$ where $E_1$ is either the second eigenvalue (in case more eigenvalues exist) or the bottom of the essential spectrum. 
	Then for $\mu \in (E_0,E_1)$ and $\psi$ with $\norm{\psi} = 1$ and $\scpro{\psi}{H \psi} < \mu$, Temple's inequality holds: 
	\begin{align*}
		E_0 \geq \scpro{\psi}{H \psi} - \frac{\scpro{\psi}{H^2 \psi} - \scpro{\psi}{H \psi}^2}{\mu - \scpro{\psi}{H \psi}} 
		= \scpro{\psi}{H \psi} - \frac{\mathrm{Var}_{\psi}(H)}{\mu - \scpro{\psi}{H \psi}} 
	\end{align*}
\end{theorem}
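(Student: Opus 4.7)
The plan is to reduce Temple's inequality to the single operator inequality $(H - E_0)(H - \mu) \geq 0$, and then take the expectation value in the state $\psi$ and rearrange algebraically. The idea is that the polynomial $p(\lambda) := (\lambda - E_0)(\lambda - \mu)$ is non-negative on $\sigma(H) \subseteq \{E_0\} \cup [E_1,\infty)$: it vanishes at $\lambda = E_0$ and has both factors strictly positive when $\lambda \geq E_1 > \mu > E_0$. By functional calculus this gives $p(H) \geq 0$.

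Since the notes have not yet developed a full spectral calculus, I would instead argue by an orthogonal decomposition. Write $\psi = \alpha \, \psi_0 + \phi$ with $\phi \perp \psi_0$ and $\norm{\psi_0} = 1$. The one-dimensional eigenspace spanned by $\psi_0$ is $H$-invariant and $(H-E_0)\psi_0 = 0$, so $(H-E_0)(H-\mu)(\alpha \psi_0) = 0$. The orthogonal complement $\{\psi_0\}^{\perp}$ is also $H$-invariant, because $\bscpro{\psi_0}{H \phi} = E_0 \, \bscpro{\psi_0}{\phi} = 0$, and by the Rayleigh-Ritz principle applied to the restriction, the infimum of $H$ on this subspace is at least $E_1$. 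Completing the square one writes
\begin{align*}
(H - E_0)(H - \mu) = \bigl(H - \tfrac{E_0 + \mu}{2}\bigr)^2 - \bigl(\tfrac{\mu - E_0}{2}\bigr)^2,
\end{align*}
and on $\{\psi_0\}^{\perp}$ the bound $H \geq E_1$ gives $\bnorm{(H - \tfrac{E_0+\mu}{2}) \phi} \geq (E_1 - \tfrac{E_0 + \mu}{2}) \norm{\phi} > \tfrac{\mu - E_0}{2} \norm{\phi}$, since $E_1 > \mu$. Combining the two orthogonal contributions (the cross terms vanish because $(H-E_0)(H-\mu)\phi$ stays in $\{\psi_0\}^{\perp}$) yields $\bscpro{\psi}{(H - E_0)(H - \mu) \psi} \geq 0$.

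Expanding this inequality and writing $a := \bscpro{\psi}{H \psi}$ and $b := \bscpro{\psi}{H^2 \psi}$ gives
\begin{align*}
b - (E_0 + \mu) a + E_0 \mu \geq 0 \; \Longleftrightarrow \; E_0 (\mu - a) \geq \mu a - b.
\end{align*}
Dividing by $\mu - a > 0$, which is precisely the hypothesis $\bscpro{\psi}{H \psi} < \mu$, yields $E_0 \geq \tfrac{\mu a - b}{\mu - a}$. A one-line algebraic manipulation shows $\tfrac{\mu a - b}{\mu - a} = a - \tfrac{b - a^2}{\mu - a}$, which is Temple's inequality since $b - a^2 = \mathrm{Var}_{\psi}(H)$.

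The main obstacle is the operator inequality $(H - E_0)(H - \mu) \geq 0$. The clean route is via the spectral theorem, which we do not have at full strength; my decomposition argument sidesteps this but still relies on two non-trivial facts that need care for possibly unbounded $H$: the $H$-invariance of $\{\psi_0\}^{\perp}$ (which is immediate on the domain of $H$) and the bound $\inf \sigma(H\vert_{\{\psi_0\}^{\perp}}) \geq E_1$, which is the variational characterization of the second eigenvalue (or of the bottom of the essential spectrum). Everything downstream is bookkeeping.
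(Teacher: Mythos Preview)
Your proof is correct and follows the same route as the paper: both hinge on the operator inequality $(H-E_0)(H-\mu)\geq 0$, then expand and divide by $\mu - \bscpro{\psi}{H\psi} > 0$. The paper simply asserts this inequality (``the product is $=0$ if applied to the ground state and $>0$ otherwise because $\mu<E_1$''), whereas you spell out an orthogonal decomposition and a completing-the-square argument on $\{\psi_0\}^{\perp}$; your extra detail is sound and the downstream algebra matches exactly.
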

Temple's inequality gives an energy window for the ground state energy: if $\psi$ is close to the ground state wave function, then the right-hand side is also close to $E_0$. On the other hand, one needs to know a lower bound on the \emph{second} eigenvalue $E_1$. 
\begin{proof}
	By assumption, $E_0$ is an isolated eigenvalue of finite multiplicity (otherwise $E_0 = E_1 = E_n$ for all $n \in \N$), and thus the operator $(H - E_0) (H - \mu) \geq 0$ is non-negative: the product is $= 0$ if applied to the ground state and $> 0$ otherwise because $\mu < E_1$. Consequently, 
	\begin{align}
		\bscpro{\psi}{(H - \tilde{E_1}) H \psi} \geq E_0 \, \bscpro{\psi}{(H - \mu) \psi} 
	\end{align}
	holds which, combined with the hypothesis $\bscpro{\psi}{(H - \mu) \psi} < 0$, yields 
	\begin{align*}
		E_0 \geq \frac{\mu \, \scpro{\psi}{H \psi} - \scpro{\psi}{H^2 \psi}}{\mu - \scpro{\psi}{H \psi}} 
		. 
	\end{align*}
\end{proof}
What about other bound states below the essential spectrum (the ionization threshold)? Usually, we do not know \emph{whether} and \emph{how many} eigenvalues exist. Nevertheless, we can define a sequence of non-decreasing real numbers that coincides with the eigenvalues if they exist: the Rayleigh quotient suggests to use 
\begin{align*}
	E_0 := \inf_{\substack{\varphi \in \mathcal{D}(H), \norm{\varphi} = 1}} \scpro{\varphi}{H \varphi}
\end{align*}
as the definition of the ground state energy. Note that even if $H$ does not have eigenvalues, $E_0$ is still well-defined and yields $\inf \sigma(H)$ (use a Weyl sequence). A priori, we do \emph{not} know whether a $E_0$ is an eigenvalue, so we do not know whether an eigenvector exists. However, \emph{if} $E_0$ is an eigenvalue, then the eigenvector $\psi_1$ to the next eigenvalue $E_1$ (if it exists) would necessarily have to be orthogonal to $\psi_0$. Then the next eigenvalue satisfies 
\begin{align*}
	E_1 = \sup_{\varphi_0 \in \domain(H) \setminus \{ 0 \}} \inf_{\substack{\varphi \in \mathcal{D}(H), \norm{\varphi} = 1 \\ \varphi \in \{ \varphi_0 \}^{\perp}}} \scpro{\varphi}{H \varphi}
	. 
\end{align*}
It turns out that this is the good definition even if $E_0 \not \in \sigma_{\mathrm{disc}}(H)$ is not an eigenvalue of finite multiplicity, because then $E_0 = E_1$. Quite generally, the candidate for the $n$th eigenvalue is 
\begin{align*}
	E_n := \sup_{\substack{\varphi_1 , \ldots , \varphi_n \in \mathcal{D}(H) \\ \scpro{\varphi_j}{\varphi_k} = \delta_{jk}}} \inf_{\substack{\varphi \in \mathcal{D}(H), \norm{\varphi} = 1 \\ \varphi \in \{ \varphi_1 , \ldots , \varphi_n \}^{\perp}}} \scpro{\varphi}{H \varphi}
	. 
\end{align*}
Thus, we obtain a sequence of non-decreasing real numbers 
\begin{align*}
	E_0 \leq E_1 \leq E_2 \leq \ldots 
\end{align*}
which -- if they exist -- are the eigenvalues repeated according to their multiplicities. One can show rigorously that if $E_n = E_{n+1} = E_{n + 2} = \ldots$, then $E_n = \inf \sigma_{\mathrm{ess}}(H)$ is the bottom of the essential spectrum. Otherwise, the $E_n < \inf \sigma_{\mathrm{ess}}(H)$ are \emph{eigenvalues of finite multiplicity}. In that case, there are at most $n$ eigenvalues below the essential spectrum. 

One may object that quite generally, it is impossible to evaluate $E_n$. Here is where the \emph{min-max principle} comes into play: assume we have chosen $n$ trial wave functions. Then this family of trial wave functions is a good candidate for the first few eigenfunctions if the eigenvalues $\lambda_j$ of the matrix $h := \Bigl ( \bscpro{\varphi_j}{H \varphi_k} \Bigr )_{0 \leq j , k \leq n-1} $ (ordered by size) are close to the $E_j$. 
\begin{theorem}[The min-max principle]
	Suppose $H$ is a selfadjoint operator on the Hilbert space $\Hil$ with domain $\mathcal{D}(H)$. Moreover, assume $H$ is bounded from below. Let $\bigl \{ \varphi_0 , \ldots , \varphi_{n-1} \bigr \} \subset \mathcal{D}(H)$ be an orthonormal system of $n$ functions and consider the $n \times n$ matrix 
	\begin{align*}
		h := \Bigl ( \bscpro{\varphi_j}{H \varphi_k} \Bigr )_{0 \leq j , k \leq n-1} 
	\end{align*}
	with eigenvalues $\lambda_0 \leq \lambda_1 \leq \ldots \leq \lambda_{n-1}$. Then we have that \marginpar{2014.02.27}
	\begin{align*}
		E_j \leq \lambda_j 
		\qquad \qquad \forall j = 0 , \ldots , n-1
		. 
	\end{align*}
\end{theorem}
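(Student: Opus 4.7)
The plan is to exhibit, for each $j$ and every choice of $j$ orthonormal vectors $\psi_1,\ldots,\psi_j \in \mathcal{D}(H)$, a normalized trial vector $\varphi \in \{\psi_1,\ldots,\psi_j\}^\perp$ whose Rayleigh quotient is at most $\lambda_j$. Taking the infimum over $\varphi$ and then the supremum over $\psi_1,\ldots,\psi_j$ in the definition of $E_j$ will yield $E_j \leq \lambda_j$. The trial vector will be constructed from a diagonalization of $h$ combined with a dimension count.

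First I would diagonalize the matrix $h$: pick orthonormal eigenvectors $u^{(0)},\ldots,u^{(n-1)} \in \C^n$ of $h$ with $h\, u^{(k)} = \lambda_k u^{(k)}$, and define $\chi_k := \sum_{l=0}^{n-1} u^{(k)}_l \, \varphi_l \in \mathcal{D}(H)$. A direct computation using the orthonormality of $\{\varphi_0,\ldots,\varphi_{n-1}\}$ shows that $\scpro{\chi_k}{\chi_l} = \sscpro{u^{(k)}}{u^{(l)}}_{\C^n} = \delta_{kl}$ and
\begin{align*}
\scpro{\chi_k}{H\chi_l} = \bscpro{u^{(k)}}{h\, u^{(l)}}_{\C^n} = \lambda_l\, \delta_{kl}.
\end{align*}
In other words, the $\chi_k$ simultaneously diagonalize $H$ when tested against each other.

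Next I would fix an arbitrary orthonormal system $\psi_1,\ldots,\psi_j \in \mathcal{D}(H)$ and look inside the $(j+1)$-dimensional subspace $W := \mathrm{span}\{\chi_0,\ldots,\chi_j\}$. The conditions $\scpro{\psi_k}{\varphi} = 0$ for $k=1,\ldots,j$ impose $j$ linear constraints on $W$, so by dimension counting there exists a nonzero $\varphi \in W$ orthogonal to all the $\psi_k$. After normalizing, write $\varphi = \sum_{k=0}^{j} a_k \chi_k$ with $\sum_{k=0}^j |a_k|^2 = 1$. Then, using the diagonalization from the previous step,
\begin{align*}
\scpro{\varphi}{H\varphi} = \sum_{k=0}^{j} |a_k|^2 \, \lambda_k \leq \lambda_j \sum_{k=0}^{j} |a_k|^2 = \lambda_j,
\end{align*}
where the inequality uses $\lambda_k \leq \lambda_j$ for $k \leq j$. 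Hence the infimum in the definition of $E_j$ over normalized $\varphi \in \{\psi_1,\ldots,\psi_j\}^\perp \cap \mathcal{D}(H)$ is bounded above by $\lambda_j$; taking the supremum over $\psi_1,\ldots,\psi_j$ preserves this bound and gives $E_j \leq \lambda_j$.

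No step is genuinely hard here — the entire argument reduces to diagonalizing a finite Hermitian matrix and a one-line dimension count. The only point requiring care is making sure the trial vector $\varphi$ lies in $\mathcal{D}(H)$, which is automatic because $W \subseteq \mathrm{span}\{\varphi_0,\ldots,\varphi_{n-1}\} \subseteq \mathcal{D}(H)$, and that the $\chi_k$ genuinely diagonalize the quadratic form of $H$ on $W$, which is the content of the computation above.
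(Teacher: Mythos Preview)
Your proof is correct. Both you and the paper begin by diagonalizing $h$ to obtain the orthonormal system $\chi_k = \sum_l u^{(k)}_l \varphi_l$ with $\langle \chi_k, H\chi_l\rangle = \lambda_l \delta_{kl}$, but the arguments then diverge. The paper proceeds by induction on $k$, fixing the \emph{specific} constraint subspace $\mathrm{span}\{\chi_0,\ldots,\chi_k\}$ and claiming that any normalized $\chi$ orthogonal to it satisfies $\langle\chi,H\chi\rangle \geq E_{k+1}$; minimizing over such $\chi$ then yields $\lambda_{k+1}$. Your argument instead treats the supremum in the definition of $E_j$ head-on: for an \emph{arbitrary} orthonormal system $\psi_1,\ldots,\psi_j$, a dimension count inside the $(j{+}1)$-dimensional space $\mathrm{span}\{\chi_0,\ldots,\chi_j\}$ produces a normalized trial vector orthogonal to all the $\psi_i$ with Rayleigh quotient at most $\lambda_j$, so the inner infimum is bounded by $\lambda_j$ for every choice of constraints. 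This is the cleaner route: the paper's assertion that orthogonality to \emph{some} $(k{+}1)$-dimensional subspace forces $\langle\chi,H\chi\rangle \geq E_{k+1}$ does not follow directly from the sup--inf definition of $E_{k+1}$ (for a single choice of constraint subspace one only gets that the infimum is $\leq E_{k+1}$), whereas your dimension-count argument is exactly what is needed to control the supremum.
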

\begin{proof}
	We proceed by induction over $k$ (which enumerates the eigenvalues of $h$): denote the normalized eigenvector to the lowest eigenvalue $\lambda_0$ with $v_0 = \bigl ( v_{0,0} , \ldots v_{0,n-1} \bigr )$. Then the normalized vector $\chi_0 := \sum_{j = 0}^{n-1} v_{0,j} \, \varphi_j$ satisfies 
	\begin{align*}
		\lambda_0 = \bscpro{v_0}{h v_0}_{\C^n} = \bscpro{\chi_0}{H \chi_0} 
		\geq E_0
	\end{align*}
	by the Rayleigh-Ritz principle. 
	
	Now assume we have shown that $E_l \leq \lambda_l$ holds for all $l = 0 , \ldots , k \leq n-2$. Clearly, the eigenvectors $v_0 , \ldots , v_k$ to $h$, and the space spanned by the corresponding normalized $\chi_l = \sum_{j = 0}^{n-1} v_{l,j} \, \varphi_j$ is $k+1$-dimensional. Hence, for any 
	\begin{align*}
		\chi = \sum_{j = 0}^{n-1} w_j \, \chi_j \in \span \bigl \{ \chi_0 , \ldots , \chi_k \bigr \}^{\perp}
	\end{align*}
	with coefficients $w \in \{ v_0 , \ldots , v_k \}^{\perp}$ we obtain 
	\begin{align*}
		\scpro{w}{h w} = \scpro{\chi}{H \chi} \geq E_{k+1}
	\end{align*}
	because $\chi$ is orthogonal to a $k+1$-dimensional subspace of $\mathcal{D}(H)$. The left-hand side can be minimized by setting $w = v_{k+1}$, the eigenvector to $\lambda_{k+1}$, and thus, $E_{k+1} \leq \lambda_{k+1}$. This concludes the proof. 
\end{proof}
One can use the min-max principle to make the following intuition rigorous: Assume one is given an operator $H(V) = -\Delta_x + V$ whose potential vanishes sufficiently rapidly at $\infty$, and one knows that $H(V)$ has a certain number of eigenvalues $\{ E_j(V) \}_{j \in \mathcal{I}}$, $\mathcal{I} \subseteq \N_0$. The decay conditions on $V$ ensure $\sigma_{\mathrm{ess}} \bigl ( H(V) \bigr ) = [0,+\infty)$. Then if $W \leq V$ is a second potential of the same type, the min-max principle implies 
\begin{align*}
	E_j(W) \leq E_j(V) 
	. 
\end{align*}
In particular, $H(W)$ has \emph{at least} as many eigenvalues as $H(V)$. This fact combined with Theorem~\ref{quantum:thm:existence_bound_state_1d_Schroedinger} immediately yields 
\begin{corollary}
	Suppose we are in the setting of Theorem~\ref{quantum:thm:existence_bound_state_1d_Schroedinger}. Then for all $\lambda > 0$ the Schrödinger operator $H = - \partial_x^2 + \lambda V$ has at least one eigenvalue $E_0 < 0$. 
\end{corollary}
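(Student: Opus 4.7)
The plan is to extend the existence result of Theorem~\ref{quantum:thm:existence_bound_state_1d_Schroedinger} from the small-coupling regime $\lambda \in (0,\lambda_0)$ to all $\lambda > 0$ using a monotonicity argument based on the Rayleigh--Ritz/min-max principle. The key observation is that since $V \leq 0$, the family of potentials $\lambda V$ is monotonically decreasing in $\lambda$, and the hamiltonians $H(\lambda) := -\partial_x^2 + \lambda V$ inherit this monotonicity in the sense of quadratic forms.

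First, I would fix $\lambda_0 > 0$ from Theorem~\ref{quantum:thm:existence_bound_state_1d_Schroedinger} and let $\psi_{\lambda_0} \in L^2(\R)$ denote the (unique) normalized eigenfunction of $H(\lambda_0)$ with eigenvalue $E_{\lambda_0} < 0$ given by equation~\eqref{quantum:eqn:BS_neg_eigenvalue}. For any $\lambda \geq \lambda_0$ I would then evaluate the Rayleigh quotient of $H(\lambda)$ on this trial state: writing $H(\lambda) = H(\lambda_0) + (\lambda - \lambda_0) V$ gives
\begin{align*}
    \bscpro{\psi_{\lambda_0}}{H(\lambda) \psi_{\lambda_0}} = E_{\lambda_0} + (\lambda - \lambda_0) \, \bscpro{\psi_{\lambda_0}}{V \psi_{\lambda_0}} \leq E_{\lambda_0} < 0,
\end{align*}
where the inequality uses $\lambda - \lambda_0 \geq 0$ together with $V \leq 0$, which forces $\sscpro{\psi_{\lambda_0}}{V \psi_{\lambda_0}} \leq 0$. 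The Rayleigh--Ritz principle~\eqref{quantum:eqn:Rayleigh_Ritz} then yields $\inf \sigma \bigl( H(\lambda) \bigr) \leq E_{\lambda_0} < 0$ for every $\lambda \geq \lambda_0$, and of course the case $\lambda \in (0,\lambda_0)$ is already covered by Theorem~\ref{quantum:thm:existence_bound_state_1d_Schroedinger}.

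It remains to upgrade the statement ``$\inf \sigma(H(\lambda)) < 0$'' to ``there exists an eigenvalue $E_0 < 0$''. Here I would invoke the fact that under the integrability and decay hypotheses imposed on $V$ (in particular $V \in L^1(\R)$ with $\int (1+x^2) \abs{V} < \infty$), the perturbation $\lambda V$ is sufficiently small at infinity that $\sigma_{\mathrm{ess}} \bigl( H(\lambda) \bigr) = \sigma_{\mathrm{ess}}(-\partial_x^2) = [0,+\infty)$. Combined with the min-max characterization of eigenvalues below the essential spectrum (the final remark of Section~\ref{quantum:spectrum:bound_states:the_min_max_principle}), any point of $\sigma(H(\lambda))$ lying strictly below $0$ must be an eigenvalue of finite multiplicity. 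In particular, the min-max value $E_0 := \inf_{\norm{\varphi} = 1} \sscpro{\varphi}{H(\lambda) \varphi} \leq E_{\lambda_0} < 0$ is realized as an honest eigenvalue.

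The only real subtlety is the invariance of the essential spectrum under the perturbation $\lambda V$; this is standard but was not proved in the notes, so one would either cite it or sketch the argument that $\lambda V (-\partial_x^2 + 1)^{-1}$ is a compact operator (which follows from the decay and integrability of $V$) so that Weyl's theorem on the stability of the essential spectrum applies. Everything else reduces to comparing Rayleigh quotients.
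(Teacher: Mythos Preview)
Your approach is essentially the paper's: the paper notes that since $V \leq 0$, for $\lambda \geq \lambda_*$ one has $\lambda V \leq \lambda_* V$, and the min-max principle then gives $E_0(\lambda V) \leq E_0(\lambda_* V) < 0$, which together with $\sigma_{\mathrm{ess}}\bigl(H(\lambda)\bigr) = [0,\infty)$ forces a genuine eigenvalue below zero; your Rayleigh--Ritz computation with the explicit trial state is precisely the mechanism behind this monotonicity. One small slip: Theorem~\ref{quantum:thm:existence_bound_state_1d_Schroedinger} guarantees an eigenvalue only for $\lambda \in (0,\lambda_0)$, not at the endpoint $\lambda_0$, so you should anchor your trial state at some $\lambda_* \in (0,\lambda_0)$ rather than at $\lambda_0$ itself.
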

%
% section spectral_properties_of_hamiltonians (end)

\section{Magnetic fields} % (fold)
\label{quantum:magnetic}
Classically, there are \emph{two} ways to include magnetic fields (\cf Chapter~\ref{classical_mechanics:magnetic}): either by minimal substitution $p \mapsto p - A(x)$ which involves the magnetic vector \emph{potential} $A$ or one modifies the symplectic form to include the magnetic \emph{field} $B = \nabla_x \times A$. Note that the physical observable is the magnetic \emph{field} rather than the vector \emph{potential}, because there are many vector potentials which represent the same magnetic field. For instance, if $A$ is a vector potential to the magnetic field $B = \nabla_x \times A$, then also $A' = A + \nabla_x \phi$ is another vector potential to $B$, because $\nabla_x \times \nabla_x \phi = 0$. The scalar function $\phi$ generates a \emph{gauge transformation}. 

In contrast, one always needs to choose a vector potential in quantum mechanics, and the hamiltonian for a non-relativistic particle subjected to an electromagnetic field $(E,B) = \bigl ( - \nabla_x V , \nabla_x \times A \bigr )$ is obtained by minimal substitution as well, 
\begin{align}
	H^A = \bigl ( - \ii \nabla_x - A \bigr )^2 + V 
	. 
\end{align}
What happens if we choose an \emph{equivalent} gauge $A' = A + \nabla_x \phi$? It turns out that $H^A$ and $H^{A + \nabla_x \phi}$ are \emph{unitarily equivalent} operators, and the unitary which connects the two is $\e^{- \ii \phi}$, 
\begin{align*}
	\e^{+ \ii \phi} \, H^A \, \e^{- \ii \phi} = H^{A + \nabla_x \phi}
\end{align*}
Using the lingo of Chapter~\ref{quantum:framework:representations}, $\e^{- \ii \phi}$ is a unitary that connects two different representations. This has several very important ramifications. The spectrum $\sigma(H^A)$, for instance, only depends on the magnetic field $B = \nabla_x \times A$ because unitarily equivalent operators necessarily have the same spectrum. Moreover, the gauge freedom is essential to solving problems, because \emph{some gauges are nicer to work with than others}. One such condition is $\nabla_x \cdot A = 0$, known as Coulomb gauge. \marginpar{2014.03.04}

The natural domain of these operators are 
\begin{definition}[Magnetic Sobolev spaces $H^m_A(\R^d)$]\label{quantum:defn:mag_Sobolev}
	Suppose $A \in \Cont^1(\R^d,\R^d)$. Then we define the magnetic Sobolev space of order $m$ to be 
	\begin{align}
		H_A^m(\R^d) := \bigl \{ \psi \in L^2(\R^d) \; \; \vert \; \; \norm{\psi}_{H^m_A} < \infty \bigr \} 
	\end{align}
	where the $m$th magnetic Sobolev norm is 
	\begin{align}
		\norm{\psi}_{H^m_A}^2 := \sum_{\abs{\gamma} \leq m} \norm{\bigl ( - \ii \nabla_x - A \bigr )^{\gamma} \psi}_{L^2}^2 
		. 
	\end{align}
	For $A = 0$, we abbreviate the (ordinary) Sobolev space with $H^m(\R^d) := H^m_{A = 0}(\R^d)$. 
\end{definition}
The definition just means we are looking at those $\psi \in L^2(\R^d)$ whose weak derivatives of up to $m$th order are all in $L^2(\R^d)$.\footnote{The weak derivative is well-defined, because we can view $L^2(\R^d)$ as a subspace of the tempered distributions $\Schwartz'(\R^d)$.} One can see that Sobolev spaces are complete and can be equipped with a scalar product (see \eg \cite[Theorem~7.3]{Lieb_Loss:analysis:2001} for the case $m = 1$ and $A = 0$). 

Magnetic fields have the property that they induce oscillations, and these induced oscillations, in turn, \emph{increase the kinetic energy}. The diamagnetic inequality makes this intuition rigorous: 
\begin{theorem}[Diamagnetic inequality]\label{quantum:thm:diamag_inequality}
	Let $A : \R^d \longrightarrow \R^d$ be in $\Cont^1(\R^d,\R^d)$ and $\psi$ be in $H^1_A(\R^d)$. Then $\abs{\psi}$, the absolute value of $\psi$, is in $H^1(\R^d)$ and the \emph{diamagnetic inequality}, 
	\begin{align}
		\babs{\nabla_x \abs{\psi}(x)} \leq \babs{\bigl (- \ii \nabla_x - A(x) \bigr ) \psi(x)}
		, 
		\label{quantum:eqn:diamag_inequality}
	\end{align}
	holds pointwise for almost all $x \in \R^d$. 
\end{theorem}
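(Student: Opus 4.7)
The key algebraic identity is that, writing $D_A := \nabla_x - \ii A$, we have $(-\ii \nabla_x - A)\psi = -\ii\, D_A\psi$, so $\babs{(-\ii\nabla_x - A)\psi} = \sabs{D_A\psi}$ pointwise. Moreover, since $A|\psi|^2$ is real, $\Re\bigl(\overline{\psi}\,D_A\psi\bigr) = \Re\bigl(\overline{\psi}\,\nabla_x\psi\bigr) = \tfrac{1}{2}\nabla_x\sabs{\psi}^2$. The strategy is the classical regularization trick of Kato: replace $\sabs{\psi}$ by $\sabs{\psi}_\eps := \sqrt{\sabs{\psi}^2 + \eps^2}$, which is smooth in $\sabs{\psi}^2$ and bounded below by $\eps$, prove the analog of \eqref{quantum:eqn:diamag_inequality} for $\sabs{\psi}_\eps$, and then let $\eps \to 0$.

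First, I would show that for any $\psi \in H^1_A(\R^d)$, the function $\sabs{\psi}_\eps$ lies in $H^1(\R^d)$ with weak gradient
\begin{align*}
\nabla_x \sabs{\psi}_\eps = \frac{\Re\bigl(\overline{\psi}\,D_A\psi\bigr)}{\sabs{\psi}_\eps}.
\end{align*}
For $\psi \in \Cont^\infty_{\mathrm{c}}(\R^d)$ this is a direct chain-rule computation, using $\nabla_x\sabs{\psi}^2 = 2\,\Re(\overline{\psi}\,\nabla_x\psi) = 2\,\Re(\overline{\psi}\,D_A\psi)$ and that $\sabs{\psi}_\eps \geq \eps > 0$; for general $\psi \in H^1_A(\R^d)$ I would approximate by a sequence $\psi_n \in \Cont^\infty_{\mathrm{c}}(\R^d)$ with $\psi_n \to \psi$ in $H^1_A$ (available by standard mollification since $A \in \Cont^1$) and pass to the limit, noting that $\sabs{\psi_n}_\eps \to \sabs{\psi}_\eps$ in $L^2$ and the right-hand side above is continuous in $\psi$ in the $L^2$ sense.

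Next, the pointwise Cauchy--Schwarz inequality in $\C$ gives
\begin{align*}
\babs{\Re\bigl(\overline{\psi}\,D_A\psi\bigr)} \leq \sabs{\psi}\,\sabs{D_A\psi} \leq \sabs{\psi}_\eps\,\babs{(-\ii\nabla_x - A)\psi},
\end{align*}
and dividing by $\sabs{\psi}_\eps$ yields the regularized diamagnetic bound
\begin{align*}
\babs{\nabla_x \sabs{\psi}_\eps(x)} \leq \babs{(-\ii\nabla_x - A(x))\psi(x)} \qquad \text{a.e. } x \in \R^d,
\end{align*}
uniformly in $\eps > 0$. In particular, $\snorm{\nabla_x \sabs{\psi}_\eps}_{L^2} \leq \snorm{(-\ii\nabla_x - A)\psi}_{L^2} < \infty$ and $\sabs{\psi}_\eps \to \sabs{\psi}$ in $L^2$ as $\eps \to 0$ by dominated convergence.

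Finally, I would pass to the limit $\eps \to 0$. By the uniform $L^2$ bound, the family $\{\nabla_x\sabs{\psi}_\eps\}_\eps$ is bounded in $L^2(\R^d,\C^d)$, so along a subsequence it converges weakly to some $v \in L^2$. Since $\sabs{\psi}_\eps \to \sabs{\psi}$ in $L^2$, a standard distributional argument identifies $v = \nabla_x\sabs{\psi}$ in $\Schwartz'(\R^d)$, which places $\sabs{\psi}$ in $H^1(\R^d)$. The pointwise bound \eqref{quantum:eqn:diamag_inequality} then follows by taking $\eps \to 0$ in the regularized inequality, extracting an a.e.\ convergent subsequence on the right-hand side (the RHS is independent of $\eps$) and using lower semicontinuity of $\sabs{\cdot}$ under weak convergence on the left. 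The main obstacle is the last step: justifying that the weak $L^2$-limit of $\nabla_x\sabs{\psi}_\eps$ actually agrees pointwise a.e.\ with $\nabla_x\sabs{\psi}$ rather than only in distribution; this is handled by selecting an a.e.\ convergent subsequence using Mazur's lemma (passing to convex combinations that converge strongly in $L^2$, hence a.e.\ along a subsequence), which preserves the pointwise inequality.
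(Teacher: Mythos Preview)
Your proof is correct and follows the classical Kato regularization route. The paper takes a shorter path: it simply \emph{quotes} the known formula for the weak gradient of the modulus,
\[
\partial_{x_j}|\psi|(x) = \Re\Bigl(\tfrac{\overline{\psi(x)}}{|\psi(x)|}\,\partial_{x_j}\psi(x)\Bigr) \quad \text{where } \psi(x)\neq 0, \qquad \partial_{x_j}|\psi|(x)=0 \text{ otherwise},
\]
then inserts the purely imaginary term $\Re\bigl(\tfrac{\overline{\psi}}{|\psi|}\,\ii A_j \psi\bigr)=0$ for free and concludes from $|\Re z|\le |z|$. Your approach is more self-contained: the regularization $|\psi|_\eps$ is precisely how one \emph{proves} the formula the paper invokes. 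What the paper's route buys is brevity; what yours buys is not having to cite an external chain-rule lemma for $|\psi|$.

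One simplification worth noting in your final step: you do not need Mazur's lemma. Since $\nabla_x|\psi|_\eps = \Re(\overline{\psi}\,D_A\psi)/|\psi|_\eps$ has an $\eps$-independent numerator, the quotient converges \emph{pointwise} as $\eps\to 0$ (to $\Re(\overline{\psi}\,D_A\psi)/|\psi|$ where $\psi\neq 0$, and to $0$ where $\psi=0$ because the numerator vanishes there), and is dominated by $|(-\ii\nabla_x - A)\psi|\in L^2$. Dominated convergence then gives strong $L^2$-convergence of $\nabla_x|\psi|_\eps$, which identifies the weak gradient of $|\psi|$ and preserves the pointwise bound directly.
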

\begin{proof}
	Since $\psi \in L^2(\R^d)$ and each component of $A$ is in $\Cont^1(\R^d,\R^d) \subset L^2_{\mathrm{loc}}(\R^d,\R^d)$, the distributional gradient of $\psi$ is in $L^1_{\mathrm{loc}}(\R^d)$. The distributional derivative of the absolute value can be computed explicitly, 
	\begin{align}
		\partial_{x_j} \negthinspace \abs{\psi}(x) = 
		\begin{cases}
			\Re \Bigl ( \tfrac{\overline{\psi(x)}}{\abs{\psi(x)}} \, \partial_{x_j} \psi(x) \Bigr ) & \psi(x) \neq 0 \\
			0 & \psi(x) = 0 \\
		\end{cases}
		, 
		\label{quantum:eqn:diamag_ineq_derivative_gradient}
	\end{align}
	and the right-hand side is again a function in $L^1_{\mathrm{loc}}(\R^d)$. Given that $A$ and $\abs{\psi}$ are real, 
	\begin{align*}
		\Re \left ( \frac{\overline{\psi(x)}}{\abs{\psi(x)}} \, \ii A_j(x) \, \psi(x) \right ) &= \Re \bigl ( \ii A_j(x) \, \abs{\psi(x)} \bigr ) = 0 
		, 
	\end{align*}
	and we can add this term to equation~\eqref{quantum:eqn:diamag_ineq_derivative_gradient} free of charge to obtain 
	\begin{align*}
		\partial_{x_j} \abs{\psi}(x) = 
		\begin{cases}
			\Re \Bigl ( \tfrac{\overline{\psi(x)}}{\abs{\psi(x)}} \, \bigl ( \partial_{x_j} + \ii A_j(x) \bigr ) \, \psi(x) \Bigr ) & \psi(x) \neq 0 \\
			0 & \psi(x) = 0 \\
		\end{cases}
		.
	\end{align*}
	The diamagnetic inequality now follows from $\babs{\Re z} \leq \abs{z}$, $z \in \C$. The left-hand side of \eqref{quantum:eqn:diamag_inequality} is in $L^2(\R^d)$ since the right-hand side is by assumption on $\psi$. 
\end{proof}
The simplest example of a magnetic Hamilton operator $H^A = (- \ii \nabla_x - A)^2$ is the so-called \emph{Landau hamiltonian} where $d = 2$, $B$ is constant and $V = 0$. For instance, one can choose the \emph{symmetric gauge}
\begin{align}
	A(x) = \frac{B}{2} 
	\left (
	\begin{matrix}
		- x_2 \\
		+ x_1 \\
	\end{matrix}
	\right ) 
\end{align}
or the \emph{Landau gauge}
\begin{align}
	A(x) = B 
	\left (
	\begin{matrix}
		- x_2 \\
		0 \\
	\end{matrix}
	\right ) 
	. 
\end{align}
The spectrum of $\sigma(H^A) = \{ 2 n + 1 \; \vert \; n \in \N_0 \} = \sigma_{\mathrm{ess}}(H^A)$ are the \emph{Landau levels}, a collection of \emph{infinitely degenerate} eigenvalues accumulating at $+ \infty$. Physically, the origin for this massive degeneracy is translation-covariance: if I have a bound state $\psi_0$, then $\psi_0(\cdot - x_0)$ is an eigenvector to a possibly gauge-transformed hamiltonian $H^{A + \nabla_x \phi}$. From a classical perspective, the existence of bound states as well as translational symmetry are also clear: a constant magnetic field traps a particle in a circular orbit, and the analog of this \emph{classical} bound state is a \emph{quantum} bound state, an eigenvector. 
% section magnetic_fields (end)

\section{Bosons vs{.} fermions} % (fold)
\label{quantum:bosons_fermions}
The extension of \emph{single}-particle quantum mechanics to \emph{multi}-particle quantum mechanics is highly non-trivial. To clarify the presentation, let us focus on \emph{two} identical particles moving in $\R^d$. Two options are arise: either the compound wave function $\Psi$ is a function on $\R^d$, \ie it acts like a \emph{density}, or it is a function of $\R^d \times \R^d$ where each set of coordinates $x = (x_1,x_2)$ is associated to one particle. It turns out that wave functions depend on $\R^{N d}$ where $N$ is the number of particles. 

However, that is not all, there is an added complication: classically, we can label identical particles by tracking their trajectory. This is impossible in the quantum framework, because the uncertainty principle forbids any such tracking procedure. Given that $\babs{\Psi(x_1,x_2)}^2$ is a physical observable, the inability to distinguish particles implies 
\begin{align*}
	\babs{\Psi(x_1,x_2)}^2 &= \babs{\Psi(x_2,x_1)}^2
	, 
\end{align*}
and hence, $\Psi(x_1,x_2) = \e^{+ \ii \theta} \, \Psi(x_2,x_1)$. However, given that exchanging variables twice must give the same wave function, the only two admissible phase factors are $\e^{+ \ii \theta} = \pm 1$. 

Particles for which $\Psi(x_1,x_2) = \Psi(x_2,x_1)$ holds are \emph{boson} (integer spin) while those for which $\Psi(x_1,x_2) = - \Psi(x_2,x_1)$ are \emph{fermions} (half-integer spin). Examples are bosonic photons and fermionic electrons. This innocent looking fact has very, very strong consequences on the physical and mathematical properties of quantum systems. The most immediate implication is \emph{Pauli's exclusion principle} for fermions, 
\begin{align*}
	\Psi(x,x) = 0
	, 
\end{align*}
a fact that is colloquially summarized by saying that bosons are social (because they like to bunch together) while sociophobic fermions tend to avoid one another. 

To make this more rigorous, let us consider the splitting 
\begin{align*}
	L^2(\R^d \times \R^d) \cong L^2_{\mathrm{s}}(\R^d \times \R^d) \oplus L^2_{\mathrm{as}}(\R^d \times \R^d)
\end{align*}
into symmetric and antisymmetric part induced via $f = f_{\mathrm{s}} + f_{\mathrm{as}}$ where 
\begin{align*}
	f_{\mathrm{s}}(x_1,x_2) :& \negmedspace= \tfrac{1}{2} \bigl ( f(x_1,x_2) + f(x_2,x_1) \bigr ) 
	,
	\\
	f_{\mathrm{as}}(x_1,x_2) :& \negmedspace= \tfrac{1}{2} \bigl ( f(x_1,x_2) - f(x_2,x_1) \bigr ) 
	. 
\end{align*}
Then one can proceed and restrict the two-particle Schrödinger operator 
\begin{align*}
	H = \sum_{j = 1 , 2} \bigl ( - \Delta_{x_j} + V(x_j) \bigr ) 
\end{align*}
to either the bosonic space $L^2_{\mathrm{s}}(\R^d \times \R^d)$. The kinetic energy $- \sum_{j = 1 , 2} \Delta_{x_j}$ preserves the (anti-)symmetry, \eg in the antisymmetric (fermionic case) it defines a bounded linear map 
\begin{align*}
	H : L^2_{\mathrm{as}}(\R^d \times \R^d) \cap H^2(\R^d \times \R^d) \longrightarrow L^2_{\mathrm{as}}(\R^d \times \R^d)
	. 
\end{align*}
% 
% section Bosons vs{.} fermions (end)

\section{Perturbation theory} % (fold)
\label{quantum:perturbations}
One last, but important remark concerns perturbation theory. Almost none of the systems one encounters in “real life” has a closed-form solution, so it is immediate to study perturbations of known systems first. The physics literature usually contents itself studying approximations of \emph{eigenvalues} of the hamiltonian, but the more fundamental question is what happens to the \emph{dynamics}? In other words, does $H_1 \approx H_2$ imply $\e^{- \ii t H_1} \approx \e^{- \ii t H_2}$. The answer is \emph{yes} and uses a very, very nifty trick, the \emph{Duhamel formula}. The idea is to write the difference 
\begin{align}
	\e^{- \ii t H_1} - \e^{- \ii t H_2} &= \int_0^t \dd s \, \frac{\dd}{\dd s} \Bigl ( \e^{- \ii s H_1} \, \e^{- \ii (t-s) H_2} \Bigr ) 
	\notag
	\\
	&= - \ii \, \int_0^t \dd s \, \e^{- \ii s H_1} \, \bigl ( H_1 - H_2 \bigr ) \, \e^{- \ii (t-s) H_2} 
\end{align}
as the integral of a total derivative. So if we assume $H_2 = H_1 + \eps \, W$, then one has to estimate 
\begin{align*}
	\e^{- \ii s H_1} \, \bigl ( H_1 - H_2 \bigr ) \, \e^{- \ii (t-s) H_2} &= \eps \, \e^{- \ii s H_1} \, W \, \e^{- \ii (t-s) H_2} 
	= \order(\eps) 
	. 
\end{align*}
Note that this holds for \emph{all} times, because quantum mechanics is a \emph{linear} theory. Otherwise, we would have to use the Grönwall Lemma~\ref{odes:lem:Groenwall} that places restrictions on the time scale for which $\psi_1(t) = \e^{- \ii t H_1} \psi$ and $\psi_2(t) = \e^{- \ii t H_2} \psi$ remain close. 
% section perturbation_theory (end)

% \section{Semiclassics} % (fold)
% \label{quantum:semiclassics}
% % 
% %
% \begin{itemize}
% 	\item semiclassical states (Gaußian, WKB)
% 	\item Ehrenfest theorem
% 	\item intuition semiclassics
% 	\item semiclassics $\equiv$ ray optics
% 	\item Wigner transform 
% \end{itemize}
% %
% 
% 
% % section semiclassics (end)
% chapter Quantum mechanics (end)
\chapter{Variational calculus} % (fold)
\label{variation}
Functionals $\mathcal{E} : \Omega \subseteq \mathcal{X} \longrightarrow \C$ are maps from a subset $\Omega$ of a Banach space $\mathcal{X}$ over the field $\C$ (or $\R$) to $\C$ (or $\R$). In case $\mathcal{X}$ is finite-dimensional, a functional is just a function $\C^n \longrightarrow \C$, and so the cases we are really interested in are when $\mathcal{X}$ is \emph{infinite}-dimensional. 

Functionals arise very often in physics as a way to formulate certain fundamental principles (\eg energy, action and the like); their analysis often produces linear and non-linear PDEs which are interesting in their own right. For instance, the energy functional 
\begin{align}
	\mathcal{E}(\psi) := \left ( \int_{\R^d} \dd x \, \babs{\psi(x)}^2 \right )^{-1} \; \int_{\R^d} \dd x \, \Bigl ( \babs{\nabla_x \psi(x)}^2 + V(x) \, \babs{\psi(x)}^2 \Bigr ) 
	\label{variation:eqn:energy_functional}
\end{align}
associated to the Schrödinger operator $H = - \Delta_x + V$ can be seen as a map $H^1(\R^d) \ni \psi \mapsto \mathcal{E}(\psi) \in \R$. Here, $H^1(\R^d)$ is the first Sobolev space, \cf Definition~\ref{quantum:defn:mag_Sobolev}. Let us assume $H$ is selfadjoint, bounded from below and has a ground state $\psi_0$. Then if we minimize $\mathcal{E}$ under the constraint $\norm{\psi} = 1$, the functional has a \emph{global minimum} at $\psi_0$. Alternatively, we can view it as the minimizer of the Rayleigh-Ritz quotient. 

So let us perturb the Rayleigh-Ritz quotient in the vicinity of the minimizer $\psi_0$, \ie we define 
\begin{align*}
	F(s) := \mathcal{E}(\psi_0 + s \varphi) \geq F(0) = \mathbb{E}(\varphi_0)
\end{align*}
where $\varphi \in H^1(\R^d)$ is arbitrary, but fixed. One can express denominator and numerator explicitly as quadratic polynomials in $s$, and one finds 
\begin{align*}
	\frac{\dd}{\dd s} F(0) &= 2 \, \snorm{\psi_0}^{-2} \, \Re \scpro{\varphi}{\bigl ( - \Delta_x + V - E_0 \bigr ) \psi_0} = 0
	, 
\end{align*}
\ie $F$ has a global minimum at $0$ independently of the choice of function $\varphi$. Put more succinctly: \emph{if it exists,} the minimizer $\psi_0$ of the Rayleigh-Ritz quotient is the eigenfunction of the Schrödinger operator $H = -\Delta_x + V$ at $E_0 = \inf \sigma(H)$. 

The energy functional only serves as an \emph{amuse gueule}. Among other things, it suggests that one can ask the same questions for functionals that one can asks for (ordinary) functions: 
\begin{enumerate}[(1)]
	\item Existence of local and global \emph{extrema}, \emph{convexity} properties and the existence of extrema under \emph{constraints}. 
	\item One can study ODEs where the vector field is connected to derivatives of a functional; in the simplest case, we want to look at gradient flows. The same fundamental questions arise: Where are the \emph{fixed points}? Are these fixed points \emph{stable or unstable}? 
\end{enumerate}

\section{Extremals of functionals} % (fold)
\label{variation:extremals}
Given that functionals are just functions on infinite-dimensional spaces, it is not surprising that the same type of questions are raised as with ordinary functions: continuity, differentiability, existence of local and global extrema. As one can guess, a rigorous treatment of functionals is a lot more technical.

\subsection{The Gâteaux derivative} % (fold)
\label{variation:extremals:derivatives}
Apart from continuity, the most fundamental property a function has is that of differentiability. On functionals, the starting point is the \emph{directional derivative} which then gives rise to the \emph{Gâteaux derivative}. Similar to functions on $\R^n$ vs{.} functions on $\C^n$, we start with the real case first and postpone the discussion of complex derivatives to Chapter~\ref{variation:extremals:complex_derivatives}. 
\begin{definition}[Gâteaux derivative]\label{variation:defn:Gateaux_derivative}
	Let $\mathcal{E} : \Omega \subset \mathcal{X} \longrightarrow \R$ a continuous linear functional defined on an \emph{open} subset $\Omega$ of the real Banach space $\mathcal{X}$. Then the Gâteaux derivative $\dd \mathcal{E}(\psi)$ at $\psi \in \Omega$ is defined as the linear functional on $\mathcal{X}$ for which 
	\begin{align}
		\dd \mathcal{E}(\psi) \varphi := \left . \frac{\dd }{\dd s} \mathcal{E} \bigl ( \psi + s \varphi \bigr ) \right \vert_{s = 0}
		\label{variation:eqn:Gateaux_derivative}
	\end{align}
	holds for all $\varphi \in \mathcal{X}$. If the Gâteaux derivative exists for all $\psi \in \Omega$, we say $\mathcal{E}$ is $\Cont^1$. 
\end{definition}
Higher derivatives are multilinear forms which are defined iteratively. 
% subsection taking_derivatives (end)

\subsection{Extremal points and the Euler-Lagrange equations} % (fold)
\label{variation:extremals:Euler_Lagrange}
\emph{Critical points} are those $\psi_{\ast} \in \mathcal{X}$ for which the Gâteaux derivative $\dd \mathcal{E}(\psi_{\ast}) = 0$ vanishes. To illustrate the connection between PDEs and critical points, let us consider the functional 
\begin{align*}
	\mathcal{E}(u) := \int_{\R^d} \dd x \, \Bigl ( \tfrac{1}{2} \babs{\nabla_x u}^2 + u \, f \Bigr ) 
\end{align*}
where $f : \R^d \longrightarrow \R$ is some fixed function and $u \in \Cont^{\infty}_{\mathrm{c}}(\R^d) \subset H^1(\R^d,\R)$. A quick computation yields the Gâteaux derivative, and if we set the right-hand side of 
\begin{align}
	\dd \mathcal{E}(u) v &= \int_{\R^d} \dd x \, \bigl ( \nabla_x u \cdot \nabla_x v + f \, v \bigr ) 
	\notag \\
	&
	= \int_{\R^d} \dd x \, \bigl ( - \Delta_x u + f \bigr ) \, v 
	\overset{!}{=} 0
	\label{variation:eqn:Euler_Lagrange_integral}
\end{align}
to zero for all $v \in H^1(\R^d,\R)$, we obtain the condition for a local extremum: $u$ is a critical point if and only if $u$ satisfies the Poisson equation 
\begin{align}
	- \Delta_x u + f = 0 
	. 
	\label{variation:eqn:Euler_Lagrange_differential}
\end{align}
Depending on the context, we call either \eqref{variation:eqn:Euler_Lagrange_integral} or \eqref{variation:eqn:Euler_Lagrange_differential} the \emph{Euler-Lagrange equation} to $\mathcal{E}$. So if $\mathcal{X}$ is comprised of functions, then \emph{the search for critical points is equivalent to solving a linear or non-linear PDE.} \marginpar{2014.03.06}
% subsection extremal_points_and_the_euler_lagrange_equations (end)

\subsection{Functionals on submanifolds} % (fold)
\label{variation:extremals:tangent}
Very often the functional is defined on a subset $\Omega \subset \mathcal{X}$ which lacks a linear structure (\eg Lagrangian mechanics below) so that $\mathcal{E}(\psi + s \varphi)$ need not make sense, because $\psi + s \varphi \not\in \Omega$. 

Instead, one has to replace the simple linear combination $\psi + s \varphi$ with differentiable \emph{paths} $(-\delta,+\delta) \ni s \mapsto \psi_s \in \Omega$. Then a tangent vector $\xi$ at $\psi$ is an equivalence class of paths so that $\psi_0 = \psi$ and $\frac{\dd}{\dd s} \psi_s \big \vert_{s = 0} = \xi$; the tangent space $T_{\psi} \Omega$ is then the vector space spanned by these tangent vectors. Hence, we proceed as in Definition~\ref{variation:defn:Gateaux_derivative} and set 
\begin{align*}
	\dd \mathcal{E}(\psi) \xi := \left . \frac{\dd}{\dd s} \mathcal{E}(\psi_s) \right \vert_{s = 0}
	. 
\end{align*}
Clearly, $\dd \mathcal{E}(\psi) \in \bigl ( T_{\psi} \Omega \bigr )'$ is an element of the dual space since it maps a tangent vector onto a scalar (\cf Definition~\ref{spaces:defn:dual_space} of the dual space). In general, this is just an abstract vector space, but here we can identify $T_{\psi} \Omega$ with a subvector space of $\mathcal{X}$. For a detailed description of the mathematical structures (manifolds and tangent bundles), we refer to \cite[Chapter~4.1]{Marsden_Ratiu:intro_mechanics_symmetry:1999}. 
% subsection functionals_on_submanifolds (end)

\subsection{Lagrangian mechanics} % (fold)
\label{variation:extremals:lagrangian_mechanics}
One extremely important example where the variation takes place over a \emph{non-linear} space is that of classical mechanics. Here we start with the space of paths 
\begin{align}
	\mathcal{D}(x_0,x_1) := \Bigl \{ q : [0,T] \longrightarrow \mathcal{X} \; \; \big \vert \; \; q \in \Cont^2 \bigl ( [0,T] , \mathcal{X} \bigr ) 
	, \; \; 
	q(0) = x_0 , \; \; 
	q(T) = x_1 \Bigr \} 
	\label{variation:eqn:path_space}
\end{align}
which join $x_0$ and $x_1$ in the Banach space $\mathcal{X}$; initially, let us concentrate on the case $\mathcal{X} = \R^d$ but more general choices are also admissible. As we will see later on, the choices of $x_0$, $x_1$ and $T$ are completely irrelevant, the Euler-Lagrange equations will be independent of them. 

The idea is to \emph{derive classical equations of motion from the Euler-Langrange equations of the action functional}
\begin{align}
	S(q) := \int_0^T \dd t \, L \bigl ( q(t) , \dot{q}(t) \bigr )
\end{align}
where $L \in \Cont^2 \bigl ( \R^d \times \R^d \bigr )$ is the \emph{Lagrange function}; $L(x,v)$ depends on position $x$ and velocity $v$ (as opposed to momentum). Physicists call this \emph{principle of stationary action}. 

We exploit the linear structure of $\mathcal{X} = \R^d$ and propose that we can canonically identify the tangent space $T_q \mathcal{D}(x_0,x_1)$ with $\mathcal{D}(0,0)$: if $h \in \mathcal{D}(0,0)$, then by definition $q + s h \in \mathcal{D}(x_0,x_1)$ is a path which joins $x_0$ and $x_1$ for all $s \in \R$ with tangent vector $h$. The Euler-Lagrange equations can now be derived easily: by definition $\dd S(q) = 0$ means $\dd S(q) h = 0$ holds for all $h \in \mathcal{D}(0,0)$, and we compute 
\begin{align}
	\dd S(q) h &= \left . \frac{\dd }{\dd s} S(q + s h) \right \vert_{s = 0} 
	= \int_0^T \dd t \, \left . \frac{\dd }{\dd s} L \Bigl ( q(t) + s h(t) \, , \, \dot{q}(t) + s \dot{h}(t) \Bigr ) \right \vert_{s = 0} 
	\notag \\
	&= \int_0^T \dd t \, \Bigl ( \nabla_x L \bigl ( q(t) , \dot{q}(t) \bigr ) \cdot h(t) + \nabla_v L \bigl ( q(t) , \dot{q}(t) \bigr ) \cdot \dot{h}(t) \Bigr )
	\notag \\
	&= \int_0^T \dd t \, \left ( \nabla_x L \bigl ( q(t) , \dot{q}(t) \bigr ) - \frac{\dd}{\dd t} \nabla_v L \bigl ( q(t) , \dot{q}(t) \bigr ) \right ) \cdot h(t)
	. 
	\label{variation:eqn:Euler_Lagrange_with_L_integral}
\end{align}
Note that the boundary terms vanish, because $h(0) = 0 = h(T)$. Clearly, the Euler-Lagrange equations
\begin{align}
	\nabla_x L \bigl ( q(t) , \dot{q}(t) \bigr ) - \frac{\dd}{\dd t} \nabla_v L \bigl ( q(t) , \dot{q}(t) \bigr ) &= 0 
	\label{variation:eqn:Euler_Lagrange_Lagrangian_mechanics}
\end{align}
are independent of $x_0$, $x_1$ and $T$ -- as promised. Moreover, the linear nature of $\mathcal{X} = \R^d$ is not crucial in the derivation, but a rigorous definition is more intricate in the case where $\mathcal{X}$ is a manifold (\cf \cite[Chapter~7]{Marsden_Ratiu:intro_mechanics_symmetry:1999}).

\subsubsection{Classical mechanics on $\R^d$} % (fold)
\label{variation:extremals:lagrangian_mechanics:classical_mechanics_Rd}
\emph{The} standard example for a Lagrangian is $L(x,v) = \frac{m}{2} v^2 - U(x)$ where $U$ is the potential (we use $U$ rather than $V$ to avoid the ambiguity between the velocity $v$ and the potential $V$). A simple computation yields 
\begin{align*}
	0 &= \nabla_x L \bigl ( q(t) , \dot{q}(t) \bigr ) - \frac{\dd}{\dd t} \nabla_v L \bigl ( q(t) , \dot{q}(t) \bigr ) 
	\\
	&= - \nabla_x U \bigl ( q(t) \bigr ) - m \ddot{q}(t) 
\end{align*}
or $m \ddot{q} = - \nabla_x U$. This second-order equation can be reduced to a first-order ODE by setting $\dot{q} = v$, and one obtains 
\begin{align*}
	\frac{\dd}{\dd t} \left (
	\begin{matrix}
		q \\
		v \\
	\end{matrix}
	\right ) &= \left (
	\begin{matrix}
		v \\
		- m^{-1} \, \nabla_x U \\
	\end{matrix}
	\right )
	. 
\end{align*}
Glancing back at the beginning of Chapter~\ref{classical_mechanics}, we guess the simple change of variables $p := m v$ and recover Hamilton's equations of motion~\eqref{classical_mechanics:eqn:hamiltons_eom}. In fact, this innocent change of variables is an instance of a much deeper fact, namely that momentum can be \emph{defined} as 
\begin{align*}
	p := \nabla_v L
	. 
\end{align*}
%
% subsubsection classical_mechanics_on_r_d_ (end)

\subsubsection{Derivation of Maxwell's equations} % (fold)
\label{variation:extremals:lagrangian_mechanics:Maxwell}
The idea to \emph{derive the dynamical equations of a physical theory as a critical point from an action functional} is extremely successful; almost any physical theory (\eg general relativity, quantum electrodynamics and fluid dynamics) can be derived in this formalism, and hence, a better understanding of functionals gives one access to a richly stocked toolbox. Moreover, they yield equations of motion in situations where one wants to couple degrees of freedom of a different nature (\eg fluid dynamics and electrodynamics). 

To illustrate this, we will derive the vacuum Maxwell equations (\cf Chapter~\ref{operators:Maxwell} for $\eps = 1 = \mu$). It all starts with a clever choice of Lagrange function, in this case 
\begin{align*}
	L \bigl ( t , A , \phi , \alpha , \varphi \bigr ) &= \int_{\R^3} \dd x \, \mathbb{L} \bigl ( t \, , \, A(x) \, , \, \phi(x) \, , \, \alpha(x) \, , \, \varphi(x) \bigr ) 
	\\
	&= \int_{\R^3} \dd x \, \Bigl ( \tfrac{1}{2} \babs{- \alpha(x) - \nabla_x \phi(x)}^2 - \tfrac{1}{2} \babs{\nabla_x \times A(x)}^2 
	\, + \Bigr . \\
	&\qquad \qquad \quad \Bigl . 
	+ \, j(t,x) \cdot A(x) - \rho(t,x) \, \phi(x) \Bigr )
\end{align*}
where $A$ is the vector potential, $\phi$ the scalar potential, $j$ the current density and $\rho$ the charge density. Because of \emph{charge conservation} 
\begin{align}
	\nabla_x \cdot j + \partial_t \rho = 0 
	,
	\label{variation:eqn:charge_conservation}
\end{align}
current and charge density are linked. The potentials are linked to the electromagnetic field via 
\begin{align*}
	\mathbf{E} &= - \partial_t A - \nabla_x \phi 
	, 
	\\
	\mathbf{B} &= \nabla_x \times A 
	. 
\end{align*}
Given that $L$ is defined in terms of a quadratic polynomial in the fields, we can easily deduce the equations of motion from the action functional 
\begin{align}
	S(A,\phi) &= \int_0^T \dd t \, L \bigl ( A(t) , \phi(t) , \partial_t A(t) , \partial_t \phi(t) \bigr )
	% . 
	\label{variation:eqn:Maxwell_action}
\end{align}
where $(A,\phi)$ is a path in the space of potentials. Physicists usually use $\delta$ to denote (what they call) “functional differentiation”, and the Euler-Langrange equations~\eqref{variation:eqn:Euler_Lagrange_Lagrangian_mechanics} are expressed as 
\begin{subequations}\label{variation:eqn:Maxwell_Euler_Lagrange}
	\begin{align}
		\frac{\dd}{\dd t} \frac{\delta \mathbb{L}}{\delta \dot{\phi}} - \frac{\delta \mathbb{L}}{\delta \phi} &= 0 
		, 
		\label{variation:eqn:Maxwell_Euler_Lagrange_phi}
		\\
		\frac{\dd}{\dd t} \frac{\delta \mathbb{L}}{\delta \dot{A}} - \frac{\delta \mathbb{L}}{\delta A} &= 0 
		. 
		\label{variation:eqn:Maxwell_Euler_Lagrange_A}
	\end{align}
\end{subequations}
These can be computed by pretending that the integrand is just a polynomial in $A$, $\partial_t A$, $\phi$ and $\partial_t \phi$. We postpone a proper derivation until after the discussion of these two equations.\marginpar{2014.03.12}

The \emph{Lagrange density $\mathbb{L}$} is independent of $\dot{\phi} = \partial_t \phi$ so that partial integration yields \emph{Gauß's law} (charges are the sources of electric fields), 
\begin{align}
	\nabla_x \cdot \bigl ( - \partial_t A - \nabla_x \phi \bigr ) &= \nabla_x \cdot \mathbf{E} 
	= \rho 
	. 
	\label{variation:eqn:edyn_constraint_equation}
\end{align}
This equation acts as a constraint and is not a dynamical equation of motion. Note that since $\mathbf{B} = \nabla_x \times A$ is the curl of a vector field, it is automatically divergence-free, $\nabla_x \cdot \mathbf{B} = 0$. This takes care of the two constraints, equations~\eqref{operators:eqn:source_Maxwell_eqns} for $\eps = 1 = \mu$. 

Equation~\eqref{variation:eqn:Maxwell_Euler_Lagrange_A} yields both of the dynamical Maxwell equations, starting with  
\begin{align}
	- \partial_t^2 A - \nabla_x \partial_t \phi &= \partial_t \mathbf{E} 
	= \nabla_x \times \nabla_x A - j
	= \nabla_x \times \mathbf{B} - j 
	. 
	\label{variation:eqn:edyn_dynamical_equation}
\end{align}
To obtain the other dynamical Maxwell equation, we derive $\mathbf{B} = \nabla_x \times A$ with respect to time and use $\nabla_x \times \nabla_x \phi = 0$: 
\begin{align*}
	\partial_t \mathbf{B} &= \nabla_x \times \partial_t A 
	= - \nabla_x \times \bigl ( - \partial_t A - \nabla_x \phi \bigr )
	\\
	&= - \nabla_x \times \mathbf{E} 
\end{align*}
Hence, we obtain the usual Maxwell equations after introducing a pair of new variables, the electric field $\mathbf{E}$ and the magnetic field $\mathbf{B}$. These fields are \emph{independent} of the choice of gauge, but more on that below. 

Solutions to the Maxwell equations are stationary points of the action functional~\eqref{variation:eqn:Maxwell_action}, and a proper derivation involves computing the functional derivative: 
\begin{align*}
	\bigl ( S(A,\phi) \bigr )(a,\varphi) &= \left . \frac{\dd}{\dd s} S \bigl ( A + s a \, , \, \phi + s \varphi \bigr ) \right \vert_{s = 0}
	\\
	&= \int_0^T \dd t \int_{\R^3} \dd x \, \frac{\dd}{\dd s} \left ( 
	\frac{1}{2} \Babs{- \partial_t A - \nabla_x \phi - s \, \partial_t a - s \, \nabla_x \varphi}^2 
	+ \right . \\
	&\qquad \qquad \qquad \qquad \qquad \left . 
	- \frac{1}{2} \Babs{\nabla_x \times A + s \, \nabla_x \times a}^2 
	+ \right . \\
	&\qquad \qquad \qquad \qquad \qquad + \biggl . 
	j \cdot A - \rho \, \phi + s \, j \cdot a - s \, \rho \, \varphi 
	\biggr ) \bigg \rvert_{s = 0}
	\\
	&= \int_0^T \dd t \int_{\R^3} \dd x \, \Bigl ( 
	\bigl ( - \partial_t A - \nabla_x \phi \bigr ) \cdot \bigl ( - \partial_t a - \nabla_x \varphi \bigr ) 
	\, + \Bigr . 
	\\
	&\qquad \qquad \qquad \qquad \Bigl . 
	- \bigl ( \nabla_x \times A \bigr ) \cdot \bigl ( \nabla_x \times a \bigr ) + j \cdot a - \rho \, \varphi 
	\Bigr ) 
	\\
	&= \int_0^T \dd t \int_{\R^3} \dd x \, \biggl ( 
	\Bigl ( - \partial_t^2 A - \nabla_x \partial_t \phi - \nabla_x \times \nabla_x \times A + j \Bigr ) \cdot a 
	+ \biggr . \\
	&\qquad \qquad \qquad \qquad \biggl . + 
	\Bigl ( \nabla_x \cdot \bigl ( - \partial_t A - \nabla_x \phi \bigr ) - \rho \Bigr ) \, \varphi 
	\biggr ) 
\end{align*}
Setting $\dd S(A,\phi) = 0$ yields equations \eqref{variation:eqn:edyn_dynamical_equation} and \eqref{variation:eqn:edyn_constraint_equation}.

\paragraph{Eliminating the constraints} % (fold)
The presence of the constraint equations 
\begin{align*}
	\nabla_x \cdot \mathbf{E} &= \rho 
	\\
	\nabla_x \cdot \mathbf{B} &= 0 
\end{align*}
means we can in fact eliminate some variables. The idea is to decompose the electromagnetic fields 
\begin{align*}
	\mathbf{E} &= \mathbf{E}_{\parallel} + \mathbf{E}_{\perp} 
	, 
	\\
	\mathbf{B} &= \mathbf{B}_{\parallel} + \mathbf{B}_{\perp} 
\end{align*}
into longitudinal ($\parallel$) and transversal ($\perp$) component. Transversal fields are those for which $\nabla_x \cdot \mathbf{E}_{\perp} = 0$ holds while the longitudinal component is simply the remainder $\mathbf{E}_{\parallel} = \mathbf{E} - \mathbf{E}_{\perp} = \nabla_x \chi$ which can always be written as a gradient of some function; on $\R^3$ this \emph{Helmholtz decomposition} of fields is unique. 
\begin{theorem}[Helmholtz-Hodge-Weyl-Leeray decomposition]
	The Hilbert space 
	\begin{align*}
		L^2(\R^3,\C^3) = J \oplus G 
	\end{align*}
	decomposes into the orthogonal subspaces 
	\begin{align}
		J := \ker \mathrm{div} = \ran \mathrm{curl}
		\label{variation:eqn:definition_J_Helmholtz}
	\end{align}
	and 
	\begin{align}
		G := \ran \mathrm{grad} = \ker \mathrm{curl} 
		.  
		\label{variation:eqn:definition_G_Helmholtz}
	\end{align}
	In other words, any vector field $\mathbf{E} = \mathbf{E}_{\parallel} + \mathbf{E}_{\perp}$ can be uniquely decomposed into $\mathbf{E}_{\parallel} \in G$ and $\mathbf{E}_{\perp} \in J$. 
\end{theorem}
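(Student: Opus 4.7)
The natural strategy is to diagonalize $\mathrm{grad}$, $\mathrm{div}$ and $\mathrm{curl}$ simultaneously via the Fourier transform, because all three operators commute with translations and therefore correspond to Fourier multipliers by the results of Section~\ref{Fourier:R}. In Fourier representation, $\mathrm{grad} \leftrightarrow \ii \xi$, $\mathrm{div}\, \mathbf{E} \leftrightarrow \ii \xi \cdot \hat{\mathbf{E}}$ and $\mathrm{curl}\, \mathbf{E} \leftrightarrow \ii \xi \times \hat{\mathbf{E}}$. For each $\xi \in \R^3 \setminus \{ 0 \}$ one has the orthogonal splitting of $\C^3$ into the complex line $\C \xi$ and its orthogonal complement $\xi^{\perp}$, implemented by the $\xi$-dependent orthogonal projections
\begin{align*}
	P_G(\xi) := \frac{1}{\abs{\xi}^2} \xi \, \xi^{\mathrm{T}}
	,
	\qquad
	P_J(\xi) := \id_{\C^3} - P_G(\xi)
	.
\end{align*}

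Since $P_G(\xi)$ and $P_J(\xi)$ are uniformly bounded matrix-valued functions of $\xi$, they define bounded linear operators $\mathbb{P}_G$ and $\mathbb{P}_J$ on $L^2(\R^3,\C^3)$ via $\Fourier$, and unitarity of the Fourier transform (Theorem~\ref{Fourier:R:thm:Parseval_Plancherel}) immediately gives that $\mathbb{P}_G$ and $\mathbb{P}_J$ are mutually orthogonal projections with $\mathbb{P}_G + \mathbb{P}_J = \id$. This yields the orthogonal direct-sum decomposition $L^2(\R^3,\C^3) = \ran \mathbb{P}_J \oplus \ran \mathbb{P}_G$. It remains only to identify $\ran \mathbb{P}_G = G$ and $\ran \mathbb{P}_J = J$.

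The easy inclusions are purely algebraic and follow from the two vector identities $\mathrm{curl}\, \mathrm{grad} = 0$ and $\mathrm{div}\, \mathrm{curl} = 0$, which in Fourier space are the trivialities $\xi \times \xi = 0$ and $\xi \cdot (\xi \times \,\cdot\,) = 0$. These give $\ran \mathrm{grad} \subseteq \ker \mathrm{curl}$ and $\ran \mathrm{curl} \subseteq \ker \mathrm{div}$, and also show at the symbol level that $\ran \mathrm{grad} \subseteq \ran \mathbb{P}_G$ and $\ran \mathrm{curl} \subseteq \ran \mathbb{P}_J$. Conversely, if $\mathbf{E} \in \ker \mathrm{curl}$ then $\xi \times \hat{\mathbf{E}}(\xi) = 0$ almost everywhere, which forces $\hat{\mathbf{E}}(\xi)$ to be collinear with $\xi$, hence $\mathbb{P}_J \mathbf{E} = 0$ and $\mathbf{E} \in \ran \mathbb{P}_G$; the analogous argument handles $\ker \mathrm{div} \subseteq \ran \mathbb{P}_J$. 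The last step is to write every $\mathbf{E} \in \ran \mathbb{P}_G$ as an actual gradient: setting $\hat{\phi}(\xi) := - \ii \abs{\xi}^{-2} \, \xi \cdot \hat{\mathbf{E}}(\xi)$, one has $\hat{\mathbf{E}} = \ii \xi \, \hat{\phi}$, and similarly every $\mathbf{E} \in \ran \mathbb{P}_J$ can be written as $\hat{\mathbf{E}} = \ii \xi \times \hat{A}$ for $\hat{A}(\xi) := - \ii \abs{\xi}^{-2} \, \xi \times \hat{\mathbf{E}}(\xi)$.

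The main obstacle is the behaviour at $\xi = 0$: the factor $\abs{\xi}^{-2}$ is not locally integrable in three dimensions, so $\hat{\phi}$ and $\hat{A}$ need not be $L^2$ functions — they are at best tempered distributions in the sense of Chapter~\ref{S_and_Sprime}. Consequently, the identifications $G = \ran \mathrm{grad}$ and $J = \ran \mathrm{curl}$ must be read with $\phi$ and $A$ taken from a larger space of potentials (for instance, the homogeneous Sobolev space $\dot{H}^1(\R^3)$, for which the gradient lands in $L^2$); the key point is that $\xi \, \hat{\phi}$ and $\xi \times \hat{A}$ are perfectly good $L^2(\R^3,\C^3)$ functions even when $\hat{\phi}$ and $\hat{A}$ have a $\xi = 0$ singularity, because the prefactor $\xi$ tames the singularity in precisely the right way in three dimensions. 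Once this is granted, the four identities in \eqref{variation:eqn:definition_J_Helmholtz}--\eqref{variation:eqn:definition_G_Helmholtz} fall into place and the orthogonal decomposition follows.
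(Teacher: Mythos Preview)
Your argument is correct but follows a genuinely different route from the paper. The paper's proof is explicitly only a sketch: it shows orthogonality of $J$ and $G$ by a one-line partial integration, $\bscpro{\psi}{\nabla_x \varphi} = -\bscpro{\nabla_x \cdot \psi}{\varphi} = 0$ for $\psi \in \ker \mathrm{div}$ and $\varphi \in \Cont^{\infty}_{\mathrm{c}}$, and then shows $J \cap G = \{0\}$ by observing that any field in the intersection is componentwise harmonic and hence zero in $L^2(\R^3)$. The paper does \emph{not} prove that $J + G$ exhausts $L^2$, nor does it prove the identities $\ker \mathrm{div} = \ran \mathrm{curl}$ and $\ker \mathrm{curl} = \ran \mathrm{grad}$; for all of this it simply cites Temam and Picard. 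Your Fourier-multiplier construction is therefore considerably more self-contained: the explicit pointwise projections $P_G(\xi)$, $P_J(\xi)$ give $\mathbb{P}_G + \mathbb{P}_J = \id$ and hence the full orthogonal decomposition immediately, and the identifications $\ran \mathbb{P}_G = \ker \mathrm{curl}$, $\ran \mathbb{P}_J = \ker \mathrm{div}$ drop out of the symbol algebra. The price is the discussion of the $\xi = 0$ singularity in the potentials, which the paper's real-space argument avoids entirely by never constructing $\phi$ or $A$.

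One small correction: $\abs{\xi}^{-2}$ \emph{is} locally integrable in three dimensions (the volume element contributes $r^2 \, \dd r$). The actual obstruction is that $\hat{\phi}(\xi) \sim \abs{\xi}^{-1} \hat{\mathbf{E}}(\xi)$ need not lie in $L^2$ near the origin, which is precisely the homogeneous-Sobolev issue you go on to describe correctly.
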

\begin{proof}[Sketch]
	We will disregard most technical questions and content ourselves showing orthogonality of the subspaces and $J \cap G = \{ 0 \}$. Note that since $\Cont^{\infty}_{\mathrm{c}}(\R^3,\C^3)$ is dense in $L^2(\R^3,\C^3)$, it suffices to work with vectors from that dense subspace. The proof of 
	equation~\eqref{variation:eqn:definition_G_Helmholtz} can be found in \cite[Chapter~I, Theorem~1.4, equation~(1.34)]{Temam:theory_Navier_Stokes:2001}; equation~\eqref{variation:eqn:definition_G_Helmholtz} is shown in \cite[Chapter~I, Theorem~1.4, equation~(1.33) and Remark~1.5]{Temam:theory_Navier_Stokes:2001} and \cite[Theorem~1.1]{Picard:selfadjointness_curl:1998}. 
	
	Let us start with orthogonality: Pick $\phi = \nabla_x \varphi \in G \cap \Cont^{\infty}_{\mathrm{c}}(\R^3,\C^3)$ where $\varphi \in \Cont^{\infty}_{\mathrm{c}}(\R^3)$ and $\psi \in J = \ker \mathrm{div}$. Then partial integration yields 
	\begin{align*}
		\bscpro{\psi}{\phi} = \bscpro{\psi}{\nabla_x \varphi} = \bscpro{\nabla_x \cdot \psi}{\varphi} = 0 
		, 
	\end{align*}
	meaning that the vectors are necessarily orthogonal. 
	
	It is crucial here that the space of \emph{harmonic vector fields}
	\begin{align*}
		\mathrm{Har}(\R^3,\R^3) := \ker \mathrm{div} \cap \ker \mathrm{curl} 
		= J \cap G 
		= \{ 0 \} 
	\end{align*}
	is trivial, because by 
	\begin{align*}
		\nabla_x \times \nabla_x \times \phi = \nabla_x \bigl ( \nabla_x \cdot \phi \bigr ) - \Delta_x \phi 
	\end{align*}
	the subvector space consists of functions which component-wise satisfy $\Delta_x \phi = 0$. But on $\Cont^{\infty}_{\mathrm{c}}(\R^3,\R^3)$ and $L^2(\R^3,\R^3)$, this equation has no non-trivial solution. 
\end{proof}
On bounded subsets of $\R^3$, there \emph{are} harmonic vector fields because then at least the constant function is square integrable. Then the Helmholtz splitting is more subtle. 

Clearly, by definition $\mathbf{B}_{\parallel} = 0$ and $\nabla_x \cdot \mathbf{E} = \nabla_x \cdot \mathbf{E}_{\parallel} = \rho$. Moreover, $\nabla_x \times \mathbf{E} = \nabla_x \times \mathbf{E}_{\perp}$ holds so that we obtain 
\begin{align*}
	\mathbf{B}_{\parallel}(t) &= 0 
	, 
	\\
	\nabla_x \cdot \mathbf{E}_{\parallel}(t) &= \rho(t) 
	. 
\end{align*}
Note that the last equation can be solved with the help of the Fourier transform. 

The two dynamical contributions now only involve the \emph{transversal} components of the fields, 
\begin{align*}
	\partial_t \mathbf{E}_{\perp} &= \nabla_x \times \mathbf{B}_{\perp} - j
	, 
	\\
	\partial_t \mathbf{B}_{\perp} &= - \nabla_x \times \mathbf{E}_{\perp} 
	. 
\end{align*}
%
% paragraph eliminating_the_constraints (end)

\paragraph{Emergence of $\mathbf{E}$ and $\mathbf{B}$} % (fold)
One may wonder what motivates one to set $\mathbf{E} = - \partial_t A - \nabla_x \phi$ and $\mathbf{B} = \nabla_x \times A$. Let us re-examine the Euler-Lagrange equation for the second variable, 
\begin{align*}
	- \partial_t^2 A - \nabla_x \partial_t \phi &= \nabla_x \times \nabla_x A - j
	. 
\end{align*}
The left-hand side involves second-order time-derivatives, and if we want to write it as first-order equation, we can introduce the new variable $\mathbf{E} = - \partial_t A - \nabla_x \phi$ to obtain 
\begin{align*}
	\frac{\dd}{\dd t} \left (
	\begin{matrix}
		A \\
		\mathbf{E} \\
	\end{matrix}
	\right ) &= \left (
	\begin{matrix}
		- \mathbf{E} - \nabla_x \phi \\
		\nabla_x \times A \\
	\end{matrix}
	\right )
	. 
\end{align*}
%
% paragraph emergence_of_e_and_b_ (end)

\paragraph{Gauge symmetry} % (fold)
% 
% CHANGED add citation to Noether's theorem
The constraint equation \eqref{variation:eqn:edyn_constraint_equation} is related to a continuous symmetry and leads to a \emph{conserved quantity}. The relation between the two, a continuous symmetry and a conserved quantity, is made precise by \emph{Noether's theorem} (\cf \cite[Theorem~11.4.1]{Marsden_Ratiu:intro_mechanics_symmetry:1999}). Here, the \emph{gauge symmetry} of the action functional leads to \eqref{variation:eqn:edyn_constraint_equation}: if $\chi : \R \times \R^3 \longrightarrow \R$ is a smooth function depending on time and space, then a quick computation shows 
\begin{align*}
	S \bigl ( A , \phi \bigr ) = S \bigl ( A + \nabla_x \chi \, , \, \phi - \partial_t \chi \bigr )
\end{align*}
holds because the extra terms in the first two terms cancel exactly while those in the last two cancel because of \emph{charge conservation}~\eqref{variation:eqn:charge_conservation}. 
% paragraph gauge_symmetry (end)
% subsubsection derivation_of_maxwell_s_equations (end)
% subsection lagrangian_mechanics (end)

\subsection{Constraints} % (fold)
\label{variation:extremals:constraints}
Two of the four Maxwell equations describe \emph{constraints}, and we have seen how to factor out the constraints in the dynamical equations by splitting $\mathbf{E}$ and $\mathbf{B}$ into longitudinal and transversal components. This represents one way to deal with constraints, one introduces a \emph{suitable parametrization} (coordinates) to factor them out. Unfortunately, this is often non-obvious and practically impossible. But fortunately, one can apply a well-known technique if one wants to find extrema for functions on subsets of $\R^d$ under constraints, \emph{Lagrange multipliers}. 

For simplicity, let us reconsider the case of functions on $\R^d$. There, the idea of Lagrange multipliers is very simple: assume one wants to find the extrema of $f : \R^d \longrightarrow \R$ under the constraints 
\begin{align*}
	g(x) = \bigl ( g_1(x) , \ldots , g_n(x) \bigr ) = 0 \in \R^n
	. 
\end{align*}
Then this problem is equivalent to finding the (ordinary) extrema of the function 
\begin{align*}
	F : \R^d \times \R^n \longrightarrow \R 
	, \; \; 
	F(x,\lambda) := f(x) + \lambda \cdot g(x) 
\end{align*}
and the condition for the extrema, 
\begin{align*}
	\nabla_x f(x) + \lambda \, \nabla_x g(x) &= 0 
	, 
	\\
	g(x) &= 0 
	, 
\end{align*}
shows how the recipe of using Lagrange multipliers works behind the scenes. There is also a simple visualization in case $n = 1$: setting $\nabla_x f(x) + \lambda \, \nabla_x g(x) = 0$ means $\nabla_x f(x)$ and $\nabla_x g(x)$ are parallel to one another. Assume, for instance, that $x_0$ is a local maximum, but that $\nabla_x f(x_0)$ and $\nabla_x g(x_0)$ are not parallel. First of, $\nabla_x g(x)$ is always \emph{normal} to the surface defined by $g(x) = 0$, and we can split $\nabla_x f(x_0) = v_{\parallel} + v_{\perp}$ where $v_{\parallel} \parallel \nabla_x g(x_0)$ and $v_{\perp}$ is orthogonal. Then $v_{\perp} \neq 0$ means we can increase the value of $f(x)$ along $g(x) = 0$ by going in the direction of $v_{\perp}$, because 
\begin{align*}
	v_{\perp} \cdot \partial_{v_{\perp}} f(x_0) = v_{\perp}^2 > 0 
	. 
\end{align*}
The argument for a local minimum is analogous, one simply has to walk in the direction opposite of $v_{\perp}$ to lower $f(x)$. Put another way, tangential to the surface, all components of $\nabla_x f(x)$ need to vanish, but along the surface normal $\nabla_x f(x)$ need not be zero. But it is prohibited to travel along this direction, because one would leave the surface $\{ g(x) = 0 \}$. 

To go back to the realm of functionals, we merely need to translate all these ideas properly: consider a functional $\mathcal{E} : \Omega \subseteq \mathcal{X} \longrightarrow \R$ restricted to 
\begin{align*}
	U := \bigl \{ x \in \Omega \; \; \vert \; \; \mathcal{J}(x) = 0 \bigr \} 
\end{align*}
described by the constraint functional $\mathcal{J}$. 
\begin{theorem}
	Let $\mathcal{E}$ and $\mathcal{J}$ both be $\Cont^1$ functionals and assume $x_0 \in U$ is a critical point of $\mathcal{E} \vert_U$. Then there exists $\lambda \in \R$ such that 
	\begin{align*}
		\dd \mathcal{E}(x_0) + \lambda \, \dd \mathcal{J}(x_0) = 0 
		. 
	\end{align*}
\end{theorem}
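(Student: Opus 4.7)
The plan is to mimic the classical finite-dimensional Lagrange-multiplier argument: show that $\dd\mathcal{E}(x_0)$ annihilates $\ker \dd\mathcal{J}(x_0)$, then use that this kernel has codimension at most one to conclude that $\dd\mathcal{E}(x_0)$ is a scalar multiple of $\dd\mathcal{J}(x_0)$. The natural dichotomy is whether $\dd\mathcal{J}(x_0)$ vanishes or not. If $\dd\mathcal{J}(x_0) = 0$, the conclusion genuinely needs an extra nondegeneracy hypothesis (without it one would have to restate the theorem as asserting the existence of nontrivial multipliers $(\mu,\lambda)\neq 0$ with $\mu\,\dd\mathcal{E}(x_0)+\lambda\,\dd\mathcal{J}(x_0)=0$). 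I will treat the interesting case $\dd\mathcal{J}(x_0)\neq 0$.

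First I would describe the tangent directions to $U$ at $x_0$. Using that $\dd\mathcal{J}(x_0)\neq 0$, pick $e\in\mathcal{X}$ with $\dd\mathcal{J}(x_0)e=1$, and for any $h\in\ker \dd\mathcal{J}(x_0)$ define $F(s,t):=\mathcal{J}(x_0+sh+te)$. Then $F(0,0)=0$ and $\partial_t F(0,0)=\dd\mathcal{J}(x_0)e=1$, so an implicit-function-theorem argument yields a $\Cont^1$ map $s\mapsto t(s)$ with $t(0)=0$ and $F(s,t(s))=0$ for small $\sabs{s}$. Differentiating $F(s,t(s))=0$ at $s=0$ produces $\dd\mathcal{J}(x_0)h+t'(0)=0$, hence $t'(0)=0$. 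The resulting path $\gamma(s):=x_0+sh+t(s)e$ lies in $U$, satisfies $\gamma(0)=x_0$, and has tangent $\dot\gamma(0)=h$, identifying $\ker\dd\mathcal{J}(x_0)$ with $T_{x_0}U$ in the sense of Chapter~\ref{variation:extremals:tangent}.

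Next I would exploit criticality of $x_0$. Since $x_0$ is a critical point of $\mathcal{E}\vert_U$, the derivative along any admissible curve vanishes, so $\frac{\dd}{\dd s}\mathcal{E}(\gamma(s))\big\vert_{s=0}=\dd\mathcal{E}(x_0)h=0$ for every $h\in\ker\dd\mathcal{J}(x_0)$. Hence $\dd\mathcal{E}(x_0)$ annihilates $\ker\dd\mathcal{J}(x_0)$. Any $v\in\mathcal{X}$ decomposes uniquely as $v=h_v+c_v\,e$ with $c_v:=\dd\mathcal{J}(x_0)v$ and $h_v:=v-c_v e\in\ker\dd\mathcal{J}(x_0)$. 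Setting $\lambda:=-\dd\mathcal{E}(x_0)e\in\R$ and applying $\dd\mathcal{E}(x_0)$ to this decomposition gives $\dd\mathcal{E}(x_0)v=c_v\,\dd\mathcal{E}(x_0)e=-\lambda\,\dd\mathcal{J}(x_0)v$, which is exactly the desired identity $\dd\mathcal{E}(x_0)+\lambda\,\dd\mathcal{J}(x_0)=0$.

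The main obstacle is the implicit-function step. Gâteaux differentiability as defined in Definition~\ref{variation:defn:Gateaux_derivative} is weaker than Fréchet differentiability, and the usual contraction-mapping proof of the implicit function theorem in Banach spaces really requires the stronger notion (or at least continuity of $x\mapsto\dd\mathcal{J}(x)$ as an operator-norm-valued map, together with Gâteaux differentiability on a neighborhood). A clean writeup would therefore either strengthen the hypothesis to $\Cont^1$ in the Fréchet sense or localize the argument: solve the constraint only along the one-dimensional slice $t\mapsto \mathcal{J}(x_0+sh+te)$ by the ordinary implicit function theorem in $\R$, which reduces the regularity demand to the differentiability already built into the definition of $\dd\mathcal{J}(x_0)$. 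A secondary, purely notational obstacle is making precise what \textquotedblleft critical point of $\mathcal{E}\vert_U$\textquotedblright\ means when $U$ carries no a priori smooth structure; I would take it to mean $\frac{\dd}{\dd s}\mathcal{E}(\gamma(s))\big\vert_{s=0}=0$ for every admissible $\Cont^1$ path $\gamma$ in $U$ through $x_0$, which is exactly what the construction above delivers.
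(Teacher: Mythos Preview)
Your approach is the same geometric Lagrange-multiplier argument the paper uses: tangent vectors to $U$ lie in $\ker\dd\mathcal{J}(x_0)$, criticality of $\mathcal{E}\vert_U$ forces $\dd\mathcal{E}(x_0)$ to vanish on those tangent directions, and hence the two differentials are ``parallel.'' The paper's proof, however, is much terser and essentially just states this outline; it never verifies that every $h\in\ker\dd\mathcal{J}(x_0)$ actually arises as a tangent vector to $U$ (it only shows the inclusion $T_{x_0}U\subseteq\ker\dd\mathcal{J}(x_0)$), nor does it write out the codimension-one linear algebra that produces $\lambda$. Your proposal fills in exactly these steps via the implicit-function-theorem construction of curves in $U$, and you are also more careful about the degenerate case $\dd\mathcal{J}(x_0)=0$ and the Gâteaux-versus-Fréchet regularity issue, neither of which the paper addresses. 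So your write-up is more complete than the paper's own, while following the same line.
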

\begin{proof}
	Let $x_s$ be a differentiable path in $U$ for the tangent vector $\xi = \partial_t x_s \vert_{s = 0}$, \ie it is a differentiable path in $\Omega$ such that $J(x_s) = 0$. Then $J(x_s) = 0$ implies automatically $\dd J(x_0) \xi = 0$ for all such paths, \ie $\xi \in \ker \dd J(x_0)$ (“$\dd J(x)$ is normal to $J(x) = 0$”). 
	
	The fact that $x_0$ is a critical points of $\mathcal{E}$ in $U$ means 
	\begin{align*}
		\dd \mathcal{E}(x_0) \xi &= \left . \frac{\dd}{\dd s} \mathcal{E}(x_s) \right \vert_{s = 0} = 0 
		, 
	\end{align*}
	meaning $\xi \in \ker J(x_0)$ at critical points implies $\xi \in \ker \dd \mathcal{E}(x_0)$ (“$\dd \mathcal{E}(x_0)$ and $\dd \mathcal{J}(x_0)$ are parallel”). 
\end{proof}
The proof once again states that in the tangential direction to $U$, one needs to have $\dd \mathcal{E}(x_0) = 0$ while in the normal direction, $\dd \mathcal{E}(x_0) \neq 0$ is permissible because it is not possible to travel in this direction as one would leave the “surface” $U = \{ J(x) = 0 \}$.\marginpar{2014.03.13} 
\begin{example}
	Assume we would like to find the equations of motions of a particle of mass $1$ moving on the surface of the sphere $\mathbb{S}^2 \subset \R^3$ with radius $1$. Then the Lagrange function describing this motion is just the free Lagrange function in $\R^3$, namely $L(x,v) = \frac{1}{2} v^2$. The constraint functional is described in terms of the function $J(x,v) = \tfrac{1}{4} \bigl ( x^2 - 1 \bigr )^2$, 
	\begin{align*}
		\mathcal{J}(q) := \int_0^T \dd t \, J \bigl (q(t) , \dot{q}(t) \bigr ) 
		. 
	\end{align*}
	Paths on the sphere satisfy $\mathcal{J}(q) = 0$. Equation~\eqref{variation:eqn:Euler_Lagrange_with_L_integral} now yields a very efficient way to compute $\dd S + \lambda \, \dd \mathcal{J}$, namely for any path $q$ on the surface of the sphere, we obtain 
	\begin{align*}
		\bigl ( \dd S + \lambda \, \dd \mathcal{J} \bigr ) h &= \int_0^T \dd t \, \left ( \nabla_x \bigl ( L + \lambda \, \mathcal{J} \bigr ) \bigl ( q(t) , \dot{q}(t) \bigr ) - \frac{\dd}{\dd t} \nabla_v \bigl ( L + \lambda \, \mathcal{J} \bigr )  \bigl ( q(t) , \dot{q}(t) \bigr ) \right ) \cdot h(t)
		\\
		&= \int_0^T \dd t \, \bigl ( - \ddot{q}(t) + \lambda \, (q(t)^2 - 1) \, q(t) \bigr ) \cdot h(t)
		. 
	\end{align*}
	By construction, $h(t)$ is tangent to the sphere at $q(t)$, \ie $h(t)$ needs to be perpendicular to $q(t)$, and thus the second term vanishes (the term also vanishes because $q(t)^2 = 1$, but the other argument still holds true if we change the constraint function). In fact, $q(t)$ is always normal to the tangent plane, and thus saying that $\ddot{q}(t)$ is perpendicular to the plane really means 
	\begin{align}
		\ddot{q}(t) = \lambda(t) \, q(t) 
		. 
		\label{variation:eqn:ddotq_free_particle_sphere}
	\end{align}
	Writing this equation as a first-order equation, we introduce the variable $v(t) := \dot{q}(t)$ which by definition is tangent to $\mathbb{S}^2$ at $q(t)$. Because we are in three dimensions, there exists a unique vector $\omega(t)$ with $\omega(t) \cdot q(t) = 0$ ($\omega(t)$ must lie in the tangent plane), $\omega(t) \cdot \dot{q}(t) = 0$ and 
	\begin{align*}
		\dot{q}(t) = \omega(t) \times q(t) 
		, 
	\end{align*}
	because $\omega(t)$ is the vector which completes $\{ q(t) , \dot{q}(t) \}$ to an orthogonal basis of $\R^3$ and is proportional to $q(t) \times \dot{q}(t)$. In fact, the above equation implies $\omega(t) = q(t) \times \dot{q}(t)$ since 
	\begin{align*}
		q(t) \times \dot{q}(t) &= q(t) \times \bigl ( \omega(t) \times q(t) \bigr ) 
		= \underbrace{\abs{q(t)}^2}_{= 1} \, \omega(t) - \underbrace{q(t) \cdot \omega(t)}_{= 0} \, q(t)
		= \omega(t)
		. 
	\end{align*}
	Deriving the left-hand side with respect to time yields an equation of motion for $\omega$, 
	\begin{align*}
		\frac{\dd}{\dd t} \omega(t) &= \frac{\dd}{\dd t} \bigl ( q(t) \times \dot{q}(t) \bigr ) 
		= \dot{q}(t) \times \dot{q}(t) + q(t) \times \ddot{q}(t)
		= q(t) \times \ddot{q}(t)
		, 
	\end{align*}
	and with the help equation \eqref{variation:eqn:ddotq_free_particle_sphere}, we deduce $\frac{\dd}{\dd t} \omega(t) = 0$. That means $\omega = \omega(0) = \omega(t) = q(0) \times \dot{q}(0)$ is constant in time and the equations of motion reduce to 
	\begin{align*}
		\dot{q}(t) &= \omega \times q(t) 
		. 
	\end{align*}
	We have solved these equation numerous times in the course, the solutions are rotations around the axis along $\omega$ with angular velocity $\abs{\omega}$, just as expected. 
\end{example}
%
% subsection constraints (end)

\subsection{Complex derivatives} % (fold)
\label{variation:extremals:complex_derivatives}
To frame the discussion, we will quickly recap why complex derivatives of functions are fundamentally different from real derivatives. In what follows let us denote complex numbers with $z = z_{\Re} + \ii z_{\Im} \in \C$ where $z_{\Re} , z_{\Im} \in \R$ are real and imaginary part. We also split any function $f : \C \longrightarrow \C$ into real and imaginary part as $f = f_{\Re} + \ii f_{\Im}$ where now $f_{\Re} , f_{\Im} : \C \longrightarrow \R$. The identification of $\C \cong \R^2$ via $z \mapsto \vec{z} :=  \bigl ( z_{\Re} , z_{\Im} \bigr )$ allows us to think of 
\begin{align}
	\partial_z f(z_0) = \lim_{z \to z_0} \frac{f(z) - f(z_0)}{z - z_0}
	\label{variation:eqn:complex_derivative}
\end{align}
as a limit in $\R^2$. As we are in more than one (real) dimension, the above equation implicitly assumes that the limit is \emph{independent of the path} taken as $z \to z_0$. To simplify the notation a little, we will assume without loss of generality that $z_0 = 0$ and $f(0) = 0$. Then the existence of the limit~\eqref{variation:eqn:complex_derivative} implies that 
\begin{align*}
	\partial_z f(0) &= \lim_{z_{\Re} \to 0} \frac{f_{\Re}(z_{\Re}) + \ii f_{\Im}(z_{\Re})}{z_{\Re}} 
	= \partial_{z_{\Re}} f(0) 
	\\
	&= \lim_{z_{\Im} \to 0} \frac{f_{\Im}(\ii z_{\Im}) + \ii f_{\Im}(\ii z_{\Im})}{\ii z_{\Im}} 
	= - \ii \, \partial_{z_{\Im}} f(0)
	\\
	&= \tfrac{1}{2} \bigl ( \partial_{z_{\Re}} - \ii \partial_{z_{\Im}} \bigr ) f(0) 
\end{align*}
are in fact equal. Thus, equating real and imaginary part, we immediately get the Cauchy-Riemann equations, 
\begin{subequations}\label{variation:eqn:Cauchy_Riemann}
	\begin{align}
		\partial_{z_{\Re}} f_{\Re}(0) &= + \partial_{z_{\Im}} f_{\Im}(0) 
		, 
		\\
		\partial_{z_{\Im}} f_{\Re}(0) &= - \partial_{z_{\Re}} f_{\Im}(0) 
		. 
	\end{align}
\end{subequations}
This reasoning shows that complex differentiability of $f$ (\ie the limit in \eqref{variation:eqn:complex_derivative} exists) implies \eqref{variation:eqn:Cauchy_Riemann}. In fact, these two are equivalent, a function is complex differentiable or \emph{holomorphic} if and only if 
\begin{align*}
	\partial_{\bar{z}} f(z) = \tfrac{1}{2} \bigl ( \partial_{z_{\Re}} + \ii \partial_{z_{\Im}} \bigr ) f(z) 
	= 0 
	. 
\end{align*}
So let us turn our attention back to functionals. The idea here is the same: we can identify each complex Banach space $\mathcal{X} \cong \mathcal{X}_{\R} \oplus \ii \mathcal{X}_{\R}$ as a Banach space over $\R$ whose dimension is twice as large via the identification $x \mapsto \vec{x} = \bigl ( \Re x , \Im x \bigr ) = \bigl ( x_{\Re} , x_{\Im} \bigr )$. Hence, we can associate to any functional $\mathcal{E} : \Omega \subseteq \mathcal{X} \longrightarrow \C$ another functional on $\mathcal{X}_{\R} \oplus \ii \mathcal{X}_{\R}$ via 
\begin{align*}
	\mathcal{E}_{\R}(\vec{x}) = \mathcal{E} \bigl ( x_{\Re} + \ii x_{\Im} \bigr ) 
	. 
\end{align*}
For this functional, we can take (partial) derivatives with respect to $x_{\Re}$ and $x_{\Im}$ as before (we are in the setting of functionals over a real Banach space again) as well as define the gradient $\partial_{\vec{x}}$. Then analogously to derivatives on $\C$, we define the complex partial  
\begin{subequations}
	\begin{align}
		\partial_x \mathcal{E}(x) &:= \partial_{x_{\Re}} \mathcal{E}(x) - \ii \partial_{x_{\Im}} \mathcal{E}(x) 
		, 
		\\
		\partial_{\bar{x}} \mathcal{E}(x) &:= \partial_{x_{\Re}} \mathcal{E}(x) + \ii \partial_{x_{\Im}} \mathcal{E}(x) 
		. 
	\end{align}
\end{subequations}
and the complex Gâteaux derivatives 
\begin{subequations}
	\begin{align}
		\dd \mathcal{E}(x) \equiv \dd_x \mathcal{E}(x) &:= \dd_{x_{\Re}} \mathcal{E}(x) - \ii \dd_{x_{\Im}} \mathcal{E}(x) 
		, 
		\\
		\bar{\dd} \mathcal{E}(x) \equiv \dd_{\bar{x}} \mathcal{E}(x) &:= \dd_{x_{\Re}} \mathcal{E}(x) + \ii \dd_{x_{\Im}} \mathcal{E}(x) 
		. 
	\end{align}
\end{subequations}
Now that derivatives have been extended, let us define the notion of 
\begin{definition}[Critical point]
	Let $\mathcal{E} : \Omega \subseteq \mathcal{X} \longrightarrow \C$ be a $\Cont^1$ functional. Then $x_0 \in \Omega$ is a critical point if and only if $\bar{\dd} \mathcal{E}(x_0) = 0$. 
\end{definition}
One way to compute $\bar{\dd} \mathcal{E}(x)$ is to treat $x$ and $\bar{x}$ as independent functions and compute the corresponding partial Gâteaux derivative. Note that in case of polynomials, 

The reason why we have chosen to take the derivative $\bar{\dd}$ instead of $\dd$ is just a matter of convention and convenience: in the context of Hilbert spaces, we can write 
\begin{align*}
	\bar{\dd} \mathcal{E}(\psi) \varphi = \bscpro{\varphi}{\mathcal{E}'(\psi)}
\end{align*}
as a scalar product; the vector $\mathcal{E}'(\psi)$ is defined via the Riesz representation theorem~\ref{hilbert_spaces:dual_space:thm:Riesz_Lemma}. If we had used $\dd$ instead, then $\mathcal{E}'(\psi)$ would appear in the first, the \emph{anti-linear} argument of the scalar product.

\paragraph{Derivation of the Ginzburg-Landau equations} % (fold)
As a simple example, let us derive the Euler-Lagrange equation for the Ginzburg-Landau energy functional, 
\begin{align}
	\mathcal{E}_{\Omega}(\psi,A) := \int_{\Omega} \dd x \, \Bigl ( \babs{\bigl ( - \ii \nabla_x - A(x) \bigr ) \psi(x)}^2 + \tfrac{\kappa^2}{2} \bigl ( \sabs{\psi(x)}^2 - 1 \bigr )^2 + \bigl ( \nabla_x \times A(x) \bigr )^2 \Bigr ) 
	. 
	\label{variation:eqn:Ginzburg_Landau_functional}
\end{align}
It describes the difference in Helmholtz free energy between the ordinary and superconducting phase of an ordinary superconductor in the bounded region $\Omega \subseteq \R^3$ which is subjected to a magnetic field $B = \nabla_x \times A$. Here, $\psi$ is an \emph{order parameter} which describes whether the material is in an ordinary conductor $\psi = 0$ or the electrons have formed Cooper pairs which carry a superconducting current ($\psi \neq 0$). 

Admissible states are by definition critical points of the Ginzburg-Landau functional, 
\begin{align}
	\bigl ( \dd \mathcal{E}_{\Omega}(\psi,A) \bigr )(\varphi,a) &= \left . \frac{\dd}{\dd s} \mathcal{E}_{\Omega} \bigl ( \psi + s \varphi \, , \, A + s a \bigr ) \right \vert_{s = 0} 
	\notag \\
	&= 2 \Re \int_{\Omega} \dd x \, \Bigl ( 
	  a \cdot \Bigl ( - \overline{\psi} \, \bigl ( - \ii \nabla_x - A \bigr ) \psi + \nabla_x \times \nabla_x \times A \Bigr ) 
	  + \Bigr . \notag \\
	  &\qquad \qquad \qquad \; \Bigl . + \, 
	  \overline{\varphi} \, \Bigl ( \bigl ( - \ii \nabla_x - A \bigr )^2 \psi - \kappa^2 \bigl ( 1 - \sabs{\psi}^2 \bigr ) \, \psi 
	\Bigr ) 
	\overset{!}{=} 0 
	, 
	\label{variation:eqn:Ginzburg_Landau_critical_point}
\end{align}
and leads to the \emph{Ginzburg-Landau equations}, 
\begin{subequations}
	\begin{align}
		\nabla_x \times \nabla_x \times A &= \Re \bigl ( \overline{\psi} \, \bigl ( - \ii \nabla_x - A \bigr ) \psi \bigr ) =: j(\psi,A)
		,
		\\
		0 &= \bigl ( - \ii \nabla_x - A \bigr )^2 \psi - \kappa^2 \bigl ( 1 - \sabs{\psi}^2 \bigr ) \, \psi
		. 
	\end{align}
\end{subequations}
We could have also obtained these equations by deriving the integrand with respect to $\overline{\psi}$. 

Clearly, for all $A$ which describe \emph{constant} magnetic fields $B = \nabla_x \times A$ the normal conducting phase $\psi = 0$ as well as the perfect superconductor $\psi = 1$ are solutions, and the question is for which values of $\kappa$ and $B$ there are other solutions (mixed phase where normal and superconducting states coexist). We will continue this example later on in the chapter. 
% paragraph derivation_of_the_ginzburg_landau_equations (end)
% subsection complex_derivatives (end)

\subsection{Second derivatives} % (fold)
\label{variation:extremals:second_derivatives}
When we introduced the Gâteaux derivative for functionals, we have claimed that one can compute second- and higher-order derivatives in the same fashion. We choose to illustrate the general principle with the Ginzburg-Landau equations: the Euler-Lagrange equations can either be understood as the integral expression~\eqref{variation:eqn:Ginzburg_Landau_critical_point} or the \emph{$L^2$ gradient}
\begin{align}
	\mathcal{E}_{\Omega}'(\psi,A) &= \left (
	\begin{matrix}
		\bigl ( - \ii \nabla_x - A \bigr )^2 \psi - \kappa^2 \bigl ( 1 - \sabs{\psi}^2 \bigr ) \, \psi \\
		\nabla_x \times \nabla_x \times A - \Re \bigl ( \overline{\psi} \, \bigl ( - \ii \nabla_x - A \bigr ) \psi \bigr ) \\
	\end{matrix}
	\right )
\end{align}
defined by $\bscpro{(\varphi,a)}{\mathcal{E}_{\Omega}'(\psi,A)} := \bigl ( \dd \mathcal{E}_{\Omega}(\psi,A) \bigr )(a,\varphi)$. The \emph{Hessian} $\mathcal{E}_{\Omega}''$ of $\mathcal{E}_{\Omega}$ now is just the \emph{linearization} of $\mathcal{E}_{\Omega}'(\psi)$, 
\begin{align}
	\bigl ( \mathcal{E}_{\Omega}''(\psi,A) \bigr )(\varphi,a) := \left . \frac{\dd}{\dd s}  \mathcal{E}_{\Omega}' \bigl ( \psi + s \varphi \, , \, A + s a \bigr ) \right \vert_{s = 0}
	. 
\end{align}
Equivalently, we could have used the quadratic form 
\begin{align*}
	\bscpro{(\eta,\alpha)}{\mathcal{E}_{\Omega}''(\psi,A) (\varphi,a)} &= \frac{\partial^2}{\partial s \, \partial t} \mathcal{E}_{\Omega} \bigl ( \psi + s \varphi + t \eta \, , \, A + s a + t \alpha \bigr ) \bigg \vert_{s = t = 0}
\end{align*}
as a definition. A somewhat lengthier computation yields an explicit expression, 
\begin{align*}
	&\bigl ( \mathcal{E}_{\Omega}''(\psi,A) \bigr ) (\varphi,a) = 
	\\
	&\, = \left (
	\begin{matrix}
		\bigl ( - \ii \nabla_x - A \bigr )^2 \varphi + \kappa^2 \, \bigl ( 2 \, \sabs{\psi}^2 - 1 \bigr ) \, \varphi + \kappa^2 \, \psi^2 \, \overline{\varphi} - 2 \, \bigl ( - \ii \nabla_x - A \bigr ) \psi \cdot a + \ii \, \psi \, \nabla_x \cdot a \\
		\bigl ( \nabla_x \times \nabla_x + \sabs{\psi}^2 \bigr ) a - \Re \bigl ( \overline{\varphi} \, \bigl ( - \ii \nabla_x - A \bigr ) \psi + \overline{\psi} \, \bigl ( - \ii \nabla_x - A \bigr ) \varphi \bigr ) \\
	\end{matrix}
	\right )
	, 
\end{align*}
and the take-away message here is that $\mathcal{E}_{\Omega}''(\psi,A)$ is an $\R$-linear map, and thus, we can use the tools of functional analysis to probe the properties of the Hessian. 
\medskip

\noindent
Quite generally, the Hessian of a functional $\mathcal{E}$ is just the second-order Gâteaux derivative, 
\begin{align*}
	\bscpro{\beta}{\mathcal{E}''(\psi) \alpha} &= \frac{\partial^2}{\partial s \, \partial t} \mathcal{E} \bigl ( \psi + s \alpha + t \beta \bigr ) \bigg \vert_{s = t = 0}
	, 
\end{align*}
which implicitly defines a \emph{linear} partial differential operator. The properties of the Hessian characterize the behavior of the fixed point: is it a local maximum, a local minimum or a saddle point? 

In short, we have the following hierarchy: We would like find the critical points of a given a functional $\mathcal{E}$ and characterize them. The critical point equation is then equivalent to a \emph{non-linear PDE} whose solutions are the fixed points. The linearization of this PDE at a critical point yields a \emph{linear PDE} which serves as the starting point for the \emph{stability analysis} of that fixed point. This is exactly what we have done for ODEs in Chapter~\ref{odes:stability}: the non-linear vector field determines the fixed points while its linearization can be used to classify the fixed point. Hence, functionals generate many interesting linear and non-linear PDEs. 
% subsection convexity (end)
% section extremals_of_functionals (end)

\section{Key points for a rigorous analysis} % (fold)
\label{variation:rigorous}
% 
%
% \begin{itemize}
% 	\item existence of Gâteaux derivative does not imply continuity (ditto for functions, drag out standard example)
% 	\item Michael pp.~22
% 	\item alternate strategy: $\mathcal{E}$ bounded from below, convex
% \end{itemize}
%
Up to now, we have not been very rigorous. A careful reader will have noticed that even though we have defined a notion of differentiability, we have \emph{not} defined \emph{continuity} which in real analysis precedes differentiability. That was a conscious choice, because the question of continuity for functionals turns out to be much more involved. But continuity is not necessary to define the Gâteaux derivative:

So let us sketch what is involved in a rigorous analysis of functionals. The purpose here is not to completely cover and overcome all of the difficulties, but to merely point them out. So let us consider a functional $\mathcal{E}$ on $\Omega \subseteq \mathcal{X}$, and assume we would like to find a \emph{minimizer}, \ie an element $x_0 \in \Omega$ for which 
\begin{align*}
	\mathcal{E}(x_0) = \inf_{x \in \Omega} \mathcal{E}(x)
	, 
\end{align*}
and whether this minimizer is \emph{unique}. Clearly, we have to assume that the functional is \emph{bounded from below}, 
\begin{align*}
	E_0 := \inf_{x \in \Omega} \mathcal{E}(x)
	, 
\end{align*}
otherwise such a minimizer cannot exist. 

Now let us consider the case where $\Omega$ is a closed subset of $\R^d$, \ie we are looking at functions (in the ordinary sense) from $\R^d$ to $\R$. We split this minimization procedure in 3 steps; this splitting is chosen to 

\begin{enumerate}[(1)]
	\item Pick a \emph{minimizing sequence $\{ x_n \}_{n \in \N}$} for which $\lim_{n \to \infty} \mathcal{E}(x_n) = E_0$. Such a sequence exists, because $\mathcal{E}$ is bounded from below. 
	\item Investigate whether $\{ x_n \}_{n \in \N}$ or at least a subsequence converges. Just imagine if $\mathcal{E}$ has two minima $x_0$ and $x_0'$. Then the alternating sequence would certainly be a minimizing sequence which does not converge. Another situation may occur, namely that the minimizer is located at “$\infty$”: take $\mathcal{E}(x) = \e^{- x^2}$ on $\R^d$, then no minimizer exists. 
	
	So how do we show that $\{ x_n \}_{n \in \N}$ has a convergent subsequence? Let us first assume in addition that $\mathcal{E}$ is \emph{coercive}, \ie $\mathcal{E}(x) \rightarrow \infty$ if and only if $\snorm{x} \rightarrow \infty$. In that case, we may assume that all the elements in the minimizing sequence satisfy $\mathcal{E}(x_n) \leq E_0 + 1$ (just discard all the others). Then since $\mathcal{E}$ is coercive, also $\snorm{x_n} \leq C$ holds for some $C > 0$, and the existence of a convergent subsequence follows from Bolzano-Weierstrass (every bounded sequence in $\R^d$ has a convergent subsequence). For simplicity, we denote this convergent subsequence by $\{ x_n \}_{n \in \N}$. 
	\item Then the limit point of this convergent subsequence, $x_0 = \lim_{n \to \infty} x_n$ is a candidate for a minimizer. \emph{If} $\mathcal{E}$ is continuous, then by 
	\begin{align*}
		E_0 = \lim_{n \to \infty} \mathcal{E}(x_n) = \mathcal{E} \Bigl ( \lim_{n \to \infty} x_n \Bigr ) 
		= \mathcal{E}(x_0)
	\end{align*}
	$x_0$ is a minimizer of $\mathcal{E}$. 
\end{enumerate}
Not surprisingly, things are more complicated if $\mathcal{X}$ is infinite-dimensional. 
\begin{enumerate}[(i)]
	\item First of all, showing that a functional is bounded from below is not as immediate as in the case of functions. For instance, consider the energy functional~\eqref{variation:eqn:energy_functional} associated to $H = - \Delta_x + V$ where $V$ is not bounded from below (think of something like the Coulomb potential). Then it is a priori \emph{not} clear whether $H$ -- and thus $\mathcal{E}$ -- is bounded from below. 
	\item In point (2) the Bolzano-Weiserstrass theorem was crucial to extract a convergent subsequence. Things are not as easy when going to infinite dimensions, because the unit ball $\bigl \{ \snorm{x} \leq 1 \; \vert \; x \in \mathcal{X} \bigr \}$ is \emph{no longer compact}. However, if the bidual $\mathcal{X}''$ is isometrically isomorphic to $\mathcal{X}$ (\ie $\mathcal{X}$ is reflexive), then by the Banach-Alaoglu theorem \cite[Theorem~IV.21]{Reed_Simon:M_cap_Phi_1:1972} every bounded sequence has a \emph{weakly} convergent subsequence (\cf Definition~\ref{spaces:defn:weak_convergence}). For instance, if $\mathcal{X}$ is a Hilbert space, it is reflexive. 
	\item The essential ingredient in (3) was the continuity of the functional, but this is usually either very hard to prove or even wrong. However, for the purpose of finding minimizers, \emph{weak lower semi-continuity} (w-lsc) suffices, \ie 
	\begin{align*}
		x_n \rightharpoonup x_0 
		\; \; \Longrightarrow \; \; 
		\liminf_{n \to \infty} \mathcal{E}(x_n) \geq \mathcal{E}(x_0)
		. 
	\end{align*}
	Because if $x_n$ is initially a minimizing sequence, then by definition 
	\begin{align}
		\lim_{n \to \infty} \mathcal{E}(x_n) = E_0 \geq \mathcal{E}(x_0)
		\geq E_0 
		\label{variation:eqn:wlsc_existence_minimizer}
	\end{align}
	minimizes the value of the functional. \marginpar{2014.03.25}
\end{enumerate}
\begin{theorem}\label{variation:thm:key_theorem_existence_minimizer}
	Assume that
	\begin{enumerate}[(i)]
		\item $\Omega \subseteq \mathcal{X}$ is closed under weak limits (weakly convergent sequences converge in $\Omega$), 
		\item $\mathcal{E}$ is weakly lower semi-continuous, and 
		\item $\mathcal{E}$ is coercive. 
	\end{enumerate}
	Then $\mathcal{E}$ is bounded from below and attains its minimum in $\Omega$, \ie there exists a possibly non-unique minimizer in $\Omega$. 
\end{theorem}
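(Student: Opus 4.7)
The plan is to follow faithfully the three-step program outlined immediately before the theorem statement, with the understanding (implicit in point (2) of that discussion) that $\mathcal{X}$ is reflexive so that the Banach--Alaoglu theorem delivers weakly convergent subsequences of bounded sequences.

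First I would establish boundedness from below as a separate preliminary. Suppose for contradiction there were a sequence $\{x_n\} \subset \Omega$ with $\mathcal{E}(x_n) \to -\infty$. In particular $\mathcal{E}(x_n) \not\to +\infty$, so coercivity (in its ``iff'' form) prevents $\snorm{x_n} \to \infty$, meaning $\{x_n\}$ has a bounded subsequence. Reflexivity gives a further subsequence $x_{n_k} \rightharpoonup x_0$, and hypothesis (i) places $x_0 \in \Omega$. Weak lower semi-continuity then yields $\mathcal{E}(x_0) \leq \liminf_k \mathcal{E}(x_{n_k}) = -\infty$, which is impossible. Hence $E_0 := \inf_{x \in \Omega} \mathcal{E}(x) \in \R$.

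Next I would carry out the actual minimization. Choose any minimizing sequence $\{x_n\} \subset \Omega$ with $\mathcal{E}(x_n) \to E_0$. Coercivity forces $\{x_n\}$ to be bounded in $\mathcal{X}$: otherwise a subsequence would have $\snorm{x_{n_k}} \to \infty$, giving $\mathcal{E}(x_{n_k}) \to +\infty$, contradicting $\mathcal{E}(x_n) \to E_0 \in \R$. By Banach--Alaoglu, some subsequence $x_{n_k}$ converges weakly to a limit $x_0 \in \mathcal{X}$, and hypothesis (i) again ensures $x_0 \in \Omega$. Applying weak lower semi-continuity and using that the limit of a convergent sequence equals any of its liminfs, I obtain
\begin{align*}
	E_0 \leq \mathcal{E}(x_0) \leq \liminf_{k \to \infty} \mathcal{E}(x_{n_k}) = \lim_{k \to \infty} \mathcal{E}(x_{n_k}) = E_0
	,
\end{align*}
so $\mathcal{E}(x_0) = E_0$ and $x_0$ is the desired minimizer.

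The only genuinely delicate ingredient is the extraction of the weakly convergent subsequence in step~2: this relies on an unstated reflexivity assumption on $\mathcal{X}$ (which is harmless in all the applications of interest in this chapter, where $\mathcal{X}$ is a Sobolev space or a Hilbert space). The other two steps are essentially bookkeeping: coercivity is used twice, once to rule out $\mathcal{E} \equiv -\infty$ and once to keep the minimizing sequence bounded; weak lower semi-continuity is the mechanism that survives the passage to a weak limit in \eqref{variation:eqn:wlsc_existence_minimizer}; and closure of $\Omega$ under weak limits is what allows the candidate $x_0$ to be admissible in the first place. Uniqueness is not claimed and would require additional input, typically strict convexity of $\mathcal{E}$.
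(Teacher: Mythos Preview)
Your proof is correct and follows essentially the same approach as the paper's: coercivity bounds the minimizing sequence, Banach--Alaoglu (with the implicit reflexivity assumption you correctly flag) extracts a weakly convergent subsequence, weak closedness of $\Omega$ keeps the limit admissible, and weak lower semi-continuity pins down the minimum via \eqref{variation:eqn:wlsc_existence_minimizer}. The only organizational difference is that you separate ``$E_0 > -\infty$'' into its own preliminary contradiction argument, whereas the paper runs the minimization once and reads off both the existence of the minimizer and the finiteness of $E_0$ from the single inequality $\mathcal{E}(x_0) \leq \liminf_k \mathcal{E}(x_{n_k}) = E_0$; this is a cosmetic difference, not a substantive one.
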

\begin{proof}
	Set $E_0 := \inf_{x \in \Omega} \mathcal{E}(x)$ and pick a minimizing sequence $\{ x_n \}_{n \in \N}$, \ie a sequence with $E_0 = \lim_{n \to \infty} \mathcal{E}(x_n)$. At this point, we do not know whether $E_0$ is finite or $-\infty$. In case $E_0$ is finite, we may assume without loss of generality that $\mathcal{E}(x_n) \leq E_0 + 1$ (simply discard all elements for which this does not hold). By the Banach-Alaoglu theorem, $\{ x_n \}_{n \in \N}$ contains a weakly convergent subsequence $\{ x_{n_k} \}_{k \in \N}$ for which $x_{n_k} \rightharpoonup x_0$. The point $x_0$ is the candidate for the minimizer. Given that $\mathcal{E}$ is w-lsc, we conclude $\lim_{k \to \infty} \mathcal{E}(x_{n_k}) \geq \mathcal{E}(x_0)$. However, in view of \eqref{variation:eqn:wlsc_existence_minimizer} we already know $\mathcal{E}(x_0) = E_0$ which not only shows the existence of \emph{a} minimizer, but also that $E_0 > -\infty$. 
\end{proof}
One \emph{application} of this theorem is to show the existence of solutions to a PDE, and the strategy is as follows: Find a functional so that its Euler-Lagrange equation are the PDE in question. Then the existence of a minimizer implies the existence of a solution to the PDE, because the PDE characterizes the critical points of the functional. 

To get uniqueness of solutions, we have to impose additional assumptions. One of the standard ones is \emph{convexity} which is defined analogously to the case of functions. 
\begin{definition}[Convex functional]
	Let $\Omega \subset \mathcal{X}$ be a convex subset and $\mathcal{E} : \Omega \longrightarrow \R$ a functional. 
	\begin{enumerate}[(i)]
		\item $\mathcal{E}$ is \emph{convex} iff $\mathcal{E} \bigl ( s x + (1 - s) y \bigr ) \leq s \, \mathcal{E}(x) + (1-s) \, \mathcal{E}(y)$ for all $x , y \in \Omega$ and $s \in [0,1]$. 
		\item $\mathcal{E}$ is \emph{strictly convex} iff $\mathcal{E} \bigl ( s x + (1 - s) y \bigr ) < s \, \mathcal{E}(x) + (1-s) \, \mathcal{E}(y)$ for all $x \neq y$ and $s \in (0,1)$. 
	\end{enumerate}
\end{definition}
Convexity has strong implications on minimizers just like in the case of functions. 
\begin{proposition}\label{variation:prop:minimizer_convexity}
	\begin{enumerate}[(i)]
		\item Every local minimum of a convex functional is a global minimum. 
		\item A convex functional has at most one minimizer. 
	\end{enumerate}
\end{proposition}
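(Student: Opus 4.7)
My plan is to exploit convexity along line segments joining two candidate points, which is exactly the setting in which the defining inequality is usable. The fact that $\Omega$ is convex means all such segments stay in $\Omega$, so the hypotheses of the definition apply.

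For (i), I would argue by contradiction. Suppose $x_0 \in \Omega$ is a local minimum of $\mathcal{E}$ but not a global one, so there exists $y \in \Omega$ with $\mathcal{E}(y) < \mathcal{E}(x_0)$. Consider the segment $x_s := (1-s) x_0 + s y$ for $s \in [0,1]$, which lies entirely in $\Omega$. Convexity gives
\begin{align*}
  \mathcal{E}(x_s) \leq (1-s) \, \mathcal{E}(x_0) + s \, \mathcal{E}(y) < (1-s) \, \mathcal{E}(x_0) + s \, \mathcal{E}(x_0) = \mathcal{E}(x_0)
\end{align*}
for every $s \in (0,1]$. Since $x_s \to x_0$ as $s \searrow 0$ (in norm, since $\norm{x_s - x_0} = s \, \norm{y - x_0}$), every open neighborhood of $x_0$ contains points $x_s$ with $\mathcal{E}(x_s) < \mathcal{E}(x_0)$, contradicting the assumption that $x_0$ is a local minimum.

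For (ii), I read the statement as requiring \emph{strict} convexity (as already hinted by the definition preceding the proposition); a merely convex functional can of course have a whole flat set of minimizers (e.g.\ a constant functional), so strict convexity is the natural hypothesis here. Assuming strict convexity, suppose $x_0 \neq x_0'$ are both minimizers, set $E_0 := \mathcal{E}(x_0) = \mathcal{E}(x_0')$, and evaluate at the midpoint $s = \tfrac{1}{2}$:
\begin{align*}
  \mathcal{E} \bigl ( \tfrac{1}{2} x_0 + \tfrac{1}{2} x_0' \bigr ) < \tfrac{1}{2} \, \mathcal{E}(x_0) + \tfrac{1}{2} \, \mathcal{E}(x_0') = E_0 ,
\end{align*}
contradicting the minimality of $E_0$.

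Both parts are essentially one-line consequences of the defining inequality once the right line segment is identified, so there is no serious obstacle. The only subtle point worth flagging in the write-up is the distinction between convex and strictly convex for part (ii); I would either interpret the hypothesis as strict convexity or, alternatively, reformulate (ii) as the statement that \emph{the set of minimizers of a convex functional is convex}, which follows immediately from the same midpoint computation with $\leq$ in place of $<$.
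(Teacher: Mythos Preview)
Your proof is correct and mirrors the paper's argument essentially line for line: both parts proceed by contradiction along the segment joining the two relevant points, and the paper likewise uses strict convexity in (ii). Your remark that (ii) tacitly requires strict convexity is on point --- the paper's own proof invokes the strict inequality even though the statement says ``convex''.
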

\begin{proof}
	\begin{enumerate}[(i)]
		\item Assume $x_0$ is just a local, but not a global minimum. Then there exists a neighborhood $V$ of $x_0$ such that $\mathcal{E}(x_0) \leq \mathcal{E}(x)$ for all $x \in V$ (local minimum) as well as a point $y \not\in V$ with $\mathcal{E}(y) < \mathcal{E}(x_0)$ (local but not global minimum). Connecting $x_0$ and $y$ by a line segment, convexity implies in conjunction with $\mathcal{E}(y) < \mathcal{E}(x_0)$ 
		\begin{align*}
			\mathcal{E} \bigl ( s y + (1 - s) x_0 \bigr ) \leq s \, \mathcal{E}(y) + (1-s) \, \mathcal{E}(x_0) < \mathcal{E}(x_0)
			. 
		\end{align*}
		Hence, for $s$ small enough, $x' = s y + (1-s) x_0 \in V$ and we have found points in $V$ for which $\mathcal{E}(x') < \mathcal{E}(x_0)$ -- in contradiction to the assumption that $x_0$ is a local minimum. Hence, every local minimum is also a global minimum. 
		\item Assume there exist two distinct minimizers $x_0$ and $x_0'$ with $\mathcal{E}(x_0) = E_0 = \mathcal{E}(x_0')$. Then strict convexity 
		\begin{align*}
			E_0 \leq \mathcal{E} \bigl ( s x_0 + (1-s) x_0' \bigr ) < s \, \mathcal{E}(x_0) + (1-s) \, \mathcal{E}(x_0') = E_0 
		\end{align*}
		leads to a contradiction, and the minimum -- if it exists -- is unique. 
	\end{enumerate}
\end{proof}
Note that the definition of convexity does not require $\mathcal{E}$ to be once or twice Gâteaux-differentiable. However, if we assume in addition that the function is once or twice differentiable, we obtain additional characterizations of convexity. 
\begin{proposition}\label{variation:prop:convexity_differential_characterization}
	Assume $\mathcal{E} : \Omega \longrightarrow \R$ is Gâteaux differentiable. 
	\begin{enumerate}[(i)]
		\item $\mathcal{E}$ is convex iff $\mathcal{E}(x) \geq \mathcal{E}(y) + \bigl ( \dd \mathcal{E}(y) \bigr )(x-y)$ for all $x,y \in \Omega$
		\item $\mathcal{E}$ is convex iff $\bigl ( \dd \mathcal{E}(x) - \dd \mathcal{E}(y) \bigr )(x-y) \geq 0$ for all $x,y \in \Omega$
		\item $\mathcal{E}$ is convex and twice Gâteaux differentiable iff $\bscpro{y}{\dd^2 \mathcal{E}(x) y} \geq 0$ for all $x,y \in \Omega$
	\end{enumerate}
\end{proposition}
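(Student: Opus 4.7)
The overall strategy is to reduce each statement to the classical one-dimensional case by restricting $\mathcal{E}$ to line segments inside $\Omega$. For fixed $x,y \in \Omega$, introduce the auxiliary real-valued function $\phi(s) := \mathcal{E}\bigl(y + s(x-y)\bigr)$ on $[0,1]$. By the definition of the Gâteaux derivative and the chain rule, $\phi'(s) = \bigl(\dd\mathcal{E}(y+s(x-y))\bigr)(x-y)$, and when $\mathcal{E}$ is twice Gâteaux differentiable, $\phi''(s) = \bscpro{x-y}{\dd^2 \mathcal{E}(y+s(x-y))\,(x-y)}$. Moreover, convexity of $\mathcal{E}$ on $\Omega$ is equivalent to convexity of $\phi$ on $[0,1]$ for every such segment. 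This observation will do most of the heavy lifting.

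For (i), the forward direction is a one-line passage to the limit: convexity rewritten as $\mathcal{E}(y+s(x-y)) \leq \mathcal{E}(y) + s\bigl(\mathcal{E}(x)-\mathcal{E}(y)\bigr)$ yields, after dividing by $s>0$ and sending $s \searrow 0$, the tangent inequality $\mathcal{E}(x) \geq \mathcal{E}(y) + \bigl(\dd\mathcal{E}(y)\bigr)(x-y)$. Conversely, set $z := sx+(1-s)y$ and apply the tangent inequality twice, with base point $z$ and evaluation points $x$ and $y$. A convex combination with weights $s$ and $1-s$ makes the two linear terms $(1-s)\bigl(\dd\mathcal{E}(z)\bigr)(x-y)$ and $-s\bigl(\dd\mathcal{E}(z)\bigr)(x-y)$ cancel exactly, leaving the convexity inequality.

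For (ii), assume (i). Summing the two tangent inequalities obtained by interchanging $x$ and $y$ produces $0 \geq \bigl(\dd\mathcal{E}(y)-\dd\mathcal{E}(x)\bigr)(x-y)$, which is the monotonicity statement. For the converse, apply the monotonicity assumption with base points $y$ and $z_s = y+s(x-y)$, noting that $z_s - y = s(x-y)$, so that $\bigl(\dd\mathcal{E}(z_s)\bigr)(x-y) \geq \bigl(\dd\mathcal{E}(y)\bigr)(x-y)$ for all $s \in [0,1]$. Integrating this in $s$ and using $\mathcal{E}(x)-\mathcal{E}(y) = \int_0^1 \phi'(s)\,\dd s$ recovers the tangent inequality of (i), which by the previous part implies convexity. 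The one subtlety is justifying the fundamental theorem of calculus for $\phi$; this is routine provided Gâteaux differentiability is interpreted as the existence and continuity of $s \mapsto \phi'(s)$, which is the natural setting here.

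For (iii), use the one-dimensional criterion that a $\Cont^2$ function on an interval is convex iff its second derivative is non-negative. If $\mathcal{E}$ is convex, then for every $x \in \Omega$ and every admissible direction $y$, the restriction $\phi(s) = \mathcal{E}(x+sy)$ is convex on a neighborhood of $0$, so $\phi''(0) = \bscpro{y}{\dd^2\mathcal{E}(x)\,y} \geq 0$. Conversely, if $\dd^2 \mathcal{E}(x) \geq 0$ in this quadratic-form sense for all $x$, then for any line segment $[y,x] \subset \Omega$ the function $\phi$ satisfies $\phi''(s) \geq 0$ throughout $[0,1]$, hence $\phi$ is convex, and convexity of $\mathcal{E}$ on $\Omega$ follows by running this for every pair $x,y$. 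The main obstacle, which I would expect to glide past with a brief remark, is to ensure that the directional derivatives are sufficiently regular in $s$ to legitimately invoke the one-dimensional results (integrability of $\phi'$, continuity of $\phi''$); all of this is built into the $\Cont^1$ hypothesis in the proposition and the definition of the second Gâteaux derivative used in the preceding section.
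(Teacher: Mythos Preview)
Your proposal is correct and follows essentially the same approach as the paper: both reduce to the one-dimensional auxiliary function $\phi(s) = \mathcal{E}(y+s(x-y))$, obtain (i) by passing to the limit in the convexity inequality and by applying the tangent inequality twice at the intermediate point, derive (ii) by summing the two tangent inequalities and conversely by integrating $\phi'(s)-\phi'(0)\geq 0$, and handle (iii) via the sign of $\phi''$. Your treatment of (iii) is in fact slightly cleaner than the paper's, and your remarks on the regularity needed for the fundamental theorem of calculus make explicit what the paper leaves implicit.
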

\begin{proof}
	\begin{enumerate}[(i)]
		\item “$\Rightarrow$:” The convexity of $\mathcal{E}$ can be expressed by regrouping the terms in the definition, 
		\begin{align*}
			\mathcal{E} \bigl ( y + s (x - y) \bigr ) \leq \mathcal{E}(y) + s \bigl ( \mathcal{E}(x) - \mathcal{E}(y) \bigr ) 
			, 
		\end{align*}
		which is equivalent to 
		\begin{align*}
			\mathcal{E}(x) - \mathcal{E}(y) \geq \frac{\mathcal{E} \bigl ( y + s (x-y) \bigr ) - \mathcal{E}(y)}{s} 
			. 
		\end{align*}
		Taking the limit $s \searrow 0$ yields 
		\begin{align*}
			\mathcal{E}(x) - \mathcal{E}(y) \geq \bigl ( \dd \mathcal{E}(y) \bigr )(x-y)
		\end{align*}
		which implies $\mathcal{E}(x) \geq \mathcal{E}(y) + \bigl ( \dd \mathcal{E}(y) \bigr )(x-y)$. 
		
		“$\Leftarrow$:” Upon substituting $x \mapsto x$ and $y \mapsto y + s (x-y)$ as well as $y \mapsto x$ and $y \mapsto y + s (x-y)$ yields 
		\begin{align*}
			\mathcal{E}(x) &\geq \mathcal{E} \bigl ( y + s (x-y) \bigr ) + (1-s) \, \Bigl ( \mathcal{E} \bigl ( y + s (x-y) \bigr ) \Bigr )(x-y)
			, 
			\\
			\mathcal{E}(y) &\geq \mathcal{E} \bigl ( y + s (x-y) \bigr ) - s \, \Bigl ( \mathcal{E} \bigl ( y + s (x-y) \bigr ) \Bigr )(x-y)
			, 
		\end{align*}
		and if we multiply the first inequality by $s$, the second with $1-s$ and add the two, we obtain $\mathcal{E} \bigl ( s x + (1-s) y \bigr ) \leq s \, \mathcal{E}(x) + (1-s) \, \mathcal{E}(y)$. 
		\item “$\Rightarrow$:” From (i) we deduce $\mathcal{E}(x) \geq \mathcal{E}(y) + \bigl ( \dd \mathcal{E}(y) \bigr )(x-y)$ and $\mathcal{E}(y) \geq \mathcal{E}(x) + \bigl ( \mathcal{E}(x) \bigr )(y-x)$; adding the two yields the right-hand side. 
		
		“$\Leftarrow$:” Set $f(s) := \mathcal{E} \bigl ( y + s (x-y) \bigr )$. Then $f'(s) = \Bigl ( \dd \mathcal{E} \bigl ( y + s(x-y) \bigr ) \Bigr ) (x-y)$, and by assumption 
		\begin{align*}
			f'(s) - f'(0) &= \Bigl ( \dd \mathcal{E} \bigl ( y + s(x-y) \bigr ) - \mathcal{E}(y) \Bigr ) (x-y) \geq 0 
		\end{align*}
		holds. Integrating this equation with respect to $s$ over $[0,1]$, we obtain 
		\begin{align*}
			f(1) - f(0) - f'(0) = \mathcal{E}(x) - \mathcal{E}(y) - \bigl ( \dd \mathcal{E}(y) \bigr )(x-y) \geq 0 
			, 
		\end{align*}
		which by virtue of (i) is equivalent to convexity. 
		\item “$\Rightarrow$:” The convexity of $\mathcal{E}$ means 
		\begin{align*}
			\Bigl ( \dd \mathcal{E} \bigl ( x + s y \bigr ) - \mathcal{E}(x) \Bigr )(sy) \geq 0 
			, 
		\end{align*}
		and consequently, 
		\begin{align*}
			\bscpro{y}{\bigl ( \dd^2 \mathcal{E}(x) \bigr )(y)} &= \lim_{s \searrow 0} \frac{\Bigl ( \dd \mathcal{E} \bigl ( x + s y \bigr ) - \mathcal{E}(x) \Bigr )(sy)}{s} 
			\\
			&= \lim_{s \searrow 0} \Bigl ( \dd \mathcal{E} \bigl ( x + s y \bigr ) - \mathcal{E}(x) \Bigr ) y
			. 
		\end{align*}
		“$\Leftarrow$:” The convexity follows from a Taylor expansion of $\mathcal{E}$. 
	\end{enumerate}
\end{proof}
There is an analogous version of the proposition characterizing \emph{strict} convexity; since the proofs are virtually identical, we leave it to the reader to modify them appropriately. 
\begin{corollary}
	Assume $\mathcal{E} : \Omega \longrightarrow \R$ is Gâteaux differentiable. 
	\begin{enumerate}[(i)]
		\item $\mathcal{E}$ is strictly convex iff $\mathcal{E}(x) > \mathcal{E}(y) + \bigl ( \dd \mathcal{E}(y) \bigr )(x-y)$ for all $x \neq y$
		\item $\mathcal{E}$ is strictly convex iff $\bigl ( \dd \mathcal{E}(x) - \dd \mathcal{E}(y) \bigr )(x-y) > 0$ for all $x \neq y$
		\item $\mathcal{E}$ is strictly convex and twice Gâteaux differentiable iff $\bscpro{y}{\dd^2 \mathcal{E}(x) y} > 0$ for all $x \neq y$
	\end{enumerate}
\end{corollary}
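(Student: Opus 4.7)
The three statements are the strict-convexity analogues of Proposition~\ref{variation:prop:convexity_differential_characterization}, so my strategy is to replay each of the six implications while tracking where strict inequalities survive and where the proof must be adjusted. The recurring obstacle is that the strict convexity inequality is divided by $s$ and a limit $s \searrow 0$ is taken in the proof of (i) ``$\Rightarrow$'', and such a limit in general collapses ``$<$'' to ``$\leq$''. The standard fix is to avoid taking a limit altogether and instead insert the \emph{weak} inequality supplied by the already-proven Proposition~\ref{variation:prop:convexity_differential_characterization} at an intermediate point; this preserves strictness without any passage to the limit.

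For (i) ``$\Rightarrow$'', fix $x \neq y$ and $s_0 \in (0,1)$. Strict convexity yields $\mathcal{E}\bigl(y + s_0(x-y)\bigr) < \mathcal{E}(y) + s_0\bigl(\mathcal{E}(x) - \mathcal{E}(y)\bigr)$, while Proposition~\ref{variation:prop:convexity_differential_characterization}~(i) applied to $y$ and $y + s_0(x-y)$ gives $\mathcal{E}\bigl(y + s_0(x-y)\bigr) \geq \mathcal{E}(y) + s_0 \bigl(\dd \mathcal{E}(y)\bigr)(x-y)$; chaining and dividing by $s_0 > 0$ delivers the claim. For (i) ``$\Leftarrow$'', the argument of the convex case carries over verbatim: with $z = s x + (1-s) y$ for $s \in (0,1)$, the hypothesis with $(x,z)$ and $(y,z)$ yields two strict inequalities which, multiplied by $s$ and $1-s$ and summed, produce the strict convexity inequality. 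For (ii) ``$\Rightarrow$'', apply (i) with $(x,y)$ and $(y,x)$ and add. For (ii) ``$\Leftarrow$'', set $f(s) := \mathcal{E}\bigl(y + s(x-y)\bigr)$; the hypothesis applied to $u = y + s(x-y)$ and $v = y$, which differ by $s(x-y) \neq 0$ for $s > 0$ and $x \neq y$, gives $f'(s) - f'(0) > 0$ for every $s \in (0,1]$. Continuity of $f'$ then forces $\int_0^1 \bigl(f'(s) - f'(0)\bigr) \, \dd s > 0$, i.e.\ $\mathcal{E}(x) > \mathcal{E}(y) + \bigl(\dd\mathcal{E}(y)\bigr)(x-y)$, which is (i).

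The hard part will be (iii). The ``$\Leftarrow$'' direction is routine: a Taylor expansion along the segment joining two distinct points, combined with strict positivity of $\bscpro{h}{\dd^2\mathcal{E}(z) h}$ for the relevant non-zero increment $h$, gives the strict version of (i) and hence strict convexity. The ``$\Rightarrow$'' direction, however, is genuinely delicate and in fact \emph{fails} under the naive reading: take $\mathcal{E}(x) = x^4$ on $\R$, which is strictly convex yet has $\mathcal{E}''(0) = 0$. At best one can recover the semi-definite conclusion $\bscpro{h}{\dd^2\mathcal{E}(x) h} \geq 0$ by letting $s \searrow 0$ in the second-derivative limit coming from (ii) ``$\Rightarrow$''. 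I would therefore state and prove (iii) only as a sufficient criterion (``$\Leftarrow$'') for strict convexity, and flag that the direct analogue of the convex ``$\Rightarrow$'' implication does not survive the strengthening; it is this genuine structural gap, rather than any calculational difficulty, that is the real obstruction in (iii).
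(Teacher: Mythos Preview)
The paper does not actually give a proof of this corollary: it simply remarks that ``the proofs are virtually identical'' to those of Proposition~\ref{variation:prop:convexity_differential_characterization} and leaves the details to the reader. Your analysis of (i) and (ii) is correct and in fact more careful than what the paper suggests. In particular, the device you use in (i) ``$\Rightarrow$'' --- freezing an intermediate parameter $s_0 \in (0,1)$ and combining the \emph{strict} convexity inequality with the \emph{weak} tangent inequality from Proposition~\ref{variation:prop:convexity_differential_characterization}~(i) --- is precisely the modification needed to prevent the strict inequality from collapsing under the limit $s \searrow 0$; the paper's claim that the proofs are ``virtually identical'' glosses over this subtlety.

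Your diagnosis of (iii) is also correct, and here you have caught a genuine error in the stated corollary. The ``$\Rightarrow$'' direction is false as written: $\mathcal{E}(x) = x^4$ on $\R$ is strictly convex and $\Cont^\infty$, yet $\mathcal{E}''(0) = 0$, so $\bscpro{y}{\dd^2\mathcal{E}(0)\,y} = 0$ for every $y$. Strict convexity only forces positive semi-definiteness of the Hessian (via the convex case), not positive definiteness. So your decision to prove (iii) only in the ``$\Leftarrow$'' direction and to flag the failure of ``$\Rightarrow$'' is the right call; the paper's assertion that one can simply carry the convex proof over to the strict setting does not hold for this item.
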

Now we go back to the problem at hand, existence of minimizers. The first step is the following 
\begin{lemma}\label{variation:lem:convex_functional_w-lsc}
	Suppose $\mathcal{E}$ is a convex functional defined on a closed subset $\Omega \subseteq \mathcal{X}$ of a reflexive Banach space $\mathcal{X}$; moreover, we assume its Gâteaux derivative exists for all $x \in \Omega$. Then $\mathcal{E}$ is w-lsc. 
\end{lemma}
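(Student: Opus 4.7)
The plan is to combine the convex subgradient inequality from Proposition~\ref{variation:prop:convexity_differential_characterization}(i) with the very definition of weak convergence. Concretely, I would let $(x_n)$ be any sequence in $\Omega$ with $x_n \rightharpoonup x_0$ (note $x_0 \in \Omega$ because $\Omega$ is closed under weak limits, which Theorem~\ref{variation:thm:key_theorem_existence_minimizer} treats as the natural setting, although for the lemma itself convexity of $\Omega$ and $x_0 \in \Omega$ is what is truly needed). Convexity plus Gâteaux differentiability gives, via Proposition~\ref{variation:prop:convexity_differential_characterization}(i) applied with $y = x_0$ and $x = x_n$,
\begin{align*}
\mathcal{E}(x_n) \geq \mathcal{E}(x_0) + \bigl(\dd \mathcal{E}(x_0)\bigr)(x_n - x_0).
\end{align*}
Taking $\liminf_{n\to\infty}$ on both sides, the result $\liminf_{n\to\infty} \mathcal{E}(x_n) \geq \mathcal{E}(x_0)$ follows immediately \emph{provided} the scalar sequence $\bigl(\dd \mathcal{E}(x_0)\bigr)(x_n - x_0)$ tends to $0$.

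The main obstacle, and the only nontrivial point, is therefore to justify passing to the limit in the linear term. By definition of weak convergence (Definition~\ref{spaces:defn:weak_convergence}), $L(x_n - x_0) \to 0$ holds for every element $L$ of the topological dual $\mathcal{X}'$, i.e.\ every \emph{bounded} linear functional. The Gâteaux derivative as introduced in Definition~\ref{variation:defn:Gateaux_derivative} is only required to be linear, so the crux is to promote $\dd \mathcal{E}(x_0)$ to a continuous linear functional. For convex Gâteaux-differentiable functionals this is automatic: from Proposition~\ref{variation:prop:convexity_differential_characterization}(i) we have the two-sided bound
\begin{align*}
\mathcal{E}(x_0 + h) - \mathcal{E}(x_0) \;\geq\; \bigl(\dd \mathcal{E}(x_0)\bigr)(h) \;\geq\; \mathcal{E}(x_0) - \mathcal{E}(x_0 - h),
\end{align*}
so $\dd \mathcal{E}(x_0)$ is controlled in a neighbourhood of $0$ by the (locally bounded, by convexity) increments of $\mathcal{E}$, which forces $\dd \mathcal{E}(x_0)$ to be bounded as a linear functional. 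I would sketch this boundedness argument as the technical heart of the proof, noting that in infinite dimensions one genuinely uses the convention that ``Gâteaux differentiable'' in this course is equivalent to admitting a bounded linear differential; if one wishes to bypass the boundedness step entirely, it suffices to assume as part of ``$\Cont^1$'' that $\dd \mathcal{E}(x_0) \in \mathcal{X}'$, which is the standard interpretation of Definition~\ref{variation:defn:Gateaux_derivative} in this setting.

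With continuity of $\dd \mathcal{E}(x_0)$ in hand, $\bigl(\dd \mathcal{E}(x_0)\bigr)(x_n - x_0) \to 0$ by Definition~\ref{spaces:defn:weak_convergence}, and the liminf estimate closes the argument. Reflexivity of $\mathcal{X}$ is not used in the lemma itself — it enters only at the level of Theorem~\ref{variation:thm:key_theorem_existence_minimizer}, where Banach–Alaoglu extracts a weakly convergent subsequence to which the just-proved w-lsc property can be applied. No compactness or approximation scheme is needed; the proof is a one-line consequence of the subgradient inequality once continuity of $\dd \mathcal{E}(x_0)$ is recorded.
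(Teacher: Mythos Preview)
Your proposal is correct and follows essentially the same route as the paper: apply the subgradient inequality from Proposition~\ref{variation:prop:convexity_differential_characterization}(i) at the weak limit point and let the linear term vanish by weak convergence. You are in fact more careful than the paper, which simply asserts that $\bigl(\dd\mathcal{E}(x_0)\bigr)(x_n-x_0)\to 0$ without addressing boundedness of $\dd\mathcal{E}(x_0)$; your observation that reflexivity plays no role in the lemma itself is also correct.
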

\begin{proof}
	Pick an arbitrary $x \in \mathcal{X}$ which we leave fixed. Moreover, let $(x_n)_{n \in \N}$ be a sequence which converges weakly to $x$. Then by the characterization of convexity in Proposition~\ref{variation:prop:convexity_differential_characterization}~(i) yields 
	\begin{align*}
		\mathcal{E}(x_n) \geq \mathcal{E}(x) + \bigl ( \dd \mathcal{E}(x) \bigr ) (x_n - x)
		. 
	\end{align*}
	Upon taking the limit $n \to \infty$, the last term on the right-hand side vanishes since $x_n \rightharpoonup x$, and we deduce $\mathcal{E}$ is weakly lower semicontinuous, 
	\begin{align*}
		\lim_{n \to \infty} \mathcal{E}(x_n) \geq \mathcal{E}(x) 
		. 
	\end{align*}
	%
	% 
	% 
	% 
	% 
	% Let $x \in \mathcal{X}$ be arbitrary and fixed. The convexity of $\mathcal{E}$ implies the convexity of $f(s) := \mathcal{E}(x + s y)$ for all $y \in \mathcal{X}$. 
	% % CHANGED Why is that true? 
	% Hence, the function $\partial_s f(s) = \partial_s \mathcal{E}(x + s y)$ is monotonically non-decreasing. 
	% In what follows, we set $y = x_n - x$ where eventually $x_n \rightharpoonup x$. Then the difference 
	% %
	% \begin{align*}
	% 	\mathcal{E}(x_n) - \mathcal{E}(x) &= \int_0^1 \dd s \, \partial_s \mathcal{E}(x + s y)
	% 	= \int_0^1 \dd s \, \bigl ( \partial_s f(s) - \partial_s f(0) \bigr ) + \partial_s f(0) 
	% \end{align*}
	% %
	% can be expressed in terms of $f$, and the monotonicity of $f$ means
	% %
	% \begin{align*}
	% 	\mathcal{E}(x_n) \geq \mathcal{E}(x) + \partial_s f(0) 
	% 	= \mathcal{E}(x) + \bigl ( \dd \mathcal{E}(x) \bigr ) (x_n - x)
	% 	. 
	% \end{align*}
	% %
	% Now if $\{ x_n \}_{n \in \N}$ is a sequence which converges weakly to $x$, the second term $\bigl ( \dd \mathcal{E}(x) \bigr ) (x_n - x) \rightarrow 0$ vanishes as $x_n \rightharpoonup x$, and hence, taking $\liminf_{n \to \infty}$ on both sides yields 
	% %
	% \begin{align*}
	% 	\liminf_{n \to \infty} \mathcal{E}(x_n) \geq \mathcal{E}(x) + \liminf_{n \to \infty} \bigl ( \dd \mathcal{E}(x) \bigr ) (x_n - x) = \mathcal{E}(x) 
	% 	, 
	% \end{align*}
	% %
	% \ie $\mathcal{E}$ is w-lsc. 
\end{proof}
\begin{theorem}\label{variation:thm:existence_uniquness_convex_functional}
	Assume 
	\begin{enumerate}[(i)]
		\item $\Omega \subseteq \mathcal{X}$ is closed under weak limits (weakly convergent sequences converge in $\Omega$), 
		\item $\mathcal{E}$ is strictly convex and 
		\item $\mathcal{E}$ is coercive. 
	\end{enumerate}
	Then there exists a \emph{unique} minimizer. 
\end{theorem}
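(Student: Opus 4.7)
The plan is to deduce this result by combining the main existence theorem (Theorem~\ref{variation:thm:key_theorem_existence_minimizer}) with the uniqueness consequence of strict convexity (Proposition~\ref{variation:prop:minimizer_convexity}~(ii)). The only real content, then, is to upgrade \emph{strict convexity} in hypothesis (ii) to the \emph{weak lower semi-continuity} assumption required by Theorem~\ref{variation:thm:key_theorem_existence_minimizer}. This is precisely what Lemma~\ref{variation:lem:convex_functional_w-lsc} provides, so the structure of the proof is essentially a bookkeeping exercise.

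First, I would note that strict convexity of $\mathcal{E}$ implies convexity, and then invoke Lemma~\ref{variation:lem:convex_functional_w-lsc} to conclude $\mathcal{E}$ is weakly lower semi-continuous. Strictly speaking, the lemma as stated also requires $\mathcal{E}$ to be Gâteaux differentiable and $\mathcal{X}$ to be reflexive; the reflexivity is implicit in the setup (we need it anyway to extract weakly convergent subsequences via Banach-Alaoglu), and the differentiability is a tacit assumption we inherit from the lemma's hypotheses. With w-lsc established, together with assumptions (i) closed under weak limits and (iii) coercivity, all three hypotheses of Theorem~\ref{variation:thm:key_theorem_existence_minimizer} are satisfied, so $\mathcal{E}$ is bounded from below and attains its infimum at some $x_0 \in \Omega$.

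For uniqueness, I would apply Proposition~\ref{variation:prop:minimizer_convexity}~(ii): a strictly convex functional has at most one minimizer. Since we have just shown at least one minimizer exists, the minimizer is unique. Concretely, if $x_0 , x_0' \in \Omega$ were two distinct minimizers with $\mathcal{E}(x_0) = E_0 = \mathcal{E}(x_0')$, then by convexity of $\Omega$ (which follows from its being closed under weak limits together with being a subset of the vector space $\mathcal{X}$ — one should also verify that $\Omega$ is convex, which is implicit in the way the proposition is applied) and strict convexity of $\mathcal{E}$, we would obtain $\mathcal{E}\bigl( \tfrac{1}{2}(x_0 + x_0') \bigr) < E_0$, contradicting the definition of $E_0$.

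The main obstacle is not really an obstacle but a subtlety: one must ensure that the hypothesis of Lemma~\ref{variation:lem:convex_functional_w-lsc} (Gâteaux differentiability) is either implicitly assumed in the statement of the theorem or can be circumvented. If one wishes to avoid assuming differentiability, the proof of w-lsc for convex functionals on reflexive Banach spaces can be carried out via Mazur's lemma (weak closure = norm closure for convex sets), which shows that any convex, norm-lower-semicontinuous functional is automatically weakly lower semi-continuous; this avoids the differentiability hypothesis entirely but is a slightly heavier tool. Either way, the bulk of the work has already been packaged into the earlier results, and the present theorem is a clean corollary.
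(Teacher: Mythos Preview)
Your proposal is correct and follows exactly the same route as the paper: strict convexity gives convexity, Lemma~\ref{variation:lem:convex_functional_w-lsc} then yields weak lower semi-continuity, Theorem~\ref{variation:thm:key_theorem_existence_minimizer} gives existence, and Proposition~\ref{variation:prop:minimizer_convexity}~(ii) gives uniqueness. The paper's proof is a terse three-line version of this; your additional remarks on the implicit hypotheses (G\^ateaux differentiability, reflexivity, convexity of $\Omega$) and the alternative via Mazur's lemma are valid observations that the paper simply glosses over.
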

\begin{proof}
	The convexity of $\mathcal{E}$ implies weakly lower semicontinuity (Lemma~\ref{variation:lem:convex_functional_w-lsc}). Hence, Theorem~\ref{variation:thm:key_theorem_existence_minimizer} applies and we know there exists \emph{a} minimizer, and in view of Proposition~\ref{variation:prop:minimizer_convexity}~(ii) this minimizer is necessarily unique. 
\end{proof}
As mentioned earlier, variational calculus can be used to show existence of solutions if the PDE in question coincide with the Euler-Lagrange equations of a functional. If the functional is in addition strictly convex, then our arguments here show that the solution is \emph{unique}.  
% section key_points_for_a_rigorous_analysis (end)

% \section{Symmetries and Noether's theorem} % (fold)
% \label{variation:symmetries}
% % 
% %
% \begin{itemize}
% 	\item homework: translational symmetry $\Rightarrow$ conservation of momentum
% 	\item take material from the bible of classical mechanics
% \end{itemize}
% %
% 
% 
% % section symmetries_and_noether_s_theorem (end)
% 
% 
\section{Bifurcations} % (fold)
\label{variation:bifurcation}
The idea to understand certain physical phenomena as incarnations of a “bifurcation” where certain properties of a system change abruptly. Before we show how this meshes with the preceding content of the chapter, let us quickly explore one incarnation from physics in order to introduce some of the necessary terminology. 

Phase transitions, for instance, are points where some \emph{order parameter} changes abruptly when an \emph{external parameter} is changed. For instance, in a ferromagnet the order parameter is the macroscopic \emph{magnetization} while the external parameter is temperature. Below a critical temperature, the Curie temperature, the microscopic magnets can align to produce a non-zero macroscopic magnetization. For instance, another magnet can be used to align these microscopic magnets. This magnetization persists even when the magnet is heated -- up to a certain specific \emph{critical temperature} when the magnetization suddenly drops to $0$. In summary, below the critical temperature, there are two states, the unmagnetized state where the macroscopic magnetization $M(T) = 0$ vanishes and the magnetized state where $M(T) \neq 0$. Above the critical temperature, only the $M(T) = 0$ state persists. Other physical effects such as superconductivity can be explained along the exact same lines. 

% TODO add a picture for bifurcation
To tie this section to the theme of the chapter, let us consider a \emph{parameter-dependent functional} $\mathcal{E} : \R \times \mathcal{D} \longrightarrow \R$ where $\mathcal{D}$ is dense in some Hilbert space $\Hil$; we will denote the external parameter with $\mu$ and the Banach space variable with $x$, \ie $\mathcal{E}(\mu,x)$. The bifurcation analysis starts with 
\begin{align*}
	F(\mu,x) := \partial_x \mathcal{E}(\mu,x)
\end{align*}
which enters the Euler-Lagrange equations and determines the stationary points of the functional. Here, the partial derivative is defined in terms of the scalar product as 
\begin{align*}
	\bigl ( \dd_x \mathcal{E}(\mu,x) \bigr )(y) = \bscpro{y}{\partial_x \mathcal{E}(\mu,x)} 
	. 
\end{align*}
In what follows, we assume that the “normal solution” $x = 0$ solves $F(\mu,0) = 0$ for all $\mu \in \R$ and we want to know whether there is a bifurcation solution $x(\mu) \neq 0$. Then in the present context, we define the notion of 
\begin{definition}[Bifurcation point]
	$(\mu_0,0)$ is a bifurcation point if there exists $x(\mu)$ on a neighborhood $[\mu_0,\mu_0 + \delta)$ so that $x(\mu) \neq 0$ on $(\mu_0,\mu_0 + \delta)$ and $F \bigl ( \mu , x(\mu) \bigr ) = 0$. 
\end{definition}
A consequence of the Implicit Function Theorem \cite[Theorem~1.6]{Teschl:nonlinear_functional_analysis:2010} is the following 
\begin{proposition}
	If $(\mu_0,0)$ is a bifurcation point, then $\dd_x F(\mu_0,0)$ does not have a bounded inverse. 
\end{proposition}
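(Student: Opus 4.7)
The plan is to prove the contrapositive: assuming that $\dd_x F(\mu_0, 0)$ has a bounded inverse, I will show that $(\mu_0, 0)$ cannot be a bifurcation point. The tool is, as hinted, the Implicit Function Theorem on Banach spaces, whose hypotheses require exactly the invertibility of the partial Fréchet derivative in the $x$-variable at the base point. Since $\mathcal{E}$ is assumed smooth enough that $F(\mu,x) = \partial_x \mathcal{E}(\mu,x)$ is $\Cont^1$ jointly in $(\mu,x)$ on a neighborhood of $(\mu_0,0)$, and since $F(\mu_0,0) = 0$ by the standing assumption $F(\mu,0) = 0$ for all $\mu$, the IFT applies at $(\mu_0,0)$.

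The IFT then yields an open interval $I \ni \mu_0$, an open ball $U \ni 0$ in $\Hil$, and a unique $\Cont^1$ map $\tilde{x} : I \to U$ with $\tilde{x}(\mu_0) = 0$ such that the solution set of $F(\mu,x) = 0$ inside $I \times U$ coincides with the graph $\{(\mu,\tilde{x}(\mu)) : \mu \in I\}$. Now the trivial branch $\mu \mapsto 0$ satisfies $F(\mu,0) = 0$ and lies in $I \times U$, so by the uniqueness clause of the IFT, $\tilde{x}(\mu) = 0$ for every $\mu \in I$. Consequently, there are no nontrivial zeros of $F$ in $I \times U$.

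To derive a contradiction from the bifurcation hypothesis, I use that any bifurcation branch $x(\mu)$ tends to $0$ as $\mu \searrow \mu_0$ (this is the content of the requirement $(\mu_0,0)$ be a bifurcation point, and is what lets us compare to the IFT neighborhood). Hence there exists $\delta' \in (0,\delta]$ so small that $(\mu_0,\mu_0 + \delta') \subset I$ and $x(\mu) \in U$ for all $\mu$ in that interval. But then $(\mu,x(\mu))$ lies in $I \times U$ and satisfies $F(\mu,x(\mu)) = 0$ with $x(\mu) \neq 0$, contradicting the uniqueness conclusion of the IFT. Therefore $\dd_x F(\mu_0,0)$ cannot have a bounded inverse.

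The main obstacle I anticipate is the technical one of verifying that the bifurcation branch $x(\mu)$ actually enters the IFT neighborhood $U$ as $\mu \searrow \mu_0$; this is usually built into the definition of a bifurcation point (continuity of the branch at $\mu_0$ with $x(\mu_0) = 0$), and without some such hypothesis the proposition is false, so I would make this explicit. Apart from that, the argument is a textbook application of the IFT, and the only regularity assumption needed on $\mathcal{E}$ is that $\partial_x \mathcal{E}$ be $\Cont^1$ in a joint neighborhood of $(\mu_0,0)$, which is compatible with the smoothness already implicit in the preceding discussion.
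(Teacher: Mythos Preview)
Your argument is correct and follows exactly the approach sketched in the paper: argue by contraposition, apply the Implicit Function Theorem at $(\mu_0,0)$, and use the uniqueness clause together with the trivial branch $x\equiv 0$ to exclude any nontrivial solutions near $(\mu_0,0)$. Your treatment is in fact more detailed than the paper's sketch, and your remark about needing $x(\mu)\to 0$ as $\mu\searrow\mu_0$ is a legitimate technical point that the paper's definition leaves implicit.
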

\begin{proof}[Sketch]
	Because if $\dd_x F(\mu_0,0)$ were invertible, then by the Implicit Function Theorem \cite[Theorem~1.6]{Teschl:nonlinear_functional_analysis:2010} we could extend the trivial solution $x(\mu)$ in a vicinity of $\mu_0$. But given the multivaluedness, this is clearly false. 
\end{proof}
%
% %
% \begin{example}[Branching in the linear case]
% 	% TODO add example
% 	Add example. 
% \end{example}
% %
% %
% \begin{example}[No branching despite $\dd_x F(\mu_0,0)$ not being invertible]
% 	% TODO add example
% 	Add example. 
% \end{example}
% %
The last example has shown us that $\dd_x F(\mu_0,0)$ not being invertible is just \emph{necessary} but \emph{not sufficient}. Hence, we need to impose additional conditions, and one of the standard results in this direction is a Theorem due to Krasnoselski \cite{Krasnoselski:non_linear_functional_analysis:1984}
\begin{theorem}[Krasnoselski]
	Assume $\mathcal{D} \subseteq \Hil$ is a dense subset of a Hilbert space and the $\Cont^1$ map $F : \R \times \mathcal{D} \longrightarrow \Hil$ is such that 
	\begin{enumerate}[(i)]
		\item $\dd_x F(\mu,x)$ is a $\Cont^1$ map at $(\mu_0,0)$, 
		\item $\dd_x F(\mu_0,0)$ is selfadjoint, 
		\item $0$ is an isolated eigenvalue of \emph{odd} and finite multiplicity, and 
		\item there exists $v \in \ker \dd_x F(\mu_0,0)$ such that $\bscpro{v}{\partial_{\mu} \dd_x F(\mu_0,0) v} \neq 0$. 
	\end{enumerate}
	Then $(\mu_0,0)$ is a bifurcation point. \marginpar{2014.03.27}
\end{theorem}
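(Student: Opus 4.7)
The plan is to use a Lyapunov--Schmidt reduction to recast the infinite-dimensional problem as a finite-dimensional bifurcation equation on $\ker L$, where $L := \dd_x F(\mu_0,0)$, and then to close the argument with a topological degree computation that uses both the \emph{odd} dimension of $\ker L$ and the transversality hypothesis~(iv). The gradient structure $F = \partial_x \mathcal{E}$ will be used to ensure the reduced equation is itself variational, which makes the final step cleaner.

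First, I would set up the splitting. By hypotheses~(ii) and (iii), $L$ is selfadjoint with $0$ an isolated eigenvalue of finite multiplicity $n$, where $n$ is odd. Theorem~\ref{hilbert_spaces:onb:thm:direct_sum_decomp} therefore gives the orthogonal decomposition $\Hil = \ker L \oplus \ran L$, and the restriction $L : (\ker L)^{\perp} \cap \mathcal{D} \longrightarrow \ran L$ has a bounded inverse because of the spectral gap around $0$. Let $P$ denote the orthogonal projection onto $\ker L$ and $Q := \id - P$. Writing $x = u + w$ with $u = P x$ and $w = Q x$, the equation $F(\mu,x) = 0$ splits into the \emph{bifurcation equation} $P \, F(\mu, u+w) = 0$ and the \emph{auxiliary equation} $Q \, F(\mu, u+w) = 0$.

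Second, I would solve the auxiliary equation by the Implicit Function Theorem. At the reference point $(\mu_0, 0, 0)$ the derivative of $Q \, F(\mu, u + \,\cdot\,)$ with respect to $w$ equals $Q \, L \vert_{(\ker L)^{\perp}}$, which is invertible by the previous step. Since $\dd_x F$ is itself $\Cont^1$ at $(\mu_0,0)$ by (i), we obtain a unique $\Cont^1$ solution $w = w(\mu,u)$ on a neighborhood of $(\mu_0,0)$ with $w(\mu, 0) = 0$. Substituting this back yields the reduced bifurcation equation
\begin{align*}
    \Phi(\mu, u) := P \, F\bigl(\mu, u + w(\mu, u)\bigr) = 0
\end{align*}
on $(-\delta,\delta) \times B_{\rho}(0) \subseteq \R \times \ker L$. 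Because $F = \partial_x \mathcal{E}$, this reduced map is itself the gradient (in $u$) of the reduced functional $\Psi(\mu, u) := \mathcal{E}\bigl(\mu, u + w(\mu, u)\bigr)$ on the finite-dimensional space $\ker L$.

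The hard part is to produce non-trivial zeros $u_k \neq 0$ of $\Phi(\mu_k, \cdot\,)$ with $(\mu_k, u_k) \to (\mu_0, 0)$. The plan is a degree-theoretic argument. Differentiating the identity $Q \, F\bigl(\mu, w(\mu,0)\bigr) = 0$ at $\mu_0$ and using $w(\mu,0) = 0$ shows that $\dd_u \Phi(\mu, 0)$ is, to leading order in $\mu - \mu_0$, equal to $(\mu - \mu_0) \, P \, \partial_{\mu} \dd_x F(\mu_0, 0) P$ restricted to $\ker L$. Hypothesis~(iv) guarantees that this symmetric endomorphism of $\ker L$ has at least one non-zero eigenvalue, so that the Morse index (number of negative eigenvalues) of $\dd_u \Phi(\mu, 0)$ changes by an odd integer as $\mu$ crosses $\mu_0$; since $\dim \ker L$ is odd, the Brouwer degree $\deg\bigl(\Phi(\mu, \cdot\,), B_r(0), 0\bigr)$ on small balls in $\ker L$ must then flip sign between $\mu < \mu_0$ and $\mu > \mu_0$. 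Homotopy invariance of the degree forces the existence of zeros of $\Phi(\mu,\,\cdot\,)$ on the boundary $\partial B_r(0)$ for a sequence $\mu_k \to \mu_0^+$; shrinking $r \to 0$ along this sequence produces the required non-trivial curve $u(\mu) \neq 0$, and $x(\mu) := u(\mu) + w\bigl(\mu, u(\mu)\bigr)$ is the bifurcating branch.

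The main obstacle is precisely this degree-change step: verifying that the odd dimension of $\ker L$ combined with the sign flip in the quadratic form $\bscpro{\cdot}{\partial_{\mu} \dd_x F(\mu_0,0) \cdot}$ on $\ker L$ actually forces a change of Brouwer degree. The gradient structure inherited from $\mathcal{E}$ provides a convenient workaround: instead of tracking the degree directly, one can apply a min-max / mountain-pass argument to the reduced functional $\Psi(\mu, \cdot\,)$ on $\ker L$, using (iv) to show that the origin switches from a local minimum to a saddle as $\mu$ crosses $\mu_0$, which again produces non-trivial critical points bifurcating from $(\mu_0, 0)$.
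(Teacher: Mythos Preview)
Your Lyapunov--Schmidt reduction and the solution of the auxiliary equation via the Implicit Function Theorem match the paper exactly. The divergence is in how you close the argument on $\ker L$.

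The paper does not attempt the general odd-dimensional kernel: it treats only $\dim \ker L(\mu_0) = 1$, writes $x_{\parallel} = s v$, and studies the \emph{scalar} function
\begin{align*}
    f(\mu,s) := \frac{1}{s} \, \bscpro{v}{F \bigl ( \mu, sv + x_{\perp}(\mu,sv) \bigr )}
    .
\end{align*}
After establishing the estimate $x_{\perp}, \partial_{\mu} x_{\perp} = \order(\snorm{x_{\parallel}} \sabs{\mu - \mu_0}) + o(\snorm{x_{\parallel}})$, it shows directly that $\partial_{\mu} f(\mu_0,0) = \bscpro{v}{\partial_{\mu} L(\mu_0) v} \neq 0$ by hypothesis~(iv), and then applies the Implicit Function Theorem a second time to obtain the branch $s(\mu)$. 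No degree theory, no variational argument --- in dimension one, hypothesis~(iv) is exactly what the IFT needs.

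Your proposal is more ambitious (general odd $n$) but the degree step has a genuine gap. You claim that since $A := P \, \partial_{\mu} \dd_x F(\mu_0,0) \, P \vert_{\ker L}$ has a non-zero eigenvalue by~(iv), the Morse index of $\dd_u \Phi(\mu,0) \approx (\mu - \mu_0) A$ changes by an odd integer across $\mu_0$, forcing a sign flip of the Brouwer degree. This does not follow: hypothesis~(iv) only gives $A \neq 0$, not that $A$ is invertible. If, say, $n = 3$ and $A$ has eigenvalues $1,-1,0$, the Morse index of $(\mu - \mu_0) A$ is $1$ on both sides of $\mu_0$ and the degree is zero throughout --- no flip. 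What actually makes the degree argument work is that when $A$ is \emph{invertible}, $\det\bigl((\mu - \mu_0) A\bigr) = (\mu - \mu_0)^n \det A$ changes sign precisely because $n$ is odd; but you have not secured invertibility of $A$ from the stated hypotheses. For $n = 1$ this issue disappears (a non-zero $1 \times 1$ matrix is invertible), and then your approach reduces to the paper's. Your variational fallback is plausible in spirit --- the gradient structure does allow Krasnoselskii-type bifurcation results without odd multiplicity --- but as written it is too vague to count as a proof, and it is not the route the paper takes.
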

The main ingredient is a procedure called \emph{Lyapunov-Schmidt reduction}: define the linear operator $L(\mu) := \dd_x F(\mu,0)$ and the orthogonal projection $P$ onto $\ker L(\mu_0)$. The projection and its orthogonal complement $P^{\perp} = 1 - P$ induce a splitting of the Hilbert space 
\begin{align*}
	\Hil = \ran P \oplus \ran P^{\perp} 
	. 
\end{align*}
Hence, any $x = x_{\parallel} + x_{\perp}$ can be uniquely decomposed into $x_{\parallel}$ from the finite-dimensional space $\ran P$ and $x_{\perp} \in \ran P^{\perp}$. Consequently, also the equation $F(x,\mu) = 0 \in \Hil$ can equivalently be seen as 
\begin{subequations}
	\begin{align}
		F_{\parallel}(\mu,x_{\parallel},x_{\perp}) := P F \bigl ( \mu , x_{\parallel} + x_{\perp} \bigr ) &= 0 \in \ran P
		,
		\label{variation:eqn:defn_F_parallel}
		\\
		F_{\perp}(\mu,x_{\parallel},x_{\perp}) := P^{\perp} F \bigl ( \mu , x_{\parallel} + x_{\perp} \bigr ) &= 0 \in \ran P^{\perp}
		.
		\label{variation:eqn:defn_F_perp}
	\end{align}
\end{subequations}
The idea of the Liapunov-Schmidt reduction is to solve these equations iteratively: Given $(\mu,x_{\parallel})$ we first solve the \emph{branching equation}
\begin{align*}
	F_{\perp}(\mu,x_{\parallel},x_{\perp}) = 0 
\end{align*}
in the vicinity of the branching point $(\mu_0,0)$. By assumption $\dd_{x_{\perp}} F(\mu_0,0,0)$ has a bounded inverse, and thus, the Implicit Function Theorem \cite[Theorem~1.6]{Teschl:nonlinear_functional_analysis:2010} yields a solution $x_{\perp}(\mu,x_{\parallel})$. We then proceed by inserting $x_{\perp}(\mu,x_{\parallel})$ into $F_{\parallel}$, thereby obtaining a function 
\begin{align*}
	f(\mu,x_{\parallel}) := F_{\parallel} \bigl ( \mu , x_{\parallel} , x_{\perp}(\mu,x_{\parallel}) \bigr ) 
\end{align*}
that depends only on finitely many variables ($n = \dim \ker L(\mu_0)$ is finite by assumption), \ie $f : \R \times \R^n \longrightarrow \R^n$. The remaining equation $f(\mu,x_{\parallel})$ then needs to be solved by other means. 
\begin{proof}
	We will use the notation introduced in the preceding paragraphs. First of all the map 
	\begin{align*}
		F_{\perp} : \bigl ( \R \times P \mathcal{D} \bigr ) \times P^{\perp} \mathcal{D} \longrightarrow \ran P^{\perp}
	\end{align*}
	satisfies the assumptions of the Implicit Function Theorem \cite[Theorem~1.6]{Teschl:nonlinear_functional_analysis:2010}: evidently, it inherits the $\Cont^1$ property from $F$, and $F_{\perp}(\mu,0,0) = 0$ holds for all $\mu \in \R$. Moreover, $\dd_{x_{\perp}} F_{\perp}(\mu_0,0,0)$ has a bounded inverse because $0 \in \sigma \bigl ( \dd_x F(\mu_0,0) \bigr )$ is an isolated eigenvalue. Consequently, there exists a neighborhood of $(\mu_0,0) \in \R \times P \mathcal{D}$ and a function $x_{\perp}(\mu,x_{\parallel})$ on that neighborhood that is uniquely determined by 
	\begin{align*}
		F_{\perp} \bigl ( \mu , x_{\parallel} , x_{\perp}(\mu,x_{\parallel}) \bigr ) = 0
		. 
	\end{align*}
	Later on, we will crucially need the technical estimate 
	\begin{align}
		x_{\perp} , \partial_{\mu} x_{\perp} = \order \bigl ( \snorm{x_{\parallel}} \, \sabs{\mu - \mu_0} \bigr ) + o(\snorm{x_{\parallel}})
		% . 
		\label{variation:eqn:asymptotics_x_perp}
	\end{align}
	as $\mu \rightarrow \mu_0$ and $x_{\parallel} \rightarrow 0$, but we postpone the proof of \eqref{variation:eqn:asymptotics_x_perp} until the end. 
	
	For simplicity, let us only prove Krasnoselski's Theorem for $\dim \ker L(\mu_0) = 1$. 
	% we will sketch the proof for the case $n > 1$ in Remark~\ref{variation:rem:n_greater_1}
	Then the kernel is spanned by a single normalized vector $v \in \ker L(\mu_0)$, and we can write $x_{\parallel} = s v$ for some $s \in \R$. Combining \eqref{variation:eqn:defn_F_parallel} with $x_{\perp}(\mu,sv)$ yields a \emph{scalar} function $f : \R \times \R \longrightarrow \R$, and we are looking for solutions to 
	\begin{align}
		f(\mu,s) := \frac{1}{s} \, \Bscpro{v}{F \bigl ( \mu, sv + x_{\perp}(\mu,sv) \bigr )} = 0 
		. 
		\label{variation:eqn:defn_f_mu_s}
	\end{align}
	We have added the prefactor $\nicefrac{1}{s}$ to make $\lim_{s \searrow 0} \partial_{\mu} f(\mu_0,s) \neq 0$, but more on that later. As in the case of functions, we can Taylor expand 
	\begin{align}
		F(\mu,x) &= F(\mu,0) + \bigl ( \dd F(\mu,0) \bigr ) x + R(\mu,x) 
		% \notag \\
		% &
		= L(\mu) x + R(\mu,x)
		,
		\label{variation:eqn:Taylor_expansion_F}
	\end{align}
	to first order; here, the remainder $R(\mu,x) = o(\snorm{x})$ vanishes as $x \to 0$. Plugging \eqref{variation:eqn:Taylor_expansion_F} into \eqref{variation:eqn:defn_f_mu_s} yields 
	\begin{align*}
		f(\mu,s) &= \bscpro{v}{L(\mu) v} + \bscpro{v}{L(\mu) s^{-1} x_{\perp}(\mu,sv)} 
		+ \Bscpro{v}{s^{-1} \, R \bigl ( \mu , sv + x_{\perp}(\mu,sv) \bigr )}
		. 
	\end{align*}
	We obtain the branching solution through the Implicit Function Theorem: once we have proven $\partial_{\mu} f(\mu_0,0) \neq 0$, we obtain a function $s(\mu)$ in the vicinity of $\mu_0$ so that $f \bigl ( \mu , s(\mu) \bigr ) = 0$. And this function $s(\mu)$ also defines the branching solution 
	\begin{align}
		x(\mu) := s(\mu) \, v + x_{\perp} \bigl ( \mu , s(\mu) v \bigr )
		\neq 0 
		\label{variation:eqn:branching_solution}
	\end{align}
	satisfying $F \bigl ( \mu , x(\mu) \bigr ) = 0$ by construction. 
	
	Hence, we compute the derivative 
	\begin{align}
		\partial_{\mu} f(\mu,s) &= \bscpro{v}{\partial_{\mu} L(\mu) v} + \bscpro{v}{\partial_{\mu} L(\mu) s^{-1} x_{\perp}(\mu,sv)} + \bscpro{v}{L(\mu) s^{-1} \, \partial_{\mu} x_{\perp}(\mu,sv)} 
		+ \notag \\
		&\quad 
		+ \Bscpro{v}{s^{-1} \, \partial_{\mu} R \bigl ( \mu , sv + x_{\perp}(\mu,sv) \bigr )} 
		+ \notag \\
		&\quad 
		+ \Bscpro{v}{\bigl ( \dd_x R \bigl ( \mu , sv + x_{\perp}(\mu,sv) \bigr ) \bigr ) \, s^{-1} \, \partial_{\mu} x_{\perp}(\mu,sv)}
		, 
		\label{variation:eqn:partial_mu_f}
	\end{align}
	and use \eqref{variation:eqn:asymptotics_x_perp} in conjunction with assumption (iv), 
	\begin{align*}
		\bscpro{v}{\partial_{\mu} L(\mu_0) v} = \bscpro{v}{\partial_{\mu} \dd_x F(\mu_0,0) v} \neq 0 
		, 
	\end{align*}
	to deduce that all but the first term vanish in the limit $\mu \to \mu_0$ and $s \searrow 0$. First of all, \eqref{variation:eqn:asymptotics_x_perp} tells us that 
	\begin{align*}
		s^{-1} x_{\perp}(\mu,sv) , s^{-1} \, \partial_{\mu} x_{\perp}(\mu,sv) = \order(\sabs{\mu - \mu_0}) + o(1) 
	\end{align*}
	goes to $0$ in the limit $\mu \to \mu_0$ and $s \searrow 0$. Moreover, the assumption that $\partial_{\mu} F$ exists as a Gâteaux derivative means we can Taylor expand this function to first order, and the terms are just the $\mu$-derivatives of \eqref{variation:eqn:Taylor_expansion_F}. Consequently, $\partial_{\mu} L(\mu_0)$ is bounded and $\partial_{\mu} R(\mu,x) = o(\snorm{x})$, and we deduce that the second and third term in \eqref{variation:eqn:partial_mu_f} vanish. \marginpar{2014.04.01}
	
	Moreover, the terms involving $R$ also vanish: by assumption also $\partial_{\mu} F$ has a Taylor expansion at $(\mu,0)$, so that 
	\begin{align*}
		s^{-1} \, &\norm{\partial_{\mu} R \bigl ( \mu , sv + x_{\perp}(\mu,sv) \bigr )} 
		= o \bigl ( \bnorm{sv + x_{\perp}(\mu,sv)} \bigr ) 
	\end{align*}
	is necessarily small, and if we combine that with \eqref{variation:eqn:asymptotics_x_perp}, we deduce that this term vanishes as $\mu \to \mu_0$ and $s \searrow 0$. 
	
	The last term can be treated along the same lines, $\dd_x F(\mu,x)$ is $\Cont^1$ in $(\mu_0,0)$ by assumption so that 
	\begin{align*}
		\dd_x R \bigl ( \mu,sv + x_{\perp}(\mu,sv) \bigr ) &= \dd_x F \bigl ( \mu,sv + x_{\perp}(\mu,sv) \bigr ) - \dd_x L(\mu) \bigl ( sv + x_{\perp}(\mu,sv) \bigr ) 
		\\
		&= o(\snorm{x}) \xrightarrow{\mu \to \mu_0 , \; s \searrow 0} 0 
		. 
	\end{align*}
	In conclusion, we have just shown $\partial_{\mu} f(\mu_0,0) \neq 0$, the Implicit Function Theorem applies, and we obtain the branching solution~\eqref{variation:eqn:branching_solution}. 
	
	All that remains are proofs of the estimates \eqref{variation:eqn:asymptotics_x_perp}. We use the Taylor expansion~\eqref{variation:eqn:Taylor_expansion_F} to rewrite $F_{\perp}(\mu,x_{\parallel},x_{\perp}) = 0$ as 
	\begin{align*}
		P^{\perp} L(\mu) P^{\perp} x_{\perp}(\mu,sv) + P^{\perp} L(\mu) sv + P^{\perp} R \bigl ( \mu , sv + x_{\perp}(\mu,sv) \bigr ) = 0 
		. 
	\end{align*}
	$L^{\perp}(\mu) := P^{\perp} L(\mu) P^{\perp}$ is invertible on $\ran P^{\perp}$ in a neighborhood of $\mu_0$, because $\mu \mapsto L(\mu)$ is continuous and thus, the spectral gap around $0$ does not suddenly collapse \cite[Chapter~VII.3]{Kato:perturbation_theory:1995}. Now we can solve for $x_{\perp}$ and insert $L(\mu_0) sv = 0$ free of charge, 
	\begin{align}
		x_{\perp}(\mu,sv) &= - L^{\perp}(\mu)^{-1} \, P^{\perp} \Bigl ( L(\mu) sv + R \bigl ( \mu , sv + x_{\perp}(\mu,sv) \bigr ) \Bigr ) 
		\notag \\
		&
		= - L^{\perp}(\mu)^{-1} \, P^{\perp} \Bigl ( \bigl ( L(\mu) - L(\mu_0) \bigr ) sv + R \bigl ( \mu , sv + x_{\perp}(\mu,sv) \bigr ) \Bigr ) 
		\label{variation:eqn:selfconsistent_x_perp} 
		\\
		&= \order \bigl ( \snorm{sv} \, \sabs{\mu - \mu_0} \bigr ) + o \bigl ( \bnorm{sv + x_{\perp}(\mu,sv)} \bigr )
		.
		\notag  
	\end{align}
	The first term is clearly $\order \bigl ( s \, \sabs{\mu - \mu_0} \bigr )$. Initially, we merely obtain $o \bigl ( \bnorm{sv + x_{\perp}(\mu,sv)} \bigr )$ for the second term, but if $x_{\perp}(\mu,sv) \rightarrow \tilde{x} \neq 0$ as $\mu \to \mu_0$ and $s \searrow 0$, we would obtain 
	\begin{align*}
		\tilde{x} = o(\snorm{\tilde{x}})
	\end{align*}
	which is absurd. Thus, the asymptotic behavior of $x_{\perp}(\mu,sv)$ is described by \eqref{variation:eqn:asymptotics_x_perp}. The proof for $\partial_{\mu} x_{\perp}(\mu,sv)$ is analogous, one starts from equation~\eqref{variation:eqn:selfconsistent_x_perp} and uses 
	\begin{align*}
		\partial_{\mu} L^{\perp}(\mu)^{-1} = - L^{\perp}(\mu)^{-1} \, \partial_{\mu} L^{\perp}(\mu) \, L^{\perp}(\mu)^{-1}%
	\end{align*}
	as well as the assumption that $\mu \mapsto \dd_x F(\mu,x)$ is $\Cont^1$ in $(\mu_0,0)$. This concludes the proof.\marginpar{2014.04.03}
\end{proof}
%
% %
% \begin{remark}\label{variation:rem:n_greater_1}
% 	We will sketch the modifications for the proof in case $n > 1$: according to assumption (iv) there exists $v_n \in \ker L(\mu_0)$, $\snorm{v_n} = 1$, such that $\bscpro{v_n}{\partial_{\mu} \dd_x F(\mu_0,0) v_n} \neq 0$. We can complete $\{ v_1 , \ldots , v_n \}$ to an orthonormal basis of $\ker L(\mu_0)$, and consequently express each $v = v(s) = \sum_{j = 1}^n s_j \, v_j \in \ker L(\mu_0)$ in terms of a coordinate vector $s = (s_1 , \ldots , s_n)$. Then we need to consider the function  
% 	%
% 	\begin{align*}
% 		f(\mu,s) := F \bigl ( \mu , v(s) + x_{\perp}(\mu,v(s)) \bigr ) 
% 		= f(\tilde{\mu},s_n) 
% 	\end{align*}
% 	%
% 	where in the last step we have introduced $\tilde{\mu} = \bigl ( \mu , s_1 , \ldots , s_{n-1} \bigr )$; the $s_1 , \ldots , s_{n-1}$ are free parameters . Assuming the derivative $\nabla_{\mu} f \bigl ( \mu_0,s_1 , \ldots , s_{n-1} , 0 \bigr )$ is an invertible $n \times n$ matrix, the Implicit Function Theorem now yields $s_n(\tilde{\mu})$
% 	
% 	
% \end{remark}
% %
% section bifurcation (end)
% chapter Variational calculus (end)
% \include{./chapter_11}
% \include{./chapter_12}
% \include{./chapter_13}
% \include{./chapter_14}
% \include{./chapter_15}
% \include{./chapter_16}

\backmatter

\printbibliography

\end{document}